\documentclass[11pt]{article}

\usepackage[utf8]{inputenc}
\usepackage{amsmath}
\usepackage{amsfonts}
\usepackage{amssymb}
\usepackage{amsthm}
\usepackage{thmtools}
\usepackage{xcolor}
\usepackage{lmodern} %

\usepackage{array}
\usepackage{verbatim}
\usepackage{float}
\usepackage{multirow}
\usepackage{enumitem}
\usepackage[procnumbered,ruled,vlined,linesnumbered, algo2e]{algorithm2e}
 
\usepackage[T1]{fontenc}

\DontPrintSemicolon
\SetKw{KwAnd}{and}
\SetProcNameSty{textsc}
\SetFuncSty{textsc}

\definecolor{ForestGreen}{rgb}{0.1333,0.5451,0.1333}
\definecolor{DarkRed}{rgb}{0.8,0,0}
\definecolor{Red}{rgb}{1,0,0}
\usepackage[linktocpage=true,
pagebackref=true,colorlinks,
linkcolor=DarkRed,citecolor=ForestGreen,
bookmarks,bookmarksopen,bookmarksnumbered]
{hyperref}

\usepackage{cleveref}
\usepackage{geometry}
\geometry{letterpaper,verbose,tmargin=1in,bmargin=1in,lmargin=1in,rmargin=1in}

\global\long\def\vol{\mathrm{vol}}
\global\long\def\prune{\mathtt{Prune}}
\global\long\def\round{\mathtt{Round}}
\newcommand{\polylog}{\operatorname{polylog}}
\newcommand{\poly}{\operatorname{poly}}

\newcommand{\dist}{\operatorname{dist}}

\newcommand{\deltamax}{\Delta_{\max}}
\newcommand{\deltamin}{\Delta_{\min}}
\newcommand{\sampleb}{\textrm{{\sc SampleVertex}}}
\newcommand{\Otil}{\tilde{O}}
\newcommand{\inc}{\textsc{Inc}}
\newcommand{\tschedule}{T_{\schedule}}

\global\long\def\A{\mathcal{A}}
\global\long\def\B{\mathcal{B}}
\global\long\def\C{\mathcal{C}}
\global\long\def\P{\mathcal{P}}
\global\long\def\S{\mathcal{S}}

\global\long\def\F{\mathcal{F}}

\global\long\def\R{\mathbb{R}}

\global\long\def\property{\eqref{con:identity}, \eqref{con:union}, \eqref{con:contraction}, \eqref{con:increment}, and \eqref{con:nested}}

\declaretheorem[numberwithin=section]{theorem}
\declaretheorem[numberlike=theorem]{lemma}
\declaretheorem[numberlike=theorem]{proposition}
\declaretheorem[numberlike=theorem]{corollary}
\declaretheorem[numberlike=theorem]{definition}
\declaretheorem[numberlike=theorem]{claim}
\declaretheorem[numberlike=theorem]{algorithm}

\declaretheorem[numberlike=theorem,refname={Fact,Facts},Refname={Fact,Facts},name={Fact}]{fact}
\declaretheorem[numberlike=theorem]{observation}

\providecommand{\tabularnewline}{\\}
\floatstyle{ruled}
\newfloat{algorithm}{tbp}{loa}
\providecommand{\algorithmname}{Algorithm}
\floatname{algorithm}{\protect\algorithmname}
\crefname{algorithm}{Algorithm}{Algorithms}

\makeatletter
\renewcommand{\paragraph}{%
	\@startsection{paragraph}{4}%
	{\z@}{1.25ex \@plus 1ex \@minus .2ex}{-1em}%
	{\normalfont\normalsize\bfseries}%
}
\makeatother

\ifdefined\ShowComment
	\newcommand{\jan}[1]{\textcolor{blue}{Jan: #1}}
	\newcommand{\he}[1]{\textcolor{red}{He: #1}}
	\newcommand{\thatchaphol}[1]{\textcolor{purple}{TS: #1}}
	
	\newcommand{\mpg}[1]{\textcolor{olive}{Max: #1}}

    \newcommand{\sidford}[1]{\textcolor{DarkRed}{Sidford: #1}}

 \newcommand{\danupon}[1]{\textcolor{orange}{Danupon: #1}}

\else
	\newcommand{\jan}[1]{}
	\newcommand{\thatchaphol}[1]{}
	\newcommand{\he}[1]{}
	\newcommand{\mpg}[1]{}
    \newcommand{\sidford}[1]{}
    \def\danupon#1{}
\fi

\newcommand{\schedule}{\mathrm{schedule}}
\newcommand{\rhoValue}{\frac{2^{16}(\alpha+1)\log n \Delta_{max} }{\Delta_{min}^2\phi^2}}

\title{Fully-Dynamic Graph Sparsifiers Against an Adaptive Adversary}

\usepackage{authblk}
\makeatletter
\renewcommand\AB@affilsepx{, \protect\Affilfont}
\makeatother

\author[1]{Aaron Bernstein}
\author[2]{Jan van den Brand}
\author[3]{Maximilian Probst Gutenberg}
\author[2]{Danupon Nanongkai}
\author[4]{Thatchaphol Saranurak\thanks{Work partially done while at KTH}}
\author[5]{Aaron Sidford}
\author[6]{He Sun}

\affil[1]{Rutgers University}
\affil[2]{KTH Royal Institute of Technology}
\affil[3]{University of Copenhagen}
\affil[4]{TTIC}
\affil[5]{Stanford University}
\affil[6]{University of Edinburgh}

\date{}

\begin{document}

\begin{titlepage}
	\maketitle
	\pagenumbering{gobble}
	\ifdefined\ShowComment
	\begin{center}
		{\centering\huge\textcolor{red}{DEBUG VERSION}}
	\end{center}
	\fi
	\begin{abstract}
Designing efficient dynamic graph algorithms against an {\em adaptive} adversary is a major goal in the field of dynamic graph algorithms. Compared to most graph primitives (e.g.~spanning trees and matchings), designing such algorithms for {\em graph spanners} and (more broadly) {\em graph sparsifiers} posts a unique challenge due to the inherent need of randomness for static computation and the lack of a way to adjust the output slowly (known as ``small recourse/replacements'').  
	
This paper presents the first non-trivial efficient adaptive algorithms for many sparsifiers: against an adaptive adversary. Specifically, we present algorithms that maintain  
\begin{enumerate}[noitemsep]
		\item  a $\polylog(n)$-spanner of size $\Otil(n)$ in $\polylog(n)$ amortized update time,
		\item an $O(k)$-approximate cut sparsifier of size $\Otil(n)$ in $\tilde{O}(n^{1/k})$ amortized update time, and 
		\item a $\polylog(n)$-approximate spectral sparsifier in $\polylog(n)$ amortized update time.
\end{enumerate}
Our bounds are the first non-trivial ones even when only the recourse is concerned. Our results hold even against a stronger adversary, who can access the random bits previously used by the algorithms. Our spanner result resolves an open question by Ahmed~et~al.~(2019). The amortized update time of all algorithms can be made worst-case by paying sub-polynomial factors. Our results and techniques also imply improvements over existing results, including (i) answering open questions about decremental single-source shortest paths by Chuzhoy and Khanna (STOC'19) and Gutenberg and Wulff-Nilsen (SODA'20), implying a nearly-quadratic time algorithm for approximating minimum-cost unit-capacity flow and (ii) de-amortizing the result of Abraham et al.~(FOCS'16) for dynamic spectral sparsifiers.

Our results are based on two novel techniques. 
The first technique is a generic
	black-box reduction that allows us to assume that the graph is initially an expander with almost uniform-degree and, more importantly, stays as an almost uniform-degree expander while undergoing only edge deletions.
	The second technique is called \emph{proactive
		resampling}:
	here we constantly \emph{re-sample} parts of the input graph so that, independent of an adversary's  computational power, a desired structure of the underlying graph can be  always maintained. 
	Despite its simplicity, the analysis of this sampling scheme is far from trivial,  because the adversary can potentially create dependencies between the random choices used by the algorithm.
	We believe these two techniques could be useful for developing other adaptive algorithms. 
\end{abstract}

	\newpage 
	
	\setcounter{tocdepth}{2}
	\tableofcontents
	
\end{titlepage}
\pagenumbering{arabic}

\part{Background}
\section{Introduction}\label{sec:intro}

Dynamic graph algorithms maintain information in an input graph undergoing edge {\em updates}, which typically take the form of edge insertions and deletions. Many efficient algorithms have been developed in this setting, such as those for maintaining a minimum spanning tree, maximum matching, shortest distances, and sparsifiers. However, many of these algorithms are randomized and, more importantly, make the so-called
{\em oblivious adversary assumption}, which assumes that each update given to the algorithm cannot depend on earlier query-answers of the algorithms. In other words, the whole update sequence is fixed by some adversary in advance, and then each update is given to the algorithm one by one. 
This assumption is crucial for many recent advances in the design of efficient randomized algorithms for dynamic problems
(e.g.  \cite{KapronKM13,HenzingerKN14_focs,chechik,BernsteinPW19,BaswanaKS12,AbrahamDKKP16,Solomon16,BaswanaGS18,BhattacharyaCHN18}).

The oblivious-adversary assumption significantly limits the use of dynamic algorithms in certain interactive environments and, in particular, the setting where these dynamic algorithms are employed as \emph{subroutines} for other algorithms. 
For example, \cite{ChuGPSSW18} pointed out that their goal of computing a (static) short cycle decomposition could have been achieved easily using existing dynamic spanner algorithms, 
if such algorithms worked without the oblivious-adversary assumption.
In addition, if the partially-dynamic single-source shortest paths algorithm of \cite{HenzingerKN14_focs} worked without this assumption, we would have an almost-linear time approximate min-cost flow and balanced separator algorithm in the static setting   (see, e.g., \cite{BernsteinC16,Bernstein17,gutenberg2020decremental,gutenberg2020deterministic}, for recent developments in this direction). Because of this,  designing dynamic algorithms 
\emph{without}  the oblivious adversary assumption has become  a major goal in the field of dynamic graph algorithms in recent years. We call such algorithms {\em adaptive} and say that they work against an {\em adaptive adversary}.

Thanks to the recent efforts in developing adaptive algorithms, such algorithms now exist for maintaining a number of graph primitives, such as minimum spanning trees with  bounded worst-case update time \cite{ChuzhoyGLNPS19,NanongkaiSW17,NanongkaiS17,Wulff-Nilsen17}, partially-dynamic single-source shortest paths \cite{EvenS, ChuzhoyK19,BernsteinC16, BernsteinChechikSparse, Bernstein17, gutenberg2020deterministic, gutenberg2020decremental, GutenbergWW20}, and fully-dynamic matching \cite{BhattacharyaHI15,BhattacharyaHN16,BhattacharyaHN17,BhattacharyaK19,Wajc19}. %
This line of research on dynamic graph algorithms has also brought new insights on algorithm design in the static setting (e.g. flow, vertex connectivity, matching, and traveling salesman problem \cite{Madry10,ChuzhoyK19,BrandLNPSSSW20,ChekuriQ2017near,ChekuriQ2018fast}).
One very recent exciting application is the use of an adaptive dynamic algorithm called {\em expander decomposition} to compute maximum-weight matching and related problems in nearly-linear time on moderately dense graphs \cite{BrandLNPSSSW20}.

\paragraph{Graph sparsifiers.} Despite the fast recent progress, very little was known for certain important primitives, like maintaining {\em graph sparsifiers} against an adaptive adversary.
To formalize our discussion, we say that a sparsifier of a graph $G=(V, E)$ is a sparse graph $H=(V, E')$ that approximately preserves properties of $G$, such as  all cuts ({\em cut sparsifiers}), all-pairs distances ({\em spanners}), and spectral properties ({\em spectral sparsifiers}). 
For any integer $\alpha\geq 1$, an {\em $\alpha$-spanner} of graph $G=(V,E)$ is a subgraph $H$ such that for any pair of nodes $(u,v)$, the distance between $u$ and $v$ in $H$ is at
most $\alpha$ times their distance in $G$. 
An {\em $\alpha$-cut sparsifier} of $G$ is a sparse graph $H$ that
preserves all cut sizes up to an $\alpha$ factor: that is, $\delta_H(S)\in [\delta_G(S), \alpha \delta_G(S)]$ for every $S\subseteq V$, where $\delta_G(S)$ (respectively $\delta_H(S)$) is the total weight of edges  between $S$ and $V\setminus S$ in $G$ (respectively $H$) for any $S\subset V$.
An $\alpha$-spectral sparsifier is a stronger object than an $\alpha$-cut sparsifier, but we defer the formal definition to \Cref{sec:preliminiaries}.

A dynamic algorithm for maintaining a spanner or a cut sparsifier is given a weighted undirected $n$-node graph $G$ to preprocess, and returns a spanner or cut-sparsifier $H$ of $G$. After this, it must process a sequence of {\em updates}, each of which  is an edge insertion or deletion of $G$. After each update, the algorithm outputs edges to be inserted and deleted to $H$ so that the updated $H$ remains an $\alpha$-spanner of the updated $G$.
The  algorithm's performance is measured by the {\em preprocessing time} (the time to preprocess $G$ initially); the {\em update time} (the time to process each update); the {\em stretch} (the value of $\alpha$); and the {\em size} of the spanner (the number of edges).
The update time is typically categorized into two types: \emph{amortized case} update time and \emph{worst case} update time. The more desirable one is the {\em worst-case} update time  which holds for every single update. This is in contrast to an {\em amortized} update time which holds ``on average''; i.e., for any $t$, an algorithm is said to have an amortized update time of $t$ if, for any $k$, the total time it spends to process the first $k$ updates is at most $kt$.

Spanners and cut sparsifers are fundamental objects that have been studied extensively in various settings (e.g., \cite{AlthoferDDJS93,RodittyZ11,BaswanaS07,RodittyTZ05,DerbelMZ10,GrossmanP17,GhaffariK18,BenczurK15,FungHHP11,AhnGM12,AhnGM13}). 
In the dynamic setting,  they have been actively studied  since 2005 (e.g. \cite{AusielloFI06,FeigenbaumKMSZ05,Baswana08,Elkin11,BernsteinR11,BaswanaKS12,BodwinK16,BernsteinFH19}). A fairly tight algorithm with amortized update time for maintaining dynamic spanners was   known in 2008 due to Baswana~et~al.~\cite{BaswanaKS12}. For any $k\geq 1$, their algorithm   maintains, with high probability, a $(2k-1)$-spanner of size  $\tilde{O}(kn^{1+1/k})$ in $O(k^{2}\log^{2}n)$  amortized update time\footnote{Throughout, $\tilde O$ hides $O(\polylog(n))$. With high probability (w.h.p.) means with probability at least $1-1/n^c$ for any constant $c>1$.}. 
The stretch and size tradeoff is almost tight assuming  Erd\H{o}s' girth conjecture, which implies that a $(2k-1)$-spanner must contain $\Omega(n^{1+1/k})$ edges. 
Recently, Bernstein et al. \cite{BernsteinFH19} showed how to ``de-amortize'' the result of Baswana~et~al.~\cite{BaswanaKS12}, giving an algorithm that  in $O(1)^k \log^3(n)$ {\em worst-case} update time maintains, w.h.p., a $(2k-1)$-spanner of size $\tilde O(n^{1+1/k})$.
We refer the reader to 
 \Cref{tab:spanner} for other related results and clear comparison. 
For dynamic cut sparsifiers, the only result we are aware of is  \cite{AbrahamDKKP16}, which maintains a $(1+\epsilon)$-cut sparsifier in polylogarithmic worst-case update time. \cite{AbrahamDKKP16} can also maintain a $(1+\epsilon)$-spectral sparsifier within the same update time, but this holds only for the amortized update time.

Similar to other dynamic algorithms, most existing dynamic spanner and cut sparsifier algorithms are {\em not} adaptive. The exceptions are the algorithms of \cite{AusielloFI06}, which can maintain 
a $3$-spanner (respectively a $5$-spanner) of size $O(n^{1+1/2})$ (respectively  $O(n^{1+1/3})$) in $O(\Delta)$ time, where $\Delta$ is the maximum degree. These algorithms are deterministic, and thus work against an adaptive adversary. Since $\Delta$ can be as large as $\Omega(n)$, their update time is  rather inefficient as typically $\poly\log(n)$ or $n^{o(1)}$ update times are desired. Designing dynamic algorithms for this low update time is one of the major objectives of our paper.

\global\long\def\H{\mathcal{H}}

\paragraph{Challenges.} Developing efficient adaptive algorithms for maintaining graph sparsifiers posts great challenges in the general research program towards adaptive dynamic algorithms. 
First, computing most sparsifiers inherently relies on the use of randomness. Even in the static setting, existing fast algorithms for constructing cut and spectral sparsifiers are all randomized, and  known deterministic algorithms   require  $\Omega(n^4)$ time~\cite{BatsonSS14,zouzias2012matrix}. In fact, a nearly-linear time deterministic algorithm for a certain cut sparsifier would resolve a major open problem about computing the minimum cut deterministically \cite{KawarabayashiT2015deterministic,GawrychowskiMW20,MukhopadhyayN2019weighted}.
Thus, in contrast to other primitives,  such as the minimum spanning tree or approximate maximum matching, for which  efficient deterministic algorithms exist in the static setting, there is little chance to dynamically  maintain sparsifiers {\em deterministically}. (Deterministic dynamic algorithms always work against adaptive adversaries.) 

Secondly, even if we allowed infinite update time and focused on the strictly simpler objective of minimizing the changes in the maintained sparsifier (the so-called {\em recourse} or {\em replacements} in online algorithms), it is entirely unclear from existing techniques whether it is possible to maintain such a sparsifier against an adaptive adversary while only making (amortized) $\polylog(n)$ changes to the sparsifier per update to the input graph. For example, an $O(\log n)$-spanner of $\tilde O(n)$ edges can be easily maintained with $\tilde O(n)$ recourse per update by replacing the entire spanner by a new one after every update. Is it possible that an adversary who can see the output spanner can make a few changes to the graph so that a new $O(\log n)$-spanner has to change completely? An answer to this question is unclear. This is in contrast to most dynamic graph primitives where bounding the changes is obvious even against adaptive adversaries. For example, it can be easily shown that the minimum spanning tree requires at most one edge insertion and one edge deletion after each update to the input graph.

Designing algorithms with low recourse is a prerequisite for fast dynamic algorithms, and there are several graph problems where low-recourse algorithms were the crucial bottleneck, e.g.~maximal independent set \cite{Censor-HillelHK16,AssadiOSS18sublinear_m,ChechikZ19,BehnezhadDHSS19,Monemizadeh19}, planar embeddings \cite{ChechikZ19,BehnezhadDHSS19}, and topological sorting \cite{BernsteinC18cycle}. In particular, the lack of recourse-efficient algorithms makes it very challenging to maintain sparsifiers against an adaptive adversary.

\subsection{Our Results}
We show how to dynamically maintain both spanners, cut sparsifiers, and spectral sparsifiers against an adaptive adversary in \emph{poly-logarithmic} update time and recourse. 
We summarize these results as follows:

\begin{theorem}
[Adaptive Spanner]\label{thm:main spanner}There is a randomized
\emph{adaptive} algorithm that, given an $n$-vertex graph undergoing
edge insertions and deletions, with high probability, explicitly maintains
a $\polylog(n)$-spanner of size $\Otil(n)$ using $\polylog(n)$
amortized update time.
\end{theorem}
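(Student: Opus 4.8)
The plan is to combine the two ingredients advertised in the introduction. First, invoke the black-box reduction to reduce the fully-dynamic spanner problem on an arbitrary $n$-vertex graph to the \emph{decremental} spanner problem on a graph that is, and stays, a $\phi$-expander with near-uniform degree $\Delta$ for $\phi = 1/\polylog(n)$; after an additional vertex-splitting step (replacing each vertex by a short bounded-degree expander gadget, which changes distances only by an additive $O(\log n)$ per vertex and $\phi$ by a $\polylog(n)$ factor) we may moreover assume $\Delta = \polylog(n)$. Concretely, the reduction maintains a hierarchy of expander decompositions: $G_0 = G$ is decomposed into $\phi$-expander pieces $\{C^0_j\}$ plus a set $R_0$ of inter-cluster edges with $|R_0| \le |E(G_0)|/2$; we recurse on $G_1 = (V,R_0)$, and so on for $O(\log n)$ levels. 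The maintained spanner is $H = \bigcup_{i,j} H^i_j$, where $H^i_j \subseteq C^i_j$ is a spanner of the piece produced by the expander module below. Edge insertions are buffered and the whole hierarchy is rebuilt after every $\epsilon$-fraction of the current edge count updates, amortizing the rebuild cost to $\polylog(n)$; edge deletions are forwarded to the piece containing the edge, where an NSW-style pruning procedure cuts out a small-volume set of vertices to keep the piece a $\phi$-expander and recursively re-inserts the cut-out part lower in the hierarchy.

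\textbf{Correctness of the assembly.} Each piece $C^i_j$ is a $\phi$-expander with $\phi = 1/\polylog(n)$, hence of diameter $O(\log n/\phi) = \polylog(n)$; the module maintains $H^i_j \subseteq C^i_j$ that is an $\Omega(\phi)$-expander, hence also of diameter $\polylog(n)$, and of size $\tilde O(|V(C^i_j)|)$ since each vertex keeps $\polylog(n)$ sampled incident edges. Summing over the $O(\log n)$ levels gives $|H| = \tilde O(n)$. For the stretch, observe that every edge $e=(u,v)$ of the current graph lies inside exactly one piece $C^i_j$ — the one at the level where $e$ stops being an inter-cluster (remainder) edge — and there $u$ and $v$ are joined by a path of $\polylog(n)$ edges of $H^i_j \subseteq G$. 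Replacing each edge of a shortest $u$–$v$ path in $G$ by its $\polylog(n)$-edge replacement path gives $d_H(u,v) \le \polylog(n)\cdot d_G(u,v)$; crucially the stretch does \emph{not} compound with the depth, because each edge is charged at a single level.

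\textbf{The expander module via proactive resampling, and the main obstacle.} It remains to maintain, for a graph that stays a $\phi$-expander with degrees $\approx \Delta = \polylog(n)$ under an adaptive sequence of deletions and prunings, a subgraph $H'$ that stays an $\Omega(\phi)$-expander of size $\tilde O(n)$. Statically, let each vertex sample $\Theta(\phi^{-2}\log n)$ of its incident edges uniformly at random: standard expander-sampling (matrix-Chernoff / effective-resistance) arguments show $H'$ is w.h.p.\ an $\Omega(\phi)$-expander of size $\tilde O(n)$. The difficulty is that the adversary sees $H'$ and can delete exactly its edges while keeping $G$ an expander, so a frozen sample is useless. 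The fix is \emph{proactive resampling}: the sampled edges of each vertex are organized into a constant number of epochs refreshed on a fixed schedule (independent of the adversary) from the \emph{current} graph, so that at every time step a majority of a vertex's sampled edges were drawn from randomness the adversary has had essentially no time to react to; and whenever the surviving sampled degree of a vertex would fall below a threshold we resample it immediately. I expect the main obstacle to be the analysis of this scheme: although each refresh uses randomness independent of the adversary's history, the adversary does control the \emph{support} from which we resample (it has shaped the graph through deletions and triggered the prunings), and can thereby create correlations among the algorithm's random choices across epochs. The crux is to show that, because the graph is guaranteed to remain a $\phi$-expander at all times, each refresh is ``as good as'' a fresh independent sample from a $\phi$-expander, and to push this through a potential/martingale argument that simultaneously bounds (i) the probability that $H'$ ever fails to be an $\Omega(\phi)$-expander and (ii) the total number of resampling operations over the whole sequence by $\tilde O(m)$ — the latter giving amortized $\polylog(n)$ update time and recourse, thereby also settling the recourse question raised in the introduction. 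Making the union bound go through despite the super-polynomially many adaptive adversary strategies — e.g.\ by bounding how many random bits the algorithm's visible behaviour depends on, or via a direct charging scheme — is the technically heavy part.
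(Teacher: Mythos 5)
Your overall architecture (reduce via dynamic expander decomposition with pruning to the decremental, near-uniform-degree expander case, then maintain a sampled expanding subgraph per piece and take the union, with stretch charged at a single level) is the paper's architecture. But the heart of the theorem is the expander module against an adaptive adversary, and there your proposal has a genuine gap on two counts. First, the mechanism you specify — per-vertex samples refreshed on a \emph{fixed, adversary-independent} epoch schedule, plus an immediate resample when a vertex's \emph{sampled degree} drops below a threshold — does not protect cuts. For the schedule to be affordable, each vertex can be refreshed only about once every $\Theta(n)$ updates (otherwise the total resampling work exceeds $\Otil(m)$), so your claim that ``at every time step a majority of a vertex's sampled edges were drawn from randomness the adversary has had essentially no time to react to'' cannot hold. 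Concretely, right after a refresh the adversary, who sees $H$, picks a cut $(X,\bar X)$ and deletes exactly the $\approx \rho\,|E_G(X,\bar X)|$ edges of $H$ crossing it. This is a vanishing fraction of the cut, so $G$ remains a near-uniform-degree $\phi$-expander (the promise of the reduction is respected); each endpoint loses only a $\Theta(\phi)$-fraction of its sampled edges, so your degree threshold never fires; and for moderate $|X|$ the number of deletions is far below the refresh period, so the schedule cannot repair the damage. The cut becomes empty in $H$, so $H$ is not an expander and not a $\polylog(n)$-spanner. The paper's scheme is different precisely here: resampling is \emph{deletion-triggered} — whenever an edge incident to $v$ is deleted, $S_v$ is immediately resampled from the current incidence list, and each edge has two ``owners'' so that $u$'s copy of $(u,v)$ can only be re-rolled when $u$ itself resamples. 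With that rule, every cut's lower bound follows because each of the at most polynomially many \textsc{SampleVertex} calls is a fresh Chernoff experiment on the \emph{current} incident cut edges; adaptivity is then handled not by a martingale/potential argument but by time-ordering the experiments, applying the chain rule, and union-bounding over vertex subsets and cuts with failure probability $n^{-\Omega(|X|)}$. Your proposal flags this analysis as ``the technically heavy part'' but does not carry it out, and the scheme it would have to be carried out for is the one that fails.

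Two further points. (i) Your preliminary step of splitting vertices down to degree $\Delta=\polylog(n)$ breaks the size bound: a dense expander piece with $m_i$ edges becomes a graph on $\Theta(m_i/\polylog n)$ vertices, and any spanning subgraph of it (hence your per-piece spanner, even after contracting the gadgets back) has $\Omega(m_i/\polylog n)$ edges, so the union is $\Otil(m)$ rather than $\Otil(n)$. The paper instead splits only to \emph{near-uniform} (possibly large) degree, using the decomposition's guarantee that the minimum degree is within $\polylog(n)$ of the average, and keeps the per-vertex sample size $\polylog(n)$ via efficient subset sampling; then the spanner size is $\Otil(\sum_i|V_i|)=\Otil(n)$. (ii) A matter of packaging: the paper obtains the spanner as a corollary of the cut-sparsifier \emph{lower bound} (a cut sparsifier of a $\phi$-expander is itself an $\tilde\Omega(\phi)$-expander, hence has $\polylog(n)$ diameter), so the machinery you would need is exactly the lower-bound half of that analysis; the exponentially spaced proactive schedule $t,t+1,t+2,t+4,\dots$ and the degree-drop-triggered neighbor resampling are only needed for the upper bound, i.e.\ for the cut-sparsifier theorems, not for \Cref{thm:main spanner} itself.
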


\begin{theorem}
[Adaptive Cut Sparsifier]\label{thm:main cut}There is a  randomized
\emph{adaptive} algorithm that, given an $n$-vertex graph undergoing
edge insertions and deletions and a parameter $k\ge1$, with high
probability, maintains an $O(k)$-cut sparsifier of size $\tilde{O}(n)$
using $\tilde{O}(n^{1/k})$ amortized update time, which is $\polylog(n)$
time when $k=\log n$.
\end{theorem}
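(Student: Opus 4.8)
The plan is to obtain \Cref{thm:main cut} by chaining the paper's two tools. First, the black-box reduction reduces maintaining a cut sparsifier of a fully-dynamic $n$-vertex graph $G$ to maintaining one for a graph that is always a $\phi$-expander with degrees in a constant-factor range $[\deltamin,\deltamax]$ and that undergoes \emph{only} edge deletions. Second, \emph{proactive resampling} maintains, against an adaptive adversary, a cut sparsifier of such a decremental near-uniform-degree expander by repeatedly re-drawing a uniform edge sample. The tradeoff between $O(k)$ stretch and $\tilde O(n^{1/k})$ update time will come from the conductance $\phi$ fed to the reduction and the corresponding refresh rate of the resampling.

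\textbf{Reduction to a decremental expander.} Assume an algorithm $\A$ maintaining an $O(1)$-cut sparsifier of a promised $\phi$-expander of near-uniform degree under deletions, in amortized time $T$. To handle a general fully-dynamic $G$: compute a $\phi$-expander decomposition, which partitions $E(G)$ into $\phi$-expanders plus a ``leftover'' of $\le\tilde O(\phi m)$ edges; recurse on the leftover (depth $O(\log_{1/\phi} m)$); run $\A$ on each expander piece; handle insertions by accumulating them in a buffer and rebuilding the decomposition of the affected region once the buffer doubles; and maintain each piece's expander promise under its (adversarial plus rebuild-induced) deletions by expander pruning, demoting pruned vertices into the leftover recursion. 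Because cut sparsification commutes with edge-disjoint unions, the union of all sparsifiers returned at all levels is a cut sparsifier of $G$ whose approximation is that of $\A$ up to the losses incurred along the recursion, whose size is $\tilde O(n)$, and whose maintenance costs $\polylog(n)$ amortized on top of the calls to $\A$; the near-uniform-degree promise is propagated by a degree-splitting gadget.

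\textbf{Cut sparsifier of the decremental expander.} In a $\phi$-expander of near-uniform degree $\bar d$, every cut has $\Omega(\phi\bar d)$ edges, so all cuts are ``large''. By a Chernoff bound plus a union bound over the (few) near-minimum cuts of an expander --- the Bencz\'ur--Karger / Fung--Hariharan--Harvey--Panigrahi sampling argument with oversampling $\rho=\tilde O((\alpha{+}1)\deltamax/(\deltamin^2\phi^2))$ --- sampling each surviving edge independently with probability $\Theta(\rho/\bar d)$ and reweighting by the inverse probability is, w.h.p., an $O(1)$-cut sparsifier with $\tilde O(n)$ edges. The obstacle is adaptivity: the adversary sees the sampled set $H$ and can delete precisely the few sampled edges crossing a near-minimum cut, making $H$ under-count it. Proactive resampling fixes this by keeping $H$ as a union of layers that are re-sampled from scratch on a rolling schedule --- and, crucially, re-sampled \emph{proactively}, at a rate tuned to the fastest rate at which the adversary could concentrate deletions onto a single cut --- so that the part of $H$ the adversary could exploit is always fresh. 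This dovetails with pruning: if the adversary does remove enough incident edges to endanger some cut, the corresponding low-degree vertices are pruned out and re-sparsified with independent randomness in the recursion. Balancing the re-sampling cost against the refresh rate and the recursion depth, and taking $\phi$ on the order of $n^{-1/k}$, yields amortized update time $\tilde O(n^{1/k})$ and approximation $O(k)$; combined with the reduction this gives the theorem.

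\textbf{Main difficulty.} The heart of the argument --- and where I expect nearly all the effort to go --- is the probabilistic analysis of proactive resampling against a \emph{fully} adaptive adversary that may even inspect the random bits used so far. Each update can depend on everything revealed, so the events ``edge $e$ is currently in $H$'' are correlated in adversary-chosen ways and a plain union bound over cuts and time steps is unavailable. The plan is to attach to each currently endangered cut a potential recording how much of its sampled mass the adversary has destroyed since the last refresh of the relevant layers, and to show by a martingale / optional-stopping argument that, w.h.p., no endangered cut's potential reaches the failure threshold before the schedule refreshes it --- while charging only $\tilde O(n^{1/k})$ amortized for the refreshes. Fitting the layering, the refresh schedule, this potential, and the prune-and-demote interface of the reduction together consistently is the technical core.
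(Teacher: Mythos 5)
Your overall architecture is the same as the paper's (reduce to a decremental, near-uniform-degree $\phi$-expander via dynamic expander decomposition plus pruning, then maintain the sparsifier on the expander by proactive resampling), but the step that is supposed to produce the actual statement of \Cref{thm:main cut} — the $O(k)$-approximation versus $\tilde O(n^{1/k})$-time tradeoff — is wrong as described. You attribute the tradeoff to choosing the conductance $\phi \approx n^{-1/k}$. In the sampling analysis you invoke, $\phi$ does not control the approximation at all: the overestimation factor of the resampled sparsifier comes from the number of ``relevant'' resampling events per edge, which is $\Theta(\log n)$ for the doubling schedule $t, t+2^0, t+2^1,\dots$ regardless of $\phi$. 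Making $\phi$ smaller only hurts you: the oversampling rate $\rho = \tilde\Theta\bigl(\deltamax/(\deltamin^2\phi^2)\bigr)$ grows, so the sparsifier size per expander becomes $\tilde O\bigl(|V|(\deltamax/(\phi\deltamin))^2\bigr) \approx \tilde O(|V|\,n^{2/k})$, violating the claimed $\tilde O(n)$ size, and the reduction's overhead (which scales with $\poly(1/\phi)$) also degrades — while the approximation stays $\Theta(\log n)$, not $O(k)$. So your plan, carried out as written, yields only the $k=\log n$ endpoint ($\polylog(n)$ approximation in $\polylog(n)$ time), not the full theorem. Relatedly, you assume an $O(1)$-cut-sparsifier routine on the expander; since the union property composes cut sparsifiers losslessly, that assumption is exactly as strong as the final result and is precisely what the adaptive analysis cannot deliver in $\polylog$ time.

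The paper obtains the tradeoff by a different knob: keep $\phi = 1/\polylog(n)$ (as required by the reduction, $\phi = O(1/\log^4 n)$) and change the proactive-resampling schedule from $t+2^i$ to $t+(1+\epsilon)^i$ with $\epsilon = n^{-\Theta(1/k)}$. A denser schedule means each relevant call leaves only an $\epsilon/(1+\epsilon)$-fraction of the remaining time window for further relevant calls, so the number of relevant resamplings per vertex drops to $O(\log_{1/\epsilon} t) = O(k)$; this is what becomes the $O(k)$ approximation in the upper-bound analysis (\Cref{thm:generalStatementCutSparsifier}), at the price of a $\poly(1/\epsilon) = n^{O(1/k)}$ factor in update time, which is where $\tilde O(n^{1/k})$ comes from after plugging into \Cref{thm:amortized:fully_dynamic_weighted}. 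Your closing paragraph correctly flags that the adaptive analysis is the technical core, but the martingale/potential argument you sketch is not a substitute for the concrete combinatorial lemma that drives the paper's proof — the bound on the number of relevant experiments per edge, combined with a chain-rule argument over experiments whose success probabilities are $\rho$ conditioned on all adaptively generated history, plus the degree-drop-triggered neighbor resampling (threshold $\zeta=\phi\deltamin$) needed to prevent the cut in $G$ from shrinking underneath a stale sample. Without identifying the schedule density as the tradeoff parameter and that counting lemma, the proposal does not establish \Cref{thm:main cut}.
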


\begin{theorem}
[Adaptive Spectral Sparsifier]\label{thm:main spectral adaptive}There is a  randomized
\emph{adaptive} algorithm that, given an $n$-vertex graph undergoing
edge insertions and deletions, with high
probability, maintains an $\polylog(n)$-spectral sparsifier of size $\tilde{O}(n)$
using $\polylog(n)$ amortized update time.
\end{theorem}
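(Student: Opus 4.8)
The plan is to derive \Cref{thm:main spectral adaptive} from two self-contained pieces, one per technique advertised above. The first piece is the \emph{black-box reduction}: I would show that maintaining a $\polylog(n)$-spectral sparsifier of a fully-dynamic $n$-vertex graph $G$ reduces, with only $\polylog(n)$ overhead in update time, recourse, and size, to maintaining $\polylog(n)$-spectral sparsifiers of a bounded collection of auxiliary graphs, each of which starts as a $\phi$-expander with $\phi = 1/\polylog(n)$ and max-to-min degree ratio $\polylog(n)$, and thereafter undergoes \emph{only edge deletions}. Two facts drive this step: spectral approximation composes multiplicatively, so stitching together $\polylog$-approximate sparsifiers of the pieces yields a $\polylog$-approximate sparsifier of $G$; and the decremental-expander guarantee can be enforced by rebuilding the relevant piece via expander decomposition whenever its conductance or degree-uniformity degrades, charging the rebuild to the deletions responsible for the degradation. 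The second piece is a \emph{decremental spectral sparsifier for near-uniform-degree expanders} running in $\polylog(n)$ amortized time against an adaptive adversary; given both pieces the theorem is immediate.

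For the second piece the starting point is a static observation. In a graph all of whose degrees are $\Theta(\Delta)$ and whose conductance is at least $\phi$, the combinatorial-Laplacian spectral gap is $\Omega(\phi^2\Delta)$, so by the standard eigenvector expansion of effective resistance every edge has effective resistance $O(1/(\phi^2\Delta))$. Hence keeping each edge independently with probability $p=\Theta(\log n/(\phi^2\Delta))$ and rescaling survivors by $1/p$ produces, by a matrix Chernoff bound, a $(1\pm\tfrac12)$-spectral sparsifier with $\tilde O(n)$ edges w.h.p.\ --- much stronger than the $\polylog$-approximation we need. The catch, and the reason the paper exists, is that this one-shot sample is safe only against an \emph{oblivious} adversary: once the $\tilde O(n)$ sampled edges are revealed, an adaptive adversary can delete exactly those edges while the true graph still has $\Omega(m)$ edges. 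The fix is \emph{proactive resampling}: rather than a single static sample, I would maintain an $O(\log n)$-level schedule of independent uniform samples of the current edge set, where level $i$ is refreshed with fresh randomness every $\Theta(2^i)$ updates on a schedule that depends only on the update counter, never on the adversary's choices. The maintained sparsifier at any time is the union (with appropriate reweighting) of the live samples across levels, which has $\tilde O(n)$ edges, and a standard geometric-schedule accounting makes the total cost of all resamplings $\polylog(n)$ amortized, on top of the $\polylog(n)$ reduction overhead.

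The main obstacle is the analysis of proactive resampling against an adaptive adversary --- precisely the point flagged in the abstract as ``far from trivial.'' The difficulty is that the domain each fresh sample is drawn from, namely the set of edges still present, is itself a random variable depending on earlier samples and on adversary decisions made \emph{after} seeing those samples, so the algorithm's random choices are not independent and off-the-shelf concentration does not apply. I would control this with a potential/martingale argument keyed to the schedule: conditioned on the entire transcript visible to the adversary up to a level-$i$ refresh, the fresh sample is a uniform sample of the \emph{current} edge set, independent of that transcript, hence spectrally approximates the current graph with high probability; and between two successive level-$i$ refreshes the adversary can delete only an $O(\epsilon)$-fraction of the edges that sample is responsible for, so the quadratic form it contributes degrades by at most a $(1+O(\epsilon))$ factor before being refreshed. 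Summing the $(1\pm O(\epsilon))$ errors over $O(\log n)$ levels, together with the $(1\pm\tfrac12)$ error of each sample and the composition from the reduction, keeps the overall stretch at $\polylog(n)$. The remaining loose ends I expect to be routine but necessary: preserving the near-uniform-degree and conductance invariants under deletions via the reduction's rebuild mechanism (so the effective-resistance bound keeps holding as $\Delta$ shrinks), bounding recourse by the same schedule-based accounting, and converting the amortized bound to worst-case at a sub-polynomial cost by the standard background-rebuilding argument.
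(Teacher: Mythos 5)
Your overall architecture matches the paper's: reduce via a black-box to decremental, near-uniform-degree $\phi$-expanders (\Cref{thm:amortized:fully_dynamic_weighted}), run a sampling-based sparsifier there, and compose the pieces with polylogarithmic loss (the paper actually proves an $O(\log n)$-approximate \emph{cut} sparsifier on the expander, \Cref{thm:mainDecrCutSparsifier}, and converts it to a spectral one via Cheeger, \Cref{cor: cut on expander is spectral}, rather than arguing spectrally via effective resistances, but that difference is cosmetic). The genuine gap is in your decremental algorithm on the expander. You replace the paper's deletion-triggered, per-vertex proactive resampling by an $O(\log n)$-level scheme in which level $i$ is a uniform sample of the whole current edge set refreshed every $\Theta(2^i)$ updates, \emph{on a schedule depending only on the update counter}. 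This cannot be made simultaneously fast and correct. Each refresh costs $\tilde\Omega(n)$ (the size of the sample), so to stay within $\polylog(n)$ amortized time the most frequently refreshed level can be refreshed only about once every $\Omega(n/\polylog n)$ updates. But between two consecutive refreshes the adaptive adversary sees the $\tilde O(n)$ live sampled edges and has $\Omega(n/\polylog n)$ deletions at its disposal: it can delete exactly the sampled edges crossing a chosen cut while leaving the cut large in $G$, so for a long window \emph{every} live level has essentially zero weight on a cut that is still large, destroying the lower bound; dually, it can delete the non-sampled edges of a cut so that a stale level overestimates that cut by a factor $1/p=\Theta(\phi^2\Delta/\log n)$, which is polynomial, not $1+O(\epsilon)$. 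Your key quantitative claim --- that between two level-$i$ refreshes the adversary removes only an $O(\epsilon)$-fraction of "the edges that sample is responsible for," so its quadratic form degrades by $1+O(\epsilon)$ --- is therefore false for the levels that are cheap enough to afford: between refreshes of such a level the adversary may delete a constant fraction of the graph, or all of the sample, and Laplacian quadratic forms do not degrade gracefully under such targeted deletions.

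This is exactly the tension the paper's design resolves: resamplings are \emph{triggered by the adversary's deletions} (immediate local resampling of the endpoints, follow-ups at exponentially spaced times, and neighbor resampling whenever a vertex loses $\zeta=\phi\Delta_{\min}$ incident edges), which is what makes each resampling cheap ($\tilde O(\rho\Delta_{\max})=\tilde O(1)$) while guaranteeing that an adversary cannot strip a vertex's or a cut's sampled edges without forcing fresh samples. The price is that the schedule now depends on the adversary, so the random choices are no longer independent of the transcript; the paper's analysis (the "relevant experiments" bookkeeping, the $O(\log t)$ bound on relevant resamplings per edge, and the chain-rule/union-bound argument over cuts, plus the degree-drop mechanism for the upper bound) is precisely the machinery needed to handle those dependencies. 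By fixing the schedule in advance you make the conditional-independence step trivial, but you thereby forfeit the reactivity that correctness against an adaptive adversary requires; your martingale sketch does not substitute for it. To repair the proposal you would need either to reintroduce deletion-triggered resampling and then supply an adaptivity-robust concentration argument (combinatorial over cuts as in the paper, or a matrix-martingale analogue, which is not provided), or to find some other mechanism preventing the adversary from exploiting the long gaps between counter-based refreshes.
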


All results above hold even against a stronger adversary, called {\em randomness-adaptive} in \cite{NanongkaiS17}. This adversary can access the random bits previously used by our algorithms (but not the future random bits).
\Cref{thm:main spanner} is the first algorithm with $o(n)$ update time against an adaptive adversary, and answers the open problem in \cite{AhmedB_survey}. The only previous adaptive algorithm is by \cite{AusielloFI06} which can take $O(n)$ update time. 
No non-trivial dynamic adaptive algorithm for
cut sparsifiers and spectral sparsifiers is known before \Cref{thm:main cut,thm:main spectral adaptive}. 

Compared to results assuming the oblivious-adversary assumption (e.g.~\cite{BaswanaKS12,BernsteinFH19,AbrahamDKKP16}), our bounds are not as tight. For example,  \Cref{thm:main spanner} does not achieve the standard $(2k-1)$-spanner with $O(n^{1+1/k})$ edges. 
One reason for this limitation is that it is not clear if such trade-off is possible even when we focus on the {\em recourse}, as discussed above.
Maintaining spanners or other sparsifiers against adaptive adversaries with tight trade-offs and polylogarithmic recourse is a challenging barrier that is beyond the scope of this paper. 
Additionally, the sparse-spanner regime studied in this paper is generally the most useful for applications to other problems (see discussion in \Cref{sec:intro:app}); getting a sharper trade-off would not lead to significant improvements for most of these applications.

All the above results can be \emph{deamortized}.\footnote{To do this, we use, e.g., the sparsification technique~\cite{EppsteinGIN97} and a sophisticated dynamic expander decomposition; see \Cref{sec:overview_worstcase} for an overview.}
For example, a $2^{O(\sqrt{\log n}\log\log(n))}$-spanner
of size $\Otil(n)$ can be maintained in $2^{O(\sqrt{\log n}\log\log(n))}$
\emph{worst-case} update time. Also, for any $k$, a $2^{O(k\polylog(k))}$-cut
sparsifier of size $\Otil(n)$ can be maintained in $\tilde{O}(n^{1/k})$
worst-case time. 
In particular, we can maintain an $O(\log^{*}n)$-cut
sparsifier and an $O(1)$-cut sparsifier in $n^{o(1)}$ and $n^{\epsilon}$ time
for any constant $\epsilon$, respectively.

Our deamortization technique also implies, as a side result, the first non-trivial algorithm with \emph{worst-case} update time against an oblivious adversary for maintaining spectral sparsifiers. 

\begin{theorem}
[Oblivious Spectral Sparsifier]\label{cor:main spectral}There is
a randomized algorithm against an oblivious adversary that, given
an $n$-vertex graph undergoing edge insertions and deletions and
$\epsilon\ge1/\polylog(n)$, with high probability, maintains a $(1+\epsilon)$-spectral
sparsifier of size $n\cdot 2^{O(\sqrt{\log n})}$ using $2^{O(\log^{3/4}n)}$
worst-case update time.
\end{theorem}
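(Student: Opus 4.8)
The plan is to combine the black-box expander-decomposition reduction advertised in the introduction with a static spectral sparsifier, and then to \emph{deamortize} the resulting scheme. First I would invoke the generic reduction: it suffices to maintain a $(1+\epsilon)$-spectral sparsifier on a graph that is initially an almost-uniform-degree $\phi$-expander and thereafter undergoes \emph{only edge deletions}, where $\phi = 1/2^{O(\sqrt{\log n})}$ (the quality loss of the decomposition is what forces the $2^{O(\sqrt{\log n})}$ factors in the size and update time). The reduction composes along an $O(\log n)$-level hierarchy of expander decompositions, and each fully-dynamic update on $G$ translates into a bounded number of edge deletions at each level, so it is enough to handle the decremental-on-an-expander case with $2^{O(\sqrt{\log n})}$ worst-case update time and blow-ups; the $O(\log^{3/4} n)$ exponent in the final bound is obtained by choosing the branching/recursion parameters of the expander-decomposition hierarchy to balance the per-level cost against the number of levels (this is exactly the parameter trade-off referenced in the footnote via \cite{EppsteinGIN97} and the sophisticated dynamic expander decomposition).

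Next, on a $\phi$-expander with near-uniform degree $d$, I would maintain a spectral sparsifier by uniform edge sampling at an appropriately chosen rate: because the graph is an expander, every edge has effective resistance $O(1/(\phi^2 d))$, so sampling each edge with probability $p = \tilde O(1/(\epsilon^2 \phi^2 d))$ and reweighting by $1/p$ yields, by matrix-Chernoff / Spielman–Srivastava-style arguments, a $(1+\epsilon)$-spectral sparsifier of size $\tilde O(n \epsilon^{-2} \phi^{-2}) = n \cdot 2^{O(\sqrt{\log n})}$ with high probability against an oblivious adversary. Under edge deletions the expander slowly loses edges; since the adversary is oblivious, the sampled set remains valid until a $1/\poly\log$ fraction of edges (or of some vertex's degree) has been removed, at which point one rebuilds the sample on the affected expander piece. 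The dynamic expander decomposition guarantees that such "pruning/rebuild" events are charged to $n^{1+o(1)}$ total work, giving amortized $n^{o(1)}$ update time — and here one uses the standard deamortization of expander-decomposition-based data structures (spreading each rebuild over the next batch of updates, combined with the sparsification tree of \cite{EppsteinGIN97}) to convert amortized into worst-case, landing at $2^{O(\log^{3/4} n)}$ per update.

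The main obstacle, and the step I would spend the most care on, is the \emph{deamortization}: making every single update cost $2^{O(\log^{3/4} n)}$ in the worst case while the expander decomposition is itself being rebuilt in the background. One must (i) maintain several "shifted" copies of the expander hierarchy so that at any moment a fully-built copy is available to answer with, (ii) ensure the static spectral-sparsifier recomputation on each rebuilt expander piece is itself spread out over enough updates, and (iii) verify that the $O(\log n)$-fold composition across hierarchy levels multiplies the per-level worst-case costs in a controlled way — this is what pins down the exponent $3/4$ rather than, say, $1/2$. Two secondary points need checking: that the quality guarantee remains $(1+\epsilon)$ and not merely $\polylog(n)$ (unlike the adaptive Theorems, here oblivious sampling on expanders gives sharp $(1\pm\epsilon)$ concentration, and the expander-decomposition composition preserves spectral approximation multiplicatively with only a $(1+\epsilon)$ loss per level after rescaling $\epsilon \leftarrow \epsilon/\Theta(\log n)$), and that $\epsilon \ge 1/\polylog(n)$ is exactly the regime in which $\epsilon^{-2}$ is absorbed into the $\polylog$/subpolynomial factors. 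Once these are in place, the theorem follows by assembling the reduction, the expander-level sampler, and the deamortization wrapper.
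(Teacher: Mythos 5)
Your proposal follows essentially the same route as the paper: reduce via the worst-case black-box expander reduction to the decremental, near-uniform-degree expander setting, sample edges once at a degree/expansion-dependent rate (valid against an oblivious adversary since the sample is drawn before any updates), absorb deletions via pruning, and deamortize with phase-shifted copies plus an Eppstein-style sparsification tree whose branching $d = 2^{\log^{3/4}n}$ balances the $2^{O(\sqrt{\log n}\cdot L)}$ recourse blow-up against the per-level rebuild cost, yielding the $2^{O(\log^{3/4}n)}$ update time. Two minor deviations worth noting: the paper never resamples during the decremental phase (pruned edges are simply output at weight $1$, so the one-shot initial sample remains valid), and the multiplicative composition of recourse occurs across only $L = O(\log^{1/4} n)$ tree levels --- your passing remark about an ``$O(\log n)$-fold composition'' of per-level costs would break the bound if taken literally, though your earlier balancing statement has the right parameter choice.
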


\noindent The previous algorithm by Abraham et al.~\cite{AbrahamDKKP16} maintains a $(1+\epsilon)$-spectral sparsifier
of size $\Otil(n)$ using $\polylog(n)$ amortize update time. It
was asked in the same paper if the update time can be made worst-case.
\Cref{cor:main spectral} almost answers this open question
modulo $n^{o(1)}$ factors.

\subsection{Applications} \label{sec:intro:app}
Our results imply several interesting applications.
Our first set of applications are for the decremental $(1+\epsilon)$-approximate single-source shortest paths (SSSP) problem. 
There has been a line of work \cite{BernsteinC16,Bernstein17,BernsteinChechikSparse,gutenberg2020deterministic} on fast adaptive algorithms for solving this problem.
Although all these algorithms are adaptive, they share a drawback that they cannot return the shortest path itself; they can only maintain distance estimates.
Very recently, Chuzhoy and Khanna~\cite{ChuzhoyK19} showed a partial fix to this issue for some algorithms \cite{BernsteinC16,Bernstein17} and consequently obtained impressive applications to static flow algorithms. 
Unfortunately, this fix only applies in a more restricted setting, and moreover it is not clear how the technique from \cite{ChuzhoyK19} can be used to fix the same issue in  other algorithms (e.g.   \cite{BernsteinChechikSparse,gutenberg2020deterministic}).

We show that our main result from \Cref{thm:main spanner} can be employed  in a consistent and simple way, such that  the path-reporting issue in all previous  algorithms in \cite{BernsteinC16,Bernstein17,BernsteinChechikSparse,gutenberg2020deterministic} can be fixed. This resolves an open question posed in multiple papers \cite{Bernstein17,gutenberg2020deterministic,ChuzhoyK19}. We summarize these applications below:

\begin{restatable}[Fixing the path-reporting issue of \cite{BernsteinChechikSparse,gutenberg2020deterministic}]{corollary}{gutenbergExtension}\label{thm:sparseGraphSSSPWithPathQueryMainResult}
For any decremental unweighted graph $G=(V,E)$, fixed source $s$, and constant $\epsilon > 0$, there is an adaptive algorithm $\mathcal{B}$ that maintains  the $(1+\epsilon)$-approximate distances from vertex $s$ to every vertex $v \in V$ \emph{and supports corresponding shortest path queries}. The algorithm $\mathcal{B}$ has expected total update time $mn^{0.5+o(1)}$, distance estimate query time $O(\log \log n)$ and shortest path query time $\tilde{O}(n)$. 
\end{restatable}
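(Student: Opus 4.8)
The plan is to combine the dynamic spanner algorithm of \Cref{thm:main spanner} with the existing decremental SSSP machinery of \cite{BernsteinChechikSparse,gutenberg2020deterministic} in a black-box fashion. The starting observation is that the only obstacle preventing those algorithms from reporting paths is that, internally, they maintain a hierarchy of ``emulators'' or ``auxiliary graphs'' whose edges correspond to short paths in $G$ that are \emph{not} stored explicitly and whose endpoints are discovered via randomized hitting-set arguments under an oblivious-adversary assumption; when the adversary is adaptive (as it must be if we want to follow a reported path and delete along it), one cannot afford to recompute these paths. The fix is to replace whichever subroutine is responsible for ``connecting'' far-apart vertices by a low-stretch structure that is maintained adaptively and path-reporting by construction, namely a dynamic spanner: a $\polylog(n)$-spanner $H$ of the current graph $G$ has $\Otil(n)$ edges, is explicitly maintained, survives an adaptive adversary, and — crucially — any edge of $H$ either is an edge of $G$ or, in our construction, is certified by an explicitly stored short path in $G$, so a path in $H$ immediately unfolds into a genuine path in $G$.

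First I would recall the structure of the \cite{BernsteinChechikSparse,gutenberg2020deterministic} algorithm and isolate the precise guarantee it needs from its innermost component: roughly, given the decremental graph $G$, it needs at all times a sparse subgraph (or emulator) that $(1+\epsilon)$- or polylog-approximately preserves $s$-to-$v$ distances and on which one can answer path queries, with total update time $mn^{o(1)}$ and small query time. Second, I would instantiate this component with the spanner $H$ of \Cref{thm:main spanner}, run on the \emph{same} decremental sequence of edge deletions that $G$ undergoes (decremental updates are a special case of the fully dynamic updates the theorem handles), giving amortized $\polylog(n)$ update time, hence $m\cdot\polylog(n)$ total update time, well within the $mn^{0.5+o(1)}$ budget. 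Third, I would run the chosen adaptive decremental-SSSP scheme on top of $H$ rather than on top of the original randomized emulator; since $H$ has $\Otil(n)$ edges, an SSSP tree (or the ES-tree / monotone Even–Shiloach machinery) on $H$ costs $\Otil(n)$ per level and the known multiplicative-weight / scaling reductions incur the stated $n^{0.5+o(1)}$ overhead. For a distance query we read off the (approximate) distance label, giving $O(\log\log n)$ query time via the standard bucketed data structure; for a path query we extract the $s$–$v$ path in $H$ (length $\Otil(n)$ since $H$ is sparse and the SSSP tree has depth $\Otil(n)$ after scaling), then replace each spanner edge by its stored witness path in $G$, yielding a genuine $s$–$v$ path in $G$ of the claimed approximate length in $\tilde{O}(n)$ total time.

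The main obstacle I expect is \textbf{interfacing the stretch of the spanner with the accuracy demanded by the SSSP reduction}: the algorithms of \cite{BernsteinChechikSparse,gutenberg2020deterministic} are designed around $(1+\epsilon)$ emulators, whereas \Cref{thm:main spanner} only gives $\polylog(n)$ stretch. Resolving this requires observing that these frameworks use the emulator only at ``coarse'' scales — they route through it only over distance ranges where a $\polylog(n)$ multiplicative error is absorbed by a later rescaling/bucketing step, or where one can afford to run an exact short-range ES-tree in $G$ and use the spanner purely to certify \emph{reachability at the right scale} rather than exact distance. I would therefore reststructure the hierarchy so that short distances (below a threshold $d_0$) are handled by a bounded-depth Even–Shiloach tree directly in $G$, and long distances are handled by the spanner, with the final $(1+\epsilon)$ guarantee recovered by the same geometric scaling argument already present in those papers; the $\polylog(n)$ spanner stretch only multiplies into the hidden $n^{o(1)}$ factor. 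A secondary, more bookkeeping-level obstacle is ensuring adaptivity is preserved through the whole composition: the SSSP layer's deletions to $G$ (triggered by the adversary following a reported path) must be legal inputs to the spanner algorithm, which is fine because \Cref{thm:main spanner} is adaptive and even randomness-adaptive, so the SSSP layer may inspect the spanner's output (and randomness) without breaking any assumption — this is exactly the property that the oblivious-adversary emulators lacked and the whole reason the fix goes through.
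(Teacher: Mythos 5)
There is a genuine gap in how you interface the $\polylog(n)$-stretch spanner with the $(1+\epsilon)$ accuracy requirement, and it stems from using the spanner in the wrong place. You propose to run the decremental SSSP scheme \emph{on top of} the spanner $H$ (replacing the internal emulator) and to read distance labels off that structure, then to patch the accuracy loss by letting an exact bounded-depth ES-tree handle short distances and the spanner handle long ones, arguing the $\polylog(n)$ stretch ``only multiplies into the hidden $n^{o(1)}$ factor.'' This does not work: stretch is a multiplicative error in the \emph{distance estimate}, not a running-time overhead, and no rescaling or bucketing step can convert a $\polylog(n)$-approximate long-range distance into a $(1+\epsilon)$-approximate one. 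If the spanner is the structure that certifies distances at the dominant scale, the output is a $\polylog(n)$-approximation, which contradicts the statement being proved. Moreover, discarding the original algorithms' internal hierarchy also discards the very guarantees (total update time $mn^{0.5+o(1)}$, $(1+\epsilon)$ distance estimates, $O(\log\log n)$ distance queries) that the corollary inherits from \cite{BernsteinChechikSparse,gutenberg2020deterministic}.

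The paper's proof keeps those algorithms entirely intact and uses the spanner only as a path-reporting device, confined to the one place where its stretch is harmless. Both algorithms fit a ``heavy-light'' framework: at each scale $i$ the edge set is split into $G^{light}_i$ and $G^{heavy}_i$, the connected components $\mathcal{C}$ of $G^{heavy}_i$ are contracted, contraction distorts distances only by an \emph{additive} $\delta 2^i$, and the existing machinery already returns a $(1\pm\epsilon/2)$-approximate $s$--$v$ path in $G^{light}_i/\mathcal{C}$. The missing piece is only how to expand each contracted component-vertex into an actual path of $G$. The paper maintains, per scale, the adaptive $\alpha$-spanner of \Cref{thm:SpannerMainResult} on $G^{heavy}_i$ (with $\alpha=\polylog(n)$), and at query time runs Dijkstra inside the relevant spanner components to connect the entry and exit vertices $x,y$; since the spanner is a subgraph of $G^{heavy}_i$ this yields genuine edges of $G$, and since the intra-component detour is bounded by $\alpha\cdot\delta 2^i$, choosing $\delta=\epsilon/(2\alpha)$ makes the spanner's stretch cost only an additive $\tfrac{\epsilon}{2}2^i$, preserving $(1+\epsilon)$ overall while the $1/\delta$ dependence of the base algorithms costs only polylog factors. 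Your instinct that the spanner should be used ``to certify connectivity at the right scale rather than exact distance'' points in this direction, but the concrete restructuring you give (spanner for long distances, ES-tree for short) is inverted relative to what makes the error analysis go through, so the accuracy claim in your second and third steps would fail as written.
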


\Cref{thm:sparseGraphSSSPWithPathQueryMainResult} gives the first adaptive algorithm without the path-reporting issue that can take $o(n^2)$ total update time.
The next algorithm works on weighted graphs and is near-optimal on dense graphs:

\begin{restatable}[Fixing the path-reporting issue of \cite{BernsteinC16,Bernstein17}]{corollary}{bernsteinExtension}\label{thm:denseGraphSSSPWithPathQueryMainResult}
For any decremental weighted graph $G=(V,E,w)$ with $W$ being the ratio between maximum and minimum edge weight, fixed source $s$, and $\epsilon > 0$, there is an adaptive algorithm $\mathcal{A}$ that maintains  the $(1+\epsilon)$-approximate distances from vertex $s$ to every vertex $v \in V$ \emph{and supports corresponding shortest path queries}. The algorithm $\mathcal{A}$ has expected total update time $\tilde{O}(n^2 \log W)$, distance estimate query time $O(\log \log (nW))$ and shortest path query time $\tilde{O}(n \log W)$. \end{restatable}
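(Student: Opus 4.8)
The plan is to use \Cref{thm:main spanner} as a black box on top of the distance‑only algorithms of \cite{BernsteinC16,Bernstein17}, paying only a negligible additive overhead. Those algorithms already maintain, against an adaptive adversary and within the claimed expected total time $\Otil(n^2\log W)$, estimates $\tilde d(v)$ with $\dist_G(s,v)\le\tilde d(v)\le(1+\epsilon)\dist_G(s,v)$ for every $v$, as well as the $O(\log\log(nW))$ estimate‑query time; as in \cite{BernsteinC16,Bernstein17} one partitions the edges into $O(\log W)$ weight classes and, within each class, rounds weights to integer multiples of an $\epsilon$ fraction of the class minimum, so that distances become $\poly(n)$‑bounded and an $s$--$t$ path has $O(n)$ hops -- these parallel instances are the source of the $\log W$ factors. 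The only missing ingredient is therefore a query routine that, given $t$, returns an $s$--$t$ path of length at most $(1+O(\epsilon))\dist_G(s,t)$ in time $\Otil(n\log W)$.

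Alongside $\mathcal{A}$ I would run the adaptive spanner of \Cref{thm:main spanner} on the same decremental instance, maintaining a subgraph $H$ with $|E(H)|=\Otil(n)$ in $\polylog(n)$ amortized time; over $m=O(n^2)$ deletions this is $\Otil(n^2)$ in total and is subsumed. A first attempt at the query routine is a backward greedy search from $t$ toward $s$ that stays inside $H$: keep a current vertex $x$ (initially $t$), scan its $H$‑neighbours, move to any $y$ with $\tilde d(y)+w_G(x,y)\le\tilde d(x)$, and repeat until $s$ is reached, outputting the traversed edges. No vertex is revisited because the potential $\tilde d(\cdot)$ strictly decreases along the walk, so there are $O(n)$ steps and the total scan cost is $\sum_x\deg_H(x)=\Otil(n)$ per weight class, i.e.\ $\Otil(n\log W)$ overall. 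This composition is safe precisely because \Cref{thm:main spanner} tolerates an adaptive adversary: it stays correct even though $\mathcal{A}$'s deletions, and its internal random bits, are interleaved with $H$ adaptively, and this is also what keeps the overall algorithm adaptive.

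The main obstacle with this attempt is that a $\polylog(n)$‑stretch spanner need not contain any edge of a $(1+\epsilon)$‑approximate shortest‑path tree rooted at $s$, so the greedy step above may have no valid move, or may be forced onto a detour whose weight exceeds $(1+\epsilon)\dist_G(s,t)$; reconciling the coarse guarantee of $H$ with the required $(1+\epsilon)$ accuracy is the heart of the argument, and is where I expect essentially all the difficulty to lie. I would resolve it by refusing to use $H$ as a distance approximator, and instead exploiting the internal structure of \cite{BernsteinC16,Bernstein17}, which reduce decremental SSSP to a hierarchy of \emph{bounded‑depth} subproblems carrying explicit (monotone) Even--Shiloach trees: path reconstruction is done level by level inside these trees, whose parent pointers are genuine edges of $G$ and hence certify the length exactly, while the spanner is invoked only to sparsify the candidate set inspected at each reconstruction step, so that its $\polylog$ stretch affects only the \emph{hop count} (hence the running time) of the reconstruction, never the \emph{weight} of the returned path. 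Concretely this means verifying that every randomized shortcut/emulator gadget of \cite{BernsteinC16,Bernstein17} that currently destroys path information can be replaced by the adaptive spanner, and that the $(1+\epsilon)$ errors coming from the $O(\log W)$ weight classes compose to $(1+O(\epsilon))$; the same template then applies to \cite{BernsteinChechikSparse,gutenberg2020deterministic}, giving \Cref{thm:sparseGraphSSSPWithPathQueryMainResult}.
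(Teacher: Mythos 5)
Your high-level instinct (run the adaptive spanner of \Cref{thm:main spanner} alongside the distance-only structures of \cite{BernsteinC16,Bernstein17}) is the right one, and you correctly discard the greedy-descent-on-the-spanner attempt. But the step where you claim to resolve the real difficulty does not work as stated. You assert that the spanner can be used so that its $\polylog(n)$ stretch ``affects only the hop count (hence the running time) of the reconstruction, never the weight of the returned path,'' with parent pointers of the internal Even--Shiloach trees certifying lengths exactly. This is not true of those algorithms: the whole reason they cannot report paths is that the dense (``heavy'') portions of the graph are contracted or replaced by shortcut edges, so the tree pointers through those portions are \emph{not} genuine edges of $G$, and any path you thread through a $\polylog(n)$-stretch spanner of those portions \emph{does} get its weight inflated by the stretch --- a spanner is not a $(1+\epsilon)$-approximate distance preserver, so you cannot make the stretch cost only time and not length. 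Your proposal leaves the reconciliation as ``verify that every shortcut/emulator gadget can be replaced by the adaptive spanner,'' which is precisely the content that has to be proved.

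The paper's proof supplies the missing quantitative mechanism, which your sketch lacks. It abstracts \cite{BernsteinC16,Bernstein17} as a heavy-light framework (\Cref{def:heavyLightFramework}): for each distance scale $i$ the edge set is split into $G^{light}_i$ and a decremental $G^{heavy}_i$, contracting the connected components $\mathcal{C}$ of $G^{heavy}_i$ distorts distances only by an \emph{additive, tunable} $\delta 2^i$, and the existing algorithms can already report a $(1\pm\epsilon/2)$-approximate path in $G^{light}_i/\mathcal{C}$ exactly (so no internal ES-tree surgery or gadget replacement is needed). The adaptive spanner $\tilde{G}_i$ of stretch $\alpha=\polylog(n)$ is maintained \emph{only on $G^{heavy}_i$}, per scale, and at query time Dijkstra is run inside $\tilde{G}_i$ restricted to each contracted component to stitch the path back together. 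The stretch does hit the weight, but only of these intra-component detours, whose total contribution is at most $\alpha\cdot\delta 2^i$; choosing $\delta=\epsilon/(2\alpha)$ makes this at most $(\epsilon/2)2^i$, which combined with the $(1+\epsilon/2)$ light-graph path gives $(1+\epsilon)$, while the $\tilde{O}(n\log W)$ query time follows because each spanner edge is scanned at most once per query. Without this ``stretch only multiplies a pre-shrunk additive error'' argument (or an equivalent one), your proposal does not yield the claimed $(1+\epsilon)$ guarantee.
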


\Cref{thm:denseGraphSSSPWithPathQueryMainResult} can be compared to two previous results  \cite{ChuzhoyK19,ChuzhoyS20} 
In \cite{ChuzhoyK19}, their algorithm requires slower $n^{2+o(1)}\log W$ total update time, and needs to assume that the input graph undergoes only \emph{vertex} deletions which is more restrictive. So \Cref{thm:denseGraphSSSPWithPathQueryMainResult} strictly improves the algorithm by \cite{ChuzhoyK19}.
In \cite{ChuzhoyS20}, they use very different techniques than ours and show an algorithm with   $n^{2+o(1)}\log W$ total update time, distance query time $O(\log \log (nW))$, shortest path query time $O(|P|n^{o(1)})$ when a path $P$ is returned, and is deterministic. This algorithm is incomparable to \Cref{thm:denseGraphSSSPWithPathQueryMainResult}. Our result has slightly faster total update time, but their algorithm is deterministic and guarantees faster shortest path query time. 

By plugging \Cref{thm:denseGraphSSSPWithPathQueryMainResult} into the standard multiplicative weight update framework (e.g. \cite{GargK07,Fleischer00,ChuzhoyK19,Madry10}), we get the following:

\begin{corollary}\label{thm:intro:flow applications}
There exist $(1+\epsilon)$-approximate algorithms with expected running time $\tilde{O}(n^2)$ for the following problems: 
\begin{enumerate}[noitemsep]
    \item minimum-cost maximum $s$-$t$ flow in undirected vertex-capacitated graphs, and
    \item minimum-cost maximum $s$-$t$ flow in undirected unit-edge-capacity graphs. 
\end{enumerate}
\end{corollary}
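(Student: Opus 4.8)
The plan is to run \Cref{thm:denseGraphSSSPWithPathQueryMainResult} as the inner engine of the multiplicative‑weights/path‑augmentation scheme for approximate maximum flow due to Garg and K\"onemann and Fleischer, in the ``dynamic shortest paths inside the loop'' form of Madry and — for the vertex‑capacitated and unit‑edge‑capacity versions — of Chuzhoy and Khanna \cite{GargK07,Fleischer00,Madry10,ChuzhoyK19}. We maintain a length $\ell(x)$ on every capacitated element $x$: a vertex for problem~1 (after splitting each vertex into two endpoints joined by a ``capacity edge,'' so the host graph stays undirected), an edge for problem~2, plus one extra global ``cost'' resource so that min‑cost max flow is obtained by the usual Lagrangian/binary‑search reduction to $O(\epsilon^{-1}\log n)$ budgeted max‑flow computations. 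Starting all lengths at $\delta = (n/\epsilon)^{-\Theta(1/\epsilon)}$, each iteration queries the data structure for an approximately shortest $s$–$t$ path $P$ under the current lengths, routes the bottleneck amount of flow along $P$, and multiplies the lengths on $P$ by $(1+\epsilon)^{\Theta(1)}$; once the $s$–$t$ distance reaches $1$ we stop and rescale the accumulated flow by $O(\log_{1+\epsilon}(1/\delta))$, which yields a flow whose value and cost are within $(1+\epsilon)$ of optimal.

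Two features make \Cref{thm:denseGraphSSSPWithPathQueryMainResult} exactly the right tool. First, lengths are monotonically non‑decreasing, so after discretizing each element's length into $O(\log_{1+\epsilon}(1/\delta)) = \tilde O(1)$ geometrically spaced thresholds (for constant $\epsilon$), every threshold crossing can be simulated by \emph{deleting} that element in a decremental copy of the length graph; the inner structure therefore only ever sees edge deletions, and since the whole length range spans a $\poly(n/\epsilon)$ factor, the weight ratio is $W = \poly(n/\epsilon)$, i.e.\ $\log W = \tilde O(1)$. Second — and this is the point that forced previous decremental‑SSSP‑based flow algorithms to jump through hoops — the sequence of deletions is \emph{adaptive}: which element saturates next depends on the path $P$ the structure just returned. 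Hence one needs a decremental $(1+\epsilon)$‑SSSP structure that reports paths and is correct against an adaptive adversary, which is precisely what \Cref{thm:denseGraphSSSPWithPathQueryMainResult} provides; and the outer loop is a legitimate such adversary, since it only ever feeds the structure its own outputs.

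It remains to count. For problem~1 there are $n$ vertex‑capacity constraints plus one cost constraint, and each can be the bottleneck only $O(\log_{1+\epsilon}(1/\delta)) = \tilde O(1)$ times before its length exceeds $1$, so the number of augmenting iterations is $\tilde O(n)$; each iteration issues one path query costing $\tilde O(n\log W) = \tilde O(n)$ since the returned path has $O(n)$ edges. That is $\tilde O(n^2)$ of query time; on top of it the decremental structure runs in total update time $\tilde O(n^2\log W) = \tilde O(n^2)$ over its at most $O(m) = O(n^2)$ deletions, and summing over the $\tilde O(1)$ discretization rounds and the $\tilde O(1)$ budget‑search steps keeps everything at $\tilde O(n^2)$ expected. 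For problem~2 (unit edge capacities) one first applies a cut‑preserving sparsifier with $O(nk)$ edges (Nagamochi–Ibaraki) together with a doubling search on the max‑flow value $k$: for each guess the number of iterations is $\tilde O(k)$ and the total work $\tilde O(nk)$, which is $\tilde O(n^2)$ since $k \le n$ always — alternatively one simply invokes the reduction already carried out in \cite{ChuzhoyK19} verbatim and replaces their decremental SSSP by \Cref{thm:denseGraphSSSPWithPathQueryMainResult}.

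The main obstacle, and the only place real care is needed, is checking that the coupling between the outer MWU loop and \Cref{thm:denseGraphSSSPWithPathQueryMainResult} is sound: that approximate rather than exact shortest paths cost only an extra $(1+\epsilon)$ in the value/cost guarantee (standard, but it must be tracked through the width and stopping analysis), that the reduction from monotonically increasing lengths to a purely decremental deletion sequence really needs only $\tilde O(1)$ instances for constant $\epsilon$, and — most importantly — that the adversary seen by the data structure is no stronger than ``can read the structure's past outputs and randomness,'' so that the randomness‑adaptive guarantee of \Cref{thm:denseGraphSSSPWithPathQueryMainResult} genuinely applies. All of these are routine given the cited framework; the new ingredient is only the existence of a fast enough \emph{adaptive}, path‑reporting decremental SSSP, which \Cref{thm:denseGraphSSSPWithPathQueryMainResult} supplies.
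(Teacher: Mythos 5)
Your proposal matches the paper's own derivation: the paper proves this corollary in a single step, by plugging \Cref{thm:denseGraphSSSPWithPathQueryMainResult} as the adaptive, path-reporting decremental SSSP oracle into the standard multiplicative-weight-update framework of Garg--K\"onemann/Fleischer following the reduction of \cite{ChuzhoyK19}, which is precisely your plan (including the observation that the MWU loop is an adaptive adversary, that lengths only increase, and that iteration counts times $\tilde{O}(n)$ path-query cost plus the $\tilde{O}(n^2\log W)$ total update time give $\tilde{O}(n^2)$). The only shaky aside is the Nagamochi--Ibaraki sparsification for the unit-edge-capacity min-cost case (cut-preserving sparsifiers need not preserve costs), but your stated fallback --- invoking the \cite{ChuzhoyK19} reduction verbatim with their decremental SSSP replaced by \Cref{thm:denseGraphSSSPWithPathQueryMainResult} --- is exactly what the paper does.
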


\Cref{thm:intro:flow applications} slightly improves the results from \cite{ChuzhoyK19,ChuzhoyS20} with running time $n^{2+o(1)}$ to $\tilde{O}(n^2)$.\footnote{We note that the result of \cite{ChuzhoyS20} is deterministic.}

The second set of applications are faster algorithms for variants of multi-commodity flow problems using \Cref{thm:main spanner}.
For example, we achieve a static
$\tilde{O}((n+k)n)$-time $\polylog(n)$-approximation algorithm for the \emph{congestion minimization} problem with $k$ demand pairs
on unweighted vertex-capacitated graphs. This improves the $\tilde{O}((m+k)n)$-time
$O(\log(n)/\log\log(n))$-approximation algorithms implied by Karakostas~\cite{Karakostas08} in terms of the running time at the cost of a worse approximation ratio. 
See \Cref{sec:multiflow,sec:multiflow_proof} for more detail. 

Finally,  we apply \Cref{cor:main spectral} to the problem of  maintaining \emph{effective resistance}. Durfee et~al.~\cite{DurfeeGGP18,DurfeeGGP18-older} presented a dynamic algorithm with
$\tilde{O}(n^{6/7})$ amortized update time for $(1+\epsilon)$-approximately maintaining  the effective resistance between a fixed pair of nodes. Plugging our result in the algorithm
of Durfee et al.~leads to an $n^{6/7+o(1)}$ worst-case update time. (Both of these results assume an oblivious
adversary.)

\subsection{Techniques}

To prove the above results, the first key tool is the black-box reduction
in \Cref{thm:meta} that allows us to focus on {\em almost-uniform-degree expanders}\footnote{Expanders are graphs with high conductance (see \Cref{sec:overview}). Intuitively, they are ``robustly connected'' graphs.}.
\Cref{thm:meta} works for a large class of problems satisfying natural properties
(defined in \Cref{sec:overview}) which includes spanners, cut sparsifiers,
and spectral sparsifiers.
Hence the theorem uses the term ``$\alpha$-approximate sparsifier'' without defining the exact type of sparsifier. 
\begin{theorem}
[Informal Blackbox Reduction, 
see \Cref{def:amortized:decremental_algorithm} and \Cref{thm:amortized:fully_dynamic_weighted} for the formal statements]\label{thm:meta}Assume that there is an algorithm ${\cal A}$
that can maintain an $\alpha$-approximate sparsifier on an $n$-vertex
graph $G$ with the following promises:
\begin{itemize}[noitemsep]
\item $G$ undergoes batches of edge deletions\footnote{%
That means, in each iteration the algorithm is given a set $D \subset E$ of edges that are to be deleted.} (isolated nodes are automatically removed from $G$),
\item $G$ is unweighted, 
\item after each batch of deletions, $G$ is an expander graph, and
\item after each batch of deletions, $G$ has almost uniform degree, i.e.~the maximum degree
$\Delta_{\max}$ and the minimum degree $\Delta_{\min}$ are within a
$\polylog(n)$ factor. 
\end{itemize}
Then, there is another algorithm $\cal B$ with essentially the same amortized
update time for maintaining an $\alpha$-approximate sparsifier of essentially
the same size on a \emph{general weighted} graph undergoing both edge
insertions and deletions.
If $\cal A$ is adaptive or deterministic,
then so is $\cal B$. 
\end{theorem}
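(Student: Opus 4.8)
The plan is to build $\mathcal{B}$ from $\mathcal{A}$ by a chain of reductions, each removing one of the promises ("unweighted", "expander", "almost-uniform degree", "decremental/batch-deletions-only") at the cost of only a constant factor in $\alpha$ and a $\polylog(n)$ factor in time, and to certify the correctness of every composition step using only the abstract closure properties (identity, union, contraction, increment, nestedness) that the sparsifier notion is assumed to satisfy. The union, contraction and increment properties are exactly what let us glue sparsifiers computed on vertex-disjoint and edge-disjoint pieces of the graph back together, identity gives base cases for tiny pieces, and nestedness (together with the fact that each structure-preserving operation is approximation-preserving) keeps the final approximation at $O(\alpha)$ — in fact only the weight-bucketing step below costs a genuine constant factor.

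First I would dispose of weights and insertions. Bucketing the edges into geometric weight classes reduces a graph with polynomially bounded weights to $O(\log n)$ unweighted instances whose rescaled sparsifiers recombine, via the union property, into a sparsifier of the whole graph with a constant-factor loss; a genuinely unbounded weight ratio is handled by a further standard grouping of classes. To remove insertions I would use the classical logarithmic rebuilding scheme: maintain $O(\log n)$ decremental sub-instances, route freshly inserted edges into a small top instance, and whenever an instance overflows discard it and rebuild it together with all instances below it from scratch, charging $\mathcal{A}$'s preprocessing cost to the $O(\log n)$ rebuilds each edge participates in over its lifetime; the maintained sparsifier is the union of the sub-instances' outputs, legal by the union and increment properties.

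Next I would reduce a general unweighted decremental graph to a collection of unweighted decremental \emph{expanders of near-uniform degree}. Statically: take an expander decomposition at conductance $\phi = 1/\polylog(n)$ — a partition of $V$ whose parts induce $\phi$-expanders with only a $1/\polylog(n)$ fraction of the edges running between parts — apply a degree-reduction gadget to each induced expander (split a degree-$d$ vertex into $\lceil d/\Delta\rceil$ copies sharing its incident edges and wire the copies together by a constant-degree expander gadget, so every degree becomes $\Theta(\Delta)$ while the piece stays an expander), hand the resulting pieces to $\mathcal{A}$, and recurse on the graph of inter-part edges; since the edge count drops by a $\polylog(n)$ factor per level the recursion has depth $o(\log n)$. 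The output sparsifier is the union over all pieces at all levels of $\mathcal{A}$'s outputs, each pushed back through its degree gadget by contracting each group of copies and deleting the resulting self-loops; this is an $O(\alpha)$-sparsifier of $G$ by repeated use of the union property (the graph is the edge-disjoint union of the induced pieces plus the $O(1)$ leftover bottom-level edges, the latter folded in by increment) and the contraction property (each piece is an induced subgraph, and the degree gadget is exactly undone by contraction). Dynamically, I would maintain this hierarchy with a pruning-based dynamic expander decomposition in the spirit of Nanongkai--Saranurak--Wulff-Nilsen (used in its deterministic incarnation): as edges are deleted, a small vertex set is pruned from a piece to keep its core a near-uniform-degree expander, and the pruned vertices and their incident edges are pushed down to the next recursion level; a vertex whose degree has fallen by a constant factor is likewise pruned, since the downstream $\mathcal{A}$ cannot absorb the insertions that a local gadget rebuild would require. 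Each piece handed to $\mathcal{A}$ therefore only ever sees batches of deletions, and always of a near-uniform-degree expander; a standard potential argument bounds the total number of edges ever pushed between levels by $\Otil(m)$, and these push-downs are absorbed by the same rebuilding scheme used for adversary insertions, so the amortized overhead is $\polylog(n)$ per update.

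Putting the layers together, $\mathcal{B}$ feeds $\mathcal{A}$ a dynamically maintained family of unweighted near-uniform-degree expanders undergoing only batch deletions, reads back their sparsifiers, and returns their union mapped back through the reductions: by nestedness the approximation is $O(\alpha)$, the size is the per-piece size times the $\polylog(n)$ number of pieces and levels (i.e.\ "essentially the same"), and the update time is $\mathcal{A}$'s amortized time times $\polylog(n)$ plus $\polylog(n)$ for maintaining the decompositions. Every auxiliary component — weight bucketing, logarithmic rebuilding, and the pruning-based dynamic expander decomposition — is deterministic given $\mathcal{A}$, so $\mathcal{B}$ inherits adaptivity or determinism from $\mathcal{A}$. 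The step I expect to be the main obstacle is making the dynamic expander-decomposition layer meet all of $\mathcal{A}$'s promises at once: one must tune the conductance, the pruning thresholds, the target degree $\Delta$, and the rebuild triggers so that every piece is at every moment a near-uniform-degree expander, every change $\mathcal{A}$ ever observes is a deletion and never an insertion, and the total recourse pushed between recursion levels remains $\Otil(m)$ — achieving these three simultaneously, while also verifying that the gluing of sparsifiers survives every prune and rebuild purely through the abstract closure properties, is the technical heart of the reduction.
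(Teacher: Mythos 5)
Your proposal is correct and follows essentially the same route as the paper's formal proof (\Cref{thm:amortized:fully_dynamic_weighted}): weight bucketing via the union/perturbation properties, a doubling hierarchy of sub-instances to absorb insertions and pruned edges, a recursive expander decomposition combined with the explicit-expander degree-reduction gadget ($\Delta$-reduction), expander pruning with a low-degree threshold to keep pieces near-uniform-degree expanders under deletions, and gluing $\A$'s outputs back through the contraction and union properties. The only cosmetic difference is that the paper's amortized reduction needs only the perturbation, union, and contraction properties (increment and nestedness are reserved for the worst-case version), whereas you invoke them all.
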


As it is well-known that many problems become
much easier on expanders (e.g.,~\cite{Trevisan05,SpielmanT04,Sherman13,JambulapatiS18,ChangS19,LSZ}), we believe that this
reduction will be useful for future developments of dynamic algorithms. For example, if one can come up with
an adaptive algorithm for maintaining $(1+\epsilon)$-cut sparsifiers on expanders,
then one can immediately obtain the same result on general graphs. 

Our second technique is a new sampling scheme called \emph{proactive resampling}:
here we constantly re-sample parts of the input graph so that, independent of an adversary's computational power, a desired structure of the underlying graph can be always maintained; see \Cref{sec:overview} for a high-level discussion of this technique. Since there are still few known tools for designing algorithms that work against an adaptive adversary, we expect that our technique will prove useful for the design of other adaptive algorithms in the future.

We further extend the black-box reduction from \Cref{thm:meta} to algorithms with worst-case update time, which allows us to deamortize both \Cref{thm:main spanner} and \Cref{thm:main cut} with slightly worse guarantees. It also easily implies \Cref{cor:main spectral}. %

\subsection{Organization}

We give a high-level overview of our approach in \Cref{sec:overview}, and list the preliminaries in \Cref{sec:preliminiaries}.

The goal of \Cref{part:reduction} is to prove the blackbox reduction summarised in \Cref{thm:meta}. 
As warm-up we start in \Cref{sec:simple_amortized_reduction} with a simpler variant that does not provide the uniform degree guarantee.
We then extend the result in \Cref{sec:uniform_degree_reduction} to further provide the uniform degree guarantee.
Both of these results yield dynamic algorithms with amortized update time.
In order to obtain worst-case update time, we must create further tools.
The first tool is presented in \Cref{sec:pruning}, where we discuss the expander pruning algorithm with worst-case update time, which  generalizes and improves the one in \cite{NanongkaiS17} for maintaining an expander under updates. 
In \Cref{sec:sparsification} we study the extension of Eppstein et al.  sparsification technique which is crucial for us to study the worst-case update time. 
Finally, we combine these tools and prove the blackbox reduction to expander graphs for worst-case update time in \Cref{sec:worst_case_blackbox}.

The goal of \Cref{part:algorithm} is to develop fast dynamic algorithms on expanders which are, thanks to  the blackbox reduction, sufficient for us to obtain dynamic algorithms for general graphs.  Specifically, in \Cref{sec:adaptive} we develop the \emph{proactive resampling} technique and present an adaptive algorithm for cut-sparsifiers in \Cref{thm:main cut}. As a cut sparsifier of an expander is also a spanner and a spectral sparsifier, we can immediately prove 
\Cref{thm:main spanner,thm:main spectral adaptive} in \Cref{sec:adaptive_more}. \Cref{secss} proves a side-result on constructing spectral sparsifiers against an oblivious adversary, which corresponds to 
\Cref{cor:main spectral}. 
 
Finally, in \Cref{part:application} we show applications to decremental shortest paths in \Cref{sec:SSSP}, dynamic effective resistance in \Cref{sec:effective_resistance}, and variants of multi-commodity flow problems in \Cref{sec:multiflow}.

\section{Overview}
\label{sec:overview}

All our algorithms use a common framework based on expanders,
which results in a reduction from fully dynamic algorithms on general graphs
to the special case of decremental algorithms on expander graphs.
The reduction holds for a general class of graph problems that satisfy some criteria. 
These criteria are satisfied for spectral-sparsifiers, cut-sparsifiers and spanners.
In this section, we define the abstract criteria needed for our reduction (See Conditions \ref{con:identity}-\ref{con:nested} below), so that we only need  to prove our algorithm  once 
and  apply it to all these types of sparsifiers.

The overview is split into three parts. In \Cref{sec:reduction_overview} we show the reduction for amortized update time. In \Cref{sec:adaptive_overview} we show how to take advantage of the reduction by designing efficient algorithms on expanders.   Finally, in \Cref{sec:overview_worstcase} we finish the overview with a sketch of how to extend the reduction to worst-case update-time algorithms.

\subsection{Reduction to Expanders: Amortized Update Time}
\label{sec:reduction_overview}

We now outline our black-box reduction, which can preserve several nice properties of the algorithms. 
That is, given an algorithm with property $x$ running on expander, 
we obtain another algorithm with property $x$ with essentially the same running time and approximation guarantee, where
the property $x$ can be ``deterministic'', 
``randomized against an adaptive adversary'', 
or ``worst-case update time''. In this subsection, we focus on amortized update time: see \Cref{sec:overview_worstcase} for an overview of how to extend the reduction to apply to worst-case algorithms.

The reduction holds for any graph problem that satisfies a small number of conditions.
We formalize a graph problem as a function $\H$ 
that maps $(G,\epsilon)$ for a graph $G$ and parameter $\epsilon>0$ to a set of graphs.
We say a dynamic algorithm $\A$ solves $\H(\epsilon)$ 
if for every input graph $G$, algorithm $\A$ maintains/computes a graph $H \in \H(G,\epsilon)$. For example we could define $\H(G,\epsilon)$ to be the set of all $(1+\epsilon)$-cut sparsifiers.
So then saying ``data structure $\A$ solves $\H(\epsilon)$'' means that $\A$ maintains for any input graph an $(1+\epsilon)$-cut sparsifier.

\paragraph{Pertubation Property}

The first property required by our reduction allows us to slightly perturb the edges, 
i.e. scale each edge $\{u,v\}$ by some small factor $f_{u,v}$ bounded by $1 \le f_{u,v} \le e^\epsilon$. Define $\zeta \cdot G$ to be the graph $G$ with all edge-weights multiplied by $\zeta$.
\begin{align}
\text{Let $G'$ be $G$ with edges scaled by up to $e^{\epsilon}$, then $G' \in \H(G,\epsilon)$ and $e^{\epsilon} \cdot G \in \H(G',\epsilon)$.}\label{con:identity}
\end{align}
Property \eqref{con:identity} implies that $G \in \H(G, \epsilon)$ for all $\epsilon > 0$. For example any graph is a (potentially dense) spectral approximation or spanner of itself. 

The property is also useful when we want to discretize the edge weights. 
A common technique is to round edge weights to the nearest power of $e^{\epsilon}$ 
in order to discretize the set of possible edge weights
without changing graph properties such as the spectrum or distances too much.
Combined with the following \emph{union} property, 
this also allows us to generalize algorithm for unweighted graphs to support weighted graphs.

\paragraph{Union Property}

Say that $G = \bigcup_{i=1}^k G_i$ for some $k$ and that $s_1,...,s_k \in \R$.
Then the union property is defined as follows:
\begin{align}
\text{If }H_i \in \H(G_i,\epsilon) \text{ and $0\le s_i$, then }
\bigcup_{i} s_i\cdot H_i \in \H\left( \bigcup_{i} s_i \cdot G_i, \epsilon \right). \label{con:union}
\end{align}
Combining this property with the previous \emph{pertubation} property \eqref{con:identity} gives us the   following reduction.
Given a graph $G$ with real edge weights from $[1,W]$,
one can decompose $G$ into graphs $G_1,...,G_k$,
such that each $G_i$ contains edges with weights in $[e^{(i-1)\epsilon}, e^{i\epsilon})$.
One can then use any algorithm that solves $\H$ on \emph{unweighted} graphs
to obtain $H_i \in \H(G'_i, \epsilon)$ for all $i=1,...,k$, 
where $G'_i$ is the graph $G_i$ when ignoring the edge weights.
Then $\bigcup_i e^{i\epsilon} \cdot H_i \in \H(\bigcup_i e^{i\epsilon}\cdot G_i, \epsilon) \subset \H(G, \epsilon)$ 
by combining property \eqref{con:union} and \eqref{con:identity}.
Thus one obtains an algorithm that solves $\H$ on \emph{weighted} graphs.

\paragraph{Reduction for Amortized Update Time (\Cref{sec:amortized:reduction})}

Loosely speaking, our black-box states the following. Say that we have a data structure $\A_X$ on a graph $X$ that at all times maintains a sparisifer in $\H(X,\epsilon)$ with amortized update time $T(\A_X)$, but assumes the following restricted setting: {\bf 1)} Every update to $X$ is an edge \emph{deletion} (no insertion), and {\bf 2)} $X$ is always an expander. We claim that $\A_X$ can be converted into a \emph{fully dynamic} algorithm $\A$ that works on any graph $G$, and has amortized update time $T(\A) = \tilde{O}(T(\A_X))$.

We first outline this black-box under the assumption that we have a dynamic algorithm 
that maintains a decomposition of $G = \bigcup_i G_i$ 
into edge disjoint expander graphs $G_1,G_2,...$.
This dynamic algorithm will have the property 
that whenever the main graph $G$ is updated by an adversarial edge insertion/deletion, each expander $G_i$ receives only edge deletions,
though occasionally a new expander $G_j$ is added to the decomposition.
Thus one can simply initialize $\A_X$ on the expander $G_j$ to obtain some $H_j \in \H(G, \epsilon)$, 
when $G_j$ is added to the decomposition.
Then whenever an edge deletion is performed to $G_j$, 
we simply update the algorithm $\A_X$ to update the graph $H_j$.
By the union property \eqref{con:union} we then have that 
$$H := \bigcup_i H_i \in \H\left(\bigcup_i G_i, \epsilon\right) = \H(G, \epsilon).$$
So we obtain an algorithm $\A$ that is able to maintain a sparsifier $H$ of $G$.
We are left with proving how to obtain this dynamic algorithm 
for maintaining the expander decomposition of $G$.

\paragraph{Dynamic Expander Decomposition (\Cref{sec:amortized:dynamic_decomposition})}

The idea is based on the expander decomposition and expander pruning of \cite{SaranurakW19}.
Their expander decomposition splits $V$ into disjoint node sets $V_1,V_2,...$, 
such that the induced subgraphs $G[V_i]$ on each $V_i$ are expanders, 
and there are only $o(m)$ edges between these expanders. 
In \Cref{sec:amortized:dynamic_decomposition} we show that by recursively applying this decomposition on the subgraph 
induced by the inter-expander edges, 
we obtain a partition of the \emph{edges} of $G$ into a union of expanders.
This means, we can decompose $G$ into subgraphs $G_1,G_2,...$, 
where each $G_i$ is an expander and $\bigcup_i G_i = G$.
We show in \Cref{sec:amortized:dynamic_decomposition} that the time complexity 
of this decomposition algorithm is $\tilde{O}(m)$. 

We now outline how we make this decomposition dynamic in \Cref{sec:amortized:dynamic_decomposition}.
Assume for now, that we have a decomposition of $G$ into $G = \bigcup_i G^{(i)}$,
where for all $i$ the graph $G^{(i)}$ has at most $2^i$ edges, 
but each $G^{(i)}$ is not necessarily an expander.
Further, each $G^{(i)}$ is decomposed into expanders $G^{(i)} = \bigcup_j G^{(i)}_j$.
To make this decomposition dynamic, we will first consider edge insertions. Every adversarial edge insertion is fed into the graph $G^{(1)}$. Now, when inserting some edges into some $G^{(i)}$, there are two cases: 
(i) the number of edges in $G^{(i)}$ remains at most $2^i$. 
In that case recompute the expander decomposition $G^{(i)} = \bigcup_j G^{(i)}_j$ of $G^{(i)}$.
Alternatively we have case (ii) where $G^{(i)}$ has more than $2^i$ edges.
In that case we set $G^{(i)}$ to be an empty graph and insert all the edges that previously belonged to $G^{(i)}$ into $G^{(i+1)}$.
Note that on average it takes $2^{i-1}$ adversarial insertions until $G^{(i)}$ is updated,
and we might have to pay $\tilde{O}(2^i)$ to recompute its decomposition,
so the amortized update time for insertions is simply $\tilde{O}(1)$.

For edge deletions we use the expander pruning technique %
based on \cite{SaranurakW19} 
(refining from \cite{NanongkaiSW17,NanongkaiS17,Wulff-Nilsen17}).
An over-simplified description of this technique is that, 
for each update to the graph, 
we can repeatedly prune (i.e. remove) 
some $\tilde{O}(1)$ edges from the graph, 
such that the remaining part is an expander.
So whenever an edge is deleted from $G$,
we remove the edge from its corresponding $G^{(i)}_j$,
and remove/prune some $\tilde{O}(1)$ many extra edges from $G^{(i)}_j$,
so that it stays an expander graph. 
These pruned edges are immediately re-inserted into $G^{(1)}$
to guarantee that we still have a valid decomposition $G = \bigcup_i G^{(i)}$.
In summary, we are able to dynamically maintain 
a decomposition $G = \bigcup_{i,j} G^{(i)}_j$ of $G$ into expander graphs.
Further, the decomposition changes only by creating new expanders 
and removing edges from existing expanders (via pruning),
so we can run the decremental expander algorithm $\A_X$ on each $G^{(i)}_j$.

\paragraph{Contraction Property and Reduction to Uniform Degree Expanders (\Cref{sec:uniform_degree_reduction})}

Many problems are easier to solve on graphs with (near) uniform degree. Thus, we strengthen our reduction to work even if the decremental algorithm $\A_X$ assumes that graph $X$ has near-uniform degree. On its own, the expander decomposition described above is only able to guarantee that for each expander the minimum degree is close to the average degree; the maximum degree could still be quite large. In order to create a (near) uniform degree expander,
we split these high degree nodes into many smaller nodes of smaller degree.
In order to perform this operation, we need the condition that whichever graph problem $\H$ we are trying to solve
must be able to handle the reverse operation, 
i.e. when we contract many small degree nodes into a single large degree node.
\begin{align}
\begin{array}{l}
\text{When contracting $W \subset V$ in both $G$ and $H\in\H(G,\epsilon)$,}\\
\text{let $G'$ and $H'$ be the resulting graphs, then $H' \in \H(G',\epsilon)$.}
\end{array} \label{con:contraction}
\end{align}

All in all, our black-box reduction shows that in order to solve a sparsification problem $\H$ in the fully dynamic model on general graphs, we need to {\bf 1)} show that $\H$ satisfies the perturbation, union, and contraction properties above (Properties \ref{con:identity}-\ref{con:contraction})
AND {\bf 2)} Design an algorithm $\A_X$ for $\H$ in the simpler setting where the dynamic updates are purely decremental (only edge deletions), and where the dynamic graph $G$ is always guaranteed to be a near-uniform degree expander.

We now present the second main contribution of our paper, which is a new adaptive algorithm $\A_X$ on expanders. We conclude the overview with a discussion of the worst-case reduction (Section \ref{sec:overview_worstcase}), for which we will need two additional properties of the problem $\H$. 

\subsection{Adaptive Algorithms on Expanders}
\label{sec:adaptive_overview}

We showed above that maintaining a sparsifier in general graphs can be reduced to the same problem in a near-uniform-degree expander. Thus, for the rest of this section we assume that $G = (V,E)$ is \emph{at all times} a $\phi$-expander with max degree $\deltamax$ and min-degree $\deltamin$, and that $G$ is only subject to edge deletions. Let $n = |V|, m = |E|$. In this overview, we assume that $\phi$ and $\deltamax/\deltamin$ are $O(\polylog n)$, and we assume $\deltamin \gg 1/\phi$. Define $\inc_G(v)$ to the edges incident to $v$ in $G$.

We now show how to maintain a $O(\log(n))$-approximate cut-sparisifier $H$ in $G$ against an adaptive adversary (see \Cref{thm:amortizedCutSparsifierMainResult} and \Cref{thm:worstCaseCutSparsifierMainResult}); it is not hard to check that $H$ is also a spanner of stretch $\Otil(1/\phi)$, because a cut-sparisifer of a $\phi$-expander is itself a $\tilde{\Omega}(\phi)$-expander, and hence has diameter $\Otil(1/\phi)$. See Section \ref{sec:adaptive} for details. 

\paragraph{Static Expander Construction}
We first show a very simple \emph{static} construction of $H \subseteq G$. Define $\rho = \tilde{\Theta}\left(\frac{\deltamax}{\deltamin^2 \phi^2}\right) = \tilde{\Theta}\left(\frac{1}{\deltamin}\right)$, with a sufficiently large polylog factor. Now, every edge is independently sampled into $H$ with probability $\rho$, and if sampled, is given weight $1/\rho$. To see that $H$ is a cut spatsifier, consider any cut $X,\bar{X}$, with $|X| \leq n/2$. We clearly have $\mathbb{E}[|E_H(X,\bar{X})|] = \rho |E_G(X,\bar{X})|$, so since every edge in $H$ has weight $1/\rho$, we have the same weight \emph{in expectation}. For a high probability bound, want to show that $\Pr[|E_H(X,\bar{X})| \sim \rho |E_G(X,\bar{X})|] \geq 1 - n^{-2|X|}$; we can then take a union bound over the $O(n^{|X|})$ cuts of size $|X|$.   

Since the graph is an expander, we know that $|E_G(X,\bar{X})| \geq \vol_G(X) \cdot \phi \geq |X| \cdot \deltamin \cdot \phi = \tilde{\Omega}(|X|\deltamin)$. Thus, by our setting of $\rho = \tilde{\Theta}(\deltamin)$, we have $\mathbb{E}[|E_H(X,\bar{X})|] \geq |X|\log^2(n)$. Since each edge is sampled independently, a chernoff bound yields the desired concentration bound for $|E_H(X,\bar{X})|$.

\paragraph{Naive Dynamic Algorithms}
The most naive dynamic algorithm is: whenever the adversary deletes edge $(u,v)$, resample all edges in $\inc_G(u)$ and $\inc_G(v)$: that is, include each such edge in $H$ with probability $\rho$. Efficiency aside, the main issue with this protocol is that the adversary can cause some target vertex $x$ to become \emph{isolated} in $H$, which clearly renders $H$ not a cut sparisifer. To see this, let $y_1, \ldots, y_k$ be the neighbors of $x$. The adversary then continually deletes arbtirary edges $(y_1,z) \neq (y_1,x)$, which has the effect of resampling edge $(x,y_1)$ each time. With very high probability, the adversary can ensure within $\log(n)$ such deletions $(y_1,z)$ that $(x,y_1)$ is NOT included in $H$; the adversary then does the same for $y_2$, then $y_3$, and so on. 

\paragraph{Slightly Less Naive Algorithm}
To fix the above issue, we effectively allow vertices $u$ and $v$ to have separate copies of edge $(u,v)$, where $u$'s copy can only be deleted if $u$ itself is resampled. Formally, every vertex $v$ will have a corresponding set of edges $S_v$ and we will always have $H = \bigcup_{v \in V} S_v$, where all edges in $H$ have weight $1/\rho$. We define an operation $\sampleb(v)$ that \emph{independently} samples each edge in $\inc_G(v)$ into $S_v$ with probability $\rho$. The naive implementation of $\sampleb(v)$ takes time $O(\deg_G(v)) = O(\deltamax)$ time, but an existing technique used in \cite{knuth1997seminumerical, devroye2006nonuniform, bringmann2012efficient} allows us to implement $\sampleb(v)$ in time $O(\rho \deltamax \log(n)) = \Otil(1)$. (The basic idea is that the sampling can be done in time proportional to the number of edges successfully chosen, rather than the number examined.)

The dynamic algorithm is as follows.  At initialization, construct each $S_v$ by calling $\sampleb(v)$, and then set $H = \bigcup_{v \in V} S_v$. Whenever the adversary deletes edge $(u,v)$, replace $S_u$ and $S_v$ with new sets $\sampleb(u)$ and $\sampleb(v)$, and modify $H = \bigcup_{v \in V} S_v$ accordingly. By the above discussion, the update time is clearly $\Otil(1)$. We now show that this algorithm effectively guarantees a good \emph{lower bound} on the weight of each cut in $H$, but might still lead to an overly high weight. Consider any cut $(X,\bar{X})$. By the expansion of $G$,
the \emph{average} vertex $x \in X$ has $\inc_G(x) \cap E_G(X,\bar{X}) \geq \phi \deltamin$. For simplicity, let us assume that \emph{every} vertex $x \in X$ has $\inc_G(x) \cap E_G(X,\bar{X}) = \tilde{\Omega}(\phi\deltamin) = \tilde{\Omega}(\deltamin)$, as we can effectively ignore the small fraction of vertices for which this is false. Now, say that an operation $\sampleb(x)$ succeeds if it results in $|S_x \cap E_G(X,\bar{X})| \sim \rho |\inc_G(x) \cap E_G(X,\bar{X})|$. Because of our setting for $\rho$ and our assumption that $\inc_G(x) \cap E_G(X,\bar{X}) = \tilde{\Omega}(\deltamin)$, a Chernoff bound guarantees that each $\sampleb(x)$ succeeds with probability $1 - n^{-10}$. Now, since the adversary makes at most $m$ updates before the graph is empty, each $\sampleb(x)$ is called at most $n^2$ times, so there is a $1 - n^{-8}$ probability that \emph{every} call $\sampleb(x)$ is successful; we call such vertices always-successful. A simple probability calculation shows that  $\Pr[$at least $|X|/2$ vertices in $X$ are always-successful$]$ $\geq  1 - n^{-2|X|}$, which allows us to union bound over all cuts of size $X$. Thus, at all times, half the vertices in $X$ have $|S_x \cap E_G(X,\bar{X})| \sim \rho |\inc_G(x) \cap E_G(X,\bar{X})|$; assuming for simplicity that this is an ``average" half of vertices, i.e. that these vertices have around half of the edges crossing the cut, we have $|E_H(X,\bar{X})| \geq |\bigcup_{x \in X} S_x \cap E_G(X,\bar{X})| \gtrsim \rho |E_G(X,\bar{X})| /2$. 

The above idea already implies that we can maintain a sparse graph $H$ where each cut is expanding, i.e.~a sparse expander, against an adaptive adversary. As an expander has low diameter, $H$ is a spanner. Therefore, we are done if our goal is a dynamic spanner algorithm.

Unfortunately, this algorithm is not strong enough for maintaining cut sparsifiers, as the algorithm may result in $|E_H(X,\bar{X})| \gg \rho |E_G(X,\bar{X})|$. Let $\deltamax \sim \sqrt{n}$, and consider the following graph $G$. There is a set $X$ of size $\sqrt{n}$ such that $G[X]$ is a clique and $G[\bar{X}]$ is $\sqrt{n}$-degree-expander. There is also a $\sqrt{n}$-to-1 matching from $X$ to $\bar{X}$: so every vertex in $y \in \bar{X}$ has \emph{exactly one} edge $e_y$ crossing the cut. It is easy to check that $G$ is an expander. The adversary then does the following. For each $y \in \bar{X}$, it keeps deleting edges in $E({y},\bar{X})$ until $e_y$ is sampled into $S_y$; with high probability, this occurs within $O(\log(n)/\rho)$ deletions for each vertex $y$. Thus, at the end, $H \supseteq \bigcup_{y \in \bar{X}} S_y$ contains all of $E_G(X,\bar{X})$.

\paragraph{Better Algorithm via Proactive Sampling}
We now show how to modify the above algorithm to ensure that with high probablity, $|E_H(X,\bar{X})| = \Otil(\rho |E_G(X,\bar{X})|)$; we later improve this to $|E_H(X,\bar{X})| = O(\rho \log(n) |E_G(X,\bar{X})|)$. We let time $t$ refer to the $t$th adversarial update. As before, we always have $H = \bigcup_{v \in V} S_v$, and if the adversary deletes edge $(u,v)$ at time $t$, the algorithm immediately calls $\sampleb(u)$ and $\sampleb(v)$. The change is that the algorithm also calls $\sampleb(u)$ and $\sampleb(v)$ at times $t+1, t+2, t+4, t+8,t+16,\ldots$; we call this \emph{proactive sampling}. The proof that $|E_H(X,\bar{X})| \gtrsim \rho |E_G(X,\bar{X})|/2$ remains basically the same as before. We now upper bound  $|E_H(X,\bar{X})|$.

The formal analysis is somewhat technical, but the crux if the following {\bf key claim:} for any $(u,v) \in G$, we have that after $t$ adversarial updates, $\Pr[(u,v) \in H \ \textrm{at time t}] \leq 2\rho\log(t) \leq 2\rho\log(m)$. We then use the key claim as follows: consider any cut $(X,\bar{X})$. If every edge in $E_G(X,\bar{X})$ was \emph{independently} sampled into $H$ with probability at most $2\rho\log(m)$, then a Chernoff bound would show that  $|E_H(X,\bar{X})| \leq 4\rho\log(m) |E_G(X,\bar{X})|$ with probability at least $1 - n^{-2|X|}$, as desired. Unfortunately, even though every individual edge-sampling occurs with probability $\rho$, independent of everything that happened before, it is NOT the case that event $e \in H$ is independent from event $e' \in H$: the adversary is adaptive, so its sampling strategy for $e'$ can depend on whether or not $e$ was successfully sampled into $H$ at an earlier time. Nonetheless, we show in the full proof that these dependencies can be disentangled.

Let us now sketch the proof for the key claim. The edge $(u,v)$ can appear in $H$ because it is in $S_u$ or $S_v$ at the time $t$. Let us bound the probability that $u \in S_u$ at time $t$. Let $\tschedule(u)$ be all times before $t$ for which $\sampleb(u)$ has been scheduled by proactive sampling: so whenever the adversary updates an edge $(u,v)$ at time $t'$, times $t', t'+1,t'+2,t'+4,t'+8,\ldots$ are added to $\tschedule(u)$. Let $\tschedule^{t'}(u) \subset [t',t]$ be the state of $\tschedule(u)$ at time $t'$. Now, we say that a call to $\sampleb(u)$ at time $t'$ is \emph{relevant} if $\tschedule^{t'+1}(u) = \emptyset$. Observe that for $(u,v)$ to be in $S_u$ at time $t$, it must have been added during some relevant call $\sampleb(u)$, because every non-relevant call is followed by another call before time $t$ which invokes again $\sampleb(u)$ and thereby deletes the previously sampled set $S_u$ and replaces it by a new one. We complete the proof by claiming that there are at most $\log(t)$ relevant calls $\sampleb(u)$. This is because if a relevant call occurs at $t'$, then proactive sampling adds some time $t^*$ to $\tschedule(u)$ such that $(t' + t) / 2 \leq t^* \leq t$; thus, there can be no relevant calls in time interval $[t',(t' + t)/2]$. So each relevant call halves the possible time interval for other relevant calls, so there are at most $\log(t)$ relevant calls.

We now briefly point out why this modified algorithm achieves $|E_H(X,\bar{X})| = \Otil(\rho |E_G(X,\bar{X})|)$, rather than the desired  $|E_H(X,\bar{X})| = O(\rho \log(n) |E_G(X,\bar{X})|)$. Consider again the graph $G$ consisting of a vertex set $X$ of size $\sqrt{n}$ such that $G[X]$ is a complete graph, and let $\overline{X}$ be a $\sqrt{n}$-degree expander in $G[\overline{X}]$. Additonally, we have a $\sqrt{n}$-to-one matching, i.e. every vertex in $X$ is matched to $\sqrt{n}$ vertices in $\overline{X}$. The graph is still an expander as argued before. 

Now observe that $\deltamax = 2\sqrt{n}$ and we obtain a first sparsifier $H$ at time $0$ of $G$ where we have weight on the cut, i.e. $|E_H(X,\bar{X})|/\rho$, of size $\sim n$ (which is the number of edges crossing). In particular, the weight on edges in $|E_H(X,\bar{X}) \cap \bigcup_{x \in \overline{X}} S_x|/\rho \sim n$, i.e. the vertices in $\overline{X}$ carry half the weight of the cut in the sparsifier. But over the course of the algorithm, the adversary can delete edges in the cut $(X, \overline{X})$ that are in $G \setminus H$. Observe that the resampling events do not affect edges in $E_H(X,\bar{X}) \cap \bigcup_{x \in \overline{X}} S_x$ since none of the deleted edges is incident to any such edge (recall the $\sqrt{n}$-to-one matching). The adversary can continue until the cut only has weight in $G$ of $|X|\deltamin \phi$ without violating the expander and min-degree guarantees. But then the weight in $H$ on the cut is still $\sim n$ while the weight in $G$ is only $\sim n (\deltamax/\deltamin) \phi$. Thus, we only obtain a $\sim (\deltamax/\deltamin) \phi$-approximation (plus a $\log n$-factor from proactive sampling might appear).

\paragraph{Final Algorithm}
To resolve the issue above, we would like to ensure that the edges in $E_H(X,\bar{X})$ are resampled when $|E_G(X,\bar{X})|$ changes by a large amount. We achieve this with one last modification to the algorithm: for every $v \in V$, whenever $\deg_G(v)$ decreases by $\zeta = \phi \deltamin$, we run $\sampleb(w)$ for every edge $(v,w) \in G$. It is not hard to check that each vertex will only resampled a total of $O(\deltamax^2 / \zeta) = \Otil(\deltamax)$ additional times as a result of this change which is subsumed by $\tilde{O}(m)$ when summing over the vertices. (A naive implementation of the above modification only leads to small \emph{amortized} update time, but this can easily be worst-case by staggering the work over several updates using round-robin scheduling.) We leave the analysis of the approximation ratio for Section \ref{sec:adaptive}.

By using the convenient lemma which says that any cut sparsifier on an expander is also a spectral spectral (see \Cref{cor: cut on expander is spectral}), we also obtain an adaptive algorithm for spectral sparsifier.

\subsection{Reduction to Expanders: Worst-Case Update Time}
\label{sec:overview_worstcase}

We now outline how to extend our black-box reduction to work with worst-case update time.
We again assume there exists some algorithm $\A_X$ that maintains for any graph $G$ a sparsifier $H \in \H(G, \epsilon)$,
provided that $G$ stays a uniform degree expander throughout all updates,
all of which are only edge deletions.

The condition that $G$ always remains an expander is too strong, but we can use expander pruning to maintain the property from the perspective of $\A_X$. Consider some deletion in $G$: although $G$ may not be an expander, we can us pruning to find a subset of edges $P \subset E(G)$, such that $G \setminus P$ is an expander. We then input all edges in $P$ as deletions to $\A_X$, so the graph $G \setminus P$ in question is still an expander: $\A_X$ thus returns $H \in \H(G\setminus P, \epsilon)$.
Then based on property \eqref{con:identity} and \eqref{con:union} it can be shown
that $H \cup P \in \H(G, \epsilon)$.
So by taking all the pruned edges together with the sparsifier $H$ of $G \setminus P$
we obtain a sparsifier of $G$, even when $G$ itself is no longer an expander.

Unfortunately this dynamic sparsifier algorithm has two downsides: 
(i) The maintained sparsifier is only sparse for a short sequence of updates, 
as otherwise the set of pruned edges becomes too large 
and thus the output $H \cup P$ becomes too dense. 
(ii) The algorithm only works on graphs that are initially an expander.

\paragraph{Extending the algorithm to general graphs}

To extend the previous algorithm to work on general graphs, 
we run the static expansion decomposition algorithm (\Cref{sec:uniform:decomposition}).
As outlined before, we can decompose $G$ into subgraphs $G_1,G_2,...$, 
where each $G_i$ is an expander and $\bigcup_i G_i = G$.
We can then run the algorithm, outlined in the previous paragraph, on each of these expanders 
and the union of all the obtained sparsifiers will be a sparsifier of the original input $G$.

Similar as before, one downside of this technique is 
that the size of the sparsifier will increase with each update, 
because more and more edges will be pruned. 
Thus, the resulting dynamic algorithm can only maintain a sparsifier 
for some limited number of updates.

\paragraph{Extending the number of updates}

A common technique for dynamic algorithms, 
which only work for some limited number of updates (say $k$ updates),
is to reset the algorithm after $k$ updates. 
If the algorithm has preprocessing time $p$ and update time $u$, 
then one can obtain an algorithm with amortized update time $O(p/k + u)$. 
However, the worst-case complexity would be quite bad,
because once the reset is performed, 
the old sparsifier (from before the reset) must be replaced by the new one.
Listing all edges of the new sparsifier within a single update would be too slow.
There is a standard technique for converting such an amortized bound
to an equivalent worst-case bound.
The idea is to slowly translate from the old sparsifier to the new one,
by only listing few edges in each update.
For this we require another property for $\H$ that guarantees 
that the sparsifier stays valid, even when removing a few of its edges.

\paragraph{Transition Property}
Consider some $H_1, H_2 \in \H(G, \epsilon)$,
and we now want to have a slow transition from $H_1$ to $H_2$,
by slowly removing edges from $H_1$ from the output (and slowly inserting edges of $H_2$).
The exact property we require is as follows:
\begin{align}
\text{Let $H_1,H_2 \in \H(G, \epsilon)$ and $H \subset H_1$, then $(e^{\delta}-1) H \cup H_2 \in \H(G, \epsilon + \delta)$.} \label{con:increment}
\end{align}
Here $H \subset H_1$ represents the remaining to be removed edges 
(or alternatively $H_1 \setminus H$ are the remaining to be inserted edges).
Exploiting this property we are able to obtain a $O(p/k+u)$ worst-case update time.

As the output grows with each update, we must perform the reset after $k = O(n)$ updates,
otherwise the output becomes too dense.
This is unfortunate as the preprocessing time is $p=\Omega(m)$, 
because one must read the entire input, 
which is too slow to obtain a subpolynomial update time.
This issue can be fixed via a sparsification technique based on \cite{EppsteinGIN97}, 
presented in \Cref{sec:sparsification}.
By using this technique, 
we can make sure that $m = O(n^{1+o(1)})$ 
and thus the preprocessing time will be fast enough 
to allow for $O(n^{o(1)})$ update time.

For this sparsification technique we require the following transitivity property.

\paragraph{Transitivity Property}

\begin{align}
\text{If }H \in \H(G, \epsilon) \text{, then }
\H\left( H, \delta \right) 
\subset 
\H\left( G, \delta+\epsilon \right). \label{con:nested}
\end{align}
Intuitively this means that an approximation $H$ of $G$ and an approximation $H'$ of $H$,
then $H'$ is also a (slightly worse) approximation of $G$.

Properties \ref{con:identity}-\ref{con:nested} above are precisely the properties required of a graph prolem by our black-box reduction for worst-case update time. We show in Lemmas \ref{lma:cutSparsifierIsAGraphProblem},
\ref{lma:spannerIsAGraphProblem}, and
\ref{lem:spectralSparsifierIsAGraphProblem} that the sparsifiers discussed in this paper (spectral sparsifier, cut sparsifier, spanner) satisfy all these properties.

\paragraph{Sparsification Technique (\Cref{sec:sparsification})}

We now outline the sparsification technique, 
whose formal proof is presented in \Cref{sec:sparsification}. 
Let $G$ be an arbitrary graph. 
We partition the edges of $G$ into equally sized subgraphs $G_1, G_2,...,G_d$ for some $d > 1$.
Note that, if we have $\varepsilon$-approximate 
sparsifiers $H_1,...,H_d$ of $G_1,...,G_d$,
then $\bigcup_i H_i$ is a $\varepsilon$-approximate sparsifier of $G$ by property \eqref{con:union}.
In addition, if we have a $\varepsilon$-approximate sparsifier $H$
of $\bigcup_i H_i$, then $H$ is a $(2\varepsilon)$-approximate sparsifier of $G$
by property \eqref{con:nested}.
This allows us to obtain a faster algorithm as follows:
If $G$ has $m$ edges, then each $G_i$ has only $m/d$ edges,
so the dynamic algorithm runs faster on these sparse $G_i$.
Further, since the $H_i$ are sparse (let's say $O(n)$ edges),
the graph $\bigcup_i H_i$ has only $O(dn)$ edges
and maintaining $H$ is also faster than maintaining a sparsifier of $G$ directly, if $dn = o(m)$.
The next idea is to repeat this trick recursively:
We repeatedly split each $G_i$ into $d = n^{o(1)}$ graphs,
until the graphs have only $O(n^{1+o(1)})$ edges.
This means we obtain some tree-like structure rooted at $G$,
where each tree-node $G'$ represents a subgraph of $G$ and 
its tree-children are the $d$ subgraphs $G'_1,...,G'_d$ of $G'$.
For the graphs that form leaves of this tree, 
we run our dynamic sparsifier algorithm.
We also obtain a sparsifier $H'$ of any non-leaf tree-node $G'$, 
by running our dynamic algorithm on $\bigcup_{i=1}^d H'_i$,
where the $H'_i$ are sparsifiers of the child-tree-nodes $G'_i$ of the tree-node $G'$.
Thus all instances of our dynamic algorithm always run on sparse input graphs.
However, there is one downside:
When some sparsifier $H'_i$ changes, the sparsifier $H'$ must also change.
Let's say some edge is deleted from $G$, 
then the edge is deleted from one leaf-node of the tree-structure,
and this update will propagate from the leaf-node all the way to the root of the tree.
This can be problematic because when the dynamic algorithm changes some $c > 1$ many edges of the sparsifier for each edge update, 
then the number of updates grows exponentially with the depth of the tree-like structure.
In \cite{EppsteinGIN97} Eppstein et al.~circumvented this issue by assuming an extra property which they call \emph{stability property}, 
which essentially says that this exponential growth does not occur.
Our modified sparsification technique no longer requires this assumption, 
instead we balance the parameter $d$ carefully 
to make sure the blow-up of the propagation is only some sub-polynomial factor.

\section{Preliminaries}
\label{sec:preliminiaries}

\paragraph{Graphs and Edge sets.} Given a graph $G$, we will often identify $G$ with the edge set. 
For example we write $(u,v) \in G$ to express that the edge $(u,v)$ exists in $G$.
Sometimes we will also use set operations, for example let $G, H$ be two graphs and $W$ be a set of edges.
Then $G \cup H$ refers to the graph $(V(G) \cup V(H), E(G) \cup E(H))$,
and $G \cup W$ refers to the graph induced by $E(G) \cup W$.
For simplicity we also write $G \cup e$ for a single edge $e$ instead of $G \cup \{e\}$.

We have $G - W = (V(G), E(G) \setminus W)$,
and $G \setminus W$ is the graph induced by $E(G) \setminus W$, 
i.e. the graph $G - W$ but without the isolated vertices.
Likewise, $G\setminus H$ is the graph induced by $E(G) \setminus E(H)$. 
We write $H \subset G$ for a subgraph $H$ of $G$.
We denote by $\textsc{Inc}_G(v)$ the set of incident edges to vertex $v \in V$ in $G$.
For a scalar $s \in \R_{>0}$ we define $s \cdot G$ as the graph $G$ where all edge weights were scaled by $s$.

\paragraph{Spanners.}

Given a graph $G$, we call a subgraph $S \subset G$ an $s$-stretch spanner, 
if for every two nodes $u,v \in V$ we have 
$\text{dist}_S(u,v) \le s \cdot \text{dist}_G(u,v)$, 
so the distance between every pair in $S$ is increased 
by a factor of at most $s$ compared to $G$.
The inequality in the other direction 
$\text{dist}_G(u,v) \le \text{dist}_S(u,v)$ holds automatically, 
since $S$ is a subgraph of $G$ and we do not allow negative edge weights.

\paragraph{Spectral sparsifier.}

Given a graph $G = (V,E)$ with edge weights $w \in \R^{|E|}$,
the Laplacian matrix $L_G$ of $G$ is an $n\times n$ matrix, where $(L_G)_{u,u}$ is the degree of node $u$, $(L_G)_{u,v} = -w(u,v)$ if $u$ and $v$ are connected by an edge, and $(L_G)_{u,v}=0$ otherwise. Then,  an $e^\epsilon$-spectral
sparsifier $H \subset G$ 
is a re-weighted subgraph that satisfies for all vectors $\vec{x} \in \R^n$ that
$\vec{x}^\top L_G \vec{x} \le \vec{x}^\top L_H \vec{x} \le e^\epsilon \vec{x}^\top L_G \vec{x}$.\footnote{%
Typically one analyzes $(1+\epsilon)$-spectral sparsifiers. 
For small $\epsilon \ge 0$ we have $1+\epsilon \approx e^{\epsilon}$ 
so we will use the latter term instead. 
This has the benefit that $e^{\epsilon}\cdot e^{\epsilon} = e^{2\epsilon}$, 
i.e. the $e^\epsilon$-approximation of an $e^\epsilon$-approximation is a $e^{2\epsilon}$-approximation. 
In comparison, $(1+\epsilon)^2 = 1+2\epsilon+\epsilon^2$ 
so it is harder to analyze what happens 
if we have repeated approximations of approximations.}

\paragraph{Weights, volume, and degree.}
We write $w_G(u,v)$ for the weight of edge $(u,v)$ in graph $G$ 
and $\vol_G(U) = \sum_{u\in U,v\in V, (u,v) \in E} w_G(u,v)$ is the volume of $U$ in $G$. 
The degree of node $v$ in $G$ is denoted by $\deg_G(v)$.

\paragraph{Subgraphs.}
Let 
$G[S]$ be the subgraph of $G$, when restricting the node set to $S \subset V$.
We denote   $G[A,B]$ by the graph on nodes $A \cup B$ and only including  the edges of $G$ 
that have one endpoint in $A$ and one endpoint in $B$.
We use 
$G\{S\}$ to denote the subgraph of $G$, 
when restricting the node set to $S \subset V$, 
but additionally we add self loops to every node, such that $\deg_{G\{S\}}(v) = \deg_G(v)$.

\paragraph{Cuts and conductance.}

  We write $\delta(S)$ for the number of edges crossing $S$ to $V \setminus S$,
so $\delta(S)$ is the total weight of the cut $(S, V \setminus S)$ that we also sometime denote by $(S, \overline{S})$. %
We define the conductance of a set $S$ by
\[
\phi_G(S) = \frac{\delta(S)}{ \min \{ \vol_G(S), \vol_G(V\setminus S) \}}.
\]
We write $\Phi_G$ for the conductance of $G$, where $\Phi_G = \min_{\emptyset \neq S \subset V} \phi_G(S)$.
We call a graph $G$ with conductance $\phi$ a $\phi$-expander.

Further, for any two (not necessarily disjoint) sets $X, Y \subseteq V$, we denote by $E_G(X,Y)$ the set of edges in $E$ with one endpoint in $X$ and another endpoint in $Y$, and by $E(X)$ the set of edges incident to a vertex in $X$, i.e. $E(X) = \bigcup_{v \in X} \textsc{Inc}_G(v)$.

\paragraph{Recourse.}

Given a dynamic algorithm that maintains some sparsifier $H$ of an input graph $G$,
the recourse is the number of edges that the algorithm changes in $H$ per changed edge in $G$.

\subsection*{Preliminaries about Expanders}

Our framework is heavily based on expanders and decomposition of graphs,
so we first list  some helpful lemmas.
The first one is a decomposition of any graph $G$ into expanders, 
with few edges   between the expanders.

\begin{theorem}[{\cite{SaranurakW19}}]\label{lem:decomposition}
There is a randomized algorithm that
with high probability
given a graph $G = (V,E)$
with $m$ edges
and a parameter $\phi$,
partitions $V$ into $V_1,...,V_k$
in time $O(m \log^4 m/\phi)$ 
such that $\min_{1\leq i\leq k} \Phi_{G[V_i]} \ge \phi$
and $\sum_{i=1}^k \delta(V_i) = O( \phi m \log^3 m)$.
In fact, the algorithm has a stronger guarantee that $\min_i \Phi_{G\{V_i\}} \ge \phi$.
\end{theorem}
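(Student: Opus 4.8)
The plan is to prove this by the classical recursive expander-decomposition scheme of \cite{SaranurakW19} (refining \cite{NanongkaiSW17,NanongkaiS17,Wulff-Nilsen17}): repeatedly locate a sparse cut, split along it, and use \emph{trimming} (a static form of expander pruning) to keep the recursion shallow. The workhorse I would build first is a \emph{cut-or-certify} subroutine which, given a vertex set $S$, runs in $\Otil(|E(G\{S\})|/\phi)$ time and returns one of three outcomes: (i) a certificate that $\Phi_{G\{S\}}\ge \phi$; (ii) a \emph{balanced} sparse cut $A\subsetneq S$ with $\vol_{G\{S\}}(A)\ge \vol_{G\{S\}}(S)/\polylog(m)$ and $\delta(A)\le O(\phi\log^2 m)\,\vol_{G\{S\}}(A)$; or (iii) an \emph{unbalanced} sparse cut $A$ together with a ``trim set'' $A'\supseteq A$ with $\vol_{G\{S\}}(A')$ only slightly larger than $\vol_{G\{S\}}(A)$ and $\Phi_{G\{S\setminus A'\}}\ge\phi$. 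I would implement this via the cut-matching game \cite{KhandekarRV09}: each of its $O(\log^2 m)$ rounds reduces to a single (approximate) maximum flow on $G\{S\}$, computable in $\Otil(|E(G\{S\})|)$ time \cite{Peng16}; the game either embeds a $\Omega(1/\log^2 m)$-expander into $G\{S\}$ (certifying $\Phi_{G\{S\}}\ge\phi$ after rescaling the internal target conductance by the $O(\log^2 m)$ approximation factor) or a flow saturates and its minimum cut yields the sparse cut, with the trim set of case (iii) obtained from one further max-flow. Working throughout with $G\{S\}$ rather than $G[S]$ makes conductance be measured against the \emph{original} degrees; since adding self-loops never changes a cut's size but only increases volumes, $\Phi_{G\{S\}}\ge\phi$ implies $\Phi_{G[S]}\ge\phi$, so outcome (i) already delivers the stated ``stronger guarantee.''

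Given this subroutine, the decomposition is straightforward: start with the single part $V$; repeatedly take an unfinished part $S$ and call cut-or-certify on it. In case (i), declare $S$ finished. In case (ii), recurse on both $A$ and $S\setminus A$ (their edge sets are disjoint; crossing edges become self-loops in the two children). In case (iii), declare $S\setminus A'$ finished and recurse only on $A'$. The returned $V_1,\dots,V_k$ are the finished parts; they partition $V$ and each satisfies $\Phi_{G\{V_i\}}\ge\phi$ by construction.

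For the size bound, observe that every cut ever performed is of a set $A$ (or $A'$ in case (iii), which is still at most half the parent's volume and has boundary within a constant factor of the same bound) with $\delta(A)\le O(\phi\log^2 m)\,\vol_G(A)$ and $\vol_G(A)\le\tfrac12\vol_G(\text{parent})$. Charging $\delta(A)$ against $\vol_G(A)$ at rate $O(\phi\log^2 m)$, and noting each unit of volume lies on the smaller side of a cut at most $O(\log m)$ times (its component's volume at least halves each such time), we get $\sum_i\delta(V_i)\le\sum_{\text{cuts}}\delta(A)=O(\phi\log^2 m)\cdot 2m\cdot O(\log m)=O(\phi m\log^3 m)$. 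For the running time, at every recursion depth the parts have pairwise-disjoint edge sets, so the edge count at each depth is $O(m)$ plus the at most $O(\sum_i\delta(V_i))=\Otil(\phi m)$ self-loops created so far, i.e.\ $O(m)$; each part of size $m_S$ costs $\Otil(m_S/\phi)$; and the recursion depth is $\polylog(m)$, since a balanced cut shrinks the larger side by a $(1-1/\polylog m)$ factor while the trim case produces a finished leaf, so a part's volume halves every $\polylog m$ levels. Multiplying gives total time $\Otil(m/\phi)$, and tracking the exponents of $\log m$ through the $O(\log^2 m)$ cut-matching rounds and the depth yields $O(m\log^4 m/\phi)$. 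The only randomness lies in the cut-matching game and the flow routines, which meet these bounds with high probability.

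The main obstacle is precisely case (iii): without trimming, an adversary can present a chain of extremely unbalanced sparse cuts, forcing $\Omega(m)$ recursion levels and $\Omega(m^2)$ time, and the large remainder after deleting one tiny sparse cut is only guaranteed to have no \emph{balanced} sparse cut, not to be an expander. The trimming step resolves both issues simultaneously, but it must be engineered so that (a) the conductance of the trimmed remainder degrades by only a constant factor (else $\phi$ decays geometrically down the recursion and the certificate is vacuous), (b) the extra volume removed is a small fraction of the cut it repairs (so the size bound survives), and (c) it costs a single max-flow. Establishing this three-way guarantee for the trimming primitive --- essentially a static instance of the expander-pruning theorem --- is the technical heart of the argument; the recursive bookkeeping around it is routine.
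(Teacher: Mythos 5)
This statement is not proven in the paper at all: it is quoted as a black box from \cite{SaranurakW19}, so the only meaningful benchmark is the proof in that reference. Your outline does follow the same route as that source -- a recursive decomposition driven by a cut-matching game, with a trimming step to handle very unbalanced sparse cuts, and the observation that working with $G\{S\}$ (self-loops preserving original degrees) yields the stronger guarantee $\Phi_{G\{V_i\}}\ge\phi$ and hence $\Phi_{G[V_i]}\ge\phi$. Structurally you have reconstructed the right argument, and your charging of $\sum_i\delta(V_i)$ against volume at rate $O(\phi\log^2 m)$ with an $O(\log m)$ ``smaller side'' multiplicity is the standard way the $\log^3$ factor arises.

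The genuine gap is that the step you defer is not routine bookkeeping but the actual content of \cite{SaranurakW19}. Your case (iii) posits a trimming primitive with a three-way guarantee (only constant-factor loss in conductance of the remainder, trimmed volume comparable to the sparse cut it repairs, near-linear cost), and you explicitly leave it unproven while calling it the technical heart. Establishing that when the cut-matching game fails to produce a balanced sparse cut the remainder is a \emph{near}-expander whose small defective part can be excised by a flow computation is precisely the new trimming theorem of that paper, and it is not ``one further max-flow'': the source deliberately avoids max-flow black boxes and uses a bounded-height local push-relabel (Unit-Flow) so that (a) congestion/height is capped at $O(1/\phi)$, which is what keeps the conductance loss a constant factor, and (b) no flow-path decomposition has to be paid for. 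Relatedly, implementing each cut-matching round via approximate max flow \cite{Peng16} requires an argument for converting a fractional flow into the matching/embedding that the potential analysis of \cite{KhandekarRV09} needs, within the stated time and congestion; and the ``balanced'' outcome of the cut-matching step only guarantees balance $\Omega(1/\polylog m)$, so the recursion-depth and $\log$-exponent accounting (``tracking the exponents \ldots yields $O(m\log^4 m/\phi)$'') is asserted rather than derived. With the trimming lemma and these implementation details imported from \cite{SaranurakW19}, your sketch matches the known proof; as a standalone argument it assumes the theorem's hardest component.
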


The following lemma shows that it is easy to construct an expander with high conductance and almost uniform degree, and the proof of the lemma can be found in \Cref{sec:omit}.

\begin{lemma}
[Fast explicit construction of expanders]
Given numbers $n\ge10$ and $d\ge9$, there
is an algorithm with running time $O(nd)$ that constructs a graph
$H_{n,d}$ with $n$ vertices such that each vertex has degree between
$d-8$ and $2d$, and conductance $\Phi_{H_{n,d}}=\Omega(1)$.
\label{prop:explicit expander}
\end{lemma}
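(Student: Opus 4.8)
The plan is to build $H_{n,d}$ from a classical explicit constant-degree expander by (i) patching the vertex count to be exactly $n$ and (ii) boosting the degree to $\Theta(d)$, exploiting the generous slack in the statement (degree anywhere in $[d-8,2d]$, conductance only $\Omega(1)$). Note first that the statement implicitly needs $d-8\le n-1$, since a simple graph on $n$ vertices cannot have min-degree larger than $n-1$; for $n$ below a fixed absolute constant I would simply hard-code a suitable graph, so assume $n$ large below.

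\textbf{Base construction and fixing the vertex count.} I would start from one of the classical $8$-regular explicit expanders on $q^2$ vertices with conductance at least a universal $\phi_0>0$ — e.g.\ the Margulis / Gabber--Galil graph on $\mathbb{Z}_q\times\mathbb{Z}_q$, whose neighbours of $(x,y)$ are obtained by the four affine maps $(x,y)\mapsto(x\pm y,y)$, $(x,y)\mapsto(x,y\pm x)$; it has an absolute spectral gap, hence by Cheeger $\Phi\ge\phi_0$, and it is built in time $O(q^2)$. Take $q=\lceil\sqrt n\rceil$, so the base graph $B$ has $q^2$ vertices with $q^2-n< 2\sqrt n+1=O(\sqrt n)$. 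I would remove the excess by \emph{folding}: choose $q^2-n$ base vertices (the ``extra'' ones) forming an independent set with pairwise disjoint neighbourhoods, and identify each with a distinct ``target'' vertex avoiding the forbidden set $N_B(u)\cup N_B(N_B(u))\cup\{\text{other extras}\}\cup\{\text{prior targets}\}$, which has size $O(1)+O(\sqrt n)=o(n)$, hence such targets exist. The quotient of a $\phi_0$-expander under any vertex partition is again a $\phi_0$-expander, since a cut in the quotient pulls back to a cut in $B$ with the same boundary size and the same $\vol$; and because the fold is conflict-free, this quotient is already a \emph{simple} graph $X$ on exactly $n$ vertices, with every vertex of degree $8$ (singleton blocks) or $16$ (the merged pairs), and with $\Phi_X\ge\phi_0$. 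This takes time $O(n)$.

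\textbf{Boosting the degree.} Let $t=\lfloor d/8\rfloor\ge 1$, so $8t\in[d-7,d]$. I would take $t$ pairwise edge-disjoint copies $X_1,\dots,X_t$ of such an expander on the common vertex set $[n]$ — for instance by running the previous step with the four affine maps replaced, in the $j$-th copy, by $(x,y)\mapsto(x\pm jy,y)$, $(x,y)\mapsto(x,y\pm jx)$ (each again a $\phi_0$-expander up to adjusting the universal constant), resolving the $O(n)$ accidental coincidences among copies by the same conflict-avoiding device so that $H_{n,d}:=\bigcup_{j=1}^t X_j$ stays simple. The union of edge-disjoint $\phi_0$-expanders on a common vertex set is a $(\phi_0/2)$-expander: for $|S|\le n/2$,
\[
\delta_{H_{n,d}}(S)=\sum_{j=1}^t\delta_{X_j}(S)\ \ge\ \phi_0\sum_{j=1}^t\min\{\vol_{X_j}(S),\vol_{X_j}(\bar S)\}\ \ge\ \tfrac{\phi_0}{2}\sum_{j=1}^t\vol_{X_j}(S)=\tfrac{\phi_0}{2}\,\vol_{H_{n,d}}(S),
\]
where $\vol_{X_j}(\bar S)\ge\tfrac12\vol_{X_j}(S)$ uses that each $X_j$ has min-degree $\ge 8$ and $|\bar S|\ge|S|$. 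Every vertex of $H_{n,d}$ has degree in $[8t,16t]\subseteq[d-8,2d]$ (from $8t\le d$ and $8t\ge d-7$); for the small cases $d\in\{9,\dots,15\}$ we have $t=1$ and $X$ itself works, since $[8,16]\subseteq[d-8,2d]$. The total time is $O(nt)=O(nd)$.

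\textbf{Main obstacle.} The one genuine subtlety is the conductance \emph{near the $O(\sqrt n)$ perturbed vertices}: naively deleting the excess vertices, or naively deleting one of each pair of parallel edges in the multigraph union, can kill expansion for \emph{tiny} vertex sets supported on the perturbation, even though a volume dichotomy trivially handles every set $S$ with $\vol(S)=\Omega(\sqrt n)$. The conflict-free folding / conflict-free de-duplication above is exactly the device I would use to avoid this entirely: by never creating a self-loop or a parallel edge, both the quotient and the union are automatically simple and inherit $\Omega(1)$ conductance verbatim, so no local repair is needed. The rest — checking the degree interval in the edge cases and the finite base case, and confirming the per-copy expansion of the scaled Margulis maps — is routine.
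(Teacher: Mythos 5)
Your folding step for fixing the vertex count is sound and is essentially the paper's own move (the paper takes the $8$-regular Gabber--Galil graph on $k^2$ vertices with $(k-1)^2<n\le k^2$ and contracts disjoint pairs, which preserves conductance up to a constant and keeps degrees in $[8,16]$). The genuine gap is in your degree-boosting step, and it is caused by your decision to insist on a \emph{simple} graph. The lemma does not require simplicity, and in fact it cannot: in the $\Delta$-reduction the lemma is invoked as $H_{\lceil \deg(u)/\Delta\rceil,\Delta}$, where the number of vertices $\lceil \deg(u)/\Delta\rceil$ can be a small constant (as low as $10$) while the degree parameter $\Delta$ is huge, so your added hypothesis $d-8\le n-1$ silently changes the statement and would break the application. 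The intended object is a multigraph, and with multigraphs the boosting step is trivial: take the union of $t=\lfloor d/8\rfloor$ identical copies of the $n$-vertex expander; multiplying every edge multiplicity by $t$ leaves the conductance exactly unchanged and puts every degree in $[8t,16t]\subseteq[d-8,2d]$.

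By contrast, the route you propose to stay simple rests on two unproven (and in places false) claims. First, that the $j$-scaled maps $(x,y)\mapsto(x\pm jy,y)$, $(x,y)\mapsto(x,y\pm jx)$ are $\Omega(1)$-expanders with a constant uniform in $j$: the Gabber--Galil analysis is specific to the fixed coefficient maps, and for $j\equiv 0\pmod q$ the maps are the identity, so the claim fails outright; since $t=\lfloor d/8\rfloor$ can exceed $q\approx\sqrt n$ (the lemma allows $d$ as large as you like, and the application needs $d\gg n$), you cannot even choose $t$ distinct nonzero residues, let alone ones with uniform expansion. Second, the edge-disjointness of the copies: copies $j$ and $q-j$ coincide on essentially all their edges (the ``$+$'' map of one is the ``$-$'' map of the other), so the overlaps are not $O(n)$ sporadic coincidences, and the ``conflict-avoiding device'' that is supposed to reroute them is not specified and would itself have to be shown not to destroy the degree bounds or the expansion. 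If you drop the simplicity requirement and replace this step by the $t$-fold multigraph union of one fixed expander, the rest of your argument (the folding, the $O(nd)$ running time, and the degree bookkeeping $8t\ge d-8$, $16t\le 2d$) goes through and matches the paper's proof.
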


As outlined in \Cref{sec:adaptive_overview},
we require our expanders to be of almost uniform degree in order to efficiently maintain our sparsifiers.
The following algorithm allows us to transform a graph to be of almost uniform degree.
The exact properties of this algorithm are stated in \Cref{prop:delta-reduction prop}.

\begin{definition}
[$\Delta$-reduction]\label{def:delta-reduction}Let $G=(V,E)$ be
a graph, and let $\Delta\ge9$ be a parameter. The $\Delta$-reduction
graph $G'$ of $G$ is obtained from $G$ by the following operations: 
\begin{itemize}
\item For each node $u\in V$ with $\deg(u)\ge10\Delta$, we replace $u$
by the expander $X_{u}=H_{\left\lceil \deg(u)/\Delta\right\rceil ,\Delta}$
defined in \Cref{prop:explicit expander} with $\left\lceil \deg(u)/\Delta\right\rceil $
nodes and $\Theta(\Delta)$ degree. We call $X_{u}$ a \emph{super-node}
in $G'$.
\item Let $E_{u}$ be set of edges in $G$ incident to $u$. Let $E_{u}^{(1)},\dots,E_{u}^{(\left\lceil \deg(u)/\Delta\right\rceil )}$
be a partition of $E_{u}$ into groups of size at most $\Delta$. 
\item For each $(u,v)\in E_{u}^{(i)}\cap E_{v}^{(j)}$, we  add an
edge between the $i$-th of node $X_{u}$ and the $j$-th node of
$X_{v}$.
\end{itemize}
\end{definition}

\begin{lemma} 
\label{prop:delta-reduction prop}For any graph $G=(V,E)$, the $\Delta$-reduction
graph $G'=(V',E')$ of $G$ has the following properties:
\begin{enumerate}
\item $G'$ can be obtained from $G$ in $O(|E|)$ time;
\item Each node $u$ in $G'$ is such that $\deg_{G'}(u)=\Theta(\deg_{X_{u}}(u))=\Theta(\Delta)$;
\item Each super-node $X_{u}$ in $G'$ has volume $\vol(X_{u})=\Theta(\deg_{G}(u))$;
\item $|V|\le|V'|\le 2|V|+|E|/\Delta$, and
\item $\Phi_{G'}=\Theta(\Phi_{G})$.
\end{enumerate}
\end{lemma}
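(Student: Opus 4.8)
The plan is to verify items~(1)--(4) directly from \Cref{def:delta-reduction} and \Cref{prop:explicit expander}, and then devote the real effort to item~(5). For (1): constructing the super-node $X_u=H_{\lceil\deg_G(u)/\Delta\rceil,\Delta}$ costs $O(\lceil\deg_G(u)/\Delta\rceil\cdot\Delta)=O(\deg_G(u))$ time by \Cref{prop:explicit expander} (the parameters are in range since $\lceil\deg_G(u)/\Delta\rceil\ge 10$ because $\deg_G(u)\ge 10\Delta$, and $\Delta\ge 9$; the ceiling is absorbed because $\Delta=O(\deg_G(u))$), and partitioning $E_u$ into groups of size $\le\Delta$ and rerouting its edges is another $O(\deg_G(u))$; summing over $u$ gives $O(|E|)$. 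For (2): a vertex of $X_u$ has internal degree in $[\Delta-8,2\Delta]$ by \Cref{prop:explicit expander} and receives at most $|E_u^{(i)}|\le\Delta$ rerouted edges, so its $G'$-degree is $\Theta(\Delta)$, which is $\Theta$ of its degree inside $X_u$; a non-replaced vertex keeps its original degree. Item (3) follows by summing, $\vol_{G'}(X_u)=\deg_G(u)+\sum_{v\in X_u}(\text{internal degree of }v)=\deg_G(u)+\Theta(\lceil\deg_G(u)/\Delta\rceil\cdot\Delta)=\Theta(\deg_G(u))$; in particular there are universal constants $0<c_0\le C_0$ with $c_0\deg_G(u)\le\vol_{G'}(X_u)\le C_0\deg_G(u)$. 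For (4), $u\in V$ contributes $\max(1,\lceil\deg_G(u)/\Delta\rceil)$ vertices to $V'$; since $\sum_u\lceil\deg_G(u)/\Delta\rceil\le|V|+\sum_u\deg_G(u)/\Delta=|V|+2|E|/\Delta$ and every super-node is nonempty, $|V|\le|V'|\le 2|V|+2|E|/\Delta$.

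The bulk of the proof, and the main obstacle, is item~(5), $\Phi_{G'}=\Theta(\Phi_G)$. The easy inclusion $\Phi_{G'}=O(\Phi_G)$ I would get by lifting cuts: take $S$ with $\phi_G(S)=\Phi_G$ and $\vol_G(S)\le\vol_G(V)/2$, and set $S'=\bigcup_{u\in S}X_u$. Every internal edge of every $X_u$ lies entirely on one side of $S'$, and each original edge $(u,v)$ reroutes to exactly one edge of $G'$, which is cut iff $(u,v)$ is cut by $S$; hence $\delta_{G'}(S')=\delta_G(S)$, while by item~(3) $\vol_{G'}(S')\ge c_0\vol_G(S)$ and $\vol_{G'}(\bar{S'})\ge c_0\vol_G(\bar S)\ge c_0\vol_G(V)/2\ge c_0\vol_G(S)$, so $\phi_{G'}(S')\le\delta_G(S)/(c_0\vol_G(S))=O(\Phi_G)$.

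For the reverse inclusion $\Phi_{G'}=\Omega(\Phi_G)$, fix any $S'$ with $\vol_{G'}(S')\le\vol_{G'}(\bar{S'})$, write $a_u=\vol_{G'}(X_u\cap S')$, $b_u=\vol_{G'}(X_u\setminus S')$, and $M=\sum_u\min(a_u,b_u)$. Since each $X_u$ is an $\Omega(1)$-expander with internal and total degrees both $\Theta(\Delta)$, the internal cut of $X_u$ has $\Omega(1)\min(a_u,b_u)$ edges, and these cuts are edge-disjoint across $u$, so $\delta_{G'}(S')\ge\Omega(1)\,M$. I then split on $M$. If $M\ge\beta\vol_{G'}(S')$ for a suitable small constant $\beta$, then $\delta_{G'}(S')\ge\Omega(\beta)\vol_{G'}(S')\ge\Omega(\Phi_G)\vol_{G'}(S')$ (as $\Phi_G\le 1$) and we are done. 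Otherwise $M<\beta\vol_{G'}(S')$; put $S=\{u:a_u\ge b_u\}$ and $U_S=\vol_{G'}(\bigcup_{u\in S}X_u)$, $U_{\bar S}=\vol_{G'}(\bigcup_{u\notin S}X_u)$. A short computation gives $\vol_{G'}(S')\in[U_S-M,U_S+M]$ and $\vol_{G'}(\bar{S'})\in[U_{\bar S}-M,U_{\bar S}+M]$, so both $U_S$ and $U_{\bar S}$ are at least $(1-\beta)\vol_{G'}(S')$ (for $U_{\bar S}$ also using $\vol_{G'}(\bar{S'})\ge\vol_{G'}(S')$); this forces $S\ne\emptyset,V$ and, via $U_S\le C_0\vol_G(S)$ and $U_{\bar S}\le C_0\vol_G(\bar S)$ from item~(3), gives $\min(\vol_G(S),\vol_G(\bar S))\ge\Omega(\vol_{G'}(S'))$.

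It remains to bound $\delta_G(S)$ by a charging argument: each $(u,v)\in E_G(S,\bar S)$ reroutes to some $(x,y)$ with $x\in X_u$, $y\in X_v$; if $x\in S'$ and $y\notin S'$, charge it to its (distinct) image in $\delta_{G'}(S')$; otherwise $x\in X_u\setminus S'$ or $y\in X_v\cap S'$, and the number of such edges at a fixed $u$ (resp.\ $v$) is at most $b_u=\min(a_u,b_u)$ (resp.\ $a_v=\min(a_v,b_v)$), so $\delta_G(S)\le\delta_{G'}(S')+2M=O(\delta_{G'}(S'))$. Combining, $\Phi_G\le\phi_G(S)=\delta_G(S)/\min(\vol_G(S),\vol_G(\bar S))=O(\delta_{G'}(S')/\vol_{G'}(S'))$, i.e.\ $\delta_{G'}(S')\ge\Omega(\Phi_G)\min(\vol_{G'}(S'),\vol_{G'}(\bar{S'}))$, as needed. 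The one delicate point throughout is exactly this constant-factor bookkeeping: since $\vol_{G'}(V')$ is only within a constant factor of $\vol_G(V)$, the projection of a ``$\le$ half'' cut in $G'$ need not be a ``$\le$ half'' cut in $G$, which is why the case split on $M$ (rather than a naive swap to the complement) is the right move, and why one must carry the explicit constants $c_0,C_0$ from item~(3).
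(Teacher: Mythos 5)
Your proposal is correct, and for the one nontrivial claim, item (5), it reaches the conclusion by a genuinely different closing argument than the paper (which declares items (1)--(4) clear and only proves (5)). The shared skeleton is the same: the direction $\Phi_{G'}=O(\Phi_G)$ by lifting the optimal cut $S$ to $S'=\bigcup_{u\in S}X_u$ is identical, and for $\Phi_{G'}=\Omega(\Phi_G)$ both arguments round an arbitrary cut $S'$ to super-node boundaries and use the internal $\Omega(1)$-expansion of each $X_u$ to control the partially-cut super-nodes. The difference is how the rounding error is absorbed. The paper splits on whether the super-nodes with overlap at most $2/3$ carry volume at least $\epsilon\Phi_G\vol_{G'}(S')$; in the second case it forms the rounded cut $T'$ inside $G'$, proves $\delta_{G'}(T')=O(\delta_{G'}(S')+\epsilon\Phi_G\vol_{G'}(S'))$ plus volume comparisons, converts via $\Phi_G(T)=\Theta(\Phi_{G'}(T'))$, and closes by a bootstrapping choice $\epsilon=1/(2C)$ against the hidden constant $C$. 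You instead split on $M=\sum_u\min(a_u,b_u)$ against $\beta\vol_{G'}(S')$ with $\beta$ an absolute constant, finishing the first case with the crude but valid bound $\Phi_G\le 1$, and in the second case you charge each edge of $\delta_G(S)$ either to a distinct edge of $\delta_{G'}(S')$ or to $M$, getting $\delta_G(S)\le\delta_{G'}(S')+2M=O(\delta_{G'}(S'))$ because $\delta_{G'}(S')=\Omega(M)$ by internal expansion; this compares $S'$ directly with a cut of $G$ and dispenses with the paper's auxiliary cut $T'$ and its constant-chasing, which is arguably cleaner, while your explicit carrying of the constants $c_0,C_0$ from item (3) correctly handles the fact that a half-volume cut in $G'$ need not project to a half-volume cut in $G$. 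Two minor remarks: singleton super-nodes (vertices of degree below $10\Delta$) contribute $\min(a_u,b_u)=0$, so your $M$-based bookkeeping silently handles them, which is worth a sentence; and your count for item (4) gives $|V'|\le 2|V|+2|E|/\Delta$ rather than the stated $2|V|+|E|/\Delta$, a constant-factor discrepancy that is immaterial since every use of the lemma only needs $|V'|=O(|V|+|E|/\Delta)$.
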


The proof of the lemma above can be found in \Cref{sec:omit}.

\part{Reduction to Expanders \label{part:reduction}}
\label{define_properties} %

In this part, we show how to reduce a large class of  dynamic sparsifier problems to the same problem on expanders with several additional nice properties.

As a warm-up, in \Cref{sec:simple_amortized_reduction}, we show the simplest reduction to expanders that works for an algorithm with amortized update time. This reduction is not strong enough for our main result.
In \Cref{sec:uniform_degree_reduction}, we strengthen the previous reduction so that we can assume that the expander has almost uniform degree at all time. 
This amortized reduction is strong enough and will be needed to prove \Cref{thm:main spanner,thm:main cut,thm:main spectral adaptive} in \Cref{part:algorithm}.

Then, our next goal is to extend the reduction further to work with algorithms with worst-case update time.
We need several tools for doing this. \Cref{sec:pruning} discusses the expander pruning algorithm with \emph{worst-case} update time, which generalizes and improves the one in \cite{NanongkaiSW17}.
\Cref{sec:sparsification} discusses the extention of the sparsification technique by Eppstein et al. \cite{EppsteinGIN97} which allows the algorithm to assume that the input expander is already quite sparse and the algorithm's task is just to sparsify it even further. This is crucial for the worst-case time reduction.
Lastly, in \Cref{sec:worst_case_blackbox}, we complete the reduction for algorithms with worst-case update time. This blackbox allows us to obtain the deamortized versions of \Cref{thm:main spanner,thm:main cut,thm:main spectral adaptive} and also easily obtain \Cref{cor:main spectral} in \Cref{part:algorithm}.

The reduction holds for any graph problem that satisfies a small number of conditions.
For ease of discussion, let us recall the properties defined earlier: 
for any graph $G$ and accuracy parameter $\epsilon$, 
we view a graph problem as a function $\H$ that maps $(G,\epsilon)$ to a set of graphs. 
We say a dynamic algorithm $\A$ solves $\H(\epsilon)$,  
if for every input graph $G$  algorithm $\A$ maintains/computes a graph $H \in \H(G,\epsilon)$. 
For example, we could define $\H(G,\epsilon)$ to be the set of all $\epsilon$-approximate spectral sparsifiers of $G$. 
Then, we say $\A$ solves $\H(\epsilon)$ if, for any input graph $G$, $\A$ returns an $\epsilon$-approximate spectral sparsifier of $G$. 
For parameters $0\leq \epsilon,\delta$, we will study problems satisfying the  following properties:

\newcounter{backupequation}%
\setcounter{backupequation}{\value{equation}}
\setcounter{equation}{0}
\begin{align}
&\text{Let $G'$ be $G$ with edges scaled by up to $e^{\epsilon}$, then $G' \in \H(G,\epsilon)$ and $e^{\epsilon} \cdot G \in \H(G',\epsilon)$.}\\
&\text{If }H_i \in \H(G_i,\epsilon) \text{ and $0\le s_i$, then }
\bigcup_{i} s_i\cdot H_i \in \H\left( \bigcup_{i} s_i \cdot G_i, \epsilon \right).\\
&\begin{array}{l}
\text{When contracting $W \subset V$ in both $G$ and $H\in\H(G,\epsilon)$,}\\
\text{let $G'$ and $H'$ be the resulting graphs, then $H' \in \H(G',\epsilon)$.}
\end{array} \\ 
&\text{Let $H_1,H_2 \in \H(G, \epsilon)$ and $H \subset H_1$, then $(e^{\delta}-1) H \cup H_2 \in \H(G, \epsilon + \delta)$.}\\
&\text{If }H \in \H(G, \epsilon) \text{, then }
\H\left( H, \delta \right) 
\subset 
\H\left( G, \delta+\epsilon \right).
\end{align}
\setcounter{equation}{\value{backupequation}}

The warm-up amortized reduction from \Cref{sec:simple_amortized_reduction} only needs Property \eqref{con:identity} and \eqref{con:union}. To be able to assume almost uniform degree, the reduction in \Cref{sec:uniform_degree_reduction} additionally need Property \eqref{con:contraction}.
Then, our reduction in \Cref{sec:worst_case_blackbox} for worst-case algorithms needs all five properties.

\section{Warm-up: Amortized Reduction to Arbitrary Expanders}
\label{sec:simple_amortized_reduction}

Let $\H$ be some graph problem that satisfies the perturbation property \eqref{con:identity}
and union property \eqref{con:union}.
In this section we present a reduction that reduces the fully dynamic problem $\H$ on general graphs 
to the decremental variant of $\H$ (i.e.~only edge deletions) when the input graph is an expander.
Note that this reduction is not strong enough for our main result, but it gives a key idea.
Formally we define such a decremental algorithm as follows:

\begin{definition}
\label{def:simple:decremental_algorithm}
Let $\H$ be a graph problem.
We call an algorithm $\mathcal{A}$ a ``decremental algorithm on $\phi$-expanders for $\H(\epsilon)$'', 
if for any \emph{unweighted} $n$-node graph $G$
\begin{itemize}
\item $\mathcal{A}$ maintains $H \in \mathcal{H}(G,\epsilon)$ under edge deletions to $G$. 
In each update the algorithm is given a set $D \subset E(G)$ of edges that are to be removed from $G$.
\item The algorithm is allowed to assume that before and after every update the graph $G$ is a $\phi$-expander.
\end{itemize}
\end{definition}

We show that any decremental algorithm on $\phi$-expanders for $\H(\epsilon)$ 
results in a fully dynamic algorithm for general graphs.

\begin{theorem}[Amortized Blackbox Reduction (without Uniform Degree Promise)]\label{thm:simple:fully_dynamic_weighted}
Assume $\H$ satisfies 
\eqref{con:identity} and \eqref{con:union}, 
and there exists a decremental algorithm $\A$ for $\H(\epsilon)$
on $\phi$-expanders for any $\phi = O(1/\log^4 n)$.
Then, there exists a fully dynamic algorithm $\B$ 
for $\H(\epsilon)$ on general \emph{weighted} graphs
whose ratio between the largest and the smallest edge weight is bounded by $W$.

For the time complexity, let $P(m)\ge m$ be the pre-processing time of $\A$, 
$S(n)\ge n$ an upper bound on the size of the output graph throughout all updates, 
and $T(n)$ the amortized update time of $\A$ per deleted edge.
Then, the pre-processing time of $\B$ is $O(P(m))$,
with amortized update time $$
O\left(
	\phi^{-3} \log^7 n \cdot T(n) 
	+ \frac{P(m)}{m \phi^{3}} \log^{7} n
\right).
$$
The size of the output graph is bounded by $O(S(n \log^3 n) (1+\epsilon^{-1}) \log W)$.
\end{theorem}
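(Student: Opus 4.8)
The plan is to build $\B$ in two stages: first reduce weighted graphs to the unweighted case using only Properties \eqref{con:identity} and \eqref{con:union}, and then handle an unweighted fully dynamic $G$ by maintaining a \emph{layered expander decomposition} whose expander pieces are each handed to a copy of $\A$. For the weight reduction, I would partition $E(G)$ into $O((1+\epsilon^{-1})\log W)$ weight classes, class $b$ holding the edges with weight in $[e^{b\epsilon},e^{(b+1)\epsilon})$, and write $\bar G_b$ for this class viewed as an unweighted graph. Every edge update of $G$ is an edge update of exactly one $\bar G_b$, so it suffices to run one copy of the unweighted fully dynamic algorithm (built below) on each $\bar G_b$, maintaining $H_b\in\H(\bar G_b,\epsilon)$, and to output $H:=\bigcup_b e^{(b+1)\epsilon}\,H_b$. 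By Property \eqref{con:union}, $H\in\H\big(\bigcup_b e^{(b+1)\epsilon}\,\bar G_b,\ \epsilon\big)$, and since $\bigcup_b e^{(b+1)\epsilon}\,\bar G_b$ is $G$ with every edge scaled up by a factor in $[1,e^{\epsilon})$, Property \eqref{con:identity} certifies that $H$ is a valid output for $G$. The size is multiplied by the number of classes, producing the $(1+\epsilon^{-1})\log W$ factor, while $P$ and $S$ are unaffected up to constants because the classes are edge-disjoint and superadditive.

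For unweighted $G$, the heart of the construction is to maintain a partition $E(G)=\bigsqcup_{i=0}^{L}E(G^{(i)})$ with $L=O(\log n)$ and $|E(G^{(i)})|\le 2^i$, and, inside each layer, to apply \Cref{lem:decomposition} recursively with parameter $\phi$: decompose $G^{(i)}$ into vertex-induced $\phi$-expanders, recurse on the $O(\phi\log^3 n)$-fraction of inter-cluster edges (a fraction below $1/2$ since $\phi=O(1/\log^4 n)$), and stop after $O(\log n)$ recursive levels, by which point every edge of $G^{(i)}$ belongs to some $\phi$-expander $G^{(i)}_j$. I would run one instance of $\A$ on each $G^{(i)}_j$ and always output $H:=\bigcup_{i,j}H^{(i)}_j$, which lies in $\H(G,\epsilon)$ by Property \eqref{con:union}. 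Each vertex lies in $O(\log n)$ pieces per layer across $O(\log n)$ layers, so $\sum_{i,j}|V(G^{(i)}_j)|=\Otil(n)$; combined with $S(n)\ge n$, superadditivity of $S$, and the weight reduction, this yields the claimed output size $O(S(n\log^3 n)(1+\epsilon^{-1})\log W)$.

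To process updates: an insertion goes into $G^{(0)}$, and whenever $|E(G^{(i)})|$ exceeds $2^i$ the whole of $G^{(i)}$ is flushed into $G^{(i+1)}$ and emptied; after a layer changes, its recursive decomposition is recomputed from scratch and $\A$ is re-initialised on each new piece. A deletion of $e$ is applied to the unique $G^{(i)}_j$ containing it; since that piece may lose its expansion, I would run the expander-pruning procedure of \cite{SaranurakW19} on it to obtain a small edge set $P$ with $G^{(i)}_j\setminus P$ still an expander, feed the edges of $P$ to $\A$ as additional deletions (so $\A$ only ever sees a decremental $\phi$-expander), and re-insert $P$ into $G^{(0)}$ so that $\bigsqcup_{i,j}E(G^{(i)}_j)=E(G)$ is restored. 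Standard amortization over the $\Theta(2^i)$ updates reaching layer $i$ — each rebuild costing $O(P(2^i))$ for $\A$'s preprocessing plus $\Otil(2^i/\phi)$ for the recursive calls to \Cref{lem:decomposition} — together with the $\Otil(1/\phi)$-per-deletion time and edge-blowup of pruning, and the observation that each re-inserted pruned edge is charged to the adversarial deletion that produced it (an adversarial deletion strikes exactly one piece, triggering $\Otil(1/\phi)$ prunings there), yields the update time $O(\phi^{-3}\log^7 n\,(T(n)+P(m)/m))$; the preprocessing is $O(P(m))$ plus the one-time recursive decomposition.

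The one genuinely delicate point is the interaction between pruning and the layers, which must be checked simultaneously on three fronts: (i) each $\A$-instance is only ever presented with a decremental $\phi$-expander, counting the synthetic deletions of pruned edges; (ii) the invariant $\bigsqcup_{i,j}E(G^{(i)}_j)=E(G)$, hence $H\in\H(G,\epsilon)$, is re-established after \emph{every} elementary step, not merely at the end of an update; and (iii) the cascade of re-inserted pruned edges through the layers terminates and is paid for — i.e.\ re-insertions do not themselves provoke an unbounded amount of further pruning. Verifying (iii) and pinning down the exact polynomial in $\phi^{-1}$ and $\log n$ — which couples the $1/\phi$ cost of \Cref{lem:decomposition}, the $\Otil(1/\phi)$ overhead of the pruning routine, and the re-insertion amplification — is where essentially all of the effort of the proof lies.
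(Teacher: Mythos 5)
Your proposal is correct and follows essentially the same route as the paper: bucket edge weights into $O((1+\epsilon^{-1})\log W)$ classes handled via Properties \eqref{con:identity} and \eqref{con:union}, and for each unweighted class maintain the layered dynamic expander decomposition (recursive application of \Cref{lem:decomposition}, overflow flushing between layers of size $2^i$, amortized expander pruning with pruned edges fed to $\A$ as deletions and re-inserted into the bottom layer), taking the union of the $\A$-outputs on the pieces. This matches the paper's \Cref{lem:simple:fully_dynamic} built on \Cref{thm:dynamicExpanderDecomposition}, including the amortization of rebuilds and of pruning-induced re-insertions.
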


In rest of this section is for proving \Cref{thm:simple:fully_dynamic_weighted}.
The key tool for proving this is the dynamic expander decomposition from \Cref{thm:dynamicExpanderDecomposition}.
Then, we use \Cref{thm:dynamicExpanderDecomposition} to  prove \Cref{thm:simple:fully_dynamic_weighted} in \Cref{sec:amortized:reduction}.

\subsection{Dynamic Expander Decomposition}
\label{sec:amortized:dynamic_decomposition}

It is known that, given any graph, we can partition the edge set of the graph into parts such that each part induces an expander. 
The theorem below shows that we can quickly maintain this partitioning of edges in a dynamic graph:

\begin{theorem}[Dynamic Expander Decomposition]
\label{thm:dynamicExpanderDecomposition}
For any $\phi = O(1/\log^4 m)$ there exists a dynamic algorithm against an adaptive adversary, 
that preprocesses an unweighted graph $G$ 
in $O(\phi^{-1} m \log^5 n)$ time. 
The algorithm can then maintain a decomposition of $G$ 
into $\phi$-expanders $G_1,...G_t$, 
supporting both edge deletions and insertions in $O(\phi^{-2} \log^6 n)$ amortized time.
The graphs $(G_i)_{1 \le i \le t}$ are edge disjoint
and we have $\sum_{i=1}^t |V(G_i)| = O(n \log^2 n)$.

After each update, the output consists of a list of changes to the decomposition.
The changes consist of (i) edge deletions to some graphs $G_i$, 
(ii) removing some graphs $G_i$ from the decomposition, 
and (iii) new graphs $G_i$ are added to the decomposition.
\end{theorem}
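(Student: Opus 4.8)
The plan is to build the dynamic expander decomposition by layering the static expander decomposition (\Cref{lem:decomposition}) on top of a level-based bucketing structure, using expander pruning to handle deletions. First I would recall what primitives are available: the static routine of \Cref{lem:decomposition} that, given an $m'$-edge graph and parameter $\phi$, partitions the vertices so that each induced piece $G\{V_i\}$ is a $\phi$-expander and only $O(\phi m' \log^3 m')$ edges cross; and an expander pruning subroutine (stated later in the paper, but derivable from \cite{SaranurakW19}) that maintains, for an initially-$\phi$-expander undergoing deletions, a growing pruned set $P$ of $\tilde O(1)$ edges per deletion such that the non-pruned part stays a $\Omega(\phi)$-expander.

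\textbf{Decomposing edges, not vertices.} The first step is to turn the vertex partition of \Cref{lem:decomposition} into an \emph{edge} partition into expanders. Apply the static decomposition to $G$; keep the induced expanders $G\{V_1\},\dots,G\{V_k\}$ as pieces, then recurse on the graph formed by the $O(\phi m\log^3 m)$ crossing edges. Since the edge count shrinks by a constant factor each level (for $\phi = O(1/\log^4 m)$ we have $\phi\log^3 m \le 1/\log m < 1/2$), the recursion has $O(\log n)$ depth, the total running time telescopes to $O(\phi^{-1} m \log^5 n)$, and $\sum_i |V(G_i)| = O(n\log^2 n)$ because each level contributes $O(n)$ vertices counted with multiplicity and there are $O(\log n)$ levels. (Here I would use the stronger conductance guarantee $\Phi_{G\{V_i\}}\ge\phi$ from the self-looped version, which is what makes pruning applicable.)

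\textbf{Making it dynamic.} For insertions, maintain a hierarchy of edge-sets $G^{(1)}, G^{(2)}, \dots$ where $G^{(i)}$ holds at most $2^i$ edges and carries its own static expander decomposition. Route each inserted edge into $G^{(1)}$; when some $G^{(i)}$ overflows $2^i$ edges, flush all its edges into $G^{(i+1)}$; whenever edges enter a $G^{(i)}$ that does not overflow, recompute its decomposition from scratch. Since $G^{(i)}$ is rebuilt only once per $\Theta(2^{i-1})$ insertions at cost $\tilde O(2^i/\phi)$, the amortized cost per insertion is $\tilde O(\phi^{-1})$ summed over the $O(\log n)$ levels, giving $O(\phi^{-2}\log^6 n)$ after accounting for the decomposition-recursion overhead inside each level. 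For deletions: when edge $e$ is deleted, locate the expander $G_j$ containing it, feed the deletion to the expander-pruning data structure for $G_j$, obtain the $\tilde O(1)$ newly-pruned edges, remove them from $G_j$ (these are the ``edge deletions to $G_i$'' in the output), and re-insert the pruned edges into $G^{(1)}$ as fresh insertions; if pruning ever shrinks $G_j$ below a constant fraction, retire $G_j$ entirely and re-insert its remaining edges. Each edge is pruned-and-reinserted only $O(\log n)$ times before being consumed (a potential/charging argument), so the amortized deletion cost also stays $\tilde O(\phi^{-2})$. The output-change types (i)--(iii) correspond exactly to pruning-deletions, retirements, and new static decompositions.

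\textbf{The main obstacle} is the charging argument bounding total re-insertion work: a deleted edge can trigger pruning that re-inserts other edges, which may later themselves be pruned and re-inserted, and one must show this cascade does not blow up the amortized bound. The key insight is that each re-inserted edge enters at level $1$ and can only be promoted upward $O(\log n)$ times, and that expander pruning guarantees the total pruned volume over the life of a piece is only an $\tilde O(1)$ fraction of its initial volume — so the number of times any original edge is touched is $O(\log^2 n)$ overall. Making this precise, and correctly interleaving the insertion-hierarchy potential with the pruning potential (so that edges bouncing between deletion-pruning and re-insertion are charged consistently), is the technically delicate part; everything else is bookkeeping on the static decomposition's time and size guarantees. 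I would also need to verify the adaptive-adversary claim, which follows because the static decomposition of \cite{SaranurakW19} and expander pruning are themselves adaptive (pruning is deterministic given its input), so no output of the structure leaks randomness that the adversary could exploit.
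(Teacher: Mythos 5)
Your proposal follows essentially the same route as the paper's proof: recursively applying the static decomposition of \cite{SaranurakW19} to the crossing edges to get an edge partition into expanders, a power-of-two level hierarchy with overflow flushing and rebuilds for insertions, and amortized expander pruning with pruned edges re-injected at level $1$ for deletions. The only cosmetic difference is your per-edge ``touched $O(\log^2 n)$ times'' charging claim, which is neither needed nor how the paper accounts for the cascade --- the paper simply charges the $O(\phi^{-1})$ amortized pruned volume per adversarial deletion against the $O(\phi^{-1}\log^6 n)$ amortized insertion cost, giving the stated $O(\phi^{-2}\log^6 n)$ bound.
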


We first formally show how to obtain the static version of the algorithm in \Cref{thm:dynamicExpanderDecomposition}. 
This is obtained by simply repeatedly applying the algorithm from \Cref{lem:decomposition} by \cite{SaranurakW19}:

\begin{lemma}\label{lem:recursiveDecomposition}
There is a randomised algorithm that, given a graph $G=(V,E)$ with $n$ vertices, $m$ edges and parameter $\phi = o\left(1/(\log m)^3\right)$, the algorithm finishes in $O\left( (1/\phi)\cdot m\log^5 n \right)$ time and constructs a sequence of  graphs $G_1,\ldots, G_t$ such that the following holds: 
\begin{itemize}
\item $\sum_{i=1}^t |V_i|$ = $O(n \log m)$;
\item $\bigcup_{i=1}^t E[G_i] = E[G]$;
\item $E[G_i] \cap E[G_j] = \emptyset$ for all $i,j$;
\item each $G_i$ is a $\phi$-expander. 
\end{itemize}
\end{lemma}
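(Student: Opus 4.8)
The plan is to iterate the static expander decomposition of \Cref{lem:decomposition} on the graph formed by the inter-cluster edges. Concretely: apply \Cref{lem:decomposition} to $G$ with parameter $\phi$ to obtain a vertex partition $V = V_1 \cup \dots \cup V_k$; append the induced subgraphs $G[V_1],\dots,G[V_k]$ to the output list; let $G'$ be the subgraph of $G$ consisting of the remaining (inter-cluster) edges; and recurse on $G'$, with the empty graph as base case. Unrolling this recursion produces the sequence $G_1,\dots,G_t$.

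First I would argue termination and bound the recursion depth, which is the only place the hypothesis $\phi = o(1/(\log m)^3)$ is genuinely needed. By \Cref{lem:decomposition} the number of edges of $G'$ equals $\sum_i \delta(V_i) = O(\phi\, m \log^3 m)$, so for a small enough hidden constant we get $|E(G')| \le m/2$; since the same inequality applies at every level (the edge count only decreases, so $\log^3(\cdot)$ does not grow), the edge count at least halves per level and the recursion has depth at most $\log_2 m = O(\log n)$ before the graph becomes empty.

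Next the structural properties, all of which are immediate once the shrinkage is in hand. Edge-disjointness: within one level the $V_i$ are disjoint so the $G[V_i]$ are pairwise edge-disjoint, and edges emitted at a level never appear in the recursively processed graph $G'$, giving disjointness across levels. Covering: at each level the union of the $G[V_i]$ together with $E(G')$ is exactly the edge set being processed, so $\bigcup_{i=1}^t E(G_i) = E(G)$. Expansion: each $G[V_i]$ is a $\phi$-expander directly by \Cref{lem:decomposition} (one may instead retain $G\{V_i\}$, also a $\phi$-expander, if a volume-preserving version is wanted later). Vertex count: at each of the $O(\log m)$ levels the clusters partition a subset of $V$ and hence contribute at most $n$ to $\sum_i |V_i|$, so $\sum_{i=1}^t |V_i| = O(n\log m)$.

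Finally the running time. At the level handling $m_j \le m/2^j$ edges, \Cref{lem:decomposition} costs $O((m/2^j)\log^4 m/\phi)$, plus $O(m/2^j)$ to extract the inter-cluster edges and recurse; summing this geometric series over the $O(\log m)$ levels gives $O((1/\phi)\, m\log^4 m)$. To guarantee high probability of success in $n$ — rather than only in the possibly small edge count of a deep-level subgraph — I would run $O(\log n)$ independent trials of each invocation and union bound over the $O(\log n)$ levels, costing one further $O(\log n)$ factor and yielding the claimed $O((1/\phi)\, m\log^5 n)$ bound. I expect this last probability-amplification step to be the main (and only) delicate point; everything else is bookkeeping.
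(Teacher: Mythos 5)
Your proposal follows the paper's proof essentially verbatim: recursively apply \Cref{lem:decomposition} to the graph of inter-cluster edges, use $\phi = o(1/\log^3 m)$ to argue the edge count drops by a constant factor per level so the depth is $O(\log m)$, and read off the covering, disjointness, expansion, vertex-count, and running-time bounds level by level. The only difference is cosmetic — the paper simply charges $O((1/\phi)m\log^4 m)$ per level over $O(\log m)$ levels rather than exploiting geometric decay plus your repetition-based amplification (which the paper does not perform, and which would anyway require a way to certify success of each invocation), but the claimed $O((1/\phi)m\log^5 n)$ bound is reached either way.
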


\begin{proof}
Let $G^{(1)}= \left( V^{(1)}, E^{(1)} \right)$ be the initial input graph $G$ with $m$ edges, and we construct the graph sequence $G^{(2)}, \ldots, G^{(s)}$ inductively as follows: Given any $G^{(\ell)}$ for $\ell\geq 1$, we apply \Cref{lem:decomposition} to decompose $G^{(\ell)}$ into $V_{\ell, 1},\ldots, V_{\ell, k_{\ell}}$, which satisfies
\[
\sum_{j=1}^{k_{\ell}} \delta\left({V_{\ell, j}}\right)=  O\left(\phi\cdot |E(G^{(\ell)})|\cdot \log^3 |E(G^{(\ell)})| \right),
\]
i.e., the total weight of the edges that do not belong to $G[V_{\ell, 1}],\ldots, G[V_{\ell, k_{\ell}}]$ is $O\left(\phi m \log^3 m \right)$. Let $E^{(\ell+1)}$ be the set of such edges, and $G^{(\ell+1)}$ be the graph induced by the edge set $E^{(\ell+1)}$. To bound the total number of iterations $s$ needed for this process, notice that, as long as $\phi = o\left(1/(\log |E(G^{(\ell)})|)^3\right)$, the total weight of the edges inside each $G^{(\ell)}$ satisfies
\[
\sum_{j=1}^{k}\delta_{G^{(\ell)}}(V_{\ell,j}) =O\left(\phi\cdot|E(G^{(\ell)})| \log^3 |E(G^{(\ell)})| \right) = o\left(|E(G^{(\ell)})|\right),
\]
i.e., comparing with $|E(G^{(\ell)})|$ the total weight of the edges in $G^{(\ell+1)}$ decreases by at least a constant factor, which implies that after $s=O(\log m)$ iterations the resulting graph will be an empty one.

Now let the set of subgraphs $G[V_{\ell,j_{\ell}}]$ for  $1\leq \ell \leq s, 1\leq j_{\ell}\leq k_{\ell}$ be the final output graphs of the algorithms, and let $t$ be the number of such subgraphs. Since the total number of vertices of the subgraphs in each iteration is $O(n)$ and there are at most $O(\log m)$ iterations, we have that 
\[
\sum_{i=1}^t |V_i| = O(n\log m),
\]
and the runtime of the algorithm is at most $O\left( (1/\phi)\cdot m\log^5 n \right)$. The remaining statements of the lemma  follow by \Cref{lem:decomposition}. 
\end{proof}

We want to turn this static algorithm into a dynamic one.
For that we require the expander pruning technique from \cite{SaranurakW19},
which can be interpreted as a decremental expander algorithm,
i.e. it maintains an expander when the graph receives edge deletions.

\begin{theorem}
[Amortized Expander Pruning, \cite{SaranurakW19}]\label{thm:amortized prune}There
is a deterministic algorithm, that, given an access to the adjacency
list of a connected multi-graph $G=(V,E)$ with $m$ edges, a sequence of
$\sigma=(e_{1},\dots,e_{k})$ of $k\le\phi m/10$ online edge deletions
and a parameter $\phi>0$, maintains a vertex set $P\subseteq V$
with the following properties. Let $G_{i}$ be the graph $G$ after
the edges $e_{1},\dots,e_{i}$ have been deleted from it; let $P_{0}=\emptyset$
be the set $P$ at the beginning of the algorithm, and for all $0<i\le k$,
let $P_{i}$ be the set $P$ after the deletion of $e_{1},\dots,e_{i}$.
Then, the following statements hold for all $1\le i\le k$:
\begin{itemize}
\item $P_{i-1}\subseteq P_{i}$,
\item $\vol(P_{i})\le8i/\phi$,
\item $|E(P_{i},V \setminus P_{i})|\le4i$, and
\item If $G$ is a $\phi$-expander, then $G_{i}[V-P_{i}]$ a $\phi/6$-expander.
\end{itemize}
The total running time for the algorithm is $O((k\log m)/\phi^{2})$.
\end{theorem}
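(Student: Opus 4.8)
The plan is to prove this with the \emph{local flow} (bounded-height push--relabel / ``unit flow'') framework, following Saranurak--Wang, which itself refines Orecchia--Zhu and Henzinger--Rao--Wang. The guiding idea is to model every deleted edge as an injection of ``excess mass'' near its endpoints that the graph must absorb into sinks distributed according to degree; as long as this mass can be routed with congestion $O(1/\phi)$ along short paths, the surviving graph is \emph{certifiably} a $\phi/6$-expander, and whenever the routing gets stuck, the trapped mass exposes a low-conductance cut, which we prune.

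Concretely, I would maintain throughout the deletion sequence a pruned set $P$ (initially $\emptyset$), a source function $\Delta$ supported on $A:=V\setminus P$, and a feasible flow $f$ routing $\Delta$ into sinks of capacity $\deg_G(v)$ along the edges of the current graph restricted to $A$, with edge capacity $2/\phi$ and height bound $h=\Theta(\log m/\phi)$. When $e_i=(u,v)$ is deleted, for each endpoint of $e_i$ still lying in $A$ I raise its source by $2/\phi$ and re-run the unit-flow routine to route just the new excess. If all of it is absorbed, $P$ is unchanged; otherwise the routine returns a level cut $S\subseteq A$, which I add to $P$. Moving $S$ out of $A$ turns the $G$-edges between $S$ and $A\setminus S$ into fresh ``boundary deletions'', each injecting another $2/\phi$ units on the $A\setminus S$ side, so I repeat this inner loop until no cut is returned; this whole process defines $P_i$ and $A_i$.

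The verification splits into three parts. Monotonicity $P_{i-1}\subseteq P_i$ is immediate since vertices are never returned to $A$. For the volume/boundary bounds I would use the accounting that (i) the total source injected after $i$ deletions equals $\tfrac{2}{\phi}$ times (the number of adversarial boundary deletions, which is $\le i$) plus (the total new boundary created by pruning), and (ii) each returned level cut $S$ satisfies $\vol_G(S)=O(\text{source it absorbs})$ and $|E_G(S,A\setminus S)|=O(\phi\,\vol_G(S))$, so the pruning-generated boundary is a fraction strictly below $1$ of the source and the inner loop contracts; summing yields $\vol(P_i)\le 8i/\phi$ and $|E(P_i,V\setminus P_i)|\le 4i$, and combined with $i\le\phi m/10$ this forces $\vol(P_i)<\vol(G)$, so $A_i$ never empties. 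For expansion, a feasible flow routing $\Delta_i$ into degree-capacitated sinks on $G_i[A_i]$ with edge capacity $2/\phi$ certifies that every $T\subseteq A_i$ with $\vol(T)\le\vol(A_i)/2$ has $|E_{G_i[A_i]}(T,A_i\setminus T)|\ge(\phi/6)\vol(T)$: otherwise $T$ can absorb at most $\vol(T)$ (its sinks) plus $\tfrac{2}{\phi}\cdot\tfrac{\phi}{6}\vol(T)=\tfrac13\vol(T)$ (across its cut), i.e.\ under $\tfrac43\vol(T)$ units, whereas the source injected into $T$ over the run is $\ge\tfrac{2}{\phi}$ times the number of $T$-boundary edges ever removed, which --- since $G$ was a $\phi$-expander so $T$ began with boundary $\ge\phi\vol(T)$ and it fell below $\tfrac{\phi}{6}\vol(T)$ --- is at least $\tfrac{2}{\phi}\cdot\tfrac{5\phi}{6}\vol(T)=\tfrac53\vol(T)$, a contradiction. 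Finally, the unit-flow routine runs in time $O(Nh)$ where $N=O(k/\phi)$ is the total injected source and $h=O(\log m/\phi)$, giving $O((k\log m)/\phi^2)$ overall, which subsumes extracting the level cuts.

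The step I expect to be the real obstacle is the \emph{amortized, online} bookkeeping. One must argue that the flow $f$ is maintained \emph{monotonically} --- only ever pushing additional flow, never cancelling or rerouting --- both across all $k$ deletions and across every iteration of the nested pruning loop, so that the push--relabel work telescopes to $O((k\log m)/\phi^2)$ rather than being paid from scratch per deletion. Keeping the level-cut volume bound coupled tightly enough to the hypothesis $k\le\phi m/10$ so that $A_i$ stays large, and hence the degree-sink certificate stays valid at every intermediate step, is the delicate part; the rest is routing-algorithm machinery that can be invoked essentially off the shelf.
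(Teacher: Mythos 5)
This theorem is imported verbatim from \cite{SaranurakW19}, and the paper gives no proof of its own; your plan --- injecting $2/\phi$ units of source per deleted edge endpoint, routing it by a bounded-height unit-flow (local push--relabel) routine with edge capacity $2/\phi$ and degree-capacitated sinks, pruning the returned level cuts and re-injecting source for the exposed boundary edges, with exactly the $\tfrac{5}{3}$-versus-$\tfrac{4}{3}$ absorption count certifying $\phi/6$-expansion and the monotone-flow amortization giving $O((k\log m)/\phi^{2})$ --- is precisely the local-flow argument of that cited source. So the proposal is correct and takes essentially the same approach as the original proof being quoted.
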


By combining \Cref{lem:recursiveDecomposition} and \Cref{thm:amortized prune} we obtain the following
fully dynamic expander decomposition algorithm.
That is, a dynamic algorithm that maintains an edge disjoint decomposition $\bigcup_i G_i = G$,
where each $G_i$ as an expander.

\subsubsection*{Proof of \Cref{thm:dynamicExpanderDecomposition}.}

We start by describing the algorithm, then we analyze complexity and correctness.

\paragraph{Algorithm}

Assume graph $G$ is decomposed into $G_1, G_2, ...$,
where each $G_i$ contains at most $2^i$ edges.
During the preprocessing this is obtained by just setting $G_i := G$ for $i = \log m$
and setting all other $G_j$ to be empty graphs.

The idea is to maintain for each $G_i$ an expander decomposition $\bigcup_j G_{i,j} = G_i$,
where each $G_j$ is a $\phi$-expander. 
Then $\bigcup_{i,j} G_{i,j} = G$ is an expander decomposition of $G$.
So for now fix some $G_i$ and we describe how to maintain the decomposition of $G_i$.

If some set of edges $I$ is to be inserted into $G_i$, then we consider two cases:
(i) If $|E(G_i) \cup I| > 2^i$, 
then we set $G_i$ to be an empty graph 
and insert $E(G_i) \cup I$ into $G_{i+1}$.
(ii) If on the other hand $|E(G_i) \cup I| \le 2^i$,
then we perform the expander decomposition of \Cref{lem:recursiveDecomposition}
on $G_i$ to obtain a decomposition of $G_i$ into $\phi$-expanders, 
so $\bigcup_j G_{i,j} = G_i$ where $G_{i,j}$ is a $\phi$-expander for all $j$. 
We also initialize the expander pruning algorithm of \Cref{thm:amortized prune}
on $G_{i,j}$ for all $j$.

If some edge is to be deleted from $G_i$,
then the pruning algorithm of \Cref{thm:amortized prune} is notified.
We then prune (i.e. delete) some nodes from $G_i$ according to the dynamic pruning algorithm.
All edges that are removed from $G_i$ because of these node deletions
are reinserted into $G_1$ and handled like edge insertions (described in the previous paragraph).

Finally, consider edge deletions and insertions to $G$.
When an edge is inserted into $G$, then we insert the edge into $G_1$.
If an edge is deleted from $G$, then it is removed from the $G_i$ that contained the edge.

\paragraph{Complexity}

We first analyze edge insertions.
When inserting some set of edges $I$ into some $G_i$, and $|E(G_i) \cup I| \le 2^i$,
then the cost is
$$
O(
\underbrace{
	\phi^{-1} 2^i \log^5 n
}_{\text{\Cref{lem:recursiveDecomposition}}}
+
\underbrace{
	2^i
}_{\text{\Cref{thm:amortized prune}}}
).
$$
Note that when inserting an edge into $G$ (or $G_1$),
then it takes $2^{i-1}$ edge insertions until an edge is inserted into $G_i$.
Thus the amortized cost per edge insertion is $O(\phi^{-2} \log^6 n)$.

When deleting an edge from some $G_i$, the amortized cost is 
$O(\phi^{-1} \log n)$ by the pruning algorithm \Cref{thm:amortized prune}.
On average we remove $O(\phi^{-1})$ edges from $G_i$ because of pruned nodes.
These nodes are inserted into $G_1$ again,
and by the previous complexity bound for edge insertions,
an edge deletion thus has amortized cost $O(\phi^{-2} \log^6 n)$.

\paragraph{Correctness}

Each $G_{i,j}$ is a $\phi$-expander by the guarantee of the expander decomposition~(\Cref{lem:recursiveDecomposition})
and the expander pruning~(\Cref{thm:amortized prune}).
As all pruned edges are reinserted into $G_1$,
no edge gets lost and we always have $\bigcup_{i,j} G_{i,j} = G$.
Further note that for every $i$ we have $\sum_j |V(G_{i,j})| = O(|V(G_i)| \log^2 n)$ by \Cref{lem:recursiveDecomposition}.
Thus we have $\sum_{i,j} |V(G_{i,j})| = O(|V(G)| \log^3 n)$.

\subsection{Reduction via Dynamic Expander Decomposition}
\label{sec:amortized:reduction}

To prove \Cref{thm:simple:fully_dynamic_weighted}, we first prove the version of the theorem in which the input graph is unweighted. 
Such a result can be proven by a direct application of the dynamic expander decomposition \Cref{thm:dynamicExpanderDecomposition}. 
Remember that in that setting it suffices to study the case in which the algorithm only receives edge deletions.  

\begin{lemma}\label{lem:simple:fully_dynamic}
Assume $\H$ satisfies 
\eqref{con:union},
and there exists a decremental algorithm 
$\A$ for $\H(\epsilon)$
on $\phi$-expanders for any $\phi = O(1/\log^4 n)$. 
Then, there exists a fully dynamic algorithm $\B$ 
for $\H(\epsilon)$ on general unweighted graphs.

For the time complexity, let $P(m)\ge m$ be the pre-processing time of $\A$,
$S(n)\ge n$ an upper bound on the size of the output graph throughout all updates,
and $T(n)$ the amortized update time of $\A$. 
Then, the pre-processing time of $\B$ is $O(P(m))$, 
with amortized update time 
$$
O\left(
	\phi^{-3} \log^7 n \cdot T(n) 
	+ \frac{P(m)}{m \phi^{3}} \log^{7} n
\right).
$$
The size of the output graph is bounded by $O(S(n \log^3 n))$.

If $\A$ is a query-algorithm, then so is $\B$.
\end{lemma}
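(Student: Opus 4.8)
The plan is to run the dynamic expander decomposition of \Cref{thm:dynamicExpanderDecomposition} on the input graph $G$ and then run a separate instance of the decremental algorithm $\A$ on each expander it produces. Recall that the decomposition maintains an edge-disjoint partition $\bigcup_i G_i = G$ in which every $G_i$ is a $\phi$-expander, and reports every update to $G$ as a sequence of elementary changes of three kinds: (i) edge deletions inside some $G_i$, (ii) deletion of a whole expander $G_i$ from the decomposition, and (iii) creation of a brand-new expander $G_i$. The crucial point is that from the viewpoint of any fixed $G_i$, its life consists of one initialization followed by edge deletions only, and it is a $\phi$-expander at every point in time --- exactly the unweighted, decremental, always-a-$\phi$-expander setting in which $\A$ is promised to work (each $G_i$ has at most $n=|V(G)|$ vertices, so we invoke $\A$ with parameter $n$). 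Hence for each $G_i$ we keep an instance $\A_i$ maintaining some $H_i\in\H(G_i,\epsilon)$: on a type-(iii) change we initialize a fresh $\A_i$, on a type-(i) change we forward the deleted edges to $\A_i$, and on a type-(ii) change we discard $\A_i$. We output $H:=\bigcup_i H_i$; since the $G_i$ are edge-disjoint and cover $G$, the union property \eqref{con:union} with all scaling factors equal to $1$ gives $H\in\H\left(\bigcup_i G_i,\epsilon\right)=\H(G,\epsilon)$, so $\B$ is correct. If $\A$ is a query-algorithm, $\B$ answers a query by querying each $\A_i$ and returning the union, so $\B$ is again a query-algorithm.

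For preprocessing, $\B$ runs the decomposition's preprocessing and then initializes $\A$ on each initial expander; since those $G_i$ are edge-disjoint with $\sum_i|E(G_i)|=m$ and (as usual we take $P,S,T$ nondecreasing with $P(m)/m$, $S(n)/n$ nondecreasing) we have $\sum_i P(|E(G_i)|)\le P(m)$, the preprocessing time is $O(P(m))$. For the size bound, \Cref{thm:dynamicExpanderDecomposition} guarantees $\sum_i|V(G_i)|=O(n\log^3 n)$ at all times, and $|E(H_i)|\le S(|V(G_i)|)$, so $\sum_i S(|V(G_i)|)\le S\left(\sum_i|V(G_i)|\right)=O(S(n\log^3 n))$.

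The crux is the amortized update time, which reduces to accounting for the work the decomposition forces on the $\A_i$'s. An adversarial update triggers, amortized, $O(\phi^{-2}\log^6 n)$ elementary decomposition changes (the amortized bound in \Cref{thm:dynamicExpanderDecomposition}), and I would charge $\B$'s cost against these. Each reported edge deletion inside some $G_i$ costs $O(T(n))$ in $\A_i$; bounding the amortized count of such deletions by the change-count (with a further $\phi^{-1}\log n$ of slack) yields the $O\!\left(\phi^{-3}\log^7 n\cdot T(n)\right)$ term. For the $\A$-reinitialization term, I would bound the amortized number of edges landing in newly created expanders per update: at decomposition level $i$ (holding $\le 2^i$ edges) a re-decomposition re-processes $\le 2^i$ edges and fires once per $\Theta(2^{i-1})$ updates --- either by overflow or by an edge insertion, including the $O(\phi^{-1})$-per-deletion re-insertions of pruned edges --- so over the $O(\log m)$ levels this is $O(\phi^{-1}\polylog n)$ edges per update; since initializing $\A$ on an expander with $m'$ edges costs $P(m')\le (m'/m)P(m)$, the amortized reinitialization cost is $O\!\left(\frac{P(m)}{m\phi^{3}}\log^7 n\right)$. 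The main obstacle is precisely this bookkeeping --- making the $\phi$- and $\log n$-exponents match the claimed bound while tracking the interaction between pruning-induced re-insertions and level overflows --- but it uses no new idea beyond the guarantees already established for the dynamic expander decomposition and the fact that each $G_i$ is handed only deletions.
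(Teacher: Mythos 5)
Your proposal is correct and follows essentially the same route as the paper: maintain the dynamic expander decomposition of \Cref{thm:dynamicExpanderDecomposition}, run a fresh decremental instance of $\A$ on each expander (which only ever sees an initialization followed by deletions), output the union and invoke property \eqref{con:union}, and charge the $\A$-update and $\A$-reinitialization costs against the decomposition's amortized change bound exactly as the paper does (your extra $\phi^{-1}\log n$ of slack only makes the bound match the weaker statement). The minor implicit assumptions you make (monotonicity of $P(x)/x$, bounding each $G_i$ by $n$ vertices, and peeking at the level/overflow structure to bound how often a new expander with $k$ edges can appear) are the same ones the paper's own proof relies on.
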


\begin{proof}
We start by describing the algorithm, 
then analyze the complexity and prove the correctness.

\paragraph{Algorithm}

We run the dynamic expander decomposition of \Cref{thm:dynamicExpanderDecomposition}.
The output is a decomposition $G = \bigcup_i G_i$
where each $G_i$ is a $\phi$-expander.
Next, we run algorithm $\A$ on $G_i$ to obtain $H_i \in \H(G_i, \epsilon)$ and
we maintain $H := \bigcup_i H_i$.
This graph $H$ is the maintained output of our dynamic algorithm.

All these graphs are maintained dynamically,
so when an edge is deleted from $G$,
we pass that edge deletion to the dynamic expander decomposition algorithm \Cref{thm:dynamicExpanderDecomposition}.
The decomposition might change in the following way:
(i) Some $G_i$ is removed from the decomposition. 
In that case also remove $H_i$ from $H$.
(ii) Some new $G_i$ is added to the decomposition. 
Initialize $\A$ on $G_i$ to construct new $H_i$, and add $H_i$ to the output graph $H$.
(iii) Some edge is removed from $G_i$. 
We pass the edge deletion to algorithm $\A$ to update $H_i$. 
The graph $H$ is updated accordingly to the changes performed to $H_i$.

\paragraph{Complexity}

The preprocessing costs $
\sum_i P(|E(G'_i)|) \le P(\sum_i |E(G'_i)|) \le P(O(m))$,
where we used $P(K) \ge K$ to move the sum inside and $\sum_i |E(G'_i)| = O(m)$ 
as guaranteed by \Cref{thm:dynamicExpanderDecomposition}.

During each update,
at most $O(\phi^{-2} \log^6 n)$ edges change in the decomposition per update (amortized).
So updating the instances of $\A$ costs $O((\phi^{-1} \log^6 n) \cdot T(n))$ amortized time per update,
where we used the fact that each $G_i$  has  at most $n$ nodes.
The amortized cost of initializing $\A$ on some new $G_i$ is bounded by
$$
O\left(\frac{P(|E(G'_i)|)}{|E(G'_i)| \phi^{2} \log^{-6} n}\right)
\le
O\left(\frac{P(m)}{m \phi^{2}} \log^{6} n\right),
$$
where we used  the fact that a $G_i$ with $k$ edges is created after at least $\phi^{2} \log^{-6} k$ updates,
and that $P(k) \ge k$.
The total amortized update time is thus
$$
O\left(\phi^{-2} \log^6 n \cdot T(n) + \frac{P(m)}{m \phi^{2}} \log^{6} n\right).
$$
The size of the output graph is bounded by
$$
\sum_i S(|V(G'_i)|) 
\le
S(\sum_i |V(G'_i)|) 
\le
O(S(n \log^2 n)),
$$
because $\sum_i |V(G'_i)| = \sum_i O(|V(G_i)|) = O(n \log^2 n)$ 
by \Cref{thm:dynamicExpanderDecomposition}.
\paragraph{Correctness}

Each $G_i$ is always a $\phi$-expander by \Cref{thm:dynamicExpanderDecomposition}, 
and therefore we have $H_i \in \H(G_i, \epsilon)$ by guarantee of $\A$.
The algorithm always maintains $H = \bigcup_i H_i$ where each $H_i \in \H(G_i, \epsilon)$,
so $H \in \H(G, \epsilon)$ by $\bigcup_i G_i = G$ and property \eqref{con:union}.

\end{proof}

The extension to weighted graphs now follows directly 
from splitting the edges based on their edge weights 
into powers of $e^\epsilon$.

\begin{proof}[Proof of \Cref{thm:simple:fully_dynamic_weighted}] 
Split the edges into groups $[e^{k\cdot \epsilon/2}, e^{(k+1)\epsilon/2})$ 
to obtain subgraphs $G_0,G_1,...G_t$ with $t = O((1+\epsilon^{-1}) \log W)$.
Then run \Cref{lem:simple:fully_dynamic} on each subgraph to maintain $H_i \in \H(G_i, \epsilon/2)$
and the union of these graphs then satisfies $H := \bigcup_{i=0}^t e^{(i+1) \epsilon/2} \cdot H_i \in \H(G, \epsilon)$
by property \eqref{con:identity} and \eqref{con:union}.
\end{proof}

\section{Amortized Reduction to Almost Uniform Degree Expanders}
\label{sec:uniform_degree_reduction}

In this section, we strengthen the reduction from \Cref{sec:simple_amortized_reduction} so that we can assume that the expander has almost unifrom degree. 
This is already strong enough to give our main results with amortized update time. Specifically, it will be used to prove \Cref{thm:main spanner,thm:main cut,thm:main spectral adaptive} in \Cref{part:algorithm}.

Let $\H$ be some graph problem that satisfies the perturbation property \eqref{con:identity},
union property \eqref{con:union}, and contraction property \eqref{con:contraction}.
Formally we define such a decremental algorithm as follows:

\begin{definition}
\label{def:amortized:decremental_algorithm}
Let $\H$ be a graph problem.
We call an algorithm $\mathcal{A}$ a ``decremental algorithm on almost-uniform-degree $\phi$-expanders for $\H(\epsilon)$'', 
if for any \emph{unweighted} $n$-node graph $G$
\begin{itemize}
\item $\mathcal{A}$ maintains $H \in \mathcal{H}(G,\epsilon)$ under edge deletions to $G$. 
In each update the algorithm is given a set $D \subset E(G)$ of edges that are to be removed from $G$.
\item The algorithm is allowed to assume that before and after every update the graph $G$ 
every node has degree between $\Omega(\phi \Delta)$ and $O(\phi^{-1}\Delta)$, 
where $\Delta$ is the minimum degree of $G$ during the initialization.
\item The algorithm is allowed to assume that before and after every update the graph $G$ is a $\phi$-expander.
\end{itemize}
When algorithm $\A$ does not directly maintain $H$, 
but it instead has a \textsc{Query}-operation which returns the graph $H$,
then we call $\A$ a \emph{query}-algorithm.
\end{definition}

The main result of this section is as follows:
\begin{theorem}[Amortized Blackbox Reduction with Uniform Degree Promise]
\label{thm:amortized:fully_dynamic_weighted}
Assume $\H$ satisfies 
\eqref{con:identity}, \eqref{con:union} and \eqref{con:contraction}, 
and there exists a decremental algorithm $\A$ for $\H(\epsilon)$
on almost-uniform-degree $\phi$-expanders for any $\phi = O(1/\log^4 n)$. 
Then, there exists a fully dynamic algorithm $\B$ 
for $\H(\epsilon)$ on general \emph{weighted} graphs
whose ratio between the largest and the smallest edge weight is bounded by $W$.

For the time complexity, let $P(m)\ge m$ be the pre-processing time of $\A$, 
$S(n)\ge n$ an upper bound on the size of the output graph throughout all updates, 
and $T(n)$ the amortized update time of $\A$ per deleted edge.
Then, the pre-processing time of $\B$ is $O(P(m))$,
with amortized update time $$
O\left(
	\phi^{-3} \log^7 n \cdot T(n) 
	+ \frac{P(m)}{m \phi^{3}} \log^{7} n
\right).
$$
The size of the output graph is bounded by $O(S(n \log^3 n) (1+\epsilon^{-1}) \log W)$.

If $\A$ is a query-algorithm with query complexity $Q(n)$, 
then so is $\B$ with query time $O( Q(n \log^3 n) \log W)$.
\end{theorem}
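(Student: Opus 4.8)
The plan is to follow exactly the two-layer structure used for the weaker reduction of \Cref{thm:simple:fully_dynamic_weighted} in \Cref{sec:simple_amortized_reduction}, inserting the $\Delta$-reduction (\Cref{def:delta-reduction}) as an intermediate transformation so that the decremental algorithm $\A$ always runs on an almost-uniform-degree expander. First I would prove the unweighted version (the analogue of \Cref{lem:simple:fully_dynamic}), and then lift to weighted graphs by the same geometric bucketing of edge weights into powers of $e^{\epsilon/2}$, using \eqref{con:identity} and \eqref{con:union} — this last step is verbatim the proof of \Cref{thm:simple:fully_dynamic_weighted} from \Cref{lem:simple:fully_dynamic} and costs nothing new, contributing only the $(1+\epsilon^{-1})\log W$ factor to the size and the $\log W$ factor to query time.

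For the unweighted version, I would run the dynamic expander decomposition of \Cref{thm:dynamicExpanderDecomposition} to maintain $G=\bigcup_i G_i$ with each $G_i$ a $\phi$-expander and $\sum_i|V(G_i)|=O(n\log^2 n)$. The new ingredient is that, for each $G_i$, instead of feeding $G_i$ directly to $\A$, we maintain its $\Delta_i$-reduction $G_i'$ (where $\Delta_i$ is the minimum degree of $G_i$ at the moment $G_i$ is created) and run $\A$ on $G_i'$ to get $H_i'\in\H(G_i',\epsilon)$. By \Cref{prop:delta-reduction prop}, $G_i'$ is a $\Theta(\phi)$-expander with every degree in $[\Omega(\phi\Delta_i),O(\phi^{-1}\Delta_i)]$ — so $\A$'s promises are met — and $|V(G_i')|\le 2|V(G_i)|+|E(G_i)|/\Delta_i=O(|V(G_i)|)$, using that a $\phi$-expander has $|E(G_i)|=O(|V(G_i)|\Delta_i/\phi)$-ish; more carefully, one must check $\sum_i |V(G_i')| = \tilde O(n)$, which needs that the decomposition's average degree is close to its min degree, and here I would either invoke the stronger $G\{V_i\}$ guarantee of \Cref{lem:decomposition} or note that $|E(G_i)|\le \vol(V_i)$ is controlled so that $|E(G_i)|/\Delta_i=O(|V(G_i)|\polylog n)$. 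Then I would define $H_i$ to be $H_i'$ with each super-node $X_v$ contracted back to $v$; by the contraction property \eqref{con:contraction}, $H_i\in\H(G_i,\epsilon)$, and the output is $H:=\bigcup_i H_i\in\H(G,\epsilon)$ by \eqref{con:union}. When the adversary deletes an edge of $G$, the decomposition reports deletions to some $G_i$ (each causing $O(1)$ edge deletions in $G_i'$, since an edge of $G_i$ maps to one edge of $G_i'$), removals of whole $G_i$'s (drop $H_i$), and creations of new $G_j$'s (build $G_j'$ and initialize $\A$ on it); each case is translated mechanically into operations on $\A$ and an $O(1)$-per-edge update to $H$.

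The complexity and correctness bookkeeping is then identical to the proof of \Cref{lem:simple:fully_dynamic}: the amortized number of decomposition changes per update is $O(\phi^{-2}\log^6 n)$, each triggering an $\A$-update of cost $T(n)$ on an instance with $O(n)$ nodes, and the amortized cost of (re)initializing $\A$ on a freshly created $G_j'$ with $k$ edges is $O(P(k)/(k\phi^2\log^{-6}n))\le O((P(m)/m)\phi^{-2}\log^6 n)$ because such a $G_j'$ is created only after $\Omega(\phi^2\log^{-6}k)$ updates and $P(k)\ge k$; summing gives the stated amortized update time. The size bound is $\sum_i S(|V(G_i')|)\le S(\sum_i|V(G_i')|)=O(S(n\log^3 n))$, and the query-algorithm variant reassembles $H$ by querying each $\A_i$ and contracting, in time $O(Q(n\log^3 n))$ per group times $O(\log W)$ groups. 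If $\A$ is adaptive or deterministic, every transformation used (expander decomposition is adaptive by \Cref{thm:dynamicExpanderDecomposition}, $\Delta$-reduction and contraction are deterministic) preserves that, so $\B$ inherits it. The main obstacle I anticipate is the degree-uniformity accounting for $\sum_i|V(G_i')|=\tilde O(n)$: the plain statement of \Cref{thm:dynamicExpanderDecomposition} controls $\sum_i|V(G_i)|$ but not directly the ratio of average to minimum degree inside each $G_i$, so I would need to either strengthen the decomposition output to report, for each $G_i$, a $\Delta_i$ with $|E(G_i)|=O(|V(G_i)|\Delta_i\polylog n)$ (which follows from the $G\{V_i\}$-expansion guarantee of \Cref{lem:decomposition}, since $\phi$-expansion of $G\{V_i\}$ forces $\vol(V_i)\le \deg_{\min}(V_i)|V_i|/\phi$), or absorb any extra factor into the already-present $\polylog n$ terms — this is the one place where care is genuinely required rather than mechanical.
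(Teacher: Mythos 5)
Your overall architecture (dynamic expander decomposition, feed a degree-uniformized version of each piece to $\A$, contract back via \eqref{con:contraction}, take unions via \eqref{con:union}, then bucket weights into powers of $e^{\epsilon/2}$) matches the paper's, but there are two genuine gaps in the middle layer, and the one place you flag as ``needing care'' is exactly where your proposed fix fails. First, the bound $\sum_i|V(G_i')|=\tilde O(n)$: you argue that the $G\{V_i\}$-expansion guarantee of \Cref{lem:decomposition} forces $\vol(V_i)\le \deg_{\min}(V_i)\,|V_i|/\phi$, but that is not what the guarantee gives. Expansion of $G\{V_i\}$ applied to singletons only yields $\deg_{G[V_i]}(v)\ge\phi\deg_G(v)$ for each $v$, i.e.\ it relates each node's degree inside the piece to its own degree in $G$; it says nothing about the ratio of average degree to minimum degree \emph{within} a piece. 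A piece produced by \Cref{thm:dynamicExpanderDecomposition} can contain one node of constant degree alongside nodes of degree $\Theta(n)$, so $|E(G_i)|/\Delta_i$ (and hence $|V(G_i')|$ after the $\Delta_i$-reduction) can be $\Omega(m)$, not $O(|V(G_i)|\polylog n)$ — this is a polynomial blow-up, not something absorbable into polylog factors, and it wrecks both the $O(S(n\log^3 n))$ size bound and the re-initialization accounting. This is precisely why the paper builds a different static decomposition (\Cref{alg:minDegreeExpanderDecomposition}, \Cref{lem:degreeConstrainedExpanderDecomposition}, \Cref{cor:averageDegreeExpanderDecomposition}): it peels off low-degree vertices in buckets $\Delta/2^k$ \emph{before} decomposing, so each piece has $|E(G^k_{s,i})|\le 2\Delta/2^{k-1}|V(G^k_{s,i})|$ while $\min\deg\ge\phi\Delta/2^k$, and only then applies the $\Delta$-reduction.

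Second, you never maintain the almost-uniform-degree promise under deletions. \Cref{def:amortized:decremental_algorithm} requires that \emph{throughout all updates} every degree of the graph given to $\A$ stays in $[\Omega(\phi\Delta),O(\phi^{-1}\Delta)]$ with $\Delta$ the minimum degree at initialization; but plain expander pruning (\Cref{thm:amortized prune}, used inside \Cref{thm:dynamicExpanderDecomposition}) gives no lower bound on degrees in the unpruned part — expansion alone never lower-bounds a vertex degree — so after enough adversarial deletions a surviving vertex of $G_i'$ can have its degree fall far below $\phi\Delta_i$, voiding $\A$'s guarantee. The paper closes this with two extra mechanisms you would need to add: the degree-aware pruning of \Cref{lem:uniformDegreeNodePruning} (prune vertices whose degree drops below $\Delta/3$ and use expansion to show the rest keeps degree $\ge\phi\Delta/18$), and the rule in \Cref{thm:dynamicExpanderDecompositionUniformDegree} that a piece is destroyed and its edges reinserted once it has lost a $\phi$-fraction of its edges. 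These are also where the stated update time $O(\phi^{-3}\log^7 n\cdot T(n)+\frac{P(m)}{m\phi^3}\log^7 n)$ comes from; your bookkeeping simply reproduces the $\phi^{-2}\log^6 n$ figures of \Cref{lem:simple:fully_dynamic}, which the correct construction does not achieve.
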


The outline for how to prove \Cref{thm:amortized:fully_dynamic_weighted} is similar to how we prove \Cref{thm:simple:fully_dynamic_weighted} in \Cref{sec:simple_amortized_reduction}. 
That is, we need a dynamic expander decomposition similar to \Cref{thm:dynamicExpanderDecomposition}, but now with an additional guarantee that each expander should have almost uniform degree. 
With this additional constraint, we cannot immediately apply  the known tools as in the proof of  \Cref{thm:dynamicExpanderDecomposition} anymore. 
We proceed as follows. First, in \Cref{sec:uniform:decomposition} we show a static algorithm for partitioning edges of a graph into expanders with the uniform degree constraint. Second, in \Cref{sec:pruning_uniform} we extend the expander pruning algorithm \cite{SaranurakW19} to incorporate the uniform degree constraint as well. 
Then, we combine both tools and obtain the dynamic expander decomposition with uniform degree constraint in \Cref{sec:uniform:dynamic_decomposition}.
Given this, it is straightforward to prove \Cref{thm:amortized:fully_dynamic_weighted} in the same way that we did before for \Cref{thm:simple:fully_dynamic_weighted}.

\subsection{Static Expander Decomposition with Uniform Degree Constraint}
\label{sec:uniform:decomposition}

In this section, we prove the following:

\begin{theorem}[Static Uniform Degree Expander Decomposition]
\label{thm:uniformDegreeExpanderDecomposition}
For any $\phi = O(1/\log^4 m)$ we can construct for any graph $G$, 
in $O(\phi^{-1} m \log^6 m)$ time a decomposition $\bigcup_i G_i = G$,
where each $G_i$ is a $\phi$-expander and $\sum_i |V(G_i)| \le O(n \log^2 m)$.
We further obtain a second set of $\phi$-expanders $G'_1,...,G'_t$, 
and sets $X_{v,i} \subset V(G_i)$ for $v \in V(G_i)$
with the following properties: 
\begin{itemize}
\item For all $i$, we obtain $G_i$ from $G'_i$ when contracting each $X_{v,i}$ for $v \in V(G_i)$.
\item For all $i$ and $u \in V(G'_i)$ we have $\deg_{G'_i}(u) = \Theta(\deg_{X_v}(u))$.
\item $|V(G'_i)| = O(|V(G_i)|)$, $\vol(X_{v,i}) = \Theta(\deg_{G_i}(v))$ and $\sum_i |V(G_i)| = O(n \log^2 m)$.
\item $\max\deg(G_i) = O(\phi^{-1} \min\deg(G_i))$
\end{itemize}

\end{theorem}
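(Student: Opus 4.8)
The plan is to reduce to the already‑established recursive decomposition of \Cref{lem:recursiveDecomposition} by prepending a degree‑bucketing step and a vertex‑splitting gadget built from the $\Delta$‑reduction (\Cref{def:delta-reduction,prop:delta-reduction prop}) and the explicit expanders of \Cref{prop:explicit expander}. First I would round every degree down to a power of $2$ and partition $V$ into classes $D_j=\{v:\deg_G(v)\in[2^j,2^{j+1})\}$, $j=0,\dots,L$, with $L=O(\log m)$. Orient each edge towards its higher‑degree endpoint and say that edge is \emph{owned by level $j$}, where $j$ is the class of its lower‑degree endpoint; this partitions $E(G)$ into subgraphs $\hat G_0,\dots,\hat G_L$, and in $\hat G_j$ every vertex of $D_j$ (a ``native'') has degree at most $2^{j+1}=O(2^j)$, while a vertex of $D_{>j}$ (a ``guest'') may have arbitrarily large degree in $\hat G_j$.

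The second step makes each $\hat G_j$ almost $2^j$‑uniform‑degree. For each guest $v$ carrying $k>10\cdot 2^j$ level‑$j$ edges, apply the $\Delta$‑reduction with $\Delta=2^j$ locally to $v$: replace $v$ by an explicit expander $X_{v,j}=H_{\lceil k/2^j\rceil,\,2^j}$ on $\Theta(k/2^j)$ nodes, spreading the $k$ edges so each gadget node carries $\Theta(2^j)$ of them. Call the result $\hat G'_j$; by \Cref{prop:delta-reduction prop} the guest‑gadget nodes then have degree $\Theta(2^j)$, conductance changes by only a constant factor, and $|V(\hat G'_j)|=O(|V(\hat G_j)|)$. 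A volume argument controls the global size: a guest at level $j$ creates $O(k/2^j)$ pieces, and $\sum_v k/2^j$ over guests at level $j$ is $O(\vol_G(D_j)/2^j)=O(|D_j|)$; summing over $j$ and adding the $\sum_j|D_{>j}|=O(n\log m)$ unsplit guests and the $\sum_j|D_j|=n$ natives gives $\sum_j|V(\hat G'_j)|=O(n\log m)$. The delicate point is the \emph{under‑full} natives — vertices of $D_j$ with far fewer than $2^j$ level‑$j$ edges. I would handle these either by padding them up to degree $\Theta(2^j)$ through an analogous gadget, or, in the style of \Cref{lem:recursiveDecomposition}, by pushing their offending edges into a ``leftover'' graph whose total weight drops by a constant factor and recursing; iterating $O(\log m)$ times absorbs them at the cost of another $O(\log m)$ factor in the size bound.

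Now each $\hat G'_j$ has all degrees in $[\Omega(2^j),O(2^j)]$, so I run \Cref{lem:recursiveDecomposition} on $\hat G'_j$ with parameter $\Theta(\phi)$ to get $\hat G'_j=\bigcup_k \hat G'_{j,k}$ with each $\hat G'_{j,k}$ a $\phi$‑expander satisfying the strong guarantee $\Phi_{\hat G'_j\{V(\hat G'_{j,k})\}}\ge\phi$. Taking $S=\{u\}$ there gives $\deg_{\hat G'_{j,k}}(u)\ge\phi\,\deg_{\hat G'_j}(u)$ for every $u$ in a component (treating the at most one dominant‑volume vertex separately), hence $\max\deg(\hat G'_{j,k})=O(\phi^{-1}\min\deg(\hat G'_{j,k}))$. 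I then let $G'_i$ be a re‑indexing of the $\hat G'_{j,k}$, define $X_{v,i}$ as the gadget nodes of $v$ lying in that component (and $X_{v,i}=\{v\}$ for natives and unsplit guests), and let $G_i$ be obtained from $G'_i$ by contracting each $X_{v,i}$. Then $\bigcup_i G_i=G$ because split‑then‑contract is the identity on edges and the $\hat G_j$ partition $E(G)$; the relations $|V(G'_i)|=O(|V(G_i)|)$, $\vol(X_{v,i})=\Theta(\deg_{G_i}(v))$, $\deg_{G'_i}(u)=\Theta(\deg_{X_v}(u))$, $\Phi_{G'_i}\ge\phi$ and $\Phi_{G_i}=\Theta(\Phi_{G'_i})$ follow from \Cref{prop:delta-reduction prop} applied componentwise; and $\sum_i|V(G_i)|\le\sum_j O(|V(\hat G'_j)|\log^2 m)=O(n\log^3 m)$ by \Cref{lem:recursiveDecomposition}. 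The runtime is $O(\log m)$ bucket/split passes plus one call to \Cref{lem:recursiveDecomposition} per level, i.e.\ $O(\phi^{-1}m\log^6 m)$. The main obstacle is precisely the uniform‑degree guarantee for the output: over‑full vertices are handled cleanly by the $\Delta$‑reduction gadget (with the volume argument keeping the vertex count $\tilde O(n)$), but under‑full natives can plant low‑degree vertices inside an otherwise high‑degree expander and wreck the $O(\phi^{-1})$ spread, and reconciling their treatment with both the $\tilde O(n)$ size bound and the edge‑cover requirement $\bigcup_i G_i=G$ is where the real work — and the extra $O(\log m)$‑round recursion — comes from.
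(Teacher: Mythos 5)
The central gap is the one you flag yourself: the under-full natives. Bucketing by \emph{original} degree gives no lower bound on a native's degree inside $\hat G_j$, so after decomposing $\hat G'_j$ a single piece can contain a vertex of degree $1$ sitting next to vertices of degree $\Theta(2^j)$, which destroys the required $\max\deg = O(\phi^{-1}\min\deg)$ spread, and neither of your proposed fixes works as stated: ``padding'' a native up to degree $\Theta(2^j)$ inserts edges that are not in $G$, which is incompatible with the requirement that each $G_i$ is a subgraph of $G$ and $\bigcup_i G_i = G$; and the ``push offending edges into a leftover graph'' fix comes with no charging or termination argument --- you give no reason the leftover weight drops by a constant factor, nor which level the pushed edges should belong to (their other endpoints are higher-class guests, so they have no natural home in a lower bucket). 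Making this work is precisely the content of \Cref{alg:minDegreeExpanderDecomposition} and \Cref{lem:degreeConstrainedExpanderDecomposition}: the paper interleaves peeling of vertices whose \emph{current} degree is below an absolute threshold $\Delta/2^k$ with repeated calls to \Cref{lem:decomposition}, charging each peeled edge to its peeled vertex so every piece has average degree $O(\Delta/2^k)$, while the singleton-cut argument is applied to the graph as it exists at the moment of that decomposition call --- note that \Cref{lem:recursiveDecomposition} only guarantees $\Phi_{G^{(\ell)}\{V_i\}}\ge\phi$ relative to the leftover graph of the current recursion level, not relative to your $\hat G'_j$, so your claimed bound $\deg_{G'_i}(u)\ge\phi\,\deg_{\hat G'_j}(u)$ is not justified either, and for under-full natives it would be useless even if true.

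A second, independent problem is the ordering: splitting guests \emph{before} decomposing lets the expander decomposition cut through a gadget $X_{v,j}$ (a piece can even lie entirely inside one explicit expander), and then the per-piece properties you claim ``follow from \Cref{prop:delta-reduction prop} componentwise'' do not: within a piece, $\vol_{G'_i}(X_{v,i})$ can be dominated by gadget-internal edges while $v$ retains few (or no) real edges of $G$ there, so $\vol(X_{v,i})=\Theta(\deg_{G_i}(v))$ and $|V(G'_i)|=O(|V(G_i)|)$ can fail badly (only the direction $\Phi_{G_i}\ge\Phi_{G'_i}$ survives, by contraction). The paper's order avoids this entirely: first decompose $G$ into pieces $G_i\subseteq G$ with $\min\deg(G_i)\ge\phi|E(G_i)|/(4|V(G_i)|)$ (\Cref{cor:averageDegreeExpanderDecomposition}), then apply the $\Delta$-reduction \emph{inside each piece} with $\Delta=\phi^{-1}\min\deg(G_i)$; that choice simultaneously yields the degree spread, $|V(G'_i)|=O(|V(G_i)|)$ and the volume relation, and gadgets are never cut. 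Finally, even granting your fixes, your accounting gives $\sum_i|V(G_i)|=O(n\log^3 m)$, a logarithmic factor weaker than the claimed $O(n\log^2 m)$.
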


To prove \Cref{thm:uniformDegreeExpanderDecomposition}, we proceed as follows.
First, let us   prove 
a more involved variant of \Cref{lem:recursiveDecomposition}. Specifically,   \Cref{alg:minDegreeExpanderDecomposition} 
and \Cref{lem:degreeConstrainedExpanderDecomposition} show that an input graph $G$ can be decomposed into expanders $G_1,\ldots, G_t$, each of which has its average degree close to the minimum degree.
Based on this, we can split these high degree nodes into many smaller degree ones,
such   that the minimum and maximum degrees are close to each other. 
This will finally lead to   \Cref{thm:uniformDegreeExpanderDecomposition}.

\begin{algorithm}[th]
\caption{\label{alg:minDegreeExpanderDecomposition}
Let $G=(V,E)$ be an $n$-node graph and $\phi = o(1/\log^3 n)$ be some parameter.
Further define $\Delta$ to be the maximum degree of $G$, $k = 0$ and $G^0 := G$.}
We perform the following procedure:

\begin{enumerate}
\item Terminate if $G^k$ is an empty graph.
Otherwise set $s \leftarrow 0$ and let $G^{k+1}$ be an empty graph on the node set $V$.

\item Repeatedly pick a node $v$ with $\deg_{G^k}(v) < \Delta/2^k$
and remove it with its adjacent edges from the graph $G^k$.
The removed edges are added to $G^{k+1}$.
\label{step:degree}

\item The graph $G^k$ now has  only nodes of degree at least $\Delta/2^k$ left.
(Or $G^k$ is an empty graph, in which case we set $k \leftarrow k+1$ and go back to step 1.)

\item We apply the expander decomposition of \Cref{lem:decomposition} to $G^k_s := G^k$ and
obtain a partition of $V$ into $V_{1},...,V_{t}$ for some $t \ge 1$.
Let $G^k_{s,i} := G^k_s[V_{i}]$ for $i=1,...,t$, set $s \leftarrow s + 1$,
and remove the edges of $G^k_s[V_{1}],...,G^k_s[V_{t}]$ from $G^k$.
\label{step:decomposition}

\item Go back to step \ref{step:degree} \emph{without} increasing $k$.
\end{enumerate}

\end{algorithm}

\begin{lemma}\label{lem:degreeConstrainedExpanderDecomposition}
Let $(G^k_{s,i})_{0 \le k,s,i}$ be the subgraphs constructed in step
\ref{step:decomposition} of \Cref{alg:minDegreeExpanderDecomposition}.
Then $\dot\bigcup_{k,s,i} E[G^k_{s,i}] = E[G]$,
so $(G^k_{s,i})_{0 \le k,s,i}$ is a decomposition of $G$. Moreover,   \Cref{alg:minDegreeExpanderDecomposition} terminates after
$O(m \phi^{-1} \log^6 m)$ time and the largest value for $s$ is $O(\log m)$.
In addition, each  $G^k_{s,i}$ has the following properties:
\begin{itemize}
\item $G^k_{s,i}$ is a $\phi$-expander.
\item $|E[G^k_{s,i}]| \le 2 \Delta / 2^{k-1} |V[G^k_{s,i}]|$
\item $\min \deg G^k_{s,i} \ge \phi \Delta /2^k $
\item $\sum_{k,s,i} |V(G^k_{s,i})| \le O(n \log^2 m)$.
\end{itemize}

\end{lemma}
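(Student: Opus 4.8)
The plan is to verify the five conclusions in the order listed, deriving everything from the static expander decomposition \Cref{lem:decomposition} plus elementary bookkeeping of how edges migrate between the layers $G^{0},G^{1},\dots$. First I would establish that $\dot\bigcup_{k,s,i}E[G^k_{s,i}]=E[G]$. The key structural observation is that edges only ever move ``upward'': in the expander-decomposition step an edge is placed into exactly one output piece $G^k_{s,i}=G^k_s[V_i]$ and then immediately deleted from $G^k$, so output pieces are pairwise edge-disjoint; the only other fate of an edge is that one of its endpoints is removed in the vertex-pruning step, in which case the edge moves from $G^k$ to $G^{k+1}$. Hence it suffices to argue termination and that no edge is lost, which I address next.

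For termination and the bound on $s$: inside a fixed round $k$, each call to \Cref{lem:decomposition} on $G^k_s$ leaves behind only the inter-cluster edges, $\tfrac12\sum_i\delta(V_i)=O(\phi\,|E[G^k_s]|\log^3 m)$; since $\phi=o(1/\log^3 m)$ this is at most $|E[G^k_s]|/2$, and the pruning step only removes edges, so $|E[G^k_{s+1}]|\le|E[G^k_s]|/2$. Thus round $k$ ends after $O(\log m)$ iterations, which gives the bound on the largest value of $s$. Moreover, a vertex is removed in the pruning step only when its current degree is strictly below $\Delta/2^k$, so once $\Delta/2^k\le1$ only isolated vertices are pruned and no edge can migrate to $G^{k+1}$; hence $G^k$ is empty for all $k>\lceil\log\Delta\rceil$ and the algorithm halts, with every edge output exactly once. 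The running time follows by summing, over the $O(\log m)$ rounds and the $O(\log m)$ iterations per round, the cost $O(|E[G^k_s]|\log^4 m/\phi)$ of \Cref{lem:decomposition} (the pruning bookkeeping is cheaper and absorbed); the geometric decay of $|E[G^k_s]|$ in $s$ collapses the inner sum to $O(|E[G^k]|\log^4 m/\phi)$, and since $|E[G^k]|\le m$ for every $k$ and there are $O(\log m)$ nonempty layers, the total is $O(m\phi^{-1}\polylog m)$.

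Three of the per-piece properties are then essentially direct. Each $G^k_{s,i}=G^k_s[V_i]$ is a $\phi$-expander by \Cref{lem:decomposition} (discarding the trivial singleton pieces, which carry no edges), and $\sum_{k,s,i}|V(G^k_{s,i})|=O(n\log^2 m)$ because there are only $O(\log^2 m)$ invocations of \Cref{lem:decomposition}, each of which partitions a subset of $V$ into the $V_i$'s and thus contributes at most $n$. For the minimum-degree bound I would use that after the pruning step every vertex of $G^k_s$ has degree at least $\Delta/2^k$, together with the degree-preserving guarantee $\Phi_{G^k_s\{V_i\}}\ge\phi$ of \Cref{lem:decomposition}: applying the latter to the cut $\{v\}$ inside $V_i$ (with $|V_i|\ge2$, so the complementary side has volume at least $\Delta/2^k$) yields $\deg_{G^k_{s,i}}(v)\ge\phi\cdot\Delta/2^k$.

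The one genuinely delicate point, which I expect to be the main obstacle, is the average-degree bound $|E[G^k_{s,i}]|\le\tfrac{2\Delta}{2^{k-1}}|V[G^k_{s,i}]|$, since individual vertices of $G^k_{s,i}$ can have arbitrarily large degree (e.g.\ the center of a star survives to a late round). Here I would argue by charging. For $k\ge1$, every edge of $G^k$ entered $G^k$ at the instant one of its endpoints was pruned in round $k-1$, at which time that endpoint had degree $<\Delta/2^{k-1}$; charge each edge of $G^k$ to the endpoint pruned first. Each vertex is then charged fewer than $\Delta/2^{k-1}$ edges, and crucially the vertex an edge is charged to is one of its endpoints, hence lies in whatever piece $V_i$ contains that edge; restricting the charging to $G^k_{s,i}\subseteq G^k$ gives $|E[G^k_{s,i}]|<\tfrac{\Delta}{2^{k-1}}|V[G^k_{s,i}]|$. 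The base case $k=0$ is easier: after the pruning step every surviving vertex of $G^0_s$ has degree at least $\Delta$, but $\Delta$ is the maximum degree of $G$, so $G^0_s$ is $\Delta$-regular and any subgraph $G^0_{s,i}$ has $|E[G^0_{s,i}]|\le\tfrac{\Delta}{2}|V[G^0_{s,i}]|$. All the real content thus sits in the geometric-decay/termination bookkeeping and this charging argument; the remaining items are quotations of \Cref{lem:decomposition}.
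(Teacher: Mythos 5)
Your proposal is correct and follows essentially the same route as the paper's proof: edge-migration bookkeeping for the disjoint union and termination, the $O(\phi m\log^3 m)$ inter-cluster-edge guarantee of \Cref{lem:decomposition} together with $\phi = o(1/\log^3 m)$ to bound $s$ and the running time, singleton cuts in $G^k_s\{V_i\}$ with the post-pruning degree lower bound $\Delta/2^k$ for the minimum degree, and charging each edge of $G^k$ to the endpoint whose pruning in round $k-1$ moved it (at most $\Delta/2^{k-1}$ charges per vertex) for the average-degree bound. Your explicit treatment of the $k=0$ case and the geometric-decay refinement of the time bound are minor additions beyond the paper's argument, but the substance is identical.
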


\begin{proof}

Every edge we remove from $G^k$ in step \ref{step:degree} is added to $G^{k+1}$, and every edge we remove from $G^k$ in step \ref{step:decomposition} is part of some $G^k_{s,i}$.
For $k > \log \Delta$ no more edges are removed from $G^k$ in step \ref{step:degree},
hence every edge must be included in some $G^k_{s,i}$, so $\dot\bigcup_{k,s,i} E[G^k_{s,i}] = E[G]$.

We first analyse the runtime of the algorithm. 
We will prove that the time until incrementing $k$ 
is bounded by $O(m \phi^{-1} \log^5 m)$.
The variable $k$ will be incremented up to 
$O(\log \Delta) = O(\log m)$ times,
until no more nodes can be removed in step \ref{step:degree}.
Thus the total time is $O(m \phi^{-1} \log^6 m)$.

Removing the nodes in step \ref{step:degree} 
and moving the edges to $G^{k+1}$
takes at most $O(m)$ time.
The decomposition of \Cref{lem:decomposition} 
requires $O(m \phi^{-1} \log^4 m)$ time.
Note that only $O(  \phi m \log^3 m )$ edges 
are not deleted from $G^k$,
so for $\phi = o(1/\log^3 m)$,
we jump at most $O(\log m)$ times back to step \ref{step:degree},
until all edges in $G^k$ are removed.
This has two implications:
(i) we spend a total of $O(m \phi^{-1} \log^5 m)$ time 
between two increments of $k$.
(ii) Step \ref{step:decomposition} is executed 
at most $O(\log m)$ times, 
so the largest value for $s$ is bounded by $O(\log m)$.

It remains to prove the three statements of the lemma. 
Our first statement follows directly from  \Cref{lem:decomposition}. 
To prove the second statement,  
we consider the graph $G^k$ before we delete any of its edges in step \ref{step:degree}.
All edges of this graph were added in step \ref{step:degree},
when the value of the variable $k$ was $k-1$.
Because of this, every edge in $G^k$ can be assigned to a node,
such that every node has at most $\Delta/2^{k-1}$ assigned edges:
When some node $v$ is removed from $G^{k-1}$ and its adjacent edges
are added to $G^k$ in step \ref{step:degree},
then we assign these edges to node $v$.
Thus every node has at most $\Delta / 2^{k-1}$ assigned edges
and thus each $G^k_{s,i}$ can have at most $\Delta / 2^{k-1} |V(G^k_{s,i})|$ many edges.

For the third statement, 
let $V^k_{s,i} := V[G^k_{s,i}]$ and $u \in V^k_{s,i}$ for any $s,i$,
then $\delta_{G^k_s\{V^k_{s,i}\}}(\{v\})
= \deg_{G^k_{s,i}}(v)$,
because $G^k_{s,i}$ is obtained by decomposing $G^k_s$ via \Cref{lem:decomposition}.
Further,
$\vol_{G^k_s\{V^k_{s,i}\}}(\{v\})
= \deg_{G^k_s}(v)$.
Since $G^k_{s,i}$ is a $\phi$ expander we have
$\phi \le \delta_{G^k_s\{V^k_{s,i}\}}(\{v\}) / \vol_{G^k_s\{V^k_{s,i}\}}(\{v\})$,
and thus $\deg_{G^k_{s,i}}(v) \ge \phi \deg_{G^k_s}(v) \ge \phi \Delta / 2^k $.

Finally, for the last statement observe that 
for fixed $k,s$ the graphs $G^k_{s,1},G^k_{s,2},...$ are vertex disjoint,
so $\sum_i |V(G^k_{s,i})| \le n$.
As both $k$ and $s$ are bounded by $O(\log m)$,
we have $\sum_{k,s,i} |V(G^k_{s,i})| = O(n \log^2 m)$.
\end{proof}

\begin{corollary}\label{cor:averageDegreeExpanderDecomposition}
For any $\phi = O(1/\log^4 m)$ we can construct for any graph $G$
in $O(\phi^{-1} m \log^6 m)$ time a decomposition $\bigcup_i G_i = G$,
where each $G_i$ is a $\phi$-expander,
$\min\deg(G_i) \ge \phi |E(G_i)| / (4|V(G_i)|)$,
and $\sum_i |V(G_i)| = O(n \log^2 m)$.
\end{corollary}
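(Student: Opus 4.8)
The plan is to show that this corollary is essentially a repackaging of \Cref{lem:degreeConstrainedExpanderDecomposition}, with the only real work being to convert the min-degree bound "$\min\deg G^k_{s,i} \ge \phi\Delta/2^k$" into the claimed form "$\min\deg(G_i) \ge \phi |E(G_i)|/(4|V(G_i)|)$." First I would run \Cref{alg:minDegreeExpanderDecomposition} on $G$ with the given parameter $\phi = O(1/\log^4 m)$ (which certainly satisfies $\phi = o(1/\log^3 n)$ as required) and take the output family $(G^k_{s,i})_{k,s,i}$, relabeled as $G_1, G_2, \dots$. By \Cref{lem:degreeConstrainedExpanderDecomposition} this is a decomposition of $G$ into $\phi$-expanders with $\sum_i |V(G_i)| = O(n\log^2 m)$, and the running time is $O(\phi^{-1} m \log^6 m)$; so those three parts of the statement are immediate.

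The heart of the matter is combining the second and third bullets of \Cref{lem:degreeConstrainedExpanderDecomposition}. Fix one piece $G_i = G^k_{s,i}$. The second bullet gives $|E(G_i)| \le (2\Delta/2^{k-1})\,|V(G_i)| = (4\Delta/2^k)\,|V(G_i)|$, i.e. $\Delta/2^k \ge |E(G_i)|/(4|V(G_i)|)$. The third bullet gives $\min\deg(G_i) \ge \phi\Delta/2^k$. Chaining these two inequalities yields
\[
\min\deg(G_i) \;\ge\; \phi\cdot\frac{\Delta}{2^k} \;\ge\; \phi\cdot\frac{|E(G_i)|}{4|V(G_i)|},
\]
which is exactly the claimed bound. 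No step here is delicate — the only thing to be careful about is the bookkeeping with the factor of $2$ between $2^{k-1}$ and $2^k$, and making sure the indices $(k,s,i)$ are consistently identified with the single index $i$ in the corollary's statement.

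I would then remark that the condition $\phi = O(1/\log^4 m)$ in the corollary is slightly stronger than the $\phi = o(1/\log^3 n)$ hypothesis of \Cref{lem:degreeConstrainedExpanderDecomposition} and of \Cref{lem:decomposition}, so all invoked results apply without modification; the extra logarithmic slack is harmless. The main obstacle, such as it is, is purely one of presentation: recognizing that the corollary is not asking for anything new beyond \Cref{lem:degreeConstrainedExpanderDecomposition} and that the per-piece degree bound follows by a two-line inequality chain from the two bullets above. There is no genuine combinatorial or analytic difficulty to surmount here — the real content was already established in proving \Cref{lem:degreeConstrainedExpanderDecomposition} (in particular the edge-charging argument that each piece $G^k_{s,i}$ has $O(\Delta/2^{k-1})$ average degree).
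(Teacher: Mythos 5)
Your proposal is correct and matches the paper's proof, which simply states that the corollary is a direct implication of \Cref{lem:degreeConstrainedExpanderDecomposition} after re-indexing $(G^k_{s,i})$ as $(G_j)$; your two-line chaining of the bounds $|E(G_i)| \le (4\Delta/2^k)|V(G_i)|$ and $\min\deg(G_i)\ge \phi\Delta/2^k$ is exactly the implicit calculation behind that statement.
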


\begin{proof}
Direct implication of \Cref{lem:degreeConstrainedExpanderDecomposition}
when re-indexing the sequence $(G^k_{s,i})_{k,s,i,\ge 0}$ to $(G_j)_{j \ge 0}$.
\end{proof}

\subsubsection*{Proof of \Cref{thm:uniformDegreeExpanderDecomposition}.}

We start by describing the algorithm and then analyze complexity and correctness.
\paragraph{Algorithm}
Let $G$ be the input graph.
We run the expander decomposition of 
\Cref{cor:averageDegreeExpanderDecomposition}
to obtain the decomposition $\bigcup_i G_i = G$.

Next we want to create a graph $G'_i$ for each $G_i$,
so fix some $i$.
Then we run \Cref{prop:delta-reduction prop} for $\Delta = \phi^{-1} \min\deg(G_i)$
to obtain $G'_i$ and the node sets $X_{v,i}$.

\paragraph{Complexity}

Constructing the $G_i$ takes $O(\phi^{-1}m \log^6 m)$ time
by \Cref{cor:averageDegreeExpanderDecomposition}.
Constructing one $G'_i$ takes $O(|E(G_i)|)$ time by \Cref{prop:delta-reduction prop},
so constructing all $G'_i$ takes $O(m)$ total time.

\paragraph{Correctness}

By \Cref{cor:averageDegreeExpanderDecomposition}
each $G_i$ is a $\phi$-expander, $\bigcup_i G_i = G$,
and $\sum_i V(G_i) \le O(|V(G)| \log^2 m)$.
By \Cref{prop:delta-reduction prop}
the $G'_i$ are $\Theta(\phi)$-expanders.
We can make sure that all $G'_i$ are $\phi$-expanders,
by simply choosing decreasing $\phi$ by a constant factor 
when running the algorithm of \Cref{cor:averageDegreeExpanderDecomposition}
and \Cref{prop:delta-reduction prop}.

We have $V(G'_i) \le 2|V(G_i)| + |E(G_i)|/(\phi^{-1}\min\deg(G_i)) \le O(|V(G_i)|)$
by \Cref{prop:delta-reduction prop} and \Cref{cor:averageDegreeExpanderDecomposition},
and further 
$E(G'_i) = \frac{1}{2}\sum_{v \in V(G_i)} \vol_{G'_i}(X_{v,i}) = O(\sum_{v \in V(G_i)} \deg_{G_i}(v)) = O(|E(G_i)|)$
and each node in $G'_i$ has degree $O(\Delta) = O(\phi^{-1} \min\deg(G_i))$,
so $\max\deg(G'_i) \le O(\phi^{-1} \min\deg(G'_i))$.

The same claims are true for $G'_i = G_i$ where $G_i$ has less than $9$ nodes,
as the number of nodes and edges is bounded by $O(1)$.

\subsection{Expander Pruning with Uniform Degree Constraint}
\label{sec:pruning_uniform}

Here, we show how to extend \Cref{thm:amortized prune},
so it also guarantees that the minimum degree of the expander 
does not decrease too much.
The idea is to first prune out low degree nodes
and then perform the expander pruning to maintain an expander.
By expansion property we are able to show that the second pruning does not decrease the degree too much.

\begin{lemma}[Uniform Degree Expander Pruning]
\label{lem:uniformDegreeNodePruning}
Let $G = (V,E)$ be a $\phi$-expander with $m$ edges and minimum degree $\Delta$.
There is a deterministic algorithm with access to adjacency lists of $G$,
such that given an online sequence of edge deletions in $G$,
can maintain a \emph{pruned set} $P \subset V$
such that the following property holds.
Let $G_i,P_i$ be the graph and set $P$ after $i$ edge deletions
and $\Delta$ be the minimum degree of $G_0$, i.e. the initial graph before any deletions.
We have for all $i$
\begin{enumerate}
\item $P_0 = \emptyset$, $P_{i} \subset P_{i+1}$
\item $\vol(P_i) \le 30 i/\phi$
\item $G_i[V \setminus P_i]$ is a $\phi/6$ expander or an empty graph.
\item $\min\deg G_i[V \setminus P_i] \ge \phi\Delta/18 $.
\end{enumerate}
The total time for updating $P_0,...,P_k$ is $O(k \phi^{-2} \log m)$.
\end{lemma}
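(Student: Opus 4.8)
The strategy is to layer a low-degree peeling step on top of the amortized expander pruning of \Cref{thm:amortized prune}. We maintain $P = P^{\exp} \cup Q$, where $P^{\exp}$ is the set returned by running the algorithm of \Cref{thm:amortized prune} on an \emph{augmented} deletion sequence, and $Q$ is a set of ``peeled'' low-degree vertices; set the threshold $\tau := \phi\Delta/18$. On each adversarial edge deletion we first forward it to the pruning algorithm of \Cref{thm:amortized prune}, which may enlarge $P^{\exp}$. We then repeat the following until it no longer applies: if some vertex $v \notin P$ has degree less than $\tau$ in the current graph $G_i[V\setminus P]$, add $v$ to $Q$ and feed each of $v$'s remaining incident edges (there are fewer than $\tau$ of them) to the pruning algorithm as additional deletions. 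Since $|V|$ is finite and $P^{\exp},Q$ grow monotonically, this terminates, giving property~1; by the loop termination every remaining vertex has degree at least $\tau=\phi\Delta/18$, giving property~4; and property~3 is immediate from \Cref{thm:amortized prune}, since if $G'$ denotes $G_0$ with \emph{all} forwarded deletions applied, then $G_i[V\setminus P]$ coincides (up to isolated vertices) with $G'[V\setminus P^{\exp}]$, which \Cref{thm:amortized prune} certifies is a $\phi/6$-expander.

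The technical heart, and the main obstacle, is property~2: peeling $Q$ feeds back into $P^{\exp}$, which in turn can create new low-degree vertices, so one must show this feedback is contractive rather than explosive. The plan is a charging argument on the final sets $P^{\exp},Q$ (taken disjoint by assigning each vertex to whichever mechanism removed it first). Each $v\in Q$ was peeled when its degree dropped below $\tau$, so it lost more than $\deg_{G_0}(v)-\tau \ge \Delta-\tau$ incident edges, each of which is either an adversarial deletion at $v$, an edge from $v$ to $P^{\exp}$, or an edge from $v$ to a vertex of $Q$ peeled earlier. Summing over $v\in Q$, using $\sum_{v\in Q}(\text{adversarial deletions at }v)\le 2i$, $\sum_{v\in Q}|E_{G_0}(\{v\},P^{\exp})| = |E_{G_0}(Q,P^{\exp})| \le |E_{G_0}(P^{\exp},V\setminus P^{\exp})|$, and $\sum_{v\in Q}|E_{G_0}(\{v\},Q_{<v})| = |E_{G_0}(Q,Q)| \le \tfrac12\vol_{G_0}(Q)$, together with $|Q|\le \vol_{G_0}(Q)/\Delta$ and $\tau/\Delta=\phi/18$, gives
\[
\vol_{G_0}(Q) - |Q|\tau \;<\; 2i \;+\; |E_{G_0}(P^{\exp},V\setminus P^{\exp})| \;+\; \tfrac12\vol_{G_0}(Q).
\]
The number $k'$ of forwarded deletions is $k' = i + \sum_{v\in Q}(\text{degree of }v\text{ when it was peeled}) < i + |Q|\tau$, so \Cref{thm:amortized prune} yields $|E_{G_0}(P^{\exp},V\setminus P^{\exp})| \le 4k' < 4i+4|Q|\tau$ and $\vol_{G_0}(P^{\exp}) \le 8k'/\phi < 8(i+|Q|\tau)/\phi$. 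Substituting, bounding $|Q|\tau \le \tfrac{\phi}{18}\vol_{G_0}(Q) \le \tfrac{1}{18}\vol_{G_0}(Q)$, and solving the resulting linear inequality gives $\vol_{G_0}(Q)=O(i)$, hence $k'=O(i)$ and $\vol_{G_0}(P^{\exp})=O(i/\phi)$; a careful choice of the numeric constants pushes $\vol_{G_0}(P)=\vol_{G_0}(P^{\exp})+\vol_{G_0}(Q)\le 30i/\phi$, and volumes in $G_i$ are only smaller, so property~2 follows.

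For the running time, \Cref{thm:amortized prune} is invoked on a sequence of $k'=O(k)$ deletions, costing $O(k'\phi^{-2}\log m) = O(k\phi^{-2}\log m)$. Peeling is implemented by keeping current degrees in an array plus a list of below-threshold vertices; each peel of $v\in Q$ costs $O(\deg(v))=O(\tau)$ to update neighbors, and since $|Q|=O(\vol_{G_0}(Q)/\Delta)=O(i/\Delta)$ this totals $O(i\tau/\Delta)=O(i\phi)$, while maintaining degrees under the $i$ adversarial deletions costs $O(i)$; both are dominated, so the total update time is $O(k\phi^{-2}\log m)$ as claimed. The one subtlety to double-check is that the augmented sequence length $k'=O(k)$ still respects the precondition $k'\le \phi m/10$ of \Cref{thm:amortized prune} in the parameter regime where the lemma is applied, which holds after rescaling the constant in that precondition (or can be enforced by periodic restarting).
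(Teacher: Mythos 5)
Your proposal is correct in substance, but it takes a genuinely different route from the paper's proof. The paper peels a vertex as soon as its degree in the \emph{current graph} (not the induced subgraph $G_i[V\setminus P_i]$) drops below the fixed threshold $\Delta/3$, forwards the peeled edges to \Cref{thm:amortized prune}, and outputs the union of the peeled vertices and the pruned set; since each peel requires at least $\tfrac23\Delta$ prior deletions at that vertex and triggers at most $\Delta/3$ new ones, the induced deletions form a geometric series with ratio $1/2$, which immediately gives both the $O(i)$ bound on forwarded deletions and the $30i/\phi$ volume bound. The min-degree guarantee $\phi\Delta/18$ is then \emph{not} by construction: it is recovered from the expansion of the pruned graph by applying the $\phi/6$-conductance bound to the singleton cut $\{v\}$ (with the self-loop convention), since a surviving vertex has degree at least $\Delta/3$ in the peeled graph but may lose many edges to the pruned set. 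Your scheme inverts this: you peel at the threshold $\phi\Delta/18$ measured inside $G_i[V\setminus P]$, so item~4 is immediate, but in exchange you must control the feedback between peeling and pruning, which forces you to invoke the cut bound $|E(P_i,V\setminus P_i)|\le 4i$ of \Cref{thm:amortized prune} (a part of that theorem the paper's argument never needs) and run the linear charging inequality. That inequality does close — your accounting gives $\vol(Q)=O(i)$ and $\vol(P)\le 8i/\phi+O(i)$ — though with $\tau=\phi\Delta/18$ the literal constant $30i/\phi$ only comes out for $\phi$ bounded away from $1$ (roughly $\phi\lesssim 3/4$ under a careful count), so the "careful choice of constants" claim should be read with that caveat; this is on par with the looseness of the paper's own constant derivation and is immaterial in the regime $\phi=O(1/\log^4 n)$ where the lemma is used. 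Two smaller points: your time analysis should also charge the degree-bookkeeping when vertices enter $P^{\exp}$ (cost $O(\vol(P^{\exp}))=O(k/\phi)$, safely within the stated budget), and your explicit attention to the precondition $k'\le \phi m/10$ of \Cref{thm:amortized prune} is a point in your favor — the paper's proof silently ignores it.
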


\begin{proof}
The algorithm works as follows.
Let $G_i$ be the graph $G$ after the $i$th edge deletion.
After each such edge deletion,
we also repeatedly remove all nodes with degree less than $\Delta/3$
until all nodes have degree at least $\Delta/3$.
Let $G'_i$ be the resulting graph.
We then inform the pruning algorithm \Cref{thm:amortized prune} 
of all the deleted edges
and obtain a set of pruned nodes $P_i$.
Instead of $P_i$, the actual output of our algorithm will be $Q_i$, 
which is the union of $P_i$ and all nodes that we removed because of too small degrees.

For any node $v$ consider the cut when cutting $v$,
then by expansion guarantee
$$
\frac{\deg_{G'_i[V \setminus P_i]}(v)}
{\min\{\vol_{G'_i\{V \setminus P_i\}}(v), \vol_{G'_i\{V \setminus P_i\}}(V \setminus v)\}} 
\ge 
\phi/6.
$$
Here $G'_i\{V \setminus P_i\}$ is the graph $G'_i[V \setminus P_i]$, 
except that we add self-loops to each node 
such that $\deg_{G'_i}(v) = \deg_{G'_i\{V \setminus P_i\}}(v)$.
Thus we have
$\deg_{G'_i[V \setminus P_i]}(v) \ge \Delta \phi / 18$.

Note that a node is only removed after at least $\frac{2}{3}\Delta$ edge deletions occured.
And the removal of a node is equivalent to $\Delta/3$ edge deletions.
Thus we can bound 
$$
\vol(Q_i) 
\le 
10i/\phi + 10i/\phi \sum_{k \ge 0} \left(\frac{\Delta/3}{\frac{2}{3}\Delta}\right)^k
=
10i/\phi + 10i/\phi \sum_{k \ge 0} \left(\frac{1}{2}\right)^k
=
10i/\phi + 20i/\phi
=
30i/\phi
$$

Finally, consider the complexity of the algorithm.
Per real edge deletion to $G$, 
we have on average $2$ further edge deletions because of pruning low degree nodes.
Hence the complexity of the pruning algorithm increases only by a constant factor, 
which is hidden in the $O(\cdot)$ notation.
\end{proof}

\subsection{Dynamic Expander Decomposition with Uniform Degree Constraint}
\label{sec:uniform:dynamic_decomposition}

By combining the expander decomposition in \Cref{thm:uniformDegreeExpanderDecomposition} and the expander pruning algorithm in \Cref{lem:uniformDegreeNodePruning} with the uniform degree constraint, we prove the variant of the dynamic expander decomposition~(\Cref{thm:dynamicExpanderDecomposition}),
where we obtain expanders of near uniform degree.

\begin{theorem}[Dynamic Uniform Degree Expander Decomposition]
\label{thm:dynamicExpanderDecompositionUniformDegree}
For any $\phi = o(1/\log^3 m)$ there exists a dynamic algorithm against an adaptive adversary, 
that preprocesses an unweighted graph $G$ 
in $O(\phi^{-1} m \log^5 n)$ time. 
The algorithm maintains graphs 
$G_1,...,G_t$, 
$G'_1,...,G'_t$, 
and sets $X_{v,i} \subset V(G)$ for all $i\in[t],v \in V(G)$ with the following properties:
\begin{itemize}
\item $\bigcup_i G_i = G$ is a decomposition of $G$.
\item $G_i$ is obtained from graph $G'_i$, when contracting each $X_{v,i}$ for $v \in V(G_i)$. 
\item Each $G_i$ and $G'_i$ is a $\phi$-expander.
\item We have $\min\deg(G'_i) \le \max\deg(G'_i) \le O(\phi^{-2}) \min\deg(G'_i)$, 
and throughout all updates the minimum degree of each $G_i$ can decrease by at most a factor of $O(\phi)$.
\item $|V(G'_i)| \le O(V(G'_i))$ and $\sum_i |V(G_i)| = O(n \log^3 m)$.
\item $\sum_i |E(G'_i)| = O(m)$.
\end{itemize}
The algorithm supports both edge deletions and insertions in $O(\phi^{-3} \log^7 n)$ amortized time.

After each update, the output consists of a list of changes to the decomposition.
The changes consist of (i) edge deletions to some $G_i$ and $G'_i$, 
(ii) removing some graphs $G_i$ and $G'_i$ from the decomposition, 
and (iii) new $G_i$, $G'_i$ are added to the decomposition.
\end{theorem}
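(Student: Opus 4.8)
The plan is to reuse, essentially verbatim, the two-level bookkeeping from the proof of \Cref{thm:dynamicExpanderDecomposition}, but with its two engines upgraded: the static recursive decomposition \Cref{lem:recursiveDecomposition} is replaced by the uniform-degree static decomposition \Cref{thm:uniformDegreeExpanderDecomposition}, and the amortized pruning \Cref{thm:amortized prune} is replaced by the uniform-degree pruning \Cref{lem:uniformDegreeNodePruning}. Concretely, maintain layers $G^{(1)},G^{(2)},\dots$ with $G^{(\ell)}$ holding at most $2^\ell$ edges of $G$ (at preprocessing, $G^{(\lceil\log m\rceil)}:=G$ and the other layers are empty). For each layer $G^{(\ell)}$, store the output of \Cref{thm:uniformDegreeExpanderDecomposition} applied to $G^{(\ell)}$: expanders, their uniform-degree refinements, and the vertex bags realizing the refinement-to-expander contraction. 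Collecting all these triples over all layers and reindexing them into one sequence gives the $(G_i,G'_i,X_{v,i})_i$ the theorem asks for. On every refinement $G'_i$ we additionally initialize the uniform-degree pruning of \Cref{lem:uniformDegreeNodePruning}.

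For an edge insertion I would copy the doubling argument of \Cref{thm:dynamicExpanderDecomposition}: route the new edge into $G^{(1)}$; whenever a set $I$ is about to be inserted into $G^{(\ell)}$, if $|E(G^{(\ell)})\cup I|>2^\ell$ then empty $G^{(\ell)}$ and push all of its edges together with $I$ into $G^{(\ell+1)}$, otherwise rebuild $G^{(\ell)}$ by rerunning \Cref{thm:uniformDegreeExpanderDecomposition} and re-initializing the prunings. Since $G^{(\ell)}$ is rebuilt only once per $\Theta(2^\ell)$ insertions at cost $O(\phi^{-1}2^\ell\log^6 m)$, the amortized insertion cost is $O(\phi^{-1}\log^6 m)$. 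For an edge deletion, drop the edge from the unique $G_i$ and $G'_i$ containing it, feed it to the pruning of \Cref{lem:uniformDegreeNodePruning} running on that $G'_i$, and let $P$ be the (monotone) pruned vertex set returned. Then (i) replace each bag by $X_{v,i}\cap(V(G'_i)\setminus P)$, deleting $v$ from $G_i$ when its bag empties, and (ii) take the \emph{original} $G$-edges that had been incident to the pruned part and re-insert them into $G^{(1)}$; the auxiliary gadget edges internal to the bags are simply discarded, which is harmless because \Cref{lem:uniformDegreeNodePruning} already certifies that the induced graph $G'_i[V(G'_i)\setminus P]$ is a $\phi/6$-expander of near-uniform degree. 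As $\vol(P)\le 30i/\phi$, each deletion re-injects only $O(\phi^{-1})$ edges into $G^{(1)}$, so the deletion cost is dominated (up to an $O(\phi^{-1})$ blow-up) by the insertion cost those re-injections trigger; combined with the per-layer bounds of \Cref{thm:uniformDegreeExpanderDecomposition} summed over the $O(\log m)$ layers, this yields the stated $O(\phi^{-3}\log^7 n)$ amortized update time, the size bound $\sum_i|V(G_i)|=O(n\log^3 m)$, and $\sum_i|E(G'_i)|=O(m)$.

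For correctness: every $G_i$ and $G'_i$ is a $\phi$-expander at creation by \Cref{thm:uniformDegreeExpanderDecomposition}, and remains an expander under deletions by \Cref{lem:uniformDegreeNodePruning} (shrinking the absolute constant inside $\phi$ exactly as in the proof of \Cref{thm:dynamicExpanderDecomposition}; here the fact that contracting a $\phi/6$-expander by the bags again yields a $\Theta(\phi)$-expander is \Cref{prop:delta-reduction prop}, part 5). The contraction identity $G_i = G'_i / \{X_{v,i}\}$ is preserved because the only edges ever removed from $G'_i$ are either $G$-edges (which are removed from $G_i$ in the same step) or bag-internal gadget edges (which disappear under contraction anyway). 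The degree statement is precisely where the uniform-degree upgrade earns its keep: initially $\max\deg(G'_i)=O(\phi^{-1})\min\deg(G'_i)$ by \Cref{thm:uniformDegreeExpanderDecomposition} and \Cref{prop:delta-reduction prop}; under deletions the maximum degree only falls, while \Cref{lem:uniformDegreeNodePruning}, part 4, keeps $\min\deg(G'_i[V\setminus P])\ge \phi\Delta/18$, so the ratio never exceeds $O(\phi^{-2})$ and the minimum degree of each $G_i$ correspondingly loses at most an $O(\phi)$ factor over its lifetime. Finally, re-injected edges are processed identically to adversarial insertions, so no edge is lost and $\bigcup_i G_i = G$ at all times.

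The step I expect to be the main obstacle is the bookkeeping at the $\Delta$-reduction layer under pruning: one must verify that pruning small nodes out of a super-node $X_v$ keeps the gadget in a shape for which the preconditions of \Cref{lem:uniformDegreeNodePruning} (near-uniform degree, $\phi$-expansion) remain valid for the subsequent deletions, that the shrunken bags still reconstruct $G_i$ exactly upon contraction, and that precisely the $G$-edges---and no gadget edges---are recycled into $G^{(1)}$, so that both the invariant $\bigcup_i G_i=G$ and the $O(\phi^{-2})$ degree-ratio accounting survive the interleaving of adversarial deletions, pruning-induced re-insertions, and periodic layer rebuilds. A secondary subtlety is checking that the $O(\phi)$-factor loss in minimum degree and the $O(\phi^{-2})$ degree ratio are not \emph{compounded} across the repeated rebuilds of a layer---which is fine because each rebuild starts from a freshly decomposed graph to which \Cref{thm:uniformDegreeExpanderDecomposition} restores the clean $O(\phi^{-1})$ ratio.
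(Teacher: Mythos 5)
Your high-level plan (layered doubling structure, static uniform-degree decomposition from \Cref{thm:uniformDegreeExpanderDecomposition} per layer, uniform-degree pruning from \Cref{lem:uniformDegreeNodePruning} per expander, recycling pruned $G$-edges into the first layer) is exactly the paper's approach, but you have dropped the one ingredient that the uniform-degree version adds on top of \Cref{thm:dynamicExpanderDecomposition}: each expander $G_{i,j}$ is \emph{retired} as soon as it has lost a $\phi$-fraction of the edges it had at creation, i.e.\ once $\phi K$ edges are gone (where $K$ is its initial edge count) the whole pair $(G_{i,j},G'_{i,j})$ is removed from the decomposition and its remaining edges are reinserted into $G^{(1)}$. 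Without this retirement rule your argument has a genuine gap: you run the pruning of \Cref{lem:uniformDegreeNodePruning} for an unbounded number of deletions, but its guarantees are inherited from \Cref{thm:amortized prune}, which is only valid for $k\le\phi m/10$ deletions, so the expansion and $\min\deg\ge\phi\Delta/18$ bounds you lean on for the $O(\phi^{-2})$ degree ratio and the ``min degree drops by at most $O(\phi)$'' claim are no longer justified once an expander has absorbed too many deletions. The adversary can force exactly this regime by deleting only inside the large top layer created at preprocessing, which your layer-rebuild mechanism (triggered solely by insertions) never refreshes. The retirement step is also what sustains the structural invariants over time ($|E(G'_{i,j})|=O(|E(G_{i,j})|)$, $|V(G'_{i,j})|=O(|V(G_{i,j})|)$ via the initial min-degree bound of \Cref{cor:averageDegreeExpanderDecomposition}, and hence $\sum_i|E(G'_i)|=O(m)$), which otherwise degrade as the expanders shrink while their gadget graphs do not.

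Relatedly, your running-time accounting cannot produce the stated bound without this step: recycling $O(\phi^{-1})$ pruned edges per deletion at amortized insertion cost $O(\phi^{-1}\,\mathrm{polylog})$ only explains a $\phi^{-2}$ factor. In the paper the third $\phi^{-1}$ comes precisely from the retirements: reinserting the $O(K)$ surviving edges of a retired expander at $O(\phi^{-1}\log^{7}n)$ each, amortized over the $\Omega(\phi^{2}K)$ updates needed before retirement (since each update prunes at most $O(\phi^{-1})$ edges), gives the $O(\phi^{-3}\log^{7}n)$ amortized update time. So the fix is to add the retirement rule and redo both the invariant maintenance and the charging argument around it; the rest of your write-up then matches the paper's proof.
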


\begin{proof}
The algorithm is similar to \Cref{thm:dynamicExpanderDecomposition}.
The main difference is that we run the uniform degree expander decomposition of \Cref{thm:uniformDegreeExpanderDecomposition}
instead of \Cref{lem:recursiveDecomposition}.

\paragraph{Algorithm}

Assume graph $G$ is decomposed into $G_1, G_2, ...$,
where each $G_i$ contains at most $2^i$ edges.
During the preprocessing this is obtained by just setting $G_i := G$ for $i = \log m$
and setting all other $G_j$ to be empty graphs.

The idea is to maintain for each $G_i$ 
the uniform degree expander decomposition $\bigcup_j G_{i,j} = G_i$
and the graphs $G'_{i,j}$.
So for now fix some $G_i$ and we describe how to maintain the decomposition of $G_i$.

If some set of edges $I$ is to be inserted into $G_i$, then we consider two cases:
(i) If $|E(G_i) \cup I| > 2^i$, 
then we set $G_i$ and $G'_i$ to be empty graphs
and we set $X_{v,i} = \emptyset$ for all $v \in V$. 
Then we insert $E(G_i) \cup I$ into $G_{i+1}$.
(ii) If on the other hand $|E(G_i) \cup I| \le 2^i$,
then we perform the expander decomposition of \Cref{thm:uniformDegreeExpanderDecomposition}
on $G_i \cup I$ to obtain a decomposition $G_i = \bigcup_j G_{i,j}$, $\phi$-expanders $G'_{i,j}$
and the node sets $X_{v,i,j}$ for $v \in V(G_{i,j})$.
We also initialize the expander pruning algorithm of \Cref{lem:uniformDegreeNodePruning}
on $G'_{i,j}$ for all $j$.

If some edge $\{u,v\}$ is to be deleted from $G_i$,
then that edge also exists in some $G_{i,j}$
and there is an edge $\{u', v'\}$ in $G'_{i,j}$ with $u' \in X_{u,i}$ and $v' \in X_{v,i}$.
We remove these edges from $G_i$, $G_{i,j}$ and $G'_{i,j}$
and the pruning algorithm of \Cref{lem:uniformDegreeNodePruning} 
is notified of the edge deletion performed to $G'_{i,j}$.
We then prune (i.e. delete) some nodes from $G'_{i,j}$ according to the dynamic pruning algorithm.
All edges that are removed from $G'_{i,j}$ because of these node deletions 
might correspond to some edges from the contracted graph $G_{i,j}$.
Remove these edges from $G_{i,j}$ as well and 
reinsert them into $G_1$ where they are handled like edge insertions (described in the previous paragraph).

If the number of edges if some $G_{i,j}$ decreased by $\phi K$, 
where $K$ is the number of edges in $G_{i,j}$ when it was created,
then we remove $G_{i,j}$ from the decomposition of $G_i$.
And reinsert the edges of $G_{i,j}$ into $G_1$.

Finally, consider edge deletions and insertions to $G$.
When an edge is inserted into $G$, then we insert the edge into $G_1$.
If an edge is deleted from $G$, then it is removed from the $G_i$ that contained the edge.

\paragraph{Complexity}

The complexity can be analyzed as in \Cref{thm:dynamicExpanderDecomposition}.
The first difference is an extra $\log m$ factor,
because \Cref{thm:uniformDegreeExpanderDecomposition} is slower than \Cref{lem:recursiveDecomposition} by a $\log m$ factor.
The second difference is that we remove and reinsert any $G_{i,j}$ once $\phi K$ edges were removed,
where $K$ is the number of edges in $G_{i,j}$ when it was created.
As per edge deletion to $G$, at most $O(\phi^{-1})$ edges are removed from $G_{i,j}$ via pruning,
we know it takes at least $\Omega(K \phi^2)$ updates until we remove and reinsert $G_{i,j}$.
As the complexity of inserting an edge is $O(\phi^{-1} \log^7 n)$,
and we must insert $O(K)$ edges, we obtain a total update complexity of at most $O(\phi^{-3} \log^7 n)$.

\paragraph{Correctness}

Each $G'_{i,j}$ is a $\phi$-expander by guarantee of the expander decomposition \Cref{thm:uniformDegreeExpanderDecomposition}
and the expander pruning \Cref{lem:uniformDegreeNodePruning}.
(Technically they are $\phi/6$-expanders, but we can just choose a smaller $\phi$.)
As $G_{i,j}$ is obtained from contracting $G'_{i,j}$, the graph $G_{i,j}$ is also a $\phi$-expander.

As all pruned edges are reinserted into $G_1$,
no edge gets lost and we always have $\bigcup_{i,j} G_{i,j} = G$.

By \Cref{thm:uniformDegreeExpanderDecomposition} and \Cref{lem:uniformDegreeNodePruning} 
we always have $\max\deg G'_{i,j} \le O(\phi^{-2} \min\deg G'_{i,j})$
and the minimum degree can drop by at most a factor of $O(\phi)$ throughout all updates.

After initializing \Cref{thm:uniformDegreeExpanderDecomposition} on some $G_i$,
we have $|V(G'_{i,j})| = O(|V(G_i)|)$ and $E(G'_{i,j}) = O(E(G_{i,j}))$
for all $j$ by \Cref{thm:uniformDegreeExpanderDecomposition}.
Also since at most $\phi |E(G_{i,j})|$ edges are removed from $G_{i,j}$,
we have $E(G'_{i,j}) = O(E(G_{i,j}))$ for all $i,j$ throughout all updates.

Further note that the minimum degree of $G_{i,j}$ 
is initially at least $\phi |E(G_{i,j})| / (4|V(G_{i,j})|)$ 
by \Cref{cor:averageDegreeExpanderDecomposition}.
Then during the updates, at most $\phi |E(G_{i,j})|$ edges are removed from $G_{i,j}$,
so we know the number of nodes in $G_{i,j}$ can only decrease by a constant factor.
Thus we have $|V(G'_{i,j})| = O(|V(G_i)|)$ for all $i,j$ throughout all updates.

\end{proof}

\subsection{Reduction via Dynamic Expander Decomposition with Uniform Degree Constraint}

To prove \Cref{thm:amortized:fully_dynamic_weighted}, we show the case in which the input graph is unweighted. 
Such a result can be proven by a direct application of the dynamic expander decomposition from \Cref{sec:uniform:dynamic_decomposition}. 
Remember that in  that setting  it suffices to study the case in which  the algorithm only receives  edge deletions.

\begin{lemma}\label{lem:amortized:fully_dynamic}
Assume $\H$ satisfies 
\eqref{con:union}
and \eqref{con:contraction},
and there exists a decremental algorithm 
$\A$ for $\H(\epsilon)$
on almost-uniform-degree $\phi$-expanders for any $\phi = O(1/\log^4 n)$. 
Then, there exists a fully dynamic algorithm $\B$ 
for $\H(\epsilon)$ on general unweighted graphs.

For the time complexity, let $P(m)\ge m$ be the pre-processing time of $\A$,
$S(n)\ge n$ an upper bound on the size of the output graph throughout all updates,
and $T(n)$ the amortized update time of $\A$. 
Then, the pre-processing time of $\B$ is $O(P(m))$, 
with amortized update time 
$$
O\left(
	\phi^{-3} \log^7 n \cdot T(n) 
	+ \frac{P(m)}{m \phi^{3}} \log^{7} n
\right).
$$
The size of the output graph is bounded by $O(S(n \log^3 n))$.

If $\A$ is a query-algorithm, then so is $\B$.
\end{lemma}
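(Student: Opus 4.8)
The plan is to mirror the argument for \Cref{lem:simple:fully_dynamic}, replacing the plain dynamic expander decomposition with the uniform-degree one from \Cref{thm:dynamicExpanderDecompositionUniformDegree} and inserting one invocation of the contraction property \eqref{con:contraction}. First I would run the dynamic uniform-degree expander decomposition of \Cref{thm:dynamicExpanderDecompositionUniformDegree} on the input graph $G$; it maintains a decomposition $\bigcup_i G_i = G$ together with, for each $i$, an almost-uniform-degree $\phi$-expander $G'_i$ and vertex sets $X_{v,i}$ such that contracting every $X_{v,i}$ in $G'_i$ yields $G_i$. Since each $G'_i$ meets the promises of \Cref{def:amortized:decremental_algorithm} (it is an unweighted $\phi$-expander whose degrees stay within a $\poly(1/\phi)$ factor of its initial minimum degree under the deletions it receives), I would run a separate instance of $\A$ on each $G'_i$ to maintain $H'_i \in \H(G'_i,\epsilon)$. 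Then I would let $H_i$ be the graph obtained from $H'_i$ by contracting each $X_{v,i}$, and maintain $H := \bigcup_i H_i$ as the output of $\B$.

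For correctness, by \Cref{thm:dynamicExpanderDecompositionUniformDegree} every $G'_i$ remains a $\phi$-expander of almost uniform degree subject only to edge deletions, so $H'_i \in \H(G'_i,\epsilon)$ is valid at all times; applying the \emph{same} partition $\{X_{v,i}\}$ to both $G'_i$ and $H'_i$ and using the contraction property \eqref{con:contraction} gives $H_i \in \H(G_i,\epsilon)$; and the union property \eqref{con:union} gives $H = \bigcup_i H_i \in \H(\bigcup_i G_i,\epsilon) = \H(G,\epsilon)$. The three kinds of changes reported by the decomposition are handled as in \Cref{lem:simple:fully_dynamic}: when a piece $G_i$ (with its $G'_i$) is removed, drop $H_i$ from $H$; when a new $G_i,G'_i$ is created, initialize $\A$ on $G'_i$, contract to obtain $H_i$, and add it to $H$; when edges are deleted from $G_i$ and $G'_i$, forward the $G'_i$-deletions to $\A$, push the resulting changes in $H'_i$ through the contraction map to update $H_i$, and update $H$ accordingly. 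The one piece of genuinely new bookkeeping is checking that a deletion in $G$ routed into $G_i$ corresponds to a single honest edge deletion in $G'_i$ (together with the prune-induced deletions that \Cref{thm:dynamicExpanderDecompositionUniformDegree} itself generates and re-inserts into $G_1$), so that the sequence delivered to each $\A$ is a legal decremental sequence on its promised expander.

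For the running time I would reuse the amortization scheme of \Cref{lem:simple:fully_dynamic}, now charging against the $O(\phi^{-3}\log^7 n)$ amortized decomposition update time of \Cref{thm:dynamicExpanderDecompositionUniformDegree} and using $\sum_i |V(G'_i)| = O(n\log^3 m)$ and $\sum_i |E(G'_i)| = O(m)$ from the same theorem. Preprocessing costs $\sum_i P(|E(G'_i)|) \le P(\sum_i |E(G'_i)|) = P(O(m))$ since $P$ is superadditive ($P(x)\ge x$). Each update triggers $O(\phi^{-3}\log^7 n)$ amortized changes to the decomposition, each pushed into the relevant $\A$ at cost $T(n)$ (every $G'_i$ has $O(n)$ vertices); and the cost $P(|E(G'_i)|)$ of spawning a fresh instance of $\A$ on a newly created $G'_i$ with $k$ edges is amortized over the $\Omega(k\,\phi^{3}\log^{-7}n)$ updates that must pass before that piece is destroyed — the same counting already done inside \Cref{thm:dynamicExpanderDecompositionUniformDegree} — giving amortized update time $O(\phi^{-3}\log^7 n\cdot T(n) + \tfrac{P(m)}{m\phi^{3}}\log^{7}n)$. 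The size bound is $\sum_i S(|V(G'_i)|) \le S(\sum_i |V(G'_i)|) = O(S(n\log^3 n))$. If $\A$ is only a query-algorithm, $\B$ answers a query by querying every $\A$-instance, contracting, and unioning, and the same sums give the query-time statement.

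I expect the main obstacle to be purely in the reduction bookkeeping rather than anything conceptual: verifying that the decremental updates actually delivered to each $G'_i$ (honest deletions plus pruning re-routing coming out of \Cref{thm:dynamicExpanderDecompositionUniformDegree}) keep it inside the almost-uniform-degree $\phi$-expander promise of \Cref{def:amortized:decremental_algorithm} at \emph{every} intermediate step, and that the contraction property \eqref{con:contraction} is invoked with a partition $\{X_{v,i}\}$ that is consistently used for $G'_i$ and $H'_i$ even as that partition shrinks when vertices are pruned. Everything else is a transcription of \Cref{lem:simple:fully_dynamic} with $G'_i$ in place of $G_i$ and one extra layer of \eqref{con:contraction}.
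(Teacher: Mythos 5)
Your proposal matches the paper's own proof essentially step for step: the same use of \Cref{thm:dynamicExpanderDecompositionUniformDegree} to maintain the pieces $G_i, G'_i, X_{v,i}$, the same per-piece invocation of $\A$ on $G'_i$ followed by contraction via \eqref{con:contraction} and union via \eqref{con:union}, and the same amortization, size, and query-time accounting. Your closing caveats about the legality of the decremental sequences fed to each $G'_i$ and the consistency of the contraction partition are exactly the guarantees the paper delegates to that decomposition theorem, so nothing further is needed.
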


\begin{proof}
We start by describing the algorithm, 
then analyze the complexity and prove the correctness.

\paragraph{Algorithm}

We run the dynamic expander decomposition of \Cref{thm:dynamicExpanderDecompositionUniformDegree}.
The output is a decomposition $G = \bigcup_i G_i$
and graphs $G'_i$ where each $G'_i$ is a $\phi$-expander.
We further obtain decompositions $\bigcup_{v \in V(G_i)} X_{v,i} = V(G'_i)$ for all $i$.
For any $i$, the sets $X_{v,i}$ have the property that contracting each $X_{v,i}$ in $G'_i$ results in graph $G_i$
(i.e. $X_{v,i}$ represents node $v$ in $G_i$). 
Next, we run algorithm $\A$ on $G'_i$ to obtain $H'_i \in \H(G'_i, \epsilon)$.
Contracting the $X_v$ in $H'_i$ for $v \in V(G_i)$ results in some graph $H_i$.
We maintain $H := \bigcup_i H_i$.

All these graphs are maintained dynamically,
so when an edge is deleted from $G$,
we pass that edge deletion to the dynamic expander decomposition algorithm \Cref{thm:dynamicExpanderDecompositionUniformDegree}.
The decomposition might change in the following way:
(i) Some $G_i$ is removed from the decomposition. 
In that case also remove $H_i$ from $H$.
(ii) Some new $G_i$ is added to the decomposition. 
Initialize $\A$ on $G'_i$ to construct new $H'_i$ and $H_i$ and add $H_i$ to the output graph $H$.
(iii) Some edge is removed from $G'_i$. 
We pass the edge deletion to algorithm $\A$ to update $H'_i$. 
The graph $H_i$ and $H$ is updated accordingly to the changes performed to $H_i$.

\paragraph{Complexity}

The preprocessing costs $
\sum_i P(|E(G'_i)|) \le P(\sum_i |E(G'_i)|) \le P(O(m))$,
where we used $P(K) \ge K$ to move the sum inside and $\sum_i |E(G'_i)| = O(m)$ as guaranteed by \Cref{thm:dynamicExpanderDecompositionUniformDegree}.

During each update,
at most $O(\phi^{-3} \log^7 n)$ edges change in the decomposition per update (amortized).
So updating the instances of $\A$ costs $O((\phi^{-3} \log^7 n) \cdot T(n))$ amortized time per update,
where we used the fact that each $G'_i$  has  at most $O(n)$ nodes.
The amortized cost of initializing $\A$ on some new $G_i$ is bounded by
$$
O\left(\frac{P(|E(G'_i)|)}{|E(G'_i)| \phi^{3} \log^{-7} n}\right)
\le
O\left(\frac{P(m)}{m \phi^{3}} \log^{7} n\right),
$$
where we used  the fact that a $G'_i$ with $k$ edges is created after at least $\phi^{3} \log^{-7} k$ updates,
and that $P(k) \ge k$.
The total amortized update time is thus
$$
O\left(\phi^{-3} \log^7 n \cdot T(n) + \frac{P(m)}{m \phi^{3}} \log^{7} n\right).
$$
The size of the output graph is bounded by
$$
\sum_i S(|V(G'_i)|) 
\le
S(\sum_i |V(G'_i)|) 
\le
O(S(n \log^3 n)),
$$
because $\sum_i |V(G'_i)| = \sum_i O(|V(G_i)|) = O(n \log^3 n)$ 
by \Cref{thm:dynamicExpanderDecompositionUniformDegree}.
\paragraph{Correctness}

Each $G'_i$ is always a $\phi$-expander by \Cref{thm:dynamicExpanderDecompositionUniformDegree}, 
and therefore we have $H'_i \in \H(G'_i, \epsilon)$ by guarantee of $\A$, 
and also $H_i \in \H(G_i, \epsilon)$ by property \eqref{con:contraction}.
The algorithm always maintains $H = \bigcup_i H_i$ where each $H_i \in \H(G_i, \epsilon)$,
so $H \in \H(G, \epsilon)$ by property \eqref{con:union}.

\paragraph{Query-Case}

When $\A$ is a query-algorithm, then the graphs $H'_i$ and $H_i$ are not maintained explicitly.
Instead, whenever a query is performed for algorithm $\B$,
we perform the query operation of $\A$ for each $G'_i$ to obtain a $H'_i \in \H(G'_i, \epsilon)$.
Then the sets $X_{v,i}$ are contracted in $H'_i$ to obtain $H_i \in \H(G_i, \epsilon)$.
At last, we return $H := \bigcup_i H_i \subset \H(G, \epsilon)$.

The pre-processing and update time complexity is the same as before.
The query complexity is bounded by
$$
O\left(\sum_i Q(|V(G'_i)|)\right)
\le
O\left(\sum_i Q(|V(G_i)|)\right)
\le
O\left( Q(\sum_i |V(G_i)|)\right)
\le
O( Q(n \log^3 n)).
$$

\end{proof}

The extension to weighted graphs now follows directly 
from splitting the edges based on their edge weights 
into powers of $e^\epsilon$.

\begin{proof}[Proof of \Cref{thm:amortized:fully_dynamic_weighted}] 
Split the edges into groups $[e^{k\cdot \epsilon/2}, e^{(k+1)\epsilon/2})$ 
to obtain subgraphs $G_0,G_1,...G_t$ with $t = O((1+\epsilon^{-1}) \log W)$.
Then run \Cref{lem:amortized:fully_dynamic} on each subgraph to maintain $H_i \in \H(G_i, \epsilon/2)$
and the union of these graphs then satisfies $H := \bigcup_{i=0}^t e^{(i+1) \epsilon/2} \cdot H_i \in \H(G, \epsilon)$
by property \eqref{con:identity} and \eqref{con:union}.
\end{proof}

\section{Expander Pruning with Worst-case Update Time}
\label{sec:pruning}

In this section, we present an \emph{expander pruning} algorithm with
\emph{worst-case }update time for, roughly speaking, maintaining an
expander under edge deletions.  
This is the crucial primitive for the blackbox reduction  in \Cref{sec:worst_case_blackbox} for algorithms with worst-case update time.

Our  main result is summarised in   \Cref{thm:wc prune edge} below, which  
is an improvement of \cite[Theorem 5.1]{NanongkaiSW17} that
gives an algorithm for the same purpose. The improvement is twofold: first of all,  our algorithm works in arbitrary graphs and not just
on constant degree graphs; secondly,
  quantitatively our algorithm dominates
\cite[Theorem 5.1]{NanongkaiSW17} in both running time and all
quality guarantees. 
\begin{theorem}
[Worst-Case Expander Pruning]\label{thm:wc prune edge}
\label{lem:expanderPruning}
There exists a parameter $\gamma=2^{O(\sqrt{\log m})}$
and a deterministic algorithm 
that can preprocess a connected multi-graph $G=(V,E)$ with $m$
edges in $O(m)$  time so that, given a sequence of $\sigma=(e_{1},\dots,e_{k})$
of $k$ online edge deletions and a parameter $\phi\ge 1/\gamma$,
the algorithm maintains an edge set $P\subseteq E$ with the following
properties. Let $G_{i}$ be the graph $G$ after the edges $e_{1},\dots,e_{i}$
have been deleted from it; let $P_{0}=\emptyset$ be the set $P$
at the beginning of the algorithm, and for all $0<i\le k$, let $P_{i}$
be the set $P$ after the deletion of $e_{1},\dots,e_{i}$. Then, the following statements hold
for all $1\le i\le k$: 
\begin{itemize}
\item $P_{i-1}\subseteq P_{i}$,
\item $|P_{i}|\le i\gamma$, $|P_i \setminus P_{i-1}| \le \gamma$,
\item $P_{i}=E$ when $i\ge m/\gamma$.
\item If $G_0$ was a $\phi$-expander, then there is a set $W_i\subseteq P_i$
such that $G_{i}-W_i$ has only one connected component $C$ which is
not an isolated vertex, and $C$ is a $(1/\gamma)$-expander.
\end{itemize}
The worst-case time for updating
$P_{i-1}$ to $P_{i}$ is $O(\gamma)$ for each $i$.
\end{theorem}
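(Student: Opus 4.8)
The plan is to deamortize the amortized expander pruning of \Cref{thm:amortized prune} (from \cite{SaranurakW19}) via a recursive, staggered-computation scheme, generalizing \cite[Theorem~5.1]{NanongkaiSW17} from bounded-degree graphs to arbitrary multigraphs. Two obstacles must be handled: (i) arbitrary degrees, since deleting or pruning a single high-degree vertex removes $\Theta(\deltamax)$ edges in one step, incompatible with a worst-case bound of $\gamma$ on $|P_i\setminus P_{i-1}|$; and (ii) the spikiness of the amortized algorithm, a single step of which can cost up to $\tilde O(k/\phi^2)$ even though the total is $O(k\log m/\phi^2)$.

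For (i), I would first apply the $\Delta$-reduction of \Cref{def:delta-reduction} with $\Delta$ set to the average degree of $G$, obtaining a near-uniform-degree connected multigraph $G'$ with $|E(G')|=\Theta(m)$ and $\Phi_{G'}=\Theta(\Phi_G)$ by \Cref{prop:delta-reduction prop}. Each deletion in $G$ maps to a deletion in $G'$; a pruned edge set in $G'$ pushes forward to a pruned set in $G$ of comparable size (a super-node becomes isolated only after $\Theta(\Delta)$ of its edges are pruned); and a good expander component of $G'_i-W'_i$ contracts back to a $\Theta(1/\gamma)$-expander component of $G_i-W_i$. So it suffices to prove the theorem when every vertex has degree $\Theta(\Delta)$, whereupon the volume-based guarantees of \Cref{thm:amortized prune} become edge-count guarantees at the cost of a $\poly(1/\phi)$ factor, and pruning always proceeds in $\tilde O(1)$-size increments.

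For (ii), the core is a recursion on $L=\Theta(\sqrt{\log m})$ levels. A level-$j$ structure serves a deletion sequence of length $K_j$ by cutting it into batches of $K_{j+1}=K_j^{(L-j-1)/(L-j)}$ deletions. While one batch is being served, the structure (a) advances, spread uniformly over the batch's updates, the amortized algorithm of \Cref{thm:amortized prune} on the prefix of deletions up to the \emph{end} of the batch; by the total-time bound this costs $\tilde O(1/\phi^2)$ per update and produces a committed $\Theta(\phi)$-expander $C$, ready for the next batch; and (b) feeds the batch's own deletions that fall inside the \emph{previously} committed expander into a freshly initialized level-$(j{+}1)$ sub-structure (pre-built in the background during the prior batch), which keeps that expander valid under its $K_{j+1}$ in-batch deletions in worst-case per-update time $t_{j+1}$. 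The maintained set is $P_i=(\text{committed pruned set})\cup(\text{level-}(j{+}1)\text{ pruned set})\cup(\text{the in-batch deletions not yet absorbed})$, and $W_i$ is the committed pruned set together with the level-$(j{+}1)$ witness set; the unique non-trivial component of $G_i-W_i$ is the one maintained at the bottom of the recursion. Since the amortized subroutine only explores a neighborhood of the deleted edges, a sub-instance can be initialized with a lightweight adjacency-list view of $C$ and its work spread over a batch, giving the recurrence $t_j=t_{j+1}+\tilde O(\poly(1/\phi))$, hence $t_0=\tilde O(L\cdot\poly(1/\phi))$; the same argument bounds $|P_i\setminus P_{i-1}|$ and the per-level conductance loss by a $\poly(\log m,1/\phi)$ factor. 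When $i\ge m/\gamma$ the deletion budget is exhausted and we simply set $P_i=E$, and preprocessing is $O(m)$ (the $\Delta$-reduction plus setting up the level-$0$ structure, whose first committed expander is $G'$ itself).

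The delicate points — and what I expect to be the main obstacle — are closing the recursion and maintaining the invariant between batch boundaries. One must choose $L$, $\gamma$, and the batch sizes $K_j$ so that the multiplicative conductance loss accumulated over $L=\Theta(\sqrt{\log m})$ levels still leaves an $\Omega(1/\gamma)$-expander — which is exactly why the hypothesis $\phi\ge1/\gamma$ is needed and why $\gamma$ comes out as $2^{O(\sqrt{\log m})}$ — while the additive per-level time and pruned-set blowups telescope to $O(\gamma)$, and the background construction of the level-$(j{+}1)$ instance provably finishes before it is needed. Equally fiddly is showing that at \emph{every} update, not only at batch boundaries, the triple $(P_i,W_i,C)$ satisfies the claimed properties despite the committed computation lagging by up to one batch; here the slack in the statement — $P_i$ may be a strict superset of $W_i$, and $G_i-W_i$ a strict supergraph of $G_i-P_i$ — is precisely what absorbs the lag, since the not-yet-absorbed in-batch deletions can be dumped pessimistically into $P_i$ while $W_i$ tracks only the committed, fully processed pruning.
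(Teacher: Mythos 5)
Your proposal follows essentially the same route as the paper: reduce to a (near-)uniform-degree graph via the $\Delta$-reduction, then deamortize the amortized pruning of \Cref{thm:amortized prune} through an $O(\sqrt{\log m})$-level staggered/batched recursion with a geometric conductance loss per level (which is exactly where $\gamma=2^{O(\sqrt{\log m})}$ comes from), dumping the not-yet-processed in-batch deletions into $P_i$ while $W_i$ tracks only committed pruning — this is the paper's proof of \Cref{thm:wc prune node} plus its reduction to the edge version (the paper uses a constant-degree $\Delta=9$ reduction and maintains a node set, a cosmetic difference). One concrete slip: handling $i\ge m/\gamma$ by ``simply set $P_i=E$'' in one shot violates both $|P_i\setminus P_{i-1}|\le\gamma$ and the $O(\gamma)$ worst-case update time; the paper fixes this by pre-partitioning into blocks of size $\gamma$ and artificially adding one block to $P$ at every step from the beginning, so that $P=E$ is reached gradually by time $m/\gamma$.
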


We will prove   a variant of the theorem when
we only maintain a set of nodes instead of edges, which is summarised in \Cref{thm:wc prune node}. But, let's first show how 
\Cref{thm:wc prune node} could be applied to prove the main result of the section~(\Cref{thm:wc prune edge}).
\begin{lemma}
\label{thm:wc prune node}
There exists a parameter $\gamma=2^{O(\sqrt{\log m})}$ and  a deterministic algorithm such that,
given the adjacency list of a connected multi-graph $G=(V,E)$
with $m$ nodes, a sequence of $\sigma=(e_{1},\dots,e_{k})$ of $k$
online edge deletions, and a parameter $\phi\ge1/\gamma$
as input, the   algorithm
maintains a  vertex set $P\subseteq V$ with the following properties.
Let $G_{i}$ be the graph $G$ after the edges $e_{1},\dots,e_{i}$
have been deleted from it; let $P_{0}=\emptyset$ be the set $P$
at the beginning of the algorithm, and for all $0<i\le k$, let $P_{i}$
be the set $P$ after the deletion of $e_{1},\dots,e_{i}$. Then, the following holds 
for all $1\le i\le k$:
\begin{itemize}
\item $P_{i-1}\subseteq P_{i}$;
\item $\vol(P_{i})\le i\gamma$;
\item $P_{i}=V$ when $i\ge m/\gamma$;
\item If $G$ is a $\phi$-expander, then there is a set $Q\subseteq P$
such that $G_{i}[V\setminus Q]$ is a $(1/\gamma)$-expander. 
\end{itemize}
The worst-case time for updating $P_{i-1}$ to $P_{i}$ is $O(\gamma)$
for each $i$. 
\end{lemma}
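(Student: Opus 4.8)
The plan is to obtain the worst-case data structure by de-amortizing the amortized expander pruning of \Cref{thm:amortized prune} through a recursive "rebuild in the background" scheme. The starting observation is that it suffices to design, for every scale $s$, a pruning routine that handles up to $s$ edge deletions on a graph with at most $s$ edges, with worst-case per-deletion time $g(s)$ and conductance loss $h(s)$ (meaning a $\phi$-expander survives as an $h(s)\phi$-expander), since \Cref{thm:amortized prune} already gives such a routine \emph{amortized}, with $g_{\mathrm{amort}}(s)=O(\log s/\phi^2)$ and $h=1/6$. The worst-case version is built on top of the amortized one: split the deletion stream into epochs of length $L$, and during each epoch run a \emph{fresh} instance of the scale-$(s/L)$ pruning structure on the currently active (not-yet-pruned) part of the graph, feeding it the epoch's deletions while spreading its total work (which is amortized $O((s/L)\log s/\phi^2 + L\cdot g(s/L))$ over $L$ steps) evenly so that it is fully caught up by the epoch's end. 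Meanwhile, the \emph{output} set $P$ is advanced by only $O(1/\phi)\le\gamma$ volume per step, writable in $O(\gamma)$ worst-case time.

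The delicate part is keeping a \emph{valid} expander certificate at every intermediate step while a rebuild is only half finished. At the start of epoch $j$ we hold a committed set $Q_j$ with $V\setminus Q_j$ a $\phi_j$-expander of the current graph; we then preemptively enlarge it by a safety margin large enough to remain an expander no matter which $L$ edges get deleted during the epoch. Here the quantitative guarantees of \Cref{thm:amortized prune} are essential: after $\ell$ deletions the amortized pruned set has volume $O(\ell/\phi)$ and boundary $O(\ell)$, so a margin of volume $O(L/\phi)$ suffices, and a short expansion-bootstrapping argument shows that removing a set of volume $O(L/\phi)$ from a $\phi_j$-expander (and then stripping off the low-degree residue) still leaves a $\Theta(\phi_j)$-expander, provided $L$ is a sufficiently small fraction of $\phi_j$ times the current volume. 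At the epoch's end the background instance's pruned set becomes $Q_{j+1}$ and its certified expander becomes the new committed expander; each recursion level thus costs one constant factor in conductance and multiplies the per-step time by a $\mathrm{poly}(\log m,1/\phi)$ factor while dividing the scale by $L$. Choosing the recursion depth $\Theta(\sqrt{\log m})$ and $L=m^{\Theta(1/\sqrt{\log m})}$ balances the residual $m^{1/L}$-type base cost against the $(\mathrm{polylog})^{\#\mathrm{levels}}$ blow-up, yielding per-step time and conductance loss both of the form $2^{O(\sqrt{\log m})}=\gamma$; the regime $\phi\ge 1/\gamma$ is precisely what keeps the degraded conductance $\phi/\gamma$ a meaningful $\Omega(1/\gamma^2)$ bound. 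Monotonicity of $P$, the bound $\vol(P_i)\le i\gamma$ (sum the $O(1/\phi)\le\gamma$ per-step increments), and "$P_i=V$ once $i\ge m/\gamma$'' (beyond the range where \Cref{thm:amortized prune} even applies, simply set $P_i=V$, which is consistent with $\vol(P_i)\le i\gamma$ up to adjusting the constant in $\gamma$) then all follow immediately.

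I expect the main obstacle to be exactly this interplay in the middle step: the expensive flow computations inside the amortized subroutine are not resumable, so they cannot be interleaved step-by-step, and we are forced to certify correctness using the \emph{stale} committed expander from the previous epoch together with the preemptively pruned margin. Making the epoch length $L$ and the recursion depth simultaneously satisfy (a) the expansion-preservation condition under an arbitrary batch of $L$ deletions, (b) the per-step time budget, and (c) the $2^{O(\sqrt{\log m})}$ target is the one genuinely technical calculation; everything else — the credit/scheduling bookkeeping that smooths the background work over an epoch, and the verification of the four listed invariants — is routine once the scheme is set up. As a sanity check on the recursion's base case, a single level applied at scale $s$ with the amortized pruning plugged in already recovers a worst-case bound with $\gamma$ replaced by $s^{O(1)}/\phi^{O(1)}$, and the recursion merely trades that polynomial dependence for the subpolynomial $2^{O(\sqrt{\log m})}$.
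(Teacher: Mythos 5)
Your high-level architecture---de-amortizing \Cref{thm:amortized prune} through a $\Theta(\sqrt{\log m})$-level hierarchy with geometrically varying batch sizes, paying a constant factor in conductance per level---is the same as the paper's, but the step you yourself flag as delicate is exactly where your argument breaks, and the fix you propose does not work. You certify expansion at the intermediate steps of an epoch by the stale committed expander plus a \emph{fixed} ``safety margin'' of volume $O(L/\phi)$ chosen before the epoch's deletions are seen, claiming this survives any $L$ deletions. That is false: the $O(\ell/\phi)$ volume bound of \Cref{thm:amortized prune} is for a pruned set chosen \emph{adaptively} as a function of the deletions, and no obliviously chosen set of comparable volume can replace it. Concretely, take a constant-degree $\phi$-expander and stage the attack in the first epoch, when everything your algorithm will add to $P$ during the epoch (margin included) is determined in advance and has volume $n^{o(1)}$. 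The adversary picks a connected set $S$ of $\Theta(L)$ volume disjoint from all of it (there are $\Omega(n/L)$ disjoint candidates, so one is missed) and deletes its at most $L$ boundary edges. At the step right after these deletions, $S$ is disconnected from the rest of the graph, $S\not\subseteq P_i$, and hence for every $Q'\subseteq P_i$ the graph $G_i[V\setminus Q']$ has a zero-conductance cut; the fourth bullet of \Cref{thm:wc prune node} fails.

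The paper's scheme sidesteps this without any oblivious margin: the certificate is kept \emph{current after every single deletion}. Level $i$ is refreshed every $T/\Delta^i$ steps by running $\prune_{\phi_{i-1}}$ on the level-$(i-1)$ expander with a batch of deletions that ended \emph{before} the refresh window opened, so its input is fully known and its work can be spread evenly (your resumability concern never arises); and, crucially, a final level re-runs the amortized pruning at \emph{every} step on the one-step-stale expander $X^{\ell}_{\tau}$ with only the newest deletion, which costs just $O(\log m/\phi_\ell^2)$ and produces an induced expander of the current graph $G_\tau$. Each of the $\ell+1=O(\sqrt{\log m})$ levels loses a factor $6$ in conductance, giving $1/\gamma$. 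Two smaller slips in your write-up: the per-step growth of $P$ is $2^{O(\sqrt{\log m})}$, not $O(1/\phi)$ (which is fine for the bound $\vol(P_i)\le i\gamma$), and ``simply set $P_i=V$ once $i\ge m/\gamma$'' is an $\Omega(m)$-time single update that violates the $O(\gamma)$ worst-case bound; the paper instead adds a fresh block of $\gamma$ vertices to $P$ at every step so that $P=V$ by time $m/\gamma$ with no expensive step.
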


\begin{proof}
[Proof of \Cref{thm:wc prune edge}]Given \Cref{thm:wc prune node},
we can prove \Cref{thm:wc prune edge} as follows. Given a graph $G=(V,E)$
with $m$ edges, we construct in time $O(m)$ a $\Delta$-reduction
graph $G'=(V',E')$ of $G$ where $\Delta=9$. So $G'$ is a constant
degree graph and $|V'|=\Theta(m)$. Note that each edge $e=(u,v)\in E$
has a corresponding edge $e'\in E'$ connecting super-nodes $X_{u}$
and $X_{v}$ of $G'$. We make a pointer between $e$ and $e'$ for
each $e\in E$. Then, we apply \Cref{thm:wc prune node} on $G'$ and
maintain a set $P'\subseteq V'$ of nodes in $G'$. Let $\gamma'=2^{O(\sqrt{\log m})}$ be
the parameter $\gamma$ in \Cref{thm:wc prune node} when the input
is $G'$. To maintain a set $P\subseteq E$ of edges in $G$, we just let 
$P$   contain all edges $e\in E$ such that the corresponding edge
$e'\in E'$ is incident to $P'$. Now, we prove the correctness.

Let $P_{i}$ and $P'_{i}$ be the sets $P$ and $P'$ after the $i$-th
edge deletion. First, $P_{0}=P'_{0}=\emptyset$. Next, we have $P_{i-1}\subseteq P_{i}$
because $P'_{i-1}\subseteq P'_{i}$. Next, we have $|P_{i}|\le\vol(P'_{i})\le i\gamma'$.
Also, when $i\ge m/\Theta(\gamma')$, $P_{i}=E$ as $P'_{i}=V$. On
each step, the algorithm from \Cref{thm:wc prune node} takes $O(\gamma')$
time. Additionally, the time for updating $P_{i}$ from $P_{i-1}$
is at most $O(1)\cdot|P'_{i} \setminus P'_{i-1}|=O(\gamma')$. This is where
we exploit the fact that $G'$ has constant degree.

It remains to prove the statement about the expansion guarantee. Suppose
$G$ is a $\phi$-expander where $\phi\ge1/\gamma$,
then we know that $G'$ is also an $\Omega(\phi)$-expander by \Cref{prop:delta-reduction prop}.   
By \Cref{thm:wc prune node}, we have that there is a set $Q'\subseteq P'$
of nodes in $G'$ such that $G'_{i}[V'-Q']$ is a $(1/\gamma')$-expander. 

We let $Q$ contain all edges $e\in E$ such that the corresponding
edge $e'\in E'$ is incident to $Q'$. Note that $Q\subseteq P$.
Moreover, $G_{i}[V\setminus Q]$ has only one connected component $C$ which is
not an isolated vertex. Observe that this connected component $C$
can be obtained from $G'_{i}[V'\setminus Q']$ by contracting all nodes
in the same super-node. As contraction never decreases conductance
and $G'_{i}[V' \setminus Q']$ is a $(1/\gamma')$-expander, we have that $C$
must be a $(1/\gamma')$-expander. By setting the parameter $\gamma=\Theta(\gamma')$
appropriately, we obtain \Cref{thm:wc prune edge}.
\end{proof}

Hence, the remaining part of the section is   to prove  \Cref{thm:wc prune node}. The main idea
is to apply \Cref{thm:amortized prune} repeatedly as already
done in Section 5.2 of \cite{NanongkaiSW17}. However, our analysis
is arguably cleaner.

\paragraph{Setting up.}
Before describing the algorithm, we start with some notations. When
\Cref{thm:amortized prune} is given $G=(V,E)$, $\sigma$, and $\phi$
as the input, we let $(X,P)=\prune_{\phi}(G,\sigma)$ denote the
output where $X=G[V \setminus P]$ is an $\phi/6$-expander if $G$ initially
is a $\phi$-expander. 

Let $G$ be an input graph of \Cref{thm:wc prune node} with $m$ nodes.
Let $\phi\ge1/2^{O(\sqrt{\log m})}$ be the conductance parameter
from \Cref{thm:wc prune node}. Throughout the sequence $\sigma=(e_{1},\dots,e_{k})$
of edge deletions, let $G_{\tau}$ denote the graph after the $\tau$-th
deletion and $\sigma_{[\tau,\tau']}=(e_{\tau},\dots,e_{\tau'})$.
Let $G_{0}$ be the graph $G$ before any deletion. Let $T=\phi m/10$,
$\ell=\left\lceil \sqrt{\log m}\right\rceil $ and $\Delta=T^{1/\ell}$.
For convenience, assume that $T$ is such that $\Delta$ is a integer.
We claim that we can assume the number of edge deletions is at most
$T$, i.e., $k\le T$. We will prove this claim at the end.

In our algorithms, there will be $\ell+1$ \emph{levels.} For each
$1\le i\le\ell+1$, let $\phi_{i}=\phi/6^{i}$. We maintain a level-$i$
graph $X^{i}$ and a level-$i$ pruning set $P^{i}$. Let $X_{\tau}^{i}$
and $P_{\tau}^{i}$ denote $X^{i}$ and $P^{i}$ after the $\tau$-th
deletion. We define $X_{0}^{i}=G_{0}$ and $X_{\tau}^{0}=G_{\tau}$.
For any time step $\tau<T$, we let $\round_{i}(\tau)=\left\lfloor \frac{\tau}{T/\Delta^{i}}\right\rfloor \cdot T/\Delta^{i}$
be the biggest multiple of $T/\Delta^{i}$ which is at most $\tau$.
Notice that $\round_{\ell}(\tau)=\tau$, and $\round_{0}(\tau)=0$
as $\tau<T$. 

\paragraph{The algorithm.}

For each level $i\in\{1,\dots,\ell\}$, if $\tau\le T/\Delta^{i}$
or $\tau\neq\round_{i}(\tau)$, we just define $(X_{\tau+1}^{i},P_{\tau+1}^{i})=(X_{\tau}^{i},P_{\tau}^{i})$.
Otherwise, when $\tau=\round_{i}(\tau)$ and $\tau>T/\Delta^{i}$,
we ensure that we have finished executing 
\[
(X_{\tau}^{i},P_{\tau}^{i})=\prune_{\phi_{i-1}}\left(X_{\tau}^{i-1},\sigma_{[\max\{0,\round_{i-1}(\round_{i-1}(\tau)-1)\}+1,\round_{i}(\tau-1)]}\right).
\]
For $i=\ell+1$, for every $\tau$, we execute
\[
(X_{\tau}^{\ell+1},P_{\tau}^{\ell+1})=\prune_{\phi_{\ell}}\left(X_{\tau}^{\ell},\sigma_{[\tau,\tau]}\right).
\]
 Let $P$ denote our output pruning set and $P_{\tau}$ be the set
$P$ after time step $\tau$. $P$ is simply the union of all $P_{\tau}^{i}$
for all $i$ and $\tau$ that have been computed. We ensure that when
we finished computing $P_{\tau}^{i}$, we already include $P_{\tau}^{i}$
into $P_{\tau}$. This completes the description of the algorithm.
Now, we analyze the algorithm.

\paragraph{Analysis.}

For any $\phi$ and $\tau$, we say that $X$ is \emph{an induced
$\phi$-expander of time $\tau$ }if $\Phi_{X}\ge\phi$ and $X=G_{\tau}[U]$
for some $U\subseteq V$. From this definition and by \Cref{thm:amortized prune},
we have the following fact:

\begin{fact}
\label{fact:induced expander}For any $\phi,\tau,\tau'$ where $\tau\le\tau'$,
suppose that $X$ is an induced $\phi$-expander of time $\tau$ and
$(X',P')=\prune_{\phi}(X,\sigma_{[\tau+1,\tau']})$. Then $X'$ is
an induced $\phi/6$-expander of time $\tau'$. 
\end{fact}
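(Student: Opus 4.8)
The Fact is really a translation of the last bullet of \Cref{thm:amortized prune} into the ``induced-expander-of-time-$\tau$'' bookkeeping set up in this section, so the plan is short. First I would write $X=G_{\tau}[U]$ for the set $U\subseteq V$ witnessing that $X$ is an induced $\phi$-expander of time $\tau$; since $\Phi_X\ge\phi>0$, the graph $X$ is connected and has no isolated vertex, hence is a legitimate input to \Cref{thm:amortized prune}. We then run $\prune_{\phi}$ on $X$ with the subsequence $\sigma_{[\tau+1,\tau']}=(e_{\tau+1},\dots,e_{\tau'})$; any $e_{j}$ in this stream that is not currently an edge of $X$ (because it lies outside $U\times U$ or was already removed) is treated as a no-op, which is still a valid sequence of online edge deletions, so the guarantees of \Cref{thm:amortized prune} apply verbatim. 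We may also assume $\tau'-\tau\le\phi\,|E(X)|/10$: otherwise the amortized pruning algorithm exhausts $X$ and returns the empty graph, which is (vacuously) an induced $\phi/6$-expander of time $\tau'$, matching the empty-graph convention used throughout \Cref{thm:wc prune node}.

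The key step is to identify the output graph $X'$ as an induced subgraph of $G_{\tau'}$. By the notational convention for $\prune$, the algorithm returns $(X',P')$ with $X'=\big(X-\{e_{\tau+1},\dots,e_{\tau'}\}\big)[U\setminus P']$, where $P'\subseteq U$ is the vertex set maintained by \Cref{thm:amortized prune}. Deleting an edge $e_{j}$ from $G_{\tau}$ changes its restriction to $U$ exactly when $e_{j}\in U\times U$, so $X-\{e_{\tau+1},\dots,e_{\tau'}\}=G_{\tau}[U]-\{e_{\tau+1},\dots,e_{\tau'}\}=G_{\tau'}[U]$, and therefore $X'=G_{\tau'}[U\setminus P']$. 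Thus $X'$ is an induced subgraph of $G_{\tau'}$, i.e.\ it is ``of time $\tau'$''. Finally, since $X$ is a $\phi$-expander, the last bullet of \Cref{thm:amortized prune} (applied with input graph $X$) yields $\Phi_{X'}\ge\phi/6$. Combining the two conclusions, $X'$ is an induced $\phi/6$-expander of time $\tau'$, as claimed.

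There is essentially no hard step here; the only things that need care are the edge bookkeeping (no-op deletions of edges that fell outside $U$ or were already gone, and the commuting of restriction-to-$U$ with edge deletion) and making sure the precondition $\tau'-\tau\le\phi\,|E(X)|/10$ of \Cref{thm:amortized prune} is either met by the way $\prune$ is invoked in the multilevel scheme of \Cref{thm:wc prune node} or else absorbed into the empty-graph case. This Fact is precisely the inductive engine for that scheme: applied along the levels $i=1,\dots,\ell+1$ it upgrades ``$X^{i-1}$ is an induced $\phi_{i-1}$-expander of time $\round_{i-1}(\cdot)$'' to ``$X^{i}$ is an induced $\phi_{i}$-expander of time $\round_{i}(\cdot)$'', which at the bottom level produces the $(1/\gamma)$-expander $G_{\tau}[V\setminus Q]$ promised by \Cref{thm:wc prune node}.
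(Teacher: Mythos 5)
Your proof is correct and matches the paper's treatment: the paper offers no separate argument for this Fact, stating that it follows immediately from the definition of an induced $\phi$-expander of time $\tau$ together with the last guarantee of \Cref{thm:amortized prune}, which is exactly what you unpack (identifying $X'=G_{\tau'}[U\setminus P']$ and invoking the $\phi/6$-expansion bullet). The extra bookkeeping you supply (no-op deletions, commuting restriction with deletion, the deletion-budget caveat) is just an explicit verification of what the paper leaves implicit.
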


\begin{lemma}
Suppose $G_{0}$ is a $\phi$-expander. Then, for any $1\le\tau<T$,
we have
\begin{itemize}
\item for $i\in\{1,\dots,\ell\}$, $X_{\tau}^{i}$ is an induced $\phi_{i}$-expander
of time $\max\{0,\round_{i}(\round_{i}(\tau)-1)\}$.
\item for $i=\ell+1$, $X_{\tau}^{i}$ is an induced $\phi_{i}$-expander
of time $\tau$. That is, $\Phi_{G_{\tau}[V(X_{\tau}^{\ell+1})]}\ge\phi_{\ell+1}$.
\end{itemize}
\end{lemma}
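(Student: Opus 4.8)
The plan is an induction on the level $i \in \{1,\dots,\ell+1\}$, proving that $X_\tau^i$ is an induced $\phi_i$-expander of the time named in the statement. The only two ingredients are the hypothesis that $G_0$ is a $\phi$-expander -- equivalently, that $G_0=X_0^0$ is an induced $\phi_0$-expander of time $0$ -- and Fact~\ref{fact:induced expander}, which does all of the work: pruning an induced $\psi$-expander of time $\tau_1$ by a consecutive deletion batch $\sigma_{[\tau_1+1,\tau_2]}$ yields an induced $(\psi/6)$-expander of time $\tau_2$.

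First I would reduce to the ``active'' case. If $\round_i(\tau)\le T/\Delta^i$, level $i$ has not yet been recomputed, so $X_\tau^i=X_0^i=G_0$, an induced $\phi_i$-expander of time $0$, and indeed $\max\{0,\round_i(\round_i(\tau)-1)\}=0$ there. If $\tau\neq\round_i(\tau)$, then $X_\tau^i=X_{\round_i(\tau)}^i$ and, since $\round_i$ is idempotent, the claimed time for $\tau$ equals that for $\round_i(\tau)$; so it suffices to establish the invariant at each recomputation time $\tau=\round_i(\tau)>T/\Delta^i$.

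Fix such a $\tau$. For $i\le\ell$ the algorithm sets $(X_\tau^i,P_\tau^i)=\prune_{\phi_{i-1}}(Y,\ \sigma_{[a+1,\,b]})$ with $a=\max\{0,\round_{i-1}(\round_{i-1}(\tau)-1)\}$, $b=\round_i(\tau-1)$, and $Y$ the level-$(i-1)$ object. When $i=1$ one has $a=0$ and $Y=X_0^0=G_0$, an induced $\phi_0$-expander of time $0$; when $i\ge 2$, the induction hypothesis at level $i-1$ says precisely that $Y=X_\tau^{i-1}$ is an induced $\phi_{i-1}$-expander of time $a$ -- this is the whole reason the nested $\round$ defining the start of the batch is shaped the way it is. Either way the batch $\sigma_{[a+1,b]}$ starts one step after the time of $Y$, so Fact~\ref{fact:induced expander} gives that $X_\tau^i$ is an induced $(\phi_{i-1}/6)=\phi_i$-expander of time $b$, and it only remains to check the identity $b=\round_i(\tau-1)=\round_i(\round_i(\tau)-1)$, which holds because $\tau=\round_i(\tau)$ is a multiple of $T/\Delta^i$. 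For $i=\ell+1$, $T/\Delta^\ell=1$ so $\round_\ell$ is the identity, level $\ell$ is by the previous case an induced $\phi_\ell$-expander of time $\tau-1$, and applying the one deletion $\sigma_{[\tau,\tau]}$ advances it to an induced $\phi_{\ell+1}$-expander of time $\tau$; that is, $\Phi_{G_\tau[V(X_\tau^{\ell+1})]}\ge\phi_{\ell+1}$.

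The main obstacle is purely the index bookkeeping: verifying that the endpoints of each prune batch produced by the nested $\round$ expressions line up exactly with the staleness carried by the expander they are applied to -- no deletion skipped, none applied twice -- across the corner cases ($\tau$ below a block at some level, a block having just closed, adjacent levels with block lengths differing by the factor $\Delta$). Conceptually there is nothing more: each of the $\ell+1$ levels costs one factor $6$ in conductance and one block of staleness, and both fall straight out of a single use of Fact~\ref{fact:induced expander}. The separate claim implicit in the algorithm -- that each prune computation actually finishes within its allotted $T/\Delta^i$ updates -- is a running-time matter, established from the time bound in Theorem~\ref{thm:amortized prune}, and is not needed for this lemma.
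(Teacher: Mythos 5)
Your proposal is correct and follows essentially the same argument as the paper: an induction over levels (and time) in which each recomputation step is handled by a single application of Fact~\ref{fact:induced expander}, the base/off-round cases reduce to $G_0$ or to the last round time, and the only remaining work is checking that $\round_i(\tau-1)=\round_i(\round_i(\tau)-1)$ when $\tau=\round_i(\tau)$, exactly as in the paper's proof. Your reading of the level-$0$ input as $G_0$ (rather than the paper's notational $X_\tau^0=G_\tau$) matches the intended semantics of the online pruning call, so no gap arises there.
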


\begin{proof}
We prove by induction on $\tau$ and $i$. For each $i\in\{1,\dots,\ell\}$,
if $\tau\le T/\Delta^{i}$, then $X_{\tau}^{i}=G_{0}$ which is an
induced $\phi$-expander of time $0$. Now, assume $\tau>T/\Delta^{i}$,
and so $\round_{i}(\round_{i}(\tau)-1)\ge0$. When $\tau\neq\round_{i}(\tau)$,
we have $X_{\tau}^{i}=X_{\round_{i}(\tau)}^{i}$ which is an induced
$\phi_{i}$-expander of time $\round_{i}(\round_{i}(\tau)-1)$ by
induction. When $\tau=\round_{i}(\tau)$, we have
\[
(X_{\tau}^{i},P_{\tau}^{i})=\prune_{\phi_{i-1}}\left(X_{\tau}^{i-1},\sigma_{[\max\{0,\round_{i-1}(\round_{i-1}(\tau)-1)\}+1,\round_{i}(\tau-1)]}\right).
\]
By induction hypothesis, $X_{\tau}^{i-1}$ is an induced $\phi_{i-1}$-expander
of time $\max\{0,\round_{i-1}(\round_{i-1}(\tau)-1)\}$. By \Cref{fact:induced expander},
$X_{\tau}^{i}$ is then an induced $\phi_{i}$-expander of time $\round_{i}(\tau-1)=\round_{i}(\round_{i}(\tau)-1)$
as $\tau=\round_{i}(\tau)$. When $i=\ell+1$, as $X_{\tau}^{\ell}$
is an induced $\phi_{\ell}$-expander of time $\max\{0,\round_{i}(\round_{i}(\tau)-1)\}=\tau-1$
and $(X_{\tau}^{\ell+1},P_{\tau}^{\ell+1})=\prune_{\phi_{\ell}}(X_{\tau}^{\ell},\sigma_{[\tau,\tau]})$,
so we have $X_{\tau}^{\ell+1}$ is an induced $\phi_{\ell+1}$-expander
of time $\tau$.
\end{proof}

\begin{lemma}
Suppose $G_{0}$ is a $\phi$-expander. For any $\tau$, there is
$Q_{\tau}\subseteq P_{\tau}$ where $G_{\tau}[V \setminus Q_{\tau}]$ is a $2^{-O(\sqrt{\log m})}$-expander. 
\end{lemma}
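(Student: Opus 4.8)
I want to show that for any time $\tau$, there is a subset $Q_\tau \subseteq P_\tau$ such that $G_\tau[V \setminus Q_\tau]$ is a $2^{-O(\sqrt{\log m})}$-expander. The key object is the top-level graph $X_\tau^{\ell+1}$, which by the previous lemma is an induced $\phi_{\ell+1}$-expander of time $\tau$ — that is, $X_\tau^{\ell+1} = G_\tau[U_\tau]$ for some vertex set $U_\tau$, and $\Phi_{G_\tau[U_\tau]} \ge \phi_{\ell+1} = \phi/6^{\ell+1}$. Since $\ell = \lceil \sqrt{\log m}\rceil$ and $\phi \ge 1/2^{O(\sqrt{\log m})}$, we have $\phi_{\ell+1} = \phi/6^{\ell+1} \ge 2^{-O(\sqrt{\log m})}$, which is exactly the expansion bound we are after. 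So the natural candidate is $Q_\tau := V \setminus U_\tau = V \setminus V(X_\tau^{\ell+1})$.

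**Key steps.** First I would verify $Q_\tau \subseteq P_\tau$. Every vertex removed from the original graph to form $X_\tau^{\ell+1}$ was placed into some pruning set $P_{\tau'}^{i}$ during one of the $\prune$ invocations along the chain $X^0 \to X^1 \to \cdots \to X^{\ell+1}$: indeed $V(X^{i}) = V(X^{i-1}) \setminus P^{i}$ at the relevant time step by the definition of $\prune_{\phi_{i-1}}$ (its output is $X = G[V \setminus P]$), so telescoping gives $V \setminus V(X_\tau^{\ell+1}) = \bigcup_i (\text{the pruning sets accumulated so far})$. Since the algorithm includes every computed $P_{\tau'}^i$ into the global output $P_\tau$ as soon as it is computed, we get $Q_\tau \subseteq P_\tau$. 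Second, the expansion: $G_\tau[V \setminus Q_\tau] = G_\tau[V(X_\tau^{\ell+1})] = X_\tau^{\ell+1}$, which has conductance at least $\phi_{\ell+1}$ by the previous lemma. Third, the arithmetic $\phi_{\ell+1} = \phi \cdot 6^{-(\ell+1)} \ge 2^{-O(\sqrt{\log m})} \cdot 6^{-O(\sqrt{\log m})} = 2^{-O(\sqrt{\log m})}$, using $6^{\sqrt{\log m}} = 2^{O(\sqrt{\log m})}$ and $\phi \ge 1/\gamma = 2^{-O(\sqrt{\log m})}$.

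**The main obstacle.** The one subtlety is a timing/bookkeeping issue: the $\prune$ computations at the intermediate levels are not instantaneous — the algorithm spreads each $\prune_{\phi_{i-1}}$ call over $T/\Delta^i$ update steps — so I must be careful that at the moment we read off $X_\tau^{\ell+1}$, all the ancestor pruning sets that define its vertex set have indeed been fully computed and already folded into $P_\tau$. This is precisely why the algorithm "ensures we have finished executing" the level-$i$ prune by the time $\tau = \round_i(\tau)$, and why $P$ is defined to absorb each $P_{\tau'}^i$ the moment it is finished. So the argument is: the chain of graphs witnessing that $X_\tau^{\ell+1}$ is an induced expander of time $\tau$ (as established in the previous lemma's induction) is built from prunes that have all completed by time $\tau$, hence the union of their pruning sets — which is exactly $Q_\tau = V \setminus V(X_\tau^{\ell+1})$ — is contained in $P_\tau$. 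I would state this cleanly by tracking, for each level $i$, which completed-prune's output $X^i$ is "current" at time $\tau$, and observing $V(X^{i}_\tau) \subseteq V(X^{i-1}_\tau) \setminus (\text{the }P^i\text{ used})$ with both sides already reflected in $P_\tau$.
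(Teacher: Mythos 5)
Your proposal is correct and matches the paper's proof: the paper likewise sets $Q_{\tau}=V\setminus V(X_{\tau}^{\ell+1})$, invokes the preceding lemma to get $\Phi_{G_{\tau}[V(X_{\tau}^{\ell+1})]}\ge\phi_{\ell+1}=\phi/6^{\ell+1}=2^{-O(\sqrt{\log m})}$, and notes $V\setminus V(X_{\tau}^{\ell+1})\subseteq P_{\tau}$ because every computed pruning set is folded into $P$ as soon as it is finished. Your extra care about the staggered completion of the level-$i$ prunes is exactly the bookkeeping the paper leaves as an observation, so nothing further is needed.
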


\begin{proof}
Observe that $V-V(X_{\tau}^{\ell+1})\subseteq P_{\tau}$ and $\Phi_{G_{\tau}[V(X_{\tau}^{\ell+1})]}\ge\phi_{\ell+1}=\phi/6^{\ell+1}$.
By setting $Q_{\tau}=V \setminus V(X_{\tau}^{\ell+1})$ for each $\tau$, we
are done.
\end{proof}
\begin{lemma}
The worst-case update time is $O((\Delta\log m)/\phi_{\ell}^{2})=O((T^{1/\ell}\log m)6^{2\ell}/\phi^{2})=2^{O(\sqrt{\log m})}$.
\end{lemma}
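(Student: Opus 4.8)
The plan is to bound, at a single update step $\tau$, the total work performed across all $\ell+1$ levels by amortizing each level's pruning routine into the step, and then to substitute the parameters. The enabling observation is a look-back built into the schedule: the batch whose output $(X^i_\tau,P^i_\tau)$ is due at level $i$ at a time $\tau=\round_i(\tau)$ consumes only deletions of index at most $\round_i(\tau-1)=\tau-T/\Delta^i$, and it runs $\prune_{\phi_{i-1}}$ on $X^{i-1}_\tau$, which (by the induced-expander structure lemma above together with \Cref{fact:induced expander}) is an induced subgraph of $G_0$, hence has at most $m$ edges and is known well before the deadline. So the whole input to this batch is fixed $T/\Delta^i$ steps before it must be finished, and we may run the amortized pruning algorithm of \Cref{thm:amortized prune} on it as a background computation, handing out an equal slice of its total running time at each of the $T/\Delta^i$ update steps of the window ending at $\tau$. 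Since consecutive level-$i$ deadlines are $T/\Delta^i$ steps apart, at most one level-$i$ batch is ever in progress, so the per-step cost is the sum over $i$ of ``(one batch's total time)$\;\cdot\;\Delta^i/T$''.

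Now I estimate one batch's total time. For $i\in\{1,\dots,\ell\}$ the input graph has at most $m$ edges and the deletion sequence has length $\round_i(\tau-1)-\max\{0,\round_{i-1}(\round_{i-1}(\tau)-1)\}=O(T/\Delta^{i-1})$, so by \Cref{thm:amortized prune} the batch costs $O\bigl((T/\Delta^{i-1})\log m/\phi_{i-1}^2\bigr)$ in total, i.e.
\[ O\!\left(\tfrac{(T/\Delta^{i-1})\log m}{\phi_{i-1}^2}\cdot\tfrac{\Delta^i}{T}\right)=O\!\left(\tfrac{\Delta\log m}{\phi_{i-1}^2}\right) \]
per step. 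At level $\ell+1$ the batch is the single deletion $\sigma_{[\tau,\tau]}$ on $X^\ell_\tau$ (again at most $m$ edges), which by \Cref{thm:amortized prune} with $k=1$ costs $O(\log m/\phi_\ell^2)$ and is done inside step $\tau$. Summing over all levels and using that $\phi_i=\phi/6^i$ is decreasing in $i$, the worst-case per-step time is $\sum_{i=1}^{\ell}O(\Delta\log m/\phi_{i-1}^2)+O(\log m/\phi_\ell^2)=O(\ell\cdot\Delta\log m/\phi_\ell^2)$, which is $O(\Delta\log m/\phi_\ell^2)$ up to the $\poly(\log m)$ factor $\ell$; writing $\Delta=T^{1/\ell}$ and $1/\phi_\ell^2=6^{2\ell}/\phi^2$ gives the stated $O\bigl((T^{1/\ell}\log m)6^{2\ell}/\phi^2\bigr)$. (Any step with $\tau\ge T$ simply reports $P_\tau=V$ and costs $O(1)$.)

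It remains to see this is $2^{O(\sqrt{\log m})}$. With $\ell=\lceil\sqrt{\log m}\rceil$, $T=\phi m/10$, and $\phi\ge 1/\gamma\ge 2^{-O(\sqrt{\log m})}$ we get $\log\Delta=(\log T)/\ell=O(\log m)/\sqrt{\log m}=O(\sqrt{\log m})$, so $\Delta=2^{O(\sqrt{\log m})}$; similarly $6^{2\ell}=2^{O(\sqrt{\log m})}$, $1/\phi^2=2^{O(\sqrt{\log m})}$, and $\ell\log m=\poly(\log m)=2^{O(\sqrt{\log m})}$, and their product is $2^{O(\sqrt{\log m})}$. The part needing the most care is the de-amortization bookkeeping: one must check that the amortized routine of \Cref{thm:amortized prune} can be suspended and resumed after $O(1)$ primitive operations so that its time budget is genuinely parceled out across its window; that the windows of the various levels are aligned so that every batch's input (in particular the graph $X^{i-1}_\tau$ produced by the level below) is available strictly before that batch's deadline and at most $O(\ell)$ batches are simultaneously active; and that the size hypothesis $k\le\phi_{i-1}|E(X^{i-1}_\tau)|/10$ of \Cref{thm:amortized prune} holds in the regime where pruning has not already forced $P_\tau=V$.
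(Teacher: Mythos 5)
Your proof is correct and follows essentially the same route as the paper: bound one level-$i$ batch by \Cref{thm:amortized prune} as $O\big((T/\Delta^{i-1})\log m/\phi_{i-1}^{2}\big)$, spread that work over the $T/\Delta^{i}$ steps of its window to get $O(\Delta\log m/\phi_{i-1}^{2})$ per step per level, add $O(\log m/\phi_{\ell}^{2})$ for level $\ell+1$, and substitute $\Delta=T^{1/\ell}$, $\ell=\lceil\sqrt{\log m}\rceil$, $\phi\ge 1/\gamma$. The only (harmless) difference is that the paper sums the levels via the geometric series $\sum_{i\le\ell}1/\phi_{i}^{2}=O(1/\phi_{\ell}^{2})$, obtaining the stated $O(\Delta\log m/\phi_{\ell}^{2})$ exactly, whereas you bound the sum by $\ell$ times its largest term and absorb the extra $\ell$ factor into $2^{O(\sqrt{\log m})}$; the scheduling caveats you flag at the end are likewise left implicit in the paper's own proof.
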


\begin{proof}
For $i\le\ell$, by \Cref{thm:amortized prune}, the time for executing
$$(X_{\tau}^{i},P_{\tau}^{i})=\prune_{\phi_{i-1}}\left(X_{\tau}^{i-1},\sigma_{[\max\{0,\round_{i-1}(\round_{i-1}(\tau)-1)\}+1,\round_{i}(\tau-1)]}\right)$$
is at most $O\left(\frac{T\log m}{\Delta^{i-1}\phi_{i-1}^{2}}\right)$ and $\vol(P_{\tau}^{i})=O\left(\frac{T}{\Delta^{i-1}\phi_{i-1}^{2}}\right)$.
Additionally, the time for including the output $P_{\tau}^{i}$ into
$P$ is $O(|P_{\tau}^{i}|)$, and the time for constructing $X_{\tau}^{i}=X_{\tau}^{i-1}-P_{\tau}^{i}$
is $O(\vol(P_{\tau}^{i}))$. In total, this takes $O\left(\frac{T\log m}{\Delta^{i-1}\phi_{i-1}^{2}}\right)$.

We will distribute the work above evenly into each time step between
time $\round_{i}(\tau-1)+1$ and $\round_{i}(\tau)$. As $\round_{i}(\tau)=\tau$
and so $\round_{i}(\tau-1)\le\tau-T/\Delta^{i}$, the work on each
step is at most $O\left(\frac{T\log m}{\Delta^{i-1}\phi_{i-1}^{2}}\right)/(T/\Delta^{i})=O(\Delta\log m/\phi_{i-1}^{2})$
in worst-case. For $i=\ell+1$, executing $\prune_{\phi_{\ell}}(X_{\tau}^{\ell},\sigma_{[\tau,\tau]})$,
including $P_{\tau}^{\ell+1}$ into $P$, and constructing $X_{\tau}^{\ell+1}$
take $O\left(\log m/\phi_{\ell}^{2}\right)$ per step. Hence, the worst-case
update time for each time step is $\sum_{i=1}^{\ell}O(\Delta\log m/\phi_{i-1}^{2})+O(\log m/\phi_{\ell}^{2})=O(\Delta\log m/\phi_{\ell}^{2})$
because $\sum_{i\le\ell}1/\phi_{i}^{2}=O(1/\phi_{\ell}^{2})$. 
\end{proof}
\begin{lemma}
For any time $\tau$, $P_{\tau-1}\subseteq P_{\tau}$ and $\vol(P_{\tau})\le\tau2^{O(\sqrt{\log m})}$.
\end{lemma}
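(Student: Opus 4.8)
The plan is to treat the two claims separately, and neither requires the expander structure from the preceding lemmas — only the \emph{unconditional} volume bound $\vol(P_i)\le 8i/\phi$ from \Cref{thm:amortized prune}. Monotonicity $P_{\tau-1}\subseteq P_\tau$ is immediate from the description of the algorithm: the global set $P$ is only ever enlarged. Each time a call $\prune_{\phi_{i-1}}(\cdot,\cdot)$ at some level $i$ produces a set $P^i_{\tau'}$ it is added into $P$, and no element is ever removed; hence $P_{\tau-1}\subseteq P_\tau$ for every $\tau$.

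For the volume bound I would write $\vol(P_\tau)\le\sum_{i=1}^{\ell+1}V_i(\tau)$, where $V_i(\tau)$ is the total volume of all level-$i$ pruning sets whose computation has been started by time $\tau$, and bound each $V_i(\tau)$ as (number of level-$i$ calls begun by time $\tau$)$\,\times\,$(maximum volume of one such call). For a level $i\le\ell$ a new call is begun only when $\tau'=\round_i(\tau')$ and $\tau'>T/\Delta^i$, i.e.\ at multiples of $T/\Delta^i$; since we may assume $k\le T$, there are at most $\lceil\tau\Delta^i/T\rceil$ of them. By the bookkeeping of the $\round$-operators, the deletion segment $\sigma_{[\max\{0,\round_{i-1}(\round_{i-1}(\tau)-1)\}+1,\round_i(\tau-1)]}$ fed into such a call lies between two consecutive $\round_{i-1}$-boundaries and hence has length $O(T/\Delta^{i-1})$; applying \Cref{thm:amortized prune} with parameter $\phi_{i-1}$ then gives $\vol(P^i_\tau)=O\big((T/\Delta^{i-1})/\phi_{i-1}\big)$. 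Therefore $V_i(\tau)=O\big(\tfrac{\tau\Delta^i}{T}\cdot\tfrac{T}{\Delta^{i-1}\phi_{i-1}}\big)=O(\tau\Delta/\phi_{i-1})$. For the top level $i=\ell+1$ one call $\prune_{\phi_\ell}(X^\ell_\tau,\sigma_{[\tau,\tau]})$ is made at every step on a single deletion, contributing volume $O(1/\phi_\ell)$ each, so $V_{\ell+1}(\tau)=O(\tau/\phi_\ell)$.

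Summing, $\vol(P_\tau)=O\big(\tau(\ell\Delta/\phi_\ell+1/\phi_\ell)\big)=O(\tau\,\ell\Delta/\phi_\ell)$, and it remains to substitute $\ell=\lceil\sqrt{\log m}\rceil$, $\Delta=T^{1/\ell}$, $\phi_\ell=\phi/6^\ell$, and $1/\phi\le\gamma=2^{O(\sqrt{\log m})}$. Since $T\le\phi m/10\le m$ we get $\Delta=T^{1/\lceil\sqrt{\log m}\rceil}\le m^{1/\sqrt{\log m}}=2^{\sqrt{\log m}}$, while $6^\ell$, $\ell$, and $1/\phi$ are each $2^{O(\sqrt{\log m})}$; hence $\ell\Delta/\phi_\ell=2^{O(\sqrt{\log m})}$ and $\vol(P_\tau)\le\tau\,2^{O(\sqrt{\log m})}$, the implicit constant being absorbed into $\gamma$, which is the asserted bound $\vol(P_\tau)\le\tau\gamma$.

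I expect the main obstacle to be the careful accounting around the $\round_i$ operators — verifying that each level-$i$ rebuild genuinely processes only $O(T/\Delta^{i-1})$ deletions (so that both the per-call volume via \Cref{thm:amortized prune} and, implicitly, the precondition $k\le\phi_{i-1}\vol/10$ of that theorem are met, the latter simply forcing the whole current sub-graph to be pruned in the degenerate case, which only helps the bound) and that the geometric balance $\Delta^i$ calls of size $T/\Delta^{i-1}$ makes every level contribute the same $O(\tau\Delta/\phi_\ell)$ — so that summing over the $\ell+1=O(\sqrt{\log m})$ levels keeps the blow-up subpolynomial.
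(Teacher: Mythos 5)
Your proposal is correct and follows essentially the same route as the paper's proof: monotonicity is immediate since $P$ only grows, and the volume bound comes from a level-by-level accounting of (number of $\prune$ calls begun by time $\tau$) times (per-call volume from \Cref{thm:amortized prune}, proportional to the length of the deletion segment fed in), summed over the $\ell+1=O(\sqrt{\log m})$ levels and closed by substituting $\Delta=T^{1/\ell}$, $\phi_i=\phi/6^i$, $1/\phi\le\gamma$. The only difference is cosmetic: the paper splits each level's calls into those before and after $\tau_a=\round_{i-1}(\round_{i-1}(\tau)-1)$, whereas your single uniform count $O(\tau\Delta^i/T)$ times the uniform per-call bound $O\bigl((T/\Delta^{i-1})/\phi_{i-1}\bigr)$ yields the same $\tau\,2^{O(\sqrt{\log m})}$ bound.
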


\begin{proof}
It is obvious that $P_{\tau-1}\subseteq P_{\tau}$ because we never
remove any vertex out of $P$. Next, we bound $\vol(P_{\tau})$. Fix
$i$. Let $\tau_{a}=\round_{i-1}(\round_{i-1}(\tau)-1)$. We bound
$\vol(\bigcup_{\tau'=1}^{\tau_{a}}P_{\tau'}^{i})$ and $\vol(\bigcup_{\tau'=\tau_{a}}^{\tau}P_{\tau'}^{i})$. 

To bound $\vol(\bigcup_{\tau'=1}^{\tau_{a}}P_{\tau'}^{i})$, observe
that if $\tau_{a}>0$, then $\tau_{a}\ge T/\Delta^{i-1}$. So we assume
$\tau_{a}\ge T/\Delta^{i-1}$. Between $\tau'\in[1,\tau_{a}]$, the
number of times we compute a new $P_{\tau'}^{i}$ is $O(\frac{\tau_{a}}{T/\Delta^{i}})$.
Each such $P_{\tau}^{i}$ has volume $\vol(P_{\tau}^{i})=O((\frac{T}{\Delta^{i-1}}\log m)/\phi_{i-1}^{2})$.
So $\vol(\bigcup_{\tau'=1}^{\tau_{a}}P_{\tau'}^{i})=O((\tau_{a}\Delta\log m)/\phi_{i-1}^{2})=\tau2^{O(\sqrt{\log m})}$
as $\tau_{a}\le\tau$. To bound $\vol(\bigcup_{\tau'=\tau_{a}}^{\tau}P_{\tau'}^{i})$,
the number of time steps $\tau'\in[\tau_{a},\tau]$ we compute a new
$P_{\tau'}^{i}$ is $|\{\tau'\mid\tau'=\round_{i}(\tau')\}|=O(\Delta)$.
Each such $P_{\tau}^{i}$ has volume $\vol(P_{\tau}^{i})=O((\tau\log m)/\phi_{i-1}^{2})$.
So $\vol(\bigcup_{\tau'=\tau_{a}}^{\tau}P_{\tau'}^{i})=O((\tau_{a}\Delta\log m)/\phi_{i-1}^{2})=\tau2^{O(\sqrt{\log m})}$.
Summing up over all $i$, we are done.
\end{proof}
It remains to prove that the assumption that $k\le T$ is without
loss of generality.
\begin{proposition}
We can assume the number of edge deletions is at most $T$, i.e.,
$k\le T$. Also $P_{\tau}=V$ when $\tau\ge m/2^{O(\sqrt{\log m})}$.
\end{proposition}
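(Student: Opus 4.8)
The plan is to establish two things at once: that we may restrict attention to deletion sequences of length $k\le T$, and that past step $T$ the pruned set is literally all of $V$. First I would recall why the cap $k\le T$ was needed in the first place: the algorithm described above only processes time steps $\tau<T$, and this is precisely the regime in which every call to the amortized pruning routine (\Cref{thm:amortized prune}) satisfies its precondition — at level $1$, $\prune_{\phi_0}$ is applied to the $m$-edge graph $G_0$ over a subsequence of length $\round_1(\tau-1)\le\tau-1<T=\phi_0 m/10$, and the $\round_i$-staggering pushes the analogous ``at most a $\phi_{i-1}$-fraction of the current edge count'' bound down through the deeper levels. So the $k\le T$ assumption is exactly what keeps the whole recursion well-defined.

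Next I would handle a longer adversarial sequence $\sigma=(e_1,\dots,e_k)$ with $k>T$ by running the algorithm on its length-$T$ prefix and simply setting $P_\tau:=V$ for every step $\tau$ with $T<\tau\le k$. The substance is then checking that the four conclusions of \Cref{thm:wc prune node} persist at these extra steps. Monotonicity, $P_{\tau-1}\subseteq V=P_\tau$, is immediate (using $P_T\subseteq V$ for $\tau=T+1$). For the volume bound, $\vol(P_\tau)\le\vol(V)=O(m)$ while $\tau>T\ge m/(10\gamma)$ by $\phi\ge 1/\gamma$, so after rescaling the output parameter $\gamma$ by a constant factor — still $2^{O(\sqrt{\log m})}$ — we get $\vol(P_\tau)\le\tau\gamma$; for $\tau\le T$ the bound was already shown in the preceding lemma. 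The expansion clause holds vacuously with $Q=V$, since $G_\tau[V\setminus Q]$ is then the empty graph, which we regard as a $(1/\gamma)$-expander (there is nothing left to cut). And $P_\tau=V$ for all $\tau\ge T$ holds by construction.

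For the second assertion, $P_\tau=V$ whenever $\tau\ge m/2^{O(\sqrt{\log m})}$, I would observe that we are free to run the entire construction with conductance parameter $1/\gamma$ even when the true conductance of $G_0$ exceeds it: a $\phi$-expander with $\phi\ge 1/\gamma$ is in particular a $(1/\gamma)$-expander, and the theorem only promises $(1/\gamma)$-expansion of the leftover component anyway, so nothing is lost. This makes $T=m/(10\gamma)=m/2^{O(\sqrt{\log m})}$, and the previous paragraph then gives $P_\tau=V$ for all $\tau\ge T$, as claimed. The only delicate point in the whole argument — and the step I would be most careful about — is keeping the various $2^{O(\sqrt{\log m})}$ factors mutually consistent: the requirement ``$P_\tau=V$ once $\tau=\Theta(m/\gamma)$'' together with ``$\vol(P_\tau)\le\tau\gamma$'' forces the full-pruning threshold and the volume-bound constant to differ by a constant factor, and forces us to use $T=\Theta(m/\gamma)$ rather than $\phi m/10$ verbatim when $\phi$ is not already $\Theta(1/\gamma)$. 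Once a single coherent chain of constants is fixed, there is no further combinatorial content beyond what the earlier lemmas supply.
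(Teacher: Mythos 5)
There is a genuine gap, and it is precisely the point the proposition is there to protect: the \emph{worst-case} per-step guarantees. Your plan runs the algorithm on the length-$T$ prefix and then declares $P_\tau:=V$ for all $\tau>T$. At the single step $\tau=T+1$ the pruned set jumps from $P_T$ (whose volume is only $O(T\gamma)\approx m/10$ under your choice $T=m/(10\gamma)$) to all of $V$, i.e.\ it grows by $\Omega(m)$ vertices in one update. This violates the $O(\gamma)$ worst-case time for updating $P_{\tau-1}$ to $P_\tau$ claimed in \Cref{thm:wc prune node}, and it breaks the per-step increment bound that is explicitly needed downstream: \Cref{thm:wc prune edge} promises $|P_i\setminus P_{i-1}|\le\gamma$, and the worst-case reduction (\Cref{def:worst-case:decremental_algorithm}, \Cref{lem:fully_dynamic_growing_output}) feeds exactly these per-step pruned increments, of size $2^{O(\sqrt{\log n})}$, into the decremental algorithm $\A_X$. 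So "this does not affect the correctness of any claim above" fails for your modification at that one step; the volume bound and the vacuous expansion clause ($Q=V$) are fine, but bounded recourse/update time is not.

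The paper's device avoids this by spreading the artificial pruning from the very beginning: partition $V$ into blocks of size $\gamma$ and, after each step $\tau$, additionally insert the $\tau$-th block into $P$. Each step then adds only $O(\gamma)$ extra vertices (and, on the constant-degree graphs this lemma is applied to, $O(\gamma)$ volume), so all previous claims survive, and after $m/\gamma\le T$ steps one has $P_\tau=V$; beyond that point there is nothing left to prune, which is what justifies assuming $k\le T$. Note this also makes your second manoeuvre -- rerunning the whole construction with conductance parameter $1/\gamma$ so as to force $T=m/(10\gamma)$ -- unnecessary: that rescaling is legitimate in itself (a $\phi$-expander is a $(1/\gamma)$-expander and all bounds stay $2^{O(\sqrt{\log m})}$), but the gradual-addition argument yields both assertions without touching $\phi$ or $T$, and without the problematic one-shot jump. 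If you amortize your "set $P=V$" step over earlier updates in this way, your proof becomes essentially the paper's.
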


\begin{proof}
We partition $V$ into sets $V_{1},\dots,V_{T}$ where $|V_{i}|=\gamma=2^{O(\sqrt{\log m})}$.
After each step $\tau$, we artificially add $V_{\tau}$ into $P$.
This does not affect the correctness of any claim above, but additionally,
we have that once $\tau\ge m/\gamma$, then $P_{\tau}=V$. In particular,
when $\tau\ge T\ge m/\gamma$, we can just stop our algorithm as there
is nothing to be done further.
\end{proof}

\section{Extension of Eppstein et al.~Sparsification}
\label{sec:sparsification}

In this section, we give another crucial primitive for obtaining the blackbox reduction in \Cref{sec:worst_case_blackbox} for algorithms with worst-case update time.

Sometimes it is  easier to design a dynamic algorithm
that is fast on sparse input graphs.
In this section we will show that, as long as the maintained graph satisfies the properties \eqref{con:union} and 
\eqref{con:nested}, the existence of a dynamic algorithm on sparse graphs is sufficient for us to design the one for dense graphs.  Our high-level idea behind this reduction is to sparsify the dense input graph via a modification of the sparsification technique of \cite{EppsteinGIN97}. 
Remember that any graph problem $\H$ that satisfies  \eqref{con:union} and \eqref{con:nested} has the following property:
Let $\bigcup_{i=1}^d G_i = G$ be a decomposition of some graph $G$ into $d$ subgraphs.
Assume we maintain a sparsifier $H_i \in \mathcal{H}(G_i,\epsilon)$ for each $i=1,...,d$,
and additionally a sparsifier $H \in \H(\bigcup_{i=1}^d H_i, \epsilon)$.
Then, it holds that   $H \in \H(G, 2\epsilon)$,
because $\bigcup_{i=1}^d H_i \in \H(G, \epsilon)$ by \eqref{con:union} 
and $\H(\bigcup_{i=1}^d H_i, \epsilon) \subset \H(G, 2\epsilon)$ by \eqref{con:nested}. Hence, it is sufficient for us to assume that the input graph is sparser than the initial input graph $G$, since every subgraph  $G_i$ of  $G$ is clearly sparser than $G$ and the union of every $G_i$'s sparsifier $H_i$, i.e., $\bigcup_{i=1}^d H_i$, is sparser than $G$. Applying this reduction technique recursively gives us the following result:

\begin{theorem}\label{thm:eppstein}
Let $N \ge 1$, $d \ge 2$ be some fixed parameters and $L :=  \lceil \log(N)/ \log d \rceil$. Assume that   $\H$ satisfies \eqref{con:union} and \eqref{con:nested}, and the following holds:
\begin{itemize}
    \item There exists an algorithm $\A$ for $\H(\epsilon)$
with the property that the ratio
between the largest and the  smallest edge weight in the output graph, 
compared to the one of the input graph, increases by at most $w \ge 1$.
\item The input graph $G$ for algorithm $\A$ has $n$ nodes, $m$ edges, and the ratio between the largest and smallest edge weights is bounded by $wW^L$ for some  $W\geq 1$. 
\end{itemize}
We define $T(n,m), R(n), S(n), P(m)$ as follows:
\begin{itemize}
	\item $P(m)\ge m$ is the preprocessing time.
	\item $S(n) \ge n$ is an upper bound on the number of edges in the output graph.
	\item $T(n,m)$ is the update time of $\A$.
    \item $R(n, m)$ is the recourse. %
\end{itemize}
Then there exists an algorithm $\B$ for problem $\H(\epsilon\cdot L)$
on up to $N$ node graphs\footnote{%
This means the number of nodes must be bounded by $N$ throughout all updates. }
with ratio  between the largest  and the  smallest edge weight bounded by  $W$.
The output graph size is $S$
with weight ratio $Ww^L$.
Algorithm $\B$ has update time $O(L \cdot R(n,dS(n))^{L} \cdot T(n,d S(n)))$
and recourse $O(R(n,dS(n))^{L})$.
The preprocessing time is bounded by $O(L N d P(dS(n)))$,
though if $\H(\epsilon, G)$ is a set of subgraphs of $G$,
then the preprocessing decreases to
$O(L P(m))$.

\end{theorem}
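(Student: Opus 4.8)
The plan is to build the claimed algorithm $\B$ by recursion on the depth parameter $L = \lceil \log N / \log d\rceil$, exactly mirroring the hierarchical decomposition sketched in the overview. At depth $0$ (a graph with at most $dS(n)$ edges, since that is the size of a union of $d$ sparsifiers), we run $\A$ directly. For the recursive step, given an input graph $G$ on at most $N' = N$ nodes, we partition its edge set into $d$ equally-sized parts $G_1,\dots,G_d$, recursively maintain (by an algorithm of depth $L-1$) a sparsifier $H_i \in \H(G_i, \epsilon(L-1))$ of size $S(n)$ for each $i$, and then run $\A$ on $\bigcup_{i=1}^d H_i$, which has at most $dS(n)$ edges, to obtain $H \in \H(\bigcup_i H_i, \epsilon)$. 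The correctness is immediate from the two properties we are allowed to assume: $\bigcup_i H_i \in \H(G, \epsilon(L-1))$ by \eqref{con:union} (applied with all scalars $s_i = 1$), and then $H \in \H(\bigcup_i H_i, \epsilon) \subseteq \H(G, \epsilon + \epsilon(L-1)) = \H(G, \epsilon L)$ by \eqref{con:nested}. The weight-ratio bookkeeping is routine: each application of $\A$ multiplies the weight ratio by at most $w$, and there are $L$ nested applications along any root-to-leaf path, so the output weight ratio is at most $W w^L$; dually, the input to the depth-$j$ instance has weight ratio at most $wW^{j} \le wW^L$, which is what the hypothesis on $\A$ permits.

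Next I would do the complexity accounting, which is the heart of the quantitative claim. A single adversarial edge update to $G$ lands in exactly one leaf graph $G_i$ at depth $L$; running $\A$ there changes the leaf sparsifier $H_i$ by at most $R(n, dS(n))$ edges. Each of those edge changes is an update to the depth-$(L-1)$ instance that maintains a sparsifier of $\bigcup_i H_i$ at the parent, which in turn changes that parent's output by at most $R(n,dS(n))$ edges per incoming change, and so on up the tree. Hence after $L$ levels the number of changes propagating to the root is at most $R(n,dS(n))^{L}$, giving recourse $O(R(n,dS(n))^{L})$; since each such change costs one invocation of $\A$'s update routine at cost $T(n,dS(n))$, and there are $L$ levels each contributing a factor-$R$ blow-up, the total update time is $O(L \cdot R(n,dS(n))^{L} \cdot T(n,dS(n)))$. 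For preprocessing: the tree has $O(L)$ levels, with a total of $O(d^j)$ nodes at level $j$ and hence $O(N)$-ish nodes per level after we observe each internal node's input graph has $O(dS(n))$ edges; this gives $O(L N d\, P(dS(n)))$ in general, but if $\H(G,\epsilon)$ always consists of subgraphs of $G$ then the $d$ subgraphs $G_1,\dots,G_d$ at an internal node are literally edge-disjoint pieces of one graph, so the total edge count across each level is just $m$, and $P$'s superadditivity ($P(x)\ge x$) lets us collapse the per-level cost to $O(P(m))$, hence $O(L\,P(m))$ overall.

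There are a few technical points I would need to handle carefully but which I expect to be routine. First, the partition of $G_i$'s edges into $d$ parts must be maintained dynamically: when an edge is inserted into $G$ I assign it to the currently-smallest part; periodic rebalancing can be folded into the update time without changing the stated bounds, or one can argue the parts stay within a constant factor of $m/d$ so the depth bound $L$ is unaffected. Second, I need the recursion to bottom out correctly — the depth is chosen precisely so that a graph on $N$ nodes (hence at most $\binom{N}{2}$ edges, but more usefully, since the leaves carry $\le dS(n)$ edges) reaches the base case, and one checks $d^L \ge N$. Third, at internal nodes the input $\bigcup_i H_i$ can be a weighted multigraph and $\A$ must tolerate that, which is exactly why the hypothesis is phrased in terms of weight ratios rather than unweighted graphs.

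The main obstacle is the $R(n,dS(n))^{L}$ factor in the update time: this is the exponential-in-depth propagation blowup that Eppstein et al.\ avoided by assuming a stability property. Here we do not have that property, so we cannot kill the exponential; instead the theorem's utility relies entirely on choosing $d$ (and hence $L$) so that $R(n,dS(n))^{L}$ is only a sub-polynomial factor — e.g.\ if $R$ is polylogarithmic and we take $d = n^{\Theta(1/L)}$ with $L = \Theta(\sqrt{\log n / \log\log n})$, then $R^L = 2^{O(\sqrt{\log n \log\log n})} = n^{o(1)}$. So the real work in the proof is not a clever argument but a disciplined induction that (a) tracks the approximation accumulating additively as $\epsilon L$, (b) tracks the weight ratio accumulating multiplicatively as $w^L$, and (c) tracks the recourse accumulating multiplicatively as $R^L$, and then verifies each of the four stated time/size bounds against this bookkeeping. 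I would present it as a single induction on $L$ with the base case being a direct call to $\A$.
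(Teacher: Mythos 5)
Your proposal matches the paper's proof essentially step for step: the same $d$-ary hierarchical partition of the edge set with leaves handled directly by $\A$, correctness via \eqref{con:union} and \eqref{con:nested} accumulating an additive $\epsilon L$ approximation, the multiplicative $R(n,dS(n))^{L}$ propagation bound giving the stated update time and recourse, and the same preprocessing accounting including the collapse to $O(L\,P(m))$ in the subgraph case via edge-disjointness per level and $P(x)\ge x$. The only nit is a bookkeeping slip in your weight-ratio sentence (the graph fed to a depth-$j$ application of $\A$ has ratio at most $W w^{j}$, not $wW^{j}$), which does not affect the argument.
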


\begin{proof}
We will first describe the high level idea:
We split the graph $G$ into $d$ equally sized parts $G_1,...,G_d$ and
use the given dynamic algorithm to maintain $H_1,...,H_d$ where
$H_i \in \mathcal{H}(G_i, p)$.
Note that one update to $G$ leads to a single update in only one of the
$G_i$ at cost $T(n,m/d)$ and recourse $R(n, m/d)$.
We also maintain $H \in \mathcal{H}( \bigcup_{i=1}^d H_i, p )$ in
$R(n, m/d) \cdot T(n,dS(n))$ time, 
because $\bigcup_{i=1}^d H_i$ has $dS(n)$ edges 
and  one update to some $G_i$ leads to $R(n,m/d)$ changes in $H_i$.
This means we now obtained a new algorithm, which for some $G$ maintains
$H \in \mathcal{H}(G, 2p)$ in $R(n, m/d) \cdot T(n,dS(n))$ update time.
To prove the lemma, the trick  is to repeat this
recursively by splitting each $G_i$ into another $d$ smaller graphs.

\paragraph{Constructing smaller graphs.}

Let $G^{(0,1)} := G$  be the original
graph and $G^{(\ell+1,d(i-1)+1)},\ldots,G^{(\ell+1,di)}$ the $d$ graphs obtained
by splitting $G^{(\ell,i)}$ into $d$ equal sized graphs. So $G^{(\ell,i)}$
refers to the $i$th graph in level $\ell$, i.e., the number of recursions. 

\paragraph{Maintaining the graph property.}

Let $L = \lceil \log (N) / \log d \rceil$ be the bottom level,
then each $G^{(L,i)}$ has at most $m/N \le n \le S(n)$ edges.
We use the given algorithm $\mathcal{A}$ to maintain
$H^{(L,i)} \in \mathcal{H}(G^{(L,i)}, p)$ for every $1\leq i\leq d^L$, and use
 the same  algorithm to maintain
$H^{(\ell ,i)} \in \mathcal{H}\left(\bigcup_{k=1}^d H^{(\ell+1, di+k)}, p\right)$
for every $0 \leq \ell
\leq L-1$ and $1\leq i \leq d^{\ell}-1$.
By induction, we have that 
$$
H^{(\ell,i)} \in 
\mathcal{H}\left(
	\bigcup_{j = d^{L-\ell}\cdot (i-1)}^{d^{L-\ell}\cdot i} G^{(L,j)}, 
	p(L-\ell)
\right)
$$
and thus $H^{(0,1)} \in \mathcal{H}(G, pL)$, 
so we are maintaining the property $\H$ of $G$.
Note that, as long as  the ratio of the edge weights of $G$ is by $W$,
  the ratio of the edge weights of $H^{(\ell,i)}$ must be bounded by $Ww^L$,
as the ratio increases by a factor of $w$ in each level.
Moreover, the size of each input graph given to $\mathcal{A}$ is bounded by $dS(n)$, 
thus every instance of $\mathcal{A}$ has update and recourse complexity $T(n,dS(n)),R(n,dS(n))$.

\paragraph{Update Propagation and Complexity.}

When there is an edge deletion to $G$, we only have to traverse the
decomposition of $G$ to find the graph $G^{(L,i)}$ that contains the edge
that needs to be deleted, and update that graph accordingly. 
Likewise, if an edge is added to $G$, simply add the edge to some
$G^{(L,i)}$ with less than $S(n)$ edges 
(we can maintain a list of such $G^{(L,i)}$).
These updates can result in $R(n,dS(n))$ changes to $H^{(L,i)}$ 
which means we must now perform $R(n,dS(n))$ updates to the algorithm that maintains
$H^{(L-1 ,j)} \in \mathcal{H}(\bigcup_{k=1}^d H^{(L, dj+k)}, p)$
where $j$ is such that $i \in [dj+1,dj+d]$. 
This is repeated recursively,
i.e. for every changed edge in some $H^{(\ell ,i)}$ for some $\ell,i$, 
we perform up to $R(n,dS(n))$ updates to the algorithm 
that maintains $H^{(\ell-1,j)} \in \mathcal{H}(\bigcup_{k=1}^d H^{(\ell, dj+k)}, p)$, 
where $j$ is such that $i \in [dj+1,dj+d]$.
This implies that, when updating one edge in $G$, 
we  need to  perform up to $R(n,dS(n))^L$ updates
to the algorithm maintaining $H^{(0,1)}$,
because the number of updates increases by a factor of $R(n,dS(n))$ for every level.
The update time is thus
$O(L \cdot R(n,dS(n))^L \cdot T(n,dS(n)))$.

\paragraph{Pre-processing Complexity.}

We are left with analysing the pre-processing time.
For any $\ell,i$ the graph $H^{(\ell,i)}$ has at most $dS(n)$ edges,
so the preprocessing of that graph requires $O(P(dS(n)))$ time.
The total complexity of initialization is thus bounded by
$$
O\left(\sum_{\ell}\sum_{i} P(dS(n))\right)
=
O(L d^L P(dS(n)))
=
O(L N d P(dS(n))).
$$
However, if  $\H(G, \epsilon)$ is a set of subgraphs,
then for any fixed $\ell$  the total number of edges 
of all $H^{(\ell,i)}$, $i=1,...,d^\ell$, together is bounded by $m$,
because they are edge disjoint and subgraphs of $G$.
Let $m_i^\ell := \sum_{j=1}^d |E[H^{(\ell-1,id+j)}]|$ be  the size of the graph that is sparsified to $H^{(\ell,i)}$, 
then the pre-processing time of layer $\ell$ can be bounded by
\begin{align*}
\sum_{i \ge 0} P(m^\ell_i)
&\le
P\left(\sum_{i \ge 0} m^\ell_i\right) \le O(P(m))
\end{align*}
since $P(m) \ge m$. Hence, the total pre-processing time is  $O(L P(m))$.
\end{proof}

\section{Worst-Case Reduction to Almost Uniform Degree Expanders}
\label{sec:worst_case_blackbox}

In this section, we finally show how to obtain the black box reduction for dynamic algorithms with worst-case time.
This reduction will allow us to deamortize our main results in \Cref{thm:main spanner,thm:main cut} formally proved in \Cref{part:algorithm}. Moreover, it is crucial for  \Cref{cor:main spectral} which shows the first non-trivial dynamic algorithm for maintaining spectral sparsifiers with worst-case update time.

To show this reduction, we combine are three important tools that we have developed from previous sections (\Cref{sec:uniform_degree_reduction,sec:pruning,sec:sparsification}): 
(1) the dynamic uniform degree expander decomposition of \Cref{thm:dynamicExpanderDecompositionUniformDegree},
(2) the improved expander pruning algorithm with worst-case update time from \Cref{thm:wc prune edge}, and 
(3) the extension of the sparsification technique from \Cref{thm:eppstein}.

The reduction for worst-case update time is a bit more complicated
than the reductions proven in \Cref{sec:simple_amortized_reduction} and \Cref{sec:uniform_degree_reduction},
because we can neither guarantee uniform degrees nor that the graph stays an expander.
We can only guarantee that the degree is initially near uniform, 
and that the graphs contains an expander that is not much smaller. This is formalized as follows:
\begin{definition}
\label{def:worst-case:decremental_algorithm}
Let $\H$ be a graph problem.
We call an algorithm $\mathcal{A}$ a 
``decremental algorithm on pruned $\phi$-sub-expanders for $\H(\epsilon)$'', 
if the following holds  for any \emph{unweighted} $n$-node graph $G$:
\begin{itemize}
\item $\mathcal{A}$ maintains $H \in \mathcal{H}(G,\epsilon)$ 
under edge deletions to $G$.
\item During the preprocessing/initialization,
the algorithm is allowed to assume 
that the initial graph $G$ is a $\phi$-expander and
for minimum degree $\Delta$ we have $\max\deg_G(v) \le O(1/\phi)\Delta$.
\item The algorithm is allowed to assume 
that the $i$th update (i.e $i$th edge deletion) 
also receives as input a set $P_i \subset E$ 
with $|P_i| \le 2^{O(\sqrt{\log n})}$. 
Let $P = \bigcup P_i$, 
then there exists a $2^{O(\sqrt{\log n})}$-expander $W \subset G$ 
with $G \setminus P \subset W$.
Further $\deg_W(v) \ge \Delta/2$ for all $v \in V(G \setminus P)$, 
where $\Delta$ is the minimum degree of $G$ during initialization.
(Note that $\deg_W(v)$ may be smaller for nodes that were completely removed by the pruning.)
\end{itemize}
\end{definition}

The main result is summarized as follows:

\begin{theorem}[Worst-Case Blackbox Reduction]
\label{thm:reduction_worst_case}
Assume $\H$ satisfies \property. We
fix some $N \ge 1$ and $d\geq 2 $ such that 
$L :=  \lceil \log(N)/ \log d \rceil$, and assume that there exists a decremental algorithm $\A$ for $\H(\epsilon)$
on pruned $\phi$-sub-expanders for any 
$\phi = O(1/\log^4 n)$. Then, there exists 
  a fully dynamic algorithm $\B$ 
for $\H(\epsilon\cdot L)$ on general \emph{weighted} on (up to) $N$ node graphs\footnote{%
This means that throughout all updates, the number of nodes is not allowed to be larger than $N$.}
whose ratio of largest to smallest weight is $W$.

For the time complexity, let $P(m)$ be the pre-processing time of $\A$ on an input \emph{unweighted} graph $G$ with $n$ nodes and $m$ edges, 
$S(n)$   the size of the output graph after the preprocessing, and 
  $T(n)$ and $R(n)$ be the worst-case update time and recourse of $\A$
and let $w \ge W$ be the ratio between the  largest and the  smallest edge weight of the maintained output graph.
Then,  algorithm $\B$ maintains an output graph of size $\tilde O(S(n) L \log(wW))$
with weight ratio $W(w/\epsilon)^L$.
The update time is
$$\
\left(R(n)^{O(L)} + (R(n)/\epsilon)^{O(L)}\right)
\cdot \left(T(n)+\frac{P(dS(n)\log (wW))}{S(n)}\right),$$
and the preprocessing time is $\tilde{O}(N d P\left(dS(n)(1+\epsilon^{-2}) \log(wW)\right))$.
However, if $\H(G, \epsilon)$ is a set of subgraphs of $G$,
then the preprocessing decreases to
$\tilde{O}(P(m)(1+\epsilon^{-2}) \log(wW))$.

\end{theorem}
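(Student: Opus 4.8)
The plan is to build algorithm $\B$ in four layers, each one closing a single gap between the restricted decremental setting of \Cref{def:worst-case:decremental_algorithm} and the fully dynamic weighted general-graph setting, and to route each layer through one of the properties \property. \textbf{Layer~1 (pruned $\phi$-sub-expanders $\to$ short-lived decremental algorithm on general unweighted graphs).} Given an unweighted $G$, first run the static uniform-degree expander decomposition of \Cref{thm:uniformDegreeExpanderDecomposition} to obtain $G=\bigcup_i G_i$ together with the near-uniform-degree $\phi$-expanders $G'_i$ and the vertex sets $X_{v,i}$, so that contracting each $X_{v,i}$ in $G'_i$ yields $G_i$. On each $G'_i$ run the worst-case expander pruning of \Cref{thm:wc prune edge}, composed with a low-degree-node removal step in the style of \Cref{lem:uniformDegreeNodePruning} to secure the $\deg_W(v)\ge\Delta/2$ guarantee; after the $j$-th deletion into $G'_i$ this produces a pruned set with batches of size $\le\gamma$ and a $(1/\gamma)$-expander $W\subseteq G'_i$ containing $G'_i\setminus P$. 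Feed the pruning batches $P_i$ to $\A$ as \Cref{def:worst-case:decremental_algorithm} prescribes, so $\A$ maintains $H'_i\in\H(G'_i\setminus P,\epsilon)$; set $H_i$ to be $H'_i$ together with the edges of $P$ mapped back through the contraction, and output $H=\bigcup_i H_i$. Correctness is the chain $H'_i\cup P\in\H((G'_i\setminus P)\cup P,\epsilon)=\H(G'_i,\epsilon)$ by \eqref{con:identity} and \eqref{con:union}, then $H_i\in\H(G_i,\epsilon)$ by \eqref{con:contraction}, then $H\in\H(G,\epsilon)$ by \eqref{con:union}. The output has size $S(n)$ plus $O(\gamma)$ extra per update, so this is only useful for about $S(n)/\gamma$ deletions and does not yet handle insertions.

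\textbf{Layer~2 (long update sequences, worst-case time, via background rebuild and slow transition).} Fix an epoch length $k=\Theta(S(n))$. Inside an epoch run the Layer-1 algorithm and append inserted edges directly to the output ($\le k$ of them). Every $k$ updates, reinitialize a fresh Layer-1 copy in the background, spreading its $O(P(\cdot))$ preprocessing over the epoch, and transition from the old output $H_{\mathrm{old}}$ to the new output $H_{\mathrm{new}}$ by removing $H_{\mathrm{old}}$'s edges a few at a time once $H_{\mathrm{new}}$ is fully built; property \eqref{con:increment}, applied with a transition parameter $\delta=\Theta(\epsilon)$, certifies that every intermediate graph $(e^{\delta}-1)H\cup H_{\mathrm{new}}$ with $H\subseteq H_{\mathrm{old}}$ stays in $\H(G,\epsilon+\delta)$, so the output is valid throughout. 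This gives a fully dynamic unweighted algorithm with worst-case update time $O\big(T(n)+R(n)+P(m)/S(n)\big)$, recourse $O(R(n))$, output size $\tilde O(S(n))$, and an additive $O(\epsilon)$ in the approximation.

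\textbf{Layers~3 and 4 (sparsification, then weights).} The $P(m)/S(n)$ term is only acceptable when $m$ is already near-linear — a chicken-and-egg situation broken by \Cref{thm:eppstein}: instantiated with the given $N,d,L$ it builds the depth-$L$ hierarchical edge decomposition, runs the Layer-2 algorithm only on graphs of $O(dS(n))$ edges, glues siblings with \eqref{con:union} and collapses the levels with \eqref{con:nested}, at the price of an $L$-fold error composition ($\epsilon\mapsto\epsilon L$), a recourse amplification to $R(n)^{O(L)}$ (it is the recourse, not the update time, that bounds how many updates one level forces on the next), and the matching update-time factor $\big(R(n)^{O(L)}+(R(n)/\epsilon)^{O(L)}\big)\cdot\big(T(n)+P(dS(n)\log(wW))/S(n)\big)$; the $(R(n)/\epsilon)^{O(L)}$ and the improved preprocessing $\tilde O(P(m)(1+\epsilon^{-2})\log(wW))$ in the subgraph case come verbatim from \Cref{thm:eppstein}. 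Finally, weighted input is handled by splitting the edge set into $\tilde O(\log(wW))$ geometric weight classes, running the unweighted algorithm on each class with weights erased, rescaling its output by the class weight, and taking the union; \eqref{con:identity} absorbs rounding weights to powers of $e^{\epsilon}$, \eqref{con:union} assembles the pieces, and this accounts for the $\log(wW)$ factors in the output size $\tilde O(S(n)L\log(wW))$ and the weight-ratio growth to $W(w/\epsilon)^{L}$.

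\textbf{Main obstacle.} The delicate point is juggling three simultaneous blow-ups at once: the pruned-set growth, which forces epochs of length only $\asymp S(n)$; the de-amortization of the background rebuild, which leans on \eqref{con:increment} and spends part of the $\epsilon$-budget; and the exponential-in-$L$ recourse amplification of \Cref{thm:eppstein}. One must choose $k\asymp S(n)$ and the transition rate so that neither the accumulated pruned edges nor the edges in flight during a transition ever inflate the output by more than a constant factor, check that the degree-floor needed by \Cref{def:worst-case:decremental_algorithm} survives the combined edge/low-degree pruning on each $G'_i$, and confirm that the final approximation is still $O(\epsilon L)$ after all compositions. Everything else is routine bookkeeping of the properties \property.
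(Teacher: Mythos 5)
Your Layers 1, 3 and 4 follow the paper's route (static uniform-degree decomposition plus worst-case pruning fed to $\A$, then \Cref{thm:eppstein}, then weight classes), but there are two places where the argument as written does not go through.

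First, your Layer 2 de-amortization is not certified by \eqref{con:increment} alone. During normal operation your output is the single copy $H_{\mathrm{old}}$ at full weight; to reach an intermediate state of the form $(e^{\delta}-1)H\cup H_{\mathrm{new}}$ you must either rescale all of $H_{\mathrm{old}}$ at once (recourse $\Omega(S(n))$ in one step, destroying the worst-case bound) or pass through mixed states such as $H_{\mathrm{old}}\cup(\text{partial }H_{\mathrm{new}})$, and for sparsifier-type problems the union of two full-weight sparsifiers is only in $\H(2G,\epsilon)$ by \eqref{con:union}, which is not in $\H(G,\epsilon+O(\epsilon))$ when $\epsilon$ is small. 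This is exactly why the paper's \Cref{lem:fully_dynamic_bounded_output} runs $C\approx 4+2/\epsilon$ phase-shifted copies, each permanently scaled by $\Delta\approx 1/C$, so that at every moment $C-1$ complete scaled copies are present and the one in transition is a subset of a valid copy; validity of every intermediate output then follows from the interpolation lemma (\Cref{lem:interpolation}), which needs \eqref{con:identity}, \eqref{con:union}, \eqref{con:increment} \emph{and} \eqref{con:nested} together. A single background copy cannot be made to work for small $\epsilon$. Relatedly, your epoch length $k=\Theta(S(n))$ lets the pruned edges inflate the output to $\Theta(S(n)R(n))$; the paper takes $k=S(n)/R(n)$, which is also why the rebuild term in the update time is $R(n)P(\cdot)/S(n)$ rather than your $P(m)/S(n)$.

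Second, your ordering of Layers 3 and 4 is backwards. \Cref{thm:eppstein} must be applied to an algorithm that already handles \emph{weighted} inputs with weight ratio up to $wW^{L}$, because the internal nodes of the hierarchy are unions $\bigcup_i H_i$ of output sparsifiers, which carry weights (ratio $w$ per level, compounding down the tree). Feeding the unweighted Layer-2 algorithm into \Cref{thm:eppstein} and splitting into weight classes only at the very end leaves the internal levels with weighted graphs your subroutine cannot process; splitting the outer graph $G$ by weight first does not help, since the same weighted unions reappear inside each class. The paper therefore proves the weighted bounded-output algorithm (\Cref{lem:fully_dynamic_weighted}, weight classes of width $e^{\epsilon/2}$) \emph{before} invoking \Cref{thm:eppstein}; you should swap your last two layers accordingly. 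A smaller point: in Layer 1 the degree floor $\deg_W(v)\ge\Delta/2$ of \Cref{def:worst-case:decremental_algorithm} cannot be delegated to \Cref{lem:uniformDegreeNodePruning}, which is amortized; the paper instead prunes, for every edge output by \Cref{thm:wc prune edge}, two extra edges at each endpoint, so that the graph handed to $\A$ loses degree twice as fast as $G\setminus P$ and the floor holds with worst-case cost $2^{O(\sqrt{\log n})}$ per update.
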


In order to prove \Cref{thm:reduction_worst_case}
we start by extending the decremental algorithm on sub-expanders
to a fully dynamic algorithm on general unweighted graphs.
The resulting algorithm has the property 
that the output graph grows more and more dense with each update,
so the algorithm is only useful for short sequences of updates.

All lemmas throughout this subsection always assume 
that $\H$ satisfies \property{}, 
and there exists a decremental algorithm $\A$ for $\H(\epsilon)$
on pruned $\phi$-sub-expanders for any $\phi = O(1/\log^4 n)$. 
For this algorithm $\A$ the preprcoessing time can be bounded by $P(m)\ge m$,
the size of the output graph after the pre-processing is bounded by $S(n)\ge n \log n$, 
and $T(n),R(n) \ge 2^{O(\sqrt{\log n})}$ are bounds for the worst-case update time and recourse of $\A$. 

\begin{lemma}\label{lem:fully_dynamic_growing_output}
Then, there exists a fully dynamic algorithm $\B$ 
for $\H(\epsilon)$ on general unweighted graphs.

The pre-processing time of $\B$ is $O(P(m))$, 
with worst-case update time $O(T(n))$ and recourse $O(R(n))$.
After $t$ updates the size of the output graph is bounded by $\tilde{O}(S(n) + t\cdot R(n))$.
\end{lemma}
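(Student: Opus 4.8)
The plan is to build the fully dynamic algorithm $\B$ by combining three ingredients developed earlier: the dynamic uniform-degree expander decomposition (\Cref{thm:dynamicExpanderDecompositionUniformDegree}), the worst-case expander pruning algorithm (\Cref{thm:wc prune edge}), and the assumed decremental algorithm $\A$ on pruned $\phi$-sub-expanders. First I would run the dynamic uniform-degree expander decomposition on $G$ with $\phi = O(1/\log^4 n)$; this maintains graphs $G_i$ (with their uniform-degree refinements $G'_i$ and contraction sets $X_{v,i}$), where each $G'_i$ is a $\phi$-expander of near-uniform degree, $\bigcup_i G_i = G$, $\sum_i |E(G'_i)| = O(m)$, and every adversarial insertion/deletion translates into only $2^{O(\sqrt{\log n})}$ edge deletions to the $G'_i$ plus occasionally the creation of a brand-new expander (fed by the re-insertion of pruned and displaced edges). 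The crucial point is that, from the perspective of each fixed $G'_i$, the update sequence is purely decremental.

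Next, for each expander $G'_i$ in the decomposition I would instantiate a worst-case expander pruning data structure (\Cref{thm:wc prune edge}) and the decremental algorithm $\A$. When an edge is deleted from $G'_i$, the pruning structure returns an incremental set $P_i$ of at most $2^{O(\sqrt{\log n})}$ edges such that $G'_i \setminus P$ is contained in a $2^{O(\sqrt{\log n})}$-expander $W$ with $\deg_W(v) \ge \Delta/2$ for the surviving vertices --- this is exactly the promise \Cref{def:worst-case:decremental_algorithm} demands, so we may hand the deletion together with $P_i$ to $\A$, which maintains $H'_i \in \H(G'_i \setminus P, \epsilon)$. To recover a sparsifier of $G'_i$ itself (not just of $G'_i \setminus P$), I invoke properties \eqref{con:identity} and \eqref{con:union}: since $G'_i \setminus P \subseteq G'_i$ and $P \subseteq G'_i$, we have $H'_i \cup P \in \H(G'_i, \epsilon)$. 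Contracting the sets $X_{v,i}$ (property \eqref{con:contraction}) turns $H'_i \cup P$ into a graph $H_i \in \H(G_i,\epsilon)$, and then $H := \bigcup_i H_i \in \H(G,\epsilon)$ by \eqref{con:union}. Adversarial insertions into $G$ are simply routed into the decomposition (they appear as fresh edges of some $G_i$); to keep $H$ a valid sparsifier we can also just include every inserted edge directly in the output, which is affordable because we only run this data structure for $\tilde O(S(n))$ updates.

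For the bookkeeping: the preprocessing cost is dominated by initializing $\A$ on each $G'_i$, and since $\sum_i |E(G'_i)| = O(m)$ and $P(\cdot)$ is superadditive-enough ($P(m) \ge m$), this is $O(P(m))$. For the update time, each adversarial update causes $2^{O(\sqrt{\log n})}$ edge deletions to the $G'_i$; each such deletion costs $O(\gamma) = 2^{O(\sqrt{\log n})}$ in the pruning structure plus $O(T(n))$ in $\A$, and triggers $O(R(n))$ changes to $H'_i$, hence to $H$. Occasionally a new expander $G'_j$ of size $k$ is created, but this happens only once per $\Omega(k \phi^3 \polylog n)$ updates (the reinsertion/pruning accounting from \Cref{thm:dynamicExpanderDecompositionUniformDegree}), so amortizing $P(k)$ over those updates is absorbed into the stated worst-case bound by the standard round-robin staggering used elsewhere in the paper; this gives worst-case update time $O(T(n))$ and recourse $O(R(n))$, with the size of $H$ after $t$ updates bounded by $\tilde O(S(n) + t \cdot R(n))$ --- the $S(n)$ from the initial sparsifiers, and $t\cdot R(n)$ from accumulated recourse plus directly-inserted edges plus accumulated pruned sets $P$.

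The main obstacle I anticipate is making the pruning promise line up \emph{exactly} with \Cref{def:worst-case:decremental_algorithm}: \Cref{thm:wc prune edge} only guarantees a set $W_i \subseteq P_i$ with $G_i - W_i$ having one non-isolated component that is a $(1/\gamma)$-expander, whereas the definition wants a $2^{O(\sqrt{\log n})}$-expander $W$ with $G \setminus P \subseteq W$ \emph{and} a lower bound $\deg_W(v) \ge \Delta/2$ on the minimum degree of surviving vertices. Recovering the degree lower bound requires that pruning stops before too many edges are removed --- i.e., that we reset/rebuild the whole data structure after $t = \tilde O(S(n))$ updates so that $|P| = t \cdot 2^{O(\sqrt{\log n})}$ stays well below $\Delta$ (which is legitimate since $G'_i$ starts with near-uniform degree $\Theta(\Delta)$ and $\sum_i |E(G'_i)| = O(m)$ forces $\Delta$ to be large relative to the short update horizon in the regime we use). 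Carefully matching the pruning guarantee (including passing the adjacency-list access and maintaining the pointer between edges of $G_i$ and edges of $G'_i$, exactly as in the proof of \Cref{thm:wc prune edge}) to the interface of $\A$ is the delicate part; everything else is routine composition via properties \eqref{con:identity}, \eqref{con:union}, and \eqref{con:contraction}.
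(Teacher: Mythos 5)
Your proposal assembles the right ingredients, but two steps deviate from what the lemma actually needs and would not go through as written. First, you build on the \emph{dynamic} uniform-degree expander decomposition (\Cref{thm:dynamicExpanderDecompositionUniformDegree}), which is inherently amortized: it occasionally recomputes whole sub-decompositions and re-inserts pruned edges into the hierarchy, and when an expander $G'_j$ is destroyed or created the corresponding sparsifier $H_j$ must be removed/rebuilt in the output, which can change $\Theta(S(\cdot))$ edges of $H$ in a single step. Waving this away by ``round-robin staggering'' is not available inside this lemma -- spreading a rebuild over many updates requires the transition property~\eqref{con:increment} and is exactly what the paper defers to \Cref{lem:fully_dynamic_bounded_output}. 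The paper's proof avoids the issue entirely: it runs the \emph{static} decomposition (\Cref{thm:uniformDegreeExpanderDecomposition}) once at preprocessing, never rebuilds, feeds each $G'_i$ only deletions (plus pruned edges) via \Cref{thm:wc prune edge}, and handles adversarial insertions by placing the inserted edge directly into $H$ (justified by \Cref{lem:insertion}). That is precisely why the output is allowed to grow as $\tilde O(S(n)+t\cdot R(n))$ here, with resets handled one level up.

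Second, your argument for the min-degree promise $\deg_W(v)\ge\Delta/2$ in \Cref{def:worst-case:decremental_algorithm} does not work. Bounding the \emph{total} pruned volume by $t\cdot 2^{O(\sqrt{\log n})}$ and restricting the horizon to $t=\tilde O(S(n))$ gives no per-vertex guarantee (a single vertex can lose most of its incident edges while the global pruned set is small), and in any case $\Delta$ can be polylogarithmic while $S(n)\ge n\log n$, so ``$|P|\ll\Delta$'' fails; moreover the lemma itself places no bound on $t$. The paper instead enforces the degree promise constructively: for every edge $\{u,v\}$ output by the pruning algorithm it additionally prunes two further edges incident to $u$ and two incident to $v$, so that any vertex still present in the doubly-pruned graph has lost at most half of its initial degree inside $G_i\setminus P_i$, giving $\deg_{W_i}(v)\ge\Delta_i/2$ pointwise and for arbitrarily long deletion sequences. (A smaller interface point: per \Cref{def:worst-case:decremental_algorithm} the algorithm $\A$ maintains a sparsifier of $G'_i$ itself and merely receives the sets $P_i$ as a promise, so the union $H'_i\cup P$ step you perform externally is in the paper done inside $\A$; this is cosmetic, but the two issues above are genuine gaps you would need to repair, most naturally by adopting the static-decomposition-plus-extra-pruning structure of the paper's proof.)
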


\begin{proof}

We first describe the algorithm,
then we prove the complexity and correctness.

\paragraph{Algorithm}

During the preprocessing we run \Cref{thm:uniformDegreeExpanderDecomposition}
to obtain the decomposition $\bigcup_i G_i = G$.
We further obtain for each $G_i$ a graph $G'_i$ of near uniform degree
and node sets $X_{v,i} \subset V(G'_i)$.
Next, we initialize the pruning algorithm of \Cref{thm:wc prune edge} 
the assumed algorithm $\A$ on each $G'_i$.
This way we obtain graphs $H'_i$ for each $G'_i$.
When contracting the sets $X_{v,i}$ in $H_i$ we obtain a graph $H_i$.
At last, we define $H := \bigcup_i H_i$.

When performing an edge insertion to $G$,
we add the edge directly to $H$.
If that edge is later deleted again, 
we simply remove it from $H$ again.

When deleting an edge $\{u,v\}$ that was already part of $G$ during the initializaton,
then that edge is contained in some $G_i$.
Then there also exists $u',v' \in V(G'_i)$ with $\{u', v'\} \in E(G'_i)$.
Delete this edge from $G'_i$ and inform the pruning algorithm.
The pruning algorithm will prune out another set of edges from $G'_i$.
In addition to that, for every edge $\{u,v\}$ that is pruned by the pruning algorithm,
we also prune the two additional edges incident to $u$ and and two additional edges incident to $v$.

Next, we inform the algorithm $\A$ of the edge deletion and the pruned edges.
Now algorithm $\A$ changes the graph $H'_i$ in some way
and we perform the corresponding changes to the graph $H_i$
and $H$,
so $H_i$ it is still a valid contraction of $H'_i$ and $H = \bigcup_i H_i$.

\paragraph{Correctness}

By \Cref{thm:uniformDegreeExpanderDecomposition} we have that each $G'_i$ 
is a $\phi$-expander of almost uniform degree during the preprocessing.
So $\A$ returns $H'_i \in \H(G'_i,\epsilon)$.
As $G_i$ is obtained from $G'_i$ via contracting the $X_{v,i}$ for each $v \in V(G_i)$,
we have that $H_i \in \H(G_i, \epsilon)$ by property \eqref{con:contraction}.
Next, we have $H = \bigcup_i H_i \in \H(G, \epsilon)$ by property \eqref{con:union} and $\bigcup_i G_i = G$.
This property is still true when performing edge insertions by \Cref{lem:insertion}.
When performing edge deletions, 
the expander pruning makes sure that each $G_i$ is still contained in a $2^{O(\sqrt{\log n})}$-expander.
More accurately, let $P_i$ be the so far pruned edges by \Cref{thm:wc prune edge}, 
and let $P'_i$ be the edges we pruned so far in addition to that from nodes incident to $P_i$.
As we prune an extra edge from $u$ and $v$ for every $\{u,v\} \in P_i$,
we have that $G_i \setminus (P_i \cup P'_i) \subset G_i \setminus P_i$.
As $G_i \setminus P_i$ is contained in some expander $W_i$ 
with $G_i \setminus P_i \subset W \subset G_i$ by \Cref{thm:wc prune edge},
$G_i \setminus (P_i \cup P'_i)$ must be contained in the same expander $W$ as well.
Further, we have $\deg_{W_i}(v) \ge \deg_{G_i\setminus P_i}(v) \ge \Delta_i / 2$ 
for all $v \in V(G_i \setminus (P_i \cup P'_i))$,
where $\Delta_i$ is the minimum degree of $G_i$ during the preprocessing,
because we always prune one additional edge, 
so the degree in $G_i \setminus (P_i \cup P'_i)$ decreases twice as fast 
as in $G_i \setminus P_i$.

\paragraph{Complexity}

Computing the expander decomposition takes
$O(\phi^{-1} m \log^6 m)$ time.
Initializing algorithm $\A$ on each $G'_i$ takes
$$
\sum_i P(|E(G'_i)|)
\le
P\left(\sum_i |E(G'_i)|\right)
\le
O(P(m))
$$
time.
Edge insertions take only $O(1)$ time
and edge deletions require
$$
O(T(n)) + 2^{O(\sqrt{\log n})} = O(T(n))
$$
time, because each $G'_i$ has at most $O(|V(G_i)|)) = O(n)$ nodes
by \Cref{thm:uniformDegreeExpanderDecomposition}.
The size of the graph after the initialization is bounded by
$$
\sum_i S(|V(G'_i)|)
\le
S\left(\sum_i |V(G'_i)|\right)
\le
O(S(n \log^2 n))
=
\tilde{O}(S(n)),
$$
where we used $S(k) \ge k$ and $\sum_i |V(G'_i)| = O(n \log^2 n)$ 
by \Cref{thm:uniformDegreeExpanderDecomposition}.
Further, after $t$ updates at most $O(R(n))$ edges are added to the graph,
so after $t$ updates the size is bounded by
$\tilde{O}(S( n) + t\cdot R(n))$.
\end{proof}

The previous result from \Cref{lem:fully_dynamic_growing_output}
showed how to obtain a fully dynamic algorithm whose output graph becomes denser with each update.
We now show that by performing periodic resets,
the size of the output graph can be bounded.
However, the resulting update time becomes much  slower on dense input graphs,
because one has to pay $\Omega(m)$ whenever the graph resets.

\begin{lemma}\label{lem:fully_dynamic_bounded_output}
There exists a fully dynamic algorithm $\B$ 
for $\H(p)$ on general graphs.

The preprocessing time of $\B$ is $O(P(m)/\epsilon)$, 
with worst-case update time 
$$
\tilde O\left(\left(T(n)+\frac{R(n)P(m)}{S(n)}\right)(1+1/\epsilon)\right)
$$ 
and recourse $\tilde{O}(R(n)(1+1/\epsilon))$.
The output graph is bounded by $\tilde{O}(S(n)(1+1/\epsilon))$.

\end{lemma}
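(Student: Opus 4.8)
The plan is to obtain $\B$ from the ``growing output'' algorithm of \Cref{lem:fully_dynamic_growing_output} by the standard periodic rebuilding technique, combined with a slow background transition that keeps the recourse and worst-case update time under control. Call the algorithm from \Cref{lem:fully_dynamic_growing_output} $\A'$; it has preprocessing time $O(P(m))$, worst-case update time $O(T(n))$, recourse $O(R(n))$, and after $t$ updates its output has size $\tilde O(S(n)+t\cdot R(n))$. First I would fix a rebuild period $k := S(n)/R(n)$ (up to polylog factors), so that after $k$ updates the output size is still only $\tilde O(S(n))$. Whenever the current copy of $\A'$ reaches $k$ updates, we initialize a fresh copy $\A'_{\mathrm{new}}$ on the current graph $G$ (which costs $O(P(m))$) and, over the course of the next $k$ updates, spread this preprocessing work evenly, so the amortized-to-worst-case cost of a rebuild is $O(P(m)/k) = O(R(n)P(m)/S(n))$ extra per update. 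This is the source of the $\frac{R(n)P(m)}{S(n)}$ term in the update time and of the $P(m)/\epsilon$ preprocessing (we run the background copy ``one period ahead,'' which is why the preprocessing is essentially a constant times $P(m)$, and the $1/\epsilon$ and $(1+1/\epsilon)$ factors come from running $O(1/\epsilon)$ overlapping copies as below).

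The subtle point is the handover: when $\A'_{\mathrm{new}}$ is ready, the maintained output must switch from $H_{\mathrm{old}}$ to $H_{\mathrm{new}}$, and naively replacing it requires listing $\tilde\Theta(S(n))$ edges in one step, which would destroy the worst-case bound. Here I would use the transition property \eqref{con:increment}: over the $k$ updates following the completion of $\A'_{\mathrm{new}}$, we keep $H_{\mathrm{new}}$ fully present in the output while we gradually remove the edges of $H_{\mathrm{old}}$, a few per update. At an intermediate stage the output is $(e^{\delta}-1)H \cup H_{\mathrm{new}}$ where $H \subseteq H_{\mathrm{old}}$ is the not-yet-removed part; by \eqref{con:increment} this lies in $\H(G,\epsilon+\delta)$, so scaling down the leftover-old-part by $(e^{\delta}-1)$ and choosing $\delta$ appropriately (this is where a single extra factor on $\epsilon$, hence the $\H(p)$ with a slightly enlarged parameter $p = \Theta(\epsilon)$, is spent) keeps the invariant valid throughout the transition. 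Because each update only needs to delete $\tilde O(S(n)/k) = \tilde O(R(n))$ leftover edges and insert $\tilde O(R(n))$ new-copy-edges, the recourse stays $\tilde O(R(n)(1+1/\epsilon))$ and the output size stays $\tilde O(S(n)(1+1/\epsilon))$.

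Concretely I would run $O(1/\epsilon)$ (really $O(1+1/\epsilon)$) staggered copies of $\A'$ so that at any time one copy is ``active,'' one is being preprocessed in the background, and one or two are in the transition phase; the $(1+1/\epsilon)$ factors in the statement are exactly the overhead of maintaining these simultaneously and of the $\delta$-budget in \eqref{con:increment}. For each active or transitioning copy, edge insertions are appended directly to its output as in \Cref{lem:fully_dynamic_growing_output}, and edge deletions are forwarded to every live copy; since each copy charges $O(T(n))$ per update and there are $O(1+1/\epsilon)$ of them, plus the $O(R(n)P(m)/S(n))$ background-preprocessing term, the worst-case update time is $\tilde O\big((T(n)+\tfrac{R(n)P(m)}{S(n)})(1+1/\epsilon)\big)$ as claimed. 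Correctness of the maintained sparsifier follows from combining \eqref{con:union} (the union over copies/pieces), \eqref{con:identity} (to absorb the scaling of the to-be-removed part), and \eqref{con:increment} (validity during every transition step). The main obstacle I anticipate is the bookkeeping of the transition: one has to verify that at \emph{every} intermediate moment the partial object $(e^{\delta}-1)H\cup H_{\mathrm{new}}$ genuinely satisfies the hypotheses of \eqref{con:increment} (in particular that $H_{\mathrm{old}},H_{\mathrm{new}}\in\H(G,\epsilon)$ for the \emph{same} current $G$, which requires the two copies to have seen a consistent prefix of updates), and that the $\delta$-slack accumulated over overlapping transitions does not compound beyond the single extra $\Theta(\epsilon)$ factor — this is precisely why the $e^\epsilon$-style multiplicative accounting of the preliminaries is used instead of $(1+\epsilon)$.
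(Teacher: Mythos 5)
Your proposal is correct and takes essentially the same route as the paper's proof: $O(1+1/\epsilon)$ staggered copies of the \Cref{lem:fully_dynamic_growing_output} algorithm with rebuild period $k=S(n)/R(n)$, background preprocessing spread over the period, and a slow edge-by-edge handover whose validity is exactly what the paper packages as the Interpolation Lemma (\Cref{lem:interpolation}), i.e.\ the combination of \eqref{con:identity}, \eqref{con:union}, \eqref{con:increment}, and \eqref{con:nested}, with the accounting for update time, recourse, and output size matching the paper's. One precision point to carry through: since the rebuilt copy's edges are themselves inserted only $\tilde O(R(n))$ per step, \eqref{con:increment} must be instantiated with the fully-present graph being the scaled union of the \emph{other} intact copies rather than $H_{\mathrm{new}}$ (the partially removed or partially inserted copy playing the role of the subset $H\subset H'$), and the rebuilt copy must first replay the updates queued during its reconstruction before it rejoins the union — both handled by the phased cycle in the paper's proof.
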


\begin{proof} 
The high level idea of our algorithm is to run $C := 4+(e^{\epsilon/2}-1)^{-1} \approx 4+2/\epsilon$ copies 
of the algorithm from \Cref{lem:fully_dynamic_growing_output} in parallel.
Assuming that $H_i \in \H(G_i,\epsilon/2)$ is maintained for $G_i = G$
by the $i$th copy of \Cref{lem:fully_dynamic_growing_output},
our algorithm scales every edge of $H_i$ by a factor of $\Delta =(3+(e^{\epsilon/2}-1)^{-1})^{-1}$ 
and sets $H := \bigcup_{i=1}^{C}\Delta\cdot H_i$. %

When the output $H_i$ of one of the copies grows too large,
we will slowly (i.e., a few edges per update) remove its edges from $H$.
After all edges of $H_i$ are removed from $H$, 
then the $i$th copy of \Cref{lem:fully_dynamic_growing_output}
will be re-initialized by performing the preprocessing 
on the current input graph $G$, and we slowly add the edges of the new $H_i$ to $H$ again.
By carefully synchronizing the copies, 
the algorithm ensures that only one $H_i$ is removed from $H$ at a time,
so $H$ stays in $\H(G,\epsilon)$ by \Cref{lem:interpolation}.

The repeated resets ensure that the output never grows too large, 
and in the mean time the slow addition/removal of edges from $H_i$ to $H$ ensures 
that the worst-case update time stays low.
We remark that this strategy above is to avoid 
adding all the edges of some $H_i$ to $H$ at once,
as this would lead to $\Omega(S(n))$ update time 
due to the large output size when listing all the edges.

To describe our approach in detail, 
we call $k$ consecutive steps~(edge updates) a \emph{cycle}, 
and we discuss the implementation of the algorithm within one such cycle.
As mentioned before, let $G_i=G$ be the $i$th copy of $G$, 
and $H_i \in \H(G_i,\epsilon/2)$. 
These $C$ many $H_i$ are updated as in the following manner, 
depending on the specific time step within each cycle: 

\begin{enumerate}
\item For the first $(1-\Delta)\cdot k$ steps, 
$G$ and $G_i$ are updated in the usual way. 
That is, every edge insertion and deletion is updated in both $H_i$ and $H$. 
 
\item For the next $k\Delta/4$ edge updates,
$G$ and $G_i$ receive the same updates. 
The algorithm performs the corresponding edge removals from $H_i$ and $H$, 
but anything added to $H_i$ is \emph{not} added to $H$.
In addition, the algorithm removes $\tilde O\left(S(n)/(k\epsilon)+R(n)/\epsilon\right)$
edges of $H_i$ from $H$. Since the total number of edges in $H_i$ is 
$\tilde O\left(S(n)+k\cdot R(n)\right)$
by \Cref{lem:fully_dynamic_growing_output}, 
at the end of $(1-\Delta)\cdot k+k\Delta/4$ steps 
$H_i$ will be completely removed from $H$. 
\label{phase:removal}

\item We know that at the end of $(1-\Delta)\cdot k+k\Delta/4$ steps 
all edges of $H_i$ are removed from $H$. 
At this point, the algorithm applies \Cref{lem:fully_dynamic_growing_output} 
to recompute a new $H_i$ of $G_i$ in $O(P(m))$ time, 
but the overall re-computation spreads out over the next $k\Delta/4$ steps. 
That is, in each of the $k\Delta/4$ steps the algorithm spends $O(P(m)(1+1/\epsilon)/k)$ time for recomputing $H_i$. 
We remark that $G_i$ will not be updated during this process, 
and instead the algorithm queues the performed changes to $G$.

\item At the end of $k(1-\Delta)+ k\Delta/4 + k\Delta/4 = k-k\Delta/2$ steps, 
the reconstruction of $H_i$ described above is finished. 
However, this $H_i$ is in $\H(G_i,\epsilon/2)$ where $G_i$ is what graph $G$ looked like $k\Delta/4$ steps ago.
To catch up to the current input graph $G$, 
the algorithm starts to always perform two queued updates whenever $G$ is updated. 
At the same time all new updates to $G$ are also added to the queue.
This means after $k\Delta/4$ rounds, 
the graph $G_i$ will be identical to $G$ again 
and $H_i$ will again satisfy $H_i \in \H(G,\varepsilon/2)$.

\item At the end of $k-k\Delta/4$ steps, 
$H_i$ is now a valid element of $\H(G,\epsilon/2)$.
For the next $k\Delta/4$ steps all updates to $G$ are also performed on $G_i$, 
and the algorithm adds an edge to $H$ whenever it adds the edge to $H_i$. 
Similarly, for every edge in $H$, the algorithm removes it from $H_i$ and $H$ when the deletion of this edge is requested.
In addition, for each step the algorithm adds 
$\tilde O\left(S(n)(1+1/\epsilon) k+R(n)(1+1/\epsilon)\right)$
edges of $H_i$ to $H$. 
Therefore, at the end of $k\Delta/4$ additional steps, 
all edges of $H_i$ are added to $H$.
\label{phase:addition}

\item Combining the items above, 
we know that at the end of a cycle all of $H_i$ is added to $H$. 
The algorithm goes back to item~1 and starts a new cycle. 
\end{enumerate}
We remark that the algorithm runs 
$C$ copies of \Cref{lem:fully_dynamic_growing_output}.
To make sure that $H \in \H(G,\epsilon)$,
the algorithm requires each copy to be slightly phase shifted, 
so that only one of the copies is somewhere in phase \ref{phase:removal} to \ref{phase:addition}.
This can be obtained during the initial preprocessing 
by having copy $i$ jump to phase \ref{phase:removal} preemptively 
after $ik\Delta$ steps.

It remains to analyze the time complexity of the algorithm.
Consider the cost of one cycle for the $i$th copy.
We know that the update time and recourse for every $H_i$ is $O(T(n))$ and $R(n)$ respectively
when the $i$th copy is in phase~1. 
By observing that we run $O(1+1/\epsilon)$ many copies in parallel for phase 1, 
these term increase by an $O(1+1/\epsilon)$ factor.
At any point in time, 
there is also a single copy somewhere in phase~2 to phase~5, 
for which the worst case update time is 
$\tilde O\left(
(S(n)/k + R(n) +P(m)/k)(1+1/\epsilon) +T(n)
\right)$
and the recourse is $\tilde O((R(n)+S(n)/k)(1+1/\epsilon))$.
The size of the output graph is bonded by $\tilde O((S(n)+ k R(n))(1+1/\epsilon))$
as the size of each $H_i$ is bounded by $O(S(n) + k R(n))$.
The statement of \Cref{lem:fully_dynamic_bounded_output} is then obtained by using
$k = S(n)/R(n)$ and $R(n) \le T(n)$.
\end{proof}

Via the classic decomposition of the input graph based on its edge weights,
we can extend the previous result to weighted input graphs.

\begin{lemma}\label{lem:fully_dynamic_weighted}
There exists a fully dynamic algorithm $\B$ 
for $\H(p)$ on general \emph{weighted} graphs
whose ratio between the largest and the smallest weight is $W$.

The pre-processing time of $\B$ is $O(P(m) (1+\epsilon^{-2}) \log W)$, 
with worst-case update time $\tilde O((T(n)+R(n)P(m)/S(n))(1+1/\epsilon))$ 
and recourse $\tilde O(R(n)(1+1/\epsilon))$.
The output graph is bounded by $\tilde O(S(n)(1+\epsilon^{-2}) \log W)$.

\end{lemma}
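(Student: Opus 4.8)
The plan is to reduce the weighted problem to the unweighted one already handled by \Cref{lem:fully_dynamic_bounded_output}, exactly as the amortized weighted reductions (proofs of \Cref{thm:simple:fully_dynamic_weighted} and \Cref{thm:amortized:fully_dynamic_weighted}) reduce to their unweighted counterparts, namely by a weight-bucketing argument. Assume without loss of generality that all edge weights lie in $[1,W]$. First I would round every edge weight \emph{up} to the nearest integer power of $e^{\epsilon/2}$, obtaining a graph $\tilde G$; since $\tilde G$ is $G$ with every edge scaled by a factor in $[1,e^{\epsilon/2}]$, property \eqref{con:identity} gives $\tilde G \in \H(G,\epsilon/2)$. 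Then I would write $\tilde G = \bigcup_{i=0}^{t} e^{i\epsilon/2}\cdot G'_i$, where $G'_i$ is the \emph{unweighted} graph consisting of the edges whose rounded weight equals $e^{i\epsilon/2}$, and $t+1 = O(\epsilon^{-1}\log W)$ is the number of distinct weight classes.

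Next I would run one independent copy of the algorithm of \Cref{lem:fully_dynamic_bounded_output}, with accuracy parameter $\epsilon/2$, on each unweighted graph $G'_i$, maintaining $H'_i \in \H(G'_i,\epsilon/2)$, and also store for each edge currently present in $G$ a pointer to the class into which it was inserted. The maintained output is $H := \bigcup_{i=0}^{t} e^{i\epsilon/2}\cdot H'_i$. Correctness follows by chaining the axioms: property \eqref{con:union} applied with the scalars $s_i = e^{i\epsilon/2}$ gives $H \in \H(\bigcup_i e^{i\epsilon/2}G'_i,\epsilon/2) = \H(\tilde G,\epsilon/2)$, and then property \eqref{con:nested} together with $\tilde G \in \H(G,\epsilon/2)$ yields $H \in \H(G,\epsilon)$. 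Handling an update is immediate: the insertion (resp.\ deletion) of an edge of weight $w$ is forwarded as an insertion (resp.\ deletion) to the single copy indexed by the class containing the round-up of $w$, and the resulting edge changes in $H'_i$ are mirrored into $H$ after rescaling by $e^{i\epsilon/2}$; the class of a deleted edge is recovered from the stored pointer.

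For the bounds, each update touches exactly one copy, so the worst-case update time and recourse are those of a single invocation of \Cref{lem:fully_dynamic_bounded_output}, namely $\tilde O\big((T(n)+R(n)P(m)/S(n))(1+1/\epsilon)\big)$ and $\tilde O\big(R(n)(1+1/\epsilon)\big)$. The preprocessing initializes all $O(\epsilon^{-1}\log W)$ copies, each costing $O(P(m)\cdot\epsilon^{-1})$ (bounding the size of each class by $m$), for a total of $O\big(P(m)(1+\epsilon^{-2})\log W\big)$. The output size is the sum over the $O(\epsilon^{-1}\log W)$ classes of the per-class size $\tilde O(S(n)(1+1/\epsilon))$, i.e.\ $\tilde O\big(S(n)(1+\epsilon^{-2})\log W\big)$. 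The weight ratio of $H$ is at most $W$ times the per-class blow-up $w$ of $\A$, matching the stated $w$-dependence.

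There is no genuinely hard step here; this is a routine weight-bucketing reduction layered on top of \Cref{lem:fully_dynamic_bounded_output}. The only points that need a little care are bookkeeping ones: one must invoke the transitivity property \eqref{con:nested} — not merely \eqref{con:union} and \eqref{con:identity} — to pass from an $\epsilon/2$-approximation of the rounded graph $\tilde G$ to an $\epsilon$-approximation of the original $G$, and one must split the error budget consistently ($\epsilon/2$ for the weight rounding and $\epsilon/2$ for the sparsifier accuracy inside each class). The extra factor $\epsilon^{-2}$ in the preprocessing and output-size bounds is precisely the product of the $\epsilon^{-1}$ weight classes with the $\epsilon^{-1}$ overhead already present in \Cref{lem:fully_dynamic_bounded_output}.
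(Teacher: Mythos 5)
Your proposal is correct and follows essentially the same route as the paper: bucket the edges into $O((1+\epsilon^{-1})\log W)$ weight classes of width $e^{\epsilon/2}$, run one copy of \Cref{lem:fully_dynamic_bounded_output} with accuracy $\epsilon/2$ per class, and output the union of the rescaled class sparsifiers, so that each update touches a single copy (preserving update time and recourse) while preprocessing and output size pick up the $O((1+\epsilon^{-1})\log W)$ factor. Your bookkeeping of the approximation chain (rounding via \eqref{con:identity}, combining via \eqref{con:union}, then passing to $G$ via \eqref{con:nested}) is a slightly more explicit justification than the paper's, but it is the same argument in substance.
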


\begin{proof}
Split the edges into groups $[e^{k\cdot \epsilon/2}, e^{(k+1)\epsilon/2})$ 
to obtain subgraphs $G_0,G_1,...G_t$ with $t = O((1+\epsilon^{-1}) \log W)$.
Then we run \Cref{lem:fully_dynamic_bounded_output} on each graph to obtain results $H_i \in \H(G_i, \epsilon/2)$.
The union of the $H_i$ results in $H := \bigcup_i H_i \in \H(G, \epsilon)$.
The pre-processing time and output size increase by an $O((1+\epsilon^{-1}) \log W)$ factor, 
as we have that many different $G_i$.
The update time and recourse does not change because with each update we modify only a single $G_i$.
\end{proof}

As \Cref{lem:fully_dynamic_weighted} is only fast on sparse input graphs,
we now apply the sparsification technique of \Cref{sec:sparsification}
to speed-up the result to dense input graphs.

\begin{proof}[Proof of \Cref{thm:reduction_worst_case}]
By \Cref{lem:fully_dynamic_weighted} there exists an Algorithm $\C$ for $\H(\epsilon)$ on general \emph{weighted} graphs
whose ratio between the  largest and the  smallest weight is $(wW^L)$.
The preprocessing time of $\C$ is $O(P(m)(1+\epsilon^{-2}) L \log(wW) )$, 
with worst-case update time $\tilde O((T(n)+R(n)P(m)/S(n))(1+\epsilon^{-1}))$ 
and recourse $\tilde O(R(n)(1+\epsilon^{-1}))$.
The output graph is bounded by $\tilde O(S(n)(1+\epsilon^{-2}) L \log(wW))$.

We now apply \Cref{thm:eppstein} to this algorithm $\C$ to obtain $\B$.
The preprocessing time of $\B$ is either bounded by
\begin{align*}
O(L N d P\left(dS(n)(1+\epsilon^{-2}) L \log(wW)\right))
\le
\tilde{O}(N d P\left(dS(n)(1+\epsilon^{-2}) \log(wW)\right))
,
\end{align*}
as $L = \tilde{O}(1)$ and $P(m)$ can be assumed to be polynomial time.
Alternative, if the sparsifier is a subgraph, then we can bound the preprocessing time by
\begin{align*}
&~
O((m + 
\underbrace{\tilde{O}\left(P(m)(1+\epsilon^{-2}) L \log(wW) \right)}_{\text{Preprocessing of $\C$}}
)L) \\
\le&~
\tilde{O}(P(m)(1+\epsilon^{-2}) \log(wW)).
\end{align*}
The update time is
\begin{align*}
&~\tilde O(L \cdot \left.\underbrace{\tilde O(R(n)(1+1/\epsilon))}_{\text{Recourse of $\C$}}\right.^{L} 
\cdot \underbrace{\tilde O\left(\left(T(n)+\frac{R(n)P(dS(n))(1+\epsilon^{-2})L\log (wW)}{S(n)}\right)(1+1/\epsilon)\right)}_{\text{Update time of $\C$}}) \\
\le&~
(R(n)^{O(L)} + (R(n)/\epsilon)^{O(L)})
\cdot \left(T(n)+\frac{P(dS(n))\log (wW))}{S(n)}\right)
\end{align*}
where we use that $R(n) \ge \tilde{O}(1)$ and $L \ge 1$.

\end{proof}

\part{Dynamic Sparsifiers\label{part:algorithm}}

\section{Cut Sparsifiers Against an Adaptive Adversary}
\label{sec:adaptive}
In this section, we show an adaptive algorithm for maintaining cut sparsifiers with amortized and worst-case update time:

\begin{theorem}\label{thm:amortizedCutSparsifierMainResult}
For any $1 \leq k \leq \log n$, there exists an algorithm that maintains an $O(k)$-approximate cut-sparsifier $\tilde{G}$ on any dynamic graph $G$. The algorithm  works against an adaptive adversary with high probability, has $O(m)$ initialization time, $\tilde{O}(n^{1/k})$ amortized update time and contains $\tilde{O}(n \log W)$ edges.
\end{theorem}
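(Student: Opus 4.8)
The plan is to reduce to the expander setting via \Cref{thm:amortized:fully_dynamic_weighted} and then supply the missing ingredient: an adaptive decremental algorithm on expanders built by \emph{proactive resampling}. First I would check that the graph problem ``$H$ is an $O(k)$-approximate cut sparsifier of $G$'' satisfies \eqref{con:identity}, \eqref{con:union}, and \eqref{con:contraction}. Scaling and reweighting each edge by a factor in $[1,e^{\epsilon}]$ scales every cut value of $G$ by a factor in the same range, so both $G'\in\mathcal{H}(G,\epsilon)$ and $e^{\epsilon}\cdot G\in\mathcal{H}(G',\epsilon)$; for an edge-partition $G=\bigcup_i G_i$, if each $H_i$ preserves the cuts of $G_i$ to within the factor, then $\bigcup_i s_iH_i$ preserves the cuts of $\bigcup_i s_iG_i$ cut-by-cut; and contracting a set $W$ in both $G$ and $H\in\mathcal{H}(G,\epsilon)$ only merges cuts already present in $G$, so the approximation persists. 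Hence by \Cref{thm:amortized:fully_dynamic_weighted} it suffices to exhibit, for a polylogarithmic parameter $\phi$, a decremental algorithm maintaining an $O(k)$-approximate cut sparsifier of an \emph{unweighted} almost-uniform-degree $\phi$-expander (max degree at most $\poly(1/\phi)\cdot\deltamin$) under edge deletions, with near-linear preprocessing and $\tilde{O}(n^{1/k})$ amortized update time; the union-and-rounding step inside the reduction then extends this to general weighted graphs and produces the $\log W$ factor in the output size.

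For the decremental algorithm on such a $\phi$-expander $G$ I would keep a set $S_v\subseteq\inc_G(v)$ for every $v$, maintain $H=\bigcup_{v\in V}S_v$ with every edge of $H$ carrying weight $1/\rho$ for $\rho=\tilde{\Theta}(\deltamax/(\deltamin^{2}\phi^{2}))$, and use the primitive $\sampleb(v)$ that re-draws $S_v$ by including each edge of $\inc_G(v)$ independently with probability $\rho$ --- implementable in $\Otil(\rho\deltamax)=\tilde{O}(1)$ time via the geometric-skip trick. On deleting $(u,v)$ at time $t$ the algorithm (i) calls $\sampleb(u),\sampleb(v)$ at time $t$ and again \emph{proactively} at geometrically spaced later times $t+1,t+\beta,t+\beta^{2},\dots$ (with base $\beta=2$ when $k=\Theta(\log n)$ and $\beta=n^{\Theta(1/k)}$ for smaller $k$), and (ii) whenever $\deg_G(w)$ has dropped by an additive amount $\zeta$ since $w$ was last triggered, calls $\sampleb(x)$ for every $(w,x)\in G$. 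Rule (i) costs only $O(\log_{\beta}m)$ resamplings per update; rule (ii) is where the $\tilde{O}(n^{1/k})$ amortized cost appears: to keep cut sizes in $H$ tracking those of $G$ under the sparser proactive schedule one must take $\zeta$ fine enough that a fixed vertex is resampled $\tilde{O}(\deltamax/\zeta)=\tilde{O}(n^{1/k})$ times over the whole sequence, giving $\tilde{O}(n^{1/k}\cdot m)$ total work, which round-robin staggering also makes worst-case.

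The substance of the proof is the two-sided approximation analysis, which must survive a randomness-adaptive adversary. For the \emph{lower bound}, fix a cut $(X,\bar X)$ with $|X|\le n/2$; expansion gives that a typical $x\in X$ has $|\inc_G(x)\cap E_G(X,\bar X)|=\tilde{\Omega}(\phi\deltamin)$, so for such $x$ one call $\sampleb(x)$ ``succeeds'' (yields $|S_x\cap E_G(X,\bar X)|\sim\rho\,|\inc_G(x)\cap E_G(X,\bar X)|$) with probability $1-n^{-10}$ by a Chernoff bound and the choice of $\rho$. Since each vertex is resampled at most $\poly(n)$ times it is ``always-successful'' with probability $1-n^{-8}$, hence at least $|X|/2$ vertices of $X$ are always-successful with probability $1-n^{-2|X|}$, which survives a union bound over the $n^{O(|X|)}$ cuts of a given size; thus $|E_H(X,\bar X)|\gtrsim\rho\,|E_G(X,\bar X)|/2$ at all times. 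For the \emph{upper bound} the key claim is that for every $(u,v)\in G$ and every time $t$, $\Pr[(u,v)\in H\text{ at time }t]=O(\rho\log_{\beta}t)=O(\rho k)$: an edge can persist in $S_u$ only after a \emph{relevant} call $\sampleb(u)$ --- one leaving $\tschedule(u)$ momentarily empty, since any later scheduled call re-draws $S_u$ --- and a counting argument over the geometric schedule shows there are at most $O(\log_{\beta}t)$ relevant calls, as each one eliminates a $\beta$-factor of the remaining time window in which a further relevant call could occur, with rule (ii) adding only $O(\log n)$ more. If the events $\{e\in H\}$ over $e\in E_G(X,\bar X)$ were independent, a Chernoff bound plus a union bound over cuts would at once give $|E_H(X,\bar X)|=O(\rho k\,|E_G(X,\bar X)|)$. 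The main obstacle is precisely that these events are \emph{not} independent: the adaptive adversary's future deletions --- hence which $\sampleb$ calls occur --- may depend on which earlier edges landed in $H$. Disentangling this dependency, by exposing the algorithm's randomness in an order under which each relevant call contributes a genuinely fresh Bernoulli variable and then summing the contributions through a conditional Chernoff / martingale bound, is the delicate heart of the argument; it yields the $O(\rho k)$ per-edge bound and hence the cut upper bound w.h.p.\ simultaneously for all cuts.

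Putting the two bounds together, $H$ is an $O(k)$-approximate cut sparsifier of the expander of size $O(n\rho\deltamax)=\tilde{O}(n)$, maintainable in $\tilde{O}(n^{1/k})$ amortized time with near-linear preprocessing; feeding this into \Cref{thm:amortized:fully_dynamic_weighted} --- which runs it on each expander of the dynamic uniform-degree decomposition of \Cref{thm:dynamicExpanderDecompositionUniformDegree} and on each of the $O(\epsilon^{-1}\log W)$ rounded weight classes --- gives a fully dynamic algorithm on general weighted graphs maintaining an $O(k)$-approximate cut sparsifier of size $\tilde{O}(n\log W)$ in $\tilde{O}(n^{1/k})$ amortized time with $O(m)$ initialization, which is \Cref{thm:amortizedCutSparsifierMainResult}. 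At $k=\Theta(\log n)$ this specializes to the $\polylog(n)$-time, $O(\log n)$-approximation regime, and the deamortized variants from the introduction follow by replacing \Cref{thm:amortized:fully_dynamic_weighted} with the worst-case reduction and invoking the Eppstein-style sparsification of \Cref{thm:eppstein}.
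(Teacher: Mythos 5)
Your overall architecture matches the paper's: verify \eqref{con:identity}, \eqref{con:union}, \eqref{con:contraction} for cut sparsifiers, reduce via \Cref{thm:amortized:fully_dynamic_weighted} to a decremental algorithm on almost-uniform-degree expanders, and build that algorithm by proactive resampling with per-vertex sets $S_v$, weight $1/\rho$, the always-successful-vertices Chernoff argument for the lower bound, and the relevant-call/disentangling argument for the upper bound. The gap is in the mechanism you propose for the $O(k)$-approximation versus $\Otil(n^{1/k})$-time tradeoff, and it points in the wrong direction. You sparsify the proactive schedule to base $\beta=n^{\Theta(1/k)}$ and claim there are at most $O(\log_{\beta}t)$ relevant calls because ``each one eliminates a $\beta$-factor of the remaining time window.'' That is not what the counting argument gives. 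With scheduled times $s+\beta^{0},s+\beta^{1},\dots$, the largest scheduled time below $t$ coming from an update at stage $s''$ is only guaranteed to be at least $s''+(t-s'')/\beta$, so a relevant call shrinks the window $t-t'$ by the factor $(1-1/\beta)$, not to a $1/\beta$ fraction; the number of relevant calls is therefore $\Theta(\beta\log t)$, which for $\beta=n^{\Theta(1/k)}$ makes the per-edge survival probability $O(\rho\,n^{\Theta(1/k)}\log t)$ and hence an approximation that \emph{degrades} polynomially as $k$ decreases, the opposite of what the theorem needs. (The base-$2$ halving works precisely because consecutive scheduled times are within a factor $2$ of each other.) Tuning the degree-drop threshold $\zeta$ cannot repair this: in the paper $\zeta=\phi\deltamin$ is fixed, and the phase argument only uses that within a vertex's current phase at most $\zeta$ cut edges at that vertex have disappeared, contributing a constant factor; the $k$- (or $\log$-) dependence of the approximation comes entirely from the relevant-call count, i.e.\ from the schedule density, and making $\zeta$ finer only adds immediate resamplings that, without a dense future schedule, hand power back to the adaptive adversary (this is essentially the ``naive algorithm'' failure mode from the overview).

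The paper's actual tradeoff goes the other way: to get an $O(k)$ approximation it \emph{densifies} the schedule, resampling at times $t, t+(1+\epsilon)^{0}, t+(1+\epsilon)^{1},\dots$ with $1/\epsilon = n^{\Theta(1/k)}$, so that consecutive scheduled times are within a factor $(1+\epsilon)$ of each other, the window shrinks by a factor $\epsilon/(1+\epsilon)$ per relevant call, and the number of relevant calls drops to $O(\log_{1/\epsilon}t)=O(k)$; the price is $\Theta(\epsilon^{-1}\log t)=n^{\Theta(1/k)}$ scheduled resamplings per deletion, and this --- not rule (ii) --- is exactly where the $\Otil(n^{1/k})$ update time comes from (\Cref{thm:generalStatementCutSparsifier}, fed into the reduction in \Cref{subsec:resultsViaBlackBoxFromCutSparsifier}). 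So your write-up would prove the theorem once you flip the schedule parameterization (dense, base $1+n^{-\Theta(1/k)}$, rather than sparse, base $n^{\Theta(1/k)}$), restore $\zeta=\phi\deltamin$, and redo the update-time accounting accordingly; as stated, the key claim $\Pr[(u,v)\in H]=O(\rho k)$ is unsupported and the claimed running-time source is misattributed.
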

Note that the amortized update time is $\polylog(n)$ when $k = \log n$. 

\begin{theorem}\label{thm:worstCaseCutSparsifierMainResult}
For any $1 < k \leq \sqrt{\log n} /\log \log n$, there exists an algorithm that maintains an $k^{O(k)}$-approximate cut-sparsifier $\tilde{G}$ on any dynamic graph $G$, that works against an adaptive adversary with high probability, has $\tilde{O}(m \log W)$ initialization time, $n^{O(1/k)}$ worst-case update time and contains $\tilde{O}(n \log W)$ edges.
\end{theorem}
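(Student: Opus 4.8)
The plan is to derive Theorem~\ref{thm:worstCaseCutSparsifierMainResult} from the worst-case black-box reduction of Theorem~\ref{thm:reduction_worst_case} by feeding it, as its input, a worst-case \emph{decremental} adaptive cut-sparsifier algorithm on pruned $\phi$-sub-expanders in the sense of Definition~\ref{def:worst-case:decremental_algorithm}. Two ingredients are needed. First, one checks that ``being an $\alpha$-cut sparsifier'', phrased through the map $\H$, satisfies the five structural conditions \eqref{con:identity}, \eqref{con:union}, \eqref{con:contraction}, \eqref{con:increment}, and \eqref{con:nested}; this is the content of Lemma~\ref{lma:cutSparsifierIsAGraphProblem}, and each condition is an elementary statement about how cut values behave under edge re-scaling, disjoint unions, vertex contractions, and passing to subgraphs. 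Second --- the real work --- one builds the decremental algorithm $\A$ using \emph{proactive resampling}.

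For the second ingredient I would follow the overview of Section~\ref{sec:adaptive_overview}, adapted to the promise of Definition~\ref{def:worst-case:decremental_algorithm}: instead of assuming $G$ itself is an expander at all times, $\A$ is handed a growing set $P=\bigcup_i P_i$ of pruned edges such that some $2^{O(\sqrt{\log n})}$-expander $W$ with $G\setminus P\subseteq W\subseteq G$ exists and $\deg_W(v)\ge\Delta/2$ on the non-pruned part. The algorithm maintains, for every vertex $v$, a resampled set $S_v\subseteq\inc_G(v)$ (each edge kept independently with probability $\rho=\tilde\Theta(1/\Delta)$, reweighted by $1/\rho$), runs the proactive schedule on $S_u,S_v$ whenever edge $(u,v)$ is deleted (resampling at the deletion time and at times $t{+}1,t{+}2,t{+}4,\dots$), additionally resamples the neighborhood of $v$ whenever $\deg_G(v)$ drops by $\zeta=\tilde\Theta(\Delta)$, and outputs $H=\big(\bigcup_v S_v\big)\cup P\cup E_G(\text{isolated vertices of }W)$. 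The \emph{key claim} --- $\Pr[(u,v)\in\bigcup_v S_v \text{ at time }t]\le 2\rho\log t$ for every edge --- is proved exactly as in the overview via the ``relevant call'' argument, and is insensitive to the adversary's adaptivity (and to its access to past randomness) because each individual $\sampleb$ invocation uses fresh independent coins; the lower-bound side, that every cut $(X,\bar{X})$ retains weight $\tilde\Omega(\rho\cdot|E_G(X,\bar{X})|)$, uses that $W$ is an expander of near-uniform degree together with a union bound over the $\le n^{|X|}$ cuts of each size. Adding $P$ and the edges incident to the vertices $W$ has dropped restores a valid cut sparsifier of all of $G$ via \eqref{con:identity} and \eqref{con:union}. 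Worst-case update time follows by staggering the proactive and degree-triggered resamplings with round-robin scheduling as in the overview, and $\sampleb$ runs in $\tilde O(1)$ time by the output-sensitive sampling primitive; all told $\A$ has $O(m)$ preprocessing, $T(n)=\polylog n$ worst-case update time, $R(n)=\polylog n$ recourse, output size $S(n)=\tilde O(n)$, and a fixed $O(\log n)$ approximation factor.

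Finally I would invoke Theorem~\ref{thm:reduction_worst_case} with $N=\poly(n)$ and $d=n^{\Theta(1/k)}$, so that $L=\lceil\log N/\log d\rceil=\Theta(k)$. Plugging $T(n),R(n)=\polylog n$ and $S(n)=\tilde O(n)$ into its update-time bound $\big(R(n)^{O(L)}+(R(n)/\epsilon)^{O(L)}\big)\cdot\big(T(n)+P(dS(n)\log(wW))/S(n)\big)$ gives a leading factor $(\polylog n)^{O(k)}=2^{O(k\log\log n)}$ and a trailing factor $\tilde O(d\log W)=n^{O(1/k)}$; since $k\le\sqrt{\log n}/\log\log n$ the leading factor is $2^{O(\sqrt{\log n})}=n^{o(1)}$, so the product is $n^{O(1/k)}$. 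The output size is $\tilde O(S(n)\,L\log(wW))=\tilde O(n\log W)$, the preprocessing is $\tilde O(P(m)\log W)=\tilde O(m\log W)$ (we are in the ``$\H(G,\epsilon)$ is a family of subgraphs'' case), and the approximation factor is the $L$-fold composition of the per-level guarantee, which a sharper accounting of proactive resampling's per-level loss than the crude $O(\log n)$ bound pins down to $k^{O(k)}$ after folding in the $n^{o(1)}$ overhead of the reduction using the assumed range of $k$. (The amortized companion, Theorem~\ref{thm:amortizedCutSparsifierMainResult}, is obtained the same way but through the amortized black-box of Section~\ref{sec:uniform_degree_reduction} instead.) The main obstacle is the second ingredient: showing proactive resampling still yields a two-sided cut approximation in the sub-expander-with-pruning model --- where the certifying expander $W$ itself moves under the adversary and vertices may be dropped from it --- and that all of this de-amortizes to worst-case while staying correct against an adversary that sees the algorithm's past random bits.
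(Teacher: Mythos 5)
Your route is the same as the paper's: verify the five structural properties for cut sparsifiers (Lemma~\ref{lma:cutSparsifierIsAGraphProblem}), build a decremental proactive-resampling cut sparsifier that tolerates the pruned-sub-expander promise of Definition~\ref{def:worst-case:decremental_algorithm} by simply deleting each pruned batch $P_t$ from the working graph and adding it to the output with unit weight (this is Lemma~\ref{lma:WorstCaseBlackBoxCutSparsifier}), and then plug into the worst-case black box Theorem~\ref{thm:reduction_worst_case} with $d=n^{\Theta(1/k)}$ and $L=\Theta(k)$; your accounting of preprocessing, output size and the $n^{O(1/k)}$ update time matches the paper's.

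There is, however, a genuine gap in how you reach the $k^{O(k)}$ approximation ratio. The engine you describe is the powers-of-two schedule with the key claim $\Pr[(u,v)\in H \text{ at time }t]\le 2\rho\log t$, which yields only an $O(\log n)$-approximation per level; composing over $L=\Theta(k)$ levels of the reduction then gives $\log^{O(k)}n$ (this is exactly what the paper settles for in the worst-case \emph{spanner} theorem), not $k^{O(k)}$ — note $\log^{O(k)}n$ is much larger than $k^{O(k)}$ for small $k$. You acknowledge needing ``a sharper accounting of the per-level loss'' but never supply it, and no sharper analysis of the powers-of-two schedule can give it, since that schedule genuinely admits $\Theta(\log t)$ relevant resampling stages per vertex. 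The paper's fix is a different per-level algorithm: change the proactive schedule from $t+2^i$ to $t+(1+\epsilon)^i$ with $\epsilon$ polynomially small (roughly $n^{-\Theta(1/k)}$), which caps the number of relevant stages at $O(\log_{1/\epsilon}t)=O(k)$ and hence gives an $O(k^{O(1)})$-approximation per level, at the price of an extra $n^{O(1/k)}$ factor in the worst-case update time (Theorem~\ref{thm:generalStatementCutSparsifier}, invoked with approximation parameter about $k^{2}=e^{2\log k}$ in the proof of Theorem~\ref{thm:worstCaseCutSparsifierMainResult}). With per-level loss $e^{O(\log k)}$ the $L$-fold composition is $e^{O(L\log k)}=k^{O(k)}$, and the extra $n^{O(1/k)}$ per-level cost is absorbed into the claimed update time; with your stated $T(n)=\polylog n$, $O(\log n)$-per-level version this trade-off is unavailable and the theorem as stated does not follow.
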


Setting $k = \log \log n/ (\log\log\log n)^c$ for some constant $c$, we obtain the following corollary. Setting $k = 1/0.001$, we obtain second corollary.

\begin{corollary}
There exists an algorithm with worst-case update time $n^{o(1)} \log W$ that maintains an $O(\log n)$-approximate cut-sparsifier of size $n^{1+o(1)} \log W$ against an adaptive adversary.
\end{corollary}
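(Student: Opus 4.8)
The plan is to derive the corollary by instantiating \Cref{thm:worstCaseCutSparsifierMainResult} with a single value of the parameter $k$ chosen to balance the approximation factor $k^{O(k)}$ against the update time $n^{O(1/k)}$. Concretely, I would set $k := \log\log n / (\log\log\log n)^{2}$ (any constant exponent $c\ge 2$ in place of $2$ works just as well). For the finitely many small $n$ for which this quantity fails to lie in the admissible interval $(1,\sqrt{\log n}/\log\log n]$ the asymptotic claims are vacuous, so we may assume $n$ is large. First I would check admissibility: $k\to\infty$, so $k>1$ eventually, and $k = o(\sqrt{\log n}/\log\log n)$ because $\log\log n$ grows far more slowly than $\sqrt{\log n}/\log\log n$; hence \Cref{thm:worstCaseCutSparsifierMainResult} applies with this $k$.

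Next I would bound the two quantities of interest. Writing the approximation guarantee as $k^{O(k)} = 2^{O(k\log k)}$ and using $\log k = (1+o(1))\log\log\log n$, we get $k\log k = O\!\left(\log\log n / \log\log\log n\right) = o(\log\log n)$, so the approximation ratio is $2^{o(\log\log n)} = (\log n)^{o(1)}$, which is in particular $O(\log n)$. For the update time, $n^{O(1/k)} = 2^{O((\log n)(\log\log\log n)^{2}/\log\log n)}$, and since $(\log\log\log n)^{2}/\log\log n = o(1)$ this equals $2^{o(\log n)} = n^{o(1)}$; the extra $\log W$ factor (together with the $\tilde O(\cdot)$ polylogarithmic overhead) comes from the standard bucketing of edges into $O(\log W)$ weight classes in the reduction to unweighted graphs, and is subsumed in the claimed $n^{o(1)}\log W$ bound. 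The sparsifier size $\tilde O(n\log W) = n^{1+o(1)}\log W$ is read off directly, and the adaptive-adversary and high-probability guarantees are inherited verbatim from \Cref{thm:worstCaseCutSparsifierMainResult}.

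There is essentially no obstacle here: all the work is in \Cref{thm:worstCaseCutSparsifierMainResult}, and the corollary is just an optimization over $k$. The only point requiring a little care is confirming that one choice of $k$ simultaneously drives $k^{O(k)}$ down to $(\log n)^{o(1)}$ and $n^{O(1/k)}$ down to $n^{o(1)}$; this is possible precisely because the admissible window for $k$ (up to $\sqrt{\log n}/\log\log n$) is wide enough to let $k$ be both super-constant — which kills the $n^{O(1/k)}$ term — and smaller than $\log\log n$ by a $\mathrm{poly}(\log\log\log n)$ factor — which keeps $k\log k = o(\log\log n)$. If one wanted only the $O(\log n)$ bound literally stated (rather than the stronger $(\log n)^{o(1)}$ the argument gives), it would suffice to take any $k$ with $k=\omega(1)$ and $k\log k\le \log\log n$, so there is ample slack.
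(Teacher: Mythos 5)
Your proposal is correct and is essentially the paper's own derivation: the paper obtains this corollary by plugging $k=\log\log n/(\log\log\log n)^{c}$ for a suitable constant $c$ into \Cref{thm:worstCaseCutSparsifierMainResult}, exactly as you do with $c=2$, and your verification that $k^{O(k)}=2^{O(k\log k)}=(\log n)^{o(1)}=O(\log n)$ while $n^{O(1/k)}=n^{o(1)}$ (with the $\log W$ factors carried through from the weight-class reduction) fills in the same arithmetic the paper leaves implicit.
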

\begin{corollary}
There exists an algorithm with worst-case update time $O(n^{0.001} \log W)$ that maintains an $O(1)$-approximate cut-sparsifier of size $\tilde{O}(n^{1+0.001} \log W)$ against an adaptive adversary.
\end{corollary}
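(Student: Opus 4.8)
The plan is to obtain this corollary as a direct specialization of \Cref{thm:worstCaseCutSparsifierMainResult}, instantiated at a suitable constant value of the parameter $k$. Recall that \Cref{thm:worstCaseCutSparsifierMainResult} supplies, for every $1 < k \le \sqrt{\log n}/\log\log n$, an adaptive algorithm that maintains a $k^{O(k)}$-approximate cut sparsifier of size $\tilde O(n\log W)$ with $n^{O(1/k)}$ worst-case update time, correct with high probability against an adaptive adversary. The one structural observation that makes the corollary fall out is that, once $k$ is a constant, $k^{O(k)}$ is itself an absolute constant, so the approximation collapses to $O(1)$, while the update-time exponent $O(1/k)$ can be pushed below any prescribed positive threshold by taking $k$ to be a sufficiently large constant.

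Concretely, I would let $c$ denote the absolute constant hidden in the exponent of the $n^{O(1/k)}$ update-time bound of \Cref{thm:worstCaseCutSparsifierMainResult}, and fix $k := \lceil 1000\,c\rceil$, a constant independent of $n$. First dispose of small $n$: for the finitely many $n$ at which the hypothesis $k \le \sqrt{\log n}/\log\log n$ fails, the graph has $O(1)$ vertices and hence $O(1)$ edges, so after every update one may simply recompute a cut sparsifier (for instance, return the whole graph) from scratch in $O(1)$ time, which is subsumed by the claimed bound. For all larger $n$ the hypothesis $1 < k \le \sqrt{\log n}/\log\log n$ holds with the chosen constant $k$, so \Cref{thm:worstCaseCutSparsifierMainResult} applies with this $k$.

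It then remains only to read off the parameters. The approximation ratio is $k^{O(k)} = O(1)$ because $k$ is a fixed constant. The worst-case update time is $n^{c/k}\cdot\polylog(n)\cdot\log W$; since $n^{c/k} \le n^{c/(1000c)} = n^{0.001}$ and $\polylog(n)$ is absorbed either into the $O(\cdot)$ of the statement or, if one wants a strict exponent, by enlarging $k$ by a further constant factor so that $n^{c/k} = o(n^{0.001}/\polylog n)$, this is $O(n^{0.001}\log W)$. The sparsifier size is $\tilde O(n\log W) \subseteq \tilde O(n^{1+0.001}\log W)$, and both the adaptive-adversary guarantee and the high-probability correctness are inherited verbatim from \Cref{thm:worstCaseCutSparsifierMainResult}. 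The only point needing any attention — a bookkeeping point rather than a genuine obstacle — is tracking the hidden constants precisely enough to certify that the update-time exponent is at most $0.001$ after the $\polylog$ overheads, which is exactly why $k$ is chosen as a large constant multiple of $c$ and why one uses $n^{\varepsilon}\gg\polylog n$ for any fixed $\varepsilon>0$.
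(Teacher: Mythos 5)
Your proposal is correct and is essentially the paper's own proof: the paper obtains this corollary by simply instantiating \Cref{thm:worstCaseCutSparsifierMainResult} at the constant $k = 1/0.001$, which yields a $k^{O(k)} = O(1)$ approximation, $n^{O(1/k)}$ worst-case update time, and size $\tilde O(n\log W) \subseteq \tilde O(n^{1+0.001}\log W)$. Your only additions — enlarging $k$ to absorb the hidden constant in the exponent and brute-forcing the finitely many small $n$ outside the theorem's range for $k$ — are harmless refinements of the same one-line instantiation.
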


To prove the above theorems, we apply the reductions  from \Cref{sec:uniform_degree_reduction,sec:worst_case_blackbox} and work on decremental unweighted graphs $H$ that have near-uniform degrees and remain $\phi$-expanders. The main technical result of the section is \Cref{thm:mainDecrCutSparsifier} below which is achieved using a new \emph{proactive sampling} approach. 
Note that,  instead of   requiring that the input graph $H$ always has the desired properties, we allow the theorem to work on any $H$, but only guarantee that the output $\tilde{H}$ is a cut sparsifier when $H$ is a near-uniform degree expander. After this, we can combine \Cref{thm:mainDecrCutSparsifier} with our black box reduction to get similar results for the more general fully dynamic setting, with no requirements on $H$.

\begin{theorem}\label{thm:mainDecrCutSparsifier}
Let $H$ be   a decremental unweighted graph 
 that satisfies $\Delta_{\max} \geq \Delta_{\min} \geq \frac{80 \log n}{\phi}$ for some fixed $1/\phi \in (1, n)$, where  $\Delta_{\max}$ is the maximum degree of the initial   graph $H$. Then, there exists an algorithm that maintains $\tilde{H} \subseteq H$ along with a weight function $\tilde{w}$ such that
\begin{itemize}
    \item At any stage, where $H$ has $\Delta_{\min} \leq \min\deg(H)$ and $\Phi(H) \geq \phi$, we have that $\tilde{H}$ weighted by $\tilde{w}$ is an  $O(\log n)$-approximate cut sparsifier of $H$, and
    \item the graph $\tilde{H}$ has at most $\tilde{O}\left(|V(H)| \left(\frac{\Delta_{\max}}{ \phi \Delta_{\min}}\right)^2\right)$ edges.
\end{itemize}
The algorithm can be initialized in $O(m)$ time and has worst-case update time $\tilde{O}\left(\left(\frac{\Delta_{\max}}{\Delta_{\min}\phi}\right)^3 \right)$. The algorithm runs correctly with high probability.
\end{theorem}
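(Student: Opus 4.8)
The plan is to implement the \emph{proactive sampling} scheme sketched in \Cref{sec:adaptive_overview} and to make the informal key claim precise. Set $\rho = \Theta\!\left(\frac{\log^3 n \cdot \Delta_{\max}}{\Delta_{\min}^2 \phi^2}\right)$ with a sufficiently large polylog factor, and assign every edge kept in $\tilde{H}$ the weight $\tilde{w}(e) = 1/\rho$. For every vertex $v$ maintain a set $S_v \subseteq \inc_H(v)$ and always set $\tilde H = \bigcup_{v} S_v$. The operation $\sampleb(v)$ discards $S_v$ and re-samples each edge of $\inc_H(v)$ independently with probability $\rho$; using the Bernoulli-skipping trick of \cite{knuth1997seminumerical,devroye2006nonuniform,bringmann2012efficient} this costs $\tilde O(\rho\Delta_{\max}) = \tilde O\!\big((\Delta_{\max}/(\Delta_{\min}\phi))^2\big)$ in expectation and, by standard concentration, also with high probability. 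On each adversarial deletion of $(u,v)$ we call $\sampleb(u),\sampleb(v)$ immediately and schedule further calls at the geometric offsets $t{+}1,t{+}2,t{+}4,\dots$; additionally, whenever $\deg_H(v)$ has dropped by $\zeta := \phi\Delta_{\min}$ since $v$ was last ``degree-triggered'', we call $\sampleb(w)$ for all $(v,w)\in H$. The degree-triggered resamplings number $O(\Delta_{\max}^2/\zeta)$ per vertex, hence $\tilde O(m)$ total; spreading work round-robin gives the stated worst-case bound. This establishes the size bound (each $S_v$ has $\tilde O(\rho\Delta_{\max})$ edges w.h.p.) and the running time; the content is in proving the approximation guarantee.

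For the \emph{lower bound} on each cut, fix a cut $(X,\bar X)$ with $|X|\le n/2$ at a moment when $H$ is still a $\phi$-expander with $\min\deg(H)\ge\Delta_{\min}$. Call a vertex $x$ \emph{good} if $|\inc_H(x)\cap E_H(X,\bar X)| \ge \tfrac12 \phi\Delta_{\min}$; by expansion at least a constant fraction of $X$ (counting crossing edges) is carried by good vertices, and since $\rho\cdot\phi\Delta_{\min} = \tilde\Omega(\log^2 n)$, a Chernoff bound shows each individual call $\sampleb(x)$ on a good $x$ yields $|S_x\cap E_H(X,\bar X)| \ge \tfrac12\rho|\inc_H(x)\cap E_H(X,\bar X)|$ with probability $1-n^{-10}$. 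Since each vertex is resampled at most $\poly(n)$ times, a union bound makes \emph{every} call to a good vertex simultaneously successful w.h.p.; then $\Pr[\text{at least }|X|/2\text{ good vertices are always-successful}] \ge 1 - n^{-2|X|}$, and we union-bound over all $O(n^{|X|})$ cuts of size $|X|$ and over all $|X|$. This gives $|E_{\tilde H}(X,\bar X)| = \Omega(\rho |E_H(X,\bar X)|)$ at all times, i.e.\ $\delta_{\tilde H}(X) \ge \Omega(1)\cdot\delta_H(X)$; rescaling $\tilde w$ by a constant makes it a true lower bound.

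For the \emph{upper bound} the crux is the key claim: at time $t$, for every $(u,v)$, $\Pr[(u,v)\in\tilde H] \le O(\rho\log n)$. Following the sketch, bound $\Pr[(u,v)\in S_u]$: let $\tschedule(u)$ be the multiset of times at which $\sampleb(u)$ is scheduled, call a call at time $t'$ \emph{relevant} if no further call is scheduled in $(t',t]$ at the moment just after $t'$, and observe $(u,v)\in S_u$ at time $t$ forces it to have been added in a relevant call. A relevant call at $t'$ implies — via either the proactive geometric schedule or a degree-trigger — that some later call lands in $[(t'+t)/2, t]$ \emph{unless} $\deg_H(u)$ changed by less than $\zeta$ in $(t',t]$; handling the degree-trigger interaction carefully, the number of relevant calls is $O(\log t + \Delta_{\max}/\zeta) = \tilde O(1)$, and each contributes probability $\rho$. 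Summing over $u$ and $v$ gives the claim. \textbf{The main obstacle} is that even though each sampling is an independent coin, the \emph{events} $\{e\in\tilde H\}$ are not mutually independent — an adaptive adversary chooses which edges to delete (hence which $\sampleb$ calls happen) based on the outcomes of earlier calls, so a naive Chernoff bound over $E_H(X,\bar X)$ is illegal. The fix, which is the technically delicate part of \Cref{sec:adaptive}, is to set up a martingale / stochastic-domination argument: condition on the adversary's strategy and expose randomness call-by-call in time order, showing that the sum $\sum_{e\in E_H(X,\bar X)} \mathbf{1}[e\in\tilde H]$ is stochastically dominated by a sum of independent indicators each of mean $O(\rho\log n)$, so that a Chernoff–Azuma bound still yields concentration $|E_{\tilde H}(X,\bar X)| = O(\rho\log n\,|E_H(X,\bar X)|)$ with probability $1-n^{-2|X|}$, and then union-bound over cuts. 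Combining the matching bounds gives the $O(\log n)$ approximation, and the randomness-adaptive strengthening follows because the adversary's access to past bits is already what the martingale analysis assumes.
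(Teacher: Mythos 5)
Your algorithm is the paper's algorithm (per-vertex sets $S_v$, weight $1/\rho$, geometric proactive schedule, degree-drop trigger at $\zeta=\phi\Delta_{\min}$, round-robin deamortization), and your lower-bound and ``relevant calls'' machinery match the paper's \Cref{lma:lowerBoundAdaptiveCutSparsifier} and \Cref{clm:simpleUpperBound} in spirit (though for the lower bound, ``union bound makes every call successful, hence at least $|X|/2$ vertices are always-successful with probability $1-n^{-2|X|}$'' cannot be a plain union bound over vertices: the failure events of different vertices are adversarially correlated, and the paper instead orders all sampling experiments in time and applies the chain rule, bounding the probability that a fixed set of $\ell$ vertices is simultaneously underestimating by a product of conditional per-experiment bounds, as in \Cref{clm:alphaUnderestimateSeldom}).

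The genuine gap is in the upper bound. Your plan is: prove the per-edge marginal ``key claim'' $\Pr[(u,v)\in\tilde H^t]\le O(\rho\log n)$, then argue $\sum_{e\in E_{H^t}(X,\bar X)}\mathbf{1}[e\in\tilde H^t]$ is stochastically dominated by independent indicators of mean $O(\rho\log n)$. This cannot yield the $O(\log n)$ approximation, because the index set you sum over --- the \emph{current} cut edges --- is itself chosen by the adversary in a way correlated with the sampling outcomes: the adversary deletes precisely the un-sampled cut edges and keeps the sampled ones, so the surviving cut edges have conditional membership probability close to $1$, not $O(\rho\log n)$, and no martingale/domination argument recovers a bound relative to $|E_{H^t}(X,\bar X)|$ from the marginals alone. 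This is exactly the clique-plus-$\sqrt n$-to-one-matching counterexample in \Cref{sec:adaptive_overview}; the key claim by itself only gives a comparison to the \emph{initial} cut, i.e.\ the weak $\tilde O(\Delta_{\max}/(\phi\Delta_{\min}))$ bound of \Cref{clm:simpleApprox}. You also misassign the role of the degree trigger: it is not there to shrink the number of relevant calls (the geometric schedule already gives $O(\lg t)$ per vertex, and your count $O(\log t+\Delta_{\max}/\zeta)$ is not what is needed), but to make the analysis \emph{local in time}. The missing ingredient is the paper's phase argument (\Cref{subsec:improvingTheUpperBoundForCutSparsifier}): using $T_{DegreeUpdate}(c)$, each vertex $c$ is assigned a phase, one fixes the phase function $\iota$ realized at time $t$, and shows that only $(X,\iota)$-relevant experiments can contribute surviving sampled cut edges; since within its final phase a vertex loses at most $\zeta$ incident edges, the number of such experiments is at most $O(\lg t)\bigl(|E_{H^t}(X,\bar X)|+k\zeta\bigr)=O(\lg t)\,|E_{H^t}(X,\bar X)|$. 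Concentration of the number of successes among these experiments (chain rule plus a union bound over index sets), followed by a union bound over the at most $m^{k+1}$ choices of $\iota$ and of the current cut size, is what gives \Cref{lma:keyLemmaUpperBoundCutSparsifier} and hence the $O(\log n)$ ratio; without this (or an equivalent mechanism tying the surviving samples to the current cut), your proof establishes only the weaker approximation.
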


The section is organized as follows: 
\Cref{subsec:algoCutSparsifier} presents the algorithm that achieves bounds of \Cref{thm:mainDecrCutSparsifier} but with only the amortized update time. Our presented algorithm will be further analyzed in \Cref{subsec:anaylsisCutSparsifier} with the focus on analyzing the upper bound of the weight of every cut; within the same subsection we also show how to modify our algorithm to obtain the worst-case update time. Finally, we combine \Cref{thm:mainDecrCutSparsifier} with our black-box reduction to obtain several the main results of this section in \Cref{subsec:resultsViaBlackBoxFromCutSparsifier}.

\subsection{Algorithm}
\label{subsec:algoCutSparsifier}

In this section, we give the description of the initialization and update procedure of our algorithm with amortized update time. For the ease of presentation, we express an edge sampling probability as 
  $$\rho = \rhoValue,$$
  where $\alpha$ is the constant that controls the exponent in our high probability bound. This probability $\rho$ is also  called compression probability in \cite{BenczurK15}. We also use a vertex-degree threshold defined by  $\zeta = \phi \Delta_{min}$.

\paragraph{Overview.} 
The algorithm maintains   for every vertex $v \in V(H)$ a set $S_v$, which is   a subset of selected incident edges $\textsc{Inc}_H(v)$ of $v$, and  maintains the cut sparsifier $\tilde{H}$ to be the union of these selected edges, i.e., $\tilde{H} = \bigcup_v S_v$.  The algorithm further gives  every sampled   edge in $\tilde{H}$ the weight $1/\rho$, and these new weights defines the weight function $\tilde{w}$.
  For convenience,  we assume that the assigned weight for edges not in $\tilde{H}$ is $0$.

\begin{algorithm2e}
\caption{Vertex Sampling.}
\label{alg:preprocessing}

\SetKwProg{procedure}{Procedure}{}{}
\SetKwFunction{init}{Init}
\SetKwFunction{sample}{SampleVertex}
\SetKwFunction{update}{EdgeUpdate}
\procedure{\sample{$v$, $H$, $\rho$}}{
    $S_v \gets \emptyset$\;
    \tcc{(Implement the loop below using \Cref{thm:fast sampling}).\label{lne:sampleEdge}}
    \ForEach{$(u,v) \in \textsc{Inc}_H(v)$}{
        Sample $S_v$    by adding each edge independently with probability $\rho$ 
    }
}
\end{algorithm2e} 

\paragraph{Initialization.} We initialize each set $S_v$, using the procedure $\textsc{SampleVertex}(v, H, \rho)$ described in \Cref{alg:preprocessing}. This procedure is given a vertex $v \in V(H)$ and first empties the set $S_v$ and then samples every edge in $\textsc{Inc}_H(v)$ independently and uniformly at random into a collection $S_v$. Whilst we use in the analysis that every edge is evaluated one after another, we use for efficiency, the following efficient
subset sampling result,  which allows us to sample a subset of edges in time proportional to the number of edges sampled.

\begin{theorem}[Efficient Subset Sampling, see \cite{knuth1997seminumerical, devroye2006nonuniform, bringmann2012efficient}]\label{thm:fast sampling}
Given a universe $U$ of size $n$ and a sampling probability $p$. Then, we can compute a set $S \subseteq U$ in worst-case time $O(pn \log n)$ where each element of $U$ is in $S$ with probability $p$ independently. The algorithm succeeds with high probability.
\end{theorem}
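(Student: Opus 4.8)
The plan is to avoid examining the $n$ elements one at a time and instead to sample directly the \emph{sorted sequence of selected elements} by drawing the gaps between consecutive selected elements. Fix an arbitrary enumeration $u_1,\dots,u_n$ of $U$. Let $X_1,X_2,\dots$ be i.i.d.\ geometric random variables on $\{1,2,\dots\}$ with success probability $p$, i.e.\ $\Pr[X=k]=(1-p)^{k-1}p$, and set $T_0=0$ and $T_j=X_1+\dots+X_j$. The algorithm outputs $S=\{\,u_{T_j}\;:\;j\ge 1,\ T_j\le n\,\}$, stopping as soon as $T_j>n$.

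First I would prove that this produces exactly the intended distribution: the indicators $(\mathbf 1[u_i\in S])_{i=1}^n$ are i.i.d.\ Bernoulli$(p)$. This is a standard bijection argument. Condition on $S$ equalling $\{u_{i_1},\dots,u_{i_k}\}$ with $i_1<\dots<i_k$; on the gap side this event is $\{X_1=i_1,\ X_2=i_2-i_1,\ \dots,\ X_k=i_k-i_{k-1},\ X_{k+1}>n-i_k\}$, whose probability telescopes to $p^k(1-p)^{n-k}$, matching the probability of the same outcome under $n$ independent $p$-biased coins; the empty-set case $\Pr[X_1>n]=(1-p)^n$ is the $k=0$ instance. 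Hence the output distribution is exactly correct provided we can sample geometrics exactly.

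Next I would bound the running time. The number of geometric variates drawn is $|S|+1$, and $|S|\sim\mathrm{Binomial}(n,p)$, so a Chernoff bound gives $|S|\le 2pn+O(\log n)$ with probability $\ge 1-n^{-c}$ for any desired constant $c$. To make the time bound \emph{worst-case} rather than merely high-probability, the algorithm aborts once it has drawn more than some threshold $\Theta(pn+\log n)$ of gaps, outputting whatever partial $S$ it has; by the Chernoff bound this abort event has probability $\le n^{-c}$, and it is exactly the failure event behind the ``succeeds with high probability'' clause — conditioned on no abort, the output has the exact distribution established above. Finally I would show that each geometric$(p)$ variate can be generated in $O(\log n)$ time: by inverse-transform sampling one takes a uniform in $(0,1)$ to $O(\log n)$ bits of precision and returns $\lceil \log(1-U)/\log(1-p)\rceil$ with $\log(1-p)$ precomputed once, or uses the exact fair-coin procedure of the cited works; the $O(\log n)$-bit truncation introduces at most $n^{-\Omega(1)}$ additional total-variation error over the $\le n$ samples, which we fold into the high-probability guarantee (or dispense with entirely by using the exact method). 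Multiplying, the total time is $O\big((pn+\log n)\log n\big)$, which equals $O(pn\log n)$ whenever $pn=\Omega(\log n)$, as is the case in all our applications; in the Real-RAM model with constant-time logarithm used in the references the per-variate cost is $O(1)$ and the bound improves to $O(pn+\log n)$.

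The main obstacle is this last step: pinning down the generation of a single geometric variate under a fixed machine model — reconciling ``exact'' independent Bernoulli outcomes with finite-precision arithmetic — together with the bookkeeping of the abort threshold so that the time bound is genuinely worst-case while the correctness-failure probability stays at $n^{-\Omega(1)}$. Everything else (the bijection, the Chernoff tail bound, the telescoping probability computation) is routine.
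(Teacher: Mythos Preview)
Your proposal is correct and follows the standard gap-sampling approach from the cited references. Note, however, that the paper does not actually prove this theorem: it is stated as a known result with citations to \cite{knuth1997seminumerical, devroye2006nonuniform, bringmann2012efficient} and used as a black box, so there is no proof in the paper to compare against. Your argument via geometric waiting times is precisely the classical method those references contain.
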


\paragraph{Edge Deletions and the Sampling Schedule.} The algorithm maintains a \emph{sampling schedule} $T_{\schedule}(u) \subseteq \mathbb{N}$ for each vertex $u \in V$  that records at which stages  vertex $u$ should be sampled, i.e., at which stage we invoke $\textsc{SampleVertex}(u, H, \rho)$ to replace the current set $S_u$ in the cut sparsifier $\tilde{H}$. We initialize each schedule $T_{\schedule}(u)$ to $\{0\}$ since we run the procedure $\textsc{SampleVertex}(u, H, \rho)$ for every $u \in V$ to initialize the algorithm. 
 
To process an adversarial deletion of edge $(u,v)$ from $H$ at stage $t$, we invoke procedure $\textsc{EdgeDeletion}((u,v),t)$ given in \Cref{alg:update} which adapts the schedule and executes the scheduled sampling procedures. More precisely, the algorithm first adds $t, t + 2^0, t + 2^1, t + 2^2, \dots$ to both $T_{\schedule}(u)$ and $T_{\schedule}(v)$. Moreover, if the degree of one of them is divisible by $\zeta$ (i.e. every $\zeta$ times that an incident edge is deleted to a vertex $z$), we schedule a vertex resampling for all of its neighbors at stages $t, t + 2^0, t + 2^1, t + 2^2, \dots$. While for the current cut-sparsifier only the updates scheduled for stage $t$ affect $\tilde{H}$ directly, the subsequent updates manifest a \emph{proactive} sampling strategy that makes it harder for the adversary to change $\tilde{H}$ to deviate from the promised approximation bound on all cuts. We further point out that once a degree drop was realized at a vertex $c$, we also add the current stage to a set $T_{DegreeUpdate}(c)$. This set serves no direct purpose in our algorithm but is a bookkeeping device that is of crucial importance to our analysis. Once the schedule has been updated, all vertex samplings scheduled for the current stage are executed.

\begin{algorithm2e}
\caption{Handling an Edge Deletion.}
\label{alg:update}
\SetKwProg{procedure}{Procedure}{}{}
\SetKwFunction{sample}{SampleVertex}
\SetKwFunction{update}{EdgeDeletion}
\procedure{\update{$(u,v), t$}}{
    \ForEach{$z \in \{u,v\}$}{
        Add $t, t + 2^0, t + 2^1, t + 2^2, t + 2^3, \dots$ to $T_{\schedule}(z)$\;
        \If(\label{lne:ifDegreeHigh}){$\deg_H(z)$ is divisible by $\zeta$}{
            Add $t$ to $T_{DegreeUpdate}(z)$ \label{lne:degreeUpdateT} \tcc*[f]{used only for analysis}\;
            \ForEach{$y \in \mathcal{N}(z)$}
            {
                Add $t, t + 2^0, t + 2^1, t + 2^2, t + 2^3, \dots$ to $T_{\schedule}(y)$ \label{lne:neighborInducedStages}
            }
        }
    }
    \ForEach(\label{lne:foreachSchedule}){$v \in V$ where $t \in T_{\schedule}(v)$}{
        $S_v \gets$ \sample{$v$, $H$, $\rho$}\;
        Adjust $\tilde{H} = \bigcup_{u \in V} S_u$ accordingly.
    }
}
\end{algorithm2e}

\subsection{Analysis}
\label{subsec:anaylsisCutSparsifier}

We analyze the above algorithm in three parts:
\begin{enumerate}
    \item \label{item:lowerBound} In \Cref{subsec:lowerBoundAdaptiveSparsifier}, we prove a lower bound on the weight of cuts in $\tilde{H}$, i.e.,  we show that for every $X \subseteq V(H)$, with high probability we have 
    \[
        |E_H(X, \overline{X})|/2 \leq  w_{\tilde{H}}(E_H(X, \overline{X})).
    \]
    \item \label{item:upperBound} We provide an upper bound on cuts in $\tilde{H}$, i.e.,  we show that for every $X \subseteq V(H)$, with high probability we have
    \[
        w_{\tilde{H}}(E_H(X, \overline{X})) \leq O(\log n) |E_H(X, \overline{X})|.
    \]
    Since it is rather involved to derive the upper bound directly, we first give a simple upper bound in \Cref{subsec:simpleUpperBoundForCutSparsifier} that provides some intuition about the algorithm and our \emph{proactive sampling} approach and provides a first approximation ratio of $\tilde{O}(\Delta_{\max}/(\Delta_{\min}\phi))$. We then improve the upper bound to a logarithmic approximation ratio by refining the analysis in \Cref{subsec:improvingTheUpperBoundForCutSparsifier}.
    \item Finally we show how to tweak the algorithm slightly, and show that the algorithm can obtain the claimed worst-case update time, the space bound and the guarantees from the bounds \ref{item:lowerBound} and \ref{item:upperBound} which culminates in a proof of \Cref{thm:mainDecrCutSparsifier}. 
\end{enumerate}
We point out that the two guarantees on the weight of the cuts in $\tilde{H}$ (weighted by $\tilde{w}$) are only guaranteed at stages where $\min\deg(H) \geq \Delta_{\min}$ and $\Phi(H) \geq \phi$. We implicitly assume that the above guarantees are given in our proofs. 

Throughout the proofs, we use a superscript to indicate the version of a variable or function, for example $H^t$ refers to the graph $H$ at the end of stage $t$. Finally, the following folklore result will be used in various parts of the proof.

\begin{theorem}[Folklore Result]\label{thm:folklore}
It holds for any positive integer $n$ and $k$  that ${n \choose k} \leq \left(\frac{e \cdot n}{k}\right)^k$.
\end{theorem}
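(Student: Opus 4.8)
The plan is the textbook two-step estimate, which for a statement this elementary essentially is the proof. First I would dispose of the degenerate case $k > n$: there $\binom{n}{k} = 0$ while $\left(\frac{en}{k}\right)^k > 0$, so the inequality is immediate. For the main range $1 \le k \le n$, I would write the binomial coefficient with a truncated falling factorial in the numerator,
$$\binom{n}{k} = \frac{n(n-1)\cdots(n-k+1)}{k!} \le \frac{n^k}{k!},$$
since each of the $k$ factors $n, n-1, \dots, n-k+1$ is at most $n$.

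Next I would supply a clean lower bound on $k!$, avoiding any appeal to asymptotics. The one-line argument is that the exponential series has all nonnegative terms, so
$$e^k = \sum_{j \ge 0} \frac{k^j}{j!} \ge \frac{k^k}{k!},$$
which rearranges to $k! \ge k^k / e^k = (k/e)^k$. Substituting this into the previous display yields
$$\binom{n}{k} \le \frac{n^k}{(k/e)^k} = \left(\frac{e\,n}{k}\right)^k,$$
which is exactly the claimed bound.

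The only point requiring the slightest care is the $k! \ge (k/e)^k$ step; I would prefer the self-contained series argument above to invoking Stirling's formula, since it is exact (no hidden constants) and needs no lower-bound-on-$e^k$ subtlety. Everything else is a routine comparison of products, so there is no real obstacle here — the statement is recorded only because the concentration arguments in \Cref{subsec:lowerBoundAdaptiveSparsifier} and the upper-bound analysis take a union bound over the $\binom{n}{|X|}$ cuts of a given side size and need precisely this form $\left(\frac{en}{|X|}\right)^{|X|}$ to be absorbed by the $n^{-2|X|}$-type per-cut failure probabilities.
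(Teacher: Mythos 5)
Your proof is correct and is the standard argument for this folklore bound (the paper itself states it without proof): the estimate $\binom{n}{k}\le n^k/k!$ combined with $k!\ge (k/e)^k$, the latter obtained cleanly from the nonnegative-term series $e^k=\sum_{j\ge 0}k^j/j!\ge k^k/k!$. Nothing is missing, and the handling of the degenerate case $k>n$ is fine.
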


\subsubsection{Lower Bound}
\label{subsec:lowerBoundAdaptiveSparsifier} 
In this section  we establish a lower bound on the weight of any cut in the cut-sparsifier $\tilde{H}$. The main result is summarized in the following lemma.

\begin{lemma} \label{lma:lowerBoundAdaptiveCutSparsifier}
Throughout the entire algorithm, 
we have that for any cut $(X, \overline{X})$ at any stage $t$, we have
\begin{equation}\label{eq:cutSatisfiesLowerBound}
|E_{H^t}(X, \overline{X})|/2 \leq  \tilde{w}^t(X, \overline{X})
\end{equation}
with probability $1- n^{-\alpha}$ for $\alpha > 1$.
\end{lemma}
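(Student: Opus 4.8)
\textbf{Proof plan for \Cref{lma:lowerBoundAdaptiveCutSparsifier}.}

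The plan is to follow the argument already sketched in the overview (the ``slightly less naive algorithm'' paragraph), but carried out rigorously. Fix a cut $(X,\overline{X})$ with, say, $|X| \le |V(H)|/2$. First I would isolate the right unit of randomness: for each vertex $x$, the call $\sampleb(x)$ samples every edge of $\textsc{Inc}_{H^t}(x)$ independently with probability $\rho$, and crucially, each such call is \emph{internally} independent of all prior randomness (even though the adversary may choose \emph{which} call happens next based on the past). So I would define, for a fixed vertex $x \in X$ and a fixed stage, the event that a call $\sampleb(x)$ \emph{succeeds}, meaning it results in $|S_x \cap E_{H^t}(X,\overline{X})| \ge \tfrac12 \rho\, |\textsc{Inc}_{H^t}(x) \cap E_{H^t}(X,\overline{X})|$ (a one-sided Chernoff deviation). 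Since $\Phi(H^t)\ge\phi$ and $\min\deg(H^t)\ge\Delta_{\min}$, the \emph{average} vertex in $X$ has $\tilde\Omega(\phi\Delta_{\min})$ edges crossing the cut; combined with $\rho = \tilde\Theta(1/\Delta_{\min})$ (with the large polylog factor governed by $\alpha$), each individual call that touches a high-crossing-degree vertex succeeds with probability at least $1 - n^{-10\alpha}$ by a Chernoff bound on $\rho\cdot\tilde\Omega(\phi\Delta_{\min}) = \tilde\Omega(\log n)$ independent indicators.

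Next I would handle the ``bad fraction'' of vertices in $X$ whose crossing degree is far below the average. By a simple averaging/Markov argument over $X$, at least a constant fraction (say $3/4$) of the vertices $x \in X$ have $|\textsc{Inc}_{H^t}(x)\cap E_{H^t}(X,\overline X)| \ge c\,\phi\Delta_{\min}$ for an appropriate constant $c$; call these the \emph{good} vertices of $X$ at stage $t$. For a good vertex, every one of its (at most $n^2$, since there are at most $m\le n^2$ updates total) $\sampleb$ calls succeeds with probability $\ge 1-n^{-10\alpha}$, so by a union bound over all calls, $x$ is \emph{always-successful} with probability $\ge 1 - n^{-8\alpha}$. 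Then I would bound the probability that fewer than $|X|/2$ vertices of $X$ are always-successful: since the $\sampleb$ calls for distinct vertices use disjoint fresh randomness, the always-successful events are mutually independent across vertices, so this bad event has probability at most $\binom{|X|}{|X|/2}\,n^{-8\alpha\cdot(|X|/4)} \le (2e)^{|X|} n^{-2\alpha|X|} \le n^{-2\alpha|X|+O(|X|)}$, using \Cref{thm:folklore}. Taking a union bound over the at most $\binom{n}{|X|}\le n^{|X|}$ cuts of a given side-size, and then over all side-sizes $|X|\in\{1,\dots,n/2\}$, the total failure probability is $\sum_{|X|} n^{|X|}\cdot n^{-2\alpha|X|+O(|X|)} \le n^{-\alpha}$ for $\alpha>1$ once the polylog slack in $\rho$ is chosen large enough.

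Finally I would convert ``at least $|X|/2$ always-successful good vertices'' into \eqref{eq:cutSatisfiesLowerBound}. At any stage $t$, let $Y\subseteq X$ be the always-successful good vertices (with $|Y|\ge |X|/2$). Since $\tilde H^t = \bigcup_{v} S_v^t$, every edge that some $x\in Y$ put into $S_x^t$ and that crosses the cut contributes weight $1/\rho$; hence $\tilde w^t(X,\overline X) \ge \tfrac1\rho\,\bigl|\bigcup_{x\in Y} S_x^t \cap E_{H^t}(X,\overline X)\bigr|$. Each good always-successful $x$ contributes $\ge \tfrac12\rho\,|\textsc{Inc}_{H^t}(x)\cap E_{H^t}(X,\overline X)|$ edges to its own $S_x$; edges are counted at most twice in $\bigcup_{x\in Y}$ (once per endpoint), so $|\bigcup_{x\in Y} S_x^t\cap E_{H^t}(X,\overline X)| \ge \tfrac12\sum_{x\in Y}\tfrac12\rho\,|\textsc{Inc}_{H^t}(x)\cap E_{H^t}(X,\overline X)|$, and summing crossing-degrees over the good half of $X$ recovers a constant fraction of $|E_{H^t}(X,\overline X)|$. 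Choosing the constants so the product of these $\Theta(1)$ losses is at least $1/2$ gives $\tilde w^t(X,\overline X)\ge |E_{H^t}(X,\overline X)|/2$.

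\textbf{Main obstacle.} The one genuinely delicate point is the independence claim used in the union bound over vertices: the adversary is adaptive, so whether vertex $x$'s calls all succeed can influence the adversary's later deletions, hence \emph{which} calls $\sampleb(x')$ occur for another vertex $x'$. What saves us is that the success of a \emph{given} call depends only on that call's fresh coin flips, which are independent of everything in the past, so conditioning on the entire history up to that call leaves each indicator with the same Chernoff bound; one must phrase the union bound as a sum over a fixed (adversary-independent) enumeration of potential calls — ``the $j$-th time $\sampleb(x)$ is ever invoked'' — so that the bound $1-n^{-10\alpha}$ holds per potential call regardless of the adversary, and then the cross-vertex independence follows because distinct vertices' calls draw from disjoint blocks of the random tape. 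I would make sure to state this reduction to a fixed call-enumeration explicitly, since it is exactly the subtlety the paper flags (``the adversary can potentially create dependencies between the random choices''); everything else is routine Chernoff-plus-union-bound bookkeeping. Note this lower-bound direction is the \emph{easy} half — the genuinely hard analysis (needing proactive sampling) is the matching \emph{upper} bound, handled separately.
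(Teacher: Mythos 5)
Your plan has a genuine structural gap at the point where "many successful vertices" is converted into the weight bound \eqref{eq:cutSatisfiesLowerBound}, and it stems from counting vertices where one must count weight. First, the averaging claim that "a constant fraction (say $3/4$) of the vertices of $X$ have crossing degree $\geq c\,\phi\Delta_{\min}$" is false in general: the cut can be concentrated on a handful of vertices of degree up to $\Delta_{\max}$, with almost every other vertex of $X$ having no crossing edge, so no constant fraction of vertices need be good. Second, and more fundamentally, even if $|X|/2$ good vertices were always-successful, their crossing degrees only sum to $\Omega(|X|\phi\Delta_{\min})$, while $|E_{H^t}(X,\overline{X})|$ can be as large as $|X|\Delta_{\max}$; so your final step recovers only an $\Omega(\phi\Delta_{\min}/\Delta_{\max})$ fraction of the cut, not $1/2$ — this is precisely the "average half of the vertices" simplification that the overview flags and the formal proof must avoid. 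The paper's proof is organized to dodge exactly this: \Cref{clm:concentrationOfEdgesAtCertainVertices} shows the vertices of crossing degree $\geq \phi\Delta_{\min}/20$ carry a $19/20$ fraction of the cut \emph{weight}, and \Cref{clm:ifLemmaVioThenManyAlphaUnderestimating} shows a violation forces at least $\eta = \phi k\Delta_{\min}/(8\Delta_{\max})$ of them to be $3/4$-underestimating, a threshold calibrated so that the bad vertices can erase at most $\eta\Delta_{\max}\leq \phi k\Delta_{\min}/8$ weight. Since $\eta\ll k$ when $\Delta_{\max}\gg\phi\Delta_{\min}$, beating the $n^{k}$ union bound over cuts then needs a per-vertex failure exponent of order $\alpha\Delta_{\max}/(\Delta_{\min}\phi)$, which is exactly what the factor $\Delta_{\max}/(\Delta_{\min}^2\phi^2)$ in $\rho$ provides; your choice $\rho=\tilde\Theta(1/\Delta_{\min})$ only works in the $\Delta_{\max}/\Delta_{\min},\,1/\phi=\polylog(n)$ regime of the overview, not for \Cref{thm:mainDecrCutSparsifier}.

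A secondary flaw is the claim that the always-successful events are "mutually independent across vertices" because they use disjoint blocks of the random tape. Against an adaptive adversary this is not true: whether, and in which graph state, the $j$-th call $\sampleb(x')$ occurs depends on the outcomes of earlier calls at other vertices, so the event is not a function of $x'$'s randomness alone. What is true — and all that is needed — is that each individual call, conditioned on the entire history, violates its Chernoff bound with probability at most the stated bound; one then bounds the probability that a prescribed set of failures all occur via the chain rule along the time order, paying a union bound over which potential calls fail. This is exactly how \Cref{clm:alphaUnderestimateSeldom} proceeds, bounding $\mathbb{P}\left[ Y_{i_j} \mid Y_{i_1},\dots,Y_{i_{j-1}} \right]$ and summing over index sets in $[mn]$. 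Replacing your independence claim by that sequential-conditioning argument is routine (you half-identify it in your "main obstacle" paragraph before asserting independence anyway), but the weight-versus-count issue above is a real gap that requires restructuring the proof along the paper's lines rather than patching constants.
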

\begin{proof}
The first crucial observation for the proof is that each edge $(u,v)$ has two chances to appear in the sparsifier $\tilde{H}$: it can be in $S_u$ or it can be in $S_v$. In particular, if $(u,v) \in S_u$, then calling \textsc{VertexSample}$(v,H,\rho)$ cannot remove $(u,v)$ from $\tilde{H}$; the only way it can be removed is via a call to \textsc{VertexSample}$(u,H,\rho)$

To prove the lemma, let us first consider a specific cut $(X, \overline{X})$, where $k = |X|\leq |\overline{X}|$. The main focus of our proof is on establishing that constraint 
(\ref{eq:cutSatisfiesLowerBound}) is violated for a specific cut $(X, \overline{X})$ at a specific stage $t$ with probability at most $n^{-10\alpha k}$ (this is the result of the final claim in this lemma, \Cref{clm:finalClaimLowerBound}).

We can then obtain 
\Cref{lma:lowerBoundAdaptiveCutSparsifier} which stipulates that there exists \emph{no} cut that violates constraint (\ref{eq:cutSatisfiesLowerBound}) by taking the union over the violating events for all cuts and obtain that there indeed does not exist such a violating cut with probability at most
\begin{align*}
    \mathbb{P}&\left[\bigcap_{X \subseteq V} \{|E_{H^t}(X, \overline{X})|/2 \leq  \tilde{w}^t(E_{\tilde{H}^t}(X, \overline{X}))\}\right] \\
    &\geq 1 - \mathbb{P}\left[\bigcup_{X \subseteq V} \{|E_{H^t}(X, \overline{X})|/2 >  \tilde{w}^t(E_{\tilde{H}^t}(X, \overline{X}))\}\right] \\
    &\geq 1 - 2 \cdot \sum_{k=1}^{n/2} {n \choose k} n^{-10\alpha k} \geq 1 - n^{-\alpha}
\end{align*}
by using DeMorgan and a simple union bound.

To show that the probability that cut $(X, \overline{X})$ violates Constraint (\ref{eq:cutSatisfiesLowerBound}) at any stage $t$ is at most $n^{-10\alpha k}$, we proceed to prove  
through the following three steps: 
\begin{enumerate}
    \item There is a subset $C_{\geq \epsilon}$ of the vertices in $X$ that carries at least an $(1-\epsilon)$-fraction of the weight of the edges in the cut $(X, \overline{X})$ in $H$, for every $0 \leq \epsilon \leq 1$, and \label{item:subsetCwithMany}
    \item if Constraint (\ref{eq:cutSatisfiesLowerBound}) is violated, then a large number of the vertices in $C_{\geq \epsilon/4}$ have sampled an extremely small fraction of the edges in the cut that they are incident to, and
    \item this latter event occurs with extremely small probability.
\end{enumerate}

To make Item \ref{item:subsetCwithMany} more formal,  let us define the set ${C}_{\geq \epsilon}$ to be the collection of vertices $c \in X$, such that $|E_{H^t}(\{ c \}, \overline{X})| \geq \phi\Delta_{\min} \epsilon$. We can then prove that most edges in $E_{H^t}(X,\overline{X})$ are incident to a vertex in $C_{\geq \epsilon}$.

\begin{restatable}{claim}{almostEntireCutAtFraction}
\label{clm:concentrationOfEdgesAtCertainVertices}
Let $\epsilon$ be defined as before, Then, it holds for  every stage $t$ that 
\begin{equation}
(1-\epsilon)|E_{H^t}(X, \overline{X})| \leq  |E_{H^t}({C}_{\geq \epsilon}, \overline{X})|.
\end{equation}
\end{restatable}
\begin{proof}
Since $k = |X|$ and  every vertex in $X$ has degree at least $\Delta_{\min}(H^t) \geq \Delta_{\min}$, it holds that  the total volume of $X$ is $\vol_{H^t}(X) \geq k \cdot \Delta_{\min}$. By the definition of conductance, we therefore have that 
\begin{equation}\label{eq:manyEdgesInCut}
|E_{H^t}(X, \overline{X})| \geq \vol_{H^t}(X) \cdot \phi \geq k \phi\Delta_{\min}.
\end{equation}

Next, observe that every vertex that is not in ${C}_{\geq \epsilon}$ contributes less than $\phi\Delta_{\min} \epsilon$ edges to $E_{H^t}(X \setminus C_{\geq \epsilon}, \overline{X})$. However, since there are at most $k$ vertices in $X \setminus C_{\geq \epsilon}$, we have that 
\begin{equation} \label{eq:fewEdgesInRestrCut}
    |E_{H^t}(X \setminus C_{\geq \epsilon}, \overline{X})| < k \phi\Delta_{\min}\epsilon.
\end{equation}
Combining  (\ref{eq:manyEdgesInCut}) and (\ref{eq:fewEdgesInRestrCut}), we have that at least a $(1-\epsilon)$-fraction of all edges in the cut has to be incident to a vertex in ${C}_{\geq \epsilon}$.
\end{proof}

Next, let us define the notion of underestimating.

\begin{definition}\label{def:underestimating}
For each vertex $c \in C_{\geq \epsilon}$, stage $t$ and a real number $\beta$, we say that $c$ is $\beta$-underestimating at stage $t$ if 
\[
\tilde{w}^t(\{ c \}, \overline{X}) \leq \beta \cdot |E_{H^t}(\{ c \}, \overline{X})|.
\]
We say that $c$ is $\beta$-underestimating if there exists a stage $t$ where it is $\beta$-underestimating at stage $t$.
\end{definition}

We point out that $\tilde{w}^t(\{ c \}, \overline{X})$ in $\tilde{H}^t$ is lower bounded by the number of edges in $S_c$ at stage $t$ time $1/\rho$, since an edge $(x,y)$ is in $\tilde{H}$ with weight $1/\rho$ if it is sampled into either the set $S_x$ or $S_y$. For the remaining part  of the proof,  for each edge $(x,y)$ in the cut with $x \in X$ and $y \in \overline{X}$ we only  focus  on whether $(x,y)$ is present in $S_x$, as this is sufficient to establish the lower bound. The following claim will be used in our analysis.
 
\begin{claim}\label{clm:ifLemmaVioThenManyAlphaUnderestimating}
If $|E_{H^t}(X, \overline{X})|/2 > \tilde{w}^t(X, \overline{X})$, then there are  at least $\eta = \phi k \Delta_{\min}/(8\Delta_{\max})$ vertices $C$ in $C_{\geq 1/20}$ that are $3/4$-underestimating.
\end{claim}
\begin{proof}
Assume for contradiction that $C$, the set of $3/4$-underestimating vertices is of size less than $\eta$. Observe that every vertex $c$ in $C_{\geq 1/20} \setminus C$ satisfies at any stage $t$ that
\[
\tilde{w}^t(\{ c \}, \overline{X}) > 3/4 \cdot |E_{H^t}(\{ c \}, \overline{X})|
\]
Now, if $C$ was empty, then we could lower bound the number of edges in $|E_{H^t}(C_{\geq 1/20}, \overline{X})|$ by $(19/20)k\phi\Delta_{\min}$ using \Cref{clm:concentrationOfEdgesAtCertainVertices}. To account for the fact that $C$ might not be empty, and at most $ \eta$ vertices in $C$ could potentially have no incident edges in $\tilde{H}^t$. But since the maximum degree is bounded by $\Delta_{\max}$, we have
\begin{align*}
    \tilde{w}^t(X, \overline{X}) \geq (19/20)\tilde{w}^t(C_{\geq 1/20}, \overline{X}) &> (19/20)(3/4) \cdot |E_{H^t}(C_{\geq 1/20}, \overline{X})| - \eta \cdot \Delta_{\max} \\
    &> (7/10) \cdot |E_{H^t}(C_{\geq 1/20}, \overline{X})| - \phi k \Delta_{\min}/8 \\
    &> (1/2) \cdot |E_{H^t}(C_{\geq 1/20}, \overline{X})|\\
    &\geq |E_{H^t}(X, \overline{X})|/2,
\end{align*}
which leads to a   contradiction. Hence, the statement follows.
\end{proof}

Let us now prove the final crucial claim where we use the following version of a classic Chernoff bound.

\begin{theorem}[Scaled Chernoff Bound]
\label{thm:scaledChernoff}
Let $Y = \sum_i Y_i$, where each random variable $Y_i \in [0,W]$ for some $W > 0$, and $\{Y_i\}_i$ are independently distributed. Then, it holds for any $0 < \delta < 1$ that
\[
    \mathbb{P}[Y < (1-\delta)\mathbb{E}[Y]] \leq e^{-\frac{\delta \mathbb{E}[Y]}{ 2W}}.
\]  
\end{theorem}

\begin{claim}
\label{clm:alphaUnderestimateSeldom}
For any subset $C \subseteq C_{\geq 1/20}$ of size $\ell$, the probability that \emph{all} vertices in $C$ are $3/4$-underestimated at some stage $t$ is at most $n^{-\frac{2^6(\alpha+1)\Delta_{\max}}{\Delta_{\min}\phi}\ell}$.
\end{claim}
\begin{proof}
The algorithm proceeds by repeatedly calling $\textsc{SampleVertex}(v, H, \rho)$ for various vertices $v \in V$. Let $c_1$ be the first vertex in $C$ for which $\textsc{SampleVertex}$ is called, let $c_2$ be the second, and so on. Note that we could have $c_i = c_j$, if a $\textsc{SampleVertex}$ is called twice for some vertex in $C$. 
Let  $Y_i$ be the boolean random variable that is true if (1) at the time that $\textsc{SampleVertex}(v, H, \rho)$ is called we have $c_i \in C_{\geq 1/20}$ and (2) after the new $S_{c_i}$ is sampled we have that $c_i$ is $\alpha$-underestimating. 

We now observe that if $c_i \in C_{\geq 1/20}$ we have 
\begin{align*}
    \mathbb{P}&[Y_i \;|\; Y_0, Y_1, Y_2, \dots, Y_{i-1}, c_i \in C_{\geq 1/20} ] \\ 
    &= \mathbb{P}\left[\tilde{w}^t(\{ c \}, \overline{X}) < 3/4 \cdot |E_{H^t}(\{ c \}, \overline{X})|\right] \\
    &\leq  e^{-\frac{|E_{H^t}(\{ c \}, \overline{X})|\rho}{16}} \\
    &\leq  e^{-\frac{\phi\Delta_{\min}\rho}{2^9}} \\
    &<  n^{-\frac{2^7(\alpha+1)\Delta_{\max}}{\Delta_{\min}\phi}}, 
\end{align*}
where we use that the sampling of edges incident to $c$ is independent of previous events in the first step, and therefore the probability is exactly the probability that $c$ is $3/4$-underestimating at the particular stage $t$. But this in turn is based on $|E_{H^t}(\{ c \}, \overline{X})|$ independent edge sampling experiments that have value $0$ or $1/\rho$. The first inequality comes from the scaled Chernoff bound in \Cref{thm:scaledChernoff} and finally we use the definition of $C_{\geq 1/20}$ which implies that there are at least $\phi\Delta_{\min}/20$ edges in the set and plug in $\rho = \frac{2^{16}(\alpha+1)\log n \Delta_{\max} }{\Delta_{\min}^2\phi^2}$ explicitly.

We further note that if $c_i \notin C_{\geq 1/20}$, then the probability of $Y_i$ is $0$ by definition. 

Now, note that are clearly at most $mn$ vertices $c_i$, and hence at most $mn$ variables $Y_i$: the adversary can perform at most $m$ deletions, and, as a very rough upper bound, each deletion can at most lead to a resampling of each of the $\ell \leq n$ vertices in $C$. We further note that we have 
\begin{align*}
\lefteqn{\mathbb{P}[\textrm{at least } \ell \textrm{ of the } Y_i \textrm{ are true}]}\\
& \leq \mathbb{P}[\textrm{at least } \ell \textrm{ vertices in } C_{\geq 1/20} \textrm{ are } 3/4\textrm{-underestimating at time } t ].
\end{align*}
This is straight-forward to see from the fact that if at least $\ell$ vertices in $C_{\geq 1/20}$ are underestimating at some stage, then for each of them there is a distinct variable $Y_i$ that evaluates to true.  We thus have that
\[
    \mathbb{P}\left[\bigcup_{0\leq t\leq m} \bigcap_{c \in C_{\geq 1/20}} \left\{\tilde{w}^t(\{ c \}, \overline{X}) \leq 3/4 \cdot |E_{H^t}(\{ c \}, \overline{X})|\right\}\right] \leq  \mathbb{P}\left[\bigcup_{I \subseteq [mn], |I| \geq \ell} \bigcap_{i \in I} Y_i \right]
\]

Finally, we can use a union bound and, since we can find a strict time-wise ordering of the indices in any set $I$, we can apply the chain rule to obtain
\begin{align*}
    \mathbb{P}\left[\bigcup_{I \subseteq [mn], |I| \geq \ell} \bigcap_{i \in I} Y_i \right]
   &\leq \sum_{I \subseteq [mn], |I| \geq \ell} \mathbb{P}\left[ \bigcap_{i \in I} Y_i \right] \\
   &= \sum_{i_1 < i_2 <\dots < i_{k}\subseteq [mn], k \geq \ell} \;\;\prod_{j = 1}^k \mathbb{P}\left[ Y_{i_j} | Y_{i_1}, Y_{i_2}, \dots, Y_{i_{j-1}} \right]
\end{align*}
and since each term $\mathbb{P}\left[ Y_{i_j} | Y_{i_1}, Y_{i_2}, \dots, Y_{i_{j-1}} \right]$ is upper bounded by $n^{-\frac{2^7(\alpha+1)\Delta_{\max}}{\Delta_{\min}\phi}} $ as shown before (here we can use the law of total probability to condition on a subset of the variables $Y_1,Y_2, \dots, Y_{i-1}$), we can finally obtain
\[
    \sum_{i_1 < i_2 <\dots < i_{k} \subseteq [mn], k \geq \ell} \;\;\prod_{j = 1}^k \mathbb{P}\left[ Y_{i_j} | Y_{i_1}, Y_{i_2}, \dots, Y_{i_{j-1}} \right]
   \leq {mn \choose \ell} \prod_{j = 1}^\ell n^{-\frac{2^7(\alpha+1)\Delta_{\max}}{\Delta_{\min}\phi}}.
\]
The last term can be upper bounded by $\left(\frac{emn}{\ell}\right)^\ell n^{-\frac{2^7(\alpha+1)\Delta_{\max}}{\Delta_{\min}\phi}\ell} $. To see that this is at most $n^{-\frac{2^6(\alpha+1)\Delta_{\max}}{\Delta_{\min}\phi}\ell} $, note that $\left(\frac{emn}{\ell}\right)^\ell \leq n^{5\ell}$ by \Cref{thm:folklore} while $\frac{2^7(\alpha+1)\Delta_{\max}}{\Delta_{\min}\phi}\ell > 10\ell$.
\end{proof}

It remains to combine the last two  claims to finish the proof of the lemma.

\begin{claim}
\label{clm:finalClaimLowerBound}
The probability that at any stage $t$, 
\[
|E_{H^t}(X, \overline{X})|/2 >  \tilde{w}^t(E_{H^t}(X, \overline{X}))
\]
is at most $n^{-10 \alpha k}$.
\end{claim}
\begin{proof}
We have by \Cref{clm:ifLemmaVioThenManyAlphaUnderestimating} that
\begin{align*}
    \mathbb{P}&\left[\bigcup_{t=0}^{m} \{|E_{H^t}(X, \overline{X})|/2 >  \tilde{w}^t(X, \overline{X})\}\right] \\
    &\leq \mathbb{P}\left[ \bigcup_{C \subseteq C_{\geq 1/20}, |C| \geq \eta} \{ \exists\mbox{ a stage }t \mbox{ where every vertex in } C \mbox{ is } 3/4\mbox{-underestimated}\}\right]\\
    &\leq \sum_{C \subseteq C_{\geq 1/20}, |C| \geq \eta} \mathbb{P}\left[ \exists\mbox{ a stage }t \mbox{ where every vertex in } C \mbox{ is } 3/4
    \mbox{-underestimated}\right]\\
    &\leq \sum_{C \subseteq C_{\geq 1/20}, |C| \geq \eta} n^{-\frac{2^6(\alpha+1)\Delta_{\max}}{\Delta_{\min}\phi}\eta} \leq {n \choose \eta} n^{-\frac{2^6(\alpha+1)\Delta_{\max}}{\Delta_{\min}\phi}\eta} \leq \left(\frac{en}{\eta}\right)^{\eta} \cdot n^{-\frac{2^6(\alpha+1)\Delta_{\max}}{\Delta_{\min}\phi}\eta} < n^{-10 \alpha k}
\end{align*}
where the second inequality stems from a union bound, we then use \Cref{clm:alphaUnderestimateSeldom} to derive the next inequality, and finally we use \Cref{thm:folklore} and plug in the value of $\eta = \phi k \Delta_{\min}/(8 \Delta_{\max})$.
\end{proof}
This completes the proof of the lemma.
\end{proof}

\subsubsection{A Simple Upper Bound on the Approximation}
\label{subsec:simpleUpperBoundForCutSparsifier}

In this section, we give an  upper bound on how much $\tilde{H}$ overestimates $H$. The central to our proof is the following lemma.

\begin{restatable}{lemma}{arbitrarySetUpperBound}
\label{clm:simpleUpperBound}
It holds for any stage $t$,  any set of edges $E' \subseteq E(H^0)$, and any $\ell \in \mathbb{R}^+$ that 
\[
    \mathbb{P}\left[\tilde{w}^t(E') \geq \ell \right] \leq \left(\frac{2e^2 \lg t \cdot |E'|}{\ell}\right)^{\ell \rho}
\]
where $\tilde{w}^t$ is the weight function of $\tilde{H}^t$ that assigns edges in $\tilde{H}^t$ weight $1 / \rho$ and edges not in $\tilde{H}^t$ weight $0$.
\end{restatable}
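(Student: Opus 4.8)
The plan is to bound $\tilde{w}^t(E')$ by showing each edge $e \in E'$ is in $\tilde H^t$ with probability at most $2\rho \lg t$, and that these events have favorable (negatively-associated-like) structure so that a Chernoff-type tail bound applies even against the adaptive adversary. Concretely, $\tilde{w}^t(E') = (1/\rho) \cdot |\{e \in E' : e \in \tilde H^t\}|$, so the target inequality is equivalent to bounding the probability that at least $\ell\rho$ edges of $E'$ lie in $\tilde H^t$. First I would establish the \textbf{key per-edge claim}: for any edge $(u,v)$, at the end of stage $t$ we have $\Pr[(u,v) \in \tilde H^t] \le 2\rho \lg t$. This is precisely the ``key claim'' sketched in the overview: $(u,v) \in \tilde H^t$ requires $(u,v) \in S_u$ or $(u,v) \in S_v$, and for $(u,v) \in S_u^t$ the set $S_u$ must have been (re)sampled at some \emph{relevant} call --- one whose scheduled future resamples have all already fired by time $t$, i.e., $T_{\schedule}(u)$ became empty again. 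Because each $\textsc{EdgeDeletion}$ adds the geometric sequence $t', t'{+}1, t'{+}2, t'{+}4,\dots$ to $T_{\schedule}(u)$, a relevant call at time $t'$ forces the next scheduled sample to land in $[(t'{+}t)/2, t]$, so there are at most $\lceil \lg t\rceil$ relevant calls before stage $t$; each contributes probability exactly $\rho$ of keeping $(u,v)$, independently of the history, giving the union bound $\le \rho \lg t$ for each of $S_u, S_v$, hence $\le 2\rho\lg t$ total.

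The second and harder step is to convert per-edge probabilities into a tail bound on the \emph{count} $|\{e \in E' : e \in \tilde H^t\}|$ despite adaptive dependencies. I would do this by a union bound over subsets: for a fixed size $k = \lceil \ell\rho\rceil$,
\[
\Pr[\exists\, S \subseteq E',\, |S| = k,\, S \subseteq \tilde H^t] \le \sum_{S \subseteq E',\, |S|=k} \Pr[S \subseteq \tilde H^t].
\]
The crux is then to argue $\Pr[S \subseteq \tilde H^t] \le (2\rho\lg t)^k$ for each fixed $S$. This needs a chain-rule argument: order the edges of $S$ and, for each edge $e_j$, condition on $e_1,\dots,e_{j-1} \in \tilde H^t$; one shows the conditional probability that $e_j \in \tilde H^t$ is still at most $2\rho\lg t$, because for each relevant sampling call affecting $e_j$'s endpoints, the $\rho$-probability of including $e_j$ is independent of \emph{everything} that happened before that call (the fresh randomness of that particular $\textsc{SampleVertex}$), and in particular independent of the status of $e_1,\dots,e_{j-1}$. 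This is the same disentangling device used for the lower bound in \Cref{clm:alphaUnderestimateSeldom}: enumerate the relevant calls $Y_i$ in time order, apply the chain rule / law of total probability, and bound each conditional factor by the fresh-coin probability. The main obstacle is getting this conditioning argument airtight --- one must be careful that ``relevant call'' is itself a random event determined by the adversary's (adaptive) deletion sequence, so the count of relevant calls can depend on history; the clean way around it is to bound, for each \emph{fixed} ordered tuple of at most $2\lceil\lg t\rceil$ calls, the probability all of them keep $e_j$, then union-bound over which calls are the relevant ones.

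Finally I would assemble the pieces. Using $\binom{|E'|}{k} \le \left(\frac{e|E'|}{k}\right)^k$ (\Cref{thm:folklore}) and $\Pr[S \subseteq \tilde H^t] \le (2\rho\lg t)^k$,
\[
\Pr[\tilde w^t(E') \ge \ell] \le \binom{|E'|}{k}(2\rho\lg t)^k \le \left(\frac{e|E'|}{k}\cdot 2\rho\lg t\right)^k = \left(\frac{2e\,|E'|\rho\lg t}{k}\right)^k,
\]
and since $k = \lceil \ell\rho\rceil \ge \ell\rho$ and $\lg t \le \lg t$, this is at most $\left(\frac{2e\,|E'|\lg t}{\ell}\right)^{\ell\rho}$; absorbing the constant and the ceiling slack yields the stated bound $\left(\frac{2e^2\lg t\,|E'|}{\ell}\right)^{\ell\rho}$ (the extra factor of $e$ comfortably covers replacing $k$ by $\ell\rho$ and any rounding). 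A couple of edge cases need a line each: if $2e^2\lg t\,|E'| \ge \ell$ the bound is trivial since the right side is $\ge 1$, and if $\ell\rho < 1$ one argues directly via Markov's inequality using $\mathbb{E}[\tilde w^t(E')] \le 2|E'|\lg t$. I expect the per-edge ``relevant call'' counting (Step 1) to be quick given the overview sketch, and the adaptive chain-rule bound on $\Pr[S \subseteq \tilde H^t]$ (Step 2) to be the technically delicate heart of the proof.
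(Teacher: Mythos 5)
Your proposal is correct and follows essentially the same route as the paper's proof: bound the number of relevant sampling calls per edge by $2\lg t$, reduce the event $\{F \subseteq \tilde{H}^t\}$ to having $|F|$ successful relevant experiments, bound that via a time-ordered chain rule with each conditional success probability at most $\rho$ together with a union bound over which relevant experiments succeed, and finish with a union bound over subsets of $E'$ of size about $\ell\rho$ using $\binom{n}{k} \le (en/k)^k$. The paper implements the ``disentangling'' exactly via the fix you describe (a time-ordered process of $F$-relevant experiments $X_i$ with $\mathbb{P}[X_i \mid X_1,\dots,X_{i-1}] \le \rho$ and at most $2|F|\lg t$ of them), rather than the edge-by-edge conditioning you first sketch, which, as you yourself note, is not time-ordered and would not be airtight as stated.
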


To explain why the lemma above is the key to our $O(\log n)$-approximation result shown in the next section, we apply the lemma in a straight forward manner to obtain the following upper bound result.

\begin{theorem}
\label{clm:simpleApprox}
At all stages $t$, we have that $\tilde{H}$ forms an $\tilde{O}\left(\frac{\Delta_{\max}}{\phi \Delta_{\min}}\right)$-approximate cut-sparsifier of $H$ with probability at least $1 - n^{-\alpha}$ for any constant $\alpha > 0$.
\end{theorem}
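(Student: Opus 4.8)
The plan is to combine the lower bound of \Cref{lma:lowerBoundAdaptiveCutSparsifier} with a union‑bound application of \Cref{clm:simpleUpperBound}. Fix a cut $(X,\overline X)$ with $k=|X|\le n/2$. For the upper direction, the crucial observation is that for a \emph{fixed} vertex set $X$ the cut only shrinks over time: $E_{H^t}(X,\overline X)\subseteq E_{H^0}(X,\overline X)$, hence $\tilde w^t(E_{H^t}(X,\overline X))\le \tilde w^t(E_{H^0}(X,\overline X))$, and we may apply \Cref{clm:simpleUpperBound} to the fixed edge set $E':=E_{H^0}(X,\overline X)$, as the lemma requires. Since each of the $k$ vertices of $X$ has degree at most $\Delta_{\max}$ in the initial graph, $|E'|\le k\Delta_{\max}$.

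First I would set the threshold $\ell:=2e^{3}\lg t\cdot k\Delta_{\max}$ (for $t\le 1$ replace $\lg t$ by $1$; the argument is identical). Then $|E'|\le k\Delta_{\max}=\ell/(2e^{3}\lg t)$, so the base in \Cref{clm:simpleUpperBound} satisfies $2e^{2}\lg t\cdot|E'|/\ell\le 1/e$, giving $\Pr[\tilde w^t(E')\ge\ell]\le e^{-\ell\rho}$. Plugging in $\rho=\rhoValue$ and using $\ell\rho\ge k\Delta_{\max}\rho = 2^{16}(\alpha+1)k\log n\cdot\frac{\Delta_{\max}^{2}}{\Delta_{\min}^{2}\phi^{2}}\ge 2^{16}(\alpha+1)k\log n$ (because $\Delta_{\max}\ge\Delta_{\min}$ and $\phi<1$), the failure probability is at most $n^{-2^{16}(\alpha+1)k}$. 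A union bound over the at most $n^{k}$ cuts of a given size $k$ (via \Cref{thm:folklore}), over $k=1,\dots,\lfloor n/2\rfloor$, and over the at most $O(m)\le n^{2}$ stages shows that, with probability at least $1-n^{-\alpha}/2$, at every stage $t$ and for every cut $(X,\overline X)$ we have $\tilde w^t(E_{H^t}(X,\overline X))\le 2e^{3}\lg t\cdot|X|\cdot\Delta_{\max}$.

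Next I would translate $|X|\cdot\Delta_{\max}$ into a multiple of the current cut size. At any stage where $\Phi(H^t)\ge\phi$ and $\min\deg(H^t)\ge\Delta_{\min}$, conductance gives $|E_{H^t}(X,\overline X)|\ge k\phi\Delta_{\min}$ (exactly as in \eqref{eq:manyEdgesInCut}; in the case $\vol_{H^t}(X)>\vol_{H^t}(\overline X)$ one instead bounds the cut by $\phi\cdot\vol_{H^t}(\overline X)\ge\phi(n-k)\Delta_{\min}\ge\phi k\Delta_{\min}$ using $k\le n/2$). Hence $k\le|E_{H^t}(X,\overline X)|/(\phi\Delta_{\min})$, so $\tilde w^t(E_{H^t}(X,\overline X))\le \tilde O\!\left(\frac{\Delta_{\max}}{\phi\Delta_{\min}}\right)|E_{H^t}(X,\overline X)|$. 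Combining this with \Cref{lma:lowerBoundAdaptiveCutSparsifier}, which (with probability $1-n^{-\alpha}$, simultaneously for all cuts and stages) gives $|E_{H^t}(X,\overline X)|/2\le\tilde w^t(E_{H^t}(X,\overline X))$, we obtain
\[
\tfrac12\,|E_{H^t}(X,\overline X)|\;\le\;\tilde w^t(E_{H^t}(X,\overline X))\;\le\;\tilde O\!\left(\tfrac{\Delta_{\max}}{\phi\Delta_{\min}}\right)|E_{H^t}(X,\overline X)|,
\]
so $\tilde H$ (with $\tilde w^t$ rescaled by a factor $2$, which is absorbed into the $\tilde O(\cdot)$) is an $\tilde O(\Delta_{\max}/(\phi\Delta_{\min}))$‑approximate cut sparsifier of $H^t$. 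A final union bound over the $O(n^{-\alpha})$ failure probabilities of \Cref{lma:lowerBoundAdaptiveCutSparsifier}, of the upper‑bound event above, and of the efficient subset sampling routine of \Cref{thm:fast sampling}, after re‑choosing the constant $\alpha$ inside $\rho$ large enough, yields the claimed success probability $1-n^{-\alpha}$.

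The main subtlety — and the reason the bound is $\tilde O(\Delta_{\max}/(\phi\Delta_{\min}))$ rather than something better — is exactly the step of applying \Cref{clm:simpleUpperBound} to the \emph{initial} cut $E_{H^0}(X,\overline X)$ (a fixed set, as the lemma demands) instead of the time‑varying set $E_{H^t}(X,\overline X)$, and then separately accounting for the gap between $|E_{H^0}(X,\overline X)|\le k\Delta_{\max}$ and $|E_{H^t}(X,\overline X)|\ge k\phi\Delta_{\min}$. Beyond this, the only work is checking that the exponent $\ell\rho$ is large enough — which is immediate from the definition of $\rho$ — to survive the union bound over all $\binom{n}{k}$ cuts and all $\mathrm{poly}(n)$ stages.
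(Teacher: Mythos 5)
Your proof is correct and follows essentially the same route as the paper's: apply \Cref{clm:simpleUpperBound} to the fixed initial edge set $E_{H^0}(X,\overline X)$, use $\rho$ together with the conductance/min-degree bound $|E_{H^t}(X,\overline X)|\ge k\phi\Delta_{\min}$ to pay the $\Delta_{\max}/(\phi\Delta_{\min})$ ratio, and union-bound over cuts and stages. The only (harmless) differences are that you set the threshold via $k\Delta_{\max}$ rather than $|E_{H^0}(X,\overline X)|$ and that you explicitly fold in the lower bound of \Cref{lma:lowerBoundAdaptiveCutSparsifier} and the larger-volume-side case, which the paper leaves implicit.
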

\begin{proof}
For any cut $(X, \overline{X})$, where $k = |X| \leq |\overline{X}|$, we use \Cref{clm:simpleUpperBound} on the edges $E_{H^0}(X, \overline{X})$, where we set $\ell = 4e^2 \cdot \lg n \cdot |E_{H^0}(X, \overline{X})|$ and obtain that
\begin{align} \label{eq:upperBoundOn0Version}
\mathbb{P}&\left[\tilde{w}^t(E_{H^0}(X, \overline{X})) \geq 4e^2 \cdot \lg n \cdot |E_{H^0}(X, \overline{X})| \right]
\leq 2^{-4e^2 \cdot \lg n \cdot |E_{H^0}(X, \overline{X})| \rho}
\end{align}
Further, we have that $X$ has volume at least $k \cdot \Delta_{\min}$ and since it forms the smaller side of the cut, we have by expansion of $H$ that $|E_{H}(X, \overline{X})| \geq \phi \cdot k \cdot \Delta_{\min}$ at any stage under consideration. Thus, 
\begin{align} \label{eq:upperBoundOnVersionSimpler}
    2^{-4e^2 \cdot \lg n \cdot |E_{H^0}(X, \overline{X})| \rho} \leq 2^{-4e^2 \cdot \lg(n) \phi \cdot k \cdot \Delta_{\min} \rho} \leq n^{-(\alpha+2)k}
\end{align}
because $\rho = \rhoValue$. Further, we have
\[
    \frac{|E_{H^0}(X, \overline{X})|}{|E_{H}^t(X, \overline{X})|} \leq \frac{\Delta_{\max} k}{\phi \Delta_{\min} k} =\frac{\Delta_{\max}}{\phi \Delta_{\min}}.
\]
because the $k$ vertices in $X$ have maximum degree $\Delta_{\max}$ which upper bounds the size of the cut and the minimum degree of each vertex is $\Delta_{\min}$ and a $\phi$-fraction of these incident edges is in the cut by the definition of conductance.

Thus, combining with   (\ref{eq:upperBoundOn0Version}) and (\ref{eq:upperBoundOnVersionSimpler}) it holds   with probability at least $n^{-(\alpha+2)k}$  that
\[
    \tilde{w}^t(E_{H^t}(X, \overline{X})) \leq  \frac{\Delta_{\max}}{\phi \Delta_{\min}} \cdot \tilde{w}^t(E_{H^0}(X, \overline{X})) \leq \frac{4e^2 \cdot \lg n \Delta_{\max}}{\phi \Delta_{\min}}  \cdot |E_{H^0}(X, \overline{X})|.
\]
It follows that if we use the analysis above for any cut and then use a union bound over all cuts and stages, we obtain
\begin{align*}
\mathbb{P}&\left[ \text{At some stage, } \tilde{H} \text{ is not a } \left(\frac{4e^2\Delta_{\max}}{\phi \Delta_{\min}} \lg n\right) \text{ cut-sparsifier of } H\right]\\
&= \mathbb{P}\left[\exists t, \exists (X, \overline{X}), \tilde{w}^t(E_{H^0}(X, \overline{X})| \geq 4e^2 \cdot \lg n \cdot |E_{H^0}(X, \overline{X})|\right] \\
&\leq \bigcup_{\substack{0 \leq t \leq m,\\ 1 \leq k \leq n/2,\\ X \subseteq V, k = |X|}} \mathbb{P}\left[\tilde{w}^t(E_{H^0}(X, \overline{X})) \geq 4e^2 \cdot \lg n \cdot |E_{H^0}(X, \overline{X})| \right] \\
&\leq n^{-\alpha}. \qedhere
\end{align*}
\end{proof}

Now we are ready to prove  \Cref{clm:simpleUpperBound}.

\begin{proof}[Proof of \Cref{clm:simpleUpperBound}]
Let us begin the proof by formalizing the random process that leads to the sparsifier $\tilde{H}^t$. %
The algorithm can be viewed  as repeatedly performing edge-experiments, each of which takes some edge $(u,v)$ and adds it to $S_u$ with probability $\rho$. Let $Z_0$ correspond to the first experiment performed by the algorithm,  $Z_1$ correspond to the second one, and etc.  We set $Z_i = 1$ if the corresponding edge $(u,v)$ was indeed sampled into $S_u$. Notice that,  regardless of the values of $Z_0, \ldots Z_{i-1}$, we have that the probability of $Z_i$ to evaluate to true is $\rho$.  Recall that time $t'$ refers to the state of the algorithm after $t'$ adversarial updates, and before $t'+1$ adversarial updates. Each experiment $Z_i$ is concerned with sampling some particular edge $(u,v)$ into a set $S_z$ for $z \in \{u,v\}$ and occurs at some particular time $t'$.

For convenience, we introduce the boolean random variables $Z_{z, (u,v)}^{t'}$ where $z \in \{u,v\}$ such that $Z_{z, (u,v)}^{t'}$ is true if and only if
\begin{itemize}
    \item there exists an experiment $Z_i$ that is conducted at stage $t'$ and concerned with sampling the edge $(u,v)$ into the set $S_z$, {\bf and}
    \item that experiment $Z_i$ is successful. 
\end{itemize}
With this set-up in place let us describe how to express the event that an edge $(u,v)$ is in $\tilde{H}^t$ using the notation defined so far: if the edge $(u,v)$ is in $\tilde{H}^t$ then we have that the edge $(u,v)$ is in the set $S_u$ \emph{or} in the set $S_v$ at stage $t$, i.e. $(u,v) \in S^t_z$ for some $z\in\{u,v\}$. Further, in order to be in such a set $S^t_z$ there must have been some experiment $Z_{z, (u,v)}^{t'}$ that was successful \emph{and} the set $S_z$ was not resampled thereafter, i.e. at no stage $t''$ after stage $t'$ and before (or including) stage $t$ was scheduled in $\mathcal{T}_{\schedule}(z)$.

Let us summarize our discussion formally as follows:
\begin{align*}
    \{ (u,v) \in \tilde{H}^t \} &= \{ (u,v) \in S_u^t\} \;\bigcup\; \{ (u,v) \in S_v^t\} \\&= \left\{ \bigcup_{t'= 0, z \in \{u,v\}}^{t} \left( Z_{z, (u,v)}^{t'} \; \land \not\exists t'' \in \mathcal{T}_{\schedule}(z), t' <
    t'' \leq t \right) \right\}.
\end{align*}
Here, we point out that the set $\mathcal{T}_{\schedule}(z)$ is a random set depending on the entire update sequence. However, as the key to our analysis is to order events time-wise, we henceforth argue about the event where $\mathcal{T}^{t'}_{\schedule}(z)$ is in place of $\mathcal{T}_{\schedule}(z)$, that is instead of considering the final update schedule $\mathcal{T}_{\schedule}(z)$ for vertex $z$, we consider only the schedule $\mathcal{T}^{t'}_{\schedule}(z)$ that the algorithm learnt during stage $t'$. It is not hard to observe that
\begin{align}\label{eq:subsetEventT}
   \left\{ \bigcup_{t'= 0, z \in \{u,v\}}^{t} \left( Z_{z, (u,v)}^{t'} \; \land \not\exists t'' \in \mathcal{T}_{\schedule}(z), t' <
    t'' \leq t \right) \right\}\\ 
    \subseteq \left\{ \bigcup_{t'= 0, z \in \{u,v\}}^{t} \left( Z_{z, (u,v)}^{t'} \; \land \not\exists t'' \in \mathcal{T}^{t'}_{\schedule}(z), t' <
    t'' \leq t \right) \right\}
\end{align}
since the set $\mathcal{T}^{t'}_{\schedule}(z) \subseteq \mathcal{T}_{\schedule}(z)$ and therefore a stage $t'$ might now be considered even if at a later stage $t''$ (still before or at $t$) new stages are added to $\mathcal{T}_{\schedule}(z)$.

Let us now focus on the second part of the event. To ease the discussion, let us introduce the concept of $u$ and $(u,v)$-relevance. 

\begin{definition}[$u$-relevant] For any vertex $u \in V$, and stage $t'$, we say that stage $t'$ is $u$-relevant if there exists no index $t'' \in \mathcal{T}^{t'}_{\schedule}(u)$ with $t' < t'' \leq t$.
\end{definition}
\begin{definition}[$(u,v)$-relevant]
For any edge $(u,v)$ and stage $t'$, we say that an experiment $Z_i$ that is evaluated at stage $t''$ and concerned with edge $(x,y)$ is $(u,v)$-\emph{relevant} if $(x,y) = (u,v)$, $t' = t''$ and $t''$ is $u$-\emph{relevant} or $v$-\emph{relevant}.
\end{definition}

The concept of relevance is crucial to our analysis, because an edge $(u,v)$ can only end up in $\tilde{H}^t$ as a result of some $(u,v)$-relevant experiment $Z_i$ at time $t' \leq t$; any experiment that is not $(u,v)$-relevant, either does not concern edge $(u,v)$, or it occurs at a non-relevant time $t'$, in which case both $S_u$ and $S_v$ will be resampled in the time interval $(t',t]$, so experiment $Z_i$ cannot be the one responsible for adding edge $(u,v)$ to up in $\tilde{H}^t$.

In the next claim, we argue that there is only a small number of $u$-relevant stages, and therefore only a small number of $(u,v)$-relevant experiments for each edge $(u,v)$.

\begin{restatable}{claim}{clmSchedule}
\label{clm:fewEvents}
For every vertex $u$, there are at most $\lg t$ stages that are $u$-\emph{relevant}. Moreover, for every edge $(u,v)$, there are at most $2\lg t$ experiments that are $(u,v)$-\emph{relevant}.
\end{restatable}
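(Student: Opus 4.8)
The claim has two parts: (i) at most $\lg t$ stages are $u$-relevant, and (ii) at most $2 \lg t$ experiments are $(u,v)$-relevant. The second part follows quickly from the first: an $(u,v)$-relevant experiment occurs at a stage that is $u$-relevant or $v$-relevant, and for each such stage $t'$ there is at most one experiment concerned with the specific edge $(u,v)$ sampled into $S_u$ (and one into $S_v$), namely the one coming from the single call $\textsc{SampleVertex}$ at that stage; so the count is at most $(\lg t) + (\lg t) = 2\lg t$. Hence the work is entirely in part (i).

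\textbf{Key step for part (i).} Fix $u$ and the target stage $t$. I will show that the $u$-relevant stages are ``spread out geometrically'' towards $t$. The crucial structural fact is the proactive-sampling rule: whenever a stage $t'$ gets added to $\mathcal{T}_{\schedule}(u)$ — which happens exactly when some incident edge of $u$ is deleted, or when a neighbor's degree hits a multiple of $\zeta$ and triggers line~\ref{lne:neighborInducedStages} — the algorithm simultaneously adds $t', t'+1, t'+2, t'+4, t'+8, \ldots$ to $\mathcal{T}_{\schedule}(u)$. So consider any $u$-relevant stage $t' < t$ (the case $t'=t$ is handled trivially since it can be the ``last'' relevant stage). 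By definition of $u$-relevance, $\mathcal{T}_{\schedule}^{t'}(u)$ contains no index in $(t', t]$. But at stage $t'$ something caused a $\textsc{SampleVertex}(u,\cdot,\cdot)$ call, and whatever caused it also inserted the doubling sequence $t', t'+1, t'+2, t'+4, \ldots$ into $\mathcal{T}_{\schedule}(u)$; among these inserted values, the ones of the form $t' + 2^j$ that lie in the interval $(t', t]$ are now in $\mathcal{T}_{\schedule}^{t'}(u)$. Since $t' < t$, at least the value $t'+1$ is $\le t$ (if $t' < t$, then $t'+1 \le t$), so in fact $\mathcal{T}_{\schedule}^{t'}(u) \cap (t', t] \ne \emptyset$, and more precisely it contains the largest $t' + 2^j \le t$. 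That value $t^* := t' + 2^j$ satisfies $t^* \le t$ and $t^* > (t' + t)/2$ (because the next power of two, $t' + 2^{j+1}$, exceeds $t$, i.e. $t' + 2^{j+1} > t$, so $2^{j+1} > t - t'$, i.e. $t^* = t' + 2^j > t' + (t-t')/2 = (t'+t)/2$).

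\textbf{Wrapping up.} The previous paragraph yields: for every $u$-relevant stage $t' < t$, there is an element of $\mathcal{T}_{\schedule}^{t'}(u)$ strictly inside the interval $\big((t'+t)/2,\, t\big]$. Now I would argue by a halving argument. Let $t_1 < t_2 < \cdots < t_r < t$ be all the $u$-relevant stages below $t$, in increasing order. Consider $t_1$: it contributes some $t^*_1 \in \big((t_1+t)/2, t\big]$ that is already present in $\mathcal{T}_{\schedule}^{t_1}(u) \subseteq \mathcal{T}_{\schedule}^{t'}(u)$ for every later $t' \ge t_1$. For any later $u$-relevant stage $t_i$ we need $\mathcal{T}_{\schedule}^{t_i}(u) \cap (t_i, t] = \emptyset$; since $t^*_1 \le t$ and $t^*_1$ is already in $\mathcal{T}_{\schedule}^{t_i}(u)$, we must have $t^*_1 \le t_i$, i.e. $t_i \ge t^*_1 > (t_1+t)/2$. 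So every relevant stage after $t_1$ lies in the second half $\big((t_1+t)/2, t\big)$ of the interval $(t_1,t)$. Iterating, $t_2 > (t_2' \text{-ish midpoint})$... more cleanly: the relevant stages $t_1, t_2, \ldots$ satisfy $t - t_{i+1} < (t - t_i)/2$, so the gaps $t - t_i$ strictly more than halve each step; starting from $t - t_1 \le t$ and ending when $t - t_r \ge 1$, this gives $r \le \lg t$. Combined with the possible extra relevant stage $t'=t$ itself — but note $t$ being relevant means nothing is scheduled in $(t,t]=\emptyset$, which is vacuous, so it does not hurt the count if we are slightly careful about whether to include it; including it still gives $r \le \lg t$ since the bound $r\le \lg t$ above has room, or one absorbs it into the constant — we conclude there are at most $\lg t$ $u$-relevant stages.

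\textbf{Main obstacle.} The delicate point is the bookkeeping around \emph{which} schedule ($\mathcal{T}_{\schedule}^{t'}(u)$ versus $\mathcal{T}_{\schedule}(u)$) one argues about, and making sure the doubling sequence inserted \emph{at} stage $t'$ is genuinely visible in $\mathcal{T}_{\schedule}^{t'}(u)$ — this is exactly why the definition of $u$-relevance is phrased with $\mathcal{T}_{\schedule}^{t'}(u)$ and why \eqref{eq:subsetEventT} was set up. A second subtlety: a $u$-relevant stage need not itself be a stage where an edge incident to $u$ was deleted; it could be induced by a neighbor's degree drop (line~\ref{lne:neighborInducedStages}). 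But that is fine — the argument only uses that \emph{some} trigger at stage $t'$ inserted the doubling sequence into $\mathcal{T}_{\schedule}(u)$, and any $\textsc{SampleVertex}(u,\cdot,\cdot)$ invocation at $t'$ is accompanied by such an insertion (into $\mathcal{T}_{\schedule}(u)$ directly when $u$ is an endpoint of the deleted edge, or via line~\ref{lne:neighborInducedStages} when $u$ is a neighbor of a degree-drop vertex). I would state this explicitly as a preliminary observation before the halving argument.
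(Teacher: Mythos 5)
Your overall skeleton (each relevant stage forces the next one into the upper half of the remaining interval, hence at most $\lg t$ of them, and part (ii) follows immediately from part (i)) matches the paper's, but the justification of the key halving step is wrong and in fact internally inconsistent. You assert that at a $u$-relevant stage $t'<t$, ``whatever caused'' the call $\textsc{SampleVertex}(u,\cdot,\cdot)$ also inserted the doubling sequence $t',t'+1,t'+2,\dots$ into $\mathcal{T}_{\schedule}(u)$, and you conclude that $\mathcal{T}^{t'}_{\schedule}(u)\cap(t',t]\neq\emptyset$. This cannot be right: by the definition of $u$-relevance, $\mathcal{T}^{t'}_{\schedule}(u)$ contains \emph{no} index in $(t',t]$, so if your assertion held, no stage $t'<t$ could be $u$-relevant at all. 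The assertion is false because a resampling of $u$ at stage $t'$ need not be triggered by an update touching $u$ at stage $t'$: typically $t'$ is a \emph{proactive} resampling time, placed into $\mathcal{T}_{\schedule}(u)$ as $s+2^j$ by a touch at an earlier stage $s<t'$, and in that case nothing anchored at $t'$ is inserted at stage $t'$. Indeed, since the schedule is updated before the experiments of a stage are run, any stage $t'<t$ at which $u$ is actually touched has $t'+1\le t$ scheduled and is therefore \emph{not} relevant; so the stages you must count are exactly the proactive ones, for which your premise fails. Consequently the element $t^*_1\in\bigl((t_1+t)/2,\,t\bigr]$ of $\mathcal{T}^{t_1}_{\schedule}(u)$ that drives your wrapping-up argument does not exist.

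The correct anchoring—and this is what the paper's proof does—is at the touch that schedules the \emph{next} relevant stage, not at the current one. If $t'$ is relevant and $s'\in(t',t]$ is a later relevant stage, then $s'$ must have been inserted into $\mathcal{T}_{\schedule}(u)$ by an update touching $u$ at some stage $s$ with $t'<s\le s'$ (it was not yet scheduled at the end of stage $t'$, by relevance of $t'$). That touch inserts $s,s+1,s+2,s+4,\dots$, and relevance of $s'$ forces $s'$ to be at least the largest element of this sequence that is $\le t$, which is at least $s+(t-s)/2>(t'+t)/2$. Hence $t-s'<(t-t')/2$, the distance to $t$ halves between consecutive relevant stages, and the $\lg t$ bound follows; with this repair, your geometric count and your deduction of the second statement from the first go through essentially as you wrote them.
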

\begin{proof}
We have that for each \emph{relevant} stage $t'$ that there is no index $t'' \in \mathcal{T}^{t'}_{\schedule}(u)$ with $t' < t'' \leq t$. Thus, in order to create a \emph{relevant} stage $s' \leq t$ after $t'$, there has to be an adversarial update that \emph{touches} $u$ (either directly by deleting an edge incident to $u$ or indirectly by decreasing the degree of a neighbor of $u$), at some stage $s$, where $t' < s \leq s' \leq t$ such that $s'$ is added to $\mathcal{T}_{schedule}(u)$. 

But observe that then,   \Cref{alg:update} also adds stages $s, s+2^0, s+2^1, s+ 2^2, \dots$ to $\mathcal{T}_{\schedule}(u)$ during the same update procedure. This implies that
\[
s' \geq s + 2^{\lceil\lg (t - s) \rceil - 1} 
\]
since we have that the stage $s + 2^{\lceil\lg (t - s) \rceil - 1} \leq s + 2^{\lg (t - s)} \leq s + t - s = t$ occurs before or at $t$, so each stage before $s + 2^{\lceil\lg (t - s) \rceil - 1}$ cannot satisfy the condition. 

But this implies that $s' - t' \geq s + 2^{\lceil\lg t - s \rceil - 1} - t' \geq s + (t - s)/2 - t' \geq (t-t')/2$ which in turn implies that between any two \emph{relevant} stages, the distance between the stage and $t$ halves. But this upper bounds the number of \emph{relevant} stages by $\lg t$. The second statement of the claim follows immediately from the first.
\end{proof}

Equipped with the developed notation and \Cref{clm:fewEvents}, we can now lay out the strategy to finish our proof: we next prove that for every set $F \subseteq E'$ of edges, the probability that \emph{all} edges $F$ are in $\tilde{H}^t$ is small. Once this proof is established, we can prove our lemma by taking a union bound over all large sets of edges, and show that   no large subset of $E'$ is contained in $\tilde{H}^t$ with high probability. 

We start this proof by defining the concept of $F$-relevance which is a straight-forward extension of $(u,v)$-relevance.

\begin{definition}[$F$-relevant] For any set $F \subseteq E'$, we say that an experiment $Z_i$ is $F$-\emph{relevant} if $Z_i$ is \emph{relevant} for some edge $(u,v) \in F$.
\end{definition}
We make two crucial observations. The first observation is that the $F$-relevance of an experiment $Z_i$ depends only on events that occurred before the experiment $Z_i$ was carried out. This follows since we update the schedule $T_{\schedule}(u)$ for every $u$ before we carry out new experiments at each stage. Thus, when the experiment is conducted, we can already check whether it is $F$-relevant.

\begin{observation}
\label{obs:dependentOnPast}
Whether $Z_i$ is $F$-relevant depends on the variables $Z_0, Z_1, \dots, Z_{i-1}$ and the updates issued by the adversary up to stage $t'$.
\end{observation}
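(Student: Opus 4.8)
\textbf{Proof proposal for Observation~\ref{obs:dependentOnPast}.}
The plan is to verify the statement by unpacking the definitions of $(u,v)$- and $F$-relevance and tracking, for each ingredient, which random and adversarial quantities it is a function of; there is nothing quantitative to estimate here.

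First I would fix the experiment $Z_i$ together with the metadata it carries: the edge $(u,v)$ it concerns, the endpoint $z\in\{u,v\}$ into whose set $S_z$ it attempts to insert $(u,v)$, and the stage $t'$ at which it is conducted. By the time the algorithm reaches $Z_i$, all of this metadata is already pinned down: the $\sampleb$ calls executed at stage $t'$, and the fixed vertex-order in which the loop of line~\ref{lne:foreachSchedule} of \Cref{alg:update} runs them, are determined by the current schedule state and the adversarial deletion of stage $t'$, both of which precede the coin flips of stage $t'$. In particular the metadata of $Z_i$ does not involve the outcome of $Z_i$, nor of any later experiment, and it is a deterministic function of the sequence of adversarial deletions issued through stage $t'$. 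Since $F\subseteq E'$ is a fixed, nonrandom set of edges, whether $(u,v)\in F$ is a deterministic function of this metadata.

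Next I would recall that $Z_i$ is $F$-relevant exactly when $(u,v)\in F$ and the stage $t'$ is $u$-relevant or $v$-relevant, and that stage $t'$ is $u$-relevant iff $\mathcal{T}^{t'}_{\schedule}(u)$ contains no index in $(t',t]$. The key structural fact — read off directly from \Cref{alg:update} — is that within each stage the algorithm \emph{first} performs all updates to the schedules $\mathcal{T}_{\schedule}(\cdot)$ (and to $T_{DegreeUpdate}$), and only \emph{afterwards} executes the vertex resamplings scheduled for that stage. Hence $\mathcal{T}^{t'}_{\schedule}(u)$ — the schedule state consulted to decide $u$-relevance of $t'$ — is fully determined before any coin of stage $t'$ is flipped, and it too is a deterministic function of the adversarial deletions issued through stage $t'$ (the edge identities and the endpoint degrees at the time of deletion). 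Combining the last two paragraphs, the event $\{Z_i \text{ is } F\text{-relevant}\}$ is a deterministic function of the adversarial updates through stage $t'$ together with the metadata of $Z_i$, none of which depends on $Z_i$ or on any later $Z_j$.

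Finally, to phrase the dependence in terms of the $Z_j$'s as stated: in the adaptive-adversary model the deletions issued through stage $t'$ are a function of the adversary's fixed strategy and its view, which is the maintained sparsifier $\tilde{H}$ up to the moment those deletions are issued, and that in turn is a function of the outcomes $Z_0,\dots,Z_{i-1}$ of the experiments conducted strictly before $Z_i$. The same holds for the metadata of $Z_i$. This is exactly the claim. I do not anticipate a real obstacle; the only point requiring care is the \emph{intra-stage ordering} — ``update all schedules, then run the experiments'' — since it is precisely this ordering that places $\{Z_i \text{ is } F\text{-relevant}\}$ in the $\sigma(Z_0,\dots,Z_{i-1})$-past and hence lets it be pulled out of a conditional expectation later in the proof of \Cref{clm:simpleUpperBound}, when one bounds $\mathbb{P}\!\left[F\subseteq \tilde{H}^t\right]$ by a product of per-experiment success probabilities $\rho$ over the at most $2\lg t$ experiments that are $(u,v)$-relevant for each $(u,v)\in F$ (\Cref{clm:fewEvents}) and then union-bounds over the large subsets $F\subseteq E'$.
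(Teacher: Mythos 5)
Your proposal is correct and follows essentially the same route as the paper: the paper's justification is exactly your "intra-stage ordering" point, namely that at each stage the schedules $T_{\schedule}(\cdot)$ are updated before any experiments are executed, so the relevance of $Z_i$ is determined by the prior experiments and the adversarial updates through stage $t'$. Your write-up merely spells out in more detail what the paper states in two sentences, so no gap to report.
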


But since $F$-relevance only depends on information that was accessible before the experiment $Z_i$ was evaluated, it is straight-forward to conclude that conditioning on $F$-relevance does not change the probability of $Z_i$ which is set to true with probability $\rho$ independent of past events. 

\begin{observation}
\label{obs:independentOfPast}
For every experiment $Z_i$,
\[
    \mathbb{P}[Z_i \;|\; Z_i \mbox{ is } F \mbox{-relevant}] = \rho
\]
\end{observation}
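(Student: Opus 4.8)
\textbf{Proof proposal for \Cref{obs:independentOfPast}.}

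The plan is to exploit \Cref{obs:dependentOnPast}, which already tells us that the event ``$Z_i$ is $F$-relevant'' is a deterministic function of $(Z_0,\dots,Z_{i-1})$ together with the adversarial updates issued up through the stage $t'$ at which experiment $Z_i$ is executed. The key point is that the adversary is, by assumption, an \emph{adaptive} adversary whose updates can depend only on the randomness \emph{already revealed} by the algorithm — that is, on the outcomes $Z_0,\dots,Z_{i-1}$ — and not on the future coin flip $Z_i$. Concretely, I would first argue that the whole history relevant to $F$-relevance of $Z_i$ lives in the $\sigma$-algebra $\mathcal{F}_{i-1}$ generated by $Z_0,\dots,Z_{i-1}$: the sequence of adversarial deletions up to stage $t'$ is a (possibly randomized, but adversary-chosen) function of the query answers the algorithm has output so far, each of which is in turn determined by $Z_0,\dots,Z_{i-1}$; hence the schedules $\mathcal{T}^{t'}_{\schedule}(\cdot)$, the current graph $H^{t'}$, and therefore the indicator $\mathbf{1}[Z_i \text{ is } F\text{-relevant}]$ are all $\mathcal{F}_{i-1}$-measurable.

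Second, I would invoke the construction of the $Z_i$'s given at the start of the proof of \Cref{clm:simpleUpperBound}: each experiment $Z_i$ takes its designated edge $(u,v)$ and adds it to $S_u$ with an \emph{independent} fresh coin of bias $\rho$, and it is stated there that ``regardless of the values of $Z_0,\dots,Z_{i-1}$, the probability of $Z_i$ to evaluate to true is $\rho$.'' In other words, $\mathbb{P}[Z_i = 1 \mid \mathcal{F}_{i-1}] = \rho$ almost surely. Combining the two facts: since $\{Z_i \text{ is } F\text{-relevant}\} \in \mathcal{F}_{i-1}$, conditioning on it is conditioning on an event measurable with respect to $\mathcal{F}_{i-1}$, so
\[
\mathbb{P}[Z_i = 1 \mid Z_i \text{ is } F\text{-relevant}] = \mathbb{E}\big[\,\mathbb{P}[Z_i = 1 \mid \mathcal{F}_{i-1}] \,\big|\, Z_i \text{ is } F\text{-relevant}\,\big] = \mathbb{E}[\rho \mid Z_i \text{ is } F\text{-relevant}] = \rho,
\]
which is exactly the claim. (A small edge case: if the event ``$Z_i$ is $F$-relevant'' has probability zero, the conditional probability is vacuous/undefined and there is nothing to prove; otherwise the tower-property manipulation above is valid.)

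The main obstacle — really the only subtle point — is making precise that the adversary's choices up to stage $t'$ genuinely do not depend on $Z_i$. This is a modeling statement rather than a computation: one has to appeal to the definition of the (randomness-)adaptive adversary, namely that at each stage the adversary sees only the algorithm's past outputs (and, in the randomness-adaptive model, the past random bits) but not the coin flip that the algorithm has not yet made. Once this is pinned down, the argument is just the tower property / ``independence of a fresh coin from the past'' and is routine. I would therefore spend most of the write-up carefully stating the filtration $\mathcal{F}_{i-1}$ and citing \Cref{obs:dependentOnPast} for measurability, and then close with the two-line conditional-expectation computation above.
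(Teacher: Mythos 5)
Your proposal is correct and follows essentially the same route as the paper: the paper justifies \Cref{obs:independentOfPast} in one sentence by noting (via \Cref{obs:dependentOnPast}) that $F$-relevance is determined by information available before $Z_i$'s coin is flipped, while $Z_i$ itself uses a fresh coin of bias $\rho$, so conditioning does not change the probability. Your filtration/tower-property write-up (and the probability-zero caveat) is just a more formal rendering of that same argument.
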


Then for any fixed set $F \subseteq E'$, we define a new stochastic process $X_1, X_2, \dots$ where $X_i$ is the random indicator variable for the event that
\begin{itemize}
    \item the adversary performs at least $i$ $F$-relevant experiments, \emph{and}
    \item the $i$-th $F$-relevant experiment succeeds.
\end{itemize}

We now introduce another key observation, which follows from the two above observations. Here, we use the fact if we reach the $i$-th $F$-relevant experiment $Z_{j}$ for some $j$, then the variables $X_1, X_2, \dots, X_{i-1}$ have already been determined, and since each experiment uses fresh randomness, $Z_j$ will be true with probability $\rho$ regardless of the outcomes of $X_1, X_2, \dots, X_{i-1}$.

\begin{observation}
\label{clm:probForXi}
We have for any $i$ that $\mathbb{P}[X_i = 1 | X_1, X_2, \dots, X_{i-1}] \leq \rho$.
\end{observation}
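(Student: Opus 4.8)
The plan is to reveal the random bits experiment by experiment and to show that, conditioned on everything revealed before the $i$-th $F$-relevant experiment is performed, that experiment still succeeds with probability exactly $\rho$. Concretely, I would let $J_i$ denote the index of the $i$-th $F$-relevant experiment in the sequence $Z_0, Z_1, \dots$ (with the convention $J_i = \infty$ if fewer than $i$ experiments are $F$-relevant), so that by definition $X_i = 1$ exactly when $J_i < \infty$ and $Z_{J_i} = 1$. The first thing to pin down is that for each index $j$ the event $\{J_i = j\}$ depends only on $Z_0, \dots, Z_{j-1}$ and not on $Z_j$ itself: by \Cref{obs:dependentOnPast}, the $F$-relevance of an experiment $Z_\ell$ is determined by $Z_0, \dots, Z_{\ell-1}$ together with the adversary's updates issued before $Z_\ell$, and --- since the adversary is only randomness-adaptive up to the present --- those updates are themselves functions of the earlier outcomes $Z_0, \dots, Z_{\ell-1}$ and the schedules $\mathcal{T}_{\schedule}(\cdot)$ they induce. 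Hence the event ``exactly $i-1$ of $Z_0, \dots, Z_{j-1}$ are $F$-relevant and $Z_j$ is $F$-relevant'', which is precisely $\{J_i = j\}$, is $\sigma(Z_0, \dots, Z_{j-1})$-measurable.

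Next I would observe that on the event $\{J_i = j\}$ the bits $X_1, \dots, X_{i-1}$ are already fixed: each $X_\ell$ with $\ell \le i-1$ is a function of the experiments up to and including the $\ell$-th $F$-relevant one, and since $J_\ell \le J_{i-1} < J_i = j$, all of $X_1, \dots, X_{i-1}$ are functions of $Z_0, \dots, Z_{j-1}$. Thus for any fixed bit-string $(x_1, \dots, x_{i-1})$ the conditioning event $\{J_i = j\} \cap \{X_1 = x_1, \dots, X_{i-1} = x_{i-1}\}$ lies in $\sigma(Z_0, \dots, Z_{j-1})$, and since each $Z_j$ is independently $1$ with probability $\rho$ regardless of $Z_0, \dots, Z_{j-1}$ (this is \Cref{obs:independentOfPast} applied on $\{J_i = j\}$, where $Z_j$ is $F$-relevant by construction), we get $\mathbb{P}[Z_j = 1 \mid \{J_i = j\} \cap \{X_1 = x_1, \dots, X_{i-1} = x_{i-1}\}] = \rho$ whenever the conditioning event has positive probability. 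Summing the corresponding joint probabilities over $j$ then gives $\mathbb{P}[X_i = 1,\ X_1 = x_1, \dots, X_{i-1} = x_{i-1}] = \rho \cdot \mathbb{P}[J_i < \infty,\ X_1 = x_1, \dots, X_{i-1} = x_{i-1}] \le \rho \cdot \mathbb{P}[X_1 = x_1, \dots, X_{i-1} = x_{i-1}]$, and dividing through yields the claimed bound.

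The step I expect to be the main obstacle is the first one: making rigorous, under an \emph{adaptive} (indeed randomness-adaptive) adversary, that ``$Z_j$ is the $i$-th $F$-relevant experiment'' involves no information about the outcome of $Z_j$ or of any later experiment. This is exactly what \Cref{obs:dependentOnPast} is meant to encapsulate --- the adversary may inspect all random bits already consumed, and hence the schedules $\mathcal{T}_{\schedule}$, but not the fresh bit used for $Z_j$. Once this measurability statement is carefully set up, the conditioning and summation in the previous paragraph are routine and deliver $\mathbb{P}[X_i = 1 \mid X_1, \dots, X_{i-1}] \le \rho$ uniformly over the conditioning, which is what the observation asserts.
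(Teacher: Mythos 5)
Your proof is correct and follows essentially the same route as the paper: the paper argues informally that once the $i$-th $F$-relevant experiment is reached, $X_1,\dots,X_{i-1}$ are already determined and the fresh coin succeeds with probability $\rho$, with the inequality (rather than equality) accounting for the possibility that no $i$-th $F$-relevant experiment occurs. Your introduction of the index $J_i$ and the decomposition over $\{J_i=j\}$ is just a careful formalization of that same argument, using exactly \Cref{obs:dependentOnPast} and \Cref{obs:independentOfPast} as the paper intends.
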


We note a subtlety in the observation above. The outcomes $X_1, \ldots X_{i-1}$ can influence $X_i$, because it is possible that the adversary's strategy is such that if $X_1, \ldots X_{i-1}$ are all true, then there simply will not be an $i$-th $F$-relevant experiment, so $X_i$ will necessarily be $0$. That is why we could not have strict equality in the observation above. But what we do know is that, even if the $i$-th $F$-relevant experiment does occur, its probability of success is at most $\rho$.

Now we can prove the following claim. 

\begin{claim}
\label{clm:exactlyFinH}
For any set $F \subseteq E'$, the probability that every edge in $F$ is contained in $\tilde{H}^t$ is at most $(2e\lg t \cdot \rho)^{|F|}$. 
\end{claim}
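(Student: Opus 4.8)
\textbf{Proof proposal for Claim~\ref{clm:exactlyFinH}.}

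The plan is to reduce the event ``every edge of $F$ lies in $\tilde{H}^t$'' to a statement about the stochastic process $X_1, X_2, \dots$ defined just above, and then bound the probability that sufficiently many $X_i$ are true. First I would observe the key structural fact established through the notion of relevance: an edge $(u,v)$ can only be present in $\tilde{H}^t$ as the result of some $(u,v)$-relevant experiment that succeeded. Therefore, if \emph{every} edge of $F$ is in $\tilde{H}^t$, then for each $(u,v)\in F$ there is at least one successful $F$-relevant experiment ``charged'' to $(u,v)$. By \Cref{clm:fewEvents}, for each edge $(u,v)\in F$ there are at most $2\lg t$ experiments that are $(u,v)$-relevant, hence at most $2\lg t \cdot |F|$ experiments total that are $F$-relevant. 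So in order for all of $F$ to be covered, among these $\le 2\lg t\,|F|$ $F$-relevant experiments at least $|F|$ of them must succeed --- i.e.\ the event $\{\text{every edge of }F\subseteq \tilde{H}^t\}$ is contained in the event that at least $|F|$ of the variables $X_1,\dots,X_{2\lg t\,|F|}$ equal $1$ (recall $X_i$ indicates that an $i$-th $F$-relevant experiment occurs \emph{and} succeeds, so ``at least $|F|$ successes among $F$-relevant experiments'' is exactly $\sum_i X_i \ge |F|$, and there can be at most $2\lg t\,|F|$ of them).

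Next I would bound $\Pr[\sum_i X_i \ge |F|]$ using \Cref{clm:probForXi}, which says $\Pr[X_i = 1 \mid X_1,\dots,X_{i-1}] \le \rho$ for every $i$. This conditional bound is exactly what is needed to dominate $\sum_i X_i$ stochastically by a sum of independent Bernoulli$(\rho)$ variables; concretely, for any fixed index set $I$ with $|I| = |F|$, the chain rule gives $\Pr[\bigcap_{i\in I} \{X_i = 1\}] \le \rho^{|F|}$. Taking a union bound over all $\binom{2\lg t\,|F|}{|F|}$ choices of which $|F|$ of the (at most) $2\lg t\,|F|$ $F$-relevant experiments are the successful ones yields
\[
\Pr[\text{every edge of }F \subseteq \tilde{H}^t] \;\le\; \binom{2\lg t\,|F|}{|F|}\,\rho^{|F|}.
\]
Finally I would apply \Cref{thm:folklore}, $\binom{n}{k}\le (en/k)^k$, with $n = 2\lg t\,|F|$ and $k = |F|$, to get $\binom{2\lg t\,|F|}{|F|} \le (2e\lg t)^{|F|}$, and hence the bound $(2e\lg t\cdot\rho)^{|F|}$ claimed.

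The main obstacle --- and the point requiring the most care --- is making the reduction in the first paragraph airtight, because the set of $F$-relevant experiments is itself a random object determined by the adaptive adversary's choices. The clean way around this is precisely the ordering/chain-rule setup already prepared in the excerpt: by \Cref{obs:dependentOnPast}, whether $Z_i$ is $F$-relevant is measurable with respect to $Z_0,\dots,Z_{i-1}$ together with the adversary's updates so far, so the ``$i$-th $F$-relevant experiment'' is a well-defined stopping-time-like index, and \Cref{obs:independentOfPast} guarantees that conditioning on it being $F$-relevant does not inflate its success probability above $\rho$. I would also need to be slightly careful that the union bound ranges over subsets of a \emph{fixed} upper bound $2\lg t\,|F|$ on the number of $F$-relevant experiments (rather than the random actual number), which is legitimate since if there are fewer than $|F|$ $F$-relevant experiments then the event is impossible, and if there are more than $2\lg t\,|F|$ this contradicts \Cref{clm:fewEvents}. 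One subtlety worth a sentence in the writeup: I should use the switch from $\mathcal{T}_{\schedule}(z)$ to $\mathcal{T}^{t'}_{\schedule}(z)$ (via the containment \eqref{eq:subsetEventT}) so that ``relevance'' is decidable at the time the experiment is run; this is what licenses Observations~\ref{obs:dependentOnPast}--\ref{clm:probForXi}.
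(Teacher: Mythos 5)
Your proposal is correct and follows essentially the same route as the paper's proof: reduce $\{F\subseteq\tilde H^t\}$ to $\{\sum_i X_i\ge|F|\}$, cap the number of $F$-relevant experiments at $2|F|\lg t$ via \Cref{clm:fewEvents}, apply the chain rule with \Cref{clm:probForXi} and a union bound over index sets, and finish with \Cref{thm:folklore}. Your added remarks on the measurability of relevance and the stopping-time-like indexing match the paper's use of \Cref{obs:dependentOnPast} and \Cref{obs:independentOfPast}.
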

\begin{proof}
Letting $X_1, X_2, \dots$ refer to stochastic process as before, we observe that 
\[
\{ F \subseteq \tilde{H}^t\} \subseteq \left\{\sum_i X_i \geq |F|\right\}
\]
because in order to have $|F|$ edges sampled there have to be at least $|F|$ successful $F$-relevant experiments.

Now we observe that, since each experiment that is $F$-relevant has to be $(u,v)$-relevant for $(u,v) \in F$ and there are at most $2|F|\lg t$ such experiments by \Cref{clm:fewEvents}, it holds  that $X_{i} = 0$ for $i > 2|F| \lg t$. 

To calculate the probability of the event $\left\{\sum_i X_i \geq |F|\right\}$ which gives and upper bound on the desired probability, we can consider all configurations of indices, i.e. all sets of at least $|F|$ indices that are distinct and in the range $[1, 2|F| \lg t]$ and calculate the probability that \emph{all} variables $X_i$ with the chosen indices realize to $1$. Formally,
\[
\mathbb{P}\left[\sum_i X_i \geq |F|\right] = \mathbb{P}\left[ \bigcup_{\substack{k_1 < k_2 < \dots < k_{|F|} \\ \subseteq [1, 2|F|\lg t]}} \bigcap_{\ell = 1}^{|F|} X_{k_\ell} = 1\right].
\]
By a union bound, it is straight-forward to obtain that 
\[
\mathbb{P}\left[ \bigcup_{\substack{k_1 < k_2 < \dots < k_{|F|} \\ \subseteq [1, 2|F|\lg t]}} \bigcap_{\ell = 1}^{|F|} X_{k_\ell} = 1\right] \leq  \sum_{\substack{k_1 < k_2 < \dots < k_{|F|} \\ \subseteq [1, 2|F|\lg t]}} \mathbb{P}\left[\bigcap_{\ell = 1}^{|F|} X_{k_\ell} = 1\right].
\]
Finally,     using the chain rule to expand the right-hand terms and  using \Cref{clm:probForXi} to the product terms (combined with a straight-forward application of the law of total probability), we obtain the final bound
\begin{align*}
\sum_{\substack{ k_1 < k_2 < \dots < k_{|F|} \\ \subseteq [1, 2|F|\lg t]}} \mathbb{P}\left[\bigcap_{\ell = 1}^{|F|} X_{k_\ell} = 1\right] &= \sum_{\substack{ k_1 < k_2 < \dots < k_{|F|} \\ \subseteq [1, 2|F|\lg t]}} \prod_{\ell = 1}^{|F|} \mathbb{P}\left[ X_{k_\ell} = 1 \;|\; X_{k_1}, X_{k_2}, \dots, X_{k_{\ell -1}}\right]\\
&\leq \sum_{\substack{ k_1 < k_2 < \dots < k_{|F|} \\ \subseteq [1, 2|F|\lg t]}} \prod_{\ell = 1}^{|F|} \rho = {2|F| \lg t \choose |F|} \cdot \rho^{|F|}
\end{align*}
The bound in our lemma is then obtained by using the Folklore \Cref{thm:folklore}.
\end{proof}

We can thus finally bound the probability that we were originally interested in by taking a union bound over all sets $F \subseteq E'$ of size $\ell \rho$ (since each edge is assigned weight $1/\rho$ in $\tilde{H}$) and obtain 
\[
    \mathbb{P}[|\tilde{w}^t(E') \geq \ell] \leq \mathbb{P}\left[\bigcup_{F \subseteq E', |F| \geq \ell \rho} F \subseteq \tilde{H}^t\right] \leq \left(\frac{2e |E'|}{\ell \rho}\right)^{\ell \rho} (e \lg t \cdot \rho )^{\ell \rho} = \left(\frac{2e^2 \lg t \cdot |E'|}{\ell}\right)^{\ell \rho}
\]
using a union bound, \Cref{clm:exactlyFinH} and again the folklore result ${n \choose k} \leq \left(\frac{e \cdot n}{k}\right)^k$.
\end{proof}

\subsubsection{An Improved Upper Bound on the Approximation}
\label{subsec:improvingTheUpperBoundForCutSparsifier}

Finally, we discuss how to refine our analysis to achieve an $O(\log n)$-approximation. We note that in the previous upper bound proof, we have at no point exploited that we resample neighbors once a vertex has lost $\zeta$ of its incident edges, as described in \Cref{lne:ifDegreeHigh} in \Cref{alg:update}. This is our main tool to improve the approximation ratio, since it gives us a way to ``restart'' a part of the cut sparsifier where the initial sample is no longer representative of the underlying edge set. The main result of this section is summarised as follows:

\begin{restatable}{theorem}{mainApproxUpperBound}
\label{thm:upperBoundCutSparsifier}
At all stages $t$, we have that $\tilde{H}$ is an $O(\log n)$-approximate cut sparsifier of $H$ with high probability, i.e. with probability $1 - n^{-\alpha}$.
\end{restatable}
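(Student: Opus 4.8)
The plan is to combine the lower bound already proved in \Cref{lma:lowerBoundAdaptiveCutSparsifier} --- which guarantees $|E_{H^t}(X,\overline X)|/2 \le \tilde w^t(X,\overline X)$ for every cut and every stage $t$ with probability $1-n^{-\alpha}$ --- with a matching \emph{upper} bound of the shape $\tilde w^t(E_{H^t}(X,\overline X)) = O(\log n)\,|E_{H^t}(X,\overline X)|$; intersecting the two events then says that $\tilde H$ simultaneously preserves every cut to within an $O(\log n)$ factor, which is the assertion of the theorem. Note that this upper bound is exactly what the argument of \Cref{clm:simpleApprox} delivers, \emph{except} that there we are forced to feed the fixed set $E' = E_{H^0}(X,\overline X)$ into \Cref{clm:simpleUpperBound} (which only applies to sets $E'\subseteq E(H^0)$ chosen before the update sequence), and $E_{H^0}(X,\overline X)$ can be a $\Theta(\Delta_{\max}/(\phi\Delta_{\min}))$ factor larger than the current cut $E_{H^t}(X,\overline X)$. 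So the whole game is to rerun that argument with the \emph{current} cut in place of the initial one.

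First I would confront the reason this is not immediate: $E_{H^t}(X,\overline X)$ is adversary-controlled, so which cut edges survive to stage $t$ may be positively correlated with the sampling outcomes (the construction sketched in \Cref{sec:adaptive_overview} is precisely an adversary that keeps alive exactly those cut edges that got sampled), and hence one cannot simply union bound over subsets $F$ of the surviving cut as in \Cref{clm:exactlyFinH}. The fix is to finally use the neighbour-resampling step of \Cref{lne:ifDegreeHigh}--\Cref{lne:neighborInducedStages} and its bookkeeping set $T_{DegreeUpdate}$, which the analysis so far has not touched. For each vertex $v$ I would partition the updates into \emph{epochs}, where epoch $j$ of $v$ is the block of stages with $\deg_H(v)\in(\deg_{H^0}(v)-(j+1)\zeta,\ \deg_{H^0}(v)-j\zeta]$: there are at most $\Delta_{\max}/\zeta = O(\Delta_{\max}/(\phi\Delta_{\min}))$ of them, within one epoch $v$ loses at most $\zeta=\phi\Delta_{\min}$ incident edges, and at the first stage $\sigma$ of each epoch \Cref{lne:neighborInducedStages} inserts $\sigma,\sigma+2^0,\sigma+2^1,\dots$ into $T_{\schedule}(y)$ for every current neighbour $y$ of $v$. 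Combining this with \Cref{clm:fewEvents}: if an edge $(u,v)$ is still present at stage $t$ it was present at the start $\sigma$ of $v$'s current epoch, so $S_v$ is repeatedly (proactively) rescheduled from $\sigma$ onwards, which forces the decisive (i.e.\ last $v$-relevant) resampling of $S_v$ before $t$ to lie in the second half of $[\sigma,t]$ --- after $v$ has already dropped the $\zeta$ edges that ended the previous epoch --- and to draw a fresh $\rho$-coin for $(u,v)$ that the adversary could not have observed when it chose to keep $(u,v)$ alive. This decouples the event ``$(u,v)$ survives to stage $t$'' from ``$(u,v)\in S^t_v$'' (and symmetrically for $S^t_u$), so the $F\subseteq\tilde H^t$ counting of \Cref{clm:exactlyFinH} can be carried out relative to the \emph{current} cut; one may then re-run the computation of \Cref{clm:simpleApprox} with $|E_{H^t}(X,\overline X)|\ge \phi k\Delta_{\min}$ ($k=|X|$) replacing $|E_{H^0}(X,\overline X)|$, followed by a union bound over all cuts and all stages. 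The edges deleted within an epoch contribute only an additive $\zeta$-sized slack per vertex, which is harmless because such edges lie in neither $H^t$ nor $\tilde H^t$.

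The main obstacle will be making this ``decoupling'' rigorous against a fully adaptive adversary. Concretely I expect to have to exhibit, for each bad event, a bounded family of \emph{deterministic} sub-events indexed only by coarse epoch data (the current epoch index of each vertex carrying a cut edge, together with a dyadic bucket for the value of $|E_{H^t}(X,\overline X)|$), show that their probabilities sum the way the bound in \Cref{clm:simpleUpperBound} requires, and then re-prove the analogues of \Cref{obs:dependentOnPast,obs:independentOfPast,clm:probForXi} for the process restricted to a fixed epoch --- i.e.\ that conditioning on everything the adversary could base its epoch-end retention decisions on still leaves every decisive resampling coin fresh and equal to $\rho$. Once the upper bound $\tilde w^t(E_{H^t}(X,\overline X)) = O(\log n)\,|E_{H^t}(X,\overline X)|$ is in hand, intersecting it with the event of \Cref{lma:lowerBoundAdaptiveCutSparsifier} shows that $\tilde H$ is an $O(\log n)$-approximate cut sparsifier of $H$ at every stage with probability $1-n^{-\alpha}$ (after adjusting $\alpha$ by a constant), which completes the proof.
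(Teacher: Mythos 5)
Your proposal follows essentially the same route as the paper's proof: your epochs are exactly the phases recorded by $T_{DegreeUpdate}$, and your final plan --- union-bounding over per-vertex epoch indices and the current cut size, counting the candidate experiments as the current cut plus a $k\zeta$ slack that is absorbed via $|E_{H^t}(X,\overline X)|\ge \phi k\Delta_{\min}=k\zeta$, re-proving the conditional-probability observations so that each relevant experiment succeeds with probability $\rho$ given the past, and then union-bounding over cuts and stages and intersecting with \Cref{lma:lowerBoundAdaptiveCutSparsifier} --- is precisely the argument of \Cref{lma:keyLemmaUpperBoundCutSparsifier} and the derivation of \Cref{thm:upperBoundCutSparsifier} from it. One caveat on framing only: the intermediate claim that the decisive resampling coin is one ``the adversary could not have observed when it chose to keep $(u,v)$ alive'' is not literally true (the adversary sees the sampled set afterwards and can condition subsequent deletions on it), but neither your final argument nor the paper's needs such independence --- the proof rests on the experiment count being tied to the current cut and on the chain-rule/martingale-style success bound, so this is an imprecision of intuition rather than a gap.
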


We can derive the above theorem by refining the proof of \Cref{clm:simpleUpperBound} to obtain the following result and then by taking a union bound over all stages and cuts, similarly to the way we derived \Cref{clm:simpleApprox}.

\begin{lemma}\label{lma:keyLemmaUpperBoundCutSparsifier}
For any cut $(X,\overline{X})$, $k = |X| \leq |\overline{X}|$, at any stage $t$, we have that
\begin{equation}
\label{eq:OversampleRefinedWithLowProb}
    \mathbb{P}\left[\tilde{w}^t(E_{H^t}(X, \overline{X})) \geq 8e \lg n |E_{H^t}(X, \overline{X})|\right] \leq n^{-8k(\alpha+1)}.
\end{equation}
\end{lemma}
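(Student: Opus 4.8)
The strategy is to mimic the proof of \Cref{clm:simpleUpperBound} but, instead of comparing $\tilde w^t(E_{H^t}(X,\overline X))$ against the \emph{initial} cut size $|E_{H^0}(X,\overline X)|$ (which loses a $\Delta_{\max}/(\phi\Delta_{\min})$ factor), I would argue directly about the edges that are present in $E_{H^t}(X,\overline X)$ \emph{at stage $t$}. The key new ingredient is the degree-drop resampling from \Cref{lne:ifDegreeHigh}: once a vertex $z$ loses $\zeta=\phi\Delta_{\min}$ incident edges, all of its neighbours are resampled, so any old sampling experiment on an edge crossing the cut through a vertex that has since shed $\zeta$ edges is effectively wiped out. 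First I would partition, for each vertex $z\in X$, the sequence of stages into \emph{epochs} delimited by the times in $T_{DegreeUpdate}(z)$: within one epoch $z$'s degree changes by less than $\zeta$, and every edge crossing the cut and still incident to $z$ at the end of the epoch was already incident to $z$ at the start of the epoch. The point is that between two consecutive degree-updates at $z$, at most $\zeta$ crossing edges can have been ``lost'', so the number of edges that $\tilde w^t$ could be overcounting via stale samples at $z$ is controlled by $\zeta$ times the number of relevant epochs, which I will bound.

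Concretely, for a fixed cut $(X,\overline X)$ with $k=|X|\le|\overline X|$, set $E^t:=E_{H^t}(X,\overline X)$ and recall $|E^t|\ge \phi k\Delta_{\min}$ by expansion. I would show the analogue of \Cref{clm:exactlyFinH} but counted against $E^t$: for any $F\subseteq E^t$, the probability that every edge of $F$ lies in $\tilde H^t$ is at most $(2e\lg t\cdot\rho)^{|F|}$. The proof is the same relevance/stochastic-process argument ($X_1,X_2,\dots$ as in the proof of \Cref{clm:simpleUpperBound}, using Observations~\ref{obs:dependentOnPast}--\ref{clm:probForXi} verbatim), because those arguments never used that $F$ consisted of \emph{initial} edges — only that each edge has at most $2\lg t$ relevant experiments and each succeeds with probability $\rho$ independent of the past. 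The only thing to double-check is that an edge $(u,v)\in E^t$ that is currently in $\tilde H^t$ got there via a relevant experiment, which holds because relevance is defined purely in terms of the schedule and is unaffected by whether we measure against $E^0$ or $E^t$. Then a union bound over all $F\subseteq E^t$ of size $\ell\rho$ with $\ell:=8e\lg n\,|E^t|$ gives, via \Cref{thm:folklore},
\[
\mathbb{P}\!\left[\tilde w^t(E^t)\ge 8e\lg n\,|E^t|\right]
\le \binom{|E^t|}{\ell\rho}(2e\lg t\,\rho)^{\ell\rho}
\le\left(\frac{2e^2\lg t\cdot|E^t|}{\ell}\right)^{\ell\rho}
=\left(\frac{\lg t}{4\lg n}\right)^{\ell\rho}\le 2^{-\ell\rho},
\]
and since $\ell\rho\ge 8e\lg n\cdot\phi k\Delta_{\min}\cdot\rho\ge 8k(\alpha+1)\log n\cdot(\text{large const})$ by the definition of $\rho=\rhoValue$, the right-hand side is at most $n^{-8k(\alpha+1)}$, which is exactly \eqref{eq:OversampleRefinedWithLowProb}.

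Wait — but the set $E^t$ is itself a \emph{random} object (it depends on which edges the adaptive adversary deleted, and the adversary is adaptive), so I cannot simply fix it and union-bound over its subsets. The fix, and the step I expect to be the main obstacle, is to derandomize the choice of $E^t$: I would union-bound not over subsets of $E^t$ but over subsets $F$ of the \emph{initial} edge set $E_{H^0}(X,\overline X)$, and then observe that the event ``$\tilde w^t(E^t)\ge 8e\lg n\,|E^t|$'' implies the existence of some $F\subseteq E_{H^0}(X,\overline X)$ with $|F|\ge 8e\lg n\,|E^t|\cdot\rho\ge 8e\lg n\cdot\phi k\Delta_{\min}\cdot\rho$ (a deterministic lower bound independent of the adversary!) such that $F\subseteq\tilde H^t$ and $F\subseteq E^t$. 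Dropping the last conjunct only increases the probability, so it suffices to bound $\mathbb{P}[\exists F\subseteq E_{H^0}(X,\overline X),\ |F|\ge 8e\lg n\cdot\phi k\Delta_{\min}\cdot\rho,\ F\subseteq\tilde H^t]$, and now $E_{H^0}(X,\overline X)$ is a fixed (non-random) set of at most $\Delta_{\max}k$ edges, so the union bound goes through with $\binom{\Delta_{\max}k}{8e\lg n\,\phi k\Delta_{\min}\rho}$ terms times $(2e\lg t\,\rho)^{\cdots}$; the extra $\Delta_{\max}/(\phi\Delta_{\min})$ inside the binomial is absorbed because the exponent $\ell\rho$ still carries a spare factor of $\log n$ and $\Delta_{\max}$ from $\rho$. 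Finally, take a union bound over all $O(m)$ stages $t$ and all $\binom{n}{k}$ cuts of size $k$ and all $k\in[1,n/2]$; this is what upgrades \Cref{lma:keyLemmaUpperBoundCutSparsifier} to \Cref{thm:upperBoundCutSparsifier}, exactly as \Cref{clm:simpleApprox} was derived from \Cref{clm:simpleUpperBound}. The degree-drop resampling is what made the deterministic lower bound $|E^t|\ge\phi k\Delta_{\min}$ usable \emph{at the scale of $\log n$} rather than $\Delta_{\max}/(\phi\Delta_{\min})$: without it one would be forced to compare against $E^0$ and incur the larger factor.
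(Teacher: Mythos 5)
There is a genuine gap in your derandomization step, and it is fatal to the stated bound. After correctly observing that you cannot union-bound over subsets of the random set $E^t:=E_{H^t}(X,\overline X)$, you propose to drop the conjunct $F\subseteq E^t$ and union-bound over subsets $F$ of the fixed set $E_{H^0}(X,\overline X)$ of size $L:=8e\lg n\cdot\phi k\Delta_{\min}\cdot\rho$. But the resulting bound is
$\binom{|E_{H^0}(X,\overline X)|}{L}(2e\lg t\,\rho)^{L}\le\left(\frac{2e^{2}\lg t\,\rho\,\Delta_{\max}k}{L}\right)^{L}=\left(\frac{e\lg t}{4\lg n}\cdot\frac{\Delta_{\max}}{\phi\Delta_{\min}}\right)^{L}$,
whose base is $\Theta\!\left(\Delta_{\max}/(\phi\Delta_{\min})\right)>1$ in the regime of interest, so the bound is vacuous; your remark that the extra $\Delta_{\max}/(\phi\Delta_{\min})$ is ``absorbed because the exponent still carries a spare factor'' is backwards, since a large exponent on a base exceeding $1$ only makes things worse. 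Worse, the weakened event is simply not rare: each initial cut edge lies in $\tilde H^t$ with probability up to $2\rho\lg t$, so the expected number of sampled edges of $E_{H^0}(X,\overline X)$ is about $2\rho\lg t\,\Delta_{\max}k$, which already exceeds your threshold $8e\lg n\,\phi\Delta_{\min}k\,\rho$ whenever $\Delta_{\max}\gg\phi\Delta_{\min}$. So no bound of the form you need can hold once the condition $F\subseteq E^t$ is discarded — that conjunct is exactly what separates the $O(\log n)$ bound from the simple $\tilde O(\Delta_{\max}/(\phi\Delta_{\min}))$ bound of \Cref{clm:simpleApprox}, and the adversary strategy sketched in \Cref{sec:adaptive_overview} (deleting cut edges of $G\setminus\tilde H$) shows the gap is real, not an artifact of the analysis.

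Relatedly, although you correctly identify the degree-drop resampling of \Cref{lne:ifDegreeHigh} as the crucial mechanism and sketch an ``epoch'' intuition, it never enters your formal argument: your count of candidate edges is $|E_{H^0}(X,\overline X)|\le\Delta_{\max}k$ rather than anything tied to $|E^t|$. The paper resolves the randomness of $E^t$ differently: it conditions on the phase profile $\iota$ (which phase each vertex of $X$ occupies at time $t$, as recorded by $T_{DegreeUpdate}$) and on $|E^t|=\eta$, union-bounding over the at most $m^{k+1}$ such choices, and then uses the degree-drop rule quantitatively via \Cref{clm:upperBoundForTauEvent}: only edges incident to $c$ and present when $c$ entered its current phase can be hit by a $(c,\iota(c))$-relevant experiment, and since each vertex loses fewer than $\zeta=\phi\Delta_{\min}$ incident edges within a phase, there are at most $|E^t|+k\zeta\le2|E^t|$ such edges, hence at most $4|E^t|\lg t$ relevant experiments. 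That is the step your proposal is missing; the $m^{k+1}$ price of the union bound over $(\iota,\eta)$ is then absorbed because the exponent $\ell'\rho\ge 8ek\lg n$ carries a factor of $k\log n$.
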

\begin{proof}
Now, in the analysis, we shift our perspective from analyzing events on edges toward analyzing events related to vertices in $X$. To formalize these events, for any vertex $c \in X$, we use the set $T_{DegreeUpdate}(c)$ that accounts for how often the degree of $c$ is decreased by a significant fraction, throughout the course of the algorithm.

More precisely, whenever the degree of $c$ has its degree drop by $\zeta$ edges, we enter the if-statement in \Cref{lne:ifDegreeHigh} of \Cref{alg:update}, add the current stage $t'$ to $T_{DegreeUpdate}(c)$ in  \Cref{lne:degreeUpdateT} and schedule a resampling event for every vertex $y$ in $\mathcal{N}_H(c)$ for $t', t'+2^0, t'+2^1, \dots$. Let us denote by $T_{DegreeUpdate}^{t'}(c)$ the set $T_{DegreeUpdate}(c)$ at the end of stage $t'$. We define the notion of phases.

\begin{definition}
We say that a vertex $c \in V$ is in \emph{phase} $j$ at stage $t''$, if $|T_{DegreeUpdate}^{t''}(c)| = j$. Informally, $c$ is in phase $j$ at stage $t''$ if up to the end of stage $t''$, there were exactly $j$ stages where a distinct index was added to $T_{DegreeUpdate}(c)$.
\end{definition}

Using this new concept, let us introduce new notions of relevance for an experiment that incorporates the perspective on vertices and the observation that the edge was sampled within the last phase of the vertex. We therefore introduce a new notion on relevance that focuses on a vertex $c$ and a phase $j$ that the vertex is in. We say an experiment is $(c,j)$-relevant, if it is concerned with an edge that is incident to $c$ {\bf and} in the cut $(X, \overline{X})$ {\bf while} the vertex $c$ is in the $j^{th}$ phase.

\begin{definition}[$(c, j)$-relevant] For any vertex $c \in X$, and integer $j$, we say that an experiment $Z_i$ conducted at stage $t'$ concerned with edge $(u,v)$ is $(c, j)$-relevant if
\begin{itemize}
    \item $(u,v) \in E_{H^{t'}}(\{c\}, \overline{X})$, \emph{and}
    \item $Z_{(u,v)}^{t'}$ is relevant for $(u,v)$, \emph{and}
    \item $|T_{DegreeUpdate}^{t'}(c)| = j$.
\end{itemize}
\end{definition}

We also generalize this notion to any subset $C$ of $X$ where we use function $\iota$ to encode the phase of concern for each vertex $c \in C$.

\begin{definition}[$(C, \iota)$-relevant] For any set $C \subseteq X$, and function $\iota : C \mapsto \mathbb{N}$ that maps any vertex $c \in C$ to an integer $\iota(c)$, we say that an experiment $Z_i$ is $(C, \iota)$-relevant if there is a vertex $c \in C$ such that $Z_i$ is $(c,\iota(c))$-relevant.
\end{definition}

As before in \Cref{subsec:simpleUpperBoundForCutSparsifier}, we also make some observations regarding each variable $Z_i$ in regard to the new notions of relevance. 
We first observe that for an experiment $Z_i$, we can test $(c,j)$-relevance using the information from experiments $Z_0, Z_1, \dots, Z_{i-1}$ and the adversarial updates up to the stage during which $Z_i$ is executed. To see this observe that from the adversarial updates up to the current stage, we can construct all schedules $\mathcal{T}_{Schedule}(u)$ for any $u \in V$ and therefore check which edge $(x,y)$ the experiment $Z_i$ is concerned with before the experiment is conducted. Thus, we can also check whether the edge is in $E_{H^{t'}}(\{c\}, \overline{X})$ straight-forwardly. By \Cref{obs:dependentOnPast}, we also have that $(u,v)$-relevance can be determined from this information. Finally, we point out that, since the experiment is conducted at the stage $t'$ after the schedule for stage $t'$ was constructed in \Cref{alg:update}, we can certainly determine whether $|T_{DegreeUpdate}^{t'}(c)| = j$. We summarize our result as follows.

\begin{observation}
\label{obs:dependentOnPastCJRelevant}
Whether $Z_i$ conducted at stage $t'$ is $(c,j)$-relevant depends only on the variables $Z_0, Z_1, \dots, Z_{i-1}$ and the updates issued by the adversary up to stage $t'$.
\end{observation}

Further, similarly to \Cref{obs:independentOfPast}, we can establish that an experiment conditioned on the fact that it is $(c,j)$-relevant has success probability $\rho$.  
This follows since the outcome, which is determined by  fresh random bits,  evaluates to be true with probability exactly $\rho$, and the information encoded by the conditioning statement, which concerns past events, cannot encode the outcome of the sampling step. 

\begin{observation}
\label{obs:independentOfPastCJRelevant}
We have for every experiment $Z_i$ that
\[
    \mathbb{P}[Z_i \;|\; Z_i \mbox{ is } (c,j) \mbox{-relevant}] = \rho
\]
\end{observation}

Next, let us consider any fixed function $\iota$ that maps every vertex in $X$ to some arbitrary integer in $[0, m]$. We then define a new stochastic process $Y_1^{\iota}, Y_2^{\iota}, \dots$ where $Y_i^{\iota}$ is the random indicator variable for the event that
\begin{itemize}
    \item the adversary performs at least $i$ $(X, \iota)$-relevant experiments, \emph{and}
    \item the $i$-th $(X, \iota)$-relevant experiment succeeds.
\end{itemize}

By the previous two observations, we immediately derive the following insight. 

\begin{observation}
\label{obs:probForXiCJRelevant}
We have for any $i$ that $\mathbb{P}[Y_i = 1 | Y_1, Y_2, \dots, Y_{i-1}] \leq \rho$.
\end{observation}

Using this set-up, let us define some events of particular interest. Let $\iota$ be an arbitrary function as defined above and $\ell$ an integer, then we denote by $\mathcal{E}^{\iota}_\ell$ the event that
\begin{itemize}
    \item $\sum_i |Y_i^{\iota}| \geq \ell$, \emph{and}
    \item for every vertex $c \in X$, we have that $c$ is in phase $\iota(c)$ at stage $t$.
\end{itemize}
Further, define $\mathcal{E}_\ell = \bigcup_{\iota} \mathcal{E}^{\iota}_\ell$, i.e. the event that there exists a function $\iota$ such that $\mathcal{E}^{\iota}_\ell$ occurs.

\begin{claim}\label{clm:numEdgesInSparsifierReducedToIndVar}
For any $\ell \geq 0$, the event $\{|E_{\tilde{H}^t}(X, \overline{X})| \geq \ell\}$ is a subset of the event $\mathcal{E}_\ell$.
\end{claim}
\begin{proof}
Observe first that at time $t$ there exists a unique function $\iota$ that satisfies the 
second requirement of event $\mathcal{E}^{\iota}_\ell$, namely, that for every vertex $c \in X$, we have that $c$ is in phase $\iota(c)$ at stage $t$. Further, we observe that when a vertex $c$ enters the last phase $\iota(c)$, then every edge in $E_{\tilde{H}^t}(\{c\}, \overline{X})$ is resampled as can be verified from inspecting \Cref{alg:update}. Thus, the experiment that sampled an edge $(u,v) \in E_{\tilde{H}^t}(\{c\}, \overline{X})$ into the cut-sparsifier is $(c, \iota(c))$-relevant. Thus, every edge sampled into the sparsifier in the cut $(X, \overline{X})$ is $(X, \iota)$-relevant. It is clear that in order to have at least $\ell$ edges in the cut $(X, \overline{X})$ sampled into the sparsifier $\tilde{H}^t$, we must have at least the same number of successful $(X, \iota)$-relevant experiments, and therefore at least that many indicator variables $Y_i^{\iota}$ that evaluate to $1$. 
\end{proof}

Next let us prove the following claim.

\begin{claim}\label{clm:upperBoundForTauEvent}
It holds for any function $\iota$ and integer $\eta$  that 
\[
\mathbb{P}[\mathcal{E}^{\iota}_\ell \cap \{ |E_{H^t}(X, \overline{X})| = \eta\}] \leq \left(\frac{4e |E_{H^t}(X, \overline{X})| \lg t}{\ell}\right)^\ell \rho^\ell.
\]
\end{claim}
\begin{proof}
We claim that if there are more than $\eta' = 4\eta \cdot \lg t$ experiments that are $(X, \iota)$-relevant, then we cannot have that every vertex $c$ is in phase $\iota(c)$ at stage $t$ and that $|E_{H^t}(X, \overline{X})| = \eta$. Thus, in this case, the probability of the event is $0$. 

To see this, we first observe that at any stage, the number of edges in the cut $(X, \overline{X})$ is at least $\phi \Delta_{\min}k$ since the volume of $X$ is at least $\Delta_{\min}k$. Now, let us consider for every vertex $c \in X$, the edges in $E_{H}(\{c\}, \overline{X})$ at the stage where the vertex enters phase $\iota(c)$. We observe that only edges that are present at that stage  can be sampled in an experiment that is $(c, \iota(c))$-relevant, by definition. Since we are still in phase $\iota(c)$ at stage $t$ and  the number of vertices in $X$ is at most $k$, we conclude that the number of such edges is at most $|E_{H^t}(X, \overline{X})| + k \zeta$.

By our previous observation and the fact that $\zeta = \phi \Delta_{\min}$, we thus have that at most $2|E_{H^t}(C_{\geq \epsilon}, \overline{X})|$ such edges exists. By \Cref{clm:fewEvents}, we further have that there can be at most $2\lg t$ experiments for every such an edge $(u,v)$ that are $(X, \iota)$-relevant, since every experiment that is $(X, \iota)$ relevant is $(u,v)$-relevant. 
Thus, the total number of $(X, \iota)$-relevant experiments can be at most $\eta' = 4\eta \cdot \lg t$,
since   otherwise the event   certainly does not occur in which case the upper bound is trivially given. Otherwise, we have that
\[
\mathbb{P}[\mathcal{E}^{\iota}_\ell \cap \{ |E_{H^t}(X, \overline{X})| = \eta\}] \leq \mathbb{P}[ \sum_{i=1}^{\eta'} |Y_i^{\iota}| \geq \ell].
\]
Following the analysis in \Cref{clm:exactlyFinH}, we obtain
\begin{align*}
    \mathbb{P}[ \sum_{i=1}^{\eta'} |Y_i^{\iota}| \geq \ell] &= \mathbb{P}\left[ \bigcup_{ \substack{k_1 < k_2 < \dots < k_{\ell} \\ \subseteq [1, \eta']}} \bigcap_{\ell = 1}^{\ell} Y_{k_\ell}^{\iota} = 1\right] \\
    &\leq \sum_{ \substack{k_1 < k_2 < \dots < k_{\ell} \\ \subseteq [1, \eta']}} \prod_{\ell = 1}^{\ell} \mathbb{P}[Y_{k_\ell}^{\iota} = 1 \; | Y^{\iota}_{k_1}, Y^{\iota}_{k_2}, \dots, Y^{\iota}_{k_{\ell-1}} \;]\\
    & \leq \left(\frac{e \eta'}{\ell}\right)^\ell \rho^\ell
\end{align*}
where we use the chain rule, a union bound and \Cref{obs:probForXiCJRelevant} in the first inequality and \Cref{thm:folklore} in the second. We derive the final bound by plugging in the value of $\eta'$.
\end{proof}

This claim enables us to finish the proof of the lemma. We set $\ell' = 16e \lg n |E_{H^t}(X, \overline{X})|$ and observe that 
\begin{align*}
    \mathbb{P}\left[\tilde{w}^t(E_{H^t}(X, \overline{X})) \geq \ell' \right] &= \mathbb{P}\left[|E_{\tilde{H}^t}(X, \overline{X})| \geq \ell' \rho \right]\\
    &\leq \mathbb{P}\left[\bigcup_{\eta \in [\phi k\Delta_{\min}, k\Delta_{\max}]} \mathcal{E}_{\ell' \rho} \; \cap \; \{|E_{H^t}(X, \overline{X})| = \eta\}\right]\\
    &= \mathbb{P}\left[\bigcup_{\eta \in [\phi k\Delta_{\min}, k\Delta_{\max}]} (\cup_{\iota} \mathcal{E}_{\ell' \rho}^{\iota}) \; \cap \; \{|E_{H^t}(X, \overline{X})| = \eta\}\right]\\
    &= \mathbb{P}\left[\bigcup_{\substack{\eta \in [\phi k\Delta_{\min}, k\Delta_{\max}] \\ \text{ any choice of } \iota}} \mathcal{E}_{\ell' \rho}^{\iota} \; \cap \; \{|E_{H^t}(X, \overline{X})| = \eta\}\right]
\end{align*}
where we first use that the edge weight in the cut sparsifier is $1/\rho$ and derive the inequality by \Cref{clm:numEdgesInSparsifierReducedToIndVar} and the insight that $|E_{H^t}(X, \overline{X})|$ has to be of size $\eta \in [\phi k\Delta_{\min}, k\Delta_{\max}]$ for some $\eta$ since each of the $k$ vertices in $X$ has maximum degree $\Delta_{\max}$ and minimum degree $\Delta_{\min}$ where a $\phi$-fraction of the total edges is in the cut $(X, \overline{X})$ since we have conductance at least $\phi$ and by \cref{clm:concentrationOfEdgesAtCertainVertices}. This last insight allows us to invoke the law of total probability to derive that summing over the values of $\eta$ we include every possible event. Finally, we expand along the definition of $\mathcal{E}_\ell$.

We then apply a union bound and use \Cref{clm:upperBoundForTauEvent} to derive
\begin{align*}
\mathbb{P}&\left[\bigcup_{\substack{\eta \in [\phi k\Delta_{\min}, k\Delta_{\max}] \\ \text{ any choice of } \iota}} \mathcal{E}_{\ell' \rho}^{\iota} \; \cap \; \{|E_{H^t}(X, \overline{X})| = \eta\}\right] 
\\
&\leq \sum_{\substack{\eta \in [\phi k\Delta_{\min}, k\Delta_{\max}] \\ \text{ any choice of } \iota}}  \mathbb{P}\left[\mathcal{E}_{\ell' \rho}^{\iota} \; \cap \; \{|E_{H^t}(X, \overline{X})| = \eta\}\right]\\
&\leq  \sum_{\substack{\eta \in [\phi k\Delta_{\min}, k\Delta_{\max}] \\ \text{ any choice of } \iota}} \left(\frac{4e |E_{H^t}(X, \overline{X})| \lg t}{\ell' \rho}\right)^{\ell' \rho} \rho^{\ell' \rho} \\
&\leq m^{k+1}  \left(\frac{4e  |E_{H^t}(X, \overline{X})| \lg t}{\ell' \rho}\right)^{\ell' \rho} \rho^{\ell' \rho} \leq m^{k+1}  \left(\frac{4e  |E_{H^t}(X, \overline{X})| \lg t}{\ell' }\right)^{\ell' \rho} 
\end{align*}
where we use in the second last step that there are at most $m$ stages, and therefore every vertex in $X$ can be in at most $m$ phases. There are at most $k$ vertices in $X$, so there are at most $m^k$ maps $\iota$. Further there are at most $m$ choices of $\eta$. 

We remind the reader that $\ell' = 16e \lg n |E_{H^t}(X, \overline{X})|$ and $\rho = \rhoValue$. Thus, we finally obtain
\begin{align*}
m^{k+1}  \left(\frac{4e  |E_{H^t}(X, \overline{X})| \lg t}{\ell' }\right)^{\ell' \rho} &\leq m^{k+1}  \left(\frac{8e  |E_{H^t}(X, \overline{X})| \lg n }{16e \lg n |E_{H^t}(X, \overline{X})|}\right)^{\ell' \rho}\\
&\leq m^{k+1} 2^{-\ell' \rho} \leq m^{k+1} 2^{-16e k \lg n (\alpha+2)} \\
&\leq m^{k+1} n^{-8e k(\alpha+2)} \leq n^{-16k(\alpha+2)}.
\end{align*}
and we finally use that $|E_{H^t}(X, \overline{X})| \geq k\Delta_{\min}\phi$, and since $\rho \geq 1/\Delta_{\min}$ we have $\ell' \rho \geq k \cdot 8e \lg t $. This implies the desired result.
\end{proof}

\subsubsection{Putting it all together}
\label{subsec:worstCaseForDecrCutSparsifier}

\paragraph{An Algorithm for Worst-Case Update Time.} Let us first consider to use \Cref{alg:update2} in place of \Cref{alg:update}. The difference between the two algorithms is constituted in the change that before, a vertex $z \in \{u,v\}$ would enter the for-each loop every $\zeta$ times an incident edge was deleted and schedule a vertex update for each of its neighbors (for the stages $t, t + 2^0, t + 2^1, t + 2^2, t + 2^3, \dots$). 

In the new algorithm, after every edge deletion we reschedule a few neighbors of $z$ only (again for the stages $t, t + 2^0, t + 2^1, t + 2^2, t + 2^3, \dots$). Thus, instead of spending all time for resampling at once, we spread updates in round-robin scheduling fashion. After $\zeta$ edges incident to a vertex are removed, it is clear that all its neighbors have  been rescheduled. Thus, since the last point in which all neighbors were resampled, at most $\zeta$ edges have been deleted incident to $z$. Carefully studying the proof of \Cref{clm:upperBoundForTauEvent}, the upper bound still applies to this algorithm and the same is true for the lower bound.

\begin{algorithm2e}
\caption{Handling an Edge Deletion.}
\label{alg:update2}
\SetKwProg{procedure}{Procedure}{}{}
\SetKwFunction{sample}{SampleVertex}
\SetKwFunction{update}{EdgeDeletion}
\procedure{\update{$(u,v), t$}}{
    \ForEach{$z \in \{u,v\}$}{
        Add $t, t + 2^0, t + 2^1, t + 2^2, t + 2^3, \dots$ to $T_{\schedule}(z)$\;
        $i \gets \deg_H(z) \mod \zeta$.\;
        \ForEach(\label{lne:scheduleNewUpdates}){$j \in [0, \lceil\Delta_{\max}/\zeta \rceil]$}
        {
            Let $y$ be the $(i \cdot \lfloor\Delta_{\max}/\zeta \rfloor + j)^{th}$ vertex in $\mathcal{N}(z)$\;
            Add $t, t + 2^0, t + 2^1, t + 2^2, t + 2^3, \dots$ to $T_{\schedule}(y)$ \label{lne2:neighborInducedStages}
        }
        
    }
    \ForEach(\label{lne2:foreachSchedule}){$v \in V$ where $t \in T_{\schedule}(v)$}{
        $S_v \gets$ \sample{$v$, $H$, $\rho$}\;
    }
}
\end{algorithm2e}

\paragraph{Analyzing Running Time and Size.} Further, we now have that at each stage there are at most $O(\frac{\Delta_{\max}\log n}{\zeta}) = O(\frac{\Delta_{\max}\log n}{\Delta_{\min}\phi})$ vertices scheduled for resampling, since at each stage $t$, the algorithm schedules at most $O(\frac{\Delta_{\max}}{\Delta_{\min}\phi})$ for any given stage $t, t+2^0, t+2^1, t+2^2, \dots$, one for each iteration of the foreach-loop in \Cref{lne:scheduleNewUpdates}. But since all updates scheduled at a stage $t'$ were issued during stages $t', t' - 2^0, t' - 2^1, t' - 2^2, \dots$ the upper bound applies. Finally, we observe that each invocation of $\textsc{SampleVertex}(v,H,\rho)$ can be implemented in worst-case time $O(\rho \Delta_{\max} \log n)$ by \Cref{thm:fast sampling} where $\rho = \rhoValue$. Thus, the total worst-case update time is 
\[
O\left(\frac{\Delta_{\max}\log n}{\Delta_{\min}\phi}\right) \cdot O(\rho \Delta_{\max} \log n) = O\left(\left(\frac{\Delta_{\max}\log n}{\Delta_{\min}\phi}\right)^3 \right).
\]

We also point out that it is straight-forward by a simple Chernoff bound to derive that for every vertex $u \in V(H)$, every set $S_u$ at any stage is of size $O(\rho \Delta_{\max} \log n)$ w.h.p. But since there are $|V(H)|$ vertices in $H$, the total size of the cut-sparsifier is bounded by $O\left(n\left(\frac{\Delta_{\max}\log n}{\Delta_{\min}\phi}\right)^2 \right)$ with high probability.

\paragraph{Approximation Ratio and Success Probability.} Since further upper and lower bounds from \Cref{lma:lowerBoundAdaptiveCutSparsifier} and \Cref{thm:upperBoundCutSparsifier} still  hold, we can multiply the weights of $\tilde{H}$ and derive \Cref{thm:mainDecrCutSparsifier}. We observe that the probability of correctness of at least $1-n^{-\alpha}$ can be derived by replacing $\alpha$ by $2\alpha$ in the value of $\rho$ which does not affect the asymptotic udpate time.

\paragraph{An Extension to Low Approximation Ratios.} Finally, we also point out that by resampling a vertex $u$ after an edge $(u,v)$ was deleted at stage $t$, at stages $t$ and $t+(1+\epsilon)^i$ for every $i$, where we can choose $\epsilon$ very small, we can improve the approximation ratio of the cut sparsifier since we can prove that there are $O(\frac{1}{\epsilon}\lg t)$ many $u$-relevant stages by following the proof of claim. This in turn gives an  $O(\log_{1/\epsilon} t)$-approximation ratio for our cut-sparsifier by following the remaining proofs. The worst-case update time of this modified algorithm is $O\left(\left(\frac{\Delta_{\max}\lg t}{\Delta_{\min}\phi \epsilon}\right)^3 \right)$. We thus derive the following theorem by setting $\epsilon = n^{-3/k}$ which generalizes  \Cref{thm:mainDecrCutSparsifier}.

\begin{theorem}\label{thm:generalStatementCutSparsifier}
Given a decremental unweighted graph $H$, and fixed values $\frac{1}{\phi} \in (1, n), \Delta_{\max} \geq \Delta_{\min} \geq \frac{80 \log n}{\phi}, 1 < k < \log n$, such that the initial graph $H$ has maximum degree $\Delta_{\max}$. Then, there exists an algorithm that maintains $\tilde{H} \subseteq H$ along with a weight function $\tilde{w}$ such that
\begin{itemize}
    \item At any stage, where $H$ has $\Delta_{min} \leq \min\deg(H)$ and $\Phi(H) \geq \phi$, we have that $\tilde{H}$ weighted by $\tilde{w}$ is a $O(k)$-approximate cut sparsifier of $H$, and
    \item the graph $\tilde{H}$ has at most $\tilde{O}\left(|V(H)| \left(\frac{\Delta_{max}\log n}{ \phi \Delta_{min}}\right)^2\right)$ edges.
\end{itemize}
The algorithm can be initialized in $O(m)$ time and has worst-case update time $\tilde{O}\left(\left(\frac{\Delta_{max}\log n}{\Delta_{min}\phi}\right)^3 n^{1/k} \right)$. The algorithm runs correctly with high probability, i.e. with probability $1 - n^{-\alpha}$ for any fixed constant $\alpha > 0$.
\end{theorem}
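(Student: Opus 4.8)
The plan is to reuse the algorithm of \Cref{subsec:algoCutSparsifier}, in its worst-case (round-robin) form \Cref{alg:update2}, changing only the \emph{proactive sampling schedule}: wherever the doubling sequence $t, t+2^0, t+2^1, t+2^2, \dots$ is inserted into some set $T_{\schedule}(\cdot)$, replace it by the denser geometric sequence $t, t+\lceil(1+\epsilon)^0\rceil, t+\lceil(1+\epsilon)^1\rceil, t+\lceil(1+\epsilon)^2\rceil, \dots$, for a parameter $\epsilon = n^{-\Theta(1/k)}$ fixed at the end; the sampling probability stays $\rho=\rhoValue$ (with $2\alpha$ in place of $\alpha$). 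The intuition is the one already sketched for the doubling case: a denser proactive schedule makes it harder for the adversary to keep an edge alive in $\tilde H$ without triggering a fresh resample, so fewer ``relevant'' sampling events are responsible for any particular edge of a cut, which tightens the per-cut upper bound; the cost is that each update schedules a factor $\approx 1/\epsilon$ more future resamplings. Note that at $k=\log n$ this gives $\epsilon=\Theta(1)$ and recovers \Cref{thm:mainDecrCutSparsifier}.

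First I would re-run the lower-bound argument of \Cref{subsec:lowerBoundAdaptiveSparsifier}. That argument is essentially schedule-independent: \Cref{lma:lowerBoundAdaptiveCutSparsifier} only uses that each call to $\sampleb$ draws fresh independent randomness, that a single successful call for a vertex in $C_{\geq 1/20}$ concentrates its sampled crossing edges (\Cref{clm:alphaUnderestimateSeldom} via the scaled Chernoff bound \Cref{thm:scaledChernoff}), and a union bound over all sampling experiments, all cuts, and all stages. Passing to the $(1+\epsilon)$-schedule only increases the number of experiments from $O(mn)$ to $O\!\big(m\,\tfrac{\Delta_{\max}}{\zeta}\log_{1+\epsilon}m\big)=\poly(n)\cdot n^{O(1/k)}=\poly(n)$, which is absorbed by the constant $2^{16}(\alpha+1)$ hidden in $\rho$ (this is why we put $2\alpha$ there). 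The neighbor-resampling triggered every $\zeta$ degree drops is still carried out within $\zeta$ deletions in the round-robin variant, so the $O(1)$-slack constants in the lower bound's approximation factor are unaffected. Hence $|E_{H^t}(X,\overline X)|/2 \le \tilde w^t(X,\overline X)$ holds for every cut at every stage with probability $1-n^{-\alpha}$, verbatim.

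The heart of the proof is re-running the upper-bound analysis of \Cref{subsec:improvingTheUpperBoundForCutSparsifier}. The only place the doubling schedule enters is \Cref{clm:fewEvents}, which I would re-prove for the geometric schedule: a ``touch'' of a vertex $u$ at stage $s$ inserts into $T_{\schedule}(u)$ the entire geometric tail up to $t$, whose last term is at least $s+(t-s)/(1+\epsilon)-1$, so between two consecutive $u$-relevant stages the quantity $t-(\cdot)$ shrinks by a factor $\le \tfrac{\epsilon}{1+\epsilon}+o(1)$ (the integer rounding $\lceil(1+\epsilon)^i\rceil$ costs only an additive $1$), giving at most $O(\log_{1/\epsilon}t)$ $u$-relevant stages and hence $O(\log_{1/\epsilon}t)$ $(u,v)$-relevant experiments per edge. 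Feeding this into \Cref{clm:upperBoundForTauEvent}, the number of $(X,\iota)$-relevant experiments becomes $\eta'=O\!\big(\eta\log_{1/\epsilon}t\big)$ and thus $\mathbb P\big[\mathcal E^{\iota}_{\ell}\cap\{|E_{H^t}(X,\overline X)|=\eta\}\big]\le\big(\tfrac{c\,|E_{H^t}(X,\overline X)|\log_{1/\epsilon}t}{\ell}\big)^{\ell}\rho^{\ell}$. Now, instead of $\ell'=16e\lg n\,|E_{H^t}(X,\overline X)|$, I would take $\ell'=\Theta\!\big(\log_{1/\epsilon}t\big)\cdot|E_{H^t}(X,\overline X)|$, so that the first factor drops below $\tfrac12$ and, after summing over the $\le m^k$ phase maps $\iota$ and the $\le m$ values of $\eta$, one gets $\mathbb P[\tilde w^t(E_{H^t}(X,\overline X))\ge\ell']\le m^{k+1}\,2^{-\ell'\rho}$. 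The approximation ratio is then $\ell'/|E_{H^t}(X,\overline X)|=\Theta(\log_{1/\epsilon}t)=\Theta(k)$, using $t\le m=n^{O(1)}$ and $\epsilon=n^{-\Theta(1/k)}$. The union bound over all $\binom nk$ cuts of a given size and all $n^{O(k)}$ maps $\iota$ still closes because $\ell'\rho \ge \Omega(\log_{1/\epsilon}t)\cdot|E_{H^t}(X,\overline X)|\rho \ge \Omega(1)\cdot k\phi\Delta_{\min}\cdot\tfrac{2^{16}(\alpha+1)\log n}{\Delta_{\min}\phi^2} \ge \Omega\!\big(2^{16}(\alpha+1)\,k\log n\big)$, which dominates $\log\binom nk+\log m^k=O(k\log n)$. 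Combined with the lower bound this yields the refined key lemma, and a union bound over stages and cuts (as in \Cref{thm:upperBoundCutSparsifier}, with the bookkeeping of \Cref{clm:simpleApprox}) gives that $\tilde H$ weighted by $\tilde w$ has every cut in $[\,|E_{H^t}(X,\overline X)|/2,\ O(k)\,|E_{H^t}(X,\overline X)|\,]$ w.h.p., at every stage where $\min\deg(H)\ge\Delta_{\min}$ and $\Phi(H)\ge\phi$.

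Finally I would assemble the statement: multiply $\tilde w$ by $2$ so the bounds become $|E_{H^t}(X,\overline X)|\le\tilde w^t(X,\overline X)\le O(k)\,|E_{H^t}(X,\overline X)|$, i.e.\ an $O(k)$-approximate cut sparsifier; bound $|E(\tilde H)|=\sum_u|S_u|$ by a Chernoff bound giving $|S_u|=\tilde O(\rho\Delta_{\max})$ w.h.p., hence $|E(\tilde H)|=\tilde O\!\big(|V(H)|(\tfrac{\Delta_{\max}}{\phi\Delta_{\min}})^2\big)$; bound initialization by $O(m)$ via efficient subset sampling (\Cref{thm:fast sampling}); and bound the worst-case update time by the number of vertices resampled per stage, which with the denser schedule is $O\!\big(\tfrac{\Delta_{\max}\log n}{\zeta\,\epsilon}\big)$ (all resamples executed at a stage were scheduled at one of the $O(\log_{1+\epsilon}m)$ offset stages, each contributing $O(\Delta_{\max}/\zeta)$ vertices), times the per-resample cost $\tilde O(\rho\Delta_{\max})$, giving $\tilde O\!\big(\tfrac1\epsilon(\tfrac{\Delta_{\max}}{\Delta_{\min}\phi})^3\big)=\tilde O\!\big((\tfrac{\Delta_{\max}\log n}{\Delta_{\min}\phi})^3 n^{1/k}\big)$ for $\epsilon=n^{-\Theta(1/k)}$; the overall failure probability $n^{-\alpha}$ follows from taking $2\alpha$ in $\rho$. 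The main obstacle is the upper-bound step: one must verify that the integer-rounded geometric version of \Cref{clm:fewEvents} still yields only $O(\log_{1/\epsilon}t)$ relevant stages, and, after rebalancing $\ell'$ down from $\Theta(\log n)|E|$ to $\Theta(k)|E|$, that $\ell'\rho$ remains large enough to survive the union bound over the now $n^{O(k)}$ phase assignments $\iota$ as well as over all cuts — this is precisely where the polylogarithmic and $(\alpha+1)$ slack baked into $\rho$ is consumed.
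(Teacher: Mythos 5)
Your proposal is correct and follows essentially the same route as the paper: the paper's own proof of this theorem is exactly the ``extension to low approximation ratios'' remark in \Cref{subsec:worstCaseForDecrCutSparsifier} — replace the doubling proactive schedule by a $(1+\epsilon)$-geometric one with $\epsilon=n^{-\Theta(1/k)}$, redo \Cref{clm:fewEvents} to get $O(\log_{1/\epsilon}t)=O(k)$ relevant stages, and rerun the lower-bound, upper-bound, and worst-case round-robin bookkeeping unchanged. Your write-up just carries out in detail what the paper sketches in a paragraph (and your accounting of the update time as $\tilde O\bigl(\tfrac1\epsilon(\tfrac{\Delta_{\max}}{\Delta_{\min}\phi})^3\bigr)$ with $\epsilon=n^{-\Theta(1/k)}$ is consistent with the stated $n^{1/k}$ bound).
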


\subsubsection{Extending the Algorithm for ``pruned $\phi$-sub-expanders"}

So far, we have established \Cref{thm:mainDecrCutSparsifier}, i.e. we have given a decremental algorithm $\mathcal{A}$ which can maintain a cut-sparsifier on a almost-uniform-degree  $\phi$-expander. Thus, algorithm $\mathcal{A}$ satisfies \Cref{def:amortized:decremental_algorithm}, and it is therefore straight-forward to extend it to an amortized algorithm for general graphs which is part of the next section. 

However, this will not be sufficient to obtain an algorithm for general graphs with low worst-case update time. In order to use the reduction from \Cref{thm:reduction_worst_case}, the algorithm $\mathcal{A}$ has to work on pruned $\phi$-sub-expanders as described in
\Cref{def:worst-case:decremental_algorithm}. In this section, we show that our algorithm $\mathcal{A}$ already satisfies \Cref{def:worst-case:decremental_algorithm} if it just deletes the edge set $P_t$ at stage $t$ from $G$ and adds it to $\tilde{H}$ which only incurs a subpolynomial increase in the running time and a small additive term in the size of the sparsifier. We prove the following lemma.

\begin{lemma}\label{lma:WorstCaseBlackBoxCutSparsifier}
Given a decremental unweighted graph $H$ and an increasing edge set $P$ (where $P$ is initially empty and grows at stage $t$ by $P_t$ of size at most $2^{O(\sqrt{\log n})}$), and fixed values $\frac{1}{\phi} \in (1, n), \Delta_{\max} \geq \Delta_{\min} \geq \frac{80 \log n}{\phi}, 1 < k < \log n$, such that the initial graph $H$ has maximum degree $\Delta_{\max}$. Then, there exists an algorithm that maintains $\tilde{H} \subseteq H$ along with a weight function $\tilde{w}$ such that
\begin{itemize}
    \item At any stage, where $H$ has $\Delta_{\min} \leq \min\deg(H)$ and $\Phi(H) \geq \phi$, we have that $\tilde{H}$ weighted by $\tilde{w}$ is a $O(k)$-approximate cut sparsifier of $H$, and
    \item the graph $\tilde{H}$ has at most $\tilde{O}\left(|V(H)| \left(\frac{\Delta_{\max}\log n}{ \phi \Delta_{\min}}\right)^2 + |P|\right)$ edges.
\end{itemize}
The algorithm can be initialized in $O(m)$ time and has worst-case update time 
$$
2^{O(\sqrt{\log n})} \cdot \tilde{O}\left(\left(\frac{\Delta_{\max}\log n}{\Delta_{\min}\phi}\right)^3 n^{1/k} \right).
$$
The algorithm runs correctly with high probability, i.e. with probability $1 - n^{-\alpha}$ for any fixed constant $\alpha > 0$.
\end{lemma}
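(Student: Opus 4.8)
The plan is to instantiate $\mathcal{A}$ as the algorithm of \Cref{thm:generalStatementCutSparsifier} run on the \emph{internal} decremental graph $\hat H := H\setminus P$, and to output $\tilde H := \tilde H'\cup P$, where $\tilde H'$ (with its weight function $\tilde w'$ assigning $1/\rho$ to each sampled edge, rescaled exactly as in \Cref{thm:mainDecrCutSparsifier}) is the sparsifier maintained by that algorithm and every edge of $P$ is given weight $1$ under the same rescaling. Concretely: when the adversary deletes an edge $e$ at stage $t$ and hands us $P_t$, we feed the batch $\{e\}\cup P_t$ of deletions to the internal algorithm and append $P_t$ to the output. This is precisely the modification announced before the lemma.

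I would first settle the complexity. Each adversarial update triggers $1+|P_t| = 2^{O(\sqrt{\log n})}$ edge deletions to $\hat H$, each costing $\tilde O\!\big((\Delta_{\max}\log n/(\Delta_{\min}\phi))^3 n^{1/k}\big)$ worst-case time by \Cref{thm:generalStatementCutSparsifier}, plus $O(|P_t|)$ time to insert $P_t$ into $\tilde H$; this is the claimed update time. The output size is that of $\tilde H'$, namely $\tilde O\!\big(|V(H)|(\Delta_{\max}\log n/(\phi\Delta_{\min}))^2\big)$, plus the at most $|P|$ edges of $P$.

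The heart is correctness. Since a pruned edge is never deleted by the adversary, $E(H^t)=E(\hat H^t)\sqcup E(P^t)$, and every edge of $P^t$ lies in $\tilde H$ with weight $1$, so for every $S\subseteq V(H)$,
\[
\delta_{\tilde H}(S)=\delta_{\tilde H'}(S)+|E_{P^t}(S,\overline S)|,\qquad \delta_{H^t}(S)=|E_{\hat H^t}(S,\overline S)|+|E_{P^t}(S,\overline S)|,
\]
hence it suffices to control $\delta_{\tilde H'}(S)$. Fix a stage $t$ with $\min\deg(H^t)\ge\Delta_{\min}$ and $\Phi(H^t)\ge\phi$, a cut $(S,\overline S)$, and set $k:=\min(|S|,|\overline S|)$; since $H^t$ is then a near-uniform-degree $\phi$-expander, $|E_{H^t}(S,\overline S)|\ge k\phi\Delta_{\min}$, exactly as in \Cref{thm:generalStatementCutSparsifier}. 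I split into two cases. If $|E_{\hat H^t}(S,\overline S)|\ge\tfrac12|E_{H^t}(S,\overline S)|\ge\tfrac12 k\phi\Delta_{\min}$, I re-run the lower- and upper-bound analyses of \Cref{lma:lowerBoundAdaptiveCutSparsifier} and \Cref{thm:upperBoundCutSparsifier} on $\hat H$ itself (which is decremental, has initial maximum degree $\Delta_{\max}$, and uses the same threshold $\zeta=\phi\Delta_{\min}$): inspecting those proofs, the \emph{only} structural property of the live graph they use is the per-cut bound $|E_\cdot(X,\overline X)|\ge k\phi\Delta_{\min}$ (driving both the Chernoff concentrations and the union-bound exponent), and we now have it for $\hat H^t$ up to a factor $2$, which is harmless; the union bound runs over subsets of $V(\hat H^t)\subseteq V(H)$, since $\delta_{\tilde H'}(S)$ and $|E_{\hat H^t}(S,\overline S)|$ depend only on $S\cap V(\hat H^t)$, and this only shrinks the index set. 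Otherwise $|E_{P^t}(S,\overline S)|\ge\tfrac12|E_{H^t}(S,\overline S)|$, i.e. the cut's $H^t$-weight is dominated by pruned edges, which are reproduced exactly in $\tilde H$: for the lower bound $\delta_{\tilde H}(S)\ge|E_{P^t}(S,\overline S)|\ge\tfrac12\delta_{H^t}(S)$, and for the upper bound $\delta_{\tilde H'}(S)\le|E_{\hat H^t}(S,\overline S)|/\rho$ gets absorbed into $O(k)\,\delta_{H^t}(S)$. Taking a union bound over all stages and cuts, with the approximation factor being the $O(\log_{1/\epsilon}t)=O(k)$ bound for $\epsilon=n^{-\Theta(1/k)}$ exactly as in \Cref{thm:generalStatementCutSparsifier}, and replacing $\alpha$ by $2\alpha$ in $\rho$ to drive the failure probability down to $n^{-\alpha}$, completes the argument.

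The step I expect to fight with is the second case: the estimate $\delta_{\tilde H'}(S)\le|E_{\hat H^t}(S,\overline S)|/\rho$ can overshoot a \emph{thin} cut of $\hat H^t$ by a $1/\rho$ factor, so one must verify that whenever the adversary makes $|E_{\hat H^t}(S,\overline S)|$ small while keeping $H^t$ a $\phi$-expander, the missing mass has genuinely moved into $P$ — i.e. that the crude $\tfrac12$-split can be refined into a uniform bound across the boundary regime where $|E_{\hat H^t}(S,\overline S)|$ and $|E_{P^t}(S,\overline S)|$ are comparable. This is exactly the place where the hypothesis that $P$ grows by only $2^{O(\sqrt{\log n})}$ edges per stage, combined with the near-uniform-degree expansion of $H^t$, carries the proof; the rest is a faithful transcription of the existing adaptive analysis with $H$ replaced by $\hat H$.
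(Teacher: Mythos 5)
There is a genuine gap, and it is exactly where you flag it: the upper bound in your second case. The crude estimate $\delta_{\tilde H'}(S)\le|E_{\hat H^t}(S,\overline S)|/\rho$ does not get ``absorbed into $O(k)\,\delta_{H^t}(S)$'': with $\rho=\Theta\bigl(\frac{\log n\,\Delta_{\max}}{\Delta_{\min}^2\phi^2}\bigr)$ the factor $1/\rho$ can be as large as $\Theta\bigl(\frac{\Delta_{\min}\phi^2}{\log n}\bigr)$, so even with $|E_{\hat H^t}(S,\overline S)|\le\frac12\delta_{H^t}(S)$ the worst-case overshoot is $\mathrm{poly}(\Delta_{\min}\phi)$, not $O(k)$; you still need the probabilistic (proactive-resampling) upper bound in this regime, which your case split was meant to avoid. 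Moreover, the rescue you propose -- that the $2^{O(\sqrt{\log n})}$ per-stage bound on $|P_t|$ ``carries the proof'' -- cannot work: that bound only enters the update-time and size accounting; over many stages $P$ can contain an arbitrary constant fraction of any fixed cut, arranged adversarially, so it gives no leverage on the approximation ratio. (Your Case 1 is also slightly too glib -- the lower-bound machinery uses per-vertex quantities, the sets $C_{\geq\epsilon}$ and the per-vertex Chernoff bound of \Cref{clm:alphaUnderestimateSeldom}, not merely the per-cut bound $|E(X,\overline X)|\ge k\phi\Delta_{\min}$ -- though that part looks patchable with adjusted constants.)

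The paper's proof needs no case split because it never switches the comparison graph: all cuts are measured against $H^t$, which retains the degree and conductance guarantees even though the algorithm samples only $\hat H=H\setminus P$. For the upper bound, removing the $P$-edges from the working graph can only \emph{decrease} the number of $(u,v)$-relevant experiments, while the threshold in \Cref{lma:keyLemmaUpperBoundCutSparsifier}/\Cref{clm:upperBoundForTauEvent} is already stated in terms of $|E_{H^t}(X,\overline X)|$; hence the concentration bound applies verbatim to $\tilde H\setminus P$, and the unit-weight edges of $P$ add at most one more $\delta_{H^t}(S)$ per cut, preserving the $O(k)$ factor. For the lower bound, the $P$-edges crossing the cut are treated as sampling experiments that succeed with probability $1$ (independent of the vertex resampling), which leaves the Chernoff exponent in \Cref{clm:alphaUnderestimateSeldom} unchanged, so the underestimation analysis of \Cref{lma:lowerBoundAdaptiveCutSparsifier} goes through as is. Your algorithmic modification and the complexity/size accounting match the paper; to close the correctness argument you should drop the two-case analysis and adopt this ``compare against $H$, fewer experiments only help'' argument.
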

\begin{proof}
Let us consider that the algorithm from \Cref{thm:mainDecrCutSparsifier} is run with the above parameters and processes at each stage $t$ the adversarial edge deletion to $H$ \emph{and} the deletion of the edges in $P_t$ and thereafter adds all edges in $P_t$ to $\tilde{H}$ for the rest of the algorithm.  
It is straight-forward to obtain the size bound and running time stated by the proof in \Cref{subsec:worstCaseForDecrCutSparsifier}. 

However, lower and upper bound proofs are clearly affected by the fact that the algorithm runs on the graph $H \setminus P$. This is since $H \setminus P$ might not be an expander or satisfy the min-degree lower bound of $\Delta_{\min}$ at some times. However, the graph $H$ still satisfies these guarantees and this will in fact give even stronger lower and upper bounds. To see this, fix a specific cut $(X,\overline{X})$ and observe that lower and upper bound essentially use the concentration achieved by the edge sampling experiments of edges in the cut $(X,\overline{X})$. But in our new algorithm, we can now include the edges in $P$ that are in the cut $(X,\overline{X})$ as experiments with unit weight into $\tilde{H}$ that succeed with probability $1$. Since the amount of overestimation/underestimation is the same, but each of these edges in $P$ does not contribute to the error we must thus have that the remaining edges deviate even more from the average. 
More precisely, for the lower bound proven in \Cref{subsec:lowerBoundAdaptiveSparsifier}, the only claim affected is \Cref{clm:alphaUnderestimateSeldom} where we took a Chernoff bound to upper bound the probability that a specific vertex $c \in C_{\geq 1/20}$ at some stage $t$ is $3/4$-underestimating with probability at most $n^{-\frac{2^6(\alpha+1)\Delta_{\max}}{\Delta_{\min}\phi}\ell}$.

To derive this bound we used the scaled Chernoff bound from \Cref{thm:scaledChernoff} and the definition of $C_{\geq 1/20}$ which implies that there are at least $\phi\Delta_{\min}/20$ edges in the set $E_{H^t}(\{c\}, \overline{X})$. But we notice that, if we let each edge in $E_{H^t}(\{c\}, \overline{X}) \cap P$ be sampled in an experiment with probability $1$ (which also means these experiments are independent of the sampling step at the vertex),  we obtain the exact same upper bound. The rest of the proof is a straight-forward extension.

For the upper bound, considering the stochastic process $Z_0, Z_1, \dots$, we observe that there could only be less $(u,v)$-relevant experiments since edges in $P$ are now removed from the graph $H \setminus P$ that the algorithm operates on while the guarantees on $H$ remain unchanged. Thus, in the final proof of \Cref{clm:upperBoundForTauEvent} the probability upper bound still applies and everything goes through, giving the same upper bound on $\tilde{H} \setminus P$ for $H$. But the edges in $P$ are added with unit weight and can therefore at most add for each cut, the weight of the cut. Thus, we obtain again an $O(k)$ approximation.
\end{proof}

\subsection{Applying the Black Box Reduction}
\label{subsec:resultsViaBlackBoxFromCutSparsifier}

In this section, we first state that the graph problems of maintaining a cut-sparsifier fits the framework of a graph problem defined in \Cref{define_properties}.
The proof can be found in \Cref{sec:properties}

\begin{restatable}{lemma}{cutSparsifierIsAGraphProperty}\label{lma:cutSparsifierIsAGraphProblem}
Let $\H(\epsilon, G)$ be the set of all valid $e^{\epsilon}$-approximate cut-sparsifiers of $G$.
Then $\H$ satisfies \property{} from \Cref{define_properties}.
\end{restatable}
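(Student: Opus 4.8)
The plan is to unwind the definition of $\H(\epsilon,G)$ — the set of (re)weighted graphs $H$ on vertex set $V(G)$ satisfying $\delta_G(S)\le\delta_H(S)\le e^{\epsilon}\,\delta_G(S)$ for every $S\subseteq V(G)$, where $\delta_{(\cdot)}(S)$ denotes the total weight of edges crossing $(S,V\setminus S)$ — and then check the five conditions \property\ one at a time. The key structural facts I would record first are: (i) $S\mapsto\delta_H(S)$ is additive under weighted unions of edge sets, i.e.\ $\delta_{\bigcup_i s_i\cdot H_i}(S)=\sum_i s_i\,\delta_{H_i}(S)$; (ii) it is monotone under taking subgraphs, so $H\subseteq H_1$ implies $\delta_H(S)\le\delta_{H_1}(S)$; and (iii) scaling every edge of a graph by a factor in $[1,e^{\epsilon}]$ changes each $\delta_H(S)$ by a factor in the same range. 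With these in hand each of the five conditions collapses to an elementary inequality between nonnegative reals that holds simultaneously for all cuts.

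Concretely, \eqref{con:identity} is immediate from (iii) (and multiplying through by $e^{\epsilon}$ for the second half). For \eqref{con:union}, multiply the hypothesis $\delta_{G_i}(S)\le\delta_{H_i}(S)\le e^{\epsilon}\delta_{G_i}(S)$ by $s_i\ge0$, sum over $i$, and apply (i). For \eqref{con:nested}, chain the two sandwich inequalities: if $H\in\H(G,\epsilon)$ and $H'\in\H(H,\delta)$ then $\delta_G(S)\le\delta_H(S)\le\delta_{H'}(S)\le e^{\delta}\delta_H(S)\le e^{\delta+\epsilon}\delta_G(S)$. For \eqref{con:increment}, write $\delta_{(e^{\delta}-1)H\cup H_2}(S)=(e^{\delta}-1)\delta_H(S)+\delta_{H_2}(S)$; the upper bound follows from (ii) and $H\subseteq H_1$, giving $(e^{\delta}-1)\delta_H(S)+\delta_{H_2}(S)\le(e^{\delta}-1)e^{\epsilon}\delta_G(S)+e^{\epsilon}\delta_G(S)=e^{\epsilon+\delta}\delta_G(S)$, and the lower bound is immediate from $\delta_{H_2}(S)\ge\delta_G(S)$ and $(e^{\delta}-1)\delta_H(S)\ge0$. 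Finally, for \eqref{con:contraction} I would invoke the standard observation that contracting a vertex set $W$ induces a cut-value-preserving bijection between the cuts of the contracted graph and those cuts $(S,V\setminus S)$ of the original with $W\subseteq S$ or $W\subseteq V\setminus S$ (edges inside $W$ become self-loops and never cross a cut; edges leaving $W$ retain their outside endpoint and their weight); since $W$ is contracted identically in $G$ and in $H$, which share a vertex set, the sandwich inequality for $(G,H)$ transfers verbatim to the contracted pair $(G',H')$.

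I do not anticipate a real obstacle: the statement is a short mechanical verification. The only mildly delicate point is the cut-correspondence under contraction used for \eqref{con:contraction}, which is pure bookkeeping once one notes that self-loops are irrelevant to cut weights and that $H$ and $G$ are contracted in lockstep. The full details belong in the appendix referenced by the lemma statement.
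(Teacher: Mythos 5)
Your verification is correct, and every step checks out: the additivity, monotonicity, and scaling facts you record do reduce each of \eqref{con:identity}, \eqref{con:union}, \eqref{con:increment}, and \eqref{con:nested} to one-line inequalities, and your cut-correspondence argument for \eqref{con:contraction} (cuts of the contracted graph are exactly the cuts of the original with the contracted set $W$ entirely on one side, with self-loops contributing nothing) is sound because $G$ and $H$ are contracted in lockstep. Where you differ from the paper is in packaging: the paper does not re-verify the five properties for cuts at all. Instead it observes that for an indicator vector $f\in\{0,1\}^V$ of a set $S$ one has $f^\top L_G f=\delta_G(S)$, so the cut-sparsifier condition is the spectral-sparsifier condition restricted to Boolean test vectors, and it then declares the proof identical to that of \Cref{lem:spectralSparsifierIsAGraphProblem} with the quantifier over $\R^V$ replaced by $\{0,1\}^V$; in particular its contraction argument is the linear-algebraic one (extend a vector on the contracted vertex set to a vector constant on $W$ and note the quadratic forms agree), which is your cut correspondence in disguise. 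Your route is more elementary and self-contained — it needs no Laplacians and no appeal to the spectral lemma — at the cost of redoing the five checks; the paper's route buys brevity and makes explicit that cut sparsification is the Boolean restriction of spectral sparsification. One cosmetic point: the paper is not entirely consistent about whether the sandwich is $\delta_G\le\delta_H\le e^{\epsilon}\delta_G$ or the reciprocal form, so when splicing your argument in you should fix one convention (yours matches the definition in the introduction) and note that the verification is symmetric under swapping the two; likewise your implicit convention that overlapping edges in a union have their weights added is the same multigraph semantics the paper uses via $L_{\bigcup_i s_iG_i}=\sum_i s_iL_{G_i}$, so no gap arises there.
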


Using this lemma, we can now derive our first main result from \Cref{thm:amortizedCutSparsifierMainResult}: an amortized adaptive algorithm  that maintains an $O(k)$-approximate cut-sparsifier.

\paragraph{Proof of \Cref{thm:amortizedCutSparsifierMainResult}}
\begin{proof}
We invoke \Cref{thm:amortized:fully_dynamic_weighted} with $\phi = 1/\log^4 n$ and have by \Cref{lma:cutSparsifierIsAGraphProblem} that the required graph properties are given, and by \Cref{thm:generalStatementCutSparsifier} there is an algorithm $\mathcal{A}$ that maintains an $O(k)$-approximate cut-sparsifier on decremental $(1/\log^4 n)$-expanders with preprocessing time $O(m)$, size bounded by $\tilde{O}(n)$ and worst-case by $\tilde{O}(n^{1/k})$ (here we use that by \Cref{def:amortized:decremental_algorithm} the degrees in the graph that the algorithm $\mathcal{A}$ is run upon are at all times in $[c_1 \cdot \Delta\phi, c_2 \cdot \Delta/\phi]$ for some constants $c_1, c_2$ and $\Delta$). Thus, the preprocessing time of the final algorithm is $O(m)$, and the worst-case update time $\tilde{O}(n^{1/k})$ and the size of the output graph is $\tilde{O}(n \log W)$.
\end{proof}

Next, we show that  our algorithm from \Cref{subsec:worstCaseForDecrCutSparsifier} can be extended to  an algorithm for cut-sparsifiers with bounded worst-case update time for general weighted graphs which proves \Cref{thm:worstCaseCutSparsifierMainResult}

\paragraph{Proof of \Cref{thm:worstCaseCutSparsifierMainResult}}
\begin{proof}
We can now simply use the algorithm from \Cref{lma:WorstCaseBlackBoxCutSparsifier} with approximation $k^2 = e^{2 \log k}$ in the black box reduction~(c.f.~ \Cref{thm:reduction_worst_case}) where we set $N =n, d = n^{1/k}$, and $L = \lceil k \rceil$ and choose $\phi = 1/\log^4 n$.  

Thus, we obtain an algorithm to maintain cut-sparsifiers for general weighted graphs on $n$ nodes, 
and aspect ratio $W$, with approximation ratio $e^{O(\log k) \cdot L} = k^{O(k)}$
 with preprocessing time $\tilde{O}(m \log W)$ and size $\tilde{O}(n \log W)$, and worst-case update time
\begin{align*}
&\left(n^{O(1/\sqrt{\log n})} \cdot \tilde{O}\left(\left(\frac{\Delta_{\max}\log n}{\Delta_{\min}\phi}\right)^3 n^{1/k^2}\right) \right)^{O(L)}
\cdot (1 + \tilde{O}(n^{1/k} \log W)) \\
&= (\log n)^{O(k)} \cdot n^{O(1/\sqrt{\log n}} \cdot n^{1/k} \cdot n^{O(1/k)} \log W \\
&= n^{O(1/k)} \log W
\end{align*}
where we bound the recourse by our worst-case update time (each edge is update explicitly in $\tilde{H}$ so the bound on the running time must subsume the number of edge updates) and  the upper bound on $k$ to conclude that $(\log n)^{O(k)} = n^{O(1/k)}$ for all valid choices of $k$.
\end{proof}

\section{Spanners and Spectral Sparsifiers Against an Adaptive Adversary}
\label{sec:adaptive_more}

In this section, we show that adaptive algorithm for both spanners and spectral sparsifiers follows easily from our adaptive algorithms for cut sparsifiers. The results are formally summarized as follows:

\begin{theorem}\label{thm:SpannerMainResult}
There exists an algorithm that maintains an $\polylog(n)$-approximate spanner $\tilde{G}$ on any dynamic graph $G$. The algorithm  works against an adaptive adversary with high probability, has $O(m)$ initialization time, $O(\polylog n)$ amortized update time and contains $\tilde{O}(n \log W)$ edges.
\end{theorem}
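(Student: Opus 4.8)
The plan is to obtain \Cref{thm:SpannerMainResult} as a near-immediate corollary of the cut-sparsifier machinery of \Cref{sec:adaptive} together with the amortized black-box reduction of \Cref{sec:uniform_degree_reduction}. The one genuinely new ingredient is the elementary observation that a cut sparsifier of a $\phi$-expander is itself an expander of conductance $\tilde{\Omega}(\phi)$, hence has diameter $\polylog(n)$ whenever $\phi = 1/\polylog(n)$, and is therefore automatically a $\polylog(n)$-spanner when viewed as an (unweighted) subgraph. So the first step is to upgrade the decremental algorithm behind \Cref{thm:mainDecrCutSparsifier} into a decremental \emph{spanner} algorithm on near-uniform-degree expanders, and the second step is to run it through \Cref{thm:amortized:fully_dynamic_weighted}, in exact parallel to how \Cref{thm:amortizedCutSparsifierMainResult} was proved with $k = \log n$.

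Concretely, I would run the algorithm of \Cref{thm:mainDecrCutSparsifier} with $\phi = 1/\log^4 n$ on a near-uniform-degree $\phi$-expander $H$, producing $\tilde H \subseteq H$ with the uniform weight function $\tilde w \equiv 1/\rho$, and argue that at every stage where the cut-sparsifier guarantee holds, $\tilde H$ is a $\polylog(n)$-spanner of $H$. For this I would bound, for every cut $(S,\overline S)$, the numerator from below by $\delta_{\tilde H}(S) \ge \delta_H(S) \ge \phi\min\{\vol_H(S),\vol_H(\overline S)\}$ using the lower-bound half of the cut-sparsifier property and the expansion of $H$, and the denominator from above by $\vol_{\tilde H}(S) = O(\log n)\vol_H(S)$ by applying the upper-bound half to singleton cuts; dividing gives $\phi_{\tilde H}(S) = \Omega(\phi/\log n)$. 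Since every edge of $\tilde H$ carries the same weight, the unweighted graph $\tilde H$ has the same conductance, hence is connected with diameter $O(\phi^{-1}\log^2 n) = \polylog(n)$, so $\dist_{\tilde H}(u,v) \le \polylog(n) \le \polylog(n)\cdot\dist_H(u,v)$ for all $u,v$; together with the size bound $\tilde{O}(|V(H)|)$ from \Cref{thm:mainDecrCutSparsifier} this exhibits a decremental, adaptive, $\polylog(n)$-update-time algorithm solving the spanner problem on near-uniform-degree $(1/\log^4 n)$-expanders. Then, invoking the analogue of \Cref{lma:cutSparsifierIsAGraphProblem} for spanners (\Cref{lma:spannerIsAGraphProblem}) to certify that the spanner problem satisfies properties \eqref{con:identity}, \eqref{con:union} and \eqref{con:contraction}, I would apply \Cref{thm:amortized:fully_dynamic_weighted} with $\phi = 1/\log^4 n$, $T(n) = \polylog(n)$, $P(m) = O(m)$, $S(n) = \tilde{O}(n)$; adaptiveness is preserved by the reduction, so we obtain a fully dynamic weighted algorithm with $O(m)$ preprocessing, $\polylog(n)$ amortized update time and output size $\tilde{O}(n\log W)$, which is exactly the statement.

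I do not expect a serious obstacle: the conductance-to-diameter computation is routine, and the rest is a verbatim re-run of the proof of \Cref{thm:amortizedCutSparsifierMainResult}. The two points worth a second look are (i) that the union over the dynamic uniform-degree expander decomposition of the per-expander spanners is again a spanner of $G$ — this is precisely property \eqref{con:union} for spanners, since a shortest path in $G$ can be routed edge by edge through the parts $G_i$, each edge replaced by its $\polylog(n)$-length detour in the maintained $H_i$ — and (ii) that contracting the super-node sets $X_{v,i}$ of the $\Delta$-reduction inside the maintained graph keeps it low-diameter, which holds because contraction never increases distances (equivalently, by property \eqref{con:contraction}); both are already handled by the framework, so the theorem follows.
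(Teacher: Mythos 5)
Your proposal is correct and follows essentially the same route as the paper: the paper also derives the result by showing a cut sparsifier of a $\phi$-expander is a low-diameter subgraph and hence an $O(\alpha\log(\alpha m)/\phi)$-spanner (\Cref{lma:cutSparsifierIsSpannerOnExpander}), verifying the spanner properties (\Cref{lma:spannerIsAGraphProblem}), and plugging the decremental cut-sparsifier algorithm on near-uniform-degree expanders into the amortized black-box reduction (\Cref{thm:amortized:fully_dynamic_weighted}) with the approximation parameter set to $\log n$. Your conductance-transfer-plus-diameter calculation is just a repackaging of the paper's ball-growing argument, so there is nothing substantive to add.
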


\begin{theorem}
\label{thm:SpannerMainResult_worstcase}
For any $1 \leq k = O(\sqrt{\log n})$, there exists an algorithm that maintains an $\log^{O(k)} n$-approximate spanner $\tilde{G}$ on any dynamic graph $G$. The algorithm works against an adaptive adversary with high probability, has $\tilde{O}(m \log W)$ initialization time, $\tilde{O}(n^{1/k})$ worst-case update time and contains  $\tilde{O}(n \log W)$ edges.
\end{theorem}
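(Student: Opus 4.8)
The plan is to obtain \Cref{thm:SpannerMainResult_worstcase} by the same recipe that produced \Cref{thm:worstCaseCutSparsifierMainResult}: feed a suitable decremental base algorithm into the worst-case blackbox reduction \Cref{thm:reduction_worst_case}, now with the \emph{spanner} functional in place of the cut-sparsifier functional. The one genuinely new ingredient is the observation that a cut sparsifier of a near-uniform-degree $\phi$-expander, read as an unweighted subgraph, is a $\Otil(1/\phi)$-spanner. Concretely, if $H$ is a $\phi$-expander with $\max\deg_H \le \Otil(1)\cdot\min\deg_H$ and $\tilde H\subseteq H$ is an $O(\log n)$-approximate cut sparsifier, then for every cut $(S,\overline S)$ we have $\delta_{\tilde H}(S)\ge\delta_H(S)\ge\phi\cdot\min\{\vol_H(S),\vol_H(\overline S)\}$, while $\delta_{\tilde H}(\{v\})\le O(\log n)\deg_H(v)$ for each vertex $v$, so $\vol_{\tilde H}(S)\le O(\log n)\vol_H(S)$; hence $\tilde H$ has conductance $\tilde{\Omega}(\phi)$ --- and, since all its edges carry the same weight $1/\rho$, so does its unweighted subgraph --- so its hop-diameter is $\Otil(1/\phi)$ and it spans every edge of $H$ within stretch $\Otil(1/\phi)=\polylog(n)$. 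This survives adjoining the pruned set $P$ with unit weights as well, because $G=W\cup P$ for the sub-expander $W$ and $P$ is a $1$-spanner of itself, so \eqref{con:union} makes $\tilde H\cup P$ a $\polylog(n)$-spanner of $G$. Consequently the decremental algorithm of \Cref{lma:WorstCaseBlackBoxCutSparsifier}, run with parameter $\Theta(\log n)$, is simultaneously a decremental $\polylog(n)$-spanner algorithm on pruned $\phi$-sub-expanders in the sense of \Cref{def:worst-case:decremental_algorithm}, with $O(m)$ preprocessing, $\Otil(n)$ output size, and worst-case update time and recourse $2^{O(\sqrt{\log n})}\cdot\polylog(n)$.

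Next I would record that the spanner functional is a valid graph problem: letting $\H(\epsilon,G)$ be the set of subgraphs $S\subseteq G$ with $\dist_S(u,v)\le e^{\epsilon}\dist_G(u,v)$ for all $u,v$, properties \eqref{con:identity} and \eqref{con:contraction} are immediate, \eqref{con:union} holds because rescaling the weights of a piece preserves its stretch and a union of spanners of the pieces spans their union, \eqref{con:increment} holds because deleting edges from one spanner and unioning with a second one still spans (up to the $e^{\delta}$ slack on the retained edges), and \eqref{con:nested} is precisely the statement that stretch composes \emph{multiplicatively}, i.e.\ that $\log(\text{stretch})$ is \emph{additive} along a chain $S'\in\H(S,\delta)$, $S\in\H(G,\epsilon)$. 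This is \Cref{lma:spannerIsAGraphProblem}, which I would invoke.

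With the two ingredients in hand, I would apply \Cref{thm:reduction_worst_case} to the spanner functional $\H$ and the base algorithm above, with $N=n$, $d=n^{1/k}$, $L=\lceil k\rceil$, $\phi=1/\log^4 n$, so the base accuracy is $\epsilon=\Theta(\log\log n)$ (stretch $\polylog(n)$). Since $\H(G,\epsilon)$ consists of subgraphs, the faster preprocessing branch of \Cref{thm:reduction_worst_case} applies and gives $\Otil(m\log W)$ initialization. The reduction returns a fully dynamic algorithm for $\H(\epsilon\cdot L)$: by \eqref{con:nested} the stretch multiplies over the $L=\Theta(k)$ levels, so the final stretch is $(\polylog n)^{\Theta(k)}=\log^{O(k)}n$, and the output size is $\Otil(S(n)\,L\log(wW))=\Otil(n\log W)$, where $\log W$ comes from the weight-class decomposition. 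For the update time, the bound $\big(R(n)^{O(L)}+(R(n)/\epsilon)^{O(L)}\big)\cdot\big(T(n)+P(dS(n)\log(wW))/S(n)\big)$ is estimated exactly as in the proof of \Cref{thm:worstCaseCutSparsifierMainResult}: with $T(n),R(n)=2^{O(\sqrt{\log n})}\polylog(n)$ and $S(n)=\Otil(n)$, and using $k=O(\sqrt{\log n})$ to keep each $2^{O(\cdot)}$ and $(\log n)^{O(k)}$ overhead sub-polynomial, one obtains worst-case update time $\Otil(n^{1/k})$. (The amortized \Cref{thm:SpannerMainResult} follows the same way, but via the amortized reduction \Cref{thm:amortized:fully_dynamic_weighted} applied to the decremental cut-sparsifier-hence-spanner algorithm of \Cref{thm:mainDecrCutSparsifier}, which uses no level blow-up.)

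I expect the only real content to be the first paragraph --- verifying that a cut sparsifier of a near-uniform-degree $\phi$-expander genuinely is a $\polylog(n)$-spanner, and that this is preserved at ``bad'' stages (handled for us, since the reduction only feeds the base algorithm instances meeting the sub-expander promise) and after adjoining the pruned edges. After that the argument is a mechanical re-run of the cut-sparsifier proof with $\H$ reinterpreted; the single subtlety is book-keeping of the accuracy parameter --- for spanners it is the \emph{logarithm} of the stretch, so it is additive across the $L$ levels and produces $\log^{O(k)}n$ rather than a polynomial blow-up, whereas the time overhead is \emph{exponential} in $L=\Theta(k)$, which is exactly why $k$ is capped at $O(\sqrt{\log n})$.
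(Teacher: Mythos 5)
Your proposal is correct and takes essentially the same route as the paper: it feeds the decremental cut-sparsifier algorithm of \Cref{lma:WorstCaseBlackBoxCutSparsifier} into the worst-case blackbox reduction \Cref{thm:reduction_worst_case} with $N=n$, $d=n^{1/k}$, $L=\lceil k\rceil$, $\phi=1/\log^4 n$, invoking \Cref{lma:spannerIsAGraphProblem} for the framework properties and accounting the stretch, size, preprocessing and update time exactly as in the proof of \Cref{thm:worstCaseCutSparsifierMainResult}. The only cosmetic difference is that you re-derive inline the fact that a cut sparsifier of a $\phi$-expander has conductance $\tilde{\Omega}(\phi)$ and hence polylogarithmic hop-diameter (and handle the pruned edges via \eqref{con:union}), which the paper packages as \Cref{lma:cutSparsifierIsSpannerOnExpander}.
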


\begin{theorem}\label{thm:SpectralMainResult}
There exists an algorithm that maintains an $O(\polylog n)$-approximate spectral sparsifier $\tilde{G}$ on any dynamic graph $G$. The algorithm  works against an adaptive adversary with high probability, has $O(m)$ initialization time, $O(\polylog n)$ amortized update time and contains $\tilde{O}(n \log W)$ edges.
\end{theorem}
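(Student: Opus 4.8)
\textbf{Proof proposal for \Cref{thm:SpectralMainResult}.}
The plan is to mirror the strategy used for the spanner and cut-sparsifier results: reduce the fully dynamic problem on general weighted graphs to a decremental problem on almost-uniform-degree $\phi$-expanders via the black box of \Cref{thm:amortized:fully_dynamic_weighted}, and observe that on such an expander the cut sparsifier already produced by \Cref{thm:mainDecrCutSparsifier} is, for free, a spectral sparsifier of comparable quality. First I would invoke the convenient lemma (\Cref{cor: cut on expander is spectral}) which states that an $\alpha$-approximate cut sparsifier of a $\phi$-expander is a $\poly(\alpha,1/\phi)$-approximate spectral sparsifier; intuitively this holds because a cut sparsifier of a $\phi$-expander is itself a $\tilde\Omega(\phi)$-expander, so the Laplacian quadratic forms of $G$ and of the sparsifier $H$ are both sandwiched between their respective volume/degree terms, which agree up to the cut-approximation factor, and the expansion controls the relevant eigenvalue ratios.

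Next I would assemble the decremental building block. Run the algorithm of \Cref{thm:mainDecrCutSparsifier} (equivalently \Cref{thm:generalStatementCutSparsifier} with $k=\log n$) with $\phi = 1/\log^4 n$: whenever the input is a near-uniform-degree $\phi$-expander it maintains an $O(\log n)$-approximate cut sparsifier $\tilde H$ of size $\tilde{O}(|V(H)|(\Delta_{\max}/(\phi\Delta_{\min}))^2)$ in $\tilde{O}((\Delta_{\max}/(\Delta_{\min}\phi))^3)$ worst-case update time, correctly with high probability, and this algorithm is adaptive. By \Cref{def:amortized:decremental_algorithm} the degree ratio $\Delta_{\max}/\Delta_{\min}$ of the graph handed to the decremental algorithm is $\polylog(n)$, so composing with \Cref{cor: cut on expander is spectral} yields a decremental algorithm $\mathcal A$ that, under the same promises, maintains a $\polylog(n)$-approximate spectral sparsifier of size $\tilde{O}(n)$ in $\polylog(n)$ amortized (indeed worst-case) update time, still against an adaptive adversary.

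Then I would check that spectral sparsifiers form a graph problem in the sense required by the amortized black box, i.e.\ that $\H(\epsilon,G) = \{\,e^\epsilon\text{-approximate spectral sparsifiers of }G\,\}$ satisfies the perturbation \eqref{con:identity}, union \eqref{con:union}, and contraction \eqref{con:contraction} properties; this is exactly \Cref{lem:spectralSparsifierIsAGraphProblem}. Properties \eqref{con:identity} and \eqref{con:union} are immediate because the Laplacian quadratic form is linear in the edge weights and monotone under nonnegative scaling, and \eqref{con:contraction} holds because contracting a vertex subset corresponds to restricting the quadratic form to the subspace on which those vertices share a coordinate, and a two-sided spectral inequality is preserved under restriction to a subspace. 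With these in hand, I would apply \Cref{thm:amortized:fully_dynamic_weighted} to $\mathcal A$ with $\phi = 1/\log^4 n$; since $\mathcal A$ is adaptive, so is the resulting fully dynamic algorithm $\mathcal B$, and plugging the quoted bounds gives $O(m)$ preprocessing, $\polylog(n)$ amortized update time, output size $\tilde{O}(n \log W)$, and a $\polylog(n)$ spectral approximation, which is the claim.

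The main obstacle I anticipate is purely bookkeeping of the accumulated polylogarithmic and $1/\phi^c$ factors: the cut-to-spectral conversion loses a $\poly(1/\phi)$ factor, the uniform-degree expander decomposition inside the black box only guarantees degree ratio $\poly(1/\phi)$, and both the approximation \emph{and} the size of the decremental cut sparsifier scale polynomially in $\Delta_{\max}/(\phi\Delta_{\min})$; one must verify that with $\phi = 1/\log^4 n$ all of these remain $\polylog(n)$ simultaneously. A secondary subtlety worth being explicit about is that \Cref{cor: cut on expander is spectral} needs the cut-approximation to hold for the \emph{current} graph, which \Cref{thm:mainDecrCutSparsifier} only promises while the graph is a near-uniform-degree $\phi$-expander — but this is precisely the regime in which \Cref{def:amortized:decremental_algorithm} requires $\mathcal A$ to be correct, so the guarantees line up exactly.
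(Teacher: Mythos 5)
Your proposal is correct and follows essentially the same route as the paper: verify that spectral sparsifiers satisfy the required problem properties (\Cref{lem:spectralSparsifierIsAGraphProblem}), use \Cref{cor: cut on expander is spectral} to observe that the decremental cut-sparsifier algorithm on near-uniform-degree expanders already yields a $\polylog(n)$-approximate spectral sparsifier there, and then apply the amortized black-box reduction of \Cref{thm:amortized:fully_dynamic_weighted} with $\phi=1/\log^4 n$, $k=\log n$, exactly as in the proof of \Cref{thm:amortizedCutSparsifierMainResult}. The bookkeeping concerns you flag (the $\poly(1/\phi)$ losses in the cut-to-spectral conversion and in the degree ratio) are the same ones the paper absorbs into the $\polylog(n)$ approximation factor.
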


\begin{theorem}
\label{thm:SpectralMainResult_worstcase}
There exists an algorithm that maintains, for any given parameter $k\ge 1$, an $\log^{O(k)}n$-approximate spanner $\tilde{G}$ on any dynamic graph $G$. The algorithm  works against an adaptive adversary with high probability, has $O(m)$ initialization time, $\tilde{O}(n^{1/k})$ worst-case update time and contains  $\tilde{O}(n \log W)$ edges.
\end{theorem}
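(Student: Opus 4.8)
The plan is to reduce \Cref{thm:SpectralMainResult_worstcase} (where ``spanner'' should read ``spectral sparsifier'') to the decremental cut-sparsifier machinery, exactly in the spirit of the section's opening sentence: a cut sparsifier of an expander is automatically a spectral sparsifier. Four ingredients already in hand suffice: (i) \Cref{cor: cut on expander is spectral}, turning an $\alpha$-cut sparsifier of a $\phi$-expander into an $O(\alpha\cdot\polylog n)$-spectral sparsifier for $\phi=1/\log^4 n$; (ii) \Cref{lma:WorstCaseBlackBoxCutSparsifier}, the decremental cut-sparsifier algorithm on pruned $\phi$-sub-expanders with $2^{O(\sqrt{\log n})}$-per-step quality; (iii) \Cref{lem:spectralSparsifierIsAGraphProblem}, that the spectral sparsifier problem $\H_{\mathrm{spectral}}$ satisfies \property{}; and (iv) the worst-case black-box reduction \Cref{thm:reduction_worst_case}. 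The amortized counterpart \Cref{thm:SpectralMainResult} is obtained the same way, using \Cref{thm:amortized:fully_dynamic_weighted} and \Cref{thm:mainDecrCutSparsifier} instead.

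First I would assemble a decremental algorithm on pruned $\phi$-sub-expanders for $\H_{\mathrm{spectral}}(\epsilon_0)$ with $e^{\epsilon_0}=\polylog n$. Run \Cref{lma:WorstCaseBlackBoxCutSparsifier} with $\phi=1/\log^4 n$ and internal approximation parameter $\Theta(\log n)$, so it maintains an $O(\log n)$-cut sparsifier $\tilde H$ (with the pruned set $P$ included at unit weight) with worst-case update time and recourse $2^{O(\sqrt{\log n})}$, output size $\tilde O(|V(H)|+|P|)$, $O(m)$ preprocessing, and output a reweighted subgraph. At every stage at which the underlying graph $H$ is itself a $\phi$-expander of min-degree at least $\Delta_{\min}$ — precisely the stages at which a decremental-on-pruned-sub-expander algorithm must be correct — $\tilde H$ is an $O(\log n)$-cut sparsifier of $H$, so \Cref{cor: cut on expander is spectral} applied to $H$ makes it a $\polylog n$-spectral sparsifier of $H$. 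Hence this is a decremental algorithm on pruned $\phi$-sub-expanders for $\H_{\mathrm{spectral}}(\epsilon_0)$.

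Next I would plug this algorithm into \Cref{thm:reduction_worst_case}, which is applicable by \Cref{lem:spectralSparsifierIsAGraphProblem}, with $N=n$, $d=n^{1/k}$, $L=\lceil k\rceil$, $\phi=1/\log^4 n$. The output is a fully dynamic adaptive algorithm for $\H_{\mathrm{spectral}}(\epsilon_0 L)$ on general weighted graphs, and since $e^{\epsilon_0 L}=(\polylog n)^{O(k)}=\log^{O(k)} n$ it maintains a $\log^{O(k)} n$-spectral sparsifier. Substituting $R(n),T(n)=2^{O(\sqrt{\log n})}$, $S(n)=\tilde O(n)$, $P(m)=\tilde O(m)$ into the time bound of \Cref{thm:reduction_worst_case} yields, for $k$ in the admissible range (the same regime as in \Cref{thm:SpannerMainResult_worstcase}), worst-case update time $\tilde O(n^{1/k})$, preprocessing $\tilde O(m\log W)$ (spectral sparsifiers being reweighted subgraphs), and size $\tilde O(n\log W)$ (using that the sampling probability $\rho$ satisfies $1/\rho<n$, so output weights stay polynomially bounded and $\log(wW)=\tilde O(\log W)$). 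This is verbatim the computation behind \Cref{thm:SpannerMainResult_worstcase}, with \Cref{cor: cut on expander is spectral} replacing ``a cut sparsifier of an expander is a low-stretch spanner''.

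The step I expect to require the most care is the second one: verifying that the cut-to-spectral conversion is legitimate in the pruned-sub-expander regime. \Cref{lma:WorstCaseBlackBoxCutSparsifier} only certifies a cut-sparsification bound for the full graph $H$ — not for $H\setminus P$, which need not be an expander — so \Cref{cor: cut on expander is spectral} must be invoked on exactly that $H$, and one must check that the pruned edges, injected into $\tilde H$ with unit weight, cannot break the spectral upper bound (they cannot: adding edges only raises every Laplacian quadratic form, and those edges are present on the $H$ side as well). The remaining work is bookkeeping of the degradation factors — the $\polylog n$ loss from cut-to-spectral, the $O(\log n)$ base cut approximation, and the $L=k$-fold compounding through \Cref{thm:reduction_worst_case} — which together pin down both the admissible range of $k$ and the precise $\log^{O(k)} n$ appearing in the statement.
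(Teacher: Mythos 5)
Your proposal is correct and follows essentially the same route as the paper: establish that spectral sparsifiers satisfy the required problem properties (\Cref{lem:spectralSparsifierIsAGraphProblem}), use \Cref{cor: cut on expander is spectral} to promote the pruned-sub-expander cut-sparsifier algorithm of \Cref{lma:WorstCaseBlackBoxCutSparsifier} to a $\polylog(n)$-approximate spectral sparsifier on expanders, and feed it into the worst-case black-box reduction \Cref{thm:reduction_worst_case} with $d=n^{1/k}$, $L=\lceil k\rceil$, $\phi=1/\log^4 n$, compounding to $\log^{O(k)}n$ approximation. The only (immaterial) difference is that you instantiate the internal cut approximation as $\Theta(\log n)$, mirroring the proof of \Cref{thm:SpannerMainResult_worstcase}, whereas the paper follows the proof of \Cref{thm:worstCaseCutSparsifierMainResult} with internal approximation $k^2$; both give the stated bounds.
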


The main idea of all the above results follows from the observation that any cut sparsifier of an expander is both a spanner and a spectral sparsifier of that expander.  

As the approximation ratio of \Cref{thm:SpectralMainResult,thm:SpectralMainResult_worstcase} is quite high, in \Cref{sec:spectral:query}, we also provide an additional result for spectral sparsifier with good approximation ratio.
The data structure in \Cref{sec:spectral:query} maintains a dynamic data structure with polylogarithmic amortized update which allows us to query for a $(1+\epsilon)$-approximate spectral sparsifier 
using $\tilde{O}(n / \epsilon^3)$ query time and this algorithm works against an adaptive adversary:

\begin{restatable}{theorem}{thmSpectralQuery}
\label{thm:query:spectral_sparsifier}
There exists a fully dynamic algorithm that maintains for any weighted graph 
an $e^\epsilon$-approximate spectral sparsifier against an \emph{adaptive} adversary. The algorithm's pre-processing time is bounded by $O(m)$, amortized update time is $
O\left(
	\log^{19} n
\right),
$
and query time is $O(n \log^{25} (n) \epsilon^{-3} \log W)$,
where $W$ is the ratio between the largest and the smallest edge weight.
The query operation returns an $e^\epsilon$-approximate spectral sparsifier of $G$.
\end{restatable}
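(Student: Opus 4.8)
\textbf{Proof plan for \Cref{thm:query:spectral_sparsifier}.}
The plan is to combine the amortized, uniform-degree black-box reduction of \Cref{thm:amortized:fully_dynamic_weighted} with a simple decremental \emph{query}-algorithm on near-uniform-degree expanders that, rather than maintaining a sparsifier explicitly, resamples one from scratch each time it is queried. Since spectral sparsifiers satisfy properties \eqref{con:identity}, \eqref{con:union}, and \eqref{con:contraction} (this is \Cref{lem:spectralSparsifierIsAGraphProblem}), and since the reduction of \Cref{thm:amortized:fully_dynamic_weighted} preserves the ``query-algorithm'' and ``adaptive'' attributes, it suffices to exhibit an adaptive decremental query-algorithm $\A$ in the sense of \Cref{def:amortized:decremental_algorithm} that maintains nothing between updates and, on a \textsc{Query}, outputs an $e^{\epsilon}$-spectral sparsifier of the current unweighted near-uniform-degree $\phi$-expander $H$.

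The algorithm $\A$ is as follows. Under edge deletions it merely maintains the adjacency lists and the degrees of $H$, which costs $O(1)$ time per deleted edge, so $T(n)=\tilde O(1)$ and $P(m)=O(m)$. On a \textsc{Query}, it sets a sampling probability $\rho=\Theta\!\left(\epsilon^{-2}\phi^{-2}\Delta^{-1}\log n\right)$, where $\Delta=\deltamin$ is the current minimum degree, samples every edge of $H$ independently with probability $\rho$, assigns each sampled edge weight $1/\rho$, and returns the resulting weighted graph $\tilde H$. That $\tilde H$ is w.h.p.\ an $e^{\epsilon}$-spectral sparsifier follows from the standard spectral sparsification result for expanders (cf.~\cite{SpielmanT11}): on a $\phi$-expander every edge $(u,v)$ has leverage score $O\!\left(\phi^{-2}(1/\deg(u)+1/\deg(v))\right)$, which on a near-uniform-degree expander is $O(\phi^{-2}/\Delta)$, so sampling with probability $\rho$ (a sufficiently large $\log n$-factor above the maximal leverage score scaled by $\epsilon^{-2}$) and reweighting preserves all quadratic forms up to $1\pm\epsilon$ with probability $1-n^{-\Theta(1)}$; the output has $\tilde O(n/(\epsilon^{2}\phi^{2}))$ edges, and by \Cref{thm:fast sampling} it can be produced in time $O(\rho\,\deltamax\log n)$ per vertex, i.e.\ $Q(n)=O\!\left(n\,\epsilon^{-2}\phi^{-2}(\deltamax/\deltamin)\log^{O(1)}n\right)=O(n\,\epsilon^{-2}\log^{O(1)}n)$ for the $\phi=\Theta(1/\polylog n)$ we will use.

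The key point for adaptivity is that the sparsifier is revealed to the adversary \emph{only} through \textsc{Query}, and each \textsc{Query} draws entirely fresh randomness. Fix one \textsc{Query}, issued after some adversarial update sequence; that sequence, and hence the current graph $H$, may depend on answers to \emph{earlier} queries, but conditioned on $H$ the sampling is independent of everything the adversary has seen, so the spectral approximation guarantee holds with probability $1-n^{-\Theta(1)}$. Taking a union bound over the at most $\poly(n)$ queries shows $\A$ is correct against an adaptive (indeed randomness-adaptive) adversary. Hence $\A$ meets \Cref{def:amortized:decremental_algorithm} with the stated $P,S,T,Q$.

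Finally, invoke \Cref{thm:amortized:fully_dynamic_weighted} with $\H$ the spectral-sparsifier problem and $\phi=\Theta(1/\log^{4}n)$. The properties required are given by \Cref{lem:spectralSparsifierIsAGraphProblem}. Plugging $T(n)=\tilde O(1)$, $P(m)=O(m)$, $S(n)=\tilde O(n/\epsilon^{2})$, $Q(n)=O(n\,\epsilon^{-2}\log^{O(1)}n)$ into the reduction yields a fully dynamic algorithm $\B$ on general weighted graphs with preprocessing $O(m)$, amortized update time $O(\phi^{-3}\log^{7}n\cdot T(n)+\tfrac{P(m)}{m}\phi^{-3}\log^{7}n)=O(\log^{19}n)$, and query time $O(Q(n\log^{3}n)\log W)$; the extra $\epsilon^{-1}$ factor (turning $\epsilon^{-2}$ into $\epsilon^{-3}$) and the additional $\log$-powers arise from the reduction's decomposition of the weighted graph into $O((1+\epsilon^{-1})\log W)$ weight classes, each handled at accuracy $\epsilon/2$. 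Rescaling $\epsilon$ by a constant absorbs the $e^{\epsilon/2}$ accumulation across the two applications of \eqref{con:identity}, giving an $e^{\epsilon}$-spectral sparsifier and the claimed query time $O(n\log^{25}(n)\,\epsilon^{-3}\log W)$. The main technical care goes into the spectral sparsification bound on near-uniform-degree expanders (bounding leverage scores by $O(\phi^{-2}/\Delta)$ and fixing the constant in $\rho$) and into making the per-query fresh-randomness union bound watertight; everything else is a direct instantiation of the amortized reduction.
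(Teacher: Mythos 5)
Your proposal is correct and follows essentially the same route as the paper: a decremental query-algorithm on near-uniform-degree expanders that draws a fresh sample at each query (giving adaptivity for free), plugged into the amortized uniform-degree black-box reduction (\Cref{thm:amortized:fully_dynamic_weighted}) with $\phi=\Theta(1/\log^4 n)$ and \Cref{lem:spectralSparsifierIsAGraphProblem}. The only difference is cosmetic: you justify the sampling rate via a leverage-score bound $O(\phi^{-2}/\Delta)$, whereas the paper invokes its \Cref{cor:ST-decomposition} (Spielman--Teng plus Cheeger) with a slightly larger $\phi^{-5}$-dependent probability to account for the degree drop allowed by \Cref{def:amortized:decremental_algorithm}; both yield the stated bounds.
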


\subsection{Adaptive Spanners}
\label{subsec:SpannerViaBlackBoxFromCutSparsifier}

Let us begin by proving that a cut-sparsifier on a $\phi$-expander is a spanner.

\begin{lemma}\label{lma:cutSparsifierIsSpannerOnExpander}
Let $G=(V,E)$ be an unweighted $\phi$-expander with $m$ edges.
Let $H$ is an $\alpha$-approximate cut-sparsifer of $G$ which is also a subgraph of $G$. Then, any
two vertices $u$ and $v$, there is a path in $H$ containing at
most $L=O(\frac{\alpha\log\alpha m}{\phi})$ edges connecting $u$
and $v$. In particular, let $H'$ be the unweighted graph with the same
edge set as $H$. Then, $H'$ is an $L$-spanner of $G$.
\end{lemma}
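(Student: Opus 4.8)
## Proof Plan for Lemma (Cut-Sparsifier on Expander is a Spanner)

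The plan is to exploit the well-known fact that a graph with large conductance has small diameter, combined with the observation that an $\alpha$-approximate cut-sparsifier of a $\phi$-expander is itself an expander with conductance degraded by roughly a factor of $\alpha$. First I would establish the conductance lower bound for $H$: fix any nonempty $S \subsetneq V(H)$ with $\vol_H(S) \le \vol_H(V \setminus S)$. Because $H$ is a subgraph of $G$ and $G$ has all degrees bounded (it is unweighted with $m$ edges, but more to the point we only need ratios), and because $H$ is an $\alpha$-cut sparsifier, we have $\delta_G(S) \le \delta_H(S) \cdot \alpha$ after reweighting — but here we must be slightly careful since $H$ carries weights. The clean way: let $w_H$ be the sparsifier weights, so $\delta_G(S) \le w_H(\delta(S)) \le \alpha \cdot \delta_G(S)$ for all $S$, and in particular the \emph{weighted} graph $H$ has $\phi_H(S) = w_H(\delta(S)) / \min\{\vol_H(S), \vol_H(\bar S)\} \ge \delta_G(S)/(\alpha \cdot \min\{\vol_G(S),\vol_G(\bar S)\}) \ge \phi/\alpha$, using that $\vol_H(S) \le \alpha \vol_G(S)$ (sum the cut bound over singletons, or note each vertex's weighted degree in $H$ is within $\alpha$ of its degree in $G$). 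Hence $H$ as a weighted graph is a $(\phi/\alpha)$-expander.

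Next I would convert the conductance bound into a hop-diameter bound on the \emph{unweighted} graph $H'$ with the same edge set. The standard argument: in the weighted graph $H$, a lazy random walk mixes in $O(\log(\vol_H(V))/\phi_H^2) = O(\alpha^2 \log(\alpha m)/\phi^2)$ steps, which bounds the weighted diameter; but we want the \emph{combinatorial} (hop) distance in $H'$. A cleaner route avoiding the quadratic loss: use the ball-growing / isoperimetric argument directly. For any vertex $u$, let $B_i$ be the set of vertices within hop-distance $i$ in $H'$. As long as $\vol_H(B_i) \le \vol_H(V)/2$, the conductance bound gives $w_H(\delta(B_i)) \ge (\phi/\alpha)\vol_H(B_i)$, and every edge counted in $\delta(B_i)$ (with its weight, each weight being $1/\rho$-type but uniformly $\Theta(1)$ relative to the max — actually each sparsifier edge has the same weight in our construction, so unweighted and weighted cuts of $H$ differ by a single global factor) reaches a new vertex, so $\vol_H(B_{i+1}) \ge (1 + \phi/\alpha)\vol_H(B_i)$. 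Thus after $O(\frac{\alpha}{\phi}\log\vol_H(V)) = O(\frac{\alpha \log(\alpha m)}{\phi})$ hops, $B_i$ contains more than half the volume; applying the same to $v$ and intersecting gives a path of $L = O(\frac{\alpha\log(\alpha m)}{\phi})$ hops in $H'$ between $u$ and $v$. Since $\dist_{H'}(u,v) \le L$ for every pair while $\dist_G(u,v) \ge 1$ for $u \ne v$ (and $=0$ otherwise), $H'$ is an $L$-spanner of $G$.

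The main obstacle I anticipate is keeping the weighted-versus-unweighted bookkeeping honest when invoking the conductance-to-ball-growing argument: the cut-sparsifier $H$ is weighted, the spanner $H'$ is not, and the ball-growing step needs a lower bound on the \emph{number} of edges leaving $B_i$, not their total weight. This is fine in our setting because in the cut-sparsifier construction of \Cref{thm:mainDecrCutSparsifier} every edge of $\tilde H$ receives the \emph{same} weight $1/\rho$, so $w_H(\delta(S)) = |E_{H'}(S,\bar S)|/\rho$ and the two notions agree up to the global factor $1/\rho$ which cancels in every ratio. I would state this uniform-weight property explicitly (or, to keep the lemma self-contained, phrase the hypothesis so that $H$ is unweighted and $H'=H$, which is in fact how the lemma is used). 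With that observation the computation is routine: a single paragraph establishing $\Phi_{H'} = \Omega(\phi/\alpha)$ from the cut approximation, and a second invoking the standard fact that a graph of conductance $\psi$ has hop-diameter $O(\psi^{-1}\log(\text{vol}))$, yielding $L = O(\alpha \log(\alpha m)/\phi)$.
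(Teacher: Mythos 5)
Your proposal is correct and takes essentially the same route as the paper: both first note that $H$ is a $(\phi/\alpha)$-expander and then run a ball-growing argument over hop-distance balls, using the volume bound $\vol_H(V)\le \alpha\,\vol_G(V)\le 2\alpha m$, to bound the unweighted diameter of $H'$ by $O(\alpha\log(\alpha m)/\phi)$. Your uniform-weight caveat is unnecessary: the growth step $\vol_H(B_{i+1})\ge \vol_H(B_i)+w_H(\delta(B_i))$ holds for arbitrary edge weights, since each crossing edge's weight is counted in the weighted degree of its endpoint in the new layer, so the lemma as stated for a weighted sparsifier is already covered.
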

\begin{proof}
First, observe that $H$ a $(\phi/\alpha)$-expander. Next, suppose for contradiction that there are
two vertices $u$ and $v$ with \emph{unweighted} distance in $H$
greater than $L=\frac{100\alpha\log m}{\phi}$. For each $0\le i\le L/2$,
let $B_{i}$ contain all vertices reachable from $u$ using at most $i$
edges. That is, $B_{i}$ is a ball around $u$ of \emph{unweighted} radius $i$. Similarly, let $B'_{i}$ be a ball around $v$ of unweighted
radius $i$. Note that $B_{L/2}$ and $B'_{L/2}$ must be vertex disjoint.
So we assume w.l.o.g.~that $\vol_{H}(B_{L/2})\le\vol_{H}(B'_{L/2})$.
Now, we claim that there is $i\le L/2$ such that $B_{i}$ is a $\phi/\alpha$-sparse
cut in $H$. Otherwise, we have that 
\[
\vol_{H}(B_{L/2})\ge(1+\phi/\alpha)^{L/2}\cdot\vol_{H}(B_{0})\ge e^{10\log\alpha m} \cdot 1/\alpha \ge(\alpha m)^{5}
\]
This is a contradiction because $H$ is a $\alpha$-approximate
cut-sparsifer of $G$, and so $\vol_{H}(V)<\alpha\vol_{G}(V)\le2\alpha m$.

The above argument shows that $H'$ has diameter at most $L$. As $H'$ is a subgraph of $G$, $H'$ must be an $L$-spanner of $G$.
\end{proof}

Again, we also need in this section hat the problem of maintaining a spanner fits the framework from \Cref{define_properties}.
This is proven in \Cref{sec:properties}.

\begin{restatable}{lemma}{spannerIsAGraphProblem}\label{lma:spannerIsAGraphProblem}
Let $\H(G, \epsilon)$ be the set of all valid $e^\epsilon$-approximate spanners of $G$.
So $H \in \H(G, \epsilon)$, when $\dist_G(u,v) \le \dist_H(u,v) \le e^\epsilon \dist_G(u,v)$.
Then $\H$ satisfies properties \property{} from \Cref{define_properties}.
\end{restatable}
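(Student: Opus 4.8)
The plan is to verify that the five conditions \eqref{con:identity}--\eqref{con:nested} each hold when $\H(G,\epsilon)$ is the set of all valid $e^\epsilon$-approximate spanners of $G$, i.e.\ subgraphs $H\subseteq G$ with $\dist_G(u,v)\le\dist_H(u,v)\le e^\epsilon\dist_G(u,v)$ for all $u,v$. I would handle them one at a time, keeping in mind that stretch composes multiplicatively while the parameter composes additively through $e^\epsilon$.

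First, for the perturbation property \eqref{con:identity}: if $G'$ is $G$ with every edge scaled by a factor in $[1,e^\epsilon]$, then for any $u,v$ every path has its length scaled by at most $e^\epsilon$, so $\dist_G(u,v)\le\dist_{G'}(u,v)\le e^\epsilon\dist_G(u,v)$, giving $G'\in\H(G,\epsilon)$; the symmetric statement $e^\epsilon\cdot G\in\H(G',\epsilon)$ follows the same way since scaling $G'$ up by $e^\epsilon$ dominates $G$ and is within $e^\epsilon$ of it. Next, for the union property \eqref{con:union}: if $H_i\in\H(G_i,\epsilon)$ and $s_i\ge 0$, set $G=\bigcup_i s_i G_i$ and $H=\bigcup_i s_i H_i$. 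A shortest path in $G$ uses edges from various $G_i$; each such edge $(x,y)$ from $s_iG_i$ can be replaced by a path in $s_iH_i$ of length at most $e^\epsilon$ times its weight (because $H_i$ is an $e^\epsilon$-spanner of $G_i$), and these replacement edges live in $H$, so $\dist_H\le e^\epsilon\dist_G$; the reverse inequality is automatic since $H\subseteq G$. For contraction \eqref{con:contraction}: contracting $W\subseteq V$ in both $G$ and $H$ only shortens distances in both, and distances between the contracted supernode and other vertices, or between two surviving vertices, can only be estimated better in $H'$ than in $H$ since every $H$-path projects to an $H'$-path of no greater length; one checks that the stretch bound is preserved because any $G'$-path lifts to a $G$-path through uncontracted representatives, whose $H$-approximation projects back down.

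For the transition property \eqref{con:increment}: given $H_1,H_2\in\H(G,\epsilon)$ and $H\subseteq H_1$, we must show $(e^\delta-1)H\cup H_2\in\H(G,\epsilon+\delta)$. Here one reads ``$(e^\delta-1)H$'' as the edges of $H$ reweighted by $e^\delta-1$ unioned with $H_2$; since $H_2$ alone is already an $e^\epsilon$-spanner of $G$, adding the extra (positively-weighted) copies of $H$ edges can only decrease distances, so the upper bound $e^{\epsilon+\delta}\dist_G$ certainly holds, and for the lower bound one must check that the added edges are not ``too short'' — but since $H\subseteq H_1$ and $H_1\subseteq G$, each edge $(x,y)$ of $H$ has weight at least $\dist_G(x,y)/e^\epsilon\ge\dist_G(x,y)/e^{\epsilon}$, and scaling by $(e^\delta-1)$ keeps a path using such edges from being shorter than $\dist_G$ up to the $e^{\epsilon+\delta}$ slack; this is the step I'd be most careful about, as it is the only one where the lower bound on distances (not just the upper bound) is genuinely at stake. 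Finally, transitivity \eqref{con:nested}: if $H\in\H(G,\epsilon)$ and $H'\in\H(H,\delta)$, then $\dist_G\le\dist_H\le\dist_{H'}$ and $\dist_{H'}\le e^\delta\dist_H\le e^\delta e^\epsilon\dist_G=e^{\delta+\epsilon}\dist_G$, so $H'\in\H(G,\delta+\epsilon)$ — this is immediate.

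The main obstacle I anticipate is getting the reweighting conventions in \eqref{con:increment} (and to a lesser extent \eqref{con:identity}) exactly right: for spanners the natural object is a \emph{subgraph}, and the scaling factor $(e^\delta-1)<1$ could in principle create edges shorter than they "should" be, threatening the lower bound $\dist_G(u,v)\le\dist_H(u,v)$. The resolution is that $H_2$ already certifies all lower bounds (every path in $H_2$, which is a subgraph of $G$, has length at least the $G$-distance), and the added rescaled copies of $H\subseteq H_1\subseteq G$ edges, even scaled down, still correspond to genuine $G$-edges whose rescaled length stays within the $e^{\epsilon+\delta}$ budget of the true distance — one spells this out by a short path-surgery argument. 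Everything else reduces to the observation that stretch is multiplicative under composition and that taking subgraphs/unions/contractions only ever shortens distances.
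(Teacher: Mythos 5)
Your verification of \eqref{con:identity}, \eqref{con:union}, \eqref{con:contraction} and \eqref{con:nested} follows essentially the same route as the paper (uniform scaling preserves stretch, path/segment decomposition for the union, monotonicity of distances under contraction and edge removal, multiplicativity of stretch for transitivity), and those parts are fine. The genuine problem is your treatment of the transition property \eqref{con:increment}. You correctly identify that the lower bound $\dist_G(u,v)\le \dist_{(e^{\delta}-1)H\cup H_2}(u,v)$ is the step at stake, but your resolution --- that each edge of $H$ is a genuine $G$-edge whose rescaled length ``stays within the $e^{\epsilon+\delta}$ budget of the true distance'' --- does not work: the slack $e^{\epsilon+\delta}$ appears only in the upper bound, whereas the lower bound in the definition you are using has no slack at all. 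If $\delta<\ln 2$ then $e^{\delta}-1<1$, so an edge $(x,y)\in H$ of weight $w$ (note its weight is at least $\dist_G(x,y)$, not $\dist_G(x,y)/e^{\epsilon}$, since it is literally a $G$-edge) contributes an edge of length $(e^{\delta}-1)w<w$; already for a graph $G$ consisting of a single edge $(u,v)$ of weight $w$, with $H_1=H_2=G$ and $H=H_1$, the graph $(e^{\delta}-1)H\cup H_2$ has $\dist(u,v)=(e^{\delta}-1)w<\dist_G(u,v)$. So no path-surgery argument can certify the lower bound under the literal length-scaling interpretation; that step of your proposal fails as written.

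What the paper does at this point is not to engage the reweighting at all: it establishes the unscaled statement $H\cup H_2\in\H(G,\epsilon)$, getting the upper bound from $\dist_{H\cup H_2}(s,t)\le\dist_{H_2}(s,t)\le e^{\epsilon}\dist_G(s,t)$ and the lower bound from the subgraph relation $H\cup H_2\subseteq H_1\cup H_2\subseteq G$ (removing edges can only increase distances), and treats the factor $e^{\delta}-1$ as immaterial for spanners, since for this problem only the edge set of the output matters and any subgraph of $G$ with the original weights automatically satisfies the lower bound. To repair your write-up you should do the same: either adopt that convention explicitly (check the spanner condition on the underlying subgraph, disregarding the scalar on $H$), or prove the property for $H\cup H_2$ as above rather than trying to absorb a scalar smaller than one into a one-sided spanner definition.
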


Using these two lemmas, it is now straight-forward to obtain an algorithm that maintains a spanner on general graphs.

\paragraph{Proof of \Cref{thm:SpannerMainResult}}
\begin{proof}
We can use the algorithm $\mathcal{A}$ from \Cref{thm:generalStatementCutSparsifier} for any approximation parameter $k \geq 1$, which gives an algorithm that maintains an $O(k \cdot \log^{5} n)$-spanner on a $(1/\log^4 n)$-expander with amortized update time $\tilde{O}(n^{1/k})$ of size $\tilde{O}(n)$ by \Cref{lma:cutSparsifierIsSpannerOnExpander}, where we assume that $\Delta_{min} = \Tilde{\Theta}(\Delta_{max})$.

Since,  by \Cref{lma:spannerIsAGraphProblem},  $\alpha$-spanner fits the framework in the black box reduction~(\Cref{thm:amortized:fully_dynamic_weighted}), we can apply the algorithm $\mathcal{A}$ which is then run on a graph where vertex degrees are at all times where a cut-sparsifier is requested is in $[c_1 \cdot \Delta\phi, c_2 \cdot \Delta/\phi]$ for some constants $c_1, c_2$ and $\Delta$). Thus, the preprocessing time of the final algorithm is $O(m)$, and the amortized update time $\tilde{O}(n^{1/k})$ and the size of the output graph is $\tilde{O}(n \log W)$ while the approximation becomes $k \cdot \log^{O(1)} n$. Setting $k = \log n$, we obtain our final result.
\end{proof}

\paragraph{Proof of \Cref{thm:SpannerMainResult_worstcase}}
\begin{proof}
Given approximation parameter $1 \leq k \leq \sqrt{\log n}$ in the theorem. Then, we first use the algorithm from \Cref{lma:WorstCaseBlackBoxCutSparsifier} with approximation $\log n$, which gives an algorithm that maintains a $(\polylog n)$-spanner on an $(1/\log^4 n)$-expander with worst-case update time $2^{O(\sqrt{\log n})}$ of size $\tilde{O}(n)$ by \Cref{lma:cutSparsifierIsSpannerOnExpander}.

Since $\alpha$-spanner fits the framework of the black box reduction~(c.f., \Cref{thm:reduction_worst_case}) by \Cref{lma:spannerIsAGraphProblem}, we can apply the algorithm above and set the parameters in the black box to  $\epsilon = k, N =n, d = n^{1/k}$ which implies $L = \lceil  k \rceil$ and choose again $\phi = 1/\log^4 n$.   
Hence, we obtain an algorithm to maintain a $(\polylog n)^{O(L)} = (\log n)^{O(k)}$-spanner for general weighted graphs on $n$ nodes, and aspect ratio $W$, with preprocessing time $\tilde{O}(m \log W)$, size $\tilde{O}(n \log W)$, and worst-case update time $\tilde{O}(2^{O(\sqrt{\log n})} \cdot n^{1/k}) = \tilde{O}(n^{O(1/k)})$. Choosing $k$ by a constant factor larger then the input parameter $k$, we can decrease the running time to $\tilde{O}(n^{1/k})$ at the cost of a slightly increased approximation ratio.
\end{proof}

\subsection{Adaptive Spectral Sparsifiers}
\label{sec:adaptive_spectral}

\global\long\def\xvec{\vec{x}}
\global\long\def\xx{\vec{x}}
\global\long\def\Lhat{\widehat{L}}
\global\long\def\Gtil{\tilde{G}}

In this section, we show that our result on adaptive algorithms for
cut sparsifiers immediately an adaptive algorithm for spectral sparsifier:

First, we need the following well-known lemma (see e.g.~Lemma 6.7 of \cite{ChuzhoyGLNPS19}) which says that any two expanders with the same degree profile are approximate
spectral sparsifiers of each other. We only give the proof here for
completeness (our proof is almost identical to the one in \cite{ChuzhoyGLNPS19} except that we rename some notations).

Below, the degree of a vertex $v$ is the total weight of edges incident to $v$ (i.e. the weighted degree where we count the self loops as well). 
\begin{lemma}
\label{lem:spectral exact degree}Let $G$ and $\tilde{G}$ be two
graph such that with the same set of vertices such that $\deg_{G}(v)=\deg_{\Gtil}(v)$
for all vertices $v$. If both $G$ and $\tilde{G}$ are $\phi$-expanders.
Then, for any $\xvec\in\mathbb{R}^{n}$, we have $\frac{\phi^{2}}{4}\xvec L_{G}\xvec\le\xvec L_{\Gtil}\xvec\le\frac{4}{\phi^{2}}\xvec L_{G}\xvec$.
\end{lemma}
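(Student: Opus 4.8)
The plan is to use the standard Cheeger-type / spectral-gap comparison for expanders, routed through the normalized Laplacian. First I would recall that for a $\phi$-expander $G$ on $n$ vertices, the second-smallest eigenvalue $\lambda_2$ of the normalized Laplacian $\mathcal{L}_G = D_G^{-1/2} L_G D_G^{-1/2}$ satisfies $\lambda_2 \ge \phi^2/2$ by Cheeger's inequality, where $D_G$ is the diagonal degree matrix; and of course every eigenvalue of $\mathcal{L}_G$ is at most $2$. Since $\deg_G(v) = \deg_{\Gtil}(v)$ for every $v$, the two graphs share the same degree matrix $D := D_G = D_{\Gtil}$, so the normalization is the \emph{same} linear map for both. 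This common normalization is what makes the comparison go through cleanly, and it is the reason the hypothesis on matching degrees is essential.

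Next I would translate the eigenvalue bounds into a quadratic-form sandwich on the normalized Laplacians. For any vector $\yvec$ orthogonal to $D^{1/2}\mathbf{1}$ (the kernel direction of both $\mathcal{L}_G$ and $\mathcal{L}_{\Gtil}$), we have $\frac{\phi^2}{2}\,\yvec^\top \yvec \le \yvec^\top \mathcal{L}_G \yvec \le 2\,\yvec^\top\yvec$ and likewise for $\mathcal{L}_{\Gtil}$. Combining the lower bound for $G$ with the upper bound for $\Gtil$ (and vice versa) gives $\frac{\phi^2}{4}\,\yvec^\top \mathcal{L}_G \yvec \le \yvec^\top \mathcal{L}_{\Gtil}\yvec \le \frac{4}{\phi^2}\,\yvec^\top \mathcal{L}_G\yvec$ for all $\yvec \perp D^{1/2}\mathbf{1}$; and the inequality also holds trivially (as $0 \le 0$) in the kernel direction, hence for all $\yvec \in \mathbb{R}^n$ after decomposing into the kernel component plus its orthogonal complement. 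Finally I would substitute $\yvec = D^{1/2}\xvec$: since $\xvec^\top L_G \xvec = \yvec^\top \mathcal{L}_G \yvec$ and $\xvec^\top L_{\Gtil}\xvec = \yvec^\top \mathcal{L}_{\Gtil}\yvec$, the claimed bound $\frac{\phi^2}{4}\,\xvec^\top L_G \xvec \le \xvec^\top L_{\Gtil}\xvec \le \frac{4}{\phi^2}\,\xvec^\top L_G\xvec$ follows directly.

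The only genuine subtlety—and the step I expect to be the ``hard part,'' though it is really just a bookkeeping point—is making sure the self-loops are handled consistently: the lemma measures degree as the total incident weight \emph{including} self-loops, but self-loops contribute nothing to the Laplacian quadratic form $\xvec^\top L \xvec = \sum_{(u,v)} w_{uv}(x_u - x_v)^2$. So $\mathcal{L}_G$ and $\mathcal{L}_{\Gtil}$ are normalized by the \emph{same} $D$ (including loop weight), the conductance bound $\Phi_G \ge \phi$ is with respect to that same volume convention, and Cheeger's inequality applies verbatim; one just has to be careful that the ``$2$'' upper bound on the spectrum of a normalized Laplacian still holds when $D$ over-counts via self-loops, which it does since adding loop weight only shrinks the off-diagonal-induced eigenvalues. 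A clean way to sidestep any worry is to note that replacing each self-loop of weight $w$ at $v$ by adding $w$ to the diagonal of $D$ leaves $L$ unchanged, and the resulting $\mathcal{L} = D^{-1/2} L D^{-1/2}$ still has spectrum in $[0,2]$ because $I - \mathcal{L} = D^{-1/2} A D^{-1/2}$ (with $A$ the genuine adjacency matrix, loop-free) has spectral radius at most $1$ — this is immediate from $\|D^{-1/2} A D^{-1/2}\|$ being bounded by the fact that the row sums of $D^{-1} A$ are at most $1$. With that observation in place the rest is the routine Cheeger computation sketched above.
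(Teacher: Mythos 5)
Your proposal is correct and follows essentially the same route as the paper's proof: pass to the normalized Laplacians with the common degree matrix $D$, use that their spectra lie in $[0,2]$ with second eigenvalue at least $\phi^2/2$ by Cheeger, handle the shared kernel direction $D^{1/2}\mathbf{1}$ via the orthogonal decomposition (cross terms vanish since it is in both kernels), and substitute $\yvec = D^{1/2}\xvec$ to return to the unnormalized forms. The self-loop bookkeeping you flag is consistent with the paper's degree convention and does not change the argument.
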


\begin{proof}
Let $(\deg)$ denote the degree vector where $(\deg)_{v}=\deg_{G}(v)=\deg_{\Gtil}(v)$
and $D$ be the diagonal matrix where $(D)_{vv}=\deg_{v}$. For any
graph $H$, the normalized Laplacian $\Lhat_{H}$ of a weighted graph
$H$ is defined as $D^{-1/2}L_{H}D^{-1/2}$, where $L_{H}$ is the
Laplacian of $H$.

Let $\Lhat_{G}$ and $\Lhat_{\Gtil}$ be normalized Laplacians of
$G$ and $\Gtil$, respectively. It is well-known that eigenvalues
of normalized Laplacians are between $0$ and $2$. Also, observe
that, for any graph $H$, $L_{H}\vec{1}=0$. Therefore, $\Lhat_{G}(\deg)^{1/2}=\Lhat_{\Gtil}(\deg)^{1/2}=0$.
That is, $(\deg)^{1/2}$ is in the kernel of both $\Lhat_{G}$ and
$\Lhat_{\Gtil}$.
Let $\lambda$ be the second smallest eigenvalue of $\Lhat_{G}$.
Then for any vector $\xx'\perp\left(\deg_{G}\right)^{\frac{1}{2}}$,
we have: 
\[
\frac{\lambda}{2}\xx'^{\top}\Lhat_{\Gtil}\xx'\le\lambda\lVert\xx'\rVert^{2}\le\xx'^{\top}\Lhat_{G}\xx',
\]
since the largest eigenvalue of $\Lhat_{\Gtil}$ is at most $2$. This
implies that, for every vector $\xx\in\mathbb{R}^{n}$, $\xx^{\top}\Lhat_{G}\xx\ge\frac{\lambda}{2}\xx^{\top}\Lhat_{\Gtil}\xx$
holds. Indeed, we can write 
\[
\xx=\xx'+c\left(\deg\right)^{\frac{1}{2}}
\]
where $\xx'\perp(\deg_{G})^{\frac{1}{2}}$ and $c$ is a scalar. This
gives: 
\begin{align*}
\xx^{\top}\Lhat_{G}\xx & =\left(\xx'+c\left(\deg\right)^{\frac{1}{2}}\right)^{\top}\Lhat_{G}\left(\xx'+c\left(\deg\right)^{\frac{1}{2}}\right)\\
 & =\xx'^{\top}\Lhat_{G}\xx'\\
 & \ge\frac{\lambda}{2}\cdot\xx'^{\top}\Lhat_{\Gtil}\xx'\\
 & =\frac{\lambda}{2}\cdot\left(\xx'+c\left(\deg\right)^{\frac{1}{2}}\right)^{\top}\Lhat_{D}\left(\xx'+c\left(\deg\right)^{\frac{1}{2}}\right)\\
 & =\frac{\lambda}{2}\cdot\xx^{\top}\Lhat_{\Gtil}\xx.
\end{align*}
By Cheeger's inequality, we have $\lambda\ge\Phi(G)^{2}/2\ge\phi^{2}/2$.
Therefore, for any vector $\xx\in\mathbb{R}^{n}$: 
\begin{equation}
\xx^{\top}\Lhat_{G}\xx\ge\frac{\phi^{2}}{4}\xx^{\top}\Lhat_{\Gtil}\xx\label{eq:cheeger}
\end{equation}
We can now conclude that, for any vector $\xx\in\mathbb{R}^{n}$:
\begin{align*}
\xx^{\top}L_{G}\xx & =\xx^{\top}D^{1/2}\Lhat_{G}D^{1/2}\xx\\
 & \ge\frac{\phi^{2}}{4}\xx^{\top}D^{1/2}\Lhat_{\Gtil}D^{1/2}\xx\\
 & =\frac{\phi^{2}}{4}\xx^{\top}D^{1/2}D^{-1/2}L_{\Gtil}D^{-1/2}D^{1/2}\xx\\
 & =\frac{\phi^{2}}{4}\xx^{\top}L_{\Gtil}\xx
\end{align*}
where the inequality follows by applying \Cref{eq:cheeger} to vector
$D^{1/2}\xvec$. The proof that $\xx^{\top}L_{\Gtil}\xx\ge\frac{\phi^{2}}{4}\xx^{\top}L_{G}\xx$
is symmetric.
\end{proof}

Next, we relax the condition in the above lemma and show that it holds even when the degree
profiles of both graphs are approximately the same:
\begin{lemma}
\label{lem:spectral approx degree}Let $\alpha>1$ be an approximation
parameter. Let $G$ and $\Gtil$ be two graphs with the same set of
vertices such that $\frac{1}{\alpha}\deg_{G}(v)\le\deg_{\Gtil}(v)\le\alpha\deg_{G}(v)$
for all vertices $v$. Suppose that both $G$ and $\Gtil$ are $\phi$-expanders.
Then, $\Gtil$ is a $\poly(\alpha/\phi)$-approximate spectral sparsifier
of $G$.
\end{lemma}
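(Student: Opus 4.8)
The plan is to reduce this to the exact-degree case already established in \Cref{lem:spectral exact degree}, by equalizing the degree profiles of $G$ and $\Gtil$ through the addition of self-loops — an operation that changes neither Laplacian quadratic form nor any cut, and degrades conductance only polynomially.

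First I would fix the target degree vector $d$ defined by $d(v) := \alpha\,\deg_G(v)$ for every vertex $v$. Since $\alpha>1$ we have $d(v)\ge\deg_G(v)$, and since $\deg_{\Gtil}(v)\le\alpha\deg_G(v)$ by hypothesis we also have $d(v)\ge\deg_{\Gtil}(v)$. Hence I can form $G'$ from $G$ (resp.\ $\Gtil'$ from $\Gtil$) by adding to each vertex $v$ a self-loop whose weight makes $\deg_{G'}(v)=\deg_{\Gtil'}(v)=d(v)$; vertices isolated in both graphs receive no loop. Two elementary observations drive the argument: (i) a self-loop at $v$ contributes the term $w(x_v-x_v)^2=0$ to the quadratic form, so $L_{G'}=L_G$ and $L_{\Gtil'}=L_{\Gtil}$; and (ii) a self-loop never crosses a cut, so $\delta_{G'}(S)=\delta_G(S)$ and $\delta_{\Gtil'}(S)=\delta_{\Gtil}(S)$ for all $S$.

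Next I would verify that $G'$ and $\Gtil'$ are still expanders with only a polynomially worse parameter. For any $\emptyset\ne S\subsetneq V$ we have $\vol_{G'}(S)=\sum_{v\in S}d(v)=\alpha\,\vol_G(S)$ and likewise $\vol_{\Gtil'}(S)=\alpha\,\vol_G(S)$. Therefore
\[
\phi_{G'}(S)=\frac{\delta_G(S)}{\alpha\min\{\vol_G(S),\vol_G(V\setminus S)\}}=\frac{\phi_G(S)}{\alpha}\ge\frac{\phi}{\alpha},
\]
so $G'$ is a $(\phi/\alpha)$-expander. For $\Gtil'$, using $\deg_{\Gtil}(v)\ge\deg_G(v)/\alpha$ we get $\vol_{\Gtil}(S)\ge\vol_G(S)/\alpha$, hence $\delta_{\Gtil}(S)\ge\phi\min\{\vol_{\Gtil}(S),\vol_{\Gtil}(V\setminus S)\}\ge(\phi/\alpha)\min\{\vol_G(S),\vol_G(V\setminus S)\}$, and dividing by $\vol_{\Gtil'}(S)=\alpha\,\vol_G(S)$ yields $\phi_{\Gtil'}(S)\ge\phi/\alpha^2$. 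In particular both $G'$ and $\Gtil'$ are $\phi'$-expanders with $\phi':=\phi/\alpha^2$, and by construction $\deg_{G'}(v)=\deg_{\Gtil'}(v)$ for all $v$.

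Finally I would apply \Cref{lem:spectral exact degree} to $G'$ and $\Gtil'$ with expansion parameter $\phi'$, obtaining $\tfrac{(\phi')^2}{4}\,\xvec^\top L_{G'}\xvec\le\xvec^\top L_{\Gtil'}\xvec\le\tfrac{4}{(\phi')^2}\,\xvec^\top L_{G'}\xvec$ for all $\xvec\in\R^n$. By observation (i) this is exactly $\tfrac{\phi^2}{4\alpha^4}\,\xvec^\top L_G\xvec\le\xvec^\top L_{\Gtil}\xvec\le\tfrac{4\alpha^4}{\phi^2}\,\xvec^\top L_G\xvec$, exhibiting $\Gtil$ as a $\poly(\alpha/\phi)$-approximate spectral sparsifier of $G$ (rescaling $\Gtil$ by $4\alpha^4/\phi^2$ converts this into the one-sided form $\xvec^\top L_G\xvec\le\xvec^\top L_{\Gtil}\xvec\le(\alpha/\phi)^{O(1)}\xvec^\top L_G\xvec$ if the normalized convention is preferred). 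The only step needing genuine care is the conductance bookkeeping for $\Gtil'$, where the two-sided degree distortion costs two separate factors of $\alpha$ — one from rescaling every volume up to $d$, one from the original mismatch between $\vol_{\Gtil}$ and $\vol_G$ — so that $\Gtil'$ is a $(\phi/\alpha^2)$-expander rather than a $(\phi/\alpha)$-expander; everything else is a direct invocation of \Cref{lem:spectral exact degree} plus the fact that self-loops are invisible to both the Laplacian and to cuts.
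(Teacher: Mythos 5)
Your proof is correct and follows essentially the same route as the paper: pad both graphs with self-loops to equalize the degree profiles (which leaves the Laplacians untouched and only costs a $\poly(\alpha)$ factor in conductance), then invoke \Cref{lem:spectral exact degree}. The only difference is cosmetic — the paper targets $\deg'(v)=\max\{\deg_G(v),\deg_{\Gtil}(v)\}$, making both padded graphs $(\phi/\alpha)$-expanders, while your target $\alpha\deg_G(v)$ gives $\phi/\alpha^2$ for $\Gtil'$; both yield the claimed $\poly(\alpha/\phi)$ approximation.
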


\begin{proof}
Let $\deg'(v)=\max\{\deg_{G}(v),\deg_{\Gtil}(v)\}$ for all $v$.
Let $G'$ and $\Gtil'$ be obtained from $G$ and $\Gtil$ by adding
self-loops so that the degree of each vertex $v$ is $\deg'(v)$ in
both $G$ and $\Gtil$. Observe that (1) both $G$ and $\Gtil$ are
still $(\phi/\alpha)$-expanders because the degree of each vertex
is increased by at most $\alpha$ factor, and (2) $L_{G}=L_{G'}$
and $L_{\Gtil}=L_{\Gtil'}$ because self-loops do not contribute to
any entry in the Laplacian matrices. By applying \Cref{lem:spectral exact degree}
to $G'$ and $\Gtil'$, we are done. 
\end{proof}
\begin{corollary}
\label{cor: cut on expander is spectral}
Let $G$ be a $\phi$-expander. If $\Gtil$ is $\gamma$-approximate
cut sparsifier of $G$, then $\tilde{G}$ is a $\poly(\gamma/\phi)$-approximate
spectral sparsifier of $G$.
\end{corollary}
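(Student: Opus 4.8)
The plan is to deduce this directly from \Cref{lem:spectral approx degree}, which already states that two $\phi$-expanders with approximately equal degree sequences are polynomially-good spectral sparsifiers of one another. So it suffices to check that the pair $(G,\tilde{G})$ satisfies both hypotheses of that lemma: first, that $\deg_{G}$ and $\deg_{\tilde{G}}$ agree up to a factor of $\gamma$; and second, that $\tilde{G}$ is itself an expander of conductance $\Omega(\phi/\gamma)$, so that $G$ and $\tilde{G}$ are \emph{both} $(\phi/\gamma)$-expanders. After deleting self-loops from both $G$ and $\tilde{G}$ (which changes neither cuts nor Laplacians), we may assume $\deg_{G}(v)=\delta_{G}(\{v\})$ and $\deg_{\tilde{G}}(v)=\delta_{\tilde{G}}(\{v\})$ for all $v$.

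For the degree condition I would apply the cut-sparsifier guarantee to the singleton cuts: for each $v$, taking $S=\{v\}$ gives $\deg_{\tilde{G}}(v)=\delta_{\tilde{G}}(\{v\})\in[\delta_{G}(\{v\}),\gamma\,\delta_{G}(\{v\})]=[\deg_{G}(v),\gamma\,\deg_{G}(v)]$, and in particular $\tfrac1\gamma\deg_{G}(v)\le\deg_{\tilde{G}}(v)\le\gamma\deg_{G}(v)$ for every vertex $v$.

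For the expansion of $\tilde{G}$, fix an arbitrary $\emptyset\neq S\subsetneq V$. On one side, $\delta_{\tilde{G}}(S)\ge\delta_{G}(S)\ge\phi\cdot\min\{\vol_{G}(S),\vol_{G}(V\setminus S)\}$ by the lower side of the cut guarantee together with $\Phi_{G}\ge\phi$. On the other side, $\vol_{\tilde{G}}(T)=\sum_{v\in T}\deg_{\tilde{G}}(v)\le\gamma\sum_{v\in T}\deg_{G}(v)=\gamma\,\vol_{G}(T)$ for every $T$, so $\min\{\vol_{\tilde{G}}(S),\vol_{\tilde{G}}(V\setminus S)\}\le\gamma\,\min\{\vol_{G}(S),\vol_{G}(V\setminus S)\}$. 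Dividing, $\phi_{\tilde{G}}(S)\ge\phi/\gamma$; since $S$ was arbitrary, $\Phi_{\tilde{G}}\ge\phi/\gamma$. As $\gamma\ge1$, both $G$ and $\tilde{G}$ are $(\phi/\gamma)$-expanders.

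Finally I would invoke \Cref{lem:spectral approx degree} with approximation parameter $\alpha=\gamma$ and with $\phi/\gamma$ playing the role of $\phi$: it yields that $\tilde{G}$ is a $\poly\!\big(\gamma/(\phi/\gamma)\big)=\poly(\gamma^{2}/\phi)=\poly(\gamma/\phi)$-approximate spectral sparsifier of $G$, which is precisely the claim. There is no genuine obstacle here; the only points requiring a little care are the self-loop / weighted-degree bookkeeping and the fact that the conductance estimate for $\tilde{G}$ must range over \emph{all} cuts rather than just singletons, both of which are routine.
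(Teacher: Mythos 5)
Your proposal is correct and follows essentially the same route as the paper: observe that the cut guarantee forces $\tfrac1\gamma\deg_G(v)\le\deg_{\tilde G}(v)\le\gamma\deg_G(v)$ and that $\tilde G$ is a $(\phi/\gamma)$-expander, then invoke \Cref{lem:spectral approx degree}. You merely spell out the singleton-cut and volume-comparison steps that the paper leaves implicit, so there is nothing to add.
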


\begin{proof}
Observe that $\Gtil$ must be a $(\phi/\gamma)$-expander. So both
$G$ and $\Gtil$ are $(\phi/\gamma)$-expanders and each vertex $v$
is such that $\frac{1}{\gamma}\deg_{G}(v)\le\deg_{\Gtil}(v)\le\gamma\deg_{G}(v)$.
The claim follows from applying \Cref{lem:spectral approx degree}.
\end{proof}

\begin{proof}[Proof of \Cref{thm:SpectralMainResult,thm:SpectralMainResult_worstcase}]
Spectral sparsifiers satisfy all the required properties to be maintained by our reduction to expanders, 
as proven in \Cref{lem:spectralSparsifierIsAGraphProblem}.
By \Cref{cor: cut on expander is spectral} the algorithms of \Cref{thm:generalStatementCutSparsifier} 
also maintains a $O(k/\phi)$-approximate spectral sparsifier. 
So the same proof of \Cref{thm:amortizedCutSparsifierMainResult} (see \Cref{subsec:resultsViaBlackBoxFromCutSparsifier})
also yields a $\polylog(n)$-approximate spectral sparsifier 
by choosing $k = \log n$, $\phi = 1/\log^4 n$. 
Thus we obtain \Cref{thm:SpectralMainResult}.

Likewise, \Cref{lma:WorstCaseBlackBoxCutSparsifier} also maintains a $O(k/\phi)$-approximate spectral sparsifier,
so the proof of \Cref{thm:worstCaseCutSparsifierMainResult} (see \Cref{subsec:resultsViaBlackBoxFromCutSparsifier})
also yields \Cref{thm:SpectralMainResult}.
The only difference is the approximation guarantee.
The approximation of the spectral sparsifier is
$$
O(k^2/\phi)^{O(L)} \le \log^{O(k)}n,
$$
where $O(k^2/\phi)$ comes from running a $k^2$-approximate cut sparsifier,
and $O(L)$ comes from \Cref{thm:reduction_worst_case}.
The upper bound uses $k \le \sqrt{\log n}/\log \log n$ and $L = \lceil k \rceil$.
\end{proof}

\section{Spectral Sparsifiers Against an Oblivious Adversary}
\label{secss}
In this section, we prove \Cref{cor:main spectral}. That is, we give the first dynamic algorithm with \emph{worst-case} update time for maintaining a spectral sparsifier. 
The precise statement is summarized as follows:

\begin{restatable}{theorem}{thmSpectralWorstCase}
\label{thm:worst_case:spectral}
Fix some $n \ge 1$.
There exists a fully dynamic algorithm 
that maintains $e^\delta$-spectral sparsifiers of \emph{weighted} (up to) $n$ node graphs
whose ratio of largest to smallest weight is $W$
against an \emph{oblivious} adversary.

The algorithm maintains a sparsifier of size $2^{O(\sqrt{\log n})} \cdot O(n \delta^{-2} \log W)$.
The worst-case update time is
$$2^{O(\log^{0.75} n)}+\delta^{-O(\log^{0.25}n)} \cdot O(\log W).$$
(So for any constant $\delta$ or $\delta=1/\polylog n$, 
the update time is bounded by $2^{O(\log^{0.75} n))} \cdot O(\log W).$)
The preprocessing time is $\tilde{O}(m\delta^{-2} \log(W))$.
\end{restatable}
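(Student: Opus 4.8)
\textbf{Proof plan for \Cref{thm:worst_case:spectral} (\Cref{cor:main spectral}).}
The plan is to invoke the worst-case black-box reduction from \Cref{thm:reduction_worst_case} with the graph problem $\H$ being ``$e^{\epsilon}$-spectral sparsifier'', which satisfies \property{} by \Cref{lem:spectralSparsifierIsAGraphProblem}. For this I need a decremental algorithm $\A$ for $\H(\epsilon)$ on pruned $\phi$-sub-expanders in the sense of \Cref{def:worst-case:decremental_algorithm}. The key observation is that such an $\A$ already exists essentially for free: on a uniform-degree $\phi$-expander, a simple edge-sampling scheme (each edge sampled with probability proportional to the reciprocal of the min degree, reweighted by the inverse probability) yields a $(1+\epsilon)$-spectral sparsifier against an \emph{oblivious} adversary, using a static Spielman--Srivastava-style argument and a Chernoff/matrix-concentration bound; and when edges are pruned out we simply add the pruned set $P$ to the output, which by properties \eqref{con:identity} and \eqref{con:union} keeps the output a spectral sparsifier (the argument is the one sketched in \Cref{sec:overview_worstcase}). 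The sampling probability is $\Theta(\log n / (\phi^{2}\Delta_{\min}))$, so the output size after preprocessing is $S(n) = n \cdot 2^{O(\sqrt{\log n})}$ once we account for the $\phi = 1/\log^{4}n$ and the degree ratio $O(1/\phi)$ guaranteed by \Cref{def:worst-case:decremental_algorithm}; this $\A$ has $O(m)$ preprocessing, $n^{o(1)}$ worst-case update time and recourse $R(n) = 2^{O(\sqrt{\log n})}$, and the weight ratio of its output relative to its input increases only by a polylog-in-$n$ factor, so $w = \polylog(n)$.

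Next I would feed this $\A$ into \Cref{thm:reduction_worst_case}, choosing the sparsification parameters $N = n$ and $d = 2^{\Theta(\sqrt{\log n})}$, so that $L = \lceil \log N/\log d\rceil = O(\sqrt{\log n})$. Plugging $R(n) = 2^{O(\sqrt{\log n})}$, $T(n) = 2^{O(\sqrt{\log n})}$, $S(n) = n\cdot 2^{O(\sqrt{\log n})}$, $P(m) = \tilde O(m\,\epsilon^{-2})$ into the update-time bound
$\left(R(n)^{O(L)} + (R(n)/\epsilon)^{O(L)}\right)\cdot\left(T(n) + P(dS(n)\log(wW))/S(n)\right)$
gives $R(n)^{O(L)} = 2^{O(\sqrt{\log n}\cdot\sqrt{\log n})} = 2^{O(\log n)}$, which is too large. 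This is the main obstacle: the naive choice of $d$ makes $R(n)^{O(L)}$ blow up to polynomial. The fix — exactly as in the overview — is to balance $d$ more carefully. Take $d = 2^{\Theta(\log^{3/4}n)}$, so $L = O(\log^{1/4}n)$; then $R(n)^{O(L)} = 2^{O(\sqrt{\log n}\cdot\log^{1/4}n)} = 2^{O(\log^{3/4}n)}$ and the $\epsilon$-dependent factor is $(1/\epsilon)^{O(L)} = \epsilon^{-O(\log^{1/4}n)}$; meanwhile $P(dS(n)\log(wW))/S(n) = \tilde O(d\,\epsilon^{-2}\log(wW)) = 2^{O(\log^{3/4}n)}\epsilon^{-O(1)}\log W$. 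Combining, the worst-case update time is $2^{O(\log^{3/4}n)} + \epsilon^{-O(\log^{1/4}n)}\cdot O(\log W)$, matching the claimed bound with $\delta$ in the role of $\epsilon$; and for constant $\delta$ or $\delta = 1/\polylog n$ the $\epsilon^{-O(\log^{1/4}n)}$ term is $2^{O(\log^{1/4}n\cdot\log\log n)} = 2^{o(\log^{3/4}n)}$, absorbed into $2^{O(\log^{3/4}n)}$.

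Finally I would read off the size and preprocessing bounds from \Cref{thm:reduction_worst_case}. The output size is $\tilde O(S(n)\,L\log(wW)) = 2^{O(\sqrt{\log n})}\cdot O(n\,\delta^{-2}\log W)$ (the $\delta^{-2}$ coming from the weight-class decomposition inside the reduction and the $(1+\epsilon^{-2})$ factor), and the approximation ratio is $\H(\epsilon\cdot L)$; since a single application of $\A$ gives a genuine $(1+\epsilon)$-spectral sparsifier and the reduction composes errors multiplicatively over $L$ levels, choosing the per-level accuracy as $\delta/L$ yields a final $e^{\delta}$-spectral sparsifier. Since $\H(G,\epsilon)$ is a family of subgraphs (reweighted), the preprocessing time enjoys the better bound $\tilde O(P(m)(1+\epsilon^{-2})\log(wW)) = \tilde O(m\,\delta^{-2}\log W)$. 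The last thing to check is that the whole construction works against an oblivious adversary — this is inherited directly from $\A$, whose only randomness is the edge sampling, which is fixed in advance and revealed to no one; the expander decomposition, pruning, and sparsification machinery are all deterministic (or succeed w.h.p.\ independently of the adversary), so a union bound over the $\poly(n)$ updates before each periodic reset gives correctness w.h.p. This completes the proof.
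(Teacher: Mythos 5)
Your proposal is correct and follows essentially the same route as the paper: the same sample-once-and-add-the-pruned-edges decremental algorithm on pruned sub-expanders (the paper's \Cref{lem:worst_case:spectral_decremental}, justified via \Cref{cor:sparsifyExpanderSubgraph}), fed into the worst-case reduction \Cref{thm:reduction_worst_case} with $d = 2^{\Theta(\log^{3/4}n)}$, $L = O(\log^{1/4}n)$, and per-level accuracy $\epsilon = \delta/L$, with the same size and preprocessing accounting. The only slips are cosmetic and do not affect the bounds (e.g.\ the reweighted sparsifier's internal weight ratio is $w = O(n)$ rather than $\polylog(n)$, which only enters through $\log(wW)$).
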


To prove this result, in \Cref{sec:spectral:sampling}
we explain how to obtain spectral sparsifiers via randomly sampling edges. Then, 
in \Cref{sec:spectral:worst-case} we use these results to
create a decremental algorithm on pruned $\phi$-expanders. Applying the expander reduction then results in \Cref{thm:worst_case:spectral}.

We first note that spectral sparsifiers indeed satisfy all the required properties listed in \Cref{define_properties}
in order to apply the expander reduction (see \Cref{sec:properties} for the proof).
\begin{restatable}{lemma}{spectralSparsifierIsAGraphProblem}\label{lem:spectralSparsifierIsAGraphProblem}
Let $\H(\epsilon, G)$ be the set of all valid $e^\epsilon$-approximate spectral sparsifiers of $G$.
Then $\H$ satisfies \property{} from \Cref{define_properties}.
\end{restatable}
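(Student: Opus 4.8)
The plan is to verify that the five abstract conditions \eqref{con:identity}--\eqref{con:nested} all hold when $\H(G,\epsilon)$ is taken to be the set of all $e^{\epsilon}$-spectral sparsifiers of $G$, i.e. reweighted subgraphs $H$ with $\vec{x}^\top L_G \vec{x} \le \vec{x}^\top L_H \vec{x} \le e^{\epsilon}\vec{x}^\top L_G \vec{x}$ for all $\vec x$. The uniform engine throughout is that a spectral inequality $L_G \preceq L_H \preceq e^{\epsilon} L_G$ is preserved under (i) nonnegative linear combinations of Laplacians, since $L_{\sum s_i G_i} = \sum s_i L_{G_i}$ and the Loewner order is preserved under nonnegative scaling and summation, and (ii) vertex contractions, since contracting $W\subseteq V$ corresponds to restricting the quadratic form to the subspace of vectors that are constant on $W$ (equivalently, left/right multiplying by the contraction matrix $C$ and its transpose), and $A \preceq B$ implies $C^\top A C \preceq C^\top B C$ for any $C$. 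These two observations essentially handle \eqref{con:union} and \eqref{con:contraction} verbatim, and they are also the backbone of the remaining three.

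First I would do \eqref{con:identity}: if $G'$ is $G$ with each edge $\{u,v\}$ scaled by some $f_{u,v}\in[1,e^{\epsilon}]$, then $L_{G'} = \sum_{\{u,v\}} f_{u,v} L_{\{u,v\}}$ where $L_{\{u,v\}}$ is the (rank-one, PSD) Laplacian of a single edge; hence $L_G \preceq L_{G'} \preceq e^{\epsilon} L_G$, giving $G'\in\H(G,\epsilon)$, and dividing through shows $e^{\epsilon}G \in \H(G',\epsilon)$ as well (since $L_G \preceq L_{G'}$ gives $L_{G'} \preceq e^{\epsilon}L_G \preceq e^{\epsilon}L_{G'}$). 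Next \eqref{con:union} is immediate from additivity/scaling of Laplacians as above, applied coordinatewise to the hypotheses $L_{G_i} \preceq L_{H_i} \preceq e^{\epsilon}L_{G_i}$ and summed against the weights $s_i \ge 0$. For \eqref{con:contraction}, apply the contraction map $C$ to the guarantee $L_G \preceq L_H \preceq e^{\epsilon} L_G$; one checks that $C^\top L_G C = L_{G'}$ and $C^\top L_H C = L_{H'}$ (the Laplacian of a contracted graph is exactly the quadratic form restricted to vectors constant on each contracted set, with self-loops contributing $0$), so $L_{G'} \preceq L_{H'} \preceq e^{\epsilon} L_{G'}$.

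Then I would handle the two ``worst-case-only'' properties. For \eqref{con:increment}: given $H_1, H_2 \in \H(G,\epsilon)$ and $H \subseteq H_1$, we have $0 \preceq L_H \preceq L_{H_1} \preceq e^{\epsilon} L_G$ (the left inequality because deleting edges only decreases the form, i.e. $H\subseteq H_1$ gives $L_H \preceq L_{H_1}$, and PSD-ness of $L_H$). Writing $L_{(e^{\delta}-1)H \cup H_2} = (e^{\delta}-1)L_H + L_{H_2}$, the lower bound is $L_{H_2} \succeq L_G$, and the upper bound is $(e^{\delta}-1)L_H + L_{H_2} \preceq (e^{\delta}-1)e^{\epsilon}L_G + e^{\epsilon}L_G = e^{\epsilon+\delta}L_G$, which is exactly membership in $\H(G,\epsilon+\delta)$. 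For \eqref{con:nested}, transitivity: if $L_G \preceq L_H \preceq e^{\epsilon}L_G$ and $L_H \preceq L_{H'} \preceq e^{\delta}L_H$, then chaining gives $L_G \preceq L_H \preceq L_{H'} \preceq e^{\delta}L_H \preceq e^{\delta}e^{\epsilon}L_G = e^{\epsilon+\delta}L_G$, so $H' \in \H(G,\epsilon+\delta)$.

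I do not expect any serious obstacle here — all five are one- or two-line consequences of the Loewner-order calculus — so the ``hard part'' is really just bookkeeping: being careful that the operations named in the abstract conditions (scaling edges, taking unions with multiplicities, contracting vertex subsets, taking subgraphs) correspond precisely to the claimed algebraic operations on Laplacians, including the conventions about self-loops (which contribute nothing to $L$) and about multigraphs (parallel edges add their rank-one Laplacians). I would state the single reusable sublemma ``$A\preceq B \implies C^\top A C \preceq C^\top B C$, and $L$ is linear in edge weights and behaves correctly under contraction/restriction'' up front, then dispatch \eqref{con:identity}--\eqref{con:nested} in five short paragraphs each citing it. This mirrors the proofs of \Cref{lma:cutSparsifierIsAGraphProblem} and \Cref{lma:spannerIsAGraphProblem}, with the spectral case being if anything cleaner because everything is literally a statement about PSD quadratic forms.
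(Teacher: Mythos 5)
Your proposal is correct and follows essentially the same route as the paper's own proof: all five properties are dispatched by the linearity of the Laplacian quadratic form, PSD-ness of subgraph Laplacians, and the observation that contraction restricts the quadratic form to vectors constant on the contracted set (your $C^\top L_G C = L_{G'}$ is just the matrix form of the paper's explicit lift $f'\mapsto f$). The only cosmetic difference is that the paper states the sparsifier guarantee two-sidedly as $e^{-\epsilon}x^\top L_G x \le x^\top L_H x \le e^{\epsilon}x^\top L_G x$ in its verification, while you use the one-sided convention from the preliminaries; both conventions carry through identically.
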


\subsection{Simple Random Sampling}
\label{sec:spectral:sampling}

Now we study how random sampling results in  spectral sparsifiers. 
Our algorithm are based on the following result: 

\begin{lemma}[\cite{SpielmanT11}, Theorem~6.1]\label{lemma:ST-decomposition}
Let $\varepsilon\in(0,1/2)$, and $G=(V,E)$ be an unweighted graph 
whose smallest non-zero normalised Laplacian eigenvalue is at least $\lambda$. 
By sampling every edge $\{u,v\}$ with probability 
\[
p_{u,v} \ge \min\left\{1, \left(\frac{12\log n }{\varepsilon\lambda}\right)^2\frac{1}{\min\{\deg(u), \deg(v)\}} \right\}
\]
and setting the weights of a sampled edge as $1/p_{u,v}$, 
then with probability at least $1-1/n^3$ the resulting graph $\widetilde{G}$ 
is a $(1+\varepsilon)$-spectral sparsifier of $G$.
\end{lemma}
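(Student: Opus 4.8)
The plan is the standard ``sample edges proportionally to an upper bound on effective resistance, then apply matrix concentration'' argument, specialized to the fact that in a graph with normalized spectral gap $\lambda$ the effective resistance of an edge $\{u,v\}$ is controlled by $\min\{\deg(u),\deg(v)\}$. Write $L_G = \sum_{e=\{u,v\}} b_e b_e^{\top}$ with $b_e = \chi_u - \chi_v \in \mathbb{R}^n$, and let $X_e \sim \mathrm{Bernoulli}(p_e)$ be the independent sampling indicators, so $L_{\widetilde G} = \sum_e \tfrac{X_e}{p_e} b_e b_e^{\top}$ and $\mathbb{E}\, L_{\widetilde G} = L_G$. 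Edges with $p_e = 1$ are kept with their original weight and contribute to both sides identically, so in the concentration step we may assume every $p_e < 1$. Since $\lambda > 0$ forces $G$ connected (hence $\min_v \deg(v) \ge 1$), both Laplacians have kernel exactly $\mathrm{span}(\mathbf{1})$, and it suffices to prove $(1-\varepsilon) L_G \preceq L_{\widetilde G} \preceq (1+\varepsilon) L_G$ in the Loewner order; after replacing $\varepsilon$ by a constant multiple this is the stated $(1\pm\varepsilon)$-spectral sparsifier (and a fortiori the one-sided $e^{O(\varepsilon)}$ form used elsewhere in the paper).

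To set up the concentration, let $\Pi$ be the orthogonal projector onto $\mathrm{im}(L_G)=\mathbf{1}^{\perp}$ and put $Y_e := \tfrac{X_e}{p_e}\, L_G^{+/2} b_e b_e^{\top} L_G^{+/2}$. These are independent, PSD, and $\sum_e \mathbb{E}\,Y_e = L_G^{+/2} L_G L_G^{+/2} = \Pi$, whose nonzero eigenvalues all equal $1$; moreover $(1-\varepsilon)\Pi \preceq \sum_e Y_e \preceq (1+\varepsilon)\Pi$ is equivalent to the Loewner bound above. When $X_e=1$ we have $\lambda_{\max}(Y_e) = \tfrac{1}{p_e} b_e^{\top} L_G^{+} b_e = R_e/p_e$, where $R_e$ is the effective resistance of $e$ in $G$.

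The one genuinely graph-theoretic ingredient is the bound $R_e \le \tfrac{2}{\lambda\,\min\{\deg(u),\deg(v)\}}$ for $e=\{u,v\}$. Using the variational identity $R_e = \sup_{x} \tfrac{(b_e^{\top}x)^2}{x^{\top} L_G x}$ and the bijective substitution $x = D^{-1/2} z$, one gets $R_e = \sup_z \tfrac{(c^{\top} z)^2}{z^{\top}\mathcal{L} z} = c^{\top}\mathcal{L}^{+} c$, where $\mathcal{L} = D^{-1/2} L_G D^{-1/2}$ is the normalized Laplacian and $c = \deg(u)^{-1/2}\chi_u - \deg(v)^{-1/2}\chi_v$; since every nonzero eigenvalue of $\mathcal{L}$ is at least $\lambda$, $\mathcal{L}^{+} \preceq \lambda^{-1} I$ on $\mathrm{im}(\mathcal{L})$, so $R_e \le \lambda^{-1}\|c\|^2 = \lambda^{-1}\big(\tfrac{1}{\deg(u)}+\tfrac{1}{\deg(v)}\big) \le \tfrac{2}{\lambda\min\{\deg(u),\deg(v)\}}$. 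Combined with $p_e \ge \big(\tfrac{12\log n}{\varepsilon\lambda}\big)^2\tfrac{1}{\min\{\deg(u),\deg(v)\}}$ this gives $R_e/p_e \le \tfrac{\varepsilon^2\lambda}{72\log^2 n} = O(\varepsilon^2/\log^2 n)$, so every $Y_e$ has spectral norm at most $R := O(\varepsilon^2/\log^2 n)$.

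Finally apply the matrix Chernoff inequality (e.g.\ Tropp's user-friendly tail bounds) to $\sum_e Y_e$ restricted to the at-most-$n$-dimensional space $\mathrm{im}(\Pi)$, with $\mu_{\min}=\mu_{\max}=1$, per-summand spectral bound $R$, and deviation parameter $\varepsilon$:
\[
\Pr\!\Big[\lambda_{\max}\Big(\textstyle\sum_e Y_e\Big) \ge 1+\varepsilon\Big] + \Pr\!\Big[\lambda_{\min}\Big(\textstyle\sum_e Y_e\Big) \le 1-\varepsilon\Big] \;\le\; 2n\,e^{-\Omega(\varepsilon^2/R)} \;=\; 2n\,e^{-\Omega(\log^2 n)} \;\le\; n^{-3}
\]
for $n$ large enough, which is exactly the claimed failure probability. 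I expect the only fiddly points to be (i) the kernel/pseudoinverse bookkeeping and the separate, trivial treatment of the deterministically kept edges with $p_e=1$, and (ii) tracking constants so the bound comes out at precisely $1/n^3$ with the stated $12\log n$ factor; the effective-resistance estimate is short but is the only place the hypothesis on $\lambda$ enters, and everything else is a routine invocation of matrix concentration.
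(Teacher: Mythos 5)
Your proposal is essentially correct, but note that the paper itself gives no proof of this lemma at all: it is imported verbatim as Theorem~6.1 of \cite{SpielmanT11}, so what you have written is a self-contained replacement for a citation rather than an alternative to an argument in the text. Your route is the now-standard leverage-score one (Spielman--Srivastava style): bound the effective resistance of each edge by $R_{uv}\le \lambda^{-1}\bigl(\tfrac1{\deg u}+\tfrac1{\deg v}\bigr)$ via the substitution $x=D^{-1/2}z$ and the gap of the normalized Laplacian (your vector $c$ is indeed orthogonal to $\ker\mathcal L$, since $c^{\top}D^{1/2}\mathbf 1=0$), then apply matrix concentration to $\sum_e Y_e$ with $\|Y_e\|\le R_e/p_e=O(\varepsilon^2/\log^2 n)$; the exponent $e^{-\Omega(\log^2 n)}$ leaves ample slack for the $1/n^3$ failure probability and for converting the two-sided $(1\pm\varepsilon)$ bound into the paper's one-sided $e^{O(\varepsilon)}$ convention. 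Two small points deserve care. First, $\lambda>0$ does \emph{not} force $G$ to be connected; this is harmless because $b_e$ and $c$ are orthogonal to the (possibly larger) kernels, but you should phrase the kernel bookkeeping per component rather than asserting $\ker L_G=\operatorname{span}(\mathbf 1)$. Second, your invocation of multiplicative matrix Chernoff with $\mu_{\min}=\mu_{\max}=1$ is only literally valid when no edge has $p_e=1$: if you drop the deterministic edges, the expectation of the remaining sum is $\Pi-A$, not $\Pi$, and if you keep them, their norm $R_e$ can be far larger than your uniform bound $R$. The fix is routine --- either split each deterministic summand into pieces of norm at most $R$, or run matrix Bernstein on the centered sum using $\sum_e\mathbb E Y_e^2\preceq R\,\Pi$ --- but as written this step is the one non-trivial gap to patch. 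For context, the original Spielman--Teng proof of their Theorem~6.1 proceeds through different concentration machinery rather than effective resistances, so your derivation is a legitimately different (and by modern standards simpler) way to recover the quoted statement.
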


By combining the lemma above and the Cheeger inequality ($\lambda \ge \Phi_G^2/2$), we have the following corollary:
\begin{corollary}\label{cor:ST-decomposition}
Let $\varepsilon\in(0,1/2)$, and $G=(V,E)$ be an unweighted graph with conductance $\Phi_G$. 
By sampling every edge $\{u,v\}$ with probability 
\begin{equation}\label{eq:defpuv}
p_{u,v} \ge \min\left\{1, \left(\frac{24\log n }{\varepsilon\Phi_G^2}\right)^2\cdot\frac{1}{\min\{\deg(u), \deg(v)\}} \right\}
\end{equation}
and setting the weights of a sampled edge  as $1/p_{u,v}$, 
then with probability at least $1-1/n^3$ 
the resulting graph $\widetilde{G}$ is a $(1+\varepsilon)$-spectral sparsifier of $G$.
\end{corollary}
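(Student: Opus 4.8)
The plan is to derive \Cref{cor:ST-decomposition} directly from \Cref{lemma:ST-decomposition} by using Cheeger's inequality to pass from the conductance $\Phi_G$ to a bound on the smallest nonzero normalized Laplacian eigenvalue $\lambda$. Recall Cheeger's inequality states $\lambda \ge \Phi_G^2/2$, so if every edge $\{u,v\}$ is sampled with probability $p_{u,v}$ at least as large as the bound in \eqref{eq:defpuv}, I need to check that this $p_{u,v}$ is also at least as large as the bound required by \Cref{lemma:ST-decomposition} with $\lambda$ replaced by $\Phi_G^2/2$.

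First I would substitute $\lambda = \Phi_G^2/2$ into the sampling-probability bound of \Cref{lemma:ST-decomposition}, obtaining the requirement
\[
p_{u,v} \ge \min\left\{1, \left(\frac{12\log n}{\varepsilon \cdot (\Phi_G^2/2)}\right)^2 \frac{1}{\min\{\deg(u),\deg(v)\}}\right\}
= \min\left\{1, \left(\frac{24\log n}{\varepsilon \Phi_G^2}\right)^2 \frac{1}{\min\{\deg(u),\deg(v)\}}\right\},
\]
which is exactly the bound \eqref{eq:defpuv} stated in the corollary. Since $\lambda \ge \Phi_G^2/2$ and the function $x \mapsto (c/x)^2$ is decreasing in $x > 0$, any $p_{u,v}$ satisfying \eqref{eq:defpuv} automatically satisfies the hypothesis of \Cref{lemma:ST-decomposition}; monotonicity of the sampling guarantee in the sampling probability (sampling each edge with a larger probability only improves concentration — formally one can couple the two sampling processes, or simply note that \Cref{lemma:ST-decomposition} is stated as a lower bound on $p_{u,v}$) then yields that the reweighted subgraph $\widetilde{G}$, with each sampled edge $\{u,v\}$ given weight $1/p_{u,v}$, is a $(1+\varepsilon)$-spectral sparsifier of $G$ with probability at least $1 - 1/n^3$.

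There is essentially no obstacle here — the corollary is an immediate specialization of the cited theorem via Cheeger's inequality, and the only thing to be slightly careful about is the constant bookkeeping (the factor $2$ from Cheeger turns the $12$ into $24$) and the monotonicity remark that justifies using $\lambda \ge \Phi_G^2/2$ rather than needing the exact value of $\lambda$. I would state these two points explicitly and conclude. The proof is a single short paragraph: invoke Cheeger, substitute, match the constants, invoke \Cref{lemma:ST-decomposition}.
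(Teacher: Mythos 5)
Your proof is correct and is exactly the paper's argument: the paper obtains \Cref{cor:ST-decomposition} by combining \Cref{lemma:ST-decomposition} with Cheeger's inequality $\lambda \ge \Phi_G^2/2$, which is precisely your substitution turning the constant $12$ into $24$. The additional monotonicity remark is fine but not needed beyond what the lemma's ``$p_{u,v}\ge$'' formulation already covers.
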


Intuitively, \Cref{cor:ST-decomposition} says that we obtain a spectral sparsifier 
by simply sampling every edge proportional to the degrees of the endpoints.
\Cref{def:worst-case:decremental_algorithm} guarantees us that the graph is of near uniform degree
and that the degrees of our graphs stay roughly the same throughout all updates.
Thus we obtain an algorithm by simply sampling the graph once during initialization.

Unfortunately, the graph does not stay an expander throughout all updates 
(see \Cref{def:worst-case:decremental_algorithm}).
Instead we are only given a graph $G$ and set $P \subset E(G)$ 
for which there exists some $2^{-O(\sqrt{\log n})}$-expander $W$
with the property $G \setminus P \subset W \subset G$.
We first show that it is sufficient to obtain a spectral sparsifier,
by only sampling the edges in $G \setminus P$.

\begin{lemma}\label{cor:sparsifyExpanderSubgraph}
Let $G=(V,E)$ be an unweighted graph and let $P\subset E$ with the property 
that there is a $\phi$-expander $W$ with $G \setminus P \subset W \subset G$
and the guarantee that $\deg_W(v) \ge \Delta/2$ for all $v \in V(G \setminus P)$ 
and some parameter $\Delta \ge 1$.
Assume that we sample every edge $\{u,v\}\in E\setminus P$ with probability
\begin{align}
p_{u,v} = \min\left\{1, \left(\frac{24\log n }{\varepsilon\Phi_G^2}\right)^2\cdot\frac{2}{\Delta} \right\} 
\label{eq:sampling_probability}
\end{align}
and let $\widetilde{H}$ be the resulting graph. 
Then, with high probability the graph 
$G'= \widetilde{H}\cup P$ is a $(1+\varepsilon)$-spectral sparsifier of $G$.
\end{lemma}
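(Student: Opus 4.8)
The plan is to reduce the statement to \Cref{cor:ST-decomposition} applied to the expander $W$, and then argue that the extra edges in $P$ and the ``missing'' edges $E \setminus (E(W) \cup P)$ do not hurt the spectral guarantee. First I would decompose $G$ as $G = W' \cup (G \setminus E(W))$, where $W'$ is the weighted subgraph of $G$ on edge set $E(W)$; note $E(W) \supseteq E \setminus P$, so every edge sampled by our procedure lies in $W'$, and every edge of $G$ not in $W$ is in $P$ (since $G \setminus P \subseteq W$). Hence I can write $G = W' + R$ where $R$ is the graph on edge set $E(G) \setminus E(W) \subseteq P$, and similarly $G' = \widetilde{H} + P$ where $\widetilde{H}$ only contains sampled edges from $E \setminus P \subseteq E(W)$.

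\textbf{Key steps.} The main step is to show that $\widetilde{H}$ (union the edges of $E(W) \cap P$ sampled with probability $1$, if any; more simply, $\widetilde{H}$ together with the un-sampled edges of $E(W)$ that lie in $P$) is a $(1+\varepsilon)$-spectral sparsifier of $W'$. To do this I would verify that the sampling probability in \eqref{eq:sampling_probability} dominates the probability required by \Cref{cor:ST-decomposition} applied to $W$: we have $\Phi_W \ge \phi$, and for each edge $\{u,v\} \in E(W)$ the relevant quantity is $\min\{\deg_W(u), \deg_W(v)\}$. There are two cases. If $\{u,v\} \in E \setminus P$, then both endpoints are in $V(G \setminus P)$, so $\deg_W(u), \deg_W(v) \ge \Delta/2$ by hypothesis, and thus $p_{u,v} = \min\{1, (24\log n / (\varepsilon \phi^2))^2 \cdot 2/\Delta\} \ge \min\{1, (24\log n/(\varepsilon \Phi_W^2))^2 \cdot 1/\min\{\deg_W(u),\deg_W(v)\}\}$ using $\Phi_G \le \Phi_W$ — wait, I must be careful about the direction here: \eqref{eq:sampling_probability} uses $\Phi_G$, and we need the bound with $\Phi_W$. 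Since $W \subseteq G$ and both have the same vertex set, conductance need not be monotone under edge addition in general, but actually adding edges to a graph can only increase every cut's numerator while the volume also grows; the cleanest route is to simply require $\Phi_G \le \Phi_W$ — hmm, this is false in general. Instead I would observe that \Cref{cor:ST-decomposition}'s requirement is an \emph{upper} bound on what $p_{u,v}$ must exceed, and it suffices to use \emph{any} valid lower bound $\lambda$ on the smallest nonzero normalized Laplacian eigenvalue of $W$; since $\Phi_W \ge \phi = 2^{-O(\sqrt{\log n})}$ and Cheeger gives $\lambda_W \ge \phi^2/2$, and since $\Phi_G \ge$ some expansion bound is \emph{not} needed — rather the statement \eqref{eq:sampling_probability} should really be read with $\Phi_G$ meaning the conductance parameter $\phi$ of $W$ (I expect this is a notational slip in the lemma; I would state it as $p_{u,v} \ge (24\log n/(\varepsilon\phi^2))^2 \cdot 2/\Delta$). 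With that reading, $p_{u,v}$ as in \eqref{eq:sampling_probability} dominates the threshold of \Cref{cor:ST-decomposition} for $W$ whenever $\min\{\deg_W(u),\deg_W(v)\} \ge \Delta/2$.

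\textbf{Assembling the sparsifier and handling $P$.} By \Cref{cor:ST-decomposition} applied to $W$ (where edges of $E(W) \setminus (E\setminus P)$, i.e.\ those inside $P$, may either be sampled with probability as dictated or simply retained with weight $1$ since $G' \supseteq P$), with probability $\ge 1 - 1/n^3$ the graph $\widehat{W} := \widetilde{H} \cup (E(W) \cap P)$ is a $(1+\varepsilon)$-spectral sparsifier of $W'$; formally $L_{W'} \preceq L_{\widehat W} \preceq (1+\varepsilon) L_{W'}$. Now I would use the additive (union) structure: $L_G = L_{W'} + L_R$ and $L_{G'} = L_{\widehat W} + L_{R'}$ where $R' = P \setminus E(W)$. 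But $R = R'$: both equal the set of edges of $G$ outside $W$, and these all lie in $P$ (since $G\setminus P \subseteq W$), and conversely $P \setminus E(W) \subseteq E(G) \setminus E(W)$. So $L_R = L_{R'}$, and since all these are PSD Laplacians, adding $L_R$ to both sides of $L_{W'}\preceq L_{\widehat W}\preceq (1+\varepsilon)L_{W'}$ gives $L_G \preceq L_{G'} \preceq (1+\varepsilon) L_G$, as desired. Finally I would union-bound the $1/n^3$ failure probability of \Cref{cor:ST-decomposition}.

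The hard part will be getting the bookkeeping of edge sets exactly right — in particular reconciling the conductance parameter appearing in \eqref{eq:sampling_probability} (written $\Phi_G$) with the conductance $\phi$ of $W$ that actually drives the sampling requirement, and making sure the edges of $P$ that happen to lie inside the expander $W$ are accounted for (they can be safely added at weight $1$ because $G' \supseteq P$, and adding edges to a spectral sparsifier of a subgraph, matched by the same addition on the original, preserves the two-sided bound). I expect the concentration itself to be entirely off-loaded to \Cref{cor:ST-decomposition}, so no new probabilistic argument is needed.
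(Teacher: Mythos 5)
Your proof matches the paper's argument essentially step for step: apply \Cref{cor:ST-decomposition} to the expander $W$ (edges of $E(W)\cap P$ retained with probability $1$, the remaining edges sampled, with $\deg_W \ge \Delta/2$ guaranteeing the probability in \eqref{eq:sampling_probability} dominates the corollary's threshold), observe $E(W)\setminus P = E(G)\setminus P$ so the sampled graph is exactly $\widetilde{H}$, and then add $P\setminus E(W)$ to both $W$ and its sparsifier to pass to $G$ and $G'$. Your reading of the conductance parameter in \eqref{eq:sampling_probability} as the expansion parameter of $W$ is also how the paper uses the lemma in its applications, so there is no gap there.
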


\begin{proof}
Consider the case where we sample every edge $\{u,v\} \in E(W) \setminus P$ 
with probability as in \eqref{eq:sampling_probability}
and let $H$ be the resulting graph.
Then $H \cup (E(W) \cap P)$ is a spectral sparsifier of $W$ by \Cref{cor:ST-decomposition},
because every edge in $E(W)$ is either sampled with probability
\begin{align*}
p_{u,v} &= \min\left\{1, \left(\frac{24\log n }{\varepsilon\Phi_G^2}\right)^2\cdot\frac{2}{\Delta} \right\} 
\ge \min\left\{1, \left(\frac{24\log n }{\varepsilon\Phi_G^2}\right)^2\cdot\frac{1}{\min\{\deg_W(u), \deg_W(v)\}} \right\} 
\end{align*}
if $\{u,w\} \notin P$ or with probability $1$ if $\{u,v\} \in P$.

Next, note that $E(W) \setminus P = E(G) \setminus P$, because of $G \setminus P \subset W$,
so the graph $H$ is exactly the graph $\widetilde{H}$.
This implies that $G' := H \cup P = H \cup (E(W) \cap P) \cup (P \setminus E(W))$
is a sparsifier of $W \cup (P \setminus E(W)) = G$.
\end{proof}

\subsection{Worst-Case}
\label{sec:spectral:worst-case}

\Cref{cor:sparsifyExpanderSubgraph} directly implies the following decremental algorithm.
During each update the set $P$ grows a bit (and thus $G \setminus P$ shrinks)
which means for maintaining $\widetilde{H} \cup P$
we simply need to remove edges from $\widetilde{H}$ and insert them into $P$.

\begin{lemma}\label{lem:worst_case:spectral_decremental}
For every $\phi$ there exists a decremental algorithm on $\phi$-sub-expanders 
(see \Cref{def:worst-case:decremental_algorithm}) 
that maintains a $e^\epsilon$-approximate spectral sparsifier
against \emph{oblivious} adversaries.

The pre-processing time is bounded by $O(m)$,
and the output size of the sparsifier returned after the pre-processing is bounded by $O(n 2^{O(\sqrt{\log n})}/\epsilon^{2} )$.
An edge deletion takes $2^{O(\sqrt{\log n})}$ worst-case time,
and the recourse is bounded by $2^{O(\sqrt{\log n})}$ as well.
The ratio between the  largest and the  smallest edge weight in the maintained sparsifier 
is bounded by $O(n)$.
\end{lemma}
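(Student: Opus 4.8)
The plan is to turn \Cref{cor:sparsifyExpanderSubgraph} into a decremental data structure directly, doing the sampling only once at initialization and then only moving edges between $\widetilde H$ and $P$ as the pruning set $P$ grows. First I would set up the parameters: given the promise of \Cref{def:worst-case:decremental_algorithm}, the initial graph $G$ is a $\phi$-expander with minimum degree $\Delta$ and $\max\deg_G(v)\le O(1/\phi)\Delta$, and at each update $i$ we receive a set $P_i$ with $|P_i|\le 2^{O(\sqrt{\log n})}$ such that for $P=\bigcup_j P_j$ there is a $2^{O(\sqrt{\log n})}$-expander $W$ with $G\setminus P\subseteq W\subseteq G$ and $\deg_W(v)\ge\Delta/2$ on $V(G\setminus P)$. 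During preprocessing I would sample every edge $\{u,v\}\in E(G)$ independently with probability $p_{u,v}=\min\{1,(24\log n/(\varepsilon\Phi^2))^2\cdot 2/\Delta\}$ where $\Phi=2^{-O(\sqrt{\log n})}$ is the expansion guaranteed for the sub-expander, give each sampled edge weight $1/p_{u,v}$, call the result $\widetilde H$, and output $\widetilde H\cup P$ (with $P=\emptyset$ initially, so just $\widetilde H$). Using efficient subset sampling (\Cref{thm:fast sampling}) this costs $O(p\cdot|E(G)|\log n)+O(m)=O(m)$ time, and a Chernoff bound shows $|\widetilde H|=O(n\,2^{O(\sqrt{\log n})}/\varepsilon^2)$ w.h.p.\ since every degree is at most $O(\Delta/\phi)$ so each vertex contributes $O(p\cdot\Delta/\phi\cdot\log n)=2^{O(\sqrt{\log n})}$ sampled edges in expectation.

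Next I would describe the update. When edge $e$ is deleted from $G$ and set $P_i$ is added to $P$: remove $e$ from $\widetilde H$ if present; for every edge $e'\in P_i$ that currently lies in $\widetilde H$, remove it from $\widetilde H$; then output the new sparsifier $\widetilde H\cup P$, i.e.\ explicitly report the $O(1+|P_i|)$ edges that changed (edges of $P_i$ move from "sampled-with-weight-$1/p$" to "weight $1$", and the deleted edge disappears). Because $E(G)\setminus P=E(W)\setminus P$ never changes identity under this process — edges only leave $G\setminus P$ by entering $P$ — the set $\widetilde H\setminus P$ is exactly the sample of $E(W)\setminus P$ drawn at probability $p_{u,v}$, which dominates the Spielman--Teng threshold $1/\min\{\deg_W(u),\deg_W(v)\}$ scaled appropriately since $\deg_W(v)\ge\Delta/2$. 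Hence \Cref{cor:sparsifyExpanderSubgraph} applies verbatim and $\widetilde H\cup P$ is a $(1+\varepsilon)$-spectral sparsifier of the current $G$ w.h.p. The update time and recourse are both $O(|P_i|)+O(1)=2^{O(\sqrt{\log n})}$, and since weights are either $1$ or $1/p_{u,v}$ with $1/p_{u,v}=O(\Delta/(\log n/(\varepsilon\Phi^2))^2)=O(n)$, the weight ratio is $O(n)$. Finally, to match the $e^\epsilon$ statement rather than $(1+\varepsilon)$ I would just invoke the convention $1+\varepsilon\approx e^\varepsilon$ from the preliminaries, choosing $\varepsilon$ a constant factor below the target $\epsilon$.

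The only genuinely delicate point is the independence/obliviousness argument: \Cref{cor:ST-decomposition} is a static statement about one random sample, and here the sparsifier at every stage must be correct simultaneously. Against an \emph{oblivious} adversary this is clean — the entire deletion sequence (and hence every $W$ and $P$) is fixed in advance, so I fix the adversary's sequence first and then take the single random sample; \Cref{cor:sparsifyExpanderSubgraph} gives correctness w.h.p.\ for the final stage, and I would union-bound over the at most $m$ stages (each with its own $W$, $P$), paying an extra $O(\log n)$ factor inside the $\Theta(\cdot)$ of the sampling probability so the failure probability stays $n^{-\Omega(1)}$. This is exactly why the lemma is stated only for oblivious adversaries, and it is the part I would write most carefully; everything else (the $O(m)$ preprocessing, the $2^{O(\sqrt{\log n})}$ update time via \Cref{thm:fast sampling}, the size and weight-ratio bounds) is a routine bookkeeping exercise once the sample is fixed. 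Applying the worst-case expander reduction \Cref{thm:reduction_worst_case} together with \Cref{lem:spectralSparsifierIsAGraphProblem} to this decremental algorithm then yields \Cref{thm:worst_case:spectral}.
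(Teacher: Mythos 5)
Your proposal is correct and follows essentially the same route as the paper: sample every edge once at preprocessing with the probability of \Cref{cor:sparsifyExpanderSubgraph} (for $\phi=2^{-O(\sqrt{\log n})}$), maintain $\widetilde H\cup P$ by simply deleting removed edges from $\widetilde H$ and moving pruned edges into $P$, and invoke the corollary at every stage for correctness, with the size, update-time, and recourse bounds following from the near-uniform degree and $|P_i|\le 2^{O(\sqrt{\log n})}$. Your explicit per-stage union bound under the oblivious-adversary assumption is a point the paper leaves implicit, but it is the same argument.
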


\begin{proof}
During initialization we compute the minimum degree $\Delta$ of $G$
and sample every edge in $G$ by probability as in \eqref{eq:sampling_probability}
for $\phi = 2^{-O(\sqrt{\log n})}$.
Let $H$ be the resulting graph, when scaling every edge by the inverse of the sampling probability.
Let $P$ be the empty set initially, then the output of the algorithm is $H \cup P$.
With every update, we remove the deleted and pruned edges from $H$,
and insert the pruned edges into $P$.

\paragraph{Correctness}

The graph $H \cup P$ is a spectral sparsifier by \Cref{cor:sparsifyExpanderSubgraph},
because we are promised that there is a $2^{-O(\sqrt{\log n})}$-expander $W$ with
$G \setminus P \subset W \subset G$
and $\deg_W(v) \ge \Delta/2$ for $v \in V(G \setminus P)$.

\paragraph{Complexity}

The pre-processing time is $O(m)$ as we simply iterate over all edges.
The size of $H \cup P$ after the pre-processing is   bounded by
$O(n 2^{O(\sqrt{\log n}) / \epsilon^2})$ with high probability,
because the graph $G$ is of near uniform degree up to a factor of $O(1/\phi)$.
The update time and recourse is $2^{O(\sqrt{\log n})}$, 
because we remove at most that many edges from $H$ and insert them into $P$.
\end{proof}

By applying the expander reduction of \Cref{thm:reduction_worst_case} 
we now obtain the following fully dynamic spectral sparsifier algorithm
with worst-case update time.

\thmSpectralWorstCase*

\begin{proof}
The algorithm follows from \Cref{lem:worst_case:spectral_decremental} 
and the reduction of \Cref{thm:reduction_worst_case}.
We pick $N=n$ and $d = \lceil 2^{\log^{0.75}n} \rceil$ such that $\log d \ge \log^{0.75} N$, so then 
$$
L := \lceil \log(N)/ \log d \rceil \le  \lceil \log^{0.25} N\rceil = O(\log^{0.25} N).
$$
This results in worst-case update time
\begin{align*}
&~
(R(n)/\epsilon)^{O(L)}
\cdot 
\left(  T(n)+\frac{P(dS(n)\log (wW))}{S(n)}  \right)\\
=&~
 2^{O(\sqrt{\log n}) \cdot O(\log^{0.25} n)} \epsilon^{-O(\log^{0.25} n)} 
\left(2^{O(\sqrt{\log n})} + 2^{O(\log^{0.75 n})}\log(wW)\right) \\
=&~
2^{O(\log^{0.75} n)} \epsilon^{-O(\log^{0.25} n)} O(\log(W)),
\end{align*}
where we used $T(n) = 2^{O(\sqrt{\log n})}$, $T(m) = O(m)$ and $w = O(n)$.
In order to get $e^\delta$-sparsifier, we pick $\epsilon = \delta/L = O(\delta/\log^{0.25} n)$, 
so then the update time becomes
$$
2^{O(\log^{0.75} n)} \delta^{O(\log^{0.25} n)} O(\log(W))
$$
So for $\delta \ge 2^{-O(\log^{0.5} n)}$, the algorithm runs in $2^{O(\log^{0.75} n)} \cdot O(\log W)$ worst-case update time.

The preprocessing time is bounded by
$$
\tilde{O}((P(m)+m)\epsilon^{-2} L^2 \log(wW))
=
\tilde{O}(m \epsilon^{-2} \log W)
=
\tilde{O}(m \delta^{-2} \log W),
$$
because our spectral sparsifiers are subgraphs.
The size of the sparsifier is bounded by
$$
\tilde O(S(n) L \log(wW))
=
\tilde{O}(n 2^{O(\sqrt{\log n})} \epsilon^{-2} \log W)
=
2^{O(\sqrt{\log n})} \cdot O(n \delta^{-2} \log W).
$$
\end{proof}

\part{Applications\label{part:application}}
\section{Applications to Decremental Shortest Paths}
\label{sec:SSSP}

In this section, we show how to extend the currently fastest algorithms to maintain the $(1+\epsilon)$-approximate distances from a fixed source $s \in V$, in the decremental graph $G$ to report shortest paths in time $O(n)$. We therefore turn to the algorithms in \cite{Bernstein17} for weighted dense graphs and the algorithm \cite{gutenberg2020deterministic} for sparse graphs. Both algorithms are built around the framework introduced in \cite{BernsteinC16} that essentially partitions the graph $G$ into a \emph{light} graph $G^{light}$ and a \emph{heavy} graph $G^{heavy}$ (here we mean that the edge sets of the graphs $G^{light}, G^{heavy}$ partition the edge set of $G$). Both algorithms try to keep $G^{light}$ sparse and at the same time, they ensure that connected components of $G^{heavy}$ have small diameter. Thereby both algorithm satisfy the conditions stated below.

\begin{definition}[Heavy-light Algorithm]\label{def:heavyLightFramework}
We say an algorithm runs within the \emph{heavy-light framework} if given a decremental graph $G$, a fixed source $s \in V$, and an integer $i \leq \lg n$ and some real $\epsilon, \delta > 0$, it maintains the decremental graphs $G^{heavy}_i$ and $G^{light}_i$ such that at any stage
\begin{enumerate}
    \item $G^{heavy}_i$ and $G^{light}_i$ partition the edge set of graph $G$ while having the same vertex set, and
    \item $G^{heavy}_i$ is a decremental algorithm, and 
    \item given the connected components $\mathcal{C} = C_1, C_2, \dots, C_k$ in graph $G^{heavy}_i$, distances in the graph $G / \mathcal{C}$ ,i.e. the graph $G$ after contracting each connected component $C_i$ for $i \in [1,k]$, are only distorted by an additive error of $\delta 2^i$ compared to $G$, and
    \item for any $v \in V$, such that $\mathbf{dist}_G(s,v) \in [2^{i}, 2^{i+1})$, the algorithm can be queries for an $(1\pm \epsilon/2)$-approximate shortest paths $\pi_{s,v}$ in the graph $G^{light}_i / \mathcal{C}$.
\end{enumerate}
\end{definition}

We state the results obtained by the two algorithms below. While these theorems are not explicitly stated in their papers, it is not hard to verify that the theorems are straight-forward to obtain by closely inspecting the main result of both articles.

\begin{theorem}[see \cite{BernsteinC16, Bernstein17}]
For any decremental weighted graph $G=(V,E,w)$, fixed source $s$, any integer $i \leq \log n$, and reals $\epsilon, \delta > 0$, there is an algorithm $\mathcal{A}_i$ that runs within the \emph{heavy-light framework} and can return $(1\pm\epsilon/2)$-approximate shortest-paths in $G^{light}_i / \mathcal{C}$ in time linear in the number of edges. Algorithm $\mathcal{A}_i$ is deterministic and runs in total update time $\tilde{O}(n^2 / (\epsilon + \delta))$. 
\end{theorem}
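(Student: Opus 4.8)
The plan is to \emph{extract} the required heavy--light decomposition from the deterministic decremental SSSP algorithms of Bernstein and Chechik~\cite{BernsteinC16} and of Bernstein~\cite{Bernstein17} rather than to construct anything new: both of those algorithms already maintain, internally and for each distance scale $2^i$, exactly the clustering object that \Cref{def:heavyLightFramework} abstracts, and the four numbered conditions are then either immediate or direct restatements of lemmas proved there. Fix the scale $i$. In \cite{BernsteinC16,Bernstein17} one maintains a decremental clustering $\mathcal C=\mathcal C_i=\{C_1,\dots,C_k\}$ of $V$, obtained by growing radius-bounded balls around a carefully selected set of centers, with the following properties: (a) each $C_j$ has weak diameter in $G$ at most $r_i=\Theta(\delta 2^i/\log n)$ and is connected inside the subgraph of intra-cluster edges (it is a ball); (b) the clustering is \emph{monotone} under edge deletions --- centers are only removed and balls only shrink, so vertices leave clusters but never enter a new one, and a cluster can only split; and (c) it is maintainable deterministically in total time $\tilde O(n^2/(\epsilon+\delta))$, using monotone Even--Shiloach / ``moving Dijkstra'' trees from the centers, where the radius parameter (controlled by $\delta$) is traded against the number of inter-cluster edges exactly so as to hit this bound. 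We then take $G^{heavy}_i$ to be the subgraph spanned by the intra-cluster edges of $G$ and $G^{light}_i:=G\setminus G^{heavy}_i$. Since each $C_j$ is connected within $G^{heavy}_i$, the connected components $\mathcal C$ of $G^{heavy}_i$ are exactly the clusters together with a singleton for each unclustered vertex.

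Conditions~1 and~2 of \Cref{def:heavyLightFramework} follow immediately: $E(G^{heavy}_i)$ and $E(G^{light}_i)$ partition $E(G)$ over the common vertex set $V$, and by monotonicity~(b) the intra-cluster edge set can only shrink --- a cluster split or a real deletion removes intra-cluster edges and never creates one --- so $G^{heavy}_i$ is a decremental graph. (The graph $G^{light}_i$ may simultaneously gain the edges that leave clusters and lose the edges that $G$ deletes, but \Cref{def:heavyLightFramework} imposes no monotonicity on $G^{light}_i$; we only need to run an Even--Shiloach tree on $G^{light}_i/\mathcal C$ efficiently, and \cite{BernsteinC16} bounds the total number of edges ever inserted into $G^{light}_i/\mathcal C$.) For Condition~3, contracting each $C_j$ can only decrease distances, so $\dist_{G/\mathcal C}(s,v)\le\dist_G(s,v)$; conversely, a shortest $s$--$v$ path in $G/\mathcal C$ lifts to an $s$--$v$ walk in $G$ by paying, for each cluster it passes through, at most that cluster's weak diameter $r_i$, and the structure of the center set in \cite{BernsteinC16,Bernstein17} guarantees such a path enters only $\tilde O(1)$ distinct clusters; hence the additive loss is at most $\tilde O(1)\cdot r_i\le\delta 2^i$, which is exactly Condition~3.

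For Condition~4 we maintain, deterministically, a truncated Even--Shiloach tree rooted at $s$ on the graph obtained from $G^{light}_i/\mathcal C$ by rounding every edge weight up to the nearest multiple of $\Theta(\epsilon 2^i/n)$. Any relevant path --- one between $s$ and a $v$ with $\dist_G(s,v)\in[2^i,2^{i+1})$ --- uses $O(n)$ edges, so the rounding inflates its length by at most $O(\epsilon 2^i)=O(\epsilon)\cdot\dist_G(s,v)$; consequently the tree realizes a $(1\pm\epsilon/2)$-approximate shortest path in $G^{light}_i/\mathcal C$ to every such $v$, and the path is read off in time linear in its hop length (hence $O(n)$) via parent pointers, giving the ``linear in the number of edges'' query bound. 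The Even--Shiloach tree has truncation depth $\tilde O(n/\epsilon)$ in rounded units; combining its standard $O(\text{edges}\times\text{depth})$ total cost with the bound of \cite{BernsteinC16} on the size of $G^{light}_i/\mathcal C$ and on the number of heavy-to-light edge migrations it must reprocess --- again balanced through $\delta$ --- yields total Even--Shiloach work $\tilde O(n^2/(\epsilon+\delta))$, which together with the $\tilde O(n^2/(\epsilon+\delta))$ cost of maintaining $\mathcal C$ gives the claimed total update time. Determinism is inherited because every ingredient (center selection, the monotone ES/Dijkstra trees for the clustering, and the truncated ES-tree on the light graph) is deterministic.

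The main obstacle is that essentially all of the technical weight lives in property~(c): simultaneously making the scale-$2^i$ clustering monotone under deletions, of weak diameter $O(\delta 2^i)$, inter-cluster-sparse, and maintainable within $\tilde O(n^2/(\epsilon+\delta))$, together with the amortized charging that bounds the Even--Shiloach work on the ever-growing light graph $G^{light}_i/\mathcal C$. I would import this in its entirety from \cite{BernsteinC16,Bernstein17} rather than reprove it. The one genuinely new --- and short --- step is the observation that the distance data structure those algorithms already run on the light graph is a shortest-\emph{path} tree, so the path query in $G^{light}_i/\mathcal C$ demanded by Condition~4 comes for free in linear time; this path-reporting feature is precisely what the present section needs and what the original distance-only statements did not make explicit.
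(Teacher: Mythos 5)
Your proposal is correct and matches the paper's treatment: the paper gives no self-contained proof of this theorem, but simply cites \cite{BernsteinC16,Bernstein17} and notes that the statement follows by closely inspecting their main results, which is exactly what you do by importing the monotone clustering, the heavy/light edge partition, the contraction-distortion bound, and the truncated Even--Shiloach tree on $G^{light}_i/\mathcal{C}$ wholesale and only observing that the tree already yields path reporting in linear time. The only caveat is that some of your attributions of \emph{how} those papers achieve Condition~3 (e.g.\ a path meeting only $\tilde O(1)$ clusters rather than a radius-versus-cluster-count trade-off) are looser paraphrases, but since you import the guarantee rather than reprove it, this does not affect correctness.
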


\begin{theorem}[see \cite{gutenberg2020deterministic}]
For any decremental unweighted graph $G=(V,E)$, fixed source $s$, any integer $i \leq \log n$, constant $\epsilon > 0$ and real $\delta > 0$, there is an algorithm $\mathcal{B}_i$ that runs within the \emph{heavy-light framework} and can return $(1\pm\epsilon/2)$-approximate shortest-paths in $G^{light}_i / \mathcal{C}$ in time linear in the number of edges. Algorithm $\mathcal{B}_i$ is deterministic and runs in total update time total update time $\tilde{O}\left(\frac{mn^{0.5 + o(1)} }{ \delta}\right)$.
\end{theorem}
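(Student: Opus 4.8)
The plan is to open \cite{gutenberg2020deterministic} and read off the internal structure of its algorithm, then verify Definition~\ref{def:heavyLightFramework} condition by condition for a suitable choice of the decomposition. Recall that the algorithm of \cite{gutenberg2020deterministic}, like its precursors \cite{BernsteinC16,Bernstein17}, fixes a distance scale $2^i$ and classifies each vertex $v$ as \emph{$i$-heavy} or \emph{$i$-light} according to whether a ball of radius $\Theta(\delta 2^i)$ around $v$ in the current decremental graph contains more than a threshold $\tau$ of vertices; $G^{heavy}_i$ then consists of the edges lying inside such dense balls, and $G^{light}_i$ is the complementary edge set on the same vertex set $V$. First I would dispatch conditions~1 and~2: the partition property is immediate, and $G^{heavy}_i$ only ever loses edges because heaviness is monotone under deletions (once a vertex becomes light it stays light, so an edge leaves $G^{heavy}_i$ permanently).

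Next I would check condition~3, the additive distortion bound of $\delta 2^i$ for distances in $G/\mathcal{C}$. This is precisely the key structural lemma of the Bernstein--Chechik framework: each connected component $C_j$ of $G^{heavy}_i$ is maintained so as to have \emph{weak diameter} $O(\delta 2^i)$ in $G$ (i.e.\ $\mathrm{dist}_G(u,v)=O(\delta 2^i)$ for all $u,v\in C_j$), the algorithm "rebuilding" a component as soon as this could be violated. Contracting the $C_j$'s can therefore only shorten a shortest $s$--$v$ path, and a standard charging argument — any $(1+o(1))$-approximate shortest path of length $\Theta(2^i)$ passes through only $O(1)$ heavy components — bounds the total loss by $O(\delta 2^i)$, which after rescaling $\delta$ by a constant gives the stated bound.

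For condition~4, answering an approximate shortest-path query in $G^{light}_i/\mathcal{C}$ in time linear in the number of edges, I would use that $G^{light}_i/\mathcal{C}$ is sparse by the heaviness threshold, and that \cite{gutenberg2020deterministic} already maintains an Even--Shiloach / bounded-depth shortest-path tree rooted at $s$ on exactly this contracted light graph, truncated at depth $\tilde O(2^i)$ so that its labels are $(1\pm\epsilon/2)$-accurate for vertices at distance in $[2^i,2^{i+1})$. Such a tree supports path reporting for free: one simply walks parent pointers from $v$ back to $s$, in time linear in the returned path and hence at most $O(|E(G^{light}_i/\mathcal{C})|)$. The only new observation needed is that the tree is maintained \emph{explicitly} and already encodes the path; the original analyses extract only the distance label from it.

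Finally, the total update time $\tilde O(mn^{0.5+o(1)}/\delta)$ and determinism are inherited verbatim from \cite{gutenberg2020deterministic}: reading parent pointers adds no amortized overhead, and the dominant cost — maintaining the heaviness classification, rebuilding heavy components, and running the depth-$\tilde O(2^i)$ ES trees on the sparse light graph — is exactly the subroutine whose running time is analyzed there, with the $1/\delta$ factor coming from the rebuild frequency tied to the ball radius $\delta 2^i$. The main obstacle I anticipate is purely one of identification: one must confirm that the object \cite{gutenberg2020deterministic} internally calls its "light" (or "sparse") graph coincides, up to the contraction $\mathcal{C}$, with $G^{light}_i/\mathcal{C}$ as in Definition~\ref{def:heavyLightFramework}, and that its shortest-path tree really lives on that graph rather than on some auxiliary emulator; once that correspondence is pinned down, each condition of the framework holds essentially by construction.
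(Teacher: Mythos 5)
The paper gives no proof of this statement at all: it simply remarks that, while not explicit in \cite{gutenberg2020deterministic}, the theorem is ``straight-forward to obtain by closely inspecting the main result'' of that paper, which is exactly the condition-by-condition verification you outline (heavy/light classification by ball density, monotonicity of heaviness under deletions, small weak-diameter heavy components, and an explicitly maintained shortest-path tree on the contracted light graph supporting path reporting), so your approach coincides with the paper's. The one imprecision --- the parenthetical claim that a near-shortest path of length $\Theta(2^i)$ meets only $O(1)$ heavy components --- is not how the cited analysis bounds the additive distortion (it charges over possibly many components using the density threshold), but since that distortion bound is imported wholesale from \cite{gutenberg2020deterministic}, this does not affect the conclusion.
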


Now, for each $i \leq \log nW$ (where $W$ is $1$ for unweighted graphs), we maintain a data structure $\mathcal{A}_i$ (or $\mathcal{B}_i$) with $\delta = \epsilon/2\alpha$ such that $\alpha$ is the approximation factor in the algorithm to maintain an $\alpha$-spanner $\tilde{G}_i$ as described in \Cref{thm:SpannerMainResult} on the graph $G^{heavy}_i$, i.e. $\alpha = O(\polylog n)$. Then, for a path query from $s$ to any vertex $v \in V$ where $\mathbf{dist}_G(s,v) \in [2^i, 2^{i+1})$, we first query for a path $\pi_{s,v}$ in the graph $G^{light}_i / \mathcal{C}$. 

Then for every vertex $w$ on $\pi_{s,v}$ that corresponds to a connected component $C_i \in \mathcal{C}$ in $G^{heavy}_i$, we identify the two vertices $x$ and $y$ in $C_i$ such that the edges on $\pi_{s,v}$ ending in $w$ in $G^{light}_i / \mathcal{C}$ have endpoints in $G$ in $x$ and $y$. (We let $x$ be set to $s$ if $s \in C_i$, and $y$ be set to $v$ if $v \in C_i$.) 

Then, we run Dijkstra's algorithm on the $\alpha$-spanner $\tilde{G}_i$ from $x$ to the shortest-path in the spanner from $x$ to $y$. We do so for every connected component $C_i$ that is as vertex $w$ on the path and replace the vertex $w$ then with the path from $x$ to $y$ found in $\tilde{G}_i$. Thus, the final path $\pi_{s,v}'$ that we obtain is a path in $G$.

To see that $\pi_{s,v}'$ is a $(1+\epsilon)$-approximate shortest-path, we observe that there are no edges between distinct connected components in the graph $G^{heavy}_i$ by the definition of connected components, and since a spanner is a subgraph of $G^{heavy}_i$ we have the same set of connected components in $\tilde{G}_i$. Now, since contracting the connected components $\mathcal{C}$ in $G^{heavy}_i$ only distorts distances by $\delta 2^i$ by \Cref{def:heavyLightFramework} and distances in $G^{heavy}_i$ are preserved up to an $\alpha$-factor in $\tilde{G}_i$, we have that the final shortest path has weight at most $(1+\epsilon/2)\dist_G(s,t) + \alpha\delta2^i = (1+\epsilon/2)\dist_G(s,t) + \frac{\alpha\epsilon}{2\alpha} \cdot2^i =  (1+\epsilon)\dist_G(s,t)$, as required. We observe that since a path between two vertices in $G$ can only be longer than their shortest path, we have that all paths are overestimates and at least one level $i$, we obtain a $(1+\epsilon)$-approximate shortest path. 

To bound the running time, we observe that for every vertex $w$ on the shortest path $\pi_{s,v}$ that corresponds to a connected component $C_i$, we run at most one Dijkstra computation on the graph $\tilde{G}_i$ which also only explores the edge set in $\tilde{G}_i[C_i]$ by the very definition of a connected component. Thus, every edge in $\tilde{G}_i$ is explored during a shortest path query at most once. Thus, by the classic bound on Dijkstra's algorithm, and the upper bound on the size of $\tilde{G}_i$, we can upper bound the query time for a single level $i$ by $\tilde{O}(n \log W)$. We observe that using the distance estimate we can determine directly a constant number of levels at which we can run shortest path queries and can be certain to find a $(1+\epsilon)$-approximate shortest path. Thus, the total query time is $\tilde{O}(n \log W)$. The time to maintain the spanners $\tilde{G}_i$ is subsumed in the total update time of both data structures. Finally, since all data structures used are either deterministic or work against an adaptive adversary, the resulting algorithm also works against an adaptive adversary. The following theorems follow.

\bernsteinExtension*

\gutenbergExtension*

\section{Dynamic Effective Resistances with Worst-case Update Time}
\label{sec:effective_resistance}

Using the results of \cite{DurfeeGGP18-older}, 
we can use our dynamic sparsifier to obtain a dynamic $st$-effective resistance algorithm. 
The idea is to first run a vertex-sparsifier (Schur complement), 
then run our edge-sparsifier (spectral sparsifier) on top of it. 
At the end we run a Laplacian solver on the sparsified graph to compute the resistance.

$\textsc{SC}(G,T)$ is the Schur-Complement of $G$ with terminals $T$. 
If $M = \textsc{SC}(G,T)$, then for any $u,v \in T$ the effective resistance 
between $u$ and $v$ in $M$ and in $G$ is identical.

\begin{lemma}[{\cite[Lemma 6.4]{DurfeeGGP18-older}}]\label{lem:approxSC}
Given an undirected multi-graph $G = (V,E)$ a subset of vertices $T$ and a trade-off parameter $\beta$ 
such that $\beta n = \Omega(\log n)$, 
we can maintain with high probability a $(1+\varepsilon)$-sparsifier of $\textsc{SC}(G,T' \cup T)$ 
where $|T \cup T'| = \Theta(n \beta)$ and $T'$ is a random subset of $V$. 
The algorithm supports the following updates:
\begin{itemize}
\item $\textsc{Initialize}(G,T,\beta)$ in $O(m \beta^{-3} \log^5 n \varepsilon^{-2})$ expected time.
\item $\textsc{Insert}(u,v)$ in $O(\beta^{-6} \log^9 n \varepsilon^{-2})$ expected time.
\item $\textsc{Delete}(u,v)$ in $O(\beta^{-6} \log^9 n \varepsilon^{-2})$ expected time.
\end{itemize}
Furthermore, each of these operations leads to a number of changes in $H$ bounded by the corresponding costs.
\end{lemma}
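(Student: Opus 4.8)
This statement is Lemma~6.4 of \cite{DurfeeGGP18-older}, so the ``proof'' is really a citation; below I sketch the argument I would give if I had to reprove it. The plan is to build the dynamic Schur-complement sparsifier from a random-walk estimator. Recall that for a terminal set $C$ one can write $\textsc{SC}(G,C)$ as a sum, over the edges of $G$, of ``walk gadgets'': each endpoint of an edge is extended by a random walk in $G$ run until it first hits $C$, and the resulting two-terminal path is added with an appropriate weight. First I would invoke the standard fact (Schur complement via random walks / sparsified elimination) that drawing $\tilde O(\varepsilon^{-2})$ independent such walks per edge and reweighting produces, with high probability, a graph $H$ supported on $C$ with $\tilde O(|C|\,\varepsilon^{-2})$ edges that is a $(1\pm\varepsilon)$-spectral sparsifier of $\textsc{SC}(G,C)$.

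The second ingredient is the random terminal augmentation that makes these walks short. I would set $C = T\cup T'$ with $T'$ a uniformly random subset of size $\Theta(\beta n)$. Since each step of a walk lands in $T'$ with probability $\approx\beta$ (roughly) independently, every walk hits $C$ within $O(\beta^{-1}\log n)$ steps w.h.p., so each of the $\tilde O(m\varepsilon^{-2})$ sampled walks has length $\tilde O(\beta^{-1})$. Generating them all therefore costs $\tilde O(m\beta^{-1}\varepsilon^{-2})$ plus the overhead of reweighting and bounding the hitting-time tail, which is where the remaining $\beta$-powers in the $O(m\beta^{-3}\log^5 n\,\varepsilon^{-2})$ preprocessing bound come from.

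For the dynamic part, the key observation is locality: inserting or deleting $(u,v)$ only perturbs the transition probabilities at $u$ and $v$, so only the sampled walks that \emph{visit} $u$ or $v$ are invalidated; I would re-draw exactly those walks with fresh randomness and patch $H$ accordingly. To turn this into the claimed $O(\beta^{-6}\log^9 n\,\varepsilon^{-2})$ update time and recourse bound, one needs that w.h.p.\ no vertex is visited by more than $\tilde O(\beta^{-6}\varepsilon^{-2})$ of the sampled walks; this follows by combining the per-walk length bound $\tilde O(\beta^{-1})$, the total walk count $\tilde O(m\varepsilon^{-2})$, the fact that a non-terminal vertex is expected to be visited $O(\beta^{-1})$ times per walk, and a Chernoff-type load argument over the independent walk samples.

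The main obstacle, and the reason the $\beta$-exponents are as large as they are, is precisely this load bound together with preserving the spectral guarantee under repeated patching: one must argue that the random terminal set keeps every vertex's walk-load small, that the concentration for $H$ being a $(1\pm\varepsilon)$-spectral sparsifier still holds after an edge update re-randomizes a subset of the walks (which is fine against an oblivious adversary, since the walk randomness stays independent of the update sequence), and that $|C|=\Theta(\beta n)$ does not drift under updates. All three points are handled in \cite{DurfeeGGP18-older}; importing the lemma verbatim, as we do here, is what lets us plug our worst-case dynamic spectral sparsifier ``on top of'' the maintained Schur complement to get the $n^{6/7+o(1)}$ worst-case effective-resistance bound.
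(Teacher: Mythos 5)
The paper never proves this lemma: it is imported verbatim as a black box from \cite{DurfeeGGP18-older} (their Lemma~6.4) and used only in the effective-resistance application, so citing it, as you do, is exactly the paper's ``proof.'' Your sketch of the underlying argument (Schur complement via short random walks to a randomly augmented terminal set, plus locality of edge updates and resampling of affected walks) matches the approach of that reference in outline, though the specific exponents $\beta^{-3}$ and $\beta^{-6}$ are asserted rather than derived; since the lemma is a citation here, that level of detail is not required.
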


\begin{theorem}
Given an undirected graph $G$ and two fixed nodes $s$ and $t$, 
we can with high-probability maintain 
the approximate effective resistance between $s$ and $t$ 
supporting edge updates in $O(n^{6/7+o(1)})$ expected worst-case time, 
if $\varepsilon = \Omega(1/\polylog n)$. 
The pre-processing time is $\tilde{O}(mn^{3/7} + n^{6/7})$.
\end{theorem}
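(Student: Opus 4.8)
The plan is to combine the approximate-Schur-complement data structure of \Cref{lem:approxSC} with our dynamic spectral sparsifier with worst-case update time from \Cref{thm:worst_case:spectral} (equivalently \Cref{cor:main spectral}), and then to read off the effective resistance by a single Laplacian solve at query time. Concretely: fix the trade-off parameter $\beta$ (to be optimized at the end), initialize the Schur-complement sparsifier of \Cref{lem:approxSC} on $G$ with terminal set $T=\{s,t\}$, so that it maintains with high probability a $(1+\varepsilon)$-sparsifier $M$ of $\textsc{SC}(G, T'\cup\{s,t\})$ on $\Theta(n\beta)$ vertices. On top of this intermediate graph $M$ I would run the dynamic spectral sparsifier of \Cref{thm:worst_case:spectral} to maintain a sparse $(1+\varepsilon')$-spectral sparsifier $\widehat M$ of $M$; by the transitivity property \eqref{con:nested} of spectral sparsifiers (\Cref{lem:spectralSparsifierIsAGraphProblem}), $\widehat M$ is an $(1+O(\varepsilon+\varepsilon'))$-spectral sparsifier of $M$, hence its $st$-effective resistance agrees with that of $G$ up to a $(1+O(\varepsilon+\varepsilon'))$ factor, since the Schur complement preserves effective resistance between terminals exactly. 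Choosing $\varepsilon,\varepsilon'=\Theta(\varepsilon/\polylog n)$ gives the claimed $(1+\varepsilon)$ approximation.

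\textbf{Wiring the two dynamic structures together.} The key structural point is that each edge update to $G$ triggers, via \Cref{lem:approxSC}, only $O(\beta^{-6}\polylog(n)\,\varepsilon^{-2})$ changes to $M$; each such change is an edge insertion or deletion in $M$, which I feed one at a time into the dynamic spectral sparsifier of \Cref{thm:worst_case:spectral} running on the $\Theta(n\beta)$-vertex graph $M$. That sparsifier has worst-case update time $2^{O(\log^{0.75}(n\beta))}\cdot O(\log W)=n^{o(1)}$ per change (for $\varepsilon'=1/\polylog n$), so the total worst-case time to propagate one update to $G$ all the way to $\widehat M$ is $O(\beta^{-6}\polylog(n))\cdot n^{o(1)}$. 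At query time, $\widehat M$ has $n^{1+o(1)}\beta$ edges, and I run a nearly-linear-time Laplacian solver (or directly compute $b_{st}^\top L_{\widehat M}^+ b_{st}$ where $b_{st}=e_s-e_t$) to obtain the $st$-effective resistance in $\widetilde O(n\beta)$ time; since we may fold the query work into the update work (or simply report it when asked) this does not dominate. Summing, the worst-case update time is $n^{o(1)}(\beta^{-6}+n\beta)$, and balancing $\beta^{-6}=n\beta$ gives $\beta=n^{-1/7}$ and update time $n^{6/7+o(1)}$, as stated. The preprocessing is dominated by $\textsc{Initialize}$ of \Cref{lem:approxSC}, costing $O(m\beta^{-3}\polylog(n)\varepsilon^{-2})=\tilde O(mn^{3/7})$, plus the preprocessing of the spectral sparsifier on $M$, which has $O(n\beta)=O(n^{6/7})$ vertices and $\tilde O(n\beta)$ edges initially, costing $\tilde O(n^{6/7})$; hence preprocessing $\tilde O(mn^{3/7}+n^{6/7})$.

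\textbf{Main obstacle.} The routine parts are the parameter balancing and the composition of approximation guarantees via \eqref{con:nested}. The one subtlety I would be careful about is the interaction of the two randomized components with adaptivity: \Cref{lem:approxSC} and the oblivious-adversary spectral sparsifier of \Cref{thm:worst_case:spectral} are both designed against an oblivious adversary, and the Schur-complement structure feeds a data-dependent stream of updates into the spectral sparsifier. One must check that the update sequence presented to the spectral sparsifier does not depend on that sparsifier's own random bits — which holds here, because $M$ is determined entirely by $G$ and the randomness internal to \Cref{lem:approxSC}, and is not influenced by $\widehat M$ — so the composed algorithm remains correct against an oblivious adversary, exactly matching the hypothesis of \cite{DurfeeGGP18-older}. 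The remaining bookkeeping — that $\beta n=\Omega(\log n)$ is satisfied for $\beta=n^{-1/7}$, that $W$ (the weight ratio, polynomially bounded in the reduction) contributes only $n^{o(1)}$, and that the per-update recourse bounds from both lemmas multiply rather than blow up — is straightforward.
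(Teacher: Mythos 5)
Your proposal matches the paper's own proof essentially step for step: initialize the Schur-complement sparsifier of \Cref{lem:approxSC} with $T=\{s,t\}$, feed its $\tilde{O}(\beta^{-6})$ changes per update into the worst-case dynamic spectral sparsifier of \Cref{thm:worst_case:spectral}, answer via a nearly-linear-time Laplacian solve on the sparse graph, and balance $\beta^{-6}$ against $n\beta$ to get $\beta=n^{-1/7}$ and $n^{6/7+o(1)}$ update time with $\tilde{O}(mn^{3/7}+n^{6/7})$ preprocessing. Your added remarks (the oblivious-adversary composition being safe because the update stream to the spectral sparsifier is independent of its own randomness, and the accounting of the sparsifier's preprocessing) are correct refinements of the same argument, so there is nothing substantively different to flag.
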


\begin{proof}
We start with the high-level idea: 
Given a graph $G$ and nodes $s,t$, 
we set $T = \{ s, t \}$ and maintain a $(1+\varepsilon)$-sparsifier $H$ 
of the approximate Schur-complement $\textsc{SC}(G,T' \cup T)$ via \Cref{lem:approxSC}. 
In general, this graph $H$ is not very sparse, 
so we sparsify $H$ with our dynamic sparsifier \Cref{thm:worst_case:spectral}
to obtain a sparser graph $\tilde{H}$.
We can now maintain the effective resistance between $s$ and $t$ 
via a fast approximate Laplacian solver.

\paragraph{Pre-processing}

Initialize \Cref{lem:approxSC} on the graph $G$ for $T = \{s,t\}$, 
so we obtain a $(1+\varepsilon)$-sparsifier $H$ of $\textsc{SC}(G,T' \cup T)$. 
We then also initialize our dynamic sparsifier \Cref{thm:worst_case:spectral}
on top of $H$ to obtain a sparser $\tilde{H}$.

\paragraph{Update}

Let $\beta$ be the trade-off parameter of \Cref{lem:approxSC}.
When $G$ receives an edge update, 
we compute the change in $H$ in $\tilde{O}(\beta^{-6})$ time. 
The graph $H$ may change by upto $\tilde{O}(\beta^{-6})$ edges, 
so we have to perform this many edge updates to \Cref{thm:worst_case:spectral}
in order to maintain $\tilde{H}$. 
Note that $H$ has $\Theta(n\beta)$ nodes, 
so each update of \Cref{thm:worst_case:spectral} requires only $O((\beta n)^{o(1)})$ time 
for a total of $O(\beta^{o(1)-6}n^{o(1)})$ time.

Next, we must compute the effective resistance.
The $st$-effective resistance is given by $(\vec{e}_s - \vec{e}_t)^\top L_H^\dagger (\vec{e}_s - \vec{e}_t)$, 
where $L_G$ is the Laplacian matrix of the graph $H$, 
$\dagger$ denotes the Moore-Penrose pseudo inverse 
and $\vec{e}_i$ is the $i$th standard unit-vector.
Computing this resistance is done by computing 
$L_H^\dagger (\vec{e}_s - \vec{e}_t)$ 
via a Laplacian solver in $\tilde{O}((\beta n)^{1+o(1)})$ time \cite{SpielmanT04}, 
as $H$ has $O((n \beta)^{1+o(1)})$ many edges.

\paragraph{Balancing the cost}

We choose $\beta = n^{-1/7}$ in which case $\beta^{-6} = \beta n = n^{6/7}$, 
so the update time requires $O(n^{6/7+o(1)})$ time.
The preprocessing requires $\tilde{O}(m \beta^{-3} + (\beta n)^2)$ time, 
where the first term is the preprocessing of \Cref{lem:approxSC} 
and the second term is the preprocessing of \Cref{thm:worst_case:spectral}
on the $\tilde{O}((\beta n)^2)$ sized graph.
For $\beta = n^{-1/7}$ this is $\tilde{O}(mn^{3/7} + n^{6/7})$.
\end{proof}

\section{Congestion Minimization and Multi-commodity Flow}

\label{sec:multiflow}
In this section, we show new algorithms for computing multi-commodity
flows. To avoid confusion, we call the maximum multi-commodity flow
problem as \emph{maximum throughput flow} problem.
\begin{theorem}
\label{thm:multiflow}In undirected vertex-capacitated graphs with
$n$ vertices and $m$ edges, there are
\begin{enumerate}
\item a $\polylog(n)$-approximate algorithm for maximum throughput flow with
$k$ commodities in $\tilde O(n^2)$ time, and
\item \textup{a $\polylog(n)$}-approximate algorithm for maximum concurrent
flow with $k$ commodities in $\tilde O((n+k)n\log C)$ time where
$C$ is the ratio of largest finite capacity to smallest finite capacity. 
\end{enumerate}
\end{theorem}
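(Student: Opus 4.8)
\textbf{Proof plan for \Cref{thm:multiflow}.}
The plan is to combine our fast decremental SSSP-with-path-reporting result (\Cref{thm:denseGraphSSSPWithPathQueryMainResult}) with the standard multiplicative-weights-update (MWU) framework for packing LPs, as in Garg--K\"onemann \cite{GargK07}, Fleischer \cite{Fleischer00}, and the way it is applied by Chuzhoy--Khanna \cite{ChuzhoyK19} and M\k{a}dry \cite{Madry10}. Both maximum throughput flow and maximum concurrent flow with $k$ commodities on undirected vertex-capacitated graphs fit this template: one maintains a length/weight $\ell(v)$ on each vertex, and in each MWU iteration one must, for each commodity $i$, find an (approximately) shortest $s_i$--$t_i$ path under the current vertex lengths, route a small amount of flow along it, and then multiplicatively increase the lengths of the vertices on that path. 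The vertex lengths only increase over the course of the algorithm, so each commodity's shortest-path oracle is queried on a graph whose relevant metric is only growing --- which is exactly the decremental setting once we discretize lengths into $O(\log n / \epsilon)$ geometric scales and feed edge/vertex ``deletions'' to the SSSP structure as lengths cross scale boundaries (this is the reduction used in \cite{ChuzhoyK19}). The key point is that \Cref{thm:denseGraphSSSPWithPathQueryMainResult} gives us $\tilde O(n^2)$ \emph{total} update time and $\tilde O(n)$ per-path query time, which is what drives the claimed bounds.

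First I would set up the vertex-capacitated-to-edge-capacitated reduction (split each vertex $v$ into $v_{\mathrm{in}}, v_{\mathrm{out}}$ joined by an edge of capacity $c(v)$) so that SSSP on the transformed graph computes shortest paths under vertex lengths; this only changes $n$ and $m$ by constant factors. Then, for maximum throughput flow, I would run the Garg--K\"onemann/Fleischer MWU: maintain one decremental SSSP data structure per commodity (or a single-source structure augmented over the $k$ sources via a super-source, which is cleaner when $k$ is large), and in each phase route flow to the commodity with current shortest $s_i$--$t_i$ distance. The number of MWU iterations is $\tilde O(m/\epsilon^2)$ in the naive analysis but, crucially, by the amortization argument of \cite{GargK07,Fleischer00} the \emph{total} length increase per edge is $\tilde O(1/\epsilon)$, so the total work charged to each SSSP structure over the whole algorithm is $\tilde O(n^2)$ (its total update time) plus the sum of query times, where the number of path queries is bounded by the number of flow-augmenting steps, $\tilde O(n/\epsilon)$ per scale times $O(\log n/\epsilon)$ scales, each costing $\tilde O(n)$ --- giving $\tilde O(n^2)$ overall for a fixed constant $\epsilon$, hence the $\tilde O(n^2)$ bound for part (1). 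The $\polylog(n)$ (rather than $(1+\epsilon)$) approximation comes because our SSSP oracle returns a $(1+\epsilon)$-approximate distance in $G^{light}_i/\mathcal C$, but the path we actually report is routed through a $\polylog(n)$-stretch spanner of the heavy part (\Cref{thm:SpannerMainResult}), so the returned path is only a $\polylog(n)$-approximate shortest path; feeding a $\beta$-approximate shortest-path oracle into Garg--K\"onemann yields a $\beta\cdot(1+\epsilon)$-approximate flow, i.e.\ $\polylog(n)$.

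For maximum concurrent flow (part (2)), I would use the same machinery inside the standard binary-search / parametric reduction to the confluent packing formulation: guess the optimal throughput rate $\lambda$, scale demands, and solve a feasibility packing LP by MWU, which again reduces to repeated approximate-shortest-path computations, one batch per commodity. The extra $\log C$ factor is the number of binary-search steps over the capacity range, and the additive $\tilde O(kn)$ term accounts for the $\Theta(k)$ commodities each needing its own source--sink bookkeeping and its own $\tilde O(n)$-time path extraction per routing step across the $O(\log n/\epsilon)$ scales (so $\tilde O(kn)$ per MWU round, $\tilde O(1/\epsilon)$ effective rounds per edge in the amortized count). Summing: the concurrent-flow running time is $\tilde O((n+k)n\log C)$ as claimed.

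The main obstacle I anticipate is \emph{not} the MWU analysis itself --- that is textbook --- but rather making the interface between the MWU length-updates and the decremental SSSP data structure rigorous while preserving adaptivity. The MWU algorithm's length increases depend on which paths were returned, which in turn depend on earlier outputs of the SSSP structure, so the update sequence fed to SSSP is genuinely \emph{adaptive}; this is precisely why we need \Cref{thm:denseGraphSSSPWithPathQueryMainResult} to be an adaptive algorithm (it is, since it is built from our adaptive spanner of \Cref{thm:SpannerMainResult} plus deterministic heavy-light machinery), and I would need to state this dependence carefully. A secondary technical point is the discretization of continuous vertex lengths into a decremental sequence: one must argue that rounding lengths to powers of $(1+\epsilon')$ and triggering a ``weight-class change'' (implemented as a delete-then-reinsert, or simply as the weighted decremental updates the SSSP structure already supports via its $O(\log W)$ weight-bucketing) introduces only a $(1+o(1))$ distortion and a total of $\tilde O(m/\epsilon)$ such events, so it does not blow up the total update time. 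Once those two bookkeeping lemmas are in place, the running-time bounds follow by the amortization argument outlined above, and I would defer the full details to \Cref{sec:multiflow,sec:multiflow_proof}.
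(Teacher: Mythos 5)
There is a genuine gap in your plan, and it is precisely the point where the paper takes a different route. Your oracle, \Cref{thm:denseGraphSSSPWithPathQueryMainResult}, is a \emph{fixed-source} decremental structure with total update time $\tilde O(n^2\log W)$ \emph{per source}. Fleischer's MWU for maximum throughput/concurrent flow needs approximate shortest paths between up to $\min\{k,n\}$ distinct source--sink pairs under the evolving lengths, so ``one SSSP structure per commodity'' costs $\tilde\Omega(\min\{k,n\}\cdot n^2)$, far above the claimed $\tilde O(n^2)$ and $\tilde O((n+k)n\log C)$; and the super-source shortcut you offer does not preserve commodity pairing (a shortest super-source--super-sink path may join $s_i$ to $t_j$, $i\neq j$, which is not a legal routing path for either commodity). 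This is exactly why the paper does \emph{not} prove \Cref{thm:multiflow} from the SSSP result (that route is only used for the single-commodity applications in \Cref{thm:intro:flow applications}). Instead it keeps a single, source-independent data structure: the adaptive spanner of \Cref{thm:SpannerMainResult} (query version), with $\polylog(n)$ amortized update time and $\tilde O(n)$-time queries returning a $\polylog(n)$-stretch spanner of size $\tilde O(n)$. After the standard vertex-to-edge capacity reduction, Fleischer's lengths live only on the finite-capacity edges $(v^{in},v^{out})$ and are monotonically increasing, so they are pulled back to edge weights of $G$ and an edge's weight is pushed to the spanner structure only when it has doubled ($\tilde O(m\epsilon^{-2})$ weight updates in total); each of the $\tilde O((\tau+\min\{k,n\})\epsilon^{-2})$ (throughput, with $\tau=n$ finite-capacity edges) or $\tilde O((\tau+k)\epsilon^{-2}\log\Delta)$ (concurrent) shortest-path requests is answered by querying the spanner and running Dijkstra on it from the required source in $\tilde O(n)$ time. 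The $\log C$ in part (2) comes from an initial crude $O(kn)$-approximation of the optimum obtained by capacity-thresholded connectivity in $O((m+k)\log C)$ time, not from a binary search inside the MWU.

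A second, related inconsistency: you attribute the $\polylog(n)$ approximation to the reported path being routed through the heavy-part spanner inside the SSSP structure, but \Cref{thm:denseGraphSSSPWithPathQueryMainResult} reports genuine $(1+\epsilon)$-approximate paths (the spanner stretch is absorbed by the choice $\delta=\epsilon/2\alpha$), so if your oracle-per-commodity plan worked it would yield a $(1+O(\epsilon))$-approximation, not $\polylog(n)$ --- a sign that the accounting of where the approximation loss enters is off. In the paper's argument the $\polylog(n)$ factor has a single, clean source: the stretch $t$ of the queried spanner, combined with the black-box fact (Fleischer/Garg--K\"onemann with an $\alpha$-approximate shortest-path oracle) that the MWU returns an $\alpha(1+O(\epsilon))$-approximate flow. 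Your adaptivity concern is correct and is also the reason the paper's spanner structure must work against an adaptive adversary, but fixing the multi-source issue is the essential missing idea.
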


In fact, the algorithm above for maximum concurrent flow with $k$
commodities gives us an explicit flow-path decomposition of the multi-commodity
flow. Therefore, we obtain the following using standard randomized
rounding technique (proven in \Cref{sec:multiflow_proof}).
\begin{corollary}
\label{cor:cong min}In undirected vertex-capacitated graphs with
$n$ vertices and $m$ edges, there is a \textup{$\polylog(n)$}-approximate
algorithm for congestion minimization with $k$ demand pairs in $\tilde O((n+k)n\log C)$
time where $C$ is the ratio of largest finite capacity to smallest
finite capacity. 
\end{corollary}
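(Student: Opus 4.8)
\textbf{Plan for the proof of Corollary~\ref{cor:cong min}.}

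The plan is to apply the classical randomized-rounding reduction from fractional congestion minimization to integral congestion minimization, using the explicit flow-path decomposition guaranteed by part~(2) of \Cref{thm:multiflow}. First I would run the $\polylog(n)$-approximate maximum concurrent flow algorithm of \Cref{thm:multiflow}(2) on the given instance with the $k$ source-sink demand pairs, treating each demand pair $(s_i,t_i)$ as one commodity with unit demand. This returns, in $\tilde O((n+k)n\log C)$ time, a fractional routing: for each commodity $i$ a collection of weighted $s_i$–$t_i$ paths $\{(P_{i,j}, f_{i,j})\}_j$ with $\sum_j f_{i,j} = \lambda^*$ for a common throughput value $\lambda^*$ that is within a $\polylog(n)$ factor of the optimum concurrent flow value, and such that the total flow through every vertex $v$ is at most its capacity $c(v)$. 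Rescaling by $1/\lambda^*$, we may assume $\sum_j f_{i,j}=1$ for every $i$, so the $f_{i,j}$ form a probability distribution over $s_i$–$t_i$ paths, and the expected vertex load becomes at most $c(v)/\lambda^* = \polylog(n)\cdot c(v)/\mathrm{OPT}_{\mathrm{conc}}$, where $\mathrm{OPT}_{\mathrm{conc}}$ is itself a lower bound on the optimal integral congestion (any integral routing of the $k$ pairs is in particular a feasible concurrent flow of value $1/\mathrm{cong}^*$).

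Next I would do the rounding: for each commodity $i$ independently, pick exactly one path $P_i = P_{i,j}$ with probability $f_{i,j}$. This yields an integral set of $k$ paths $\mathcal P = \{P_1,\dots,P_k\}$. The congestion of $\mathcal P$ at a vertex $v$ is $X_v := \frac{1}{c(v)}\,|\{i : v\in P_i\}|$, which is a sum of $k$ independent $\{0,1/c(v)\}$-valued random variables with mean $\mathbb E[X_v] \le \polylog(n)/\mathrm{cong}^*$. Since $c(v)\ge 1$ so each summand is at most $1$, a standard Chernoff bound (in the scaled form, e.g.\ the one stated as \Cref{thm:scaledChernoff}, applied in the upper-tail direction) shows that if $\mathbb E[X_v]\ge \Omega(\log n)$ then $X_v = O(\mathbb E[X_v])$ with probability $1-n^{-\Omega(1)}$; and if $\mathbb E[X_v]$ is small we instead get $X_v = O(\log n)$ with the same probability. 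Either way, $X_v = O(\log n)\cdot \max\{\mathbb E[X_v],1\} = O(\log n)\cdot\polylog(n)/\mathrm{cong}^* + O(\log n)$. Taking a union bound over the $n$ vertices, with high probability \emph{every} vertex has congestion $O(\log n\cdot\polylog(n))$ times the optimal congestion (using also that $\mathrm{cong}^*\ge$ a trivial constant whenever any demand is routed, or handling the degenerate $\mathrm{cong}^*<1$ case separately by the additive $O(\log n)$ term being absorbed). This gives a $\polylog(n)$-approximate integral solution.

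For the running time: the concurrent flow call costs $\tilde O((n+k)n\log C)$; the flow-path decomposition it returns has total description size $\tilde O((n+k)n\log C)$ as well (each path has length $O(n)$ and the number of paths per commodity can be kept $\tilde O(n)$ by the decomposition, or we truncate to that many), so sampling one path per commodity and computing all the $X_v$ values costs at most $\tilde O((n+k)n)$. Hence the total time is $\tilde O((n+k)n\log C)$ as claimed. The main obstacle — and the only place that needs genuine care rather than boilerplate — is making the Chernoff argument clean uniformly in the regime of $\mathbb E[X_v]$, i.e.\ correctly combining the multiplicative guarantee (good when the expected load is $\Omega(\log n)$) with the additive $O(\log n)$ guarantee (needed when the expected load is $o(\log n)$), and confirming that the resulting additive slack is still only a $\polylog(n)$ \emph{multiplicative} loss against $\mathrm{cong}^*$; this uses the lower bound $\mathrm{cong}^*\ge \mathrm{OPT}_{\mathrm{conc}}^{-1}$ together with the fact that $\mathrm{OPT}_{\mathrm{conc}}\le$ (number of vertices on a shortest path) so $\mathrm{cong}^*$ is not too tiny, or alternatively one simply notes that an integral solution of congestion $O(\log n)$ is automatically $\polylog(n)$-approximate whenever $\mathrm{cong}^*\ge 1$, and that $\mathrm{cong}^*<1$ is impossible once at least one path must be routed through a capacitated vertex. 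A secondary, purely bookkeeping point is ensuring the path decomposition produced by \Cref{thm:multiflow}(2) is explicit with polynomially many paths; if the stated algorithm already returns such a decomposition (as the section claims), this is immediate.
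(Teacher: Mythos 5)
Your proposal is correct and follows essentially the same route as the paper: run the $\polylog(n)$-approximate maximum concurrent flow of \Cref{thm:multiflow}(2) with unit demands, scale each commodity to route one unit so the fractional vertex congestion is $\polylog(n)\cdot\mathsf{OPT}$ (using that an optimal integral routing yields a concurrent flow of value $1/\mathsf{OPT}$), and then do Raghavan–Thompson randomized rounding by sampling one path per commodity from the explicit flow-path decomposition, with a Chernoff bound and union bound over vertices, all within the $\tilde O((n+k)n\log C)$ time of the flow call. Your extra discussion of the small-expected-load regime is a corner case the paper's proof simply glosses over, and it does not change the argument.
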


At the high-level, our algorithms are based on multiplicative weight
update framework for computing flow problems as used previously in
\cite{GargK07,Fleischer00,Karakostas08,Madry10}. Our approach is
essentially the same as the approximate max flow algorithm in vertex-capacitated
graphs by Chuzhoy and Khanna \cite{ChuzhoyK19}. We give the proof
of \Cref{thm:multiflow} in \Cref{sec:multiflow_proof} as we just follow
the known technique in literature.

\section*{Acknowledgement}
We thank Julia Chuzhoy and Gramoz Goranci for discussions. 

This project has received funding from the European Research Council (ERC) under the European
Unions Horizon 2020 research and innovation programme under grant agreement No 715672. Danupon
Nanongkai was also partially supported by the Swedish Research Council (Reg. No. 2015-04659 and 2019-05622). He Sun is supported by EPSRC Early Career Fellowship~(EP/T00729X/1). Aaron Bernstein is supported by NSF Award 1942010 and the Simon's Group for Algorithms \& Geometry. Maximilian Probst Gutenberg is supported by Basic Algorithms Research Copenhagen (BARC), supported by Thorup's Investigator Grant from the Villum Foundation under Grant No. 16582.

\newpage
\part{Appendices}
\appendix

\section{Verifying Sparsifier Properties}
\label{sec:properties}

Here we verify that the sparsifiers considered in this paper
satisfy the conditions \property{} defined in \Cref{define_properties}.

\subsection{Spanners}
We start by verifying spanners, as distances are a very intuitive graph property,
so that verification of the properties should be the easiest to understand.

\spannerIsAGraphProblem*

\begin{proof}
We prove properties \property{} in sequence.
\paragraph{Perturbation Property}
Let $G$ be a graph and $G'$ be the same graph 
where each edge $e$ has its cost multiplied by some factor $1 \le f_e \le e^\epsilon$.
Because of $f_e \ge 1$ the distances in $G'$ can not be shorter than the distances in $G$.
Further, the distances in $G'$ can at most be larger by a factor of $e^\epsilon$ than the distances in $G$.
Thus we have $G' \in \H(G, \epsilon)$ and $e^\epsilon \cdot G \in \H(G', \epsilon)$,
so property \eqref{con:identity} is satisfied.

\paragraph{Union Property}
Let $G_1,...,G_k$ be some graphs, $ s_1,...,s_k \in \R_{\geq}\setminus \{0\}$, and $G = \bigcup_{i=1}^k s_i \cdot G_i$. 
Let $H_i \in \H(G_i, \epsilon)$ for all $i=1,...,k$. 
For any $s,t \in V$ and a shortest path in $G$ connecting them,
we can decompose the path into segments $p_1,p_2,\ldots$,  
where each $p_i$ is contained in some $s_{G_{j_i}} \cdot G_{j_i}$.
Let $v_i, v_{i+1}$ be the first and last node of $p_i$,
then $$\dist_G(s,t) 
= 
\sum_i s_{j_i} \dist_{G_{i_j}}(v_i, v_{i+1})
\le
\sum_i s_{j_i} \dist_{H_{i_j}}(v_i, v_{i+1})
\le
\sum_i s_{j_i} e^\epsilon \dist_{G_{i_j}}(v_i, v_{i+1})
=
e^\epsilon \dist_G(s,t),
$$
so we have $\bigcup_i s_i H_i \in \H(\bigcup_i s_i G_i, \epsilon)$ 
and property \eqref{con:union} is satisfied.

\paragraph{Transition Property}
Let $H_1,H_2 \in \H(G, \epsilon)$ and $H \subset H_1$,
then $H \cup H_2 \in \H(G, \epsilon)$,
because $\dist_{H \cup H_2}(s,t) \le \dist_{H_2}(s,t) \le e^\epsilon \dist_{G}(s,t)$
and further $\dist_G(s,t) \le \dist_{H_1 \cup H_2}(s,t) \le \dist_{H \cup H_2}(s,t)$,
where we use the previous property \eqref{con:union} 
and the fact that distances can only increase when removing edges.
So property \eqref{con:increment} is satisfied.

\paragraph{Transitivity Property}
Let $H \in \H(G, \epsilon)$ and $H' \in \H(H, \delta)$,
then for any pairs $s,t$ we have
$$
\dist_{G}(s,t)
\le
\dist_{H}(s,t)
\le
\dist_{H'}(s,t)
\le
e^\delta \dist_{H}(s,t)
\le
e^{\delta+\epsilon} \dist_{G}(s,t).
$$
thus property \eqref{con:nested} is satisfied.

\paragraph{Contraction Property}
At last, consider property \eqref{con:contraction}.
Let $H \in \H(G, \epsilon)$ and $W \subset V$. 
Let $G'$ be the graph $G$ after contracting the set $W$,
and likewise let $H'$ be the graph $H$ after contracting the same set of nodes.
For any $s,t \in V$ and some shortest path in $G$ connecting them,
let $u \in W$ be the first node in $W$ that is visited and $v$ be the last visited node in $W$.
Then 
\begin{align*}
\dist_{G'}(s,t) 
&= 
\dist_G(s,u) + \dist_G(v,t)
\le
\dist_H(s,u) + \dist_H(v,t)
=
\dist_{H'}(s, t) \\
&=
\dist_H(s,u) + \dist_H(v,t)
\le
e^\epsilon(\dist_G(s,u) + \dist_G(v,t))
=
e^\epsilon \dist_{G'}(s,t),
\end{align*}
so $H' \in \H(G', \epsilon)$ which means property \eqref{con:contraction} is satisfied.
\end{proof}

\subsection{Spectral Sparsifiers}

Next, we verify that spectral sparsifiers satisfy the properties \property.
Unlike the previous lemma, this proof is based on linear algebra.

\spectralSparsifierIsAGraphProblem*

\begin{proof}
For any vertex $u$, let $\vec{e}_u$ be the $u$-th standard unit vector, i.e., $\vec{e}_u(v)=1$ if $v=u$, and $\vec{e}_u(v)=0$ otherwise. Then, the Laplacian matrix $L_G$ of graph $G=(V,E)$ can be written as 
$$L_G = \sum_{\{u,v\} \in E} w_G(u,v) (\vec{e}_u - \vec{e}_v) (\vec{e}_u - \vec{e}_v)^\top,$$
where $w_G(u,v)$ is the weight of edge $\{u,v\}$.
A spectral sparsifier of $G$, is a subgraph $H$ with edge weights $w_H$ such that for all vectors $\vec f \in \R^{|V|}$ we have
$$
e^{-\epsilon} (\vec f)^\top L_G \vec f \le (\vec f)^\top L_H \vec f \le e^{\epsilon} (\vec f)^\top L_G \vec f.
$$

\paragraph{Pertubation Property}

Property \eqref{con:identity} is satisfied, because for any $G'$ with the same edges as $G$, but scaled by up to $e^{\pm \epsilon}$ satisfies
$$
\vec f^\top L_G \vec f
= 
\sum_{\{u,v\} \in E} w_G(u,v) (\vec{f}_u - \vec{f}_v)^2
\le
e^{\epsilon}
\sum_{\{u,v\} \in E} w_{G'}(u,v) (\vec{f}_u - \vec{f}_v)^2
=
e^{\epsilon}
\vec f^\top L_{G'} \vec f,
$$
$$
\vec f^\top L_{G'} \vec f
= 
\sum_{\{u,v\} \in E} w_{G'}(u,v) (\vec{f}_u - \vec{f}_v)^2
\le
e^{\epsilon}
\sum_{\{u,v\} \in E} w_G(u,v) (\vec{f}_u - \vec{f}_v)^2
=
e^{\epsilon}
\vec f^\top L_{G} \vec f.
$$

\paragraph{Transitivity Property}

For $H \in \H(G,\epsilon)$ and $H' \in \H(H,\delta)$,
it holds that $H' \in \H(G,\epsilon+\delta)$, since
$$
\vec f^\top L_{H'} \vec f
\le
e^{\delta} \vec f^\top L_{H} \vec f
=
e^{\epsilon}e^{\delta}
\vec f^\top L_{G} \vec f,
$$
and similarly it holds that 
  $\vec f^\top L_{H'} \vec f \ge e^{-(\epsilon+\delta)} \vec f^\top L_{G} \vec f$.
This implies that Property~\eqref{con:nested} is satisfied.
  
\paragraph{Union Property}
 
To study Property~\eqref{con:union}, notice that, 
for graphs $G_1,...,G_d$ and scalars $s_1,...,s_d \in \R$, 
we could define $G := \bigcup_i G_i$ and have that $L_{G} = \sum_i s_i L_{G'}$. 
Therefore Property~\eqref{con:union} is satisfied.

\paragraph{Transition Property}
  
For $H_1,H_2 \in \H(G, \epsilon)$ and $H \subset H_1$ 
we have 
\begin{align*}
x^\top L((e^{\delta}-1) H \cup H_2) x  &
= 
(e^{\delta}-1) x^\top L_H x  + x^\top L_{H_2}x 
\le 
(e^{\delta}-1) x^\top L_{H_1} x + x^\top L_{H_2} x\\
& 
\le 
e^\epsilon ((e^{\delta}-1) x^\top L_G x + x^\top L_G x)
=
e^{\epsilon + \delta} x^\top L_Gx
\end{align*}
where we used that graph Laplacians are PSD, so $x^\top L_H x \le x^\top L_H x + x^\top L_{H_1 \setminus H} x = x^\top L_{H_1} x$.
Conversely, we have
$$
x^\top L((e^{\delta}-1) H \cup H_2) x 
=
(e^{\delta}-1) x^\top L_H x  + x^\top L_{H_2}x 
\ge
x^\top L_{H_2}x 
\ge
x^\top L_{G}x.
$$
In summary, we obtain that $(e^\delta - 1) H \cup H_2 \in \H(G, \epsilon+\delta)$, 
so property \eqref{con:increment} is true.

\paragraph{Contraction Property}

And lastly for \eqref{con:contraction} consider the following.
Let $G = (V,E)$ be a graph and $H \in \H(G, \epsilon)$.
Let $G' = (V', E')$ be a graph obtained from $G$ by contracting some set of nodes $X \subset V$
(so $V' = (V \setminus X) \cup \{ X \}$),
and let $H'$ be the graph obtained from $H$ when contracting the same set of nodes.
For any $f' \in \R^{V'}$ let $f \in \R^V$ by setting $f_v = f'_v$ for $v \in V \setminus X$ and $f_v = f'_X$ for $v \in X$.
Further let $w_G(u,v) = 0$ if edge $\{u,v\}$ does not exist. Then
\begin{align*}
\lefteqn{f'^\top L_{G'} f'}\\
&=
\sum_{u',v' \in V'} w_{G'}(u',v') (\vec{f'}_{u'} - \vec{f'}_{v'})^2 \\
&=
\sum_{u',v' \in V\setminus X} w_{G'}(u',v') (\vec{f'}_{u'} - \vec{f'}_{v'})^2
+
\sum_{v' \in V\setminus X} w_{G'}(X,v') (\vec{f'}_{X} - \vec{f'}_{v'})^2
+
w_{G'}(X,X) (\vec{f'}_{X} - \vec{f'}_{X})^2 \\
&=
\sum_{u,v \in V\setminus X} w_{G}(u,v) (\vec{f}_{u} - \vec{f}_{v})^2
+
\sum_{u \in X,v \in V\setminus X} w_{G}(u,v) (\vec{f}_{u} - \vec{f}_{v})^2
+
\sum_{u,v \in X} w_{G}(u,v) (\vec{f}_{u} - \vec{f}_{v})^2 \\
&=
\sum_{u,v \in V} w_{G}(u,v) (\vec{f}_{u} - \vec{f}_{v})^2 \\
&=
f^\top L_G f
\end{align*}
Hence we have
$$
f'^\top L_{G'} f' = f^\top L_G f \le e^\epsilon f^\top L_H f = f'^\top L_{H'} f'
$$
and likewise $f'^\top L_{G'} f' \ge e^{-\epsilon} f'^\top L_{H'} f'$, so $H' \in \H(G', \epsilon)$.
Thus \eqref{con:contraction} is satisfied.
\end{proof}

\subsection{Cut Sparsifiers}

The last remaining sparsifier, that is considered in this paper,
are cut-sparsifiers.
The proof for cut-sparsifiers is based on the previous proof of spectral-sparsifiers.

\cutSparsifierIsAGraphProperty*

\begin{proof}
A cut sparsifier $H$ of $G$ is a graph with the property that for any subset $W \subset V$
we have
$$\delta_G(W) \le \delta_H(W) \le e^\epsilon \delta_H(W).$$
If we define for a set $W \subset V$ a vector $f_v = 1$ for $v \in W$ and $f_v = 0$ otherwise,
then $f^\top L_G f$ is exactly the size of the cut in $G$, when $L_G$ is the Laplacian of $G$.
Hence we can restate the cut sparsifier property as
$$f^\top L_H f \le f^\top L_G f \le e^\epsilon f^\top L_H f$$
for all vector $f \in \{0,1\}^V$.

The proof of \Cref{lma:cutSparsifierIsAGraphProblem}
is now identical to the proof of \Cref{lem:spectralSparsifierIsAGraphProblem},
except that we restrict the vector to come from $\{0,1\}^V$.
\end{proof}

\section{Implications of Problem Properties}

Here we prove some facts by applying certain combinations of the properties above. 
Our first observation shows that edge insertions are easy to handle. 
Specifically, assume that we have a sparsifier $H$ of some graph $G$ 
and some edge $e$ is inserted into $G$, 
then $H\cup \{e\}$ is a valid sparsifier of $G\cup \{e\}$.

\begin{lemma}[Insertion Lemma]\label{lem:insertion}
If $\H$ satisfies \eqref{con:identity} and \eqref{con:union},
then for any $H \in \H(G,\epsilon)$ and $G'$ we have $H \cup G' \in \H(G \cup G', \epsilon)$.
In particular, it holds for any edge $e$ that  $H \cup e \in \H(H \cup e, \epsilon)$.
\end{lemma}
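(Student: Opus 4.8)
The statement to prove is the Insertion Lemma: if $\H$ satisfies \eqref{con:identity} and \eqref{con:union}, then for any $H \in \H(G,\epsilon)$ and any graph $G'$ we have $H \cup G' \in \H(G \cup G', \epsilon)$, and in particular $H \cup e \in \H(G \cup e, \epsilon)$ for a single edge $e$. The plan is to write $G \cup G'$ as a union of two graphs on which we already know approximations, and then invoke the union property directly.

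First I would set up the decomposition. Consider the two graphs $G$ and $G'$ (viewed on the common vertex set, with edge sets of $G$ and $G'$ respectively, so that $G \cup G'$ is exactly their union of edge sets). By hypothesis $H \in \H(G,\epsilon)$. For the second piece, I need some element of $\H(G',\epsilon)$; the natural choice is $G'$ itself. This is where property \eqref{con:identity} enters: as noted in the text immediately after its statement, \eqref{con:identity} implies $G' \in \H(G',\epsilon)$ for every $\epsilon > 0$ (take the scaling factors $f_{u,v} = 1$, so $G' $ is $G'$ with edges scaled by up to $e^\epsilon$, hence $G' \in \H(G',\epsilon)$). Thus I have $H \in \H(G,\epsilon)$ and $G' \in \H(G',\epsilon)$.

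Now I would apply the union property \eqref{con:union} with $k = 2$, taking $G_1 = G$, $G_2 = G'$, $H_1 = H$, $H_2 = G'$, and scalars $s_1 = s_2 = 1$ (both nonnegative). The property gives $s_1 \cdot H_1 \cup s_2 \cdot H_2 = H \cup G' \in \H(s_1 \cdot G_1 \cup s_2 \cdot G_2, \epsilon) = \H(G \cup G', \epsilon)$, which is exactly the claim. For the ``in particular'' part, one would like $H \cup e \in \H(G \cup e, \epsilon)$; I note that the statement as written in the excerpt actually says $H \cup e \in \H(H \cup e, \epsilon)$, which is an immediate consequence of \eqref{con:identity} alone (any graph approximates itself), but the intended and more useful statement is $H \cup e \in \H(G \cup e, \epsilon)$, obtained by specializing $G'$ to the single-edge graph $\{e\}$.

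There is essentially no obstacle here: the lemma is a two-line consequence of the two named properties, and the only thing to be careful about is the bookkeeping of what ``$\cup$'' means for graphs (union of edge sets on a shared vertex set, matching the convention fixed in the Preliminaries) and making sure the degenerate scalings ($f_{u,v}=1$, $s_i=1$) are legitimate instances of \eqref{con:identity} and \eqref{con:union} — both allow these since \eqref{con:identity} permits scaling ``by up to $e^\epsilon$'' (so in particular by $1$) and \eqref{con:union} only requires $0 \le s_i$. So the write-up is simply: invoke \eqref{con:identity} to get $G' \in \H(G',\epsilon)$, then invoke \eqref{con:union} to combine it with $H \in \H(G,\epsilon)$.
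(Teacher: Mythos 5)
Your proof is correct and matches the paper's own argument exactly: invoke \eqref{con:identity} to get $G' \in \H(G',\epsilon)$, then apply \eqref{con:union} with unit scalars to conclude $H \cup G' \in \H(G \cup G', \epsilon)$, and specialize $G'$ to a single-edge graph for the second claim. Your side remark about the statement's apparent typo ($H \cup e \in \H(H \cup e,\epsilon)$ versus the intended $H \cup e \in \H(G \cup e,\epsilon)$) is a reasonable observation and does not affect the argument.
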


\begin{proof}
By \eqref{con:identity} we have $G' \in \H(G', \epsilon)$, 
so by \eqref{con:union} $H \cup G' \in \H(G \cup G', \epsilon)$.
The second claim follows by interpreting edge $e$ as a graph on two nodes.
\end{proof}

The next observation states that,
given a union of $k \approx 1/\epsilon$ many $\epsilon$-accurate sparsifiers $H_1,...,H_k$ of some graph $G$,
one can replace $H_1$   with any subgraph, 
and then the union of all $H_i$ is still a valid sparsifier of $G$. 
This observation allows us to interpolate between different sparsifiers of $G$, 
by slowly removing or inserting edges of of some sparsifier.

\begin{lemma}[Interpolation Lemma]\label{lem:interpolation}
Assume $\H$ satisfies \eqref{con:identity}, \eqref{con:union}, \eqref{con:increment}, and \eqref{con:nested}.
For $3+(e^{\epsilon/2}-1)^{-1}$ many $H_i \in \H(G,\epsilon)$ 
and some $H \subset H' \in \H(G,\epsilon)$, we have that
$$
\frac{1}{3+(e^{\epsilon/2}-1)^{-1}} \cdot \left(H \cup \bigcup_{i=1}^c H_i \right) \in \H(G, \epsilon).
$$
\end{lemma}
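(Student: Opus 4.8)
The plan is to reduce the statement to a clean application of the transition property \eqref{con:increment} followed by a rescaling via the transitivity property \eqref{con:nested} and the perturbation property \eqref{con:identity}. Write $c = 3 + (e^{\epsilon/2}-1)^{-1}$, and suppose $H \subset H' \in \H(G,\epsilon)$ and $H_1,\dots,H_{\lceil c\rceil} \in \H(G,\epsilon)$ (I will be slightly careful about whether we want $c$ or $\lceil c\rceil$ copies; the clean way is to work with $c$ and absorb rounding into the constant, exactly as is done in \Cref{lem:fully_dynamic_bounded_output}). The first step is to single out $H'$ and $H_1$ as the pair playing the role of the two elements $H_1,H_2$ in property \eqref{con:increment}, with $H$ as the ``remaining to be removed'' subgraph. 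Since $H, H' \in \H(G,\epsilon)$ and $H \subset H'$, property \eqref{con:increment} with $\delta = \epsilon/2$ gives $(e^{\epsilon/2}-1)H \cup H_1 \in \H(G, \tfrac{3\epsilon}{2})$.

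Next I would fold in the remaining copies $H_2,\dots$ to boost the coefficient on $H$ from $(e^{\epsilon/2}-1)$ up to $1$, so that $H$ appears with coefficient exactly $1$ and the whole object is $H \cup \bigcup_i H_i$ up to scaling. Concretely, by the union property \eqref{con:union} the union of the $(e^{\epsilon/2}-1)$-scaled pieces $(e^{\epsilon/2}-1)H \cup H_1$ over $(e^{\epsilon/2}-1)^{-1}$ many indices — together with the untouched $H_2, H_3$ to cover the ``$3$'' in the count and keep the total weight controlled — yields a graph of the form $H \cup \bigcup_i H_i$ (possibly with some $H_i$ appearing scaled down by $(e^{\epsilon/2}-1)$, which is harmless since scaling down a valid sparsifier and taking unions only helps, again by \eqref{con:union} and \eqref{con:identity}) that lies in $\H\!\left(\bigcup_i G, \tfrac{3\epsilon}{2}\right) = \H(c'\cdot G, \tfrac{3\epsilon}{2})$ for the appropriate multiplicity $c'$ of copies of $G$. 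Here the bookkeeping is the only fiddly part: one has to verify that the coefficients on the $H_i$ and on $H$ sum up so that after dividing by $c$ the total edge-weight scaling matches $G$ (not a larger multiple of $G$), and that the number of copies of $G$ on the left is exactly absorbed by the $1/c$ factor.

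The final step is to turn $\H(c\cdot G, \tfrac{3\epsilon}{2})$ into $\H(G,\epsilon)$. First, $\tfrac1c\big(H \cup \bigcup_i H_i\big)$ is then in $\H(G, \tfrac{3\epsilon}{2})$ by the union property applied to the scaling (equivalently, one normalizes $c\cdot G$ to $G$; scaling the whole statement by $1/c$ on both the sparsifier and the base graph is exactly \eqref{con:union} with a single summand). To get from $\tfrac{3\epsilon}{2}$ down to $\epsilon$ I would instead be more careful in the first step and use $\delta$ smaller than $\epsilon/2$, or — cleaner — observe that the parametrization in the lemma statement was chosen precisely so that the errors telescope: since each $H_i \in \H(G,\epsilon)$ and the transition introduces an additional additive $\delta$, choosing $\delta=\epsilon/2$ and the count $c = 3+(e^{\epsilon/2}-1)^{-1}$ makes the union of one $\delta$-degraded piece with the $\epsilon$-accurate pieces average out to an $\epsilon$-accurate sparsifier; formally this is a combination of \eqref{con:increment} to degrade one piece and \eqref{con:identity}/\eqref{con:union} to re-absorb. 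I expect the main obstacle to be purely this constant-chasing: making the coefficient on $H$ exactly $1$, the total multiplicity of $G$ exactly $c$, and the accumulated accuracy exactly $\epsilon$ all at once, rather than any conceptual difficulty — the properties \eqref{con:identity}, \eqref{con:union}, \eqref{con:increment}, \eqref{con:nested} are each invoked in an essentially routine way, and \eqref{con:nested} may in fact not be needed if the rescaling is handled entirely through \eqref{con:union} and \eqref{con:identity}.
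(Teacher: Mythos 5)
Your toolkit is the right one (\eqref{con:increment} plus \eqref{con:union} plus a global rescaling), but your argument inverts the order of operations that makes the proof work, and the repair steps you sketch do not go through. The paper's proof scales first: with $\Delta=1/c$, property \eqref{con:union} gives $\Delta\cdot\bigcup_{i=1}^{c}H_i\in\H(c\Delta\cdot G,\cdot)=\H(G,\cdot)$, and then \eqref{con:increment} is applied exactly once, with $H\subset H'$ and the \emph{small} parameter $\delta$ defined by $e^{\delta}-1=\Delta$; the entire purpose of the count $c=3+(e^{\epsilon/2}-1)^{-1}$ is that $\Delta\le e^{\epsilon/2}-1$, hence $\delta\le\epsilon/2$, and one finishes with \eqref{con:nested}. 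You instead apply \eqref{con:increment} first at $\delta=\epsilon/2$, getting $(e^{\epsilon/2}-1)H\cup H_1\in\H(G,3\epsilon/2)$, and then try to fix the coefficients afterwards. As literally described, that fix fails: unioning $(e^{\epsilon/2}-1)^{-1}$ copies of $(e^{\epsilon/2}-1)H\cup H_1$ multiplies $H_1$ by $(e^{\epsilon/2}-1)^{-1}$, so you do not obtain the target object $H\cup\bigcup_i H_i$, in which $H$ and every $H_i$ carry the same weight. The repair that works is to pair $H'$ with a \emph{different} $H_i$ in each of $(e^{\epsilon/2}-1)^{-1}$ separate applications of \eqref{con:increment} and union those with the remaining untouched copies -- which is exactly why the lemma supplies that many $H_i$ -- and your writeup does not identify this. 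Your parenthetical that down-scaling a sparsifier and taking unions ``only helps'' is also not a valid principle here: if $H_i\in\H(G,\epsilon)$ then $s\cdot H_i\in\H(s\cdot G,\epsilon)$, not $\H(G,\epsilon)$, so the lower-bound side of the guarantee is lost unless the scaled base graphs still sum to (a perturbation of) $G$.

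The second genuine gap is your accuracy endgame. The claim that the union of one $\delta$-degraded piece with $\epsilon$-accurate pieces ``averages out'' to an $\epsilon$-accurate sparsifier is false: \eqref{con:union} requires a common accuracy, and the union inherits the worst accuracy of its parts, never an average. The only mechanism that keeps the loss small is the scaling-first order, where the coefficient on $H$ is already $1/c$, so the one invocation of \eqref{con:increment} costs only $\delta=\ln(1+1/c)\le\epsilon/2$; in your order, choosing $\delta<\epsilon/2$ would force more than the $(e^{\epsilon/2}-1)^{-1}$ available copies, so that escape route is closed. (You are right that the constant-chasing is delicate -- even the paper's own derivation really lands at accuracy ``input accuracy $+\,\delta$'', and the lemma is invoked in \Cref{lem:fully_dynamic_bounded_output} with $\epsilon/2$-accurate pieces -- but the resolution is the scaling-first argument, not averaging.)
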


\begin{proof}
Define $c = 3+(e^{\epsilon/2}-1)^{-1}$, $\Delta = 1/c$,
and let $\delta$ be the parameter  such that $e^{\delta}-1 = \Delta$.
Then we have
$$
\Delta \cdot \bigcup_{i=1}^c H_i \in \H(c \cdot \Delta \cdot G, \epsilon) = \H(G, \epsilon/2)
$$
by property \eqref{con:union}.
This then also implies
$$
\Delta \cdot \left(H \cup \bigcup_{i=1}^c H_i\right) \in \H(G, \epsilon/2+\delta)
$$
by property \eqref{con:increment} and the definition of $\delta$. 
As $\Delta < e^{\epsilon/2}-1$ we also know that $\delta \le \epsilon / 2$,
so $\H(G, \epsilon/2+\delta) \subset \H(G, \epsilon)$ by \eqref{con:nested}.
In summary, we obtain
$$
\frac{1}{3+(e^{\epsilon/2}-1)^{-1}} \cdot \left(H \cup \bigcup_{i=1}^c H_i \right) \in \H(G, \epsilon).
$$
\end{proof}

\section{Omitted Proofs about Expanders}
\label{sec:omit}

\subsection{Proof of \Cref{prop:explicit expander}}
\begin{proof}
The expander construction by Margulis, Gabber and Galil is as follows.
For any number $k$, the graph $H'_{k^{2}}$ has the vertex set $\mathbb{Z}_{k}\times\mathbb{Z}_{k}$
where $\mathbb{Z}_{k}=\mathbb{Z}/k\mathbb{Z}$. For each vertex $(x,y)\in\mathbb{Z}_{k}\times\mathbb{Z}_{k}$,
its eight adjacent vertices are $(x\pm2y,y),(x\pm(2y+1),y),(x,y\pm2x),(x,y\pm(2x+1))$.
In \cite{GabberG81}, it is shown that $\Phi_{H'_{k^{2}}}=\Omega(1)$. 

Let $k$ be such that $(k-1)^{2}<n\le k^{2}$. Note that $n\ge10$,
so $k\ge4$, and so $(k-1)^{2}\ge k^{2}/2$. So we can contract disjoint
pairs of vertices in $H'_{k^{2}}$ and obtain a graph $H''_{n}$ with
$n$ vertices where each vertex has degree between $8$ and $16$. Note
that $\Phi_{H''_{n}}= \Omega\left(\Phi_{H'_{k^{2}}}\right)$. 
Let $t=\left\lfloor d/8\right\rfloor $
then $H_{n,d}$ is just a union of $t$ many copies of $H''_{n}$. So each
node in $H_{n,d}$ has degree at least $8t\ge d-8$ and at most $16t\le2d$.
Note that $\Phi_{H_{n,d}}=\Omega\left(\Phi_{H''_{n}}\right)$. It is clear that the
construction takes $O(nd)$ total time.
\end{proof}
 
\subsection{Proof of \Cref{prop:delta-reduction prop}}

\begin{proof}
All properties except the last one are clear. It remains to prove
the last property about the conductance of a graph. We first show
that $\Phi_{G'}=O(\Phi_{G})$. Let $(S,V \setminus S)$ be a minimum conductance
cut in $G$ with $\Phi_{G}(S)=\Phi_{G}$. Let $S'=\bigcup_{u\in S}X_{u}$.
We have that $\delta_{G}(S)=\delta_{G'}(S')$. Also, $\vol_{G}(S)=\Theta(\vol_{G'}(S'))$
and $\vol_{G}(V \setminus S)=\Theta(\vol_{G'}(V' \setminus S'))$. So 
\[
\Phi_{G'}\le\Phi_{G'}(S')=\Theta(\Phi_{G}(S))=\Theta(\Phi_{G}).
\]
Next, consider any cut $(S',V' \setminus S')$ in $G'$. We will show that either
$\Phi_{G'}(S')=\Omega(\Phi_{G})$ or there is a cut $(T,V \setminus T)$
in $G$ such that $\Phi_{G'}(S')=\Omega(\Phi_{G}(T))$. This will
prove that $\Phi_{G'}=\Omega(\Phi_{G})$ which will conclude the proof. 
Below
we treat $X_{u}$ as an expander itself and sometimes as a set of
nodes in $G'$. Assume without loss of generality  that $\vol_{G'}(S')\le\vol_{G'}(V' \setminus S')$.
Let $A=\{u\mid0<\vol_{G'}(S'\cap X_{u})\le2\vol_{G'}(X_{u}-S')\}$
be the set of nodes $u$ in $G$ where $S'$ intersects with $X_{u}$
but the overlap is at most $2/3$ with respect to the volume in $G'$. 

Below, we let $a\lesssim b$ to denote $a=O(b)$. There are two cases.
Let $\epsilon$ be a small constant to be chosen later. 

In the first case, suppose $\sum_{u\in A}\vol_{X_{u}}(S'\cap X_{u})\ge\epsilon\Phi_{G}\vol_{G'}(S')$.
Observe that $\delta_{X_{u}}(S'\cap X_{u})=\Omega(\vol_{X_{u}}(S'\cap X_{u}))$
because 
\[
\vol_{X_{u}}(S'\cap X_{u})\le\vol_{G'}(S'\cap X_{u})\underset{u\in A}{\le}2\vol_{G'}(X_{u} \setminus S')=O(\vol_{X_{u}}(X_{u} \setminus S'))
\]
 and $\Phi_{X_{u}}=\Omega(1)$ by \Cref{prop:explicit expander}. Therefore,
$\Phi_{G'}(S')=\Omega(\epsilon\Phi_{G})$ as 

\[
\delta_{G'}(S')\ge\sum_{u\in A}\delta_{X_{u}}(S'\cap X_{u})\ge\sum_{u\in A}\Omega(\vol_{X_{u}}(S'\cap X_{u}))=\Omega(\epsilon\Phi_{G}\vol_{G'}(S')).
\]

For the second case, suppose $\sum_{u\in A}\vol_{X_{u}}(S'\cap X_{u})\le\epsilon\Phi_{G}\vol_{G'}(S')$.
Let $T=\{u\mid\vol_{G'}(S'\cap X_{u})>2\vol_{G'}(X_{u}-S')\}$, $T'=\bigcup_{u\in T}X_{u}$,
$\overline{T'}=V' \setminus T'$ and $\overline{S'}=V' \setminus S'$. We will show that
(1) $\delta_{G'}(T')=O(\delta_{G'}(S')+\epsilon\Phi_{G}\vol_{G'}(S'))$,
(2) $\vol_{G'}(S')=O(\vol_{G'}(T'))$, and (3) $\vol_{G'}(\overline{S'})=O(\vol_{G'}(\overline{T'}))$.
This is enough because 
\begin{align*}
\Phi_{G'}(T') & =\frac{\delta_{G'}(T')}{\min\{\vol_{G'}(T'),\vol_{G'}(V'\setminus T')}\\
 & \lesssim\frac{\delta_{G'}(S')+\epsilon\Phi_{G}\vol_{G'}(S')}{\min\{\vol_{G'}(S'),\vol_{G'}(V'\setminus S')\}}\\
 & =\frac{\delta_{G'}(S')+\epsilon\Phi_{G}\vol_{G'}(S')}{\vol_{G'}(S')}\\
 & =\Phi_{G'}(S)+\epsilon\Phi_{G}.
\end{align*}
As $\Phi_{G}(T)=\Theta(\Phi_{G'}(T'))$, there is some constant $C$
such that  $\Phi_{G}(T)\le C(\Phi_{G'}(S')+\epsilon\Phi_{G}(T))$. By choosing
$\epsilon=1/2C$, we have that $\Phi_{G}(T)\le2C\cdot\Phi_{G'}(S')$
and so $\Phi_{G'}(S')=\Omega(\Phi_{G})$. In both cases, we have that
$\Phi_{G'}(S')=\Omega(\Phi_{G})$. Now, it remains to prove the three
claims above. 
\begin{claim}
We have the following: 
\end{claim}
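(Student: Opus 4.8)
The final statement asks to prove three claims that together complete the argument that $\Phi_{G'} = \Omega(\Phi_G)$ in the second case of the $\Delta$-reduction proof. Specifically, with $S'$ a minimum conductance cut in $G'$ (assumed to satisfy $\vol_{G'}(S') \le \vol_{G'}(\overline{S'})$), $A$ the set of original vertices $u$ where $S'$ overlaps $X_u$ only partially (at most $2/3$ by volume), $T = \{u \mid \vol_{G'}(S' \cap X_u) > 2\vol_{G'}(X_u \setminus S')\}$, and $T' = \bigcup_{u \in T} X_u$, we are in the regime $\sum_{u \in A} \vol_{X_u}(S' \cap X_u) \le \epsilon \Phi_G \vol_{G'}(S')$, and we must establish: (1) $\delta_{G'}(T') = O(\delta_{G'}(S') + \epsilon \Phi_G \vol_{G'}(S'))$; (2) $\vol_{G'}(S') = O(\vol_{G'}(T'))$; (3) $\vol_{G'}(\overline{S'}) = O(\vol_{G'}(\overline{T'}))$.

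The plan is to exploit that $T'$ is obtained from $S'$ by ``rounding'' each super-node $X_u$ entirely in or out of the cut according to which side holds the majority of its volume, and that the cost of this rounding is controlled by the cheap-boundary hypothesis on $A$. For claim (1), I would bound the symmetric difference of $S'$ and $T'$ inside each $X_u$: for $u \notin A$ the overlap $\vol_{G'}(S' \cap X_u)$ is either $0$ (then $u \notin T$, no contribution) or more than $2\vol_{G'}(X_u \setminus S')$ (then $u \in T$, and the ``error'' volume $\vol_{G'}(X_u \setminus S')$ is small relative to $\vol_{G'}(X_u)$); for $u \in A$ the overlap is at most $2\vol_{G'}(X_u \setminus S')$, so the rounding discards $S' \cap X_u$, whose total $G'$-volume is comparable (up to the constant-degree bound, since volumes in $X_u$ and in $G'$ differ by only a constant factor by property 2 of the lemma) to $\sum_{u \in A} \vol_{X_u}(S' \cap X_u) \le \epsilon\Phi_G\vol_{G'}(S')$. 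Each edge that crosses the cut $(T', \overline{T'})$ but not $(S', \overline{S'})$ must be incident to the symmetric difference region, so its endpoints lie in some $X_u$ with $u \in A$ or in an ``almost-full/almost-empty'' $X_u$, and using constant degree in $G'$ the extra boundary is $O$ of the symmetric-difference volume, giving (1). Claims (2) and (3) follow because the rounding changes each super-node's contribution by at most a constant factor when $u \in T$ (it gains the at-most-$1/3$ fraction $\vol_{G'}(X_u\setminus S')$, bounded by $\vol_{G'}(S'\cap X_u)/2$) and the $u \in A$ contributions removed from $S'$ are globally small by the hypothesis, so $\vol_{G'}(T') \ge \vol_{G'}(S') - \sum_{u\in A}\vol_{G'}(S'\cap X_u) - (\text{rounding loss}) = \Omega(\vol_{G'}(S'))$; symmetrically for the complements.

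Concretely, I would write the three claims as a displayed list and prove each in a short paragraph. First I would record the elementary inequality, valid for every $u$, that $\vol_{G'}(S' \triangle T' \text{ within } X_u)$ is $O(\vol_{X_u}(S'\cap X_u))$ when $u \in A$ and $O(\vol_{G'}(X_u \setminus S'))$ when $u \in T \setminus A$ (with the latter being at most half of $\vol_{G'}(S'\cap X_u)$), then sum over $u$ and invoke the case hypothesis together with $\vol_{X_u}(\cdot) = \Theta(\vol_{G'}(\cdot))$. Then claim (1) is a counting argument over boundary edges using $\max\deg_{G'} = O(1/\phi) = O(1)$ in the $\Delta = 9$ setting (or more precisely $\deg_{G'}(v) = \Theta(\Delta)$, which is $\Theta(1)$ here), claim (2) is the lower bound $\vol_{G'}(T') \ge \sum_{u \in T}\vol_{G'}(X_u) \ge \sum_{u: X_u \subseteq S'}\vol_{G'}(X_u) + \sum_{u \in T, u \text{ partial}}2\vol_{G'}(X_u\setminus S') = \Omega(\vol_{G'}(S'))$ after subtracting the small $A$-mass, and claim (3) is obtained by the mirror-image computation (noting $\overline{T'} = \bigcup_{u \notin T} X_u$ contains all $X_u$ with $\vol_{G'}(X_u \cap S') = 0$ and absorbs the majority-complement part of the $A$-nodes).

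The main obstacle I expect is claim (1): carefully attributing every edge of $\delta_{G'}(T')$ either to an edge of $\delta_{G'}(S')$ or to the ``small'' region, without double counting and while keeping track of the two different notions of volume ($\vol_{X_u}$ internal to the expander gadget versus $\vol_{G'}$ including the inter-gadget edges). The constant-degree property of $G'$ (here $\Delta = 9$, so degrees are $\Theta(1)$) is what makes this bookkeeping go through — each unit of misclassified volume can only generate $O(1)$ misclassified boundary edges — but one has to be precise about which super-nodes contribute: those in $A$ (partial overlap, handled by the hypothesis) and those in $T$ that are ``almost full'' (where $\vol_{G'}(X_u \setminus S')$ is a small fraction of $\vol_{G'}(X_u)$, but could still be nonzero and contribute boundary edges that already cross $S'$). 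I would make sure the proof explicitly notes that an edge crossing $(T',\overline{T'})$ with one endpoint in $X_u \subseteq T'$ and the other in $X_v \not\subseteq T'$ either already crosses $(S',\overline{S'})$, or has an endpoint in $(X_u \setminus S') \cup (X_v \cap S')$, which is part of the small region — closing the argument.
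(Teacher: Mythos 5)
Your claims (2) and (3) are handled essentially as in the paper: (2) follows from $\vol_{G'}(T')\ge\vol_{G'}(S'\cap T')=\vol_{G'}(S')-\vol_{G'}(S'\setminus T')$ together with the case hypothesis $\vol_{G'}(S'\setminus T')\lesssim\epsilon\Phi_G\vol_{G'}(S')$, and (3) follows from $\vol_{G'}(T'\setminus S')<\vol_{G'}(T'\cap S')/2\le\vol_{G'}(S')/2\le\vol_{G'}(\overline{S'})/2$, using $\vol_{G'}(S')\le\vol_{G'}(\overline{S'})$. That part of your plan is sound.

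The gap is in claim (1). You charge every new boundary edge of $T'$ to the symmetric difference $S'\triangle T'$, which is correct, and you control the $S'\setminus T'$ part by the case hypothesis, which is also correct. But for the $T'\setminus S'$ part, i.e.\ $\sum_{u\in T}\vol_{G'}(X_u\setminus S')$, the only bound you establish is that each such term is at most $\vol_{G'}(S'\cap X_u)/2$ (``a small fraction of $\vol_{G'}(X_u)$''). Summed over $u\in T$ this only gives $\vol_{G'}(T'\setminus S')\le\vol_{G'}(S')/2$, so your argument yields $\delta_{G'}(T')\le\delta_{G'}(S')+O(\epsilon\Phi_G\vol_{G'}(S'))+\Omega(\vol_{G'}(S'))$, and hence only $\Phi_{G'}(T')\lesssim\Phi_{G'}(S')+\epsilon\Phi_G+\Omega(1)$ --- useless for concluding $\Phi_{G'}(S')=\Omega(\Phi_G)$. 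Calling this region ``small'' is exactly what needs proof: a priori a constant fraction of every gadget in $T$ could lie outside $S'$. The missing idea, and the one the paper uses, is the \emph{internal expansion} of each gadget: for $u\in T$, the set $X_u\setminus S'$ is the minority side of the cut inside the $\Omega(1)$-expander $X_u$ (by the defining inequality of $T$ and $\vol_{X_u}(\cdot)=\Theta(\vol_{G'}(\cdot))$), so $\vol_{G'}(X_u\setminus S')\lesssim\vol_{X_u}(X_u\setminus S')\lesssim\delta_{X_u}(X_u\setminus S')$; these internal crossing edges of $X_u$ all cross the cut $(S',\overline{S'})$ in $G'$, and the gadgets are vertex-disjoint, so summing over $u\in T$ gives $\vol_{G'}(T'\setminus S')=O(\delta_{G'}(S'))$, which is precisely the bound claim (1) needs. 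Without this expander step your proof of (1) fails. A minor additional remark: you should not lean on ``constant degree ($\Delta=9$)'' --- the lemma is applied with general $\Delta$ (e.g.\ $\Delta=\phi^{-1}\min\deg$ in the uniform-degree decomposition), but no degree bound is needed anyway, since the number of new boundary edges is at most the $G'$-volume of the symmetric difference by definition of volume.
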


\begin{itemize}
\item $\delta_{G'}(T')=O(\delta_{G'}(S')+\epsilon\Phi_{G}\vol_{G'}(S'))$,
\item $\vol_{G'}(S')=O(\vol_{G'}(T'))$, and 
\item $\vol_{G'}(\overline{S'})=O(\vol_{G'}(\overline{T'}))$.
\end{itemize}
\begin{proof}
It is convenient to bound $\vol_{G'}(T'\setminus S')$ and $\vol_{G'}(S' \setminus T')$
first. We have 
\begin{align*}
\vol_{G'}(T'\setminus S') & =\sum_{u\in T}\vol_{G'}(X_{u} \setminus S')\\
 & \lesssim\sum_{u\in T}\vol_{X_{u}}(X_{u} \setminus S')\\
 & \lesssim\sum_{u\in T}\delta_{X_{u}}(X_{u} \setminus S')\\
 & \le\delta_{G'}(\overline{S'})=\delta_{G'}(S').
\end{align*}
We also have 
\begin{align*}
\vol_{G'}(T' \setminus S') & =\sum_{u\in T}\vol_{G'}(X_{u} \setminus S')\\
 & <\sum_{u\in T}\vol_{G'}(X_{u}\cap S')/2\\
 & =\vol_{G'}(T'\cap S')/2.
\end{align*}
Next, 
\begin{align*}
\vol_{G'}(S'\setminus T') & =\sum_{u\in A}\vol_{G'}(S'\cap X_{u})\\
 & \lesssim\sum_{u\in A}\vol_{X_{u}}(S'\cap X_{u})\\
 & \le\epsilon\Phi_{G}\vol_{G'}(S').
\end{align*}
\end{proof}
So we have 
\begin{align*}
\delta_{G'}(T') & \le\delta_{G'}(S')+\vol_{G'}(T'\setminus S')+\vol_{G'}(S'\setminus T')\\
 & =O\left(\delta_{G'}(S')+\epsilon\Phi_{G}\vol_{G'}(S')\right).
\end{align*}
Next, we have, 
\begin{align*}
\vol_{G'}(S') & \le\vol_{G'}(T')+\vol_{G'}(S'\setminus T')\\
 & \le\vol_{G'}(T')+O(\epsilon\Phi_{G}\vol_{G'}(S'))\\
 & \le\vol_{G'}(T')+\vol_{G'}(S')/2,
\end{align*}
and so $\vol_{G'}(S')=O(\vol_{G'}(T'))$. Lastly, we have
\begin{align*}
\vol_{G'}(\overline{S'}) & \le\vol_{G'}(\overline{T'})+\vol_{G'}(T' \setminus S')\\
 & <\vol_{G'}(\overline{T'})+\vol_{G'}(T'\cap S')/2\\
 & \le\vol_{G'}(\overline{T'})+\vol_{G'}(S')/2\\
 & \le\vol_{G'}(\overline{T'})+\vol_{G'}(\overline{S'})/2,
\end{align*}
and so $\vol_{G'}(\overline{S'})=O(\vol_{G'}(\overline{T'}))$.
\end{proof}

\section{Amortized Spectral Sparsifier against Adaptive Adversaries}
\label{sec:spectral:amortized}
\label{sec:spectral:query}
 
Here we prove our dynamic spectral sparsifier result against adaptive adversaries.

\thmSpectralQuery*

While the query time is slow,
note that this data structure has some interesting applications.
For example one can create a dynamic Laplacian system solver against adaptive adversaries,
by running a Laplacian solver on top of the sparsifier.
Then any update to the Laplacian system takes $\tilde{O}(1)$ amortized update time,
and one can then solve the system for any vector $b \in \R^n$ in $\tilde{O}(n)$ time.
As this dynamic algorithm holds against adaptive adversaries,
it can be used inside iterative algorithms such as \cite{CohenMSV17}.

\paragraph{Proof of \Cref{thm:query:spectral_sparsifier}}
 
Recall that \Cref{cor:ST-decomposition} says that we obtain a spectral sparsifier 
by simply sampling every edge proportional to the degrees of the endpoints.
\Cref{def:amortized:decremental_algorithm} guarantees us that the graph is of near uniform degree
and that the degrees of our graphs stay roughly the same throughout all updates.
Thus we obtain a queray-algorithm by simply sampling the graph
whenever a query is performed.

\begin{lemma}\label{lem:query:spectral_decremental}
For every $\phi$ there exists a decremental \emph{query}-algorithm on almost-uniform-degree $\phi$-expanders 
(see \Cref{def:amortized:decremental_algorithm}) 
that maintains a $e^\epsilon$-approximate spectral sparsifier
against \emph{adaptive} adversaries.  
The algorithm's pre-processing time is bounded by $O(m)$, 
every edge deletion takes $O(1)$ time, 
and it takes the algorithm 
$O(n \epsilon^{-2} \phi^{-5} \log^2 n)$ time 
to output a spectral sparsifier.
\end{lemma}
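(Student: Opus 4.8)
The statement we need to prove is \Cref{lem:query:spectral_decremental}: a decremental \emph{query}-algorithm on almost-uniform-degree $\phi$-expanders that, whenever queried, outputs an $e^\epsilon$-approximate spectral sparsifier against an adaptive adversary, with $O(m)$ preprocessing, $O(1)$ per edge deletion, and $O(n\epsilon^{-2}\phi^{-5}\log^2 n)$ query time. The core idea is that, because the algorithm only \emph{computes} the sparsifier at query time (rather than maintaining it continuously), it can afford to freshly resample the whole graph from scratch on each query; since the random bits are drawn only at query time and then discarded, no past randomness has been revealed to the adversary, so the sampling is genuinely independent of the adversary's update sequence, and \Cref{cor:ST-decomposition} applies directly.

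First I would describe the data structure. The algorithm simply stores the current graph $G$ under edge deletions — maintaining adjacency lists (and degrees) explicitly, which gives $O(m)$ preprocessing and $O(1)$ per deletion. It does no randomized work between queries. On a query, it does the following: by the promise of \Cref{def:amortized:decremental_algorithm}, at the query time $G$ is a $\phi$-expander with all degrees in $[\Omega(\phi\Delta), O(\phi^{-1}\Delta)]$ for the initial minimum degree $\Delta$; in particular every degree is $\Theta(\phi^{\pm 1}\Delta) = \tilde\Theta(\Delta)$ up to $\poly(1/\phi)$ factors. Set the sampling probability $p_{u,v}$ as in \eqref{eq:defpuv} of \Cref{cor:ST-decomposition}, i.e. $p_{u,v} \ge \min\{1, (24\log n/(\epsilon\Phi_G^2))^2 / \min\{\deg(u),\deg(v)\}\}$, using $\Phi_G \ge \phi$; since all degrees are within a $\poly(1/\phi)$ factor of each other one can in fact use a single uniform probability $p = \min\{1, \Theta(\phi^{-4}\epsilon^{-2}\log^2 n / \Delta)\}$ which dominates the required $p_{u,v}$ for every edge. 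Sample each edge independently with probability $p$, reweight a sampled edge by $1/p$, and return the result. By \Cref{cor:ST-decomposition} this is a $(1+\epsilon) \le e^\epsilon$-spectral sparsifier with probability $\ge 1 - n^{-3}$ (one can boost the exponent by enlarging the constant, so it holds w.h.p.), and the adaptive-adversary claim is immediate because the only randomness used is generated \emph{during} the query and never fed back to the adversary — the update sequence up to the query is fixed independently of it.

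Next I would bound the query time. Using the efficient subset-sampling primitive \Cref{thm:fast sampling}, sampling from the $m$ edges takes $O(pm\log n)$ time, and $pm \le \Theta(\phi^{-4}\epsilon^{-2}\log^2 n / \Delta) \cdot n\cdot O(\phi^{-1}\Delta)/2 = O(n\phi^{-5}\epsilon^{-2}\log^2 n)$ since $m = \frac12\sum_v \deg(v) \le n\cdot O(\phi^{-1}\Delta)/2$; the extra $\log n$ from \Cref{thm:fast sampling} is absorbed into the stated $\log^2 n$ by accounting more carefully, or one simply writes $O(n\epsilon^{-2}\phi^{-5}\log^2 n)$ as the bound, matching the lemma. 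Preprocessing and deletion costs are as claimed. Finally, \Cref{thm:query:spectral_sparsifier} follows by plugging \Cref{lem:query:spectral_decremental} into the amortized uniform-degree black-box reduction \Cref{thm:amortized:fully_dynamic_weighted}, using \Cref{lem:spectralSparsifierIsAGraphProblem} to check the problem satisfies the required properties, and choosing $\phi = 1/\log^4 n$: the reduction's $\tilde O(\phi^{-3}\log^7 n)\cdot T(n) + \cdots$ update-time overhead with $T(n)=O(1)$ gives $\polylog(n)$ amortized update time (the stated $\log^{19}n$), the query time blows up by the reduction's $O(Q(n\log^3 n)\log W)$ factor to $O(n\log^{25}(n)\epsilon^{-3}\log W)$ (the extra $\epsilon$ and $\log$ powers coming from the $\phi^{-5}$ with $\phi=1/\log^4 n$ and the weighted-graph split into $O(\epsilon^{-1}\log W)$ buckets), and the output is an $e^\epsilon$-spectral sparsifier of $G$.

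\textbf{Main obstacle.} The only genuinely delicate point is making sure the adaptive-adversary argument is airtight: one must argue that \Cref{cor:ST-decomposition} (stated for a fixed graph) is applicable even though the graph queried was produced by an adaptive adversary. The resolution is exactly the query-only structure — at the query moment the graph $G$ is some fixed deterministic object determined by the adversary's choices, which were made with zero information about the (not-yet-drawn) sampling randomness, so conditioned on $G$ the sampling is independent and \Cref{cor:ST-decomposition} applies verbatim. The rest is bookkeeping: verifying that a single uniform $p$ dominates all the per-edge thresholds (uses the almost-uniform-degree promise), and that $pm$ is bounded as claimed (uses $\Phi_G\ge\phi$, $\min\deg \ge \Delta$, and $\max\deg = O(\Delta/\phi)$).
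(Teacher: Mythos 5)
Your proposal matches the paper's own proof essentially verbatim: store the graph with $O(1)$-time deletions, and at query time freshly sample every edge with a single uniform probability, reweight by its inverse, and invoke \Cref{cor:ST-decomposition}, with the adaptive adversary handled exactly by the observation that the sampling randomness is drawn only at query time and is never revealed before it is used. The one slip is quantitative: since \Cref{def:amortized:decremental_algorithm} only guarantees $\min\deg \geq \Omega(\phi\Delta)$, the per-edge threshold in \eqref{eq:defpuv} can be as large as $\Theta\left(\epsilon^{-2}\phi^{-5}\log^2(n)/\Delta\right)$, so your uniform $p = \Theta\left(\epsilon^{-2}\phi^{-4}\log^2(n)/\Delta\right)$ is a factor $1/\phi$ too small to dominate it at vertices whose degree has dropped to $\Theta(\phi\Delta)$; the paper takes $p=\Theta\left(\epsilon^{-2}\phi^{-5}\Delta^{-1}\log^2 n\right)$, and this $\poly(1/\phi)$ bookkeeping (which in fact costs one more $\phi^{-1}$ in the expected sample size than either you or the paper's stated bound reflects) is immaterial downstream, since the lemma is only applied with $\phi = 1/\log^4 n$.
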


\begin{proof}
The algorithm is very simple.
During the preprocessing we check for the minimum degree of the given graph $G$.
Let $\Delta$ be the degree.

When a query is performed we sample each edge of $G$ by some probability 
$p = \Theta(\epsilon^{-2} \phi^{-5} \Delta^{-1} \log^2 n)$
and when an edge is included, it is scaled by $1/p$.
Let $H$ be the resulting graph, then we return $H$ as the spectral sparsifier.

\paragraph{Correctness}

Based on \Cref{def:amortized:decremental_algorithm}
we are guaranteed that $G$ is always a $\phi$-expander and
that the minimum degree is always bounded by $O(\phi \Delta)$.
By \Cref{cor:ST-decomposition}, with high probability the graph $H$ is an $e^\epsilon$-accurate spectral sparsifier. 
Note that by independently resampling the edges with each query,
the sample graph is independent of the updates, 
so the algorithm only holds against an adaptive adversary.

\paragraph{Complexity}

By Markov's inequality, the size of the resulting sparsifier is bounded by
$O(n \epsilon^{-2} \phi^{-5} \log^2 n)$,
which also bounds the complexity for sampling the graph $H$.
The pre-processing takes only $O(m)$ time
and an edge deletion takes only $O(1)$ time.
\end{proof}

Using the reduction from \Cref{sec:uniform_degree_reduction},
we then obtain our dynamic spectral sparsifier with amortized update time. 

\thmSpectralQuery*

\begin{proof}
The result follows directly from \Cref{thm:amortized:fully_dynamic_weighted}
and \Cref{lem:query:spectral_decremental}.
We choose $\phi = \log^4 n$
so then the preprocessing time is bounded by 
$O(P(m)) = O(m)$, and the update time is bounded by
$$
O\left(
	\phi^{-3} \log^7 n \cdot T(n) 
	+ \frac{P(m)}{m \phi^{3}} \log^{7} n
\right)
=
O\left(
	\log^{19} n 
\right).
$$
The time per query is bounded by
$$
O(Q(n \log^3 n) \epsilon^{-1} \log W)
=
O((n \log^3 (n) \epsilon^{-2} \phi^{-5} \log^2 (n))\cdot( \epsilon^{-1} \log W))
=
O(n \log^{25} (n) \epsilon^{-3} \log W).
$$
\end{proof}

\section{Proofs of \Cref{thm:multiflow} and \Cref{cor:cong min}}

\label{sec:multiflow_proof}

Below, we define some notations and then discuss the previous works on these problems.

\paragraph{Flow.}

We first define some basic notions. Although our algorithms are for
undirected vertex-capacitated graphs, we will define the problems
also on directed graphs and also edge capacitated graphs. Let $G=(V,E)$
be a directed graph,  and $s,t\in V$. An $s$-$t$ flow
$f\in\mathbb{R}_{\ge0}^{E}$ is such that, for any $v\in V\setminus\{s,t\}$,
the amount of flow into $v$ equals the amount of flow out of $v$,
i.e., $\sum_{(u,v)\in E}f(u,v)=\sum_{(v,u)\in E}f(v,u)$. Let $f(v)=\sum_{(u,v)\in E}f(u,v)$
be the \emph{amount of flow at $v$}. The \emph{value} $|f|$ of $f$
is $\sum_{(s,v)\in E}f(s,v)-\sum_{(v,s)\in E}f(v,s)$. Let $c\in(\mathbb{R}_{>0}\cup\{\infty\})^{E}$
denote edge capacities. A flow $f$ is \emph{edge-capacity-feasible}
if $f(e)\le c(e)$ for all $e\in E$. We can define capacities of
vertices as well. Let $c\in(\mathbb{R}_{>0}\cup\{\infty\})^{V}$ be
vertex capacities. We say that $f$ is \emph{vertex-capacity-feasible} if $f(v)\le c(v)$
for all $v\in V$. If $G$ is undirected, one way to define an $s$-$t$
flow is by treating $G$ as a directed graph where there are two directed
edge $(u,v)$ and $(v,u)$ for each undirected edge $\{u,v\}$. We
will assume that a flow only goes through an edge in one direction,
i.e., for each edge $\{u,v\}\in E$, either $f(u,v)=0$ or $f(v,u)=0$. 

For $1\le i\le k$, let $f_{i}$ be an $s_{i}$-$t_{i}$ flow. We
call $\F=\{f_{1},\dots,f_{k}\}$ a \emph{multi-commodity flow} or
\emph{$k$-commodity flow}. We call the $k$ pairs of vertices $(s_{1},t_{1}),\dots,(s_{k},t_{k})$
the \emph{demand pairs} of $\F$ and $\{s_{1},t_{1},\dots,s_{k},t_{k}\}$
a set of \emph{terminal vertices }of $\F$. We say that $\F$ is \emph{edge-capacity-feasible
}if $\sum_{i}f_{i}(e)\le c(e)$ for each $e\in E$, and is \emph{vertex-capacity-feasible}
if $\sum_{i}f_{i}(v)\le c(v)$ for all $v\in V$. 

\paragraph{Flow problems.}

In the \emph{maximum concurrent flow }problem with vertex capacities,
we are given a graph $G$ with capacities on vertices, and for $1\le j\le k$,
a demand pair $(s_{j},t_{j})$ and a \emph{target demand} $d_{i}$.
We need to find $\F=\{f_{1},\dots,f_{k}\}$ w.r.t. the demand pairs
$(s_{1},t_{1}),\dots,(s_{k},t_{k})$ such that $\F$ is vertex-capacity-feasible
and the value $|f_{j}|=\lambda d_{j}$ for each $j$. The goal is
to maximize $\lambda$. The congestion minimization problem is closely
related to this problem. In the \emph{congestion minimization} problem
with vertex capacities, we are given a graph $G$ with capacities
on vertices, and for $1\le j\le k$, a demand pair $(s_{j},t_{j})$.
We need to find a collection of \emph{paths }$\P=\{P_{1},\dots,P_{k}\}$
where $P_{i}$ is an $s_{i}$-$t_{i}$ path. The goal is to minimize
$\max_{v}\frac{|\{i\mid v\in P_{i}\}|}{c(v)}$, i.e. the maximum congestion
over all vertices $v$.

In the\emph{ maximum throughput flow }problem with vertex capacities,
we are given a graph $G$ with capacities on vertices, and a demand
pair $(s_{j},t_{j})$ for $1\le j\le k$. We need to find, for all
$j\in\{1,\dots,k\}$, an $s_{j}$-$t_{j}$ flow $f_{j}$ such that
$\F=\{f_{1},\dots,f_{k}\}$ is vertex-capacity-feasible. The goal
is to maximize the total value $\sum_{j}|f_{j}|$ of $\F$.

For convenience, we will assume that each terminal vertex has capacity
$\infty$. (In fact, this is a standard definition of the (single-commodity)
max flow problem.) If we want to also enforce capacity bounds at terminal
vertices, we can just add dummy terminal vertices $\{s'_{1},t'_{1},\dots,s'_{k},t'_{k}\}$,
add edges between $(s'_{i},s_{i})$ and $(t_{i},t'_{i})$ for each
$i$. The capacity at dummy terminal vertices are $\infty$, but we
can now enforce capacity bounds at terminal vertices.

The problems with edge capacities are defined analogously. 

\paragraph{Previous works.}

To see the context of the results, we need to discuss the result in
edge-capacitated graphs as well. 
Below, $n$ is the number of vertices, $m$ is the number of edges,
and $k$ is the number of demand pairs. We hide $\poly\log(nC)$ factors
where $C$ is the ratio between largest and smallest capacity.

In directed edge-capacitated graphs, Garg and K\"{o}nemann \cite{GargK07}
present $(1+\epsilon)$-approximation algorithms based on the multiplicative
weight update framework for maximum throughput and concurrent flow,
and many other problems. Then, Fleischer \cite{Fleischer00} gives
an improved maximum throughput flow algorithms with running time $\tilde{O}(m^{2}/\epsilon^{2})$.
She also improved maximum concurrent flow algorithms of \cite{GargK07},
but this is further improved by Karakostas \cite{Karakostas08} to
running time $\tilde{O}(m^{2}+kn)/\epsilon^{2})$ and only $\tilde{O}(m^{2}/\epsilon^{2})$
if only the value of optimal solution is needed. All these algorithm
are deterministic. Madry \cite{Madry10} observes that this framework
can be sped up nicely using dynamic algorithms. In \cite{Madry10},
he exploits a variant of dynamic all-pairs shortest paths algorithms,
and shows that a factor $m$ in previous running time can be replaced
by $n$. More precisely, he shows an $\tilde{O}(mn/\epsilon^{2})$-time
maximum throughput flow algorithm, and an $\tilde{O}((m+k)n/\epsilon^{2})$-time
maximum concurrent algorithm. His algorithms are randomized. When
$k$ is small, there are faster algorithms on undirected graphs: an
$\tilde{O}(mk/\epsilon)$-time max-concurrent flow algorithm by Sherman
\cite{Sherman17}, and an $\tilde{O}(m^{4/3}\poly(k/\epsilon))$-time
max-throughput flow algorithm by Kelner, Miller, and Peng \cite{KelnerMP12}.
Also in undirected graphs, by paying a large $O(\log^{4}n)$-approximation
factor, we can reduce both problems to the same problems on trees
by using the \emph{R\"{a}cke tree} which is from the context of oblivious
routing scheme. This reduction can be done in $\tilde{O}(m)$ \cite{RackeST14,Peng16}.
If the graph is a tree, it is not hard to see that both problems can
be solved in $\tilde{O}(m+k)$ time using, for example, the link-cut
tree \cite{SleatorT83}. Therefore, we obtain $O(\log^{4}n)$-approximation
algorithms in $\tilde{O}(m+k)$ time which is optimal up to poly-logarithmic factors.

Now, we turn to vertex-capacitated graphs. It turns out that the algorithms
by \cite{GargK07,Fleischer00,Karakostas08} can be sped up easily.
From the framework, the number of \emph{saturated augmentations} which
is $\tilde{O}(m/\epsilon^{2})$ in edge-capacitated graphs can be
reduced to $\tilde{O}(n/\epsilon^{2})$ in vertex-capacitated graphs.
Because of this, we can obtain a maximum throughput flow algorithm
with running time $\tilde{O}(mn/\epsilon^{2})$ from Fleischer's \cite{Fleischer00}
and a maximum concurrent flow algorithm with running time $\tilde{O}((m+k)n/\epsilon^{2})$.
These are deterministic, work in directed graphs, and, at the same
time, match the running time of algorithms by Madry \cite{Madry10}.
The recent algorithm \cite{ChuzhoyK19} for computing $(1+\epsilon)$-approximate
(single-commodity) max flow in vertex capacitated graph in $O(n^{2+o(1)}/\epsilon^{O(1)})$
expected time also exploits this.

In our algorithms, we reduce another $m$ factor to $\tilde O(n)$
by working on dynamic spanners of size $\tilde O(n)$ instead. Hence,
the time bound we obtain is $\tilde O(n^2/\epsilon^{2})$ and $\tilde O(n(n+k)/\epsilon^{2})$
for max-throughput and max-concurrent flow respectively. 

It is not clear how to obtain an  $o(n^{2})$-bound, even when the graph
is sparse, $k=1$, and we allow large approximation factor, say, $n^{o(1)}$.
For example, algorithms by Sherman \cite{Sherman17} and Kelner, Miller,
and Peng \cite{KelnerMP12} are quite specific to edge-capacitated
graphs. Also, there is provably no good quality oblivious routing
in vertex capacitated graphs \cite{HajiaghayiKRL07} like the R\"{a}cke
tree in edge-capacitated case.

\paragraph{Overview of the Approach.}
To avoid reproving the properties about algorithms based on multiplicative
weight update framework, we try to use the algorithms of Fleischer
\cite{Fleischer00} as a black-box as much as possible. However, her
algorithms are stated for directed edge-capacitated graphs. So, our
algorithms have three main steps as follows. 
\begin{enumerate}
\item We first show how to reduce the problem in undirected vertex-capacitated
graphs $G$ to directed edge-capacitated graphs $G''$. The reduction
is a very standard one.
\item Fleischer's algorithms mainly involve computing approximate shortest
paths in $G''$ where $G''$ undergoes edge weight updates. We show
how assign the weight to \emph{edges} of $G$ so that an approximate
shortest path in $G$ corresponds to an approximate shortest path in 
$G''$. Moreover, whenever when $G''$ is updated, the weight $G$
can be updated accordingly.
\item We will maintain a data structure from \Cref{thm:SpannerMainResult}  on $G$ that allows us to query a spanner $H$ of $G$ using our dynamic algorithm.
To find an approximate shortest path in $G$, we just first query for a spanner $H$ and then run Dijkstra's
in $H$, which is sparse.
\end{enumerate}

\subsection{Fleischer's algorithms}

We need some notations for describing Fleischer's algorithms. Let
$G=(V,E)$ be a directed graph with edge capacities $c\in(\mathbb{R}_{>0}\cup\{\infty\})^{E}$.
For any $s,t\in V$, let $\P_{j}$ be a set of all $s$-$t$ paths.
For any flow $f$ and path $P$, let $f(P)$ denote the amount of
flow in $f$ through path $P$. By writing $f(P)\gets f(P)+c$, this
means $f(e)\gets f(e)+c$ $\forall e\in P$. The algorithms maintain
lengths $\ell\in\mathbb{R}_{\ge0}^{E}$ on edges. For any path $P$,
let $\ell(P)=\sum_{e\in P}\ell(e)$ be the length of $P$. A $\ell$-shortest
$s$-$t$ path is a path $P^{*}=(s,\dots,t)$ which minimize $\ell(P)$
over all $\P_{s,t}$. An $\alpha$-approximate $\ell$-shortest path
$\tilde{P}$ is such that $\ell(\tilde{P})\le\alpha\cdot\ell(P^{*})$.

\subsubsection{Maximum throughput flow}

See \Cref{alg:throughput} for Fleischer's multi-commodity max-throughput
flow algorithm. We note that it is important to set $\ell(e)=0$ when
$c(e)=\infty$. This is not stated in  \cite{Fleischer00} because
they did not consider infinite capacity.

\begin{algorithm}
\textbf{Input: }A directed graph $G=(V,E)$ with edge capacities $c\in(\mathbb{R}_{>0}\cup\{\infty\})^{E}$,
the demand pair $(s_{1},t_{1}),\dots,(s_{k},t_{k})$ and an accuracy
parameter $0<\epsilon<1$.

\textbf{Output: }A multi-commodity flow $\F$ for $(s_{1},t_{1}),\dots,(s_{k},t_{k})$
where $\F$ is edge-capacity-feasible.

Let $\S=\{s_{1},\dots,s_{k}\}$ be the set of sources. 
\begin{enumerate}
\item Set $\delta=\frac{(1+\epsilon)}{((1+\epsilon)n)^{1/\epsilon}}$.
\item Set $\ell(e)=\delta$ if $c(e)$ is finite; otherwise $\ell(e)=0$.
Set $f\equiv0$.
\item \textbf{for} $r=1$ to $\left\lfloor \log_{1+\epsilon}\frac{1+\epsilon}{\delta}\right\rfloor $
\textbf{do}
\begin{enumerate}
\item \textbf{for each $s\in\S$ do}
\begin{enumerate}
\item $T\gets$$\alpha$-approximate $s$-source $\ell$-shortest paths
tree. Let $t$ be such that the distance from $s$ to $t$ in $T$
is minimized among all demand pair where $s$ is a source.
\item $P\gets$the path from $s$ to $t$ in $T$.
\item \textbf{while $\ell(P)<\min\{1,\delta(1+\epsilon)^{r}\}$}
\begin{enumerate}
\item $c\gets\min_{e\in P}c(e)$
\item $f_{j}(P)\gets f_{j}(P)+c$ where $(s_{j},t_{j})=(s,t)$
\item $\ell(e)\gets\ell(e)(1+\frac{\epsilon c}{c(e)})$ $\forall e\in P$
\item $T\gets$$\alpha$-approximate $s$-source $\ell$-shortest paths
tree. Let $t$ be such that $(s,t)$ is a demand pair and the length
from $s$ to $t$ in $T$ is minimized.
\item $P\gets$the path from $s$ to $t$ in $T$.
\end{enumerate}
\end{enumerate}
\end{enumerate}
\item \textbf{return} $\F=\{f_{j}/\log_{1+\epsilon}\frac{1+\epsilon}{\delta}\}_{j=1}^{k}$.
\end{enumerate}
\caption{Fleischer's algorithm \cite{Fleischer00} for computing multi-commodoity
max-throughput flow in directed edge-capacitated graphs.\label{alg:throughput}}
\end{algorithm}

Now, we state some properties of \Cref{alg:throughput}. Observe the
following:
\begin{lemma}
For every edge $e$ where $\ell(e)\neq0$, $\ell(e)$ only increases,
and $\delta\le\ell(e)\le(1+\epsilon)$. \label{lem:throughput weight}
\end{lemma}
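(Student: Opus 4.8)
\textbf{Proof plan for \Cref{lem:throughput weight}.}

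The plan is to trace through the only two places in \Cref{alg:throughput} where $\ell(e)$ is ever touched after initialization and argue about the lower and upper bounds separately. First I would observe that $\ell(e)$ is set in step~2 and then only updated in the line $\ell(e) \gets \ell(e)(1+\tfrac{\epsilon c}{c(e)})$ inside the innermost \textbf{while}-loop. For an edge $e$ with $\ell(e) \ne 0$, step~2 tells us $c(e)$ is finite and $\ell(e)$ is initialized to $\delta$; moreover the multiplicative update factor $1+\tfrac{\epsilon c}{c(e)}$ is strictly greater than $1$ since $\epsilon, c, c(e) > 0$ (here $c = \min_{e' \in P} c(e')$ is finite because $e$ lies on the augmenting path $P$, so $P$ cannot contain an infinite-capacity edge once it touches a finite-capacity one — actually $c$ is just the bottleneck capacity along $P$, which is finite whenever $P$ contains at least one finite-capacity edge, and $e$ is such an edge). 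Hence $\ell(e)$ is non-decreasing, which immediately gives the lower bound $\ell(e) \ge \delta$.

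For the upper bound I would use the standard invariant from the multiplicative-weights analysis: the algorithm only augments along $P$ while $\ell(P) < \min\{1, \delta(1+\epsilon)^r\} \le 1$, so in particular $\ell(e) \le \ell(P) < 1$ \emph{before} each augmentation. The single augmentation that uses edge $e$ with bottleneck $c = \min_{e'\in P} c(e')$ scales $\ell(e)$ by $1 + \tfrac{\epsilon c}{c(e)} \le 1+\epsilon$, because $c \le c(e)$ by the definition of the bottleneck. Therefore after this augmentation $\ell(e) < (1+\epsilon)$. Since this bound holds after every augmentation touching $e$, and $\ell(e)$ is never otherwise modified, we get $\ell(e) \le (1+\epsilon)$ at all times. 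Combining, $\delta \le \ell(e) \le (1+\epsilon)$ as claimed.

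I expect this to be essentially routine; the only mild subtlety — and the one place I would be careful — is the interaction between the ``$\ell(e) = 0$ iff $c(e) = \infty$'' convention and the bottleneck computation $c \gets \min_{e\in P} c(e)$. One must check that an augmenting path $P$ with $\ell(P) < 1$ and at least one edge of nonzero length has finite bottleneck capacity restricted to its finite-capacity edges, so that the ratio $\tfrac{\epsilon c}{c(e)} \le \epsilon$ makes sense for every finite-capacity $e \in P$; edges with $c(e) = \infty$ contribute $0$ to $\ell(P)$ and are never the ones whose length we bound, so the argument goes through unchanged for them. Everything else is a one-line monotonicity observation plus the $c \le c(e)$ inequality, so no heavy calculation is needed.
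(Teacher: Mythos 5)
Your proof is correct, and it is exactly the routine argument the paper has in mind: the paper states this lemma without proof (as an immediate observation about \Cref{alg:throughput}), and your reasoning — monotonicity from the multiplicative update with factor $1+\tfrac{\epsilon c}{c(e)}\ge 1$, the lower bound $\delta$ from initialization, and the upper bound from the while-loop guard $\ell(P)<\min\{1,\delta(1+\epsilon)^r\}\le 1$ combined with $c\le c(e)$ so that a single update inflates $\ell(e)$ by at most $1+\epsilon$ — is the standard Garg--K\"{o}nemann/Fleischer invariant. Your care about the $\ell(e)=0$ / infinite-capacity convention is a sensible extra check and does not affect the claim, which only concerns edges with $\ell(e)\neq 0$.
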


\begin{lemma}
[Lemma 2.2 of \cite{Fleischer00}]The returned $\F$ is edge-capacity-feasible.
\label{lem:throughput feasible}
\end{lemma}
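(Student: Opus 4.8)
The plan is to track, for every edge $e$ with finite capacity, how its length $\ell(e)$ grows relative to the total amount of flow routed through it, and to play this off against the upper bound $\ell(e) \le 1+\epsilon$ provided by \Cref{lem:throughput weight}. For an edge $e$ with $c(e)=\infty$ there is nothing to check (it carries length $0$ and imposes no capacity constraint), so fix $e$ with $c(e)<\infty$ and let $f_e$ denote the total unscaled flow pushed through $e$ over the entire run of \Cref{alg:throughput}, i.e.\ $f_e=\sum_j f_j(e)$. Enumerate, in order of occurrence, the augmentations that route flow through $e$, and let $c_1,c_2,\dots$ be the capacities pushed in those augmentations, so $\sum_i c_i=f_e$ and, immediately after the $i$-th such augmentation, $\ell(e)$ has been multiplied by $1+\epsilon c_i/c(e)$. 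Since $\ell(e)=\delta$ initially, at termination
\[
\ell(e)\;=\;\delta\prod_i\left(1+\frac{\epsilon c_i}{c(e)}\right).
\]

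First I would observe that the exponents stay small: when the $i$-th augmentation is performed the amount pushed is $c_i=\min_{e'\in P}c(e')\le c(e)$, hence $\epsilon c_i/c(e)\le\epsilon\le 1$. On the interval $[0,\epsilon]$, concavity of $x\mapsto\ln(1+x)$ against the chord through $(0,0)$ and $(\epsilon,\ln(1+\epsilon))$ gives $1+x\ge(1+\epsilon)^{x/\epsilon}$; applying this with $x=\epsilon c_i/c(e)$ yields $1+\epsilon c_i/c(e)\ge(1+\epsilon)^{c_i/c(e)}$. Multiplying over all $i$,
\[
\ell(e)\;\ge\;\delta\prod_i(1+\epsilon)^{c_i/c(e)}\;=\;\delta\,(1+\epsilon)^{f_e/c(e)}.
\]
On the other hand \Cref{lem:throughput weight} gives $\ell(e)\le 1+\epsilon$, so $(1+\epsilon)^{f_e/c(e)}\le(1+\epsilon)/\delta$, i.e.\ $f_e/c(e)\le\log_{1+\epsilon}\frac{1+\epsilon}{\delta}$.

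Finally, since the returned multi-commodity flow $\F$ scales every $f_j$ down by exactly $\log_{1+\epsilon}\frac{1+\epsilon}{\delta}$, the total flow through $e$ in $\F$ is $f_e/\log_{1+\epsilon}\frac{1+\epsilon}{\delta}\le c(e)$, which is the edge-capacity constraint at $e$; as $e$ was an arbitrary finite-capacity edge, $\F$ is edge-capacity-feasible. The only place that needs attention is the invariant used implicitly throughout, namely that during every augmentation $\ell(P)<1$ (so that path lengths feeding the multiplicative update are controlled), but this is immediate from the guard $\ell(P)<\min\{1,\delta(1+\epsilon)^r\}$ of the inner \textbf{while} loop, and \Cref{lem:throughput weight} is precisely the corresponding bound $\delta\le\ell(e)\le 1+\epsilon$. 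I do not expect a genuine obstacle here: the argument is the standard Garg--K\"onemann potential bound \cite{GargK07}, the only adaptation being to exclude infinite-capacity edges, whose lengths are pinned at $0$ and hence never enter the analysis.
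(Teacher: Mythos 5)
Your argument is correct: bounding $\ell(e)\ge\delta(1+\epsilon)^{f_e/c(e)}$ via $1+\epsilon x\ge(1+\epsilon)^x$ for $x\in[0,1]$, combining with the upper bound $\ell(e)\le 1+\epsilon$ from \Cref{lem:throughput weight}, and dividing by the scaling factor $\log_{1+\epsilon}\frac{1+\epsilon}{\delta}$ is exactly the standard Garg--K\"onemann/Fleischer feasibility argument. The paper itself does not prove this lemma but imports it as Lemma~2.2 of \cite{Fleischer00}; your proof reproduces that cited argument (with the minor cosmetic difference of applying the smooth per-augmentation inequality rather than grouping augmentations into blocks of total flow $c(e)$), and your handling of infinite-capacity edges, whose lengths are pinned at $0$ and whose constraints are vacuous, is the right adaptation to this setting.
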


In \cite{Fleischer00}, $\alpha=(1+\epsilon)$. We observe the analysis
of Theorem 2.4 in \cite{Fleischer00} shows that the following holds
for general $\alpha$:
\begin{lemma}
The returned $\F$ is an $(\alpha(1+O(\epsilon))$-approximate solution.
\label{lem:throughput optimal}
\end{lemma}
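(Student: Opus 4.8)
The plan is to re-run Fleischer's proof of her Theorem~2.4 (see \cite{Fleischer00}) essentially verbatim, isolating the single place where the \emph{exactness} of the shortest-path computation is used and replacing it by the $\alpha$-approximation guarantee of the oracle; the factor $\alpha$ will then appear purely multiplicatively in the final ratio. Write $\beta^{*}$ for the value of the optimal fractional multi-commodity max-throughput flow. By LP duality, $\beta^{*}$ equals the optimum of the dual program: choose lengths $\ell\in\mathbb{R}_{\ge 0}^{E}$ minimizing $D(\ell):=\sum_{e:\,c(e)<\infty}c(e)\ell(e)$ subject to $\ell(P)\ge 1$ for every demand pair $j$ and every $P\in\P_{j}$. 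Hence for \emph{every} $\ell\ge 0$ the vector $\ell/\beta(\ell)$ is dual-feasible, where $\beta(\ell):=\min_{j}\min_{P\in\P_{j}}\ell(P)$ is the true shortest demand-path length, so $D(\ell)\ge\beta^{*}\beta(\ell)$, i.e. $\beta(\ell)\le D(\ell)/\beta^{*}$. (The degenerate case of a zero-length, i.e.\ infinite-capacity, demand path, which makes $\beta^{*}=\infty$, is removed by standard preprocessing; this is exactly why \Cref{alg:throughput} keeps $\ell(e)=0$ on infinite-capacity edges, so that those edges never contribute to $D(\ell)$.)

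Next I would set up the potential argument. Let $\ell^{(i)}$ be the length vector after the $i$-th augmentation, which routes $c_{i}=\min_{e\in P_{i}}c(e)$ units along $P_{i}$ and performs $\ell(e)\leftarrow\ell(e)(1+\epsilon c_{i}/c(e))$ for $e\in P_{i}$. A one-line computation gives $D(\ell^{(i)})=D(\ell^{(i-1)})+\epsilon c_{i}\,\ell^{(i-1)}(P_{i})$. Now $P_{i}$ is an $\alpha$-approximate $\ell^{(i-1)}$-shortest path for its source; by the same phase/threshold bookkeeping as in \cite{Fleischer00} (which is unaffected by using an approximate oracle, since lengths are monotone by \Cref{lem:throughput weight} and the phase threshold is common to all sources), one gets $\ell^{(i-1)}(P_{i})\le\alpha(1+\epsilon)\,\beta(\ell^{(i-1)})\le\alpha(1+\epsilon)D(\ell^{(i-1)})/\beta^{*}$, modulo the trivial $r=1$ case handled above. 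Therefore $D(\ell^{(i)})\le D(\ell^{(i-1)})\bigl(1+\epsilon\alpha(1+\epsilon)c_{i}/\beta^{*}\bigr)\le D(\ell^{(i-1)})\exp\!\bigl(\epsilon\alpha(1+\epsilon)c_{i}/\beta^{*}\bigr)$. Iterating from $D(\ell^{(0)})=\delta M$, where $M:=\sum_{e:\,c(e)<\infty}c(e)$ is polynomially bounded in the input size, and writing $F=\sum_{i}c_{i}$ for the total unscaled flow routed, we obtain $D(\ell^{(\tau)})\le\delta M\exp\!\bigl(\epsilon\alpha(1+\epsilon)F/\beta^{*}\bigr)$.

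The termination bound is the second (and only other) place $\alpha$ enters. After all $R=\lfloor\log_{1+\epsilon}\tfrac{1+\epsilon}{\delta}\rfloor$ phases, every demand pair has an $\alpha$-approximate shortest path of length at least $\min\{1,\delta(1+\epsilon)^{R}\}=1$ by the choice of $R$, so every true shortest demand path has length at least $1/\alpha$, i.e. $\beta(\ell^{(\tau)})\ge 1/\alpha$. Combined with $\beta(\ell^{(\tau)})\le D(\ell^{(\tau)})/\beta^{*}$ this gives $D(\ell^{(\tau)})\ge\beta^{*}/\alpha$, and together with the upper bound above, $F\ge\frac{\beta^{*}}{\epsilon\alpha(1+\epsilon)}\ln\!\frac{\beta^{*}}{\alpha\delta M}$. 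Since \Cref{alg:throughput} returns $\F$ scaled down by $R':=\log_{1+\epsilon}\tfrac{1+\epsilon}{\delta}$ (which by \Cref{lem:throughput feasible} renders it edge-capacity-feasible), its value equals $F/R'$; plugging in the choice of $\delta$ from \Cref{alg:throughput} and using the standard estimate $\frac{1}{\epsilon R'}\ln\frac{\beta^{*}}{\alpha\delta M}\ge 1-O(\epsilon)$ yields value $\ge\frac{\beta^{*}}{\alpha(1+O(\epsilon))}$, which is the claim.

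The main obstacle is precisely this last estimate $\frac{1}{\epsilon R'}\ln\frac{\beta^{*}}{\alpha\delta M}\ge 1-O(\epsilon)$. As in \cite{Fleischer00} it relies on $\ln(1/\delta)$ being comparable to $\epsilon R'=\log_{1+\epsilon}((1+\epsilon)n)$ for the chosen $\delta$; the only genuinely new point is the extra additive term $\ln(\alpha M/\beta^{*})$, which is new because $\alpha$ can now be as large as $\polylog(n)$ rather than $1+\epsilon$. This is not serious: $\delta$ is chosen so that $\ln(1/\delta)=\Theta(\epsilon^{-1}\log n)$, which dominates $\ln(\alpha M)=O(\log n)$ with room to spare, so after either folding $\alpha M$ into the definition of $\delta$ (replacing $n$ by a fixed $\poly(n)$, changing only constants) or first scaling the instance so that $\beta^{*}\ge 1$ and treating polynomially-small $\beta^{*}$ as a separate easy case, the slack is absorbed into $O(\epsilon)$. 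Everything else — \Cref{lem:throughput weight}, \Cref{lem:throughput feasible}, the per-phase structure, and the counting of augmentations — does not refer to how the paths are selected and carries over unchanged, so $\alpha$ surfaces only in the per-step bound $\ell^{(i-1)}(P_{i})\le\alpha(1+\epsilon)\beta(\ell^{(i-1)})$ and the termination bound $\beta(\ell^{(\tau)})\ge 1/\alpha$, where it is multiplicative, giving $\alpha(1+O(\epsilon))$ in place of $(1+O(\epsilon))$.
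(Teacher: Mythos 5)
Your proposal is correct and follows exactly the route the paper intends: the paper gives no proof of \Cref{lem:throughput optimal} beyond observing that Fleischer's analysis of Theorem~2.4 in \cite{Fleischer00} goes through for general $\alpha$, and your write-up is precisely that analysis, correctly isolating the two places $\alpha$ enters (the per-augmentation bound $\ell^{(i-1)}(P_i)\le\alpha(1+\epsilon)\beta(\ell^{(i-1)})$ via the phase thresholds, and the termination bound $\beta(\ell^{(\tau)})\ge 1/\alpha$, which only shifts the logarithm) so that the loss stays multiplicatively $\alpha(1+O(\epsilon))$. The only soft spot is the offhand claim that $M=\sum_{c(e)<\infty}c(e)$ is polynomially bounded, but you correctly identify the real requirement ($\ln(\alpha M/\beta^{*})$ dominated by $\epsilon\ln(1/\delta)$) and the standard fixes (adjusting $\delta$ or pre-scaling so $\beta^{*}\ge 1$), so no genuine gap remains.
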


\begin{lemma}
Suppose $G$ has at most $\tau$ edges with finite capacity. Before
returning $\F$, then are at most $\tilde{O}((\tau+\min\{k,n\})\epsilon^{-2})$
approximate single-source shortest path computations. \label{lem:throughput time}
\end{lemma}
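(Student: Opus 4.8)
The statement to prove is \Cref{lem:throughput time}: within \Cref{alg:throughput}, the total number of approximate single-source shortest path computations before $\F$ is returned is $\tilde O((\tau + \min\{k,n\})\epsilon^{-2})$, where $\tau$ bounds the number of finite-capacity edges. The plan is to follow the standard accounting used in Garg--K\"onemann~\cite{GargK07} and Fleischer~\cite{Fleischer00}, adapted to the fact that we only maintain positive lengths on finite-capacity edges. I would split the shortest-path computations into two groups: (i) the ``first'' computation done at line 3(a)i at the start of each source-and-phase iteration (one per source $s$ per phase $r$), and (ii) the ``re-routing'' computations done inside the \textbf{while}-loop at line 3(a)iii.D after each augmentation.

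\textbf{Step 1: bounding the outer computations.} There are $\lfloor \log_{1+\epsilon}\frac{1+\epsilon}{\delta}\rfloor = O(\epsilon^{-1}\log_{1+\epsilon} n) = \tilde O(\epsilon^{-2})$ phases by the choice $\delta = (1+\epsilon)/((1+\epsilon)n)^{1/\epsilon}$, and in each phase we do at most one outer shortest-path call per source, and there are at most $\min\{k,n\}$ distinct sources. So group (i) contributes $\tilde O(\min\{k,n\}\epsilon^{-2})$ computations. This is the easy half.

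\textbf{Step 2: bounding the re-routing computations via a potential on edge lengths.} Each iteration of the \textbf{while}-loop performs exactly one augmentation along $P$ and one subsequent shortest-path call. The key observation is that every augmentation \emph{saturates} at least one finite-capacity edge $e\in P$: the chosen amount is $c=\min_{e\in P}c(e)$, so for the bottleneck edge $e^*$ we push $c(e^*)$ units through it, and its length is multiplied by $(1+\epsilon)$. (Here I would note $P$ must contain a finite-capacity edge: if $\ell(P)<\min\{1,\delta(1+\epsilon)^r\}$ the loop continues, and since $\ell(P)\ge \delta$ times the number of finite-capacity edges on $P$ is the only source of positive length, $P$ indeed carries at least one such edge whenever $\ell(P)>0$; one must also handle the degenerate case $\ell(P)=0$ separately, observing it cannot persist since augmenting a zero-length path would not change $\ell(P)$ — I would argue this case is ruled out because an $s$-$t$ path of all-infinite-capacity edges would give unbounded flow, contradicting $\ell(P)<1$, or handle it by convention as in \cite{Fleischer00}). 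By \Cref{lem:throughput weight}, each finite-capacity edge $e$ has $\delta \le \ell(e)\le 1+\epsilon$ throughout and $\ell(e)$ only increases, so the number of times $e$ can be the multiplied-by-$(1+\epsilon)$ bottleneck is at most $\log_{1+\epsilon}\frac{1+\epsilon}{\delta} = O(\epsilon^{-1}\log_{1+\epsilon} n) = \tilde O(\epsilon^{-2})$. Summing over the $\le \tau$ finite-capacity edges gives $\tilde O(\tau \epsilon^{-2})$ augmentations, hence $\tilde O(\tau\epsilon^{-2})$ re-routing shortest-path calls in group (ii).

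\textbf{Step 3: combine.} Adding the two groups yields $\tilde O((\tau + \min\{k,n\})\epsilon^{-2})$ total shortest-path computations, which is exactly the claimed bound; I would remark that since every such computation is itself an approximate single-source shortest-path call (on the $\alpha$-approximate shortest-paths tree), this completes the proof. \textbf{Main obstacle.} The only subtle point is the argument that \emph{every} \textbf{while}-loop augmentation saturates a finite-capacity edge and so charges against the edge-length potential — in particular ruling out or correctly handling augmentations along paths of length $0$ (all infinite-capacity), which do not increase any length. I would resolve this by the observation that infinite-capacity edges get length $0$ and never change, so if $P$ had only such edges then $\ell(P)=0<\min\{1,\delta(1+\epsilon)^r\}$ would hold forever and the flow along $P$ would be unbounded (taking $c=\infty$), which is impossible for a finite instance / is excluded by the problem's feasibility, so in fact every path $P$ entering the loop contains a finite-capacity edge. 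Everything else is a routine transcription of the Garg--K\"onemann potential argument.
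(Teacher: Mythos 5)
Your proof is correct and follows essentially the same route as the paper: the paper's (sketched) argument also splits the count into the outer calls, bounded by $\tilde O(|\S|\epsilon^{-2})=\tilde O(\min\{k,n\}\epsilon^{-2})$, and the while-loop calls, bounded by $\tilde O(\tau\epsilon^{-2})$ via the extension of Fleischer's Lemma~2.1, i.e.\ exactly your bottleneck-saturation charging against \Cref{lem:throughput weight}. Your extra care with zero-length (all infinite-capacity) paths is a detail the paper leaves implicit, but it does not change the approach.
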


\begin{proof}
[Proof sketch]A straight-forward extension of Lemma 2.1 of \cite{Fleischer00}
shows that the total number of times that we iterate in the while
loop is at most $\tilde{O}(\tau/\epsilon^{2})$, each iteration we
compute one single-source shortest path. In \cite{Fleischer00}, it
is also shown that the number of time we compute single-source shortest
paths outside the while loop is at most $\tilde{O}(|\S|/\epsilon^{2})=O(\min\{k,n\})\epsilon^{-2})$.
\end{proof}

\subsubsection{Maximum concurrent flow}

See \Cref{alg:concurent} for Fleischer's maximum concurrent flow algorithm. 

\begin{algorithm}
\textbf{Input: }A directed graph $G=(V,E)$ with edge capacities $c\in(\mathbb{R}_{>0}\cup\{\infty\})^{E}$,
demand pairs $(s_{1},t_{1}),\dots,(s_{k},t_{k})$, target demands
$d_{1},\dots,d_{k}$ and an accuracy parameter $0<\epsilon<1$.

\textbf{Output: }A multi-commodity flow $\F=\{f_{1},\dots,f_{k}\}$
where $f_{i}$ is an $s_{i}$-$t_{i}$ flow and $\F$ is edge-capacity-feasible.

\textbf{Assumption: }The optimal value of the solution $\beta$  satisfies that $\beta\geq1$.
\begin{enumerate}
\item Set $\delta=(2m)^{-1/\epsilon}$
\item Set $\ell(e)=\delta/c(e)$ $\forall c\in E$. Set $f\equiv0$.
\item \textbf{while} $\sum_{e\in E}c(e)\ell(e)<1$ \textbf{do}
\begin{enumerate}
\item \textbf{for} $j=1$ to $k$ \textbf{do}
\begin{enumerate}
\item $d_{j}'\gets d_{j}$
\item \textbf{while} $\sum_{e\in E}c(e)\ell(e)<1$ and \textbf{$d'_{j}>0$
do}
\begin{enumerate}
\item $P\gets\alpha$-approximate $\ell$-shortest $s_{j}$-$t_{j}$ path 
\item $c\gets\min\{\min_{e\in P}c(e),d'_{j}\}$
\item $d'_{j}\gets d'_{j}-c$
\item $f_{j}(P)\gets f_{j}(P)+c$
\item $\ell(e)\gets\ell(e)(1+\frac{\epsilon c}{c(e)})$ $\forall e\in P$
\end{enumerate}
\end{enumerate}
\end{enumerate}
\item \textbf{return} $\F=\{f_{j}/\log_{1+\epsilon}\frac{1}{\delta}\}_{j=1}^{k}$.
\end{enumerate}
\caption{Fleischer's algorithm \cite{Fleischer00} for computing a  maximum concurrent
flow in directed edge-capacitated graphs.\label{alg:concurent}}
\end{algorithm}

Next, we state some properties of \Cref{alg:throughput}. Observe the
following:
\begin{lemma}
For every edge $e$ where $\ell(e)\neq0$, $\ell(e)$ only increases,
and $\frac{\delta}{c(e)}\le\ell(e)\le\frac{1+\epsilon}{c(e)}$.\label{lem:concurrent weight}
\end{lemma}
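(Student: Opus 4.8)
\textbf{Proof plan for \Cref{lem:concurrent weight}.}

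The plan is to exploit the same elementary structure that underlies \Cref{lem:throughput weight}, adapted to the multiplicative-weight initialization used by \Cref{alg:concurent}. The claim has three parts, which I would handle in order.

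\emph{Monotonicity.} First I would observe that the only line in \Cref{alg:concurent} that ever modifies $\ell(e)$ is the update $\ell(e)\gets\ell(e)(1+\tfrac{\epsilon c}{c(e)})$ inside the inner while-loop. Since $\epsilon>0$, $c>0$ (we have $c=\min\{\min_{e\in P}c(e),d'_j\}$ with $d'_j>0$ while we are inside that loop), and $c(e)>0$ for the edges along $P$, the multiplier $(1+\tfrac{\epsilon c}{c(e)})$ is strictly greater than $1$. Hence each such update only increases $\ell(e)$; no other line decreases it. This gives the monotonicity claim and in particular the lower bound $\ell(e)\ge \delta/c(e)$, which is exactly the initialization value set in step~2.

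\emph{Upper bound.} For the upper bound $\ell(e)\le (1+\epsilon)/c(e)$, I would argue via the loop guards. The termination condition of both while-loops is $\sum_{e'\in E}c(e')\ell(e')<1$: an update is performed only while this inequality holds \emph{before} the update is applied. So consider the last time $\ell(e)$ is updated. Just before that update we have $c(e)\ell(e)\le\sum_{e'}c(e')\ell(e')<1$, i.e.\ $\ell(e)<1/c(e)$. The update multiplies $\ell(e)$ by $(1+\tfrac{\epsilon c}{c(e)})$. Since $c\le c(e)$ (because $c=\min\{\min_{e'\in P}c(e'),d'_j\}\le c(e)$ for $e\in P$), the multiplier is at most $(1+\epsilon)$, so afterwards $\ell(e)\le (1+\epsilon)/c(e)$. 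All earlier values of $\ell(e)$ are at most this final value by monotonicity, so the bound holds throughout.

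\emph{Nonzero edges.} Finally, I would note that $\ell(e)=\delta/c(e)>0$ at initialization for every edge with finite capacity, and the hypothesis ``$\ell(e)\neq 0$'' in the statement simply isolates those edges (in contrast to the max-throughput algorithm, where edges of infinite capacity are initialized to length $0$); the three bounds then hold for all such $e$. I do not anticipate a genuine obstacle here—the only subtlety is being careful that the loop-guard argument for the upper bound uses the state \emph{before} the update and that $c\le c(e)$ along the augmenting path, which follows directly from the definition of $c$ in the algorithm. This mirrors the corresponding argument for \Cref{lem:throughput weight} almost verbatim, with $\delta$ replaced by $\delta/c(e)$ and $(1+\epsilon)$ by $(1+\epsilon)/c(e)$.
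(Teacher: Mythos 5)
Your proof is correct and is exactly the standard argument the paper has in mind (the paper states this lemma as an observation without proof, exactly as it did for \Cref{lem:throughput weight}): initialization gives the lower bound, the only modification of $\ell(e)$ is a multiplicative update with factor in $(1,1+\epsilon]$ since $c\le c(e)$ for $e\in P$, and the loop guard $\sum_{e'}c(e')\ell(e')<1$ checked before the final update bounds $\ell(e)<1/c(e)$ at that moment, so the last multiplication leaves $\ell(e)\le(1+\epsilon)/c(e)$. No gaps; your handling of the $\ell(e)\neq0$ hypothesis (infinite-capacity edges start and stay at length $0$) is also the right reading.
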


Let $\beta$ be the value of optimal solution (i.e. the maximum $\lambda$).
\Cref{alg:concurent} assumes that $\beta\ge1$. This can be done easily
if we know the approximation of $\beta$. For example, a $\Delta$-approximation
$\tilde{\beta}$ of $\beta$ where $\beta/\Delta\le\tilde{\beta}\le\beta$
is given. Then, the problem where the target demands are $d_{1}/\Delta,\dots,d_{k}/\Delta$
must have the optimal value at least 1. 
\begin{lemma}
[Lemma 3.2 of \cite{Fleischer00}]The returned $\F$ is edge-capacity-feasible.\label{lem:concurrent feasible}
\end{lemma}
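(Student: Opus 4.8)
The final statement to prove is \Cref{lem:concurrent feasible}, the feasibility lemma for Fleischer's maximum concurrent flow algorithm (\Cref{alg:concurent}). This is cited as ``Lemma 3.2 of \cite{Fleischer00}'', so the cleanest route is to reconstruct the standard multiplicative-weights feasibility argument rather than invent something new. Note, however, that what we actually need to state carefully is that the \emph{returned} flow $\F = \{f_j / \log_{1+\epsilon}\tfrac{1}{\delta}\}_{j=1}^k$ (after the final rescaling) is edge-capacity-feasible, i.e.\ $\sum_j f_j(e)/\log_{1+\epsilon}\tfrac{1}{\delta} \le c(e)$ for all $e$.

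The plan is as follows. First I would track the potential of a single edge length $\ell(e)$ over the course of the algorithm. Each time an edge $e$ lies on an augmenting path $P$ in the inner while-loop, we push $c \le c(e)$ units (since $c = \min\{\min_{e'\in P} c(e'), d_j'\} \le c(e)$) and update $\ell(e) \gets \ell(e)(1 + \epsilon c/c(e))$. Using $1 + x \le e^x$ for $x \ge 0$, after the algorithm terminates we get $\ell(e) \ge (\delta/c(e)) \cdot \prod (1 + \epsilon c_i/c(e))$ where the product is over all augmentations through $e$; but more usefully, I would lower-bound the final $\ell(e)$ in terms of the total flow $F(e) := \sum_j f_j(e)$ routed through $e$ before rescaling. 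The key inequality is that $\ell(e)$ grows by a factor of at least $(1+\epsilon)$ every time a \emph{cumulative} amount $c(e)$ of flow has passed through $e$; more precisely, one shows $\ell(e) \ge \frac{\delta}{c(e)}(1+\epsilon)^{F(e)/c(e)}$ by the convexity/superadditivity of the update (when we push $c$ units with $c \le c(e)$, the multiplicative increase $1 + \epsilon c/c(e) \ge (1+\epsilon)^{c/c(e)}$).

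Second, I would use the termination condition. The outer and inner while-loops both run only while $\sum_e c(e)\ell(e) < 1$; hence at every moment before the last augmentation we have $c(e)\ell(e) < 1$ for that particular $e$, and after the final augmentation through $e$ the length could have jumped, but only by the bounded factor $1 + \epsilon c/c(e) \le 1+\epsilon$ (using $c \le c(e)$), giving $c(e)\ell(e) \le 1+\epsilon$ at all times, which is exactly \Cref{lem:concurrent weight}. Combining with the lower bound from the first step: $\frac{\delta}{c(e)}(1+\epsilon)^{F(e)/c(e)} \le \ell(e) \le \frac{1+\epsilon}{c(e)}$, so $(1+\epsilon)^{F(e)/c(e)} \le (1+\epsilon)/\delta$, i.e.\ $F(e)/c(e) \le \log_{1+\epsilon}\frac{1+\epsilon}{\delta} = 1 + \log_{1+\epsilon}\frac{1}{\delta}$. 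This is slightly larger than the rescaling denominator $\log_{1+\epsilon}\frac1\delta$; the standard fix is to absorb the additive $1$ by the choice $\delta = (2m)^{-1/\epsilon}$ (which makes $\log_{1+\epsilon}\frac1\delta$ large, $\Theta(\epsilon^{-2}\log m)$, so the ratio $\frac{1 + \log_{1+\epsilon}(1/\delta)}{\log_{1+\epsilon}(1/\delta)} \le 1 + o(1)$), or, as in Fleischer's original argument, by a marginally more careful accounting that shows the factor is at most $\log_{1+\epsilon}\frac1\delta$ exactly. Either way, dividing by $\log_{1+\epsilon}\frac1\delta$ yields $\sum_j f_j(e)/\log_{1+\epsilon}\frac1\delta \le c(e)$.

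The main obstacle is the bookkeeping around the ``off-by-one'' factor: naively one only gets $F(e) \le c(e)\log_{1+\epsilon}\frac{1+\epsilon}{\delta}$ rather than $c(e)\log_{1+\epsilon}\frac1\delta$, and one must either invoke the specific value of $\delta$ to argue the discrepancy is negligible (which is fine for our approximate setting, matching how \Cref{lem:concurrent optimal}-type bounds are used) or replicate Fleischer's tighter argument that the last over-shooting augmentation can be charged against the slack $c(e)\ell(e) < 1$ that held \emph{strictly before} it. Since the lemma is quoted verbatim from \cite{Fleischer00} and used only as a black box in the proofs of \Cref{thm:multiflow} and \Cref{cor:cong min}, I would present the potential-function computation in full and then simply remark that the constant-factor rescaling is handled exactly as in \cite[Section 3]{Fleischer00}, pointing to the choice of $\delta$ to justify the clean statement. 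No step here interacts with the adaptive-adversary or expander machinery of the rest of the paper; it is a self-contained multiplicative-weights calculation.
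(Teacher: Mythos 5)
The paper does not actually prove this statement—it is imported verbatim as Lemma~3.2 of \cite{Fleischer00} and used as a black box—so the right benchmark is Fleischer's (equivalently Garg--K\"onemann's) argument, and your reconstruction is exactly that argument: lower-bound the final length via $1+\epsilon c/c(e)\ge(1+\epsilon)^{c/c(e)}$ (valid since $c\le c(e)$), upper-bound it by $\tfrac{1+\epsilon}{c(e)}$ using the loop condition plus the at-most-$(1+\epsilon)$ jump of the last augmentation (this is \Cref{lem:concurrent weight}), and divide. The one caveat concerns your handling of the ``off-by-one'': absorbing it into the choice of $\delta$ does \emph{not} give literal feasibility, only congestion $1+1/\log_{1+\epsilon}(1/\delta)$, and there is no tighter accounting that removes it for the scaling $\log_{1+\epsilon}\tfrac1\delta$ as written in \Cref{alg:concurent}; the clean statement of the lemma corresponds to scaling by $\log_{1+\epsilon}\tfrac{1+\epsilon}{\delta}$, exactly as in the return step of \Cref{alg:throughput}, and this is how the cited source gets exact feasibility. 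This discrepancy lies in the paper's transcription of the algorithm rather than in your calculation, and it is immaterial downstream, since the uses in \Cref{thm:multiflow} and \Cref{cor:cong min} tolerate an extra $(1+O(\epsilon))$ factor anyway.
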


The analysis about approximation ratio in \cite{Fleischer00} is referred
to the analysis by Garg and K\"{o}nemann \cite{GargK07}. In \cite{GargK07},
$\alpha=1$. We observe for general $\alpha$, the analysis in Sections
5.1 and 5.2 \cite{GargK07} can be extended straightforwardly as follows:
\begin{lemma}
Suppose $\beta\ge1$. $\F$ is an $(\alpha(1+O(\epsilon))$-approximate
solution.\label{lem:concurrent optimal}
\end{lemma}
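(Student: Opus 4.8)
The final statement (Lemma~\ref{lem:concurrent optimal}) claims that with $\alpha$-approximate shortest-path queries replacing the exact ones, Fleischer's concurrent-flow algorithm returns an $(\alpha(1+O(\epsilon)))$-approximate solution, assuming $\beta \ge 1$. The plan is to re-run the Garg--K\"onemann potential-function analysis (Sections~5.1--5.2 of \cite{GargK07}), tracking carefully where the exactness of the shortest path was used and showing that it can be relaxed to an $\alpha$-approximation at the cost of an extra factor $\alpha$ in the final ratio. First I would set up the standard dual-length framework: let $D(\ell) = \sum_{e} c(e)\ell(e)$ be the total length and let $\alpha(\ell) = \min_j \mathrm{dist}_\ell(s_j,t_j) \cdot d_j / (\text{scaling})$ be the dual objective, so that by LP duality the optimum $\beta$ equals $\min_\ell D(\ell)/\alpha(\ell)$. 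The key point is that when the algorithm routes a flow along an $\alpha$-approximate shortest path $P$ for commodity $j$, we have $\ell(P) \le \alpha \cdot \mathrm{dist}_\ell(s_j,t_j) \le \alpha \cdot (D(\ell)/\beta)$ (using the weak duality bound $\mathrm{dist}_\ell \le D(\ell)/\beta$), which is exactly the inequality used in the GK analysis but with an extra factor $\alpha$.

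Second, I would carry out the increment bound: each augmentation of $c$ units of flow along $P$ increases $D(\ell)$ by at most $\epsilon \cdot \ell(P) \cdot c \le \epsilon \alpha c \cdot D(\ell)/\beta$ (since $\ell(e)$ grows by a factor $(1+\epsilon c/c(e))$ on each edge of $P$ and $c \le c(e)$), so $D$ grows geometrically in the total scaled flow pushed, with rate $\epsilon\alpha/\beta$ instead of $\epsilon/\beta$. Solving the resulting recurrence $D(\ell_{\text{final}}) \le \delta m \cdot e^{\epsilon \alpha F / \beta}$ where $F$ is the total (unscaled) flow value achieved, and using the stopping condition $D(\ell_{\text{final}}) \ge 1$, gives a lower bound on $F$ in terms of $\beta$; dividing by the number of phases $\log_{1+\epsilon}(1/\delta)$ to get the scaled-down feasible flow $\F$, and plugging in $\delta = (2m)^{-1/\epsilon}$, yields that the value of $\F$ is at least $\beta / (\alpha(1+O(\epsilon)))$. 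Feasibility of $\F$ is already given by Lemma~\ref{lem:concurrent feasible}, which does not depend on exactness of the path oracle.

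The main obstacle I expect is not any deep new idea but rather bookkeeping discipline: one must verify that \emph{every} place where GK invoke ``$P$ is a shortest path'' is in fact only used through the one-sided inequality $\ell(P) \le \alpha \cdot D(\ell)/\beta$, and in particular that the inner-while termination condition and the phase structure (the assumption $\beta \ge 1$, which ensures $\ge 1$ phase and controls $\delta$) remain valid verbatim. A secondary subtlety is that the $\alpha$-approximation here is multiplicative and the lengths $\ell(e)$ with $c(e) = \infty$ are set to $0$ (as in the throughput case) — but since all demand pairs have finite length in any feasible configuration once $\beta \ge 1$, this does not cause division-by-zero issues; I would note this explicitly. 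Given that \cite{GargK07} and \cite{Fleischer00} already contain the full argument for $\alpha = 1$, the write-up reduces to: ``inspect their proof, replace $\mathrm{dist}_\ell$ by $\ell(P) \le \alpha\,\mathrm{dist}_\ell$ wherever it appears, and propagate the factor $\alpha$ through to the final ratio,'' which is exactly what the excerpt's sentence preceding the lemma asserts.
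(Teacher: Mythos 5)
Your proposal takes essentially the same route as the paper: the paper proves this lemma simply by observing that the Garg--K\"onemann potential-function analysis (Sections~5.1--5.2 of that paper) extends verbatim once every use of an exact shortest path is replaced by the one-sided bound $\ell(P)\le\alpha\cdot\dist_\ell(s_j,t_j)$, propagating the factor $\alpha$ into the final ratio, which is precisely your plan. One minor bookkeeping remark: in the concurrent-flow duality the relevant dual quantity is the demand-weighted sum $\sum_j d_j\,\dist_\ell(s_j,t_j)\le D(\ell)/\beta$ rather than a per-commodity distance bound, so the extra factor $\alpha$ is charged per phase (summing each commodity's $d_j$ units routed in that phase) rather than per single augmentation, but this does not affect the validity of your argument.
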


\begin{lemma}
[Lemma 5.2 of \cite{GargK07}]Suppose that $1\le\beta\le2$. Suppose
that $G$ has at most $\tau$ edges with finite capacity. Then there
are at most $\tilde{O}((\tau+k)/\epsilon^{2})$ approximate shortest
path computations in \Cref{alg:throughput}. 
\end{lemma}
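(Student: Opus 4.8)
The plan is to classify every approximate shortest‑path computation performed inside the inner \textbf{while} loop of the maximum concurrent flow algorithm (\Cref{alg:concurent}, which is presumably what the statement means) and bound the two classes separately. When a path $P$ is returned for commodity $j$, the algorithm pushes $c=\min\{\min_{e\in P}c(e),\,d'_j\}$ units; call the computation \emph{demand‑bounded} if $c=d'_j$ (so $d'_j$ drops to $0$ and the inner loop for $j$ exits in this phase) and \emph{bottleneck‑bounded} otherwise, breaking ties arbitrarily. Every shortest‑path computation is one of the two, so it suffices to bound each by $\tilde O((\tau+k)\epsilon^{-2})$.

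For the bottleneck‑bounded computations, fix the bottleneck edge $e^\star=\arg\min_{e\in P}c(e)$; then $c=c(e^\star)$, so the length update multiplies $\ell(e^\star)$ by $1+\epsilon c/c(e^\star)=1+\epsilon$. By \Cref{lem:concurrent weight}, $\ell(e)$ is non‑decreasing, starts at $\delta/c(e)$, and stays below $(1+\epsilon)/c(e)$ throughout, so $e$ can be a bottleneck in at most $\log_{1+\epsilon}\!\big((1+\epsilon)/\delta\big)=O\!\big(\epsilon^{-1}\log_{1+\epsilon}(2m)\big)=\tilde O(\epsilon^{-2})$ computations (using $\delta=(2m)^{-1/\epsilon}$ and $\log_{1+\epsilon}x=\Theta(\epsilon^{-1}\log x)$). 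Infinite‑capacity edges have $\ell(e)=0$ and can never be a bottleneck, so summing only over the $\tau$ finite‑capacity edges gives $\tilde O(\tau\epsilon^{-2})$ such computations.

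For the demand‑bounded computations there is at most one per commodity per phase (once $d'_j=0$ the loop exits), so their total number is $k$ times the number of phases, i.e.\ iterations of the outer \textbf{while} loop. To bound the phase count I would run the standard Garg–K\"onemann dual‑progress argument: let $D(\ell)=\sum_e c(e)\ell(e)$, so $D(\ell^{(0)})=m\delta$ and the algorithm halts once $D(\ell)\ge1$. In phase $i$ the increase $D(\ell^{(i)})-D(\ell^{(i-1)})$ is $\epsilon$ times the cost (under the lengths at routing time) of the flow pushed, which is at most $\epsilon\,\alpha\sum_j d_j\,\dist_{\ell}(s_j,t_j)$ for the relevant $\ell$, and LP duality for maximum concurrent flow gives $\sum_j d_j\,\dist_\ell(s_j,t_j)\le D(\ell)/\beta$; hence $D(\ell^{(i)})\le D(\ell^{(i-1)})\big(1-\tfrac{\epsilon\alpha}{\beta}\big)^{-1}$ and $D(\ell^{(i)})\le m\delta\cdot e^{O(\epsilon i/\beta)}$. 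Therefore the process stops after $i=O\!\big(\tfrac{\beta}{\epsilon}\ln\tfrac{1}{m\delta}\big)=O(\beta\,\epsilon^{-2}\log m)$ phases, and the hypothesis $\beta\le2$ turns this into $\tilde O(\epsilon^{-2})$ phases (with $\beta\ge1$ ensuring the duality bound is meaningful and the algorithm well‑defined), yielding $\tilde O(k\epsilon^{-2})$ demand‑bounded computations. Adding the two estimates gives the claimed $\tilde O((\tau+k)\epsilon^{-2})$.

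The main obstacle I expect is the phase‑count bound: one has to track the length changes \emph{within} a phase carefully, since the lengths grow as flow is pushed, so the cost of the pushed flow must be compared against the correct $\dist_\ell$, and one must invoke the right form of LP duality for the concurrent‑flow objective so that $\beta$ appears in the exponent — this is precisely where the assumption $1\le\beta\le2$ enters and where a careless argument would lose polynomial factors. The bottleneck‑edge charging and the observation that infinite‑capacity edges are never bottlenecks are routine once \Cref{lem:concurrent weight} is in hand.
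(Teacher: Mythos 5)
Your split into \emph{demand-bounded} and \emph{bottleneck-bounded} shortest-path computations, and the charging of the bottleneck-bounded ones to the exact $(1+\epsilon)$-factor growth of the bottleneck edge's length (using the monotonicity and range $\delta/c(e)\le\ell(e)\le(1+\epsilon)/c(e)$ from \Cref{lem:concurrent weight}, plus the observation that an infinite-capacity edge can never be the bottleneck of such a step), is exactly the Garg--K\"onemann argument that the paper invokes purely by citation, and that half is correct: it gives $\tilde{O}(\tau/\epsilon^{2})$ computations. Likewise the reduction of the demand-bounded count to ``$k$ times the number of phases'' is right.

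The gap is in your phase bound. The recurrence $D(\ell^{(i)})\le D(\ell^{(i-1)})\bigl(1-\epsilon\alpha/\beta\bigr)^{-1}$ is an \emph{upper} bound on how fast $D$ can grow, so together with $D(\ell^{(0)})=m\delta$ and the stopping rule $D\ge 1$ it only shows the algorithm \emph{cannot} stop before roughly $\frac{\beta}{\epsilon}\ln\frac{1}{m\delta}$ phases; it cannot show that it \emph{does} stop by then, which is what you assert with ``therefore the process stops after\dots''. (That inequality is the engine of the approximation-ratio proof, \Cref{lem:concurrent optimal}, not of the running-time bound, and there is no useful lower bound on $\sum_j d_j\dist_\ell(s_j,t_j)$ to reverse it.) The correct argument for the phase count is primal: since every time the cumulative flow on an edge $e$ increases by $c(e)$ its length is multiplied by at least $1+\epsilon$, and $\ell(e)$ rises from $\delta/c(e)$ to at most $(1+\epsilon)/c(e)$, the flow routed over the entire run exceeds each capacity by at most a factor $\log_{1+\epsilon}\frac{1+\epsilon}{\delta}$; hence after $t$ phases, of which at least $t-1$ are complete, scaling down by that factor yields a feasible concurrent flow of value at least $\frac{t-1}{\log_{1+\epsilon}\frac{1+\epsilon}{\delta}}$, so $t-1\le\beta\log_{1+\epsilon}\frac{1+\epsilon}{\delta}\le 2\log_{1+\epsilon}\frac{1+\epsilon}{\delta}=\tilde{O}(\epsilon^{-2})$. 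This is precisely where the hypothesis $\beta\le 2$ is used. With that replacement your demand-bounded count becomes $\tilde{O}(k/\epsilon^{2})$ and the lemma follows; as written, the step establishing it does not.
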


In particular, if we have computed $\tilde{O}((\tau+k)/\epsilon^{2})$
many shortest paths and \Cref{alg:throughput} does not stop, then
$\beta\ge2$. In this case, we can scale the target demands up by
a factor of $2$ so that $\beta$ is reduced by a factor of $2$.
Since we  need to repeat this only $\log\Delta$ times,  we  can $\Delta$-approximate
$\beta$.   That is, we have the following:
\begin{lemma}
Given a $\Delta$-approximation $\tilde{\beta}$ of $\beta$, we can
compute an $(\alpha(1+O(\epsilon))$-approximate solution using at
most $\tilde{O}((\tau+k)\epsilon^{-2}\log\Delta)$ approximate shortest
path computations.\label{lem:concurrent time}
\end{lemma}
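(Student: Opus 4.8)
Looking at the final statement, \Cref{lem:concurrent time}, I need to prove that given a $\Delta$-approximation $\tilde{\beta}$ of the optimal concurrent flow value $\beta$, one can compute an $(\alpha(1+O(\epsilon)))$-approximate solution using at most $\tilde{O}((\tau+k)\epsilon^{-2}\log\Delta)$ approximate shortest path computations.

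\textbf{The plan.} The approach is to bootstrap from the restricted guarantee of \Cref{alg:concurent}, which only works when $1 \le \beta \le 2$, to the general case using a standard doubling search over the scaling of target demands. First I would observe that for any scalar $\mu > 0$, scaling all target demands $d_1, \ldots, d_k$ by $1/\mu$ scales the optimal value $\beta$ to $\mu\beta$. Given the $\Delta$-approximation $\tilde{\beta}$ with $\beta/\Delta \le \tilde{\beta} \le \beta$, I would start by scaling the demands by $1/\tilde{\beta}$, so that the new optimal value $\beta'$ satisfies $1 \le \beta' \le \Delta$. Then I would run the following doubling procedure: repeatedly invoke \Cref{alg:throughput} (i.e., \Cref{alg:concurent}) on the current instance, but \emph{abort} it once it has performed $\tilde{O}((\tau+k)/\epsilon^2)$ approximate shortest path computations without terminating. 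By the Lemma 5.2 bound (the one quoted just above in the excerpt), if the current optimal value is at most $2$, the algorithm terminates within that many computations; hence if it fails to terminate, the current optimal value exceeds $2$. In that case, scale all target demands up by a factor of $2$, which halves the current optimal value, and repeat.

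\textbf{Key steps in order.} (1) Scale demands by $1/\tilde{\beta}$ to get $\beta' \in [1, \Delta]$. (2) Run the abortable version of \Cref{alg:concurent}; if it completes within the computation budget, output the resulting flow (rescaled appropriately to undo the demand scalings). (3) If it aborts, multiply all target demands by $2$ and go to step (2). (4) Bound the number of outer iterations: since $\beta'$ starts in $[1,\Delta]$ and each aborted iteration halves it, after at most $\lceil \log_2 \Delta \rceil$ iterations the optimal value of the current instance lies in $[1,2]$, so \Cref{alg:concurent} must succeed. (5) Multiply the budget per iteration, $\tilde{O}((\tau+k)/\epsilon^2)$, by the number of iterations, $O(\log\Delta)$, to get the claimed total of $\tilde{O}((\tau+k)\epsilon^{-2}\log\Delta)$ approximate shortest path computations. (6) For correctness, note that in the successful final iteration the instance has optimal value $\beta'' \in [1,2]$, so by \Cref{lem:concurrent optimal} the returned flow $\F$ is $(\alpha(1+O(\epsilon)))$-approximate for that scaled instance; undoing the scalings (which multiply the flow and the achieved fraction $\lambda$ by the product of all the scaling factors) preserves the approximation ratio since all scalings are applied uniformly to demands and flow. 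Feasibility is inherited from \Cref{lem:concurrent feasible}.

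\textbf{Main obstacle.} The technical heart is verifying that the scaling arguments compose cleanly: one must check that scaling target demands by $1/\tilde\beta$ and then by powers of $2$ corresponds exactly to scaling the optimal value $\lambda$ inversely, and that the flow returned for the scaled instance, when multiplied back by the same product of factors, is both feasible for and near-optimal for the original instance. This is routine but needs care with the direction of the inequalities (we want $\beta' \ge 1$ strictly so the assumption of \Cref{alg:concurent} holds, which is why the $\Delta$-approximation must satisfy $\tilde\beta \le \beta$, not just $\tilde\beta \ge \beta/\Delta$). A secondary point is confirming that ``aborting'' \Cref{alg:concurent} mid-run is legitimate — i.e., that the Lemma 5.2 computation bound is a deterministic worst-case bound conditioned on $\beta \le 2$, so that non-termination within the budget genuinely certifies $\beta > 2$; this follows directly from the cited Lemma 5.2 of \cite{GargK07} and needs no new argument. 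No deep new ideas are required; the lemma is essentially a packaging statement, and the proof is a short reduction to the already-established building blocks.
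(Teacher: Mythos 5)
Your proposal is correct and is essentially the paper's own argument: the paper establishes this lemma exactly by the abort-and-double scheme you describe (run \Cref{alg:concurent} with the $\tilde{O}((\tau+k)/\epsilon^{2})$ budget from Lemma 5.2 of \cite{GargK07}, conclude $\beta\ge 2$ on non-termination, double the demands to halve $\beta$, repeat $O(\log\Delta)$ times, and invoke \Cref{lem:concurrent feasible,lem:concurrent optimal} in the final successful run). One small slip to fix: with your stated convention (dividing demands by $\mu$ multiplies the optimum by $\mu$), the initial normalization should \emph{multiply} the demands by $\tilde{\beta}$, giving $\beta'=\beta/\tilde{\beta}\in[1,\Delta]$, rather than scale them by $1/\tilde{\beta}$.
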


\subsection{Simulating Fleischer's algorithms}

Let $G=(V,E)$ be an  undirected $n$-vertex $m$-edge graph with
vertex capacities $c\in(\mathbb{R}_{>0}\cup\{\infty\})^{V}$. Let
$(s_{1},t_{1}),\dots,(s_{k},t_{k})$ be the demand pairs. Recall that
we assume $c(v)=\infty$ for all $v\in\{s_{1},t_{1},\dots,s_{k},t_{k}\}$.
Next, we create a directed graph $G''=(V'',E'')$ with edge capacities
$c''$ as follows. For each node $v\in V$, we create $v^{in},v^{out}\in V''$
and a directed edge $(v^{in},v^{out})$ with capacity $c''(v^{in},v^{out})=c(v)$.
For each undirected edge $(u,v)\in E$, we create $(v^{out},u^{in}),(u^{out},v^{in})\in E''$
with capacity $\infty$. This is a standard reduction from flow problems
with vertex capacities to flow problems with edge capacities on directed
graphs. Given a multi-commodity flow $\F=\{f_{1},\dots,f_{k}\}$ with
demand pairs $(s_{1},t_{1}),\dots,(s_{k},t_{k})$, there is a corresponding
$\F''=\{f''_{1},\dots,f_{k}''\}$ with demand pairs $(s_{1}^{out},t_{1}^{in}),\dots,(s_{k}^{out},t_{k}^{in})$
such that, for each $j$, $|f_{j}|=|f''_{j}|$ and $f_{j}$ capacity-feasible
iff $f_{j}''$ is capacity-feasible.

Now, we can run \Cref{alg:throughput} and \Cref{alg:concurent} on
$G''$. Both algorithms maintain lengths $\ell''\in\mathbb{R}_{\ge0}^{E''}$
of edges. Observe that $\ell''(e'')=0$ unless $e''=(v^{in},v^{out})$
and $v\notin\{s_{1},t_{1},\dots,s_{k},t_{k}\}$. To be able to compute
approximate shortest paths in $G''$ quickly, we assign the weights
$w'\in\mathbb{R}_{\ge0}^{V}$ to vertices in $G$ as follows. For
each $v$, $w'(v)=\ell''(v^{in},v^{out})$. Observe that, for every
demand pair $(s_{j},t_{j})$, there is a correspondence between $s_{j}$-$t_{j}$
paths $P$ in $G$ and $s_{j}^{out}$-$t_{j}^{in}$ paths $P''$ in
$G''$ where $w'(P)=\ell''(P'')$. Note that this is true because
$w'(v)=0$ for $v\in\{s_{1},t_{1},\dots,s_{k},t_{k}\}$. However,
the weights of $G$ are now in vertices. To assign the weight $w\in\mathbb{R}_{\ge0}^{E}$
on edges, we set $w(u,v)=(w'(u)+w'(v))/2$. As $w'(v)=0$ for $v\in\{s_{1},t_{1},\dots,s_{k},t_{k}\}$,
we have that $w(P)=w'(P)$ for any $s_{j}$-$t_{j}$ path $P$ in
$G$.

Now, we are ready to show how to exploit dynamic spanner algorithms
for fast flow computations.
\begin{theorem}
Suppose there is a decremental spanner against an adaptive adversary
that can maintain $t$-spanner of size $s$ using $u$ amortized update
time and $q$ query time. Then, in undirected vertex-capacitated graphs with $n$ vertices
and $m$ edges, we have 
\begin{enumerate}
\item  a $(t\cdot(1+\epsilon))$-approximate algorithm for maximum throughput
flow with $k$ commodities in $\tilde{O}( (mu+n(s+q))\epsilon^{-2} )$ time,
and
\item  a $(t\cdot(1+\epsilon))$-approximate algorithm for maximum concurrent
flow with $k$ commodities in $\tilde{O}( (mu+(n+k)(s+q))\epsilon^{-2}\log C)$
time where $C$ is the ratio of largest finite capacity to smallest
finite capacity.
\end{enumerate}
\end{theorem}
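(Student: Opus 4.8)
The plan is to feed Fleischer's algorithms (Algorithms~\ref{alg:throughput} and~\ref{alg:concurent}) on the directed edge-capacitated graph $G''$, but to replace each approximate shortest path computation by a query to the dynamic spanner data structure running on $G$ with the edge weights $w$ defined above. First I would set up the reduction and weight translation exactly as in the paragraph preceding the theorem: build $G''$ from $G$ with the in/out vertex-splitting gadget, observe that the only nonzero lengths $\ell''$ live on gadget edges $(v^{in},v^{out})$ for non-terminal $v$, define the vertex weights $w'(v)=\ell''(v^{in},v^{out})$ and then the edge weights $w(u,v)=(w'(u)+w'(v))/2$, and check the length correspondence $w(P)=w'(P)=\ell''(P'')$ for every $s_j$--$t_j$ path. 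Crucially, by Lemmas~\ref{lem:throughput weight} and~\ref{lem:concurrent weight} every length $\ell''(e'')$ is non-decreasing over the run of Fleischer's algorithm, so each $w'(v)$ and hence each $w(u,v)$ only increases --- which is exactly the decremental structure (scaled edge weights, so: repeated edge deletions after rounding weights to powers of $(1+\epsilon)$, or more directly a decremental-weight spanner) that the dynamic spanner algorithm of \Cref{thm:main spanner}/\Cref{thm:SpannerMainResult} supports. I would note that the spanner works against an adaptive adversary, which is essential here because the lengths $\ell''$ (and therefore the weight updates we feed the spanner) depend on the paths returned by previous queries, i.e.\ on the spanner's own outputs.

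Next I would spell out how one approximate shortest path / shortest path tree computation is simulated. Whenever Fleischer's algorithm needs an $\alpha$-approximate $\ell''$-shortest $s_j^{out}$--$t_j^{in}$ path (or an $s$-sourced approximate shortest path tree), we query the data structure for the current spanner $H$ of $G$ (query time $q$, size $s$), run Dijkstra from $s_j$ in $H$ (time $\tilde O(s)$ since $H$ is sparse), and lift the resulting path in $H$ back to the corresponding path $P''$ in $G''$. Since $H$ is a $t$-spanner of $(G,w)$, the path returned has $w$-length at most $t$ times the true $\ell''$-shortest path length; thus we are running Fleischer's algorithm with approximation parameter $\alpha = t$. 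Plugging $\alpha=t$ into Lemma~\ref{lem:throughput optimal} (resp.\ Lemma~\ref{lem:concurrent optimal}) yields a $t(1+O(\epsilon))$-approximate solution, which after rescaling $\epsilon$ gives the stated $t(1+\epsilon)$ ratio. For feasibility I invoke Lemmas~\ref{lem:throughput feasible} and~\ref{lem:concurrent feasible} unchanged, since those hold regardless of how the paths were found.

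For the running time I would count the number of shortest path computations and multiply by the per-computation cost, plus the cost of maintaining the spanner. The number of edges of $G''$ with finite capacity is $\tau = O(n)$ (only the $n$ gadget edges, minus the $O(k)$ terminal ones). So by Lemma~\ref{lem:throughput time} there are $\tilde O((n+\min\{k,n\})\epsilon^{-2}) = \tilde O(n\epsilon^{-2})$ path computations for max-throughput, each costing $O(q+s)$ (query plus Dijkstra on the spanner), for a total of $\tilde O(n(s+q)\epsilon^{-2})$. Between consecutive path computations the algorithm performs a bounded number of length updates along one path of $G''$; each such update changes $w$ on at most $O(\ell)=O(n)$ vertices, but amortized over the whole execution the total number of weight-change events fed to the spanner is $\tilde O(m\epsilon^{-2})$ (this is the standard ``number of saturating augmentations times path length'' bound, where after rounding to powers of $(1+\epsilon)$ each edge's weight changes $\tilde O(\epsilon^{-1})$ times and there are $m$ edges), costing $\tilde O(mu\epsilon^{-2})$; together this gives $\tilde O((mu + n(s+q))\epsilon^{-2})$. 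For max-concurrent flow the same accounting with Lemma~\ref{lem:concurrent time} replaces $n$ by $n+k$ in the path-computation term and adds a $\log\Delta = O(\log C)$ factor for the binary search on $\beta$, giving $\tilde O((mu + (n+k)(s+q))\epsilon^{-2}\log C)$; here I must also record that the assumption $\beta\ge 1$ is arranged by scaling the target demands, legitimate because capacities are within a factor $C$.

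The main obstacle I expect is bounding the total number of weight updates handed to the dynamic spanner --- i.e.\ justifying the $\tilde O(m\,\mathrm{poly}(1/\epsilon))$ bound on length-change events rather than a naive bound like (number of path computations) $\times$ (path length), which would be $\tilde O(n^2)$ and would swamp the $mu$ term for sparse $G$. This requires combining Fleischer's saturating-augmentation analysis with the observation that each edge of $G''$ can have its length multiplied by $(1+\epsilon/c(e))$ only $\tilde O(1/\epsilon)$ many times before it would exceed the $1+\epsilon$ cap of Lemma~\ref{lem:throughput weight}/\ref{lem:concurrent weight}; once rounded to powers of $(1+\Theta(\epsilon))$ this means each spanner edge is ``deleted/re-inserted'' only $\tilde O(1/\epsilon)$ times, for $\tilde O(m/\epsilon)$ total spanner updates, matching the claimed running time. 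A secondary subtlety worth a sentence is that Fleischer's algorithms need an approximate \emph{single-source shortest path tree} (for the throughput case), not just a single path; this is fine because running one Dijkstra from $s$ on the queried spanner $H$ produces exactly such a tree of stretch $t$, so no additional machinery is needed.
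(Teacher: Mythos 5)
Your proposal is correct and follows essentially the same route as the paper: reduce to the directed edge-capacitated graph $G''$, push the (monotonically increasing) gadget lengths back onto edge weights of $G$, maintain the adaptive-adversary spanner under these lazy weight updates (updating an edge only after its weight grows by a fixed factor, so each edge is touched $\tilde{O}(\poly(1/\epsilon))$ times for $\tilde{O}(mu\epsilon^{-2})$ total), answer each of Fleischer's $\tilde{O}((\tau+k)\epsilon^{-2})$ approximate SSSP calls with a spanner query plus Dijkstra in $\tilde{O}(s+q)$ time using $\tau=O(n)$ finite-capacity edges, and invoke the feasibility/approximation lemmas with $\alpha=t$. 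The only cosmetic divergence is your handling of the $\beta\ge 1$ normalization for concurrent flow (a rough demand-scaling/binary-search argument), where the paper instead computes an explicit $O(kn)$-approximation of $\beta$ via bottleneck-connectivity thresholds in $O((m+k)\log C)$ time, which is precisely where its $\log C$ factor enters; both fit within the stated bound.
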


\begin{proof}
\textbf{(1):} By \Cref{lem:throughput weight}, as the lengths only
increase, we can use decremental spanner algorithms on $G$ for maintaining
a $t$-spanner $H$. We will update the weight $w(e)$ of each edge
only when it has been increased by more than a factor of $2$. Again,
by \Cref{lem:throughput weight}, we only need to update each edge
at most $\log\frac{1+\epsilon}{\delta}=O(\epsilon^{-2}\log m)$ times.
Therefore the total update time to maintain $H$ is $O(mu\epsilon^{-2}\log m)$.
Next, by \Cref{lem:throughput time}, we need to compute single-source
shortest path $\tilde{O}(n/\epsilon^{-2})$ times. Each time, we first query for $H$ in $q$ time, and then compute
the shortest path on $H$ to get a $t$-approximate single source shortest path
in $\tilde{O}(s)$ time. Therefore the total running time is $\tilde{O}((mu+n(s+q))/\epsilon^{2})$.
By \Cref{lem:throughput feasible} and \Cref{lem:throughput optimal},
we obtain a $(t\cdot(1+\epsilon))$-approximation  from \Cref{alg:throughput}.

\textbf{(2):} We will first show how to get an  $O(kn)$-approximation
$\tilde{\beta}$ of the optimal value $\beta$ in time $O((m+k)\log C)$.
Let $\zeta_{j}$ denote the maximum $s_{j}$-$t_{j}$ flow value in
$G$. Let $\zeta=\min_{j}\zeta_{j}/d_{j}$. As argued in \cite{GargK07,Fleischer00},
$\zeta$ is a $k$ approximation of $\beta$. So now it suffices to
$O(n)$-approximate $\zeta_{j}$ for each $j$. As any $s_{j}$-$t_{j}$
flow in vertex-capacitated graphs can be decomposed into $n$ paths,
we have that $\hat{\zeta}_{j}=\max_{P\in\P_{s_{j},t_{j}}}\min_{u\in P}c(u)$,
i.e. the amount of flow that we can send along maximum min-capacity
path, is an $n$-approximation of $\zeta_{j}$. In \cite{Fleischer00},
it is shown how to compute all $\hat{\zeta}_{j}$ in $O(\min\{k,n\}m\log m)$
total time which is too slow  for us. Here, we will show how to $2$-approximate
all $\hat{\zeta}_{j}$ in $O((m+k)\log C)$ time. 

Let $c_{\max}=\max_{v}\{c(v)\mid c(v)\neq\infty\}$, $c_{\min}=\min_{v}c(v)$,
and $C=c_{\max}/c_{\min}$. Let $V_{i}=\{v\mid2^{i}c_{\min}\le c(v)<2^{i+1}c_{\min}\}$.
Let $V_{\infty}=\{v\mid c(v)=\infty\}$. Let $G_{i}=G[\bigcup_{j\ge i}V_{j}]$.
If $(s_{j},t_{j})$ is connected in $G_{i}$ but not in $G_{i+1}$,
then we know $2^{i}c_{\min}\le\hat{\zeta}_{j}<2^{i+1}c_{\min}$. For
each $i$, we compute connected component of $G_{i}$ in $O(m)$ time.
Then we can check quickly, for all $j$, if $s_{j}$ and $t_{j}$
are connected in $G_{i}$ in $O(k)$. Summing over all $i$, the total
time is $O((m+k)\log C)$. Therefore, we obtain $O(kn)$-approximation
$\tilde{\beta}$ of $\beta$.

Now, the proof is very similar to (1). By \Cref{lem:concurrent weight},
as the lengths only increase, we use a decremental spanner algorithms
on $G$ for maintaining a $t$-spanner $H$, and we will update the
weight $w(e)$ of each edge only when it has been increased by more
than a factor of $2$. By \Cref{lem:concurrent weight}, we only need
to update each edge at most $\log\frac{1+\epsilon}{\delta}=O(\epsilon^{-2}\log m)$
times. Therefore the total update time to maintain $H$ is $O(mu\epsilon^{-2}\log m)$.
Next, by \Cref{lem:concurrent time}, we need to compute single-source
shortest paths  $\tilde{O}((n+k)\epsilon^{-2}\log(nk))$ times. 
Each time, we first query for $H$ in $q$ time, and then compute
the shortest path on $H$ to get a $t$-approximate single source
shortest path in $\tilde{O}(s)$ time. Therefore the total running
time is $\tilde{O}((mu+(n+k)(s+q))/\epsilon^{2})$. By \Cref{lem:concurrent feasible}
and \Cref{lem:concurrent optimal}, we obtain a $(t\cdot(1+\epsilon))$-approximate
from \Cref{alg:concurent}. 
\end{proof}

To prove \Cref{thm:multiflow}, we set $\epsilon=\Omega(1)$. Then we use the dynamic data structure for spanners from  \Cref{thm:SpannerMainResult}  with parameters $t=\polylog(n),s=\tilde O(n),u=\tilde O(1),q=\tilde O(n)$. This
concludes \Cref{thm:multiflow}.

\subsection{Proof of \Cref{cor:cong min}}
\begin{proof}
Given an instance of the congestion minimization problem with $k$
demand pairs $\{(s_{j},t_{j})\}_{j=1}^{k}$. Let $\mathsf{OPT}$ be
the minimum congestion over all solutions. We solve the same instance
of maximum concurrent flow when $d_{j}=1$ for all $j$. Let $\lambda$
be the value of our maximum concurrent flow solution $\F=\{f{}_{1},\dots,f{}_{k}\}$ from
\Cref{thm:multiflow}. Let $\F'=\{f'_{1},\dots,f'_{k}\}$ be such that
$f'_{i}=f_{i}/\lambda$. Note that $f'_{i}$ is a $s_{i}$-$t_{i}$
flow of value $1$. Let the \emph{congestion} of $\F'$ denote $\max_{v}\frac{\sum_{i}f_{i}'(v)}{c(v)}$.
Note that $\max_{v}\frac{\sum_{i}f_{i}'(v)}{c(v)}\le \polylog(n)\mathsf{OPT}$.

The standard rounding technique by Raghavan and Thompson \cite{RaghavanT87}
works as follows. For each $i$, recall that \Cref{thm:multiflow}
gives flow-path decomposition of $f_{i}$ as well, and hence of $f'_{i}$.
Let $P_{i}^{1},\dots,P_{i}^{z_{i}}$ be the paths in the given flow-path
decomposition of $f'_{i}$. Let $f_{i}'(P)$ be the amount of flow
$f'_{i}$ on $P$. Note that $\sum_{j}f'_{i}(P_{i}^{j})=1$ and $f'_{i}(P_{i}^{j})\ge0$.
We choose $P_{i}^{j}$ with probability $f'_{i}(P_{i}^{j})$ into
the solution for congestion minimization. Let $\P$ be the collection
of such paths. By Chernoff's bound, we have the congestion of $\P$
is at most \[
O\left(\frac{\log n}{\log\log n}\right)\cdot\max_{v}\frac{\sum_{i}f_{i}'(v)}{c(v)}=\polylog(n)\mathsf{OPT}.\qedhere
\]
\end{proof}

\section{Tables}
\label{sec:tables}

\begin{table}[H]
	\begin{centering}
		\begin{tabular}{|c|c|c|c|c|c|}
			\hline 
			\textbf{Reference} & \textbf{Year} & \textbf{Stretch} & \textbf{Size} & \textbf{Time} & \textbf{Adaptive?}\tabularnewline
			\hline 
			\hline 
			\multicolumn{6}{|l|}{\textbf{Incremental}}\tabularnewline
			\hline 
			\cite{FeigenbaumKMSZ05} & 2005 & $2k-1$ & $O(kn^{1+1/(k-1)})$ & $\tilde{O}(kn^{1/(k-1)})$  & \tabularnewline
			\hline 
			\cite{Baswana08} & 2008 & $2k-1$ & $O(kn^{1+1/k})$ & $O(1)$ & \tabularnewline
			\hline 
			\cite{Elkin11} & 2007 & $2k-1$ & $O(kn^{1+1/k})$ & $O(1)$  & \tabularnewline
			\hline 
			\hline 
			\multicolumn{6}{|l|}{\textbf{Fully dynamic (amortized)}}\tabularnewline
			\hline 
			\multirow{2}{*}{\cite{AusielloFI06}} & \multirow{2}{*}{2005} & 3 & $O(n^{1+1/2})$ & $O(\Delta)$ & adaptive \tabularnewline
			\cline{3-6} 
			&  & 5 & $O(n^{1+1/3})$ & $O(\Delta)$ & adaptive \tabularnewline
			\hline 
			\multirow{2}{*}{\cite{BaswanaKS12}} & \multirow{2}{*}{2008} & $2k-1$ & $\tilde{O}(k^{9}n^{1+1/k})$ & $O(7^{k})$ & \tabularnewline
			\cline{3-6} 
			&  & $2k-1$ & $\tilde{O}(kn^{1+1/k})$ & $O(k^{2}\log^{2}n)$ & \tabularnewline
			\hline 
			\multirow{2}{*}{\cite{ForsterG19}} & \multirow{2}{*}{2019} & $2k-1$ & $\tilde{O}(n^{1+1/k})$ & $O(k\log^{2}n)$ & \tabularnewline
			\cline{3-6} 
			&  & $\boldsymbol{2\log n-1}$ & $\boldsymbol{\tilde{O}(n)}$ & $\boldsymbol{\tilde{O}(1)}$ & \tabularnewline
			\hline 
			\textbf{This paper} &  & $\boldsymbol{\Otil(1)}$ & $\boldsymbol{\Otil(n \log W)}$ & $\boldsymbol{\Otil(1)}$ & adaptive\tabularnewline
			\hline 
			\hline 
			\multicolumn{6}{|l|}{\textbf{Fully dynamic (worst-case)}}\tabularnewline
			\hline 
			\cite{Elkin11} & 2007 & $2k-1$ & $O(n^{1+1/k})$ & $\tilde{O}(mn^{-1/k})$ & \tabularnewline
			\hline 
			\multirow{3}{*}{\cite{BodwinK16}} & \multirow{3}{*}{2016} & 3 & $\tilde{O}(n^{1+1/2})$ & $\tilde{O}(n^{3/4})$ & \tabularnewline
			\cline{3-6} 
			&  & 5 & $\tilde{O}(n^{1+1/3})$ & $\tilde{O}(n^{5/9})$ & \tabularnewline
			\cline{3-6} 
			&  & 5 & $\tilde{O}(n^{1+1/2})$ & $\tilde{O}(n^{1/2})$ & \tabularnewline
			\hline 
			\multirow{2}{*}{\cite{BernsteinFH19}} & \multirow{2}{*}{2019} & $2k-1$ & $\tilde{O}(n^{1+1/k})$ & $O(1)^{k}\log^{3}n$ &
			\tabularnewline
			\cline{3-6} 
			&  & $\boldsymbol{2\sqrt{\log n}-1}$ & $\boldsymbol{n2^{O(\sqrt{\log n})}}$ & $\boldsymbol{2^{O(\sqrt{\log n})}}$ & \tabularnewline
			\hline 
			\textbf{This paper} &  & $\boldsymbol{n^{o(1)}}$ & $\boldsymbol{\Otil(n \log W)}$ & $\boldsymbol{n^{o(1)}}$ & adaptive\tabularnewline
			\hline 
		\end{tabular}
		\par\end{centering}
	\caption{A list of dynamic spanner algorithms. Above, $n$ is a number of vertices,
		$m$ is the number of initial edges, $\Delta$ is the maximum degree, and $k$ is any positive integer.
		In the ``Year'' column, the year is according to the conference
		version of the paper. If an algorithm assumes an oblivious adversary,
		then we leave blank in the column ``Adaptive?''. All previous algorithms
		that can handle adaptive updates are deterministic algorithms, while
		our algorithms are randomized Monte Carlo. All incremental algorithms
		have worst-case update time except \cite{Baswana08}. We highlight
		\textbf{(in bold)} the results with spanner size at most $n^{1+o(1)}$
		in $n^{o(1)}$ time.\label{tab:spanner}
		}
\end{table}

\bibliographystyle{alpha}
\bibliography{bibliography}

\end{document}